%
%
%
%
%


\documentclass[
12pt, 
oneside, 
english, 
onehalfspacing, 
headsepline, 
]{MastersDoctoralThesis} 

\usepackage[utf8]{inputenc} 
\usepackage[T1]{fontenc} 
\usepackage{amsthm}
\usepackage{amssymb}
\usepackage{graphicx} 
\usepackage{hyperref} 
\usepackage{tikz}
\usepackage{braket}
\usepackage{dsfont}
\usepackage{mathtools}
\usepackage{mathrsfs}
\newtheorem{remark}{Remark}
\newtheorem{exmp}{Example}
\newtheorem{Lemma}{Lemma}
\newtheorem{defn}{Definition}
\newtheorem{thm}{Theorem}
\newtheorem{Corollary}{Corollary}
\theoremstyle{plain}
\newtheorem{Claim}{Claim}

\usepackage[backend=bibtex,style=numeric-comp,natbib=true, sorting=none]{biblatex} 

\addbibresource{references.bib}


\geometry{top=2.5cm, bottom=2.5cm, left=4cm, right=2.5cm}


\thesistitle{Identification and Estimation of Quantum Linear Input-Output Systems} 
\supervisor{M\u{a}d\u{a}lin Gu\c{t}\u{a}} 
\supervisorw{Theodore Kypraios}
\examiner{} 
\degree{Doctor of Philosophy} 
\author{Matthew Levitt} 
\addresses{} 

\subject{Mathematical Sciences} 
\keywords{TO DO} 
\university{\href{http://www.nottingham.ac.uk}{University of Nottingham}} 
\department{\href{http://www.nottingham.ac.uk/Mathematics/}{School of Mathematical Sciences}} 
\group{\href{http://www.nottingham.ac.uk/mathematics/research/mathematical-physics.aspx}{Mathematical Physics}} 

\AtBeginDocument{
\hypersetup{pdftitle=\ttitle} 
\hypersetup{pdfauthor=\authorname} 
\hypersetup{pdfkeywords=\keywordnames} 
}

\begin{document}

\frontmatter 

\pagestyle{plain} 


\begin{titlepage}
\begin{center}

\vspace*{.06\textheight}
{\scshape\LARGE \univname\par}\vspace{1.0cm} 
\textsc{\Large Doctoral Thesis}\\[0.5cm] 

\HRule \\[0.4cm] 
{\LARGE \bfseries \ttitle\par}\vspace{0.4cm} 
\HRule \\[1.1cm] 
 
\begin{minipage}[t]{0.4\textwidth}
\begin{flushleft} \large
\emph{Author:}\\
\href{https://arxiv.org/find/quant-ph/1/au:+Levitt_M/0/1/0/all/0/1}{\authorname} 
\end{flushleft}
\end{minipage}
\begin{minipage}[t]{0.4\textwidth}
\begin{flushright} \large
\emph{Supervisors:} \\
\href{https://www.maths.nottingham.ac.uk/personal/pmzmig/}{\supname} \\
\href{https://www.maths.nottingham.ac.uk/personal/tk/}{\supnamew}
\end{flushright}
\end{minipage}\\[1.4cm]

\large \textit{A thesis submitted to the University of Nottingham\\ for the degree of\\ \textbf{\degreename}}\\[0.3cm] 

\groupname\\\deptname\\[2cm] 

\end{center}
\end{titlepage}


\begin{declaration}
\addchaptertocentry{\authorshipname} 
\noindent I, \authorname, declare that this thesis titled, \enquote{\ttitle} and the work presented in it are my own. I confirm that:

\begin{itemize} 
\item This work was done wholly or mainly while in candidature for a research degree at this University.
\item Where I have consulted the published work of others, this is always clearly attributed.
\item Where I have quoted from the work of others, the source is always given. With the exception of such quotations, this thesis is entirely my own work.
\item I have acknowledged all main sources of help.
\item Where the thesis is based on work done by myself jointly with others, I have made clear exactly what was done by others and what I have contributed myself.\\
\end{itemize}
 
\noindent Signed:\\
\rule[0.5em]{25em}{0.5pt} 
 
\noindent Date:\\
\rule[0.5em]{25em}{0.5pt} 
\end{declaration}

\cleardoublepage



\begin{abstract}

The system identification  problem is to estimate dynamical parameters from the output data, obtained by performing measurements on the output fields. We investigate system identification for quantum linear systems. Our main objectives are to address the following general problems: 
\begin{enumerate}
\item Which parameters can be identified by measuring the output? 
\item How can we construct a system realisation from sufficient input-output data? 
\item How well can we estimate the parameters governing the dynamics? 
\end{enumerate}
 We investigate these problems in two contrasting approaches; using time-dependent inputs (Sec. \ref{ghu} or time-stationary (quantum noise) inputs (Sec. \ref{stato}).

In the time-dependent approach the output fields are characterised by the \textit{transfer function}.
We show that indistinguishable minimal systems in the transfer function are related by symplectic transformations acting on the space of system modes (Ch. \ref{powers34}).
 We also present techniques enabling one  to find a physical realisation of the system from the input-output data. We present  realistic schemes for estimating \textit{passive}  quantum linear systems at the Heisenberg limit (Ch. \ref{QEEP}) under energy resource constraint.
 `Realistic' is our primary concern here, in the sense that there exists both experimentally feasible states and practical measurement choices that enable this heightened performance for all passive quantum linear systems. We consider both single parameter and multiple parameter estimation.
 
 In the stationary approach the characteristic quantity is the \textit{power spectrum}. We define the notion of global minimality for a given power spectrum, and characterise globally minimal systems as those with fully mixed stationary state (Sec. \ref{G.Ms}).  The power spectrum depends on the system parameters via the transfer function. Our main result here is that   under global minimality the power spectrum uniquely determines  the transfer function, so the system can be identified up to a symplectic transformation (see Secs. \ref{256}, \ref{dogs} \ref{JUKKA1}). We also give methods for constructing a globally minimal subsystem directly from the power spectrum (see Sec. \ref{jute}). These results hold for pure inputs, we discuss extensions to mixed inputs and the use of additional input channels; using an appropriately chosen input in the latter case ensures that the system is always globally minimal (hence identifiable).
 
Finally, we discuss a particular feedback control estimation problem in Chs. \ref{FEDERER} and \ref{DUAL}. In general, information about a parameter within a quantum linear system  may be obtained at a linear rate with respect to time (in both approaches above);  the so called \textit{standard scaling}. However, we see that when the system destabilises, so that its system matrix has eigenvalues very close to the imaginary axis, the quantum Fisher information is enhanced, to quadratic (\textit{Heisenberg}) level. We give feedback methods enabling one to destabilise the system and give adaptive procedures for realising the Heisenberg bounds.



\end{abstract}


\begin{constants}{lr@{${}{}$}l} 


Some of the ideas in this thesis (Chs. \ref{T.F} and \ref{powers34})  have previously appeared in the &\\following   publications: \cite{Levitt1, Levitt2}. We have two articles in preparation based on the &\\ ideas in  Ch. \ref{QEEP} (joint work with M\u{a}d\u{a}lin Gu\c{t}\u{a} and Naoki Yamamoto) and &\\ Ch. \ref{DUAL} (joint work with  M\u{a}d\u{a}lin Gu\c{t}\u{a}). In all of the above I am  the main author. \\

\end{constants}


\begin{acknowledgements}
\addchaptertocentry{\acknowledgementname} 

First and foremost  I would like to thank my project supervisors M\u{a}d\u{a}lin Gu\c{t}\u{a} and Theodore Kypraios for the incredible support, patience and insight they have given during my time at the University of Nottingham. I am grateful to you for  encouraging my research and for allowing me to grow as a researcher and it has been an honour to be your PhD student.

I would also like to thank my collaborators, Hendra Nurdin and Naoki Yamamoto, for providing many fruitful ideas and discussions.  Also the Quantum Correlations group  and the numerous visitors that we  have had here at Nottingham, I would like to thank for the invaluable source of inspiration, advice and ideas.

Finally, this page would not be complete without mentioning AD for keeping me sane (and putting up with me) through the many ups and downs during the production of this work. Words cannot express my gratitude to my parents, brother and also grandparents for their love and sacrifices they've made over the years; this would not have been possible without you.

\end{acknowledgements}


\tableofcontents 



\begin{symbols}{ll} \label{troop}
$M=\left(M_{ij}\right)$& Matrix\\
$\mathbf{M}=\left(\mathbf{M}_{ij}\right)$& Matrix of operators\\
$1_n$ & Identity matrix of size $n$\\
$\mathds{1}$& Identity operator\\
$x^*$ or $\overline{x}$& Complex conjugate of $x\in\mathbb{C}$\\
$|x|$& Modulus of $x\in\mathbb{C}$\\
$\mathrm{Re}(\alpha)$& Real part of  $x\in\mathbb{C}$\\
$\mathrm{Im}(\alpha)$& Imaginary part of  $x\in\mathbb{C}$\\
$M^{\#}=\left(M^*_{ij}\right)      $&  Complex conjugate matrix\\
$M^{T}=\left(M_{ji}\right)      $& Transpose matrix\\
$M^{\dag}=\left(M^*_{ji}\right)      $& Adjoint matrix\\
$\left[A, B\right]=AB-BA$ & Commutator\\
$\breve{X}=\left[\begin{smallmatrix} X\\X^{\#}\end{smallmatrix}\right]$ &Doubled-up matrix\\
 $\left(\begin{smallmatrix} A&B\\B^{\#}&A^{\#}\end{smallmatrix}\right)$& $\Delta\left(A, B\right)$ \\
$ \left(\begin{smallmatrix} 1_n&0\\0&-1_n\end{smallmatrix}\right)$&$J_n$\\
$Z^{\flat}=J_mZ^{\dag}J_n$ for $Z\in\mathbb{C}^{2n\times 2n}$&Symplectic adjoint\\
$\mathrm{Tr}(\cdot)$& Trace\\
$\mathrm{Det}(\cdot)$ & Determinant\\
$\mathrm{Spec}(\cdot)$& List of eigenvalues of a matrix\\
$\mathrm{Diag}(x_1,...,  x_m)$& An $m\times m$  diagonal matrix with (diagonal)\\ $\quad$& entries $x_1,...x_m$\\
$\delta_{ij}$ & Kronecker Delta\\
$\delta(t)$& Dirac Delta\\
$\bra{\cdot}$ and $\ket{\cdot}$& `Bra' and `Ket' notation for a quantum state\\
$\rho$& Density matrix\\
$\left<\mathbf{X}\right>$ &Quantum expectation of observable $\mathbf{X}$\\
$ \mathbb{E}\left[X\right]$ &Classical expectation of random variable $X$\\
$\mathrm{Var}\left(\cdot\right)$ & Variance of an observable or random variable\\
$\mathrm{Cov}\left(\cdot\right)$ & Covariance of an observable or random variable\\
$\int^{\infty}_{-\infty}e^{-st}x(t)$ for $s\in\mathbb{C}$& Laplace transformation. \\
Semisimple matrix& A diagonalizable matrix\\
Monic Rational Function & A rational function $R(x):=\frac{P(x)}{Q(x)}$ where the\\& leading coefficient of both polynomials $P(x)$ and \\ &$Q(x)$ is unity.




\end{symbols}



\mainmatter 

\pagestyle{thesis} 


\chapter{Introduction}
We stand on the brink  of a quantum technological revolution, poised to deliver groundbreaking applications in computation, communication and metrology \cite{Nielsen1, Dowling1, Guta2}.
In order to surpass the fundamental limits set by ``classical'' measurement, information processing and control theory, these applications  must exploit  powerful quantum mechanical phenomena, such as entanglement and coherence, which have no classical analogue.
However, the main difficulty is that the enhancements attributed to quantum effects  are notoriously sensitive to noise \cite{Demo1, Dong1}. Therefore, the key challenge (both theoretically and experimentally) is to devise and develop quantum control methods for systems interacting with noisy environments.
This has motivated the development of quantum filtering theory  \cite{Wiseman1, Bouten2, Bouten1, Gough1}, quantum feedback control \cite{Soma1, Peter1, James1, Gough1, Guta2, Nurdin3, Griv1, Gough2, Soma2, Doherty1, Yanagisawa1, Gough5} 
and network theory \cite{Gough3, Gough4, Zhang3, Nurdin2}, which build on the existing classical stochastic control theory.

In particular, there has been a rapid growth in the study of quantum linear systems (QLSs). 
QLSs are a class of models used in quantum optics, opto-mechanical systems, electrodynamical  systems,  cavity QED systems  and elsewhere  \cite{Yamamoto2, Wall1, Tian1, Gardner1, Stockton1, Doherty1, Matyas1}.  
They have many applications, such as quantum memories, entanglement generation, quantum information processing and quantum control \cite{Yamamoto1, Nurdin1, James2, Nurdin4, Wiseman1, Bouten2, Dong1, Yamamoto4, Zhang2}.
Broadly speaking, a QLS consists of a continuous variable quantum system (e.g an electromagnetic field in an optical cavity) weakly coupled with a Bosonic environment (e.g. external laser fields).
They are analogous to classical   electrical networks;  classical circuits are built from elementary elements such as resistors, capacitors, inducers, etc, whereas QLSs are quantum circuits built from beam-splitters, optical, cavities and squeezers.
QLSs are input-output models, where one prepares an input in the field and then infers system parameters indirectly  through quantum measurements in the field.

System identification theory \cite{Ljung1, Ljung2, Pintelon1, Guta3, Guta4} lies at the boundary of  control theory and statistical inference, and deals with the estimation of unknown dynamical parameters, such as the system hamiltonian or coupling constants, from the input-output data. 
The identification of linear systems is by now a well developed and mature topic in classical systems theory \cite{Glover1, Kalman1, Ljung1, Ljung2, Ho1, Anders1, Youla1, Zhou1, Pintelon1, Davies1}, but has not been fully explored in the quantum domain \cite{Guta2}. 
 Here we will strive towards this.

We distinguish two contrasting approaches to the identification of QLSs, 
which we illustrate in Fig. \ref{trew}. 
In the first approach, one probes the system with a known \emph{time-dependent} input signal (e.g., coherent state), then uses the output measurement data to compute an estimator of the unknown dynamical parameter. The characteristic quantity here is the \textit{transfer function} and as such QLSs with the same transfer function are called \textit{transfer function equivalent} (TFE). We investigate system identification for the time-dependent approach in Ch. \ref{T.F}. In the second approach 
 the input fields are prepared in a stationary (in time) pure Gaussian state with independent increments (squeezed vacuum noise) and the characteristic quantity is the power spectrum. We explore this approach in Ch. \ref{powers34}.
 
 Following the results in Chs. \ref{T.F} and \ref{powers34} one will understand what parameters can be identified in a QLS and how to identify them. The next natural question following this is how well such parameters can be identified? This question is the main focus of Chs. \ref{QEEP} and \ref{FEDERER}. The setup is similar to the standard metrology setup (which is also reviewed in Ch. \ref{plm}), however the added difficulty in  the QLS setup arises from working in continuous time and therefore requires a different analysis of behaviour \cite{Jacobs1}.
 In  Sec. \ref{Toes1} we provide a realistic scheme to identify a single unknown parameter of a \textit{passive} QLS at the Heisenberg limit with respect to an energy constraint. `Realistic' is our primary concern here, in the sense that there exists both experimentally feasible states and practical measurement choices. 
Our method uses a generalisation of the seminal interferometric method proposed by Caves in 1981 \cite{Caves1}, which places squeezed and coherent states in the interferometer arms. Our method is genuinely quantum system identification using entanglement, which is a new topic in systems and control theory. Note that this strategy is currently being implemented in the most advanced interferometers designed to detect \textit{gravitational waves} \cite{LIGO1, LIGO2}.   In Ch. \ref{FEDERER} we switch focus, so that time is our main resource constraint, and  investigate in detail the phenomenon that when the system destabilises, so that the system is quasi-decoherence free, metrological information from the system is enhanced and the \textit{quantum fisher information} becomes quadratic (rather than linear) in the observation time. This is particularly surprising considering that  information about the system is extracted indirectly via the field and such destabilisation will make the  system-field coupling smaller.
 
 Finally, in Ch. \ref{DUAL} we consider a particular \textit{reservoir engineering} problem. Reservoir engineering is a hot topic at the moment and is concerned with designing the dissipative dynamics of a  system with the environment to drive the system into a desired pure stationary state. It has various applications in laser cooling and optical pumping \cite{Plenio1, Stan1, Yamamoto2, Yamamoto3, Metcalf1}. Given a QLS is it possible to design a second QLS so that the combined system has a pure stationary state and (by results  in Ch. \ref{powers34}) the output of the first system is negated by the second. This provides a natural purification for the stationary state and we see later its possible applications.
 
In Part \ref{partb} we review the necessary background material required for this thesis
In particular, in  Chs. \ref{CL1} and \ref{WEEP}  we review in detail the important aspects of classical and quantum linear systems, respectively, and  Ch. \ref{plm}  surveys the necessary aspects of quantum estimation theory. Part \ref{partr} comprises the main results outlined above.

\part{Background}\label{partb}

\chapter{Classical Linear Systems}\label{CL1}

Classical linear systems (CLSs)  are dynamical models describing a range of real world phenomena. They  have many   applications, from \textit{automatic control systems} and \textit{communication}, to \textit{aeronautics} and \textit{engineering} \cite{Zhou1, Yang1, Anna1, Enq}. 
Most recently the applications of linear systems have  extended to  quantum mechanics, which subsequently led to the birth of  \textit{quantum linear systems theory} \cite{Peter1};. Many CLS theory results  transfer over to quantum linear systems theory directly, therefore it is worthwhile for us to review CLSs in detail first.


CLSs are examples of input-output or `black-box' models. Typically, one can access the system indirectly by preparing a  time-dependent input signal, which acts as a probe to the system. After the coupling, the parameters of the system  are imprinted on the output signal.  From the  observations, the task is to estimate parameters within the system. A huge body of theory has been developed to treat various aspects of this problem, including \textit{system identification, realization theory} and \textit{statistical estimation theory} \cite{Zhou1, Hayden1, Kalman1, Youla1}.

\section{Description of CLSs}

CLSs are described by the following pair of differential equations \cite{Zhou1} 
\begin{align}
\label{CSL1}dx(t)&=Ax(t)dt+Bu(t)dt\\
\label{CSL2}dy(t)&=Cx(t)dt+Du(t)dt
\end{align}
where ${x}(t)\in\mathbb{C}^n$ is called the \textit{system state}, ${u}(t)\in\mathbb{C}^m$ is an  \textit{input signal}  and ${y}(t)\in\mathbb{C}^p$ is an \textit{output signal}. The matrices $A, B, C, D$ are appropriately dimensioned complex matrices\footnote{Usually everything here is real valued, however from a quantum point of view later it is more natural to work with complex matrices and signals.}. 
A CLS with one input channel ($m=1$) and one output channel ($p=1$) is called SISO (single input and single output), otherwise it is called MIMO (multiple input and multiple output). The input to the system can be deterministic or stochastic, resulting in a set of differential  or stochastic differential equations respectively. 

Now, Eqs. \eqref{CSL1} \eqref{CSL2} can be solved directly \cite{Zhou1, Ljung1}; we give the solution for $x(t)$ here:
\begin{equation}\label{rubbish1}
 x(t)=e^{A(t-t_0)}x(t_0)+\int^t_{t_0}e^{A(t-\tau)}Bu(\tau)d\tau
 \end{equation}
 where $t_0$ is the initial time. 
From  Eq. \eqref{rubbish1}, if any of the eigenvalues of $A$ are in right half plane (Re$(\lambda(A))\geq0$) then $x(t)$ will be dominated by the first term and will grow without bound in the long time limit. Therefore, from now we will only be interested in \textit{stable} CLSs (see Def. \ref{stable}). 
\begin{defn}\label{stable}
A CLS $(A, B, C, D)$ is said to be (Hurwitz) stable if all eigenvalues of $A$ lie in the open left half plane.
\end{defn}

\section{Controllability, Observability and Minimality}\label{systheo}

Apart from the input (and the initial state of the system) the dynamics is completely determined by the quadruple $(A, B, C, D)$ \cite{Guta2}. The following two concepts, defined in terms of these matrices,  are very important in linear systems theory \cite{Zhou1}:

\begin{defn}
A CLS $(A, B, C, D)$ or the pair $(A,B)$ is \textit{controllable} if, for any initial initial state $x_0$, final state $x_1$ and times $t_0< t_1$ there exists a (piecewise continuous) input $u(\cdot)$ such that $x(t_0)=x_0$ and $x(t_1)=x_1$. Otherwise the pair $(A, B)$ is \textit{uncontrollable}.
\end{defn}

\begin{defn}
A CLS $(A, B, C, D)$ or the pair $(C,A)$ is \textit{observable} if, for any  times $t_0< t_1$ the initial state $x(t_0)=x_0$ can be determined from the past history of the input $u(t)$  and the output $y(t)$ in the time interval $[t_0, t_1]$. Otherwise the pair $(C, A)$ is \textit{unobservable}.
\end{defn}

We will see precisely why these definitions are so important shortly. But first observe that we can take the Laplace transformation (see  Nomenclature) of \eqref{CSL1}  and \eqref{CSL2} to obtain the following input-output map\footnote{under the assumption of stability}:
\begin{equation}
Y(s)=G(s)U(s),
\end{equation}
where $U(s)$ and $Y(s)$ are the Laplace transforms of $u(s)$ and $y(s)$ and the \textit{transfer function} $G(s)$ is given by: $$G(s)=C\left(sI-A\right)^{-1}B+D.$$

The most an experimenter can  hope to obtain from measurements of the output is the transfer function. However, not only is it possible that two CLSs can have the same transfer function, such systems may also have differing dimensions (i.e different number of modes). In seeking the simplest model of the input-output behaviour we make the following  definition.
\begin{defn}
A state space realisation $(A, B, C, D)$ of the transfer function $G(s)$ is minimal if there is no other state space with smaller dimension.
\end{defn}
Now, the importance of the controllability and observability can be seen in the following fact \cite{Zhou1}: if a system $(A,B,C,D)$  is \textbf{not} observable or controllable then there exists a lower dimensional system with the same transfer function as the original one. Furthermore, a minimal system may be obtained from the former by using a technique called the \textit{Kalman decomposition} \cite{Zhou1}. Therefore, in this sense describing the transfer system by a non-minimal system would be superfluous.

The following system theoretic results linking the concepts will be very useful in the following; they are stated without proof (see for example \cite{Zhou1}).

\begin{thm}\label{cl5}
The following are equivalent:
\begin{enumerate}
\item \label{cl1}$(A,B)$ is controllable
\item \label{cl2}The \textit{controllability matrix} 
\begin{equation}
\mathcal{C}=\left[\begin{smallmatrix}B&AB&...&A^{n-1}B\end{smallmatrix}\right]
\end{equation}
has full column rank.
\item \label{cl3} For any left-eigenvector, $x$, of $A$ with corresponding  eigenvalue $\lambda$, i.e $x^{\dag}A=x^{\dag}\lambda$, then $x^{\dag}B\neq0$.
\item \label{cl4} $(B^{\dag}, A^{\dag})$ is observable.
\end{enumerate}
\end{thm}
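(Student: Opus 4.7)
The plan is to prove the four conditions equivalent by establishing the three pairwise links (1)$\Leftrightarrow$(2), (2)$\Leftrightarrow$(3), and (1)$\Leftrightarrow$(4). I would begin with (1)$\Leftrightarrow$(2), which is the analytic core of the theorem; the other two are algebraic manipulations that follow once this is in place.

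For (1)$\Leftrightarrow$(2), I would start from the explicit solution \eqref{rubbish1}. Controllability is equivalent to the map $u(\cdot)\mapsto \int_{t_0}^{t_1} e^{A(t_1-\tau)} B u(\tau)\, d\tau$ being surjective onto $\mathbb{C}^n$, and a standard argument (for instance, taking adjoints with respect to the $L^2$ inner product on inputs) shows that this range coincides with the range of the controllability Gramian $W(t_0,t_1)=\int_{t_0}^{t_1} e^{A(t_1-\tau)} BB^{\dag} e^{A^{\dag}(t_1-\tau)}\, d\tau$. Thus (1) is equivalent to $W(t_0,t_1)$ being positive definite, i.e.\ to there being no nonzero $x$ with $x^{\dag} e^{At} B=0$ on $[t_0,t_1]$. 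Expanding $e^{At}$ as a power series in $t$ and then invoking the Cayley--Hamilton theorem to truncate at $A^{n-1}$, this last condition is equivalent to $x^{\dag} A^k B=0$ for $k=0,\dots,n-1$, which is exactly the failure of $\mathcal{C}$ to have full row rank (i.e.\ rank $n$). The main obstacle here is the Cayley--Hamilton truncation step, which is what makes the finite matrix $\mathcal{C}$ capture all of the infinite-dimensional data contained in $e^{At}B$.

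For (2)$\Leftrightarrow$(3), I would work with the complement. If $\mathcal{C}$ does not have full row rank, then the subspace $V=\{x\in\mathbb{C}^n : x^{\dag} A^k B=0 \text{ for all } k\geq 0\}$ is nonzero, and since $V$ is $A^{\dag}$-invariant by construction, the restriction of $A^{\dag}$ to $V$ has an eigenvector $x$, which is then a left-eigenvector of $A$ satisfying $x^{\dag} B=0$; this contradicts (3). Conversely, if such a left-eigenvector $x$ with $x^{\dag} B=0$ exists, then $x^{\dag} A^k B=\lambda^k x^{\dag} B=0$ for every $k$, so $x^{\dag}\mathcal{C}=0$ and $\mathcal{C}$ drops rank, contradicting (2).

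Finally, (1)$\Leftrightarrow$(4) follows by duality. The observability matrix of the pair $(B^{\dag},A^{\dag})$ is $\mathcal{O}=\operatorname{col}\bigl(B^{\dag},\ B^{\dag}A^{\dag},\ \ldots,\ B^{\dag}(A^{\dag})^{n-1}\bigr)=\mathcal{C}^{\dag}$, so $\mathcal{O}$ has rank $n$ if and only if $\mathcal{C}$ does. Since the analogue of the (1)$\Leftrightarrow$(2) equivalence for observability (obtained by the same Gramian/Cayley--Hamilton argument applied to the dual equations) asserts that $(B^{\dag},A^{\dag})$ is observable iff $\mathcal{O}$ has full column rank, combining this with (1)$\Leftrightarrow$(2) for $(A,B)$ gives (1)$\Leftrightarrow$(4) immediately. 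I would therefore expect the analytic step (1)$\Leftrightarrow$(2) to be the only nontrivial one, with (2)$\Leftrightarrow$(3) a linear-algebra exercise and (1)$\Leftrightarrow$(4) essentially a transpose.
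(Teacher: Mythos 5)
Your proof is correct, but note that the paper does not actually prove this theorem: it is stated explicitly ``without proof'' with a citation to \cite{Zhou1}, so there is no internal argument to compare against. What you have written is the standard textbook chain --- the Gramian/Cayley--Hamilton argument for (1)$\Leftrightarrow$(2), the $A^{\dag}$-invariant-subspace (PBH eigenvector) argument for (2)$\Leftrightarrow$(3), and transposition of the controllability matrix for (1)$\Leftrightarrow$(4) --- and each step is sound. One small point worth flagging: the theorem as printed says $\mathcal{C}$ should have full \emph{column} rank, but $\mathcal{C}$ is $n\times nm$, so the correct condition (and the one your argument actually establishes) is rank $n$, i.e.\ full \emph{row} rank; the paper has evidently swapped ``row'' and ``column'' between this theorem and its observability counterpart, and you have silently used the right version.
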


\begin{thm}\label{obs5}
The following are equivalent:
\begin{enumerate}
\item \label{obs1} $(C,A)$ is observable
\item \label{obs2} The \textit{observability matrix} 
\begin{equation}
\mathcal{O}=\left[\begin{smallmatrix}C\\CA\\ \vdots\\CA^{n-1}\end{smallmatrix}\right]
\end{equation}
has full row rank.
\item \label{obs3} For any right-eigenvector, $y$, of $A$ with corresponding  eigenvalue $\lambda$, i.e $Ay=\lambda y$, then $By\neq0$.
\item \label{obs4} $( A^{\dag}, C^{\dag})$ is controllable.
\end{enumerate}
\end{thm}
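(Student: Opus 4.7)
My plan is to exploit the exact duality between observability and controllability, so that Theorem \ref{cl5} does most of the work. The observation is that the observability matrix $\mathcal{O}$ of $(C,A)$ is precisely the adjoint of the controllability matrix of the dual pair $(A^{\dag}, C^{\dag})$: we have $[C^{\dag}\mid A^{\dag}C^{\dag}\mid\dots\mid (A^{\dag})^{n-1}C^{\dag}] = \mathcal{O}^{\dag}$. Right-eigenvectors of $A$ with eigenvalue $\lambda$ are in bijection (by adjointing) with left-eigenvectors of $A^{\dag}$ with eigenvalue $\bar\lambda$, and the condition $Cy\neq 0$ becomes $y^{\dag}C^{\dag}\neq 0$. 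Thus items (\ref{obs2}), (\ref{obs3}), (\ref{obs4}) for $(C,A)$ are, verbatim after adjointing, items (\ref{cl2}), (\ref{cl3}), (\ref{cl1}) for $(A^{\dag}, C^{\dag})$; Theorem \ref{cl5} then gives (\ref{obs2})$\Leftrightarrow$(\ref{obs3})$\Leftrightarrow$(\ref{obs4}) for free.

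It remains to tie (\ref{obs1}) into this chain, which is the only genuinely new content. The strategy is to show (\ref{obs1})$\Leftrightarrow$(\ref{obs2}) directly. Taking $u\equiv 0$ (allowed by the definition, since any input is known and can be subtracted off via the convolution term in \eqref{rubbish1}), the output reduces to $y(t)=Ce^{A(t-t_0)}x_0$. The \emph{unobservable subspace} is then $\mathcal{N}:=\{x_0: Ce^{At}x_0=0 \text{ for all } t\geq 0\}$, and observability of $(C,A)$ is equivalent to $\mathcal{N}=\{0\}$. Differentiating at $t=0$ shows $\mathcal{N}=\bigcap_{k\geq 0}\ker(CA^k)$, and Cayley--Hamilton truncates this intersection at $k=n-1$, giving $\mathcal{N}=\ker\mathcal{O}$. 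Hence observability is equivalent to $\ker\mathcal{O}=\{0\}$, i.e. $\mathcal{O}$ having full column rank (rank $n$), which is (\ref{obs2}).

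The only step requiring care is the conversion from the original definition, which involves reconstructing $x_0$ from joint knowledge of $u$ and $y$ on a time interval $[t_0,t_1]$, to the zero-input formulation. This uses linearity: the map $(x_0,u(\cdot))\mapsto y(\cdot)$ is affine in $x_0$ with a slope that depends only on $(C,A)$, so two initial states are indistinguishable (for some/any input) iff their difference lies in $\mathcal{N}$; moreover, continuity of $t\mapsto Ce^{At}x_0$ and the identity theorem for analytic functions show that vanishing on any interval $[t_0,t_1]$ forces vanishing for all $t\geq 0$, so the choice of interval is immaterial. Once this reduction is in place the rest is algebra and the appeal to Theorem \ref{cl5} described above closes the loop.
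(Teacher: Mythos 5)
Your proof is correct. Note, however, that the paper states Theorem \ref{obs5} \emph{without proof}, deferring to the standard reference \cite{Zhou1}, so there is no in-paper argument to compare against; what you have written is the standard textbook derivation (duality with Theorem \ref{cl5} via $\mathcal{O}(C,A)^{\dag}=\mathcal{C}(A^{\dag},C^{\dag})$, together with the unobservable-subspace characterisation $\mathcal{N}=\bigcap_{k}\ker(CA^{k})=\ker\mathcal{O}$ truncated by Cayley--Hamilton), and it is complete, including the two points that are often glossed over: the reduction to zero input by subtracting the known convolution term, and the irrelevance of the observation interval by analyticity of $t\mapsto Ce^{At}x_0$. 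One small remark: the statement as printed contains two typos --- the observability matrix should have full \emph{column} rank (it is $pn\times n$), and condition (\ref{obs3}) should read $Cy\neq 0$ rather than $By\neq 0$ --- and your argument implicitly and correctly works with the intended versions.
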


\section{Time-Dependent Versus Stationary Inputs}\label{feb16}

We now distinguish two parallel setups within the CLS theory, which are determined by the choice of input. That is, we discuss the use of both time-dependent and stationary inputs in this subsection.

\subsection{Time-Dependent Inputs}
In the time-dependent approach, one probes the system with a known time-dependent pulse $u(t)$. In the Laplace-domain the input-output map is therefore is entirely captured by the transfer function, $G(s)$, and hence the most that one can identify in this approach is the transfer function.

\subsection{Stationary inputs}
In the stationary approach we assume that the system is driven by  stationary white noise $u(t)$ characterised by  covariance $\mathbb{E}\left[u(t)u^{\dag}(\tau)\right]=1_m\delta(t-\tau)$. From \eqref{CSL2} (and working in the Laplace-domain) the most information about the system in the output  is given by the \textit{power spectral density} (or power spectrum) \cite{Hayden1}:
$$\Psi(s):=\mathbb{E}\left[  Y(s)Y(-s^*)^{\dag}   \right]=G(s)G(-s^*)^{\dag}.$$

\section{Identifiability of CLSs}\label{class.id}

As hinted earlier, the following question is  very important question in linear systems  theory \cite{ Zhou1, Ljung1}: which dynamical parameters 
of a CLS can be identified by observing the output?

\subsection{Time-Dependent Inputs}

Since the transfer function may be recovered from the input-output map, the identifiability question in the time-dependent approach reduces to finding the  equivalence classes of (minimal) systems with the same transfer function.  This \textit{system identification problem} has been addressed in the literature \cite{Kalman1, Ho1, Anders1} and we state it here for convenience.

\begin{thm}\label{class.id2}
Let $(A, B, C, D)$ and $(A', B', C', D')$ be two minimal CLSs. Then they have the same transfer function if and only if there exists a similarity transformation $T$ such that 
\begin{equation}\label{eq.similarity}
A'=TAT^{-1}, \quad B'=TB, \quad C'=CT^{-1}, \quad D=D'.
\end{equation}
\end{thm}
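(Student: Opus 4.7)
The plan is standard and hinges on comparing the two realisations through their Markov parameters. First I would dispense with the easy direction: given a similarity $T$ satisfying \eqref{eq.similarity}, direct substitution into $G'(s)=C'(sI-A')^{-1}B'+D'$ gives $CT^{-1}(sI-TAT^{-1})^{-1}TB+D=C(sI-A)^{-1}B+D=G(s)$, so the transfer functions coincide.

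For the non-trivial direction, I would expand $G(s)$ as a Laurent series around infinity,
\begin{equation*}
G(s)=D+\sum_{k=0}^{\infty}CA^{k}B\,s^{-k-1},
\end{equation*}
and likewise for $G'$. Matching coefficients immediately yields $D=D'$ and equality of all Markov parameters $CA^{k}B=C'(A')^{k}B'$ for $k\geq 0$. Packaging the first $n+n'$ of these into block Hankel matrices gives $\mathcal{O}\mathcal{C}=\mathcal{O}'\mathcal{C}'$ and, using the one-step shift of indices, $\mathcal{O}A\mathcal{C}=\mathcal{O}'A'\mathcal{C}'$, where $\mathcal{O},\mathcal{C}$ and $\mathcal{O}',\mathcal{C}'$ are the observability and controllability matrices from Theorems \ref{cl5} and \ref{obs5}.

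The key structural fact I would invoke next is that the rank of the (infinite) Hankel matrix of Markov parameters is an invariant of the transfer function and coincides with the dimension of any minimal realisation (the McMillan degree); combining this with minimality of the two systems forces $n=n'$. By Theorems \ref{cl5} and \ref{obs5}, $\mathcal{O},\mathcal{O}'$ then have full column rank $n$ and $\mathcal{C},\mathcal{C}'$ have full row rank $n$, so each admits a one-sided inverse. Define
\begin{equation*}
T:=(\mathcal{O}'^{\dag}\mathcal{O}')^{-1}\mathcal{O}'^{\dag}\mathcal{O}\in\mathbb{C}^{n\times n}.
\end{equation*}
Multiplying $\mathcal{O}\mathcal{C}=\mathcal{O}'\mathcal{C}'$ on the left by $(\mathcal{O}'^{\dag}\mathcal{O}')^{-1}\mathcal{O}'^{\dag}$ gives $T\mathcal{C}=\mathcal{C}'$, while the identity $\mathcal{O}=\mathcal{O}'T$ is immediate from the definition. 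Reading off the first block column of $T\mathcal{C}=\mathcal{C}'$ gives $TB=B'$, and reading off the first block row of $\mathcal{O}=\mathcal{O}'T$ gives $C=C'T$. Invertibility of $T$ follows from the injectivity argument: if $Tx=0$ then $\mathcal{O}x=\mathcal{O}'Tx=0$, and full column rank of $\mathcal{O}$ forces $x=0$.

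Finally, for the relation $A'=TAT^{-1}$ I would use the shifted Hankel identity $\mathcal{O}A\mathcal{C}=\mathcal{O}'A'\mathcal{C}'$. Substituting $\mathcal{O}=\mathcal{O}'T$ and $\mathcal{C}'=T\mathcal{C}$ yields $\mathcal{O}'(TA)\mathcal{C}=\mathcal{O}'(A'T)\mathcal{C}$; cancelling $\mathcal{O}'$ on the left (full column rank) and $\mathcal{C}$ on the right (full row rank) gives $TA=A'T$, which together with invertibility of $T$ completes the proof. The main obstacle in the argument is the dimension-matching step $n=n'$: making this rigorous requires the McMillan-degree characterisation of minimality (i.e. that the Hankel rank is the unique minimal state-space dimension), which is the substantive input; everything else reduces to bookkeeping with the Hankel factorisation.
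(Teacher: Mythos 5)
The paper does not prove this theorem at all --- it is stated ``for convenience'' with a citation to the classical literature (Kalman, Ho, Anderson), so there is no in-paper argument to compare against. Your proof is the standard Ho--Kalman realisation-theory argument from exactly those references, and it is essentially correct: Markov parameters from the Laurent expansion, Hankel factorisation, McMillan degree to force $n=n'$, and the pseudo-inverse construction of $T$.

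One step deserves a sentence more than you give it: you assert that $\mathcal{O}=\mathcal{O}'T$ is ``immediate from the definition,'' but with $T=(\mathcal{O}'^{\dag}\mathcal{O}')^{-1}\mathcal{O}'^{\dag}\mathcal{O}$ what is immediate is only $\mathcal{O}'T=P_{\mathcal{O}'}\mathcal{O}$, the orthogonal projection of $\mathcal{O}$ onto the column space of $\mathcal{O}'$. To conclude $\mathcal{O}'T=\mathcal{O}$ you need $\mathrm{colspace}(\mathcal{O})\subseteq\mathrm{colspace}(\mathcal{O}')$; this does follow, because $\mathcal{O}\mathcal{C}=\mathcal{O}'\mathcal{C}'$ together with the full row rank of $\mathcal{C}$ and $\mathcal{C}'$ gives $\mathrm{colspace}(\mathcal{O})=\mathrm{colspace}(\mathcal{O}\mathcal{C})=\mathrm{colspace}(\mathcal{O}'\mathcal{C}')=\mathrm{colspace}(\mathcal{O}')$, but it should be said. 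The same identity underpins your invertibility argument for $T$, so it is worth making explicit. A second, purely bookkeeping point: before $n=n'$ is established the two Hankel matrices should be built to a common block size (at least $\max(n,n')$), with Cayley--Hamilton guaranteeing that the extended observability and controllability matrices retain ranks $n$ and $n'$ respectively; you gesture at this with ``the first $n+n'$'' parameters, which is fine. With those two clarifications the argument is complete.
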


\subsection{Stationary Inputs}\label{ICE12}

 In the stationary input  approach the power spectrum can  be computed from the output correlations.
Understanding which parameters can be identified from the CLS  reduces to finding all CLSs with the same power spectrum.  Observe that the power spectrum depends on the system parameters via the transfer function, therefore it cannot be possible to identify  more than the transfer function. The problem of finding the transfer function from the power spectrum is of the type: `for a square rational matrix $R(s)$, where $s\in\mathbb{C}$, find rational matrix $W(s)$ such that 
$$R(s)=W(s)W(-{s}^*)^{\dag}$$
for all $s\in \mathbb{C}$'. This type of problem is called the 
 \textit{spectral factorisation problem} \cite{Youla1, Anders1}. There are known algorithms to do this \cite{Youla1, Davies1}. From the latter,  one then finds a system realisation (i.e. matrices governing the system dynamics) for the given transfer function \cite{Ljung1}. The problem is that the map from power spectrum to transfer functions is  non-unique, and each factorisation could lead to (minimal) system realisations of differing dimension. For this reason, the concept of \textit{global minimality} was introduced in \cite{Kalman1} to select the transfer function with smallest system dimension. 
 
 \begin{defn}\label{CLSGM}
 A CLS $(A, B, C, D)$ is globally minimal if there exists no lower dimensional system with the same power spectrum, $\Psi(s)$. 
 \end{defn}
 
 This raises the following question: is global minimality sufficient to \textbf{uniquely} identify the transfer function from the power spectrum? 
The answer is in general negative\footnote{However, under the assumption that the transfer function be \textit{outer} the construction of the transfer function from the power spectrum is unique (see \cite{Hayden1}).}, as discussed in \cite{Anders1, Glover1} (see also Lemma 2 and Corollary 1 in \cite{Hayden1}).

\section{Operations on CLSs}\label{network1}

We conclude this section by discussing two methods of connecting CLSs. These are analogous to the notions of components connected in parallel or in series in electrical circuits. 

\begin{figure}
\centering
\includegraphics[scale=0.15]{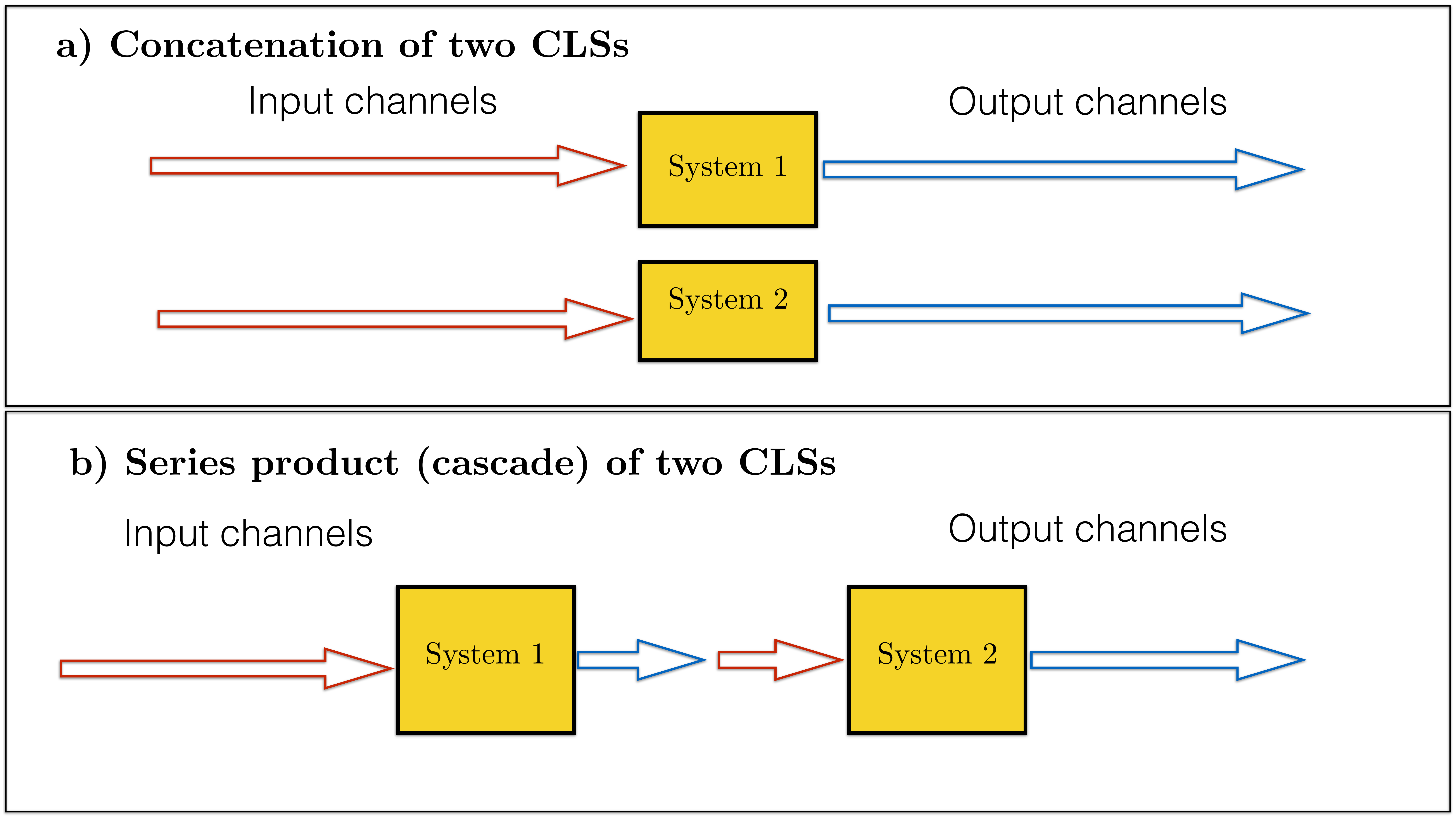}
\caption{Illustration on a) concatenation connection  and b) series product connection.
\label{sercon1}}
\end{figure}

The \textit{concatenation} connection \cite{Zhou1}, which consists of lying two systems  $\left(A_1, B_1, C_1, D_1\right)$ and    $\left(A_2, B_2, C_2, D_2\right)$ side by side (cf. Fig. \ref{sercon1}), has system matrices 
\begin{equation}\label{concat}
\left(\left(\begin{smallmatrix}A_1&0\\0&A_2\end{smallmatrix}\right), \left(\begin{smallmatrix}B_1&0\\0&B_2\end{smallmatrix}\right), \left(\begin{smallmatrix}C_1&0\\0&C_2\end{smallmatrix}\right), \left(\begin{smallmatrix}D_1&0\\0&D_2\end{smallmatrix}\right)\right).
\end{equation}

The \textit{series product} connection \cite{Zhou1}, which consists of connecting two systems in series with the output of the $\left(A_1, B_1, C_1, D_1\right)$  as the input of the $\left(A_2, B_2, C_2, D_2\right)$ (cf. Fig. \ref{sercon1}), has system matrices 
\begin{equation}\label{concat1}
\left(\left(\begin{smallmatrix}A_1&0\\B_2C_1&A_2\end{smallmatrix}\right), \left(\begin{smallmatrix}B_1\\B_2D_1\end{smallmatrix}\right), \left(\begin{smallmatrix}D_2C_1&C_2\end{smallmatrix}\right), D_2D_1\right).
\end{equation}

\chapter{Quantum Linear Systems}\label{WEEP}

Quantum linear systems (QLSs) 
 are composed of a continuous variables \textit{system}  coupled to a series of quantum stochastic input \textit{fields} consisting of non-commuting noise processes \cite{Peter1, Guta2}. Following the input-system interaction, the input is  transformed into an output, which can be measured to produce a classical stochastic process (see Ch. \ref{CL1}) \cite{Guta2}. 
Their uses extend to  quantum optics, opto-mechanical systems, electrodynamical  systems,  cavity QED systems  and elsewhere  and have a huge array of applications \cite{Wiseman1, Doherty1, Yamamoto2, Tian1, Gardner1, Stockton1, Yamamoto4, Zhang2, Matyas1}.  
 The  framework required to describe these is the celebrated \textit{quantum stochastic calculus} \cite{Parth1}. 


 In this section we  review the QLS theory; firstly discussing in detail the \textit{system} and \textit{field} constituents. We then highlight some results of relevance for this thesis. 
We refer the reader to \cite{Gardner1} and \cite{Sarovar1} for a more detailed discussion on the input-output 
formalism, and  to the review papers \cite{Peter1, Parth1, Parth2, Nurdin2} 
for the theory of QLSs.

\section{Quantum Harmonic Oscillator (QHO) and the Symplectic Group}\label{QHO}

Consider a collection of $n$ QHOs, which    are described by a column vector of annihilation operators, 
$\mathbf{a}:=[\mathbf{a}_1,\mathbf{a}_2, \dots, \mathbf{a}_n]^T$. Together with their respective creation operators 
$\mathbf{a}^{\#}:=[\mathbf{a}^{\#}_1,\mathbf{a}^{\#}_2, \dots, \mathbf{a}^{\#}_n]^T$ 
they satisfy the canonical commutation relations (CCR)
$
\left[\mathbf{a}_i, \mathbf{a}^{*}_j\right]=\delta_{i j}\mathds{1}.
$
Alternatively, we may write these in doubled-up notation as
$$\left[\breve{\mathbf{a}}_i,\breve{\mathbf{a}}_k^{\#}\right]=J_{ik}.$$
We denote by $\mathcal{H}:= L^2(\mathbb{R}^n)$ the Hilbert space of this system carrying the standard representation of the $n$ modes.

\begin{defn}\label{def.symplectic}
A matrix $S\in\mathbb{C}^{2n\times 2n}$ is said to be \textit{$\flat$-unitary} if it is invertible and satisfies
\[S^\flat S=SS^\flat =1_{2n}.\]
If additionally, $S$ is of the form $S=\Delta(S_-, S_+)$ for some $S_-, S_+\in\mathbb{C}^{m\times m}$ then we call it \textit{symplectic}. Such matrices form a group called the \textit{symplectic group} \cite{Gough1}.
\end{defn}

The \textit{symplectic transformation} 
\begin{equation}\label{canonical}
\breve{\mathbf{a}}\mapsto S\breve{\mathbf{a}}
\end{equation}
for symplectic matrix $S$
 preserves the CCR and is a mapping from one QHO system to another. In addition such a transformation may be unitarily implemented \cite{Gough1}; that is,  by Shale's Theorem \cite{Shale1} there exists a unitary operator $\mathbf{U}$ such that 
$$S\breve{\mathbf{a}}=\mathbf{U}^{\dag}\breve{\mathbf{a}}\mathbf{U}.$$ From the familiar quantum mechanical point of view, there is a Hamiltonian $\mathbf{H}$ generating \eqref{canonical}. That is, $\mathbf{U}=e^{-i\mathbf{H}}$ where 
\begin{equation}\label{Hamiltonian}
\mathbf{H}=\mathbf{a}^{\dag}\Omega_-\mathbf{a}+\frac{1}{2}\mathbf{a}^T\Omega_+^{\dag}\mathbf{a}+\frac{1}{2}\mathbf{a}^{\dag}\Omega_+\mathbf{a}^{\#}
\end{equation}
for $n\times n$ matrices $\Omega_-=\Omega_-^{\dag}$, $\Omega_+=\Omega_+^T$ \cite{Weedbrook1}. 

A state on $\mathcal{H}$ is said to be \textit{Gaussian} if  
\begin{equation}\label{cough}
\left<e^{i\breve{u}^{\dag}\breve{\mathbf{a}}}\right>=e^{-\frac{1}{2}\breve{u}^{\dag}V\breve{u}+i\breve{u}^{\dag}\breve{\alpha}},
\end{equation}
where $V\geq0$. The Gaussian state is characterised by the mean $\breve{\alpha}=\left<\breve{\mathbf{a}}\right>$ and the covariance matrix $$V:=\left<\breve{\mathbf{a}}\breve{\mathbf{a}}^{\dag}\right>=\left(\begin{smallmatrix}1+N^T&M\\M^{\dag}&N\end{smallmatrix}\right)$$
where $N=N^{\dag}$ and $M=M^T$.
Now, we can also interpret \eqref{canonical} as a change of basis, since in the co-ordinates $\breve{\mathbf{a}}':=S\breve{\mathbf{a}}$  the state will be Gaussian with mean and covariance $S\breve{\alpha}$ and $SVS^{\dag}$ respectively. 
A \textit{coherent state} with amplitude $\alpha$ corresponds to the case $N=M=0$ (note that the vacuum is a coherent state with amplitude $\alpha=0$). Assuming that $\alpha=0$ for simplicity, then the state's purity can be characterised in terms of the symplectic eigenvalues of $V$. That is, there exists a symplectic matrix, $S$, such that the modes  $\breve{\mathbf{a}}'=S\breve{\mathbf{a}}$ are independent of each other and each of them is in a vacuum or thermal state i.e.  $V_i=\left<\breve{\mathbf{a}_i}'   \breve{\mathbf{a}'_i}^{\dag}\right>=             \left(\begin{smallmatrix} n_i+1&0\\0&n_i\end{smallmatrix}\right)$, where $n_i$ is the mean photon number. We call $\breve{\mathbf{a}}'$ a canonical basis and the elements of the ordered sequence $n_1\leq...\leq n_k$ the \textit{symplectic eigenvalues} of $V$. The latter give information about the state's purity: if all $n_i=0$ the state is pure, if all $n_i>0$ the state is fully mixed. This result is known as \textit{Williamson's Theorem} \cite{Weedbrook1, Adesso1, Williamson1}. In particular, a pure stationary state may be viewed in a different basis as vacuum by performing the symplectic transformation  \cite{Weedbrook1}
\begin{equation}\label{vtrick}
S=\Delta\left((N^T+1)^{1/2},M\left(N^{\dag}+1\right)^{-1/2}\right).
\end{equation}
 More generally, we can separate the pure and mixed modes and write ${\mathbf{a}}'=\left(\mathbf{a}_p^T,\mathbf{a}^T_m\right)^T$ (cf symplectic decomposition \cite{Wolf1}). For more details on Gaussian states see \cite{Weedbrook1, Adesso1}.
A further useful  result for this thesis is the following,  which relates to bipartite entanglement of Gaussian states  \cite[Theorem 1]{Botero1}.

\begin{thm}\label{WOLFF}
Consider a bipartition of a (zero mean) pure Gaussian state with $n$ modes, so that $A$ has $n_1$ modes (and $B$ has $n-n_1$ modes). Then there exists local symplectic transformations at each site $A$ and $B$ so that 
\begin{equation}
\left|\psi\right>_{A,B}=\left|\psi\right>_{\tilde{A}_1,\tilde{B}_1}\otimes...\otimes\left|\psi\right>_{\tilde{A}_s,\tilde{B}_s}\otimes\left|0\right>_{\tilde{A}_F}\otimes\left|0\right>_{\tilde{B}_F},
\end{equation}
for some $s\leq\mathrm{min}(n_1, n-n_1)$, where $\{\tilde{A}_1, ... \tilde{A}_{n_1}\}$ and $\{\tilde{B}_1, ... \tilde{B}_{n-n_1}\}$ are the new transformed sets of modes. 
Here 
$\left|\psi\right>_{\tilde{A}_i,\tilde{B}_i}$ are two-mode squeezed states \cite{Lvovsky1} characterised by covariance matrix $V(N_i,M_i)$ with
$$N_i=\left(\begin{smallmatrix}n_i&0\\0&n_i\end{smallmatrix}\right) \quad\mathrm{and}\quad 
M_i=\left(\begin{smallmatrix}0&m_i\\m_i&0\end{smallmatrix}\right)$$
with $n_i, m_i\in\mathbb{R}$ and $n_i(n_i+1)=m_i^2$ (purity condition). The states
$\left|0\right>_{\tilde{A}_F}$ and $\left|0\right>_{\tilde{B}_F}$ are vacuum states on the remaining modes in $\tilde{A}_i$ and $\tilde{B}_i$ respectively.
\end{thm}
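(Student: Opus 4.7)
The plan is to establish this Gaussian Schmidt-type decomposition by working at the level of the covariance matrix, exploiting Williamson's Theorem locally on each party together with the purity of the global state. Write the $2n\times 2n$ doubled-up covariance matrix in block form according to the bipartition,
\[
V=\begin{pmatrix}V_A & V_{AB}\\ V_{AB}^{\flat} & V_B\end{pmatrix},
\]
where $V_A$ and $V_B$ are the reduced covariance matrices on $A$ and $B$. First I would apply Williamson's Theorem to each marginal: there exist symplectic matrices $S_A$ and $S_B$, acting locally on $A$ and $B$, that bring $V_A$ and $V_B$ into canonical form with symplectic eigenvalues $n_1^A,\dots,n_{n_1}^A$ and $n_1^B,\dots,n_{n-n_1}^B$, respectively. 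Since the total state is pure, Williamson applied globally gives vacuum, and the standard Gaussian analogue of the Schmidt result then forces the \emph{nontrivial} symplectic eigenvalues of $V_A$ and $V_B$ to coincide pairwise — say $n_1,\dots,n_s$ — with the remaining marginals in vacuum.

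Next, apply the local symplectic $S_A\oplus S_B$ and reorder the modes so that the first $s$ modes of $\tilde{A}$ (resp.\ $\tilde{B}$) carry the matching symplectic eigenvalues $n_i$, while the remaining modes $\tilde{A}_F$, $\tilde{B}_F$ have vacuum marginals. A pure Gaussian marginal is necessarily uncorrelated with its complement — any nonzero cross-correlation would mix a vacuum marginal — so the cross-correlation block of $V$ vanishes on the rows and columns indexed by $\tilde{A}_F\cup\tilde{B}_F$. This peels off the factor $\ket{0}_{\tilde{A}_F}\otimes\ket{0}_{\tilde{B}_F}$ and reduces the problem to the $2s$ mixed modes. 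For the remaining block I would exploit the residual symplectic freedom that preserves the local Williamson forms — namely, passive rotations within each symplectic eigenspace on $\tilde{A}$ and on $\tilde{B}$. Using an SVD-type argument on the cross-correlation block restricted to each pair of eigenspaces with matching $n_i$ (blocks between distinct eigenvalues decouple by purity), one arranges that $\tilde{A}_i$ couples only to $\tilde{B}_i$. The global purity condition then forces each pair $(\tilde{A}_i,\tilde{B}_i)$ into a pure two-mode Gaussian state with covariance of the claimed form $V(N_i,M_i)$, the relation $n_i(n_i+1)=m_i^2$ being precisely the purity constraint on two modes; any residual phase in $m_i$ is absorbed by a local phase rotation so that $m_i\in\mathbb{R}$.

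The main obstacle I expect is the penultimate step: simultaneously block-diagonalising the cross-correlation block across pairs of matching symplectic eigenspaces while preserving the local Williamson normal forms, especially when some of the $n_i$ are degenerate. This amounts to converting the global purity condition (which rigidly couples $V_A$, $V_B$ and $V_{AB}$) into a pairwise block structure for $V_{AB}$ in the locally diagonalised basis, and it is the step that genuinely uses purity rather than just the Schmidt-style matching of symplectic spectra established earlier.
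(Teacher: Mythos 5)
Your proposal is correct and follows the same skeleton as the paper's argument (which is itself only a two-sentence sketch deferring to the cited reference): reduce to covariance matrices, apply Williamson's Theorem locally on each marginal, use purity to match the nontrivial symplectic eigenvalues across the cut, peel off the vacuum modes, and pair up the rest. The one place where you genuinely depart from the paper's route is the final pairing step: the paper obtains the mode-wise structure by ``careful comparison'' with the Hilbert-space Schmidt decomposition of $\ket{\psi}_{A,B}$, whereas you extract it entirely in phase space from the off-diagonal block of the purity identity. That substitution is sound and arguably cleaner: writing purity as $(\sigma\Omega)^2=-1$ with $\Omega=\Omega_A\oplus\Omega_B$, the $(1,2)$ block gives $\left(\sigma_A\Omega_A\right)\left(\gamma\Omega_B\right)=-\left(\gamma\Omega_B\right)\left(\sigma_B\Omega_B\right)$, a Sylvester/intertwining relation that kills the components of the cross block $\gamma$ joining non-matching symplectic eigenvalues — this is exactly the decoupling you assert — and the $(1,1)$ block then fixes $\gamma\gamma^T$ on each matching eigenspace up to the residual unitary stabiliser of the Williamson form, which your SVD step uses to reduce to one-to-one mode pairing even in the degenerate case. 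Your identification of the degenerate-eigenvalue block-diagonalisation as the genuinely nontrivial step is accurate; the paper glosses over it entirely. The only caveat is that your proposal remains a plan rather than a proof at precisely that point — you would still need to verify that the residual freedom preserving the local Williamson forms (the orthogonal-symplectic, i.e.\ unitary, stabiliser on each degenerate eigenspace) is large enough to realise the SVD of the cross block simultaneously on the $A$ and $B$ sides — but this is standard and goes through.
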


This result says that any bipartite division of modes in a pure Gaussian state can always be expressed as a product state involving either two-mode squeezed states or single mode squeezed states at each site.
The proof is a consequence of the \textit{Schmidt decomposition} \cite{Nielsen1}.
Essentially, 
since the reduced state (with respect to the bipartition) of the original pure Gaussian state is also Gaussian, then one may apply 
Williamson's Theorem on each site. Careful comparison of this with the  result of the Schmidt decomposition gives the result. 

\begin{remark}
Observe that the symplectic eigenvalues of the reduced states for observers $\tilde{A}$ and $\tilde{B}$ are identical. Furthermore, the reduced states on modes $\tilde{A}_i$ and $\tilde{B}_i$ are the same and are given by $V_{\tilde{A}_i}=V_{\tilde{B}_i}=\left(\begin{smallmatrix}n_i+1&0\\0&n_1\end{smallmatrix}\right)$. 
Moreover the character of entanglement can be understood mode-wise because each mode on one side is entangled with only one on the other side. 
\end{remark}


\section{Bosonic Fields}\label{Bosonic}
A bosonic environment with $m$ channels is described by  
 fundamental variables (fields) 
$\mathbf{B}(t):=\left[\mathbf{B}_{1}(t), \mathbf{B}_{2}(t), \ldots, 
\mathbf{B}_{m}(t)\right]^T$, where $t\in \mathbb{R}$ represents time. 
The fields satisfy the CCR 
\begin{eqnarray}
\left[\mathbf{B}_{i}(t), \mathbf{B}^{*}_{j}(s)\right]=\mathrm{min}\{t,s\}\delta_{ij}\mathds{1}.
\end{eqnarray}
Equivalently, this can be written as 
$\left[\breve{\mathbf{b}}_{i}(t),\breve{\mathbf{b}}_{k}^{*}(s)\right]
=\delta(t-s)J_{ik}\mathds{1}$, where 
$\mathbf{b}_{i}(t)$ are the infinitesimal (white noise) annihilation operators formally defined as 
$\mathbf{b}_{i}(t):=d\mathbf{B}_{i}(t)/dt$ \cite{Peter1}. 
The operators can be defined in a standard fashion on the Fock space 
$\mathcal{F}= \mathcal{F}(L^2(\mathbb{R})\otimes \mathbb{C}^m)$ \cite{Bouten1}.  At each instance of time the modes $b_i(t)$ may be interpreted as an $m$-mode  QHO.
For more details on Bosonic fields and quantum stochastic processes see \cite{Parth1, Bouten1}.

\section{The Model: Time-Domain Representation}\label{MTDR}
A quantum linear system (QLS) is defined as a continuous 
variable (cv) system (Sec. \ref{QHO}), called the \textit{system}, coupled to a Bosonic environment (Sec. \ref{Bosonic}), called the \textit{field(s)}. 
QLSs are  input-output models (see Fig. \ref{inout}), in which one  prepares an input in the field and then infers system parameters indirectly from measurements of the output.

\begin{figure}
\centering
\includegraphics[scale=0.18]{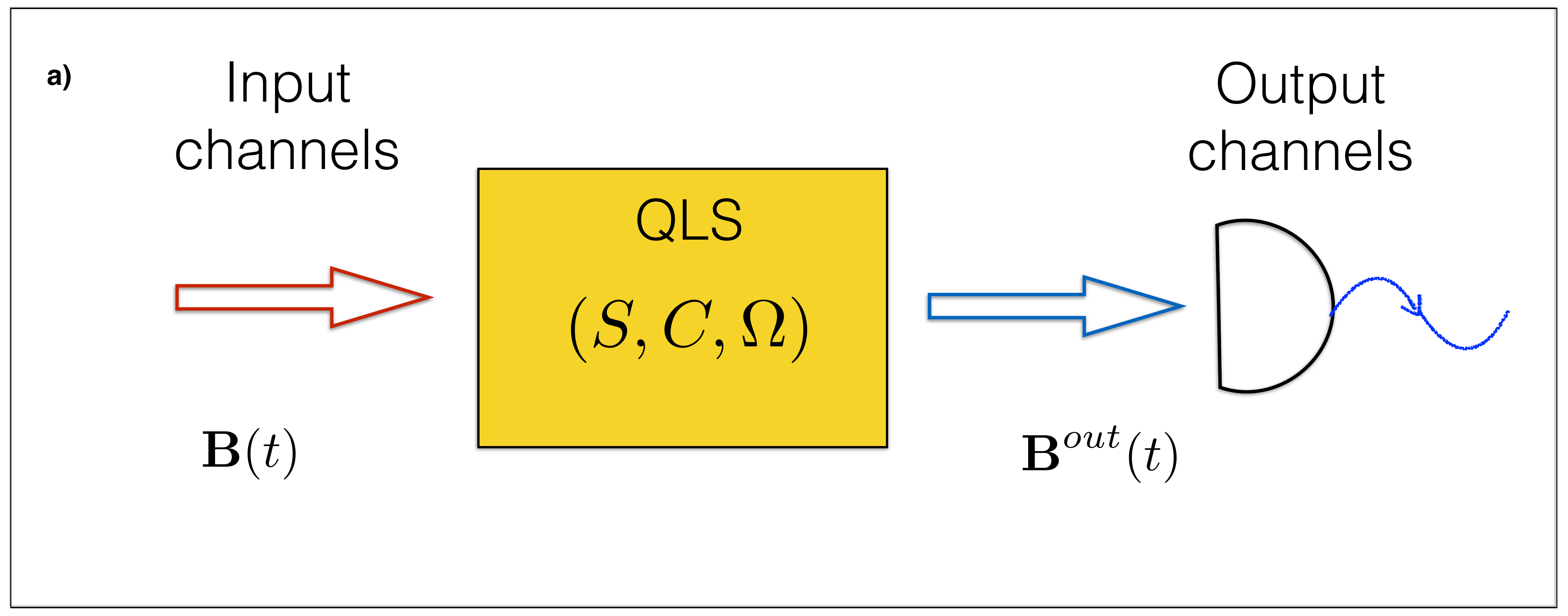}
\caption{ System identification problem: find parameters $(S, C, \Omega)$ of a QLS by measuring output.
\label{inout}}
\end{figure}

The dynamics of a QLS is determined by the system's Hamiltonian \eqref{Hamiltonian} and the coupling operator 
\begin{equation}\label{couplef}
\mathbf{L}=C_-\mathbf{a}+C_{+}\mathbf{a}^{\#},
\end{equation}
 where $C_-$ and $C_+$ are $m\times n$ matrices. In the Markov approximation, the joint unitary evolution of ``system $\otimes$ field'' is described (in the interaction picture) by the unitary 
$${\bf U}(t)=\overleftarrow{\mathrm{exp}}\left[-i\int_0^t\left(\mathbf{H}+\mathbf{H}_{int}(s)\right)ds\right]$$
 on the joint space 
$\mathcal{H}\otimes \mathcal{F}$, where $\mathbf{H}_{int}:=i\left[\mathbf{b}^{\dag}(t)\mathbf{L}-\mathbf{L}^{\dag}\mathbf{b}(t)\right]$ is the \textit{interaction Hamiltonian}. Notice that the total Hamiltonian $\mathbf{H}+\mathbf{H}_{int}(t)$ is quadratic in the canonical variables. 
The unitary ${\bf U}(t)$ is the (unique) solution of the quantum stochastic differential equation (QSDE) \cite{Bouten1, Bouten2, Dong1, Gardner1, Parth1, Parth2}
\begin{eqnarray}
\label{eq.unitary.cocycle}
& & \hspace*{-3.9em}
      d\mathbf{U}(t) 
          :=\mathbf{U}(t+dt)-\mathbf{U}(t)
\nonumber \\ & & \hspace*{-1em}
      = \left(-i \mathbf{H}dt+({\bf S}-\mathds{1}_m)
               d\mathbf{\Lambda}(t)
              + d\mathbf{A}^{\dag}(t)\mathbf{L}
               - \mathbf{L}^{\dag}d\mathbf{A}(t) 
               -\frac{1}{2}\mathbf{L}^{\dag}\mathbf{L}dt\right)\mathbf{U}(t),
\end{eqnarray}
with initial condition ${\bf U}(0)= \mathbf{I}$. 
Here ${\bf S}=S\mathbf{1}_m$ describes the 
 scattering between 
the fields; 
$d{\bf A}_i(t), d{\bf A}_i^*(t), 
d\mathbf{\Lambda}_{ij}(t)$ 
are increments of fundamental quantum stochastic processes describing the 
creation, annihilation and scattering between the input channels. In particular, $\mathbf{\Lambda}(t)$ is an $m\times m$ matrix whose elements 
$\mathbf{\Lambda}_{ij}(t)$ are the second quantisation operators 
$d\Gamma(P_{[0,t]}\otimes |j\rangle \langle i| )$, where 
$P_{[0,t]}$ is the projection onto wave functions in $L^2(\mathbb{R})$ with 
support in $(0,t)$. 
These operators represent scattering between the input channels and are formally 
given by $\mathbf{\Lambda}_{ij}(t) 
:=\int^t_0\mathbf{b}_i^{*}(s)\mathbf{b}_j(s)ds$.

Now let $\mathbf{a}(t)$ and 
$\mathbf{B}^{out}(t)$ be the Heisenberg evolved system and output variables 
\begin{eqnarray}
       \mathbf{a}(t):=\mathbf{U}(t)^{\dag}\mathbf{a}\mathbf{U}(t),~~~ 
       \mathbf{B}^{out}(t):=\mathbf{U}(t)^{\dag}\mathbf{B}(t)\mathbf{U}(t).
\end{eqnarray}
A consequence of the quadratic Hamiltonian is that these variables satisfy linear (doubled-up) QSDEs
 (see for example \cite{Gough1, Parth1, Bouten1}): 
\begin{eqnarray}\label{fungi}
       d\breve{\bf a}(t) &=& 
            A \breve{\bf a}(t)dt-C^{\flat}\Delta(S,0)d \breve{\bf B}(t),\\
       d \breve{\bf B}^{out}(t) &=& 
            C \breve{\bf a}(t)dt+ \Delta(S,0)d \breve{\bf B}(t),
\end{eqnarray}
where   $C:=\Delta\left(C_{-}, C_{+}\right)$ and $A:=\Delta\left(A_{-}, A_{+}\right)=-\frac{1}{2}C^\flat C-iJ_n\Omega$ with $\Omega= \Delta\left(\Omega_{-}, \Omega_{+}\right)$ and
\[
A_{\mp}:=-\frac{1}{2}\left(C_{-}^{\dag}C_{\mp}-C_{+}^{T}C_{\pm}^{\#}\right)-i\Omega_{\mp}.
\] These are formally equivalent to the \textit{Langevin} equations
\begin{align}
 \label{langevin}      \breve{\bf \dot{\mathbf{a}}}(t) =& 
            A \breve{\bf a}(t)-C^{\flat}\Delta(S,0)\breve{\bf b}(t),\\
  \label{langevin3}     \breve{\bf b}^{out}(t) =& 
            C \breve{\bf a}(t)+ \Delta(S,0)\breve{\bf b}(t),
\end{align}
with $\breve{\bf b}^{out}(t):=d\breve{\bf B}(t)/dt$.
These equations may be solved; we give the solution to Eq. \eqref{langevin}:
\begin{equation}\label{lan2}
\breve{\mathbf{a}}(t)=e^{At}\breve{\mathbf{a}}(0)+e^{At}\left(\int_0^t e^{-As}(-C^b)d\breve{\mathbf{B}}(s)\right).
\end{equation}

More generally one may allow for static squeezing operations in the field in addition to the scattering processes. This is achieved by extending the  unitary scattering matrices $S=\Delta(S,0)$ to the symplectic group. We therefore denote by $S$ the squeezing and/or scattering in the field. 

To be explicit, an arbitrary QLS is completely 
characterised by the parameters 
$\left(S, C, \Omega\right)$. Synonymously we also use the notation $\left({S}, C, A\right)$ here to characterise  the QLS. 
However one should be aware  that  not all choices of  $A$ are physically realisable as QLSs \cite{James1}. 
In fact, a general $(S, C, A)$ satisfying the equations \eqref{langevin} and \eqref{langevin3} realises a QLS if and only if 
 \cite{James2, Maalouf1}
\begin{equation}\label{PR}
A+A^{\flat}+C^{b}C=0.
\end{equation}
In some instances we shall restrict our attention to the case of no scattering or squeezing; where this is clear from the context we shall  use  the notation $(A, C)$ (or $(\Omega, C)$) to denote that $S=1$ in our model

Throughout this thesis we shall assume that the QLSs considered are ergodic. This means that if the state of the input is the vacuum, then in the long time limit the system converges to the vacuum state as the unique stationary state. 

\section{The Model: Frequency-Domain Representation}\label{freqrep}

We can also switch from the time domain dynamics 
described above to the frequency domain picture by taking 
 Laplace transforms. That is,  
\cite{Yanagisawa1}:
\begin{equation}\label{iol}
\breve{\bf b}^{out}(s) = 
\Xi(s)
 \breve{\bf b}(s),
\end{equation}
where $ \Xi(s)$ is \emph{transfer function matrix} of the system%
\begin{equation}
\label{eq.transfer.function.general}
       \Xi(s)=\Big\{{1}_m-C(s{1}_n-A)^{-1}C^{\flat}S\Big\}=
       \left(\begin{smallmatrix}
       \Xi_{-}(s)&\Xi_{+}(s)
       \\
       \Xi_{+}(\overline{s})^{\#}& \Xi_{-}(\overline{s})^{\#}
       \end{smallmatrix}\right).
\end{equation}
and   $ \breve{\bf b}(s)$ and $\breve{\bf b}^{out}(s)$ are the Laplace transforms of $ \breve{\bf b}(t)$         and $\breve{\bf b}^{out}(t)$.
Although the same notation has been used for the input-output operators in the time- and Laplace-domains, it should be understood going forward that when we use $t$ ($s$) we are referring  to the time domain (Laplace-domain).

In particular, the frequency domain input-output 
relation is obtained by choosing $s=i\omega$ for $\omega\in\mathbb{R}$.
%
 %
The corresponding commutation relations are
$\left[\mathbf{b}(-i\omega),\mathbf{b}(-i\omega')^{\#}\right]
=i\delta(\omega-\omega')\mathds{1}$, and similarly for the output modes\footnote{Note that the position of the conjugation sign is important here because in general  $\mathbf{b}(-i\omega')^{\#}$ and $\mathbf{b}^{\#}(-i\omega')$ are not the same (see the definition of the Laplace transform).}. 
As a consequence, the transfer matrix $\Xi(-i\omega)$ is symplectic for all frequencies $\omega$ \cite{Gough1}. 
In fact not only is this condition necessary for a QLS, but it was recently shown  to be sufficient (see \cite{James2}). That is, if $\Xi(s)$ is symplectic on the imaginary axis, then there is guaranteed to exist a system $(A, C)$ realising it. For this reason, this condition is termed the \textit{frequency domain physical relizability} (FPR) condition.


Finally, we note that while the transfer function is uniquely determined by the triple $(S, C, \Omega)$, the converse statement is not true, which is the subject of Ch. \ref{T.F}.

\section{PQLS}\label{honk}

A special case of linear systems is that  of \textit{passive} quantum linear systems (PQLSs) for which $C_{+}=0$, 
$\Omega_{+}=0$ and $S_+=0$ \cite{Guta2}.   
They reason they are referred to as passive systems is because neither the Hamiltonian nor the coupling contain terms require an external source of quanta  \cite{Gough1, Nurdin2}, i.e., the evolution is purely dissipative. This class of QLS is still sufficiently rich to arise in many applications \cite{Wiseman1, Guta2, Petersen2, Nurdin6, Gough5}, and  include  optical cavities and beam splitters.

 When dealing exclusively with PQLSs,  the doubled-up notation is no longer necessary and we shall drop it; PQLSs are characterised by the triple 
$\left(S_-, C_-, \Omega_{-}\right)$ (or $\left(S_-, C_-, A_{-}\right)$), where the scattering matrix $S_-$ is unitary.

 The 
 input-output relation becomes \cite{Yanagisawa1,Guta2}
\begin{equation}
\label{iolp}
\mathbf{b}^{out}(s) = 
\Xi(s)
 \mathbf{ b} (s),
\end{equation}
where the transfer function is given by
\begin{equation}
\label{eq.transfer.function}
       \Xi(s)=\Big\{{1}_m-C_{-}(s{1}_n-A_{-})^{-1}C_{-}^{\dag}\Big\}S,
\end{equation}
which is \textbf{unitary} for all $s= -i\omega\in i \mathbb{R}$.

The PR and FPR conditions in this case are: $A_-+A_-^{\dag}+C_-^{\dag}C=0$, and $\Xi(s)$ is unitary on the imaginary axis of the complex plane, respectively.

\section{Examples of Quantum Linear Systems}\label{pity}

\begin{figure}
\centering
\includegraphics[scale=0.18]{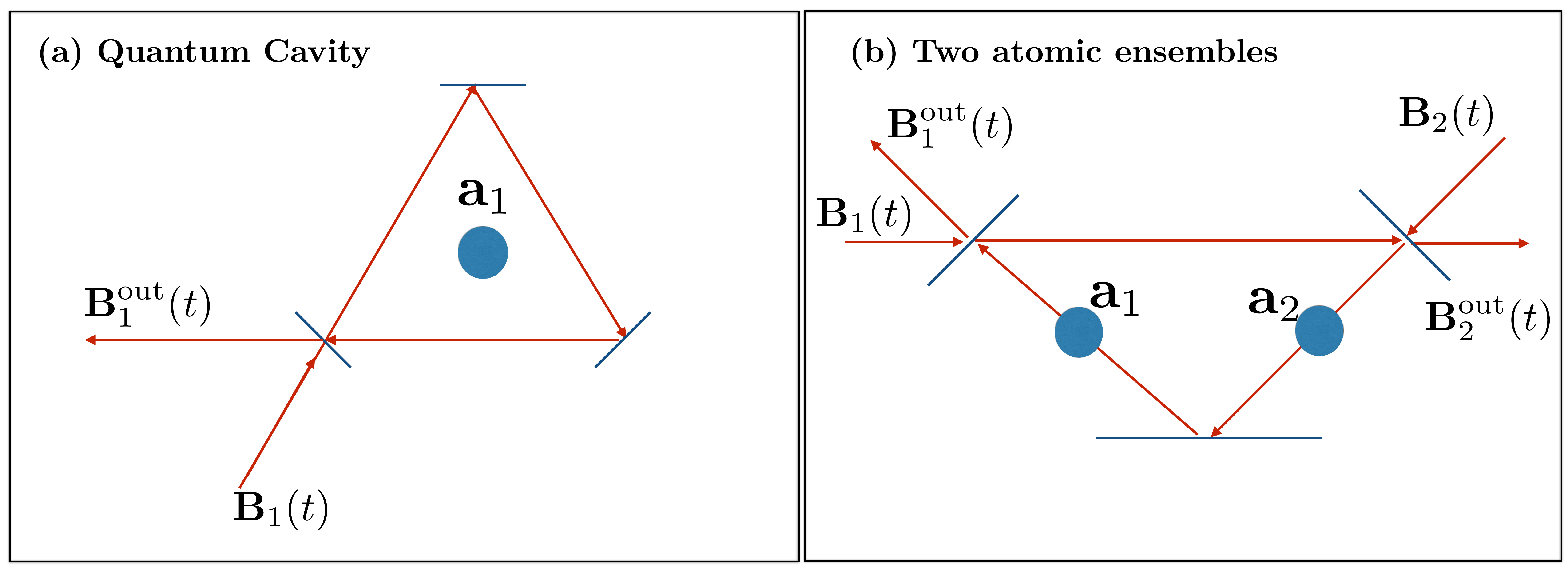}
\caption{ Circuit diagram illustration for (a) a quantum cavity and (b) two atomic ensembles.
\label{ensem}}
\end{figure}

\begin{exmp}\cite{Guta2, Gough3}\label{opticalcavity}
Our first example is an optical cavity, illustrated in Fig. \ref{ensem} (a), which is a passive QLS with one internal mode $\mathbf{a}$ and one field $\mathbf{B}(t)$. The system dynamics are given by
\begin{align*}
&d\mathbf{a}=(-i\omega_0-\frac{\kappa}{2})d\mathbf{a}-\sqrt{\kappa}d\mathbf{B}\\
&d\mathbf{B}^{out}=\sqrt{\kappa}d\mathbf{a}+d\mathbf{B},
\end{align*}
where $\kappa$ is the transmitivity of the coupling mirrors and $\omega_0$ is the detuning, which represents the frequency difference between the inner and outer optical fields \cite{Guta2}.
Note that in this case $\left(S, C, \Omega\right)=(1,\sqrt{\kappa}, \Omega_0)$.
\end{exmp}

\begin{exmp} \cite{Gardner1, Gough1} \label{DPA1} 
A degenerate parametric amplifier (DPA) can be modelled as  a single mode, $\mathbf{a}$, coupled to a single field, $\mathbf{B}(t)$. Here $S=1$, $\Omega_-=0$, $\Omega_+=\frac{i}{2}\epsilon$ ($\epsilon>0$), $C_-=\sqrt{\kappa}$ and $C_+=0$.

Now, using \eqref{eq.transfer.function.general} the transfer function of this system is given by
$$\Xi(s)=\frac{1}{s^2+\kappa s+\frac{\kappa^2+\epsilon^2}{4}}
\left(\begin{smallmatrix}s^2-\frac{\kappa^2+\epsilon^2}{4}&-\frac{1}{2}\epsilon\kappa\\-\frac{1}{2}\epsilon\kappa&s^2-\frac{\kappa^2+\epsilon^2}{4}\end{smallmatrix}\right).$$
Interestingly, by rescaling $\kappa=k\kappa_0$, $\epsilon=k\epsilon_0$, $s\mapsto \frac{s}{k}$ and taking the limit $k\mapsto\infty$ , results in the following symplectic transformation as the input-output map: 
$$\mathbf{b}^{out}(s)=\mathrm{cosh}(r_0)\mathbf{b}(s)+\mathrm{sinh}(r_0)\mathbf{b}^{\dag}(s),$$
where $r_0=\mathrm{ln}\left[(\kappa_0-\epsilon_0)/(\kappa_0+\epsilon_0)\right]$    \cite{Gardner1}.
 Therefore, the output is squeezed white noise from vacuum, which is constant across $s\in\mathbb{C}$ (see Sec. \ref{stato}). Essentially  this DPA implements a static squeezing device in the field for which the internal degrees of freedom of the DPA system are  eliminated \cite{Gough1}. This example was taken from \cite{Gough1}.
\end{exmp}

\begin{exmp}\cite{Muschik1, Guta2}\label{ensemble}
Finally,   consider the setup of  two coupled atomic ensembles investigated in \cite{Muschik1} (see Fig. \ref{ensem} (b)). In an ideal large ensemble limit, this system has the form of linear passive 
system with two probe inputs $\mathbf{b}=[\mathbf{b}_1, \mathbf{b}_2]^T$ 
and two outputs 
$\mathbf{b}^{out}=[\mathbf{b}_1^{out}, \mathbf{b}_2^{out}]^T$: 
\[
     \dot{\mathbf{a}} = -\frac{\kappa}{2} Y\mathbf{a} 
            - \sqrt{\kappa Y} \mathbf{b},~~~
     \mathbf{b}^{out} = \sqrt{\kappa Y} \mathbf{a} + \mathbf{b}, 
\]
where $\mathbf{a}=[\mathbf{a}_1, \mathbf{a}_2]^T$ denotes the pair of 
collective lowering operators 
of the atomic ensembles and $\kappa$ is the decay rate. 
The system-probe coupling is governed by the following matrix: 
\[
   Y =\left[ \begin{array}{cc}
       \cosh(2\theta) & -\sinh(2\theta) \\
       -\sinh(2\theta) & \cosh(2\theta) \\
     \end{array} \right], 
\]
which stems from the coupling operators 
$\mathbf{L}_1=\mathbf{a}_1 \cosh(\theta) + \mathbf{a}_2^{\dag} \sinh(\theta)$ 
and $\mathbf{L}_2=\mathbf{a}_2 \cosh(\theta) + \mathbf{a}_1^{\dag} \sinh(\theta)$. 
The parameter $\theta$ represents the coupling strength between the two 
ensembles.
In fact in the long time limit $t\rightarrow \infty$ the system's steady state 
becomes a two-mode squeezed state with squeezing level $\theta$; 
that is, this system deterministically generates an entangled state between 
two large-scale atomic ensembles by dissipation (if $\theta=0$ 
the ensembles are not entangled). 
Hence, the precise estimation of $\theta$ is important for testing whether such a macroscopic system exhibits entanglement (see Sec. \ref{Macroscopic} where we treat this parameter as unknown and try to identify it). 
 The more general case of \textit{pure Gaussian cluster states} may also be generated via a PQLS composed of atomic ensembles \cite{Guta2, Li1}. This example was worked from \cite{Guta2}.
 \end{exmp}


\section{Time-Dependent vs Stationary Inputs}
As mentioned above, we are free to choose the input to the field channels. We distinguish two contrasting approaches; using time-dependent inputs or time stationary Gaussian (quantum noise) inputs (note the parallels with Sec. \ref{feb16}).  We discuss  both approaches  in turn in this subsection.

 \begin{figure}[h]
\centering
\includegraphics[scale=0.24]{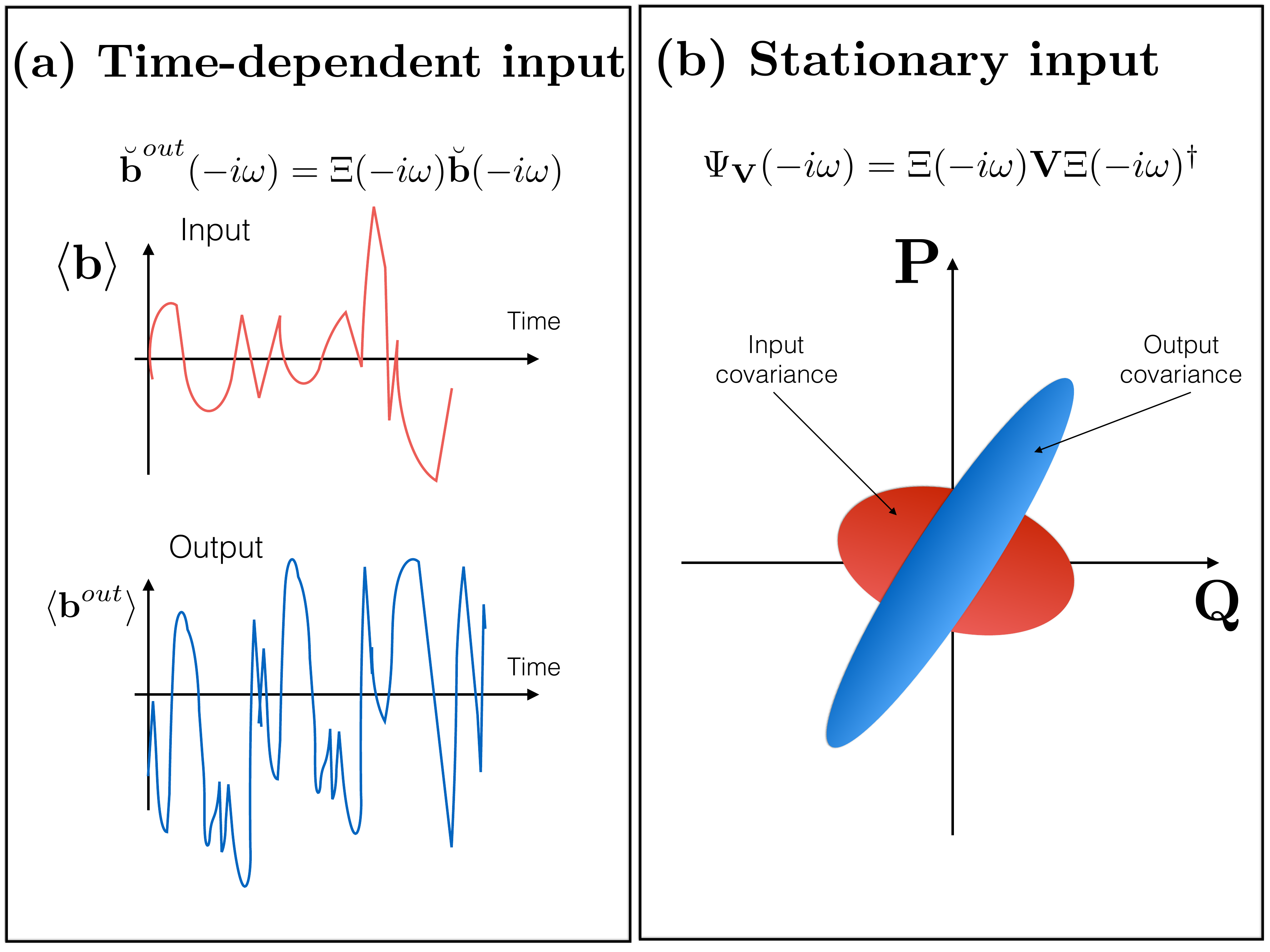}
\caption{(a) Time-dependent scenario: in frequency domain, input and output are related by the {\it transfer function} $\Xi(-i\omega)$ which depends on $(S,C,\Omega)$.  (b) Stationary scenario: \textit{power spectrum} describes output covariance which is quadratic with respect to 
$\Xi(-i\omega)$ .\label{trew}}
\end{figure}
 
\subsection{Time-Dependent Inputs}\label{ghu}

In the time-dependent approach, one probes the system with a known time-dependent input signal (e.g. a coherent state (example \ref{iffr})). One then performs measurements on the  time-dependent output to infer unknown dynamical parameters within the QLS (see Fig. \ref{trew}). The input-output relation \eqref{iol} shows that the experimenter can at most identify the transfer function $\Xi(s)$ of the system. Therefore, in this setting the transfer function entirely encapsulates the QLSs behaviour.

\begin{exmp}\label{iffr}
As an example, suppose that we have a PQLS and probe with a coherent state of time dependent amplitude $\alpha(t)$, which is defined as 
\begin{equation}\label{cohg}
\left|\alpha(t)\right>=\mathrm{exp}\left[i\int^{\infty}_{-\infty}\breve{\alpha}(t)\breve{\mathbf{b}}^{\dag}(t)dt\right]\left|0\right>,
\end{equation}
where $\left|0\right>$ is vacuum (ground) state. Note that the quantity $\int^{\infty}_{-\infty}|\alpha(t)|^2dt$ gives the energy of the coherent pulse; so $|\alpha(t)|^2$ may be interpreted as the mean number of photons at time $t$. Notice the similarities with coherent states of QHOs (see Eq. \eqref{cough}). 

In the frequency domain, the input is  a product of coherent states over frequency modes, each with amplitude $\alpha(\omega)$ (the Fourier transform of $\alpha(t)$).
It follows from \eqref{iolp} that the  output state frequency  profile will be $\Xi(-i\omega)\alpha(\omega)$. 
Finally, as the transfer function is unitary, the result of the interaction with the system is a frequency dependent rotation of the coherent state amplitude. 
\end{exmp}

\subsection{Stationary Inputs}\label{stato}

In the  second approach we consider probing  the system with \textit{time-stationary} pure gaussian states (see Fig. \ref{trew}).
That is, the input state is `squeezed quantum noise', i.e. a zero-mean, pure Gaussian state with time-independent increments. It is completely characterised by its covariance matrix ${ V}$, which is given by 
\begin{align}\label{Ito}
\nonumber\left<\begin{smallmatrix}d\mathbf{B}(t)d\mathbf{B}(t)^{\dag}&d\mathbf{B}(t)d\mathbf{B}(t)^T\\d\mathbf{B}^{\#}(t)d\mathbf{B}(t)^{\dag}&d\mathbf{B}^{\#}(t)d\mathbf{B}(t)^T\end{smallmatrix}\right>&=\left(\begin{smallmatrix} N^T+{1}&M\\M^{\dag}&N\end{smallmatrix}\right)dt
\\&:={V}(N,M)dt.
\end{align}
 In the frequency domain the input state turns out to be a continuous tensor product over frequency QHO modes of squeezed states  with covariance ${V}(N,M)$ (recall Sec. \ref{QHO}). Therefore, the  conditions on $V(N, M)$ transfer directly from Sec. \ref{QHO}. By the requirement that the input be \textit{pure}, we mean that the covariance matrix  $V(N,M)$ is that of pure state, where the conditions are defined in Sec. \ref{QHO}.

Under ergodicity, the dynamics exhibits an initial transience period after which it reaches stationarity. Therefore, since we deal with a linear systems the input-output map consists of applying a (frequency dependent) unitary Bogolubov transformation whose linear symplectic action on the frequency modes is given by the transfer function 
$$
\breve{\bf b}^{out} (-i\omega)  =  
\Xi (-i\omega) \breve{\bf b} (-i\omega).
$$
Consequently, the output state is a gaussian state consisting of independent frequency modes with covariance matrix
\begin{eqnarray*} 
\left< \breve{\bf b}^{out} (-i\omega) \breve{\bf b}^{out} (-i\omega^\prime)^\dagger \right>=   
 \Psi_{ V}(-i\omega) \delta(\omega-\omega^\prime)  
 \end{eqnarray*}
 where $ \Psi_{ V}(-i\omega)$ is the restriction to the imaginary axis of the \emph{power spectral density} (PSD) (or power spectrum) defined in the Laplace domain by
\begin{equation}\label{powers}
\Psi_{ V}(s)= \Xi(s){ V}\Xi(-s^*)^{\dag}.
\end{equation}

We should comment on the physicality of (quantum) white noise. As we have seen above, the PSD of white noise is constant and as a result, by integrating over all frequencies, the average power will be infinite. As no real process can have infinite signal power, this type of input is therefore   purely theoretical. However, many real and important stochastic processes have a PSD that is approximately constant constant over a wide range of frequencies. In practice one works with band-limited white noise, that is, where the frequency is constant over a finite range and zero outside. Band-limited white-noise necessarily has finite signal power. Band-limited white noise is generated using a low-pass filter.

\section{Controllability, Observability, Minimality, Stability}\label{hastings}
We now reconnect with the system theoretic concepts from Sec. \ref{systheo}.

By taking the expectation with respect to the initial joint system state of Eqs. (\ref{langevin}) we obtain the following classical linear system
\begin{eqnarray}\label{classicallangevin}
       d\left<\breve{\bf a}(t) \right>= 
            A\left< \breve{\bf a}(t) \right>dt-C^{\flat}d\left<\breve{\bf B}(t)\right>,\\
       d\left<\breve{\bf B}^{out}(t)\right> = 
            C\left<\breve{\bf a}(t) \right>dt+d\left<\breve{\bf B}(t)\right>.
\end{eqnarray}

\begin{defn}
The quantum linear system (\ref{langevin}) is said to be Hurwitz stable (respectively controllable, observable, minimal) if the corresponding classical system (\ref{classicallangevin}) is Hurwitz stable (respectively controllable, observable, minimal).
\end{defn}

For a general QLS  observability and controllability are equivalent \cite{Gough2} (see also \cite{Guta2} for PQLS result). Therefore, to verify minimality we need only check one of these properties. 
In the case of passive systems Hurwitz stability is further equivalent to minimality of the system \cite{Guta1,Guta2}. However for active systems, only the statement [Hurwitz $\implies$ minimal]  is true, whereas the converse statement ([minimal $\implies$ Hurwitz]) is not necessarily so. We see the former statement  in the following Lemma  and then discuss  a counterexample to the latter in example \ref{wrong}.

\begin{Lemma}
If a QLS $(S, C, A)$ is Hurwitz, then it is minimal.
\end{Lemma}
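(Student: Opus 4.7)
My plan is to prove the contrapositive: if $(S, C, A)$ fails to be minimal, then $A$ cannot be Hurwitz. Because observability and controllability are equivalent for QLSs (the fact cited just above the lemma from \cite{Gough2}), it suffices to assume $(C, A)$ is not observable and derive a contradiction with Hurwitz stability. By Theorem~\ref{obs5}, unobservability produces a nonzero right-eigenvector $v$ of $A$ with $Av = \lambda v$ and $Cv = 0$.

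The crucial ingredient is the physical realisability condition \eqref{PR}, $A + A^{\flat} + C^{\flat} C = 0$. Left-multiplying by $J$ and using $J^2 = 1$ together with the definition $X^{\flat} = J X^{\dag} J$ converts this into the Lyapunov-type identity $JA + A^{\dag} J + C^{\dag} J C = 0$, which is the form I actually want. Applied to $v$, the term $C^{\dag} J C v$ vanishes because $Cv = 0$, leaving
\[
A^{\dag}(Jv) \;=\; -\lambda \, (Jv).
\]
Since $J$ is invertible and $v \neq 0$, the vector $Jv$ is a nonzero eigenvector of $A^{\dag}$ with eigenvalue $-\lambda$; equivalently, $-\lambda^{*}$ is an eigenvalue of $A$.

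Therefore $A$ possesses both eigenvalues $\lambda$ and $-\lambda^{*}$. Since $\mathrm{Re}(\lambda) + \mathrm{Re}(-\lambda^{*}) = 0$, at least one of them has non-negative real part, contradicting the assumption that $A$ is Hurwitz. This completes the argument.

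The main obstacle is recognising the right rearrangement of the physical realisability condition. In its stated form $A + A^{\flat} + C^{\flat} C = 0$ the symplectic adjoint $\flat$ intertwines the blocks via $J$ and does not directly constrain the ordinary spectrum of $A$; the trick is that left-multiplication by $J$ turns $A^{\flat} = J A^{\dag} J$ into $A^{\dag} J$, yielding a genuine $\dag$-Hermitian ``Krein-space'' companion to the usual observability Lyapunov equation. With this identity in hand, the rest is a one-line spectral symmetry: the unobservable part of the spectrum of $A$ is closed under $\lambda \mapsto -\lambda^{*}$, and such a set is forced to meet the closed right half-plane, which is incompatible with Hurwitz stability.
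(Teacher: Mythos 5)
Your proof is correct and follows essentially the same route as the paper's: both extract an unobservable eigenvector $v$ with $Av=\lambda v$, $Cv=0$, and use the physical structure of the QLS to show $-\overline{\lambda}$ is also an eigenvalue of $A$, contradicting Hurwitz stability. The only cosmetic difference is that you invoke the realisability identity $A+A^{\flat}+C^{\flat}C=0$ directly (multiplied by $J$) where the paper reaches the same relation by expanding $A=-iJ\Omega-\tfrac{1}{2}C^{\flat}C$ and using the self-adjointness of $\Omega$; your packaging is arguably a touch cleaner but it is the same argument.
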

\begin{proof}
We will prove this statement by contradiction. To this end, suppose that $(S, C, A)$ is non-minimal, which is equivalent to it being non-observable. Therefore, by  Theorem \ref{obs5} part \ref{obs3}, there exists   an eigenvector $y$ and eigenvalue $\lambda$ of $A$, such that $Cy=0$. This entails that $-iJ\Omega y=\lambda y$. Hence using the self-adjointness of $\Omega$, we have
\begin{equation}\label{footy}
\left(y^{\dag}J\right)\left(iJ\Omega\right)=\overline{\lambda}\left(y^{\dag}J\right)
\end{equation}
Furthermore, \begin{equation}\label{footy2}
(y^{\dag}J)(C^{\flat}C)=0\end{equation} from $Cy=0$.
Combining \eqref{footy} and \eqref{footy2} implies that $\left(y^{\dag}J\right)A=-\left(y^{\dag}J\right)\overline{\lambda}$. 
So in summary $\lambda$ and $-\overline{\lambda}$ are eigenvalues of $A$, which cannot be under stability.

Note that an alternative proof of this  Lemma is given  in  \cite{Yamamoto2}.
\end{proof}

\begin{exmp}\label{wrong}
Consider a general one-mode SISO QLS, which is parameterised by $\Omega=\Delta(\omega_{-},  \omega_{+})$ and $C=\Delta\left(c_-, c_+\right)$.
The system is Hurwitz stable (i.e. the eigenvalues of $A$ have strictly negative real part) if and only if
\begin{enumerate}
\item $|c_-|>|c_+|$ and $|\omega_-|\geq|\omega_+|$, or
\item $|\omega_+|>|\omega_-|$  and $\sqrt{|\omega_+|^2-|\omega_-|^2}<\frac{1}{2}\left(|c_-|^2-|c_+|^2\right)$.
\end{enumerate}
A system is non-minimal if and only if the following matrix has rank less than two: 
\[
\left[
\begin{smallmatrix}
C\\CJ_n\Omega
\end{smallmatrix}
\right]=
\left[\begin{array}{cc}
c_-&c_+
\\
{\overline{c}_+}&{\overline{c}_-}
\\
c_-\omega_- -c_+{\overline{\omega}_+} &c_-\omega_+ -c_+\omega_-
\\ 
{\overline{c}_+}\omega_- -{\overline{c}_-}{\overline{\omega}_+}   &  {\overline{c}_+}\omega_+ -{\overline{c}_-}\omega_-
\end{array}\right]. 
\]
For a  counterexample to the statement:  [minimal $\implies$ Hurwitz] consider for example   $|c_+|>|c_-|$ with $\omega_+=\omega_-$.
\end{exmp}

In light of the previous example, we make the physical assumption that all systems considered throughout this paper are Hurwitz (hence minimal). Note that the \textit{ergodic} assumption in Sec. \ref{MTDR} is equivalent to the system being Hurwitz. 
Notice that if $t|\mathrm{Re}(\lambda_s(A))|\ll 1$, where $\lambda_s(A)$ is the eigenvalue of $A$ lying closest to the imaginary axis, then the first term in Eq. \eqref{lan2} contributes negligibly. Therefore $\lambda_s(A)$ determines the time taken for the system  to stabilise. We call $|\mathrm{Re}(\lambda_s(A))|$ the \textit{spectral gap}  and will be useful in Ch. \ref{FEDERER}. Moreover, in the case of stationary inputs the relaxation of the system to stationary state is related to the spectral gap.  That is, $||\rho(t)-\rho_{ss}||\sim e^{t     |\mathrm{Re}(\lambda_s(A))|  } $ \cite{Kasia1}, where $\rho(t)$ and $\rho_{ss}$ are the state of the system at time $t$ and the stationary state of the system, respectively.

Finally we remark that minimality is a reasonable assumption to make in the quantum regime too.  For if a system is non-minimal, then there exists  a minimal system that has the  same transfer function as the original system and crucially it satisfies the PR conditions \cite{Zhang4}.
 Such a system can be obtained by using the \textit{ quantum Kalman decomposition}  \cite{Zhang4}. This result was essentially obtained by combining the  \textit{classical} Kalman decomposition \cite{Zhou1}  with the FPR condition (see Sec. \ref{freqrep}). 
As the transfer function contains maximal information about the system, therefore
 minimality ensures the simplest  description of the input-output behaviour.



\section{Operations on QLSs}\label{greed}

The network rules discussed in Sec \ref{network1} for CLSs, i.e the concatenation and series connections, also hold for QLSs.  Notice the simplicity of the frequency domain in dealing with cascaded systems - their transfer functions  multiply. Note that if $(S_1, C_1, \Omega_1)$ with transfer function $\Xi_1(s)$ is fed into system $(S_2, C_2, \Omega_2)$ with transfer function $\Xi_2(s)$ the overall transfer function is  given by $\Xi(s)=\Xi_2(s)\Xi_1(s)$, rather than $\Xi_1(s)\Xi_2(s)$.
\begin{remark}
A remark that these rules are valid provided that the delays in the circuits  between the components are negligible compared with the interaction times within the QLS. Treating the case of finite time delays leads to a non-Markovian model and as such  is often not tractable. However, we would like to make the reader aware of  \cite{Grimsmo1}, where  a novel application of tensor networks is used for non-negligible time-delays.
\end{remark}







\chapter{Quantum Fisher-Information and the Heisenberg Limit}
\label{plm}

\section{Estimation Theory}\label{plm1}

In this section we review the relevant aspects of quantum estimation theory 
that are used in this thesis. 
Let  $\rho_{{\theta}}$ be a quantum state which depends on a $d$-dimensional unknown parameter 
${\theta}=[\theta_1, \theta_2,\dots , \theta_d]^T$. In quantum estimation theory the goal is to estimate $\theta$ by performing a measurement 
$\mathbf{M}$ and constructing an estimator ${\hat{{\theta}}}(X)$ based on the measurement outcome $X$. 
Suppose for simplicity that $X$ takes values in a discrete set $\{1,\dots, p\}$. Then the  probability distribution of the outcomes is of the form $\mathbb{P}_\theta(X=i) = {\rm Tr} (\rho_\theta \mathbf{M}_i)$ where $\{\mathbf{M}_1,\dots,  \mathbf{M}_p\}$ is the positive operator valued measure associated to $\mathbf{M}$. The (classical) Cram\'{e}r-Rao 
bound gives a measure of the 
performance of a given measurement \cite{Braun1}; if $\hat\theta$ is an arbitrary unbiased estimator 
($\mathbb{E}({\hat{{\theta}}})=\theta$) then its error covariance matrix is 
lower bounded as  
\begin{equation}
\label{eq.classical.cr}
   {\rm Cov}(\hat{\theta}):= 
     \mathbb{E}_\theta
         \Big[ \big({\hat{\theta}} - \theta\big) \big({\hat{\theta}} -\theta\big)^T
                    \Big] 
            \geq I^{(M)}({\theta})^{-1}.
\end{equation}
The right side is the inverse of the (classical) Fisher information matrix $I^{(M)}({\theta})$ (CFI), associated to the outcome $X$ of $\mathbf{M}$, whose matrix elements are given by 
\begin{eqnarray}\label{cod}
I^{(M)}({\theta})_{i,j} = 
\mathbb{E}_\theta\left[\frac{ \partial \mathbb{P}_\theta (X)}{\partial \theta_i}\frac{ \partial \mathbb{P}_\theta(X)}{\partial \theta_j}\right]. 
\end{eqnarray}

If one considers arbitrary measurements,  an additional level of optimization 
is required. 
The \emph{quantum Cram\'{e}r-Rao bound} (QCRB) establishes that any 
CFI matrix is upper bounded as
\begin{equation}\label{eq.quantum.cr}
 I^{(M)}(\theta)\leq F(\theta).
\end{equation}
On the right side, $F({\theta})$ is the quantum Fisher-information (QFI) matrix 
for the family $\rho_{{\theta}}$, whose elements are defined by \cite{Holevo1} 
\[
     F({\theta})_{lm}
       =\frac{1}{2}\mathrm{Tr}\left[\rho_{{\theta}}\{\mathbf{L}_l, \mathbf{L}_m\}\right],
\]
where $\mathbf{L}_m$ is the symmetric logarithmic derivative (SLD) defined implicitly 
by $$\partial\rho_{{\theta}}/\partial\theta_m
=\left(\rho_{{\theta}}\mathbf{L}_m+\mathbf{L}_m\rho_{{\theta}}\right)/2.$$

For families of pure states $\rho_{{\theta}}=\Ket{\psi_{{\theta}}}\Bra{\psi_{{\theta}}}$ the QFI has an explicit expression
\begin{equation}\label{qfiformula}
F({\theta})_{lm}=4\mathrm{Re}\left(\Braket{\partial_l\psi_{{\theta}}|\partial_m\psi_{{\theta}}}-\Braket{\partial_l\psi_{{\theta}}|\psi_{{\theta}}}\Braket{\psi_{{\theta}}|\partial_m\psi_{{\theta}}}\right),
\end{equation}
where $\ket{\partial_m\psi_{{\theta}}}:=\partial\ket{\psi_{{\theta}}}/\partial\theta_m$. In particular, if 
$\ket{\psi_{\theta}}=\exp(-i\mathbf{G}\theta)\ket{\psi_0}$ is a one dimensional unitary 
rotation model with generator $G$, the QFI is 
\begin{equation}\label{unit} 
F(\theta)=4\mathrm{Var}\left(\mathbf{G}\right) = 4\left[\langle\psi_0 |\mathbf{G}^2|\psi_0\rangle - \langle\psi_0 | \mathbf{G}|\psi_0\rangle^2\right].
\end{equation}
The QCRB provides a lower bound for the risk of unbiased estimators, for instance 
the mean square error. 
By taking trace on both sides of Eq. \eqref{eq.classical.cr} and using 
Eq. \eqref{eq.quantum.cr} we have
\begin{equation}
\label{rtvz} 
     \mathbb{E} \| \hat{\theta}- \theta\|^2 
           \geq\mathrm{Tr}\left(F({\theta})^{-1}\right).
\end{equation}
The above bounds are generally not tight, for reasons which are both classical 
and quantum. 
The classical Cram\'{e}r-Rao bound (\ref{eq.classical.cr}) is achievable only 
for special ``exponential models"; it is however \emph{asymptotically} achievable 
for any model, in the limit of large numbers of \emph{independent} samples 
from the distribution $\mathbb{P}_\theta$, which could be obtained by 
performing repeated measurements on identically prepared systems. 
The QCRB \eqref{eq.quantum.cr} is in general not 
achievable, even asymptotically, i.e. when the experimenter is allowed to 
perform arbitrary measurements on the ensemble $\rho_\theta^{\otimes n}$ 
of $n$ identically prepared systems. 
Indeed, it can be shown \cite{Guta5} that in order to optimally estimate the component 
$\theta_i$ one needs to measure the collective observable 
$\mathbf{L}^n_i:= \sum_{j=1}^n \mathbf{L}_i^{(j)}$ where $\mathbf{L}_i^{(j)}$ is the symmetric logarithmic 
derivative of the $j$th system with respect to $\theta_i$. 
These observables are generally incompatible unless the following commutativity 
condition holds: ${\rm Tr}( \rho_\theta [\mathbf{L}_i , \mathbf{L}_j])= 0$, for all $i=1,\dots , d$. 
In the special case of one-dimensional parameter, the QCRB is asymptotically 
achievable for large sample numbers, by separately measuring the SLD 
$\mathbf{L}^{(j)}$ of each system.

\section{Quantum Metrology and the Heisenberg Limit}
\label{class}

Quantum metrology is the ingenuity of exploiting powerful properties of quantum theory, such as entanglement and squeezing, in order to develop high-resolution  measurements of physical parameters \cite{Braun1}. Such properties have no classical analogue, and as a result one is able to develop measurement techniques that give better precision (i.e. the so-called Heisenberg limit) than is possible in the classical framework. Already, various aspects of quantum metrology for system identification have been considered in the recent literature \cite{New1, New2, New4, New5, New6}. 

Above, we considered the case of a single, or several independent, identically 
prepared systems. 
In quantum metrology we often encounter a scenario in which a single unknown 
parameter $\theta$ is encoded by means of a unitary transformation 
$\mathbf{U}_{\theta}=e^{-i\theta \mathbf{G}}$ acting in parallel on an ensemble of $N$ quantum 
systems (probes). 
We distinguish two set-ups, as illustrated in Fig. \ref{classquant}; either the 
probes are prepared in the product state $\ket{\psi}^{\otimes N}$ (or more 
generally a separable state), or they are prepared in a general entangled state.


The first case corresponds to a classical procedure in which $N$ identical 
experiments are repeated. 
As shown in Eq. (\ref{unit}), in this case the QFI is $4N\mathrm{Var}({\mathbf{G}})$. 
This can be maximised by choosing a probe that is an equal superposition of 
the largest and smallest eigenstates of the generator $\mathbf{G}$, namely 
$\left(\ket{\lambda_{min}}+\ket{\lambda_{max}}\right)/\sqrt{2}$ \cite{Giov1}, 
so that $F_{max}(\theta)=N(\lambda_{max}-\lambda_{min})^2$. 
The linear scaling in $N$ is called the \textit{standard quantum limit} (SQL), 
and is the best one can achieve with separable states. 

In the second case, the unitary transformation is applied to the joint state 
$|\psi^N\rangle$ such that 
$|\psi^N_\theta\rangle: = \mathbf{U}_\theta^{\otimes N} |\psi^N\rangle$. 
In this case, the best probe state is an equal superposition of the maximum 
and minimum eigenvectors of the global operator $\mathbf{G}^N:= \sum_i \mathbf{G}^{(i)}$. 
That is, choosing $\ket{\psi^{\otimes N}}=\left(\ket{\lambda_{min}}^{\otimes N}+
\ket{\lambda_{max}}^{\otimes N}\right)/\sqrt{2}$ leads to the QFI  
$F(\theta)=N^2\left(\lambda_{max}-\lambda_{min}\right)^2$. 
Therefore, by exploiting quantum resources such as entanglement, we obtain an 
$\mathcal{O}(N)$ improvement over purely classical schemes. 
This $\mathcal{O}(N^2)$ scaling is referred to as the \textit{Heisenberg limit} 
and for a fixed number of systems has been claimed as fundamental as it is 
brought about by Heisenberg-like uncertainty relations \cite{Bollinger1}. 
 It is worth  mentioning here that in general Heisenberg scaling holds for unitary channels, while for noisy channels one typically recovers the standard scaling  \cite{Demo1}.

\begin{figure}
\centering
\includegraphics[scale=0.16]{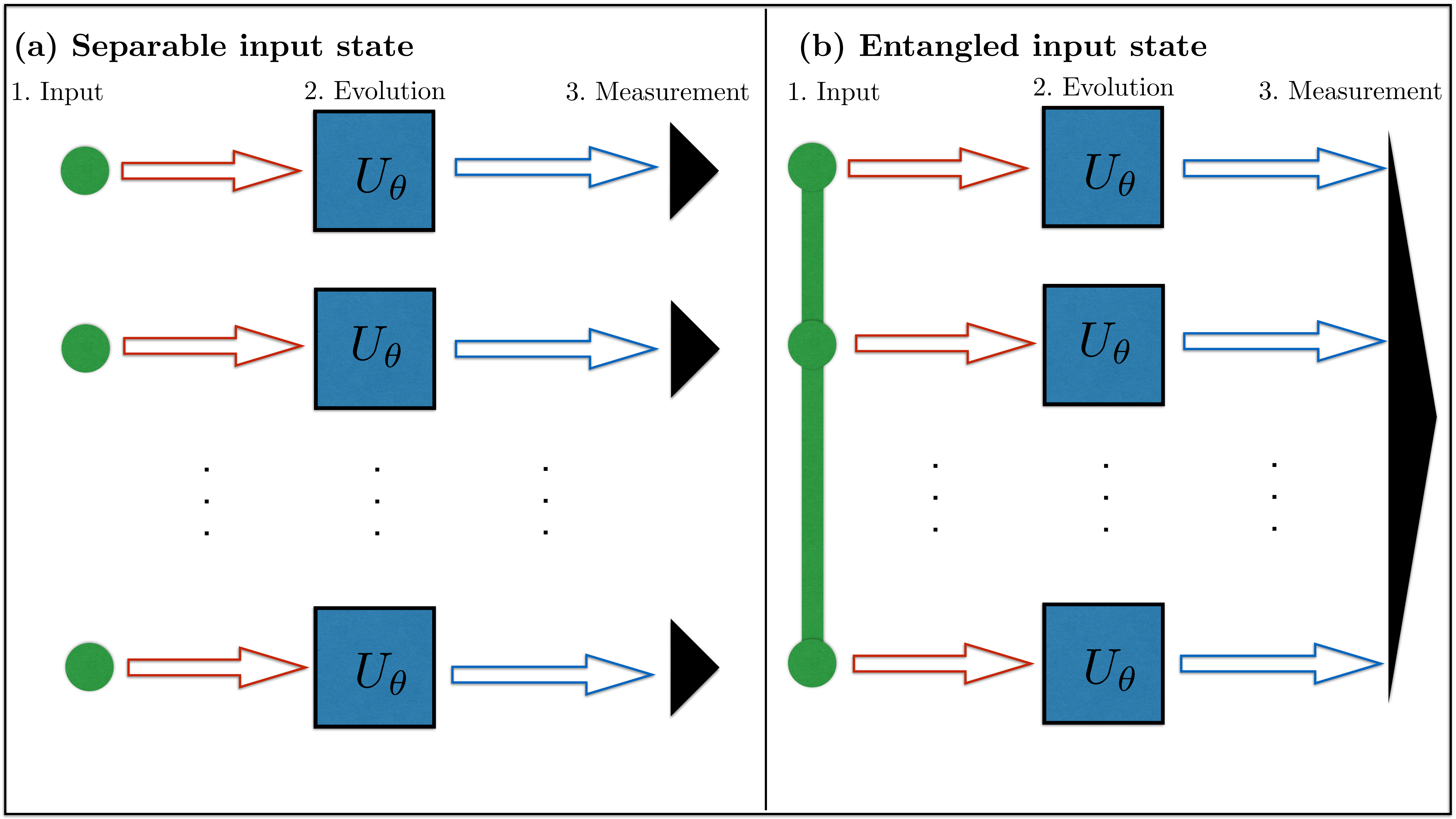}
\caption{This figure compares the classical and quantum estimation strategies 
discussed in Sec. \ref{class}. 
In panel (a) one inputs a product state. 
The upshot is that each channel corresponds to an identical and independent 
experiment. 
In panel (b) the input is entangled over $N$ channels. 
In each setup, there are three stages for the estimation procedure, which are: 
1) preparing an input, 2) evolution of the state, and 3) measurement of the output. 
\label{classquant}}
\end{figure}



\part{Results}\label{partr}

\chapter{Transfer Function Identification }\label{T.F}

The purpose of this chapter is to investigate system identification for QLSs using  \textbf{time-dependent inputs} and by observing the output fields. In particular, we address the following questions: 
\begin{enumerate}
\item[(1)]Which parameters can be identified by measuring the output?
\item[(2)]  How can we construct a system realisation from sufficient input-output data? 
\end{enumerate}

We saw in Sec. \ref{ghu} that the input-output dynamics  is completely characterised by the transfer function for the case of time-dependent inputs. The first question above reduces to finding all equivalent systems with the same transfer function. We call two  QLSs with the  same transfer function,  \textit{transfer function equivalent} (TFE). We  answer this in Sec.  \ref{idenf} by finding the quantum analogue of the classical result in Theorem \ref{class.id2}. Note that this problem has been addressed for the special class of passive QLSs in \cite{Guta2}. The tricky part in answering the second question is in ensuring that the realisation of the transfer function corresponds to a physical QLS in the sense that it satisfies the physical realisability conditions \eqref{PR}. 
We give two two such realisation methods in Secs. \ref{pond} and \ref{indirect}.

Finally, we consider these identifiability problems in the context of noisy QLSs in Sec. \ref{noisek}. Noise may be modelled with the use of additional channels that cannot be accessed (i.e., measured or observed) \cite{Peter1}. As a result, when restricting the output to  only a subset of the channels the equivalence classes from Sec. \ref{idenf} grow larger.  
Finally, in Sec. \ref{Noisek4} we investigate the notion of \textit{noise unobservable subspaces} where part of the system is shielded from the noise. Noise unobservable subspaces could potentially offer applications to metrology or control.





\section{Identifiability}\label{idenf}

In this section we address the first  identifiability question above for  general QLSs.
Before giving our main result let us review the case of passive QLSs considered in 
\cite{Guta2}. Firstly, the transfer function  can be identified for example by following the method in example \ref{iffr}. For PQLS it is known that
two minimal systems with parameters $(\Omega,C,S)$ and $(\Omega', C',S')$ are 
equivalent if and only if their parameters are related by a unitary transformation, 
i.e. $C'=CT$ and $\Omega'=T\Omega T^{\dag}$ for some $n\times n$ unitary 
matrix $T$, and $S= S'$. 
The first part of this result was shown in \cite{Guta1}; the fact that the scattering 
matrices must be equal follows by choosing $s=-i\omega$ and taking the limit 
$\omega\to \infty$ in Eq. (\ref{eq.transfer.function}). 
Physically, this means that at frequencies far from the internal frequencies 
of the system, the input-output is dominated by the scattering/squeezing between the 
input fields. 

Our  main identifiability result for this section  is given in the following theorem.

\begin{thm}\label{symplecticequivalence}
Let $\left({S}, C, \Omega\right)$ and $\left({S'}, C', \Omega'\right)$ be two minimal and stable QLSs. 
Then they have the same transfer function if and only if there exists a symplectic matrix $T$  (see Definition \ref{def.symplectic}) such that 
\begin{equation}\label{eq.equivalene.classes}
J_n\Omega'=TJ_n\Omega T^{\flat}, \,\,\, C'=CT^{\flat}\,\,\,S=S'.
\end{equation}
\end{thm}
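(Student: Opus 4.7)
The reverse implication follows by direct substitution: for any symplectic $T$, the identities $A' = TAT^\flat$ and $C' = CT^\flat$ together with $T^\flat T = 1_{2n}$ give $C'(s1_n - A')^{-1}C'^\flat = CT^\flat(s1_n - TAT^\flat)^{-1}TC^\flat = C(s1_n - A)^{-1}C^\flat$, so $\Xi' = \Xi$ when also $S' = S$.

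The forward implication is the content. My plan is to reduce to the classical Kalman theorem and then extract the extra structure (symplecticity of the similarity) from the physical realisability (PR) condition and from the doubled-up form of the system matrices. Concretely, I would view each QLS as the classical linear system with matrices $(A, B, C, D) = \bigl(-\tfrac{1}{2}C^\flat C - iJ_n\Omega,\; -C^\flat S,\; C,\; S\bigr)$. QLS minimality is by definition classical minimality, so Theorem \ref{class.id2} furnishes a \emph{unique} invertible $T \in \mathbb{C}^{2n\times 2n}$ with $A' = TAT^{-1}$, $C' = CT^{-1}$, $B' = TB$, $D' = D$. The feedthrough equality gives $S = S'$, and invertibility of $S$ then gives $C'^\flat = TC^\flat$.

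The crux is to show that $T$ is symplectic, i.e.\ both $\flat$-unitary and doubled-up (Definition \ref{def.symplectic}). For $\flat$-unitarity, apply $\flat$ to $C' = CT^{-1}$ to get $C'^\flat = (T^\flat)^{-1}C^\flat$, and compare with $C'^\flat = TC^\flat$ to obtain $(T^\flat T - 1_{2n})C^\flat = 0$. Substituting $A' = TAT^{-1}$ and $C'^\flat C' = TC^\flat C T^{-1}$ into the primed PR condition $A' + A'^\flat + C'^\flat C' = 0$ and subtracting the unprimed PR condition yields the commutator identity $[T^\flat T, A^\flat] = 0$; iterating, $(T^\flat T - 1_{2n})(A^\flat)^k C^\flat = 0$ for every $k$. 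By minimality, the $\flat$-transpose of the observability matrix, namely $\bigl[C^\flat \,|\, A^\flat C^\flat \,|\, \cdots \,|\, (A^\flat)^{n-1}C^\flat\bigr]$, has full row rank $2n$, forcing $T^\flat T = 1_{2n}$. For the doubled-up block form, all of $A, A', B, B', C, C'$ satisfy $M = \Delta(M_-, M_+)$; taking complex conjugates of the Kalman relations shows that the matrix $\tilde T$ obtained from $T^\#$ by conjugating with the block-swap permutation also satisfies $A' = \tilde T A \tilde T^{-1}$, $B' = \tilde T B$, $C' = C\tilde T^{-1}$. Uniqueness of the similarity in Theorem \ref{class.id2} forces $\tilde T = T$, which is precisely the statement that $T$ is doubled-up.

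With symplecticity in hand, $T^{-1} = T^\flat$, so $C' = CT^\flat$; substituting $A = -\tfrac{1}{2}C^\flat C - iJ_n\Omega$ into $A' = TAT^\flat$ and cancelling the $-\tfrac{1}{2}C'^\flat C' = -\tfrac{1}{2}TC^\flat C T^\flat$ contributions on both sides yields $J_n\Omega' = TJ_n\Omega T^\flat$. The main obstacle is the symplecticity step: $\flat$-unitarity hinges on combining the PR condition with observability in exactly the right way, while the doubled-up block form is secured only through the interplay between the doubled-up structure of $(A, B, C)$ and the uniqueness clause of the classical Kalman theorem—without that uniqueness, there is no reason the (a priori complex, unstructured) similarity $T$ should respect the Bosonic block decomposition.
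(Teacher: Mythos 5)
Your proof is correct, and its skeleton is the same as the paper's: invoke the classical Kalman similarity theorem for the minimal classical realisations, then upgrade the similarity $T$ to a symplectic transformation using physical realisability and minimality. The two places where your route genuinely diverges are worth noting. For $\flat$-unitarity, the paper derives $[T^\flat T, C^\flat C]=0$ and $[T^\flat T, J_n\Omega]=0$ separately and concludes from $\mathcal{O}=\mathcal{O}T^\flat T$ with $\mathcal{O}$ built from $C(J_n\Omega)^k$; you instead extract $[T^\flat T, A^\flat]=0$ directly from subtracting the two PR conditions and kill $T^\flat T-1$ against the full-row-rank matrix $\bigl[C^\flat\,|\,A^\flat C^\flat\,|\,\cdots\bigr]$, which is the $\flat$-adjoint of the observability matrix. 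This is a slightly more economical packaging of the same information. For the doubled-up structure, the paper does an explicit block computation ($\mathcal{O}\left[\begin{smallmatrix}T_1^\dag-T_4^T\\T_3^T-T_2^\dag\end{smallmatrix}\right]=0$), whereas you exploit the conjugation symmetry $\Sigma(\cdot)^\#\Sigma$ of doubled-up matrices together with the \emph{uniqueness} of the Kalman similarity to force $\Sigma T^\#\Sigma=T$; this is cleaner, but note that Theorem \ref{class.id2} as stated in the paper does not assert uniqueness, so you should supply the one-line argument (if $T_1,T_2$ both work then $S=T_2^{-1}T_1$ commutes with $A$ and satisfies $CS=C$, so $\mathcal{O}S=\mathcal{O}$ and observability gives $S=1$). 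With that line added, your argument is complete and buys a somewhat tidier treatment of the block structure at no cost in generality.
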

\begin{proof}
Firstly, using the same  argument as above, the scattering/squeezing matrices $S$ and $S'$ must be equal.

It is known \cite{Ljung1} that two minimal classical linear systems
\begin{equation}\label{CE1}
d\mathbf{x}(t)=A\mathbf{x}(t)dt+B\mathbf{u}(t)dt,\,\,\,d\mathbf{y}(t)=C\mathbf{x}(t)dt+D\mathbf{u}(t)dt
\end{equation}
and 
\begin{equation}\label{CE2}
d\mathbf{x}(t)=A'\mathbf{x}(t)dt+B'\mathbf{u}(t)dt,\,\,\,d\mathbf{y}(t)=C'\mathbf{x}(t)dt+D'\mathbf{u}(t)dt
\end{equation}
for input $\mathbf{u}(t)$, output $\mathbf{y}(t)$ and system state $\mathbf{x}(t)$ have the same transfer function if and only if
\[A'=TAT^{-1}\,\,\, B'=TB\,\,\,C'=CT^{-1}\,\,\,D'=D\]
for some invertible matrix $T$.
Therefore $C\left(s{1}-A\right)^{-1}C^{\flat}=C'\left(s{1}-A'\right)^{-1}C'^{\flat}$
if and only if there exists an invertible matrix $T$ such that 
  \begin{equation}\label{chelsea}
  A'=TAT^{-1}\,\,\, C'^{\flat}=TC^{\flat}\,\,\,C'=CT^{-1}.
  \end{equation}
Note that at this stage $T$ is not assumed to be symplectic; we show that by imposing the restriction that the two systems \eqref{CE1} and \eqref{CE2} be quantum implies that $T$ must be symplectic.
Now, the second and third conditions in \eqref{chelsea} implies $C=C\left(T^\flat T\right)$, which further implies that $\lbrack T^\flat T,C^{\flat}C\rbrack=0$. Also, by earlier definitions $A=-\frac{1}{2}C^\flat C-iJ_n\Omega$, so that the second and third conditions applied to the first condition (in \eqref{chelsea}) implies that $J_n\Omega'=TJ_n\Omega T^{-1}$. Next, using this and the  observation $\left(J_n\Omega\right)^\flat =J_n\Omega$ it follows that 
$\lbrack T^\flat T, J_n\Omega\rbrack=0.$

Now, $C\left(J_n\Omega\right)^k=C\left(T^\flat T\right)\left(J_n\Omega\right)^k=C\left(J_n\Omega\right)^k\left(T^\flat T\right)$ which means that the observability matrix $\mathcal{O}$ satisfies $\mathcal{O}=\mathcal{O}T^\flat T$. Because the system is minimal $\mathcal{O}$ must be full rank, hence $T^\flat T={1}$.

Finally, it remains to show that the matrix $T$ generating the equivalence class is of the form 
\[T=\left(\begin{smallmatrix} T_1 &T_2\\T_2^{\#}&T_1^{\#}\end{smallmatrix}\right).\]
To see this, observe that $CA^k$, $C'A'^k$ must be of the of this doubled up form for $k\in\{0,1,2,...\}$. Writing $CA^k$, $C'A'^k$ and $T$ as $\left(\begin{smallmatrix} P_{(k)} &Q_{(k)}\\Q_{(k)}^{\#}&P_{(k)}^{\#}\end{smallmatrix}\right)$, $\left(\begin{smallmatrix} P'_{(k)} &Q'_{(k)}\\Q_{(k)}'^{\#}&P_{(k)}'^{\#}\end{smallmatrix}\right)$ and $T=  \left(\begin{smallmatrix} T_1 &T_2\\T_3&T_4\end{smallmatrix}\right)$, and using the above result, $C'A'^k= CA^k T^\flat $, 
it follows that 
\[P_{(k)}(T_1^{\dag}-T_4^T)+Q_{(k)}(T_3^T-T_2^{\dag})=0\]
and
\[Q_{(k)}^{\#}(T_1^{\dag}-T_4^T)+P_{(k)}^{\#}(T_3^T-T_2^{\dag})=0.\]
Hence 
\[\mathcal{O}\left[\begin{smallmatrix}    T^{\dag}_1-T_4^T\\  T_3^T-T_2^{\dag}\end{smallmatrix}\right]=0\] and so using the fact that $\mathcal{O}$ is full rank gives the required result. 
\end{proof}
Therefore, without any additional information, we can at most identify the equivalence 
class of systems related by a symplectic transformation (on the system). 
Note that the above transformation of the system matrices is equivalent  to a change of co-ordinates $\breve{\mathbf{a}}\mapsto T^\flat \breve{\mathbf{a}}$ in Eq. (\ref{langevin}).

We can obtain the result for PQLSs from \cite{Guta2} as a Corollary to Theorem \ref{symplecticequivalence}. 

\begin{Corollary}\label{colo}
Let $\left({S}, C_-, \Omega_-\right)$ and $\left({S'}, C'_-, \Omega_-'\right)$ be two minimal PQLSs. Then they have the same transfer function if and only if there exists a unitary transformation $U$ such that 
\begin{equation}\label{oki34}
\Omega_-'=U\Omega_- U^{\dag}\quad C_-'=C_-U^{\dag} \quad S=S'.
\end{equation}
\end{Corollary}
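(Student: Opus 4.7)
The plan is to derive the corollary directly from Theorem \ref{symplecticequivalence} by showing that, in the passive setting, the generating symplectic matrix $T$ must in fact be block-diagonal and unitary. Since PQLSs are a special case of QLSs with $C_+=0$, $\Omega_+=0$, $S_+=0$, Theorem \ref{symplecticequivalence} already provides a symplectic $T=\Delta(T_1,T_2)$ such that $S=S'$, $J_n\Omega'=TJ_n\Omega T^\flat$ and $C'=CT^\flat$. The target relations in Eq.~\eqref{oki34} correspond exactly to the case $T_2=0$ and $T_1=U$ unitary, so the whole task reduces to establishing these two facts.

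The first (and main) step is to show $T_2=0$, and this is where the passivity and minimality assumptions enter. From the doubled-up form of a passive system, $C=\Delta(C_-,0)$ and $A=\Delta(A_-,0)$ are block-diagonal, and similarly for the primed system. A direct computation of $T^\flat = J_n T^\dagger J_n$ for $T=\Delta(T_1,T_2)$ gives
\[
T^\flat=\begin{pmatrix}T_1^\dagger & -T_2^T\\ -T_2^\dagger & T_1^T\end{pmatrix}.
\]
The equivalence $C'A'^{k}=CA^{k}T^\flat$ for all $k\geq 0$ (which follows from iterating the intertwining relations $C'=CT^\flat$ and $A'=TAT^\flat$ from the proof of Theorem \ref{symplecticequivalence}) then forces the off-diagonal blocks of $CA^{k}T^\flat$ to vanish, which yields
\[
C_-A_-^{k}T_2^{T}=0,\qquad k=0,1,2,\dots
\]
Hence every column of $T_2^T$ lies in the kernel of the observability matrix of the pair $(C_-,A_-)$. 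Minimality of the PQLS is equivalent to observability of $(C_-,A_-)$, so this kernel is trivial and $T_2=0$.

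Once $T_2=0$, the symplectic condition $T^\flat T=1_{2n}$ collapses to $T_1^\dagger T_1=1_n$, so $T_1$ is unitary; set $U:=T_1$. Substituting $T=\Delta(U,0)$ into the identities of Theorem \ref{symplecticequivalence}, the block-diagonal structure of $C$ and $J_n\Omega$ immediately gives $C_-'=C_-U^\dagger$ and $\Omega_-'=U\Omega_-U^\dagger$, together with the already established $S=S'$. The converse implication (that such a $U$ gives rise to a symplectic equivalence and hence equal transfer functions) is routine: $T=\Delta(U,0)$ is symplectic, and plugging back into \eqref{eq.transfer.function} leaves $\Xi(s)$ unchanged.

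The only genuinely nontrivial step is the vanishing of $T_2$; everything else is bookkeeping in doubled-up notation. The crucial conceptual point is that passivity preserves the block-diagonal structure through every power of $A$, so minimality of the single-block pair $(C_-,A_-)$ suffices to kill the anti-diagonal part of $T$ — this is precisely what promotes the symplectic freedom of a general QLS down to the unitary freedom characteristic of passive systems.
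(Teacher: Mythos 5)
Your proposal is correct and follows essentially the same route as the paper's own proof: apply Theorem \ref{symplecticequivalence}, exploit the block-diagonal (doubled-up passive) structure of $C A^k$ and $C'A'^k$ to deduce $C_- A_-^k T_2^{T}=0$ for all $k$, invoke minimality/observability to conclude $T_2=0$, and then read off unitarity of $T_1$ from the symplectic condition. The explicit computation of $T^\flat$ and the identification of the kernel of the observability matrix are exactly the bookkeeping the paper performs.
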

\begin{proof}
Firstly the scattering matrix $S$ and $S'$ must be equal using the argument above Theorem \ref{symplecticequivalence}.

Write the two PQLSs in doubled-up form so that $A=\left(\begin{smallmatrix}A_-&0\\0&A_-^{\#}\end{smallmatrix}\right)$, $C=\left(\begin{smallmatrix}C_-&0\\0&{C_-}^{\#}\end{smallmatrix}\right)$, $A_-=-i\Omega_--\frac{1}{2}C_-^{\dag}C_-$ and similarly for the primed matrices. 

Since PQLSs are special cases of QLSs, we can apply Theorem  \ref{symplecticequivalence} so that the two systems are related via:
\begin{equation}\label{cheese}
A'=UA U^{\flat}\quad C'=CU^{\flat},
\end{equation}
where $U$ is symplectic. 
The relations \eqref{cheese} imply that $C'A'^k=CA^kU^{\dag}$ for all $k\in\mathbb{N}_0$. Therefore, writing $U=\Delta(U_1, U_2)$ and using the block diagonality of $C, C', A, A'$ implies that $C_-A_-^kU_2=0$ for all $k$. Hence $U_2=0$ by minimality. Therefore $U$ is unitary and symplectic, which is equivalent to the statement in the Theorem on the non-doubled-up space.
\end{proof}

 It is worth noting that while in the classical set-up equivalent linear systems are related by similarity transformations (see  Theorem \ref{class.id2}), in both quantum scenarios described above the transformations are more restrictive due to the unitary nature of the dynamics.
 
 In light of these results,  we make the following definition \cite{Guta2}. 
\begin{defn}\label{identy}
Let $\left(S(\theta), C(\theta), \Omega(\theta)\right)$ be a minimal system and $\theta\in\Theta$ an unknown parameter. Then $\theta$ is \textit{identifiable} if and only if
$$S(\theta')=S(\theta)\quad C(\theta')=C(\theta)T^{\flat} \quad J_n\Omega(\theta')=TJ_n\Omega(\theta) T^{\flat},$$ where $T$ is a symplectic matrix,
implies $\theta=\theta'$.
\end{defn}

\section{Cascade Realisation of QLS}\label{seriesp}

Recently, a synthesis result has been established showing that the transfer function of a `generic'  QLS has a pure cascade realisation \cite{Nurdin3}. Translated to our setting, this means that given an $n$-mode  QLS $(C,\Omega)$, one can construct an equivalent system (i.e. with the same transfer function) which is a series product of single mode systems. The result holds for a large class of systems, including all PQLSs \cite{Nurdin6}. At the heart of the construction of the cascade realisation for a generic QLS is a sort of symplectic Schur decomposition  \cite{Nurdin3} (using symplectic similarity transformations, rather than unitary ones). Just like the ordinary Schur decomposition, one is free to reorder the elements on the main diagonal by applying a suitable symplectic transformation (in the ordinary Schur decomposition, this is would be a unitary transformation). By Theorem \ref{symplecticequivalence} this leaves the transfer function invariant. It turns out that this degeneracy exactly corresponds to the ordering of the systems.  It should be stressed though that the one-mode systems in each Schur decomposition may be different, which is a signature of non-commuting systems in the cascade realisation (see example \ref{beat} below). We remark that the cascade realisation  is an example of one type of realisation for QLSs. Other realisations  are the  \textit{independent oscillator representation} and the \textit{chain-mode realisation} for example \cite{Gough2}. The underlying feature of any good realisation is that they provide a simple model for the system's behaviour.


\begin{exmp}\label{beat}
Consider a two mode cascaded PQLS, where the system $\left(C_1, \Omega_1\right)=\left(\left(\begin{smallmatrix}2\\0\end{smallmatrix}\right), 0\right)$ is fed into the system $\left(C_2, \Omega_2\right)=\left(\left(\begin{smallmatrix}6\\3\end{smallmatrix}\right), 4\right)$ and denote their transfer functions by $\Xi^{(1)}(s)$ and $\Xi^{(2)}(s)$ respectively. The combined transfer function is given by $\Xi(s)=\Xi_2(s)\Xi_1(s)$. 

Now the pair of cascaded systems $\left(\tilde{C}_1, \tilde{\Omega}_1\right)=\left(\left(\begin{smallmatrix}(2-6x)/(1+|x|^2)\\-3x/ (1+|x|^2)    \end{smallmatrix}\right), 4\right)$ with transfer function $\tilde{\Xi}_1(s)$ and $\left(\tilde{C}_2, \tilde{\Omega}_2\right)=\left(\left(\begin{smallmatrix}(2x+6)/(1+|x|^2)\\3/ (1+|x|^2)    \end{smallmatrix}\right), 0\right)$ with transfer function $\tilde{\Xi}_2(s)$, where $x=41/24-1/3i$, also has the same combined  transfer function. Notice that $\Xi_1(s)\neq \tilde{\Xi}_1(s)$ and $\Xi_2(s)\neq \tilde{\Xi}_2(s)$.
\end{exmp}

In this remainder of this subsection we shall assume that the QLS is SISO. If such a cascade realisation  is possible, then the  transfer function can be written as a product of  $n$  transfer functions of single mode systems, which are given by 
$$
\Xi_i (s)=\left(\begin{smallmatrix}\Xi_{i-}(s)&\Xi_{i+}(s)\\ &\\{\Xi_{i+}(\overline{s})}^{\#}& {\Xi_{i-}(\overline{s})}^{\#}\end{smallmatrix}\right).
$$
Further, we can stipulate that the coupling to the field is of the form $C=\Delta(C_-, 0)$, with each element of $C_-$ being real and positive. Indeed, since the system is assumed to be stable,  there exists a local symplectic transformation on each mode so that coupling is purely passive.  The point of this requirement is that it fixes all the parameters, so that under these restrictions each equivalence class from Sec. \ref{symplecticequivalence} contains exactly one element. Note that the Hamiltonian may still have both active and passive parts.  
Therefore, each one mode system in the series product is characterised by three parameters, $c_i, \Omega_{i-}\in\mathbb{R}$ with 
$c_i\neq 0$, and  $\Omega_{i+}\in\mathbb{C}$. If $\Omega_{i+}=0$ then the mode is passive. Actually, it is more convenient for us here to reparameterize the coefficients so that 
\begin{equation}\label{base1}
  \Xi_{i-}(s)= \frac{s^2-x^2_i-y^2_i +  2{ix_i\theta_i}   }{\left(s+x_i+y_i \right) \left(s+x_i-y_i\right)    },  \end{equation}
  \begin{equation}\label{base2}
\Xi_{i+}(s)=\frac{-2ix_ie^{i\phi_i}\sqrt{y_i^2+\theta_i^2}}{\left(s+x_i+y_i \right) \left(s+x_i-y_i\right)    },
\end{equation}
where $x_i=\frac{1}{2}c_i^2, y_i= \sqrt{|\Omega_{i+}|^2-\Omega_{i-}^2}, \theta_i=\Omega_{i-}$ and $\phi_i=\mathrm{arg}(\Omega_{i+})$.
Therefore, from the properties of the individual $\Xi_{i\pm}(s)$,  $\Xi_{-}(s)$ and $\Xi_{+}(s)$ can be written as
\begin{eqnarray}
\label{form1}
\Xi_{-}(s)&=&\prod\limits_{i=1}^n\frac{\left(s-\lambda_i\right)\left(s+\lambda_i\right)}{\left(s+x_i+y_i\right)\left(s+x_i-y_i\right)}
\\
\label{form2}
\Xi_{+}(s)&=&\gamma\frac{  \prod\limits_{i=1}^{j}     \left(s-\gamma_i\right)\left(s+\gamma_i\right)}{ \prod\limits_{i=1}^n \left(s+x_i+y_i\right)\left(s+x_i-y_i\right)},
\end{eqnarray}
with $\gamma, \gamma_i, \lambda_i\in\mathbb{C}$, $x_i\in\mathbb{R}$, and $y_i$ either real or imaginary, while $j$ is some number between $1$ and $n-1$. In particular, the poles are either in real pairs  or  complex conjugate pairs.

Furthermore, there is a possibility that some of the poles and zeros may cancel in \eqref{form1} and \eqref{form2}, and as a result some of these poles and zeros could be fictitious (see proof of Theorem \ref{mainresult} later where this becomes important).

For PQLSs such a cascade realisation is always possible \cite{Gough2, Petersen2} and  each single mode system is passive. We see an example of this.

\begin{exmp}\label{sisoexample}
Consider a  SISO PQLS $(C, \Omega)$ and let  $z_1, z_2, \dots ,z_m$ be the eigenvalues of $A=-i\Omega-\frac{1}{2}C^{\dag}C$. Then the transfer function is given by
\begin{align*}\Xi(s)&=\frac{\mathrm{Det}(s-{A}^{\#})}{\mathrm{Det}(s-{A})}\\&=\frac{s-\overline{z}_1   }{s-z_1}\times \frac{s-\overline{z}_2   }{s-z_2}\times ...\times \frac{s-\overline{z}_m   }{s-z_m}.\end{align*}
Now, comparing each term in the product with the transfer function of a SISO system of one mode, i.e., 
\[\Xi(s)=\frac{s+i\Omega-\frac{1}{2}|c|^2}{s+i\Omega+\frac{1}{2}|c|^2},\]
it is clear that each represents the transfer function of a bona-fide PQLS with Hamiltonian and coupling parameters given  by $\Omega_i=-\mathrm{Im}(z_i)$ and $1/2|c_i|^2=-\mathrm{Re}(z_i)$. This realisation of the transfer function is   a cascade of optical cavities.
Furthermore, we note that the order of the elements in the series product is irrelevant; in fact a differing order can be achieved by a change of basis on the system space. 
\end{exmp}

In actual fact this example enables us to  find a system realisation directly from the transfer function for SISO PQLSs, thus offering a parallel strategy to the realisation method in \cite{Guta2}. We will see in the next subsection that a similar brute-force approach for finding a cascade realisation of a geneic SISO active QLS is also possible.  

 \section{Identification Method 1: Direct Method (SISO Systems)}\label{pond}
 Suppose that we have constructed the transfer function from the input-output data, using for instance one of the techniques of \cite{Ljung1}\footnote{Typically this can be done by probing the system with a
known input (e.g a coherent state with a time-dependent amplitude) and performing a measurement (e.g homodyne or heterodyne measurement) on the output field and post-processing the data (e.g using maximum likelihood or some other classical method \cite{Ljung1}.}. 
Here  we   outline one  system identification  method, which gives a cascaded system realisation of the   transfer function, and is possible  for all generic\footnote{The QLS is generic in the sense that the cascade exists (see Sec. \ref{seriesp})} minimal SISO  QLSs.

The work outstanding here is  to identify $x_i, \Omega_{i-}, y_{i}, \theta_i, \phi_i$ in \eqref{base1} \eqref{base2} from the expressions for $\Xi_-(s)$ and $\Xi_+(s)$ [\eqref{form1} and \eqref{form2}]. We use a brute-force algorithm to do this, which goes as follows:
\begin{enumerate}
\item Consider a bipartition of the system as $1|n-1$ and identify the parameters of the first system (see Fig. \ref{splitting}). \label{iti1}
\item Since the transfer function is $\flat$-unitary, divide through to obtain  an $n-1$-mode transfer function and repeat the procedure with the remaining $n-1$ modes until one mode is left. \label{iti2}
\item Identify the one remaining mode.\label{iti3}
\end{enumerate}
Let us now explain in more detail step \eqref{iti1} of the algorithm (steps \eqref{iti2} and \eqref{iti3} are self-explanatory). We shall assume for simplicity that none of the poles of the transfer function lie on the real axis, i.e. they are  in complex conjugate pairs. 

Firstly, since the transfer function is known, therefore the locations of all poles and zeros of $\Xi_-(s)$ and $\Xi_+(s) $ are too.
Note that the transfer function has up to $2n$ poles. However,  without loss of generality we may assume that there are exactly $2n$. Moreover, we may assume that we also have the full list of zeros in \eqref{form1} and \eqref{form2}. Indeed, since all poles are in the left complex plane and the zeros appear in pairs, $\pm z$, then if there is a zero at $z$ in the right-plane but no zero at $-z$ then there must have been a pole-zero cancellation at $-z$. Therefore we can identify any ``missing'' poles or zeros.

 
 \begin{figure}
\centering
\centering\includegraphics[scale=0.14]{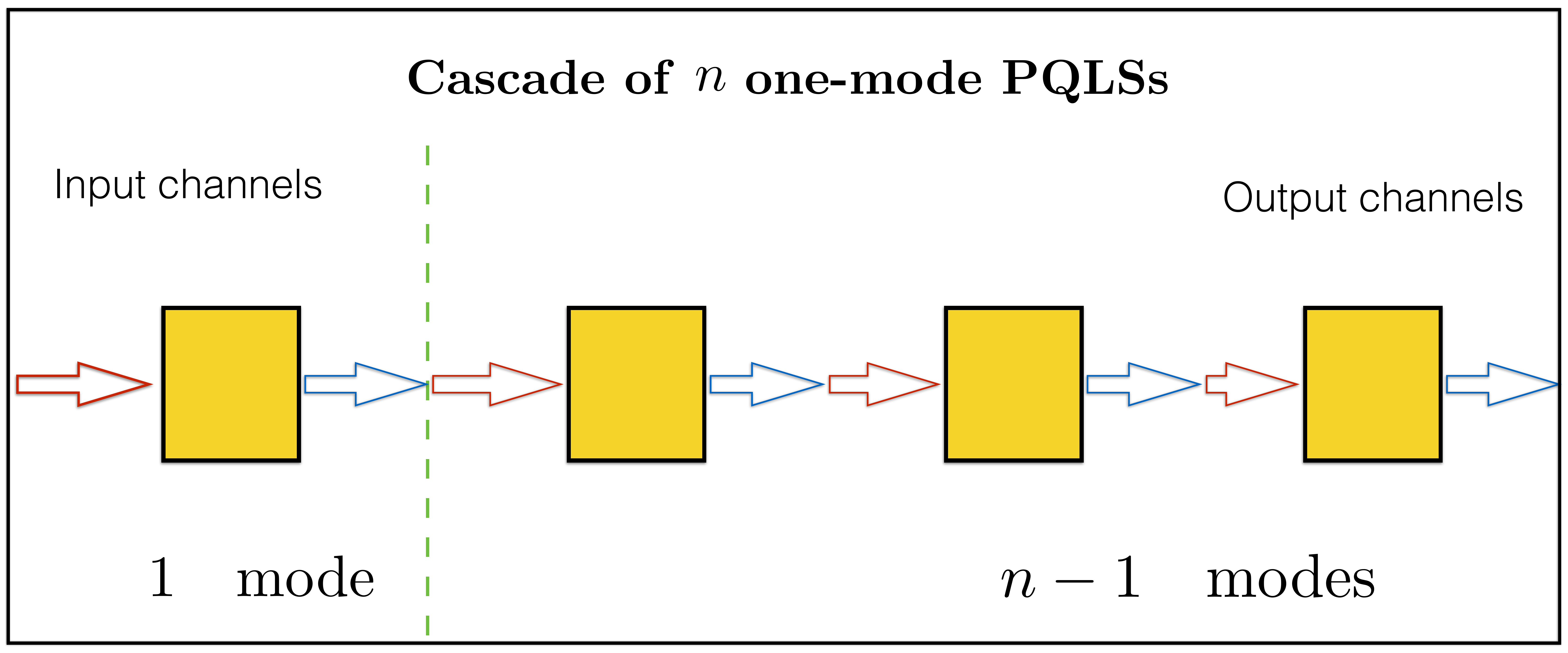}
\caption{
\label{splitting}}
\end{figure}

Now write $\Xi_{-}(s)$ and $\Xi_{+}(s)$ as 
\begin{equation}\label{q1}
\Xi_{-}(s)=\Xi^{(n-1)}_-(s)\Xi_{1-}(s)+\Xi^{(n-1)}_{+}(s)\overline{\Xi_{1+}(s^*)},
\end{equation}
\begin{equation}\label{q2}
\Xi_{+}(s)=\Xi^{(n-1)}_-(s)\Xi_{1+}(s)+\Xi^{(n-1)}_{+}(s)\overline{\Xi_{1-}(s^*)},
\end{equation}
with 
\[\Xi^{(n-1)}_-(s)=\frac{s^{2(n-1)}+a_{n-2}s^{2(n-2)}+...+ a_1s^{2(1)}+a_0           }{  \prod\limits_{i=2}^{n}  \left(s+x_i+y_i\right)\left(s+x_i-y_i\right)},\]    
\[\Xi^{(n-1)}_{+}(s)=\frac{   b_{n-2} s^{2(n-2)}+b_{n-3}s^{2(n-3)}+...+ b_1s^{2(1)}+b_0            }{\prod\limits_{i=2}^{n}  \left(s+x_i+y_i\right)\left(s+x_i-y_i\right)},
\]
\[\Xi_{1-}(s)=\frac{s^2+e}{\left(s+x_1+y_1\right)\left(s+x_1-y_1\right)},\]
\[\Xi_{1+}(s)=\frac{f}{\left(s+x_n+y_n\right)\left(s+x_n-y_n\right)},\]
where $e=-x_1^2-y_1^2+2ix_1\theta_1$ and $f=-2ix_1\Omega_{1+}$ (see Sec. \ref{seriesp}).

For  step \eqref{iti1} of the algorithm choose  a pair of complex conjugate poles, which must belong to one of the systems in the cascade. We can take this system to be the  one system in the partition, i.e., $\Xi_1(s)$. As alluded to above, let us identify this system in the cascade.

Firstly, we can identify the quantities $x_1, y_1$ from the poles. 
 It is clear at this stage that it remains to identify $\theta_1, \Omega_{1+}$. Note that we already know the value of $|\Omega_{1+}|^2-\theta_1^2$, which is $y_1^2$.

Now from the numerators of \eqref{q1} and \eqref{q2} we can identify the following:
From (\ref{q1}):
\begin{align*}
&s^0: a_0e+b_0\overline{f}=:\alpha_0\\
&s^2: a_1e+b_1\overline{f}+a_0=:\alpha_1\\
&\vdots\\
&s^{2(n-2)}: a_{n-2}e+b_{n-2}\overline{f}+a_{n-3}=:\alpha_{n-2}\\
&s^{2(n-1)}: 1\cdot e+a_{n-2}=:\alpha_{n-1}
\end{align*}
From (\ref{q2}):
\begin{align*}
&s^0: a_0f+b_0\overline{e}=:\beta_0\\
&s^2: a_1f+b_1\overline{e}+b_0=:\beta_1\\
&\vdots\\
&s^{2(n-2)}: a_{n-2}f+b_{n-2}\overline{e}+b_{n-3}=:\beta_{n-2}\\
&s^{2(n-1)}: 1\cdot f+b_{n-2}=:\beta_{n-1}.
\end{align*}
We now work recursively in decreasing powers  of $s$ to obtain expressions for $a_i$ and $b_i$ (for $i=n-2,...,0$). We see that at each stage of the recursion the expression we obtain is written  in terms of known quantities and is linear in the two unknowns $\theta_1, \Omega_{1+}$. Clearly,  this is true for $a_{n-2}$ and $b_{n-2}$. That is, take the  $s^{2(n-1)}$ equations:
\[a_{n-2}=\alpha_{n-1}-e\]
\[b_{n-2}=\beta_{n-1}-f\]
and observe that $e, f$ are linear in $\theta_1, \phi_1$. To see that this is true for all $a_i$ $b_i$, we proceed by induction. Suppose that after the $k-1$ step we have expressions for $a_{n-k}$ and $b_{n-k}$ that are linear in $\theta_1, \phi_1$ and are of the form: 
\[a_{n-k}=s_1+s_2\overline{\Omega_{1+}}-s_3\theta_1\]
\[b_{n-k}=s_4-s_2\theta_1+s_3\Omega_{1+},\]
where the $s_i$ are known quantities. Note that it shouldn't be too hard to see that $a_{n-2}, b_{n-2}$ were also of this form. 
We will now show that $a_{n-(k+1)}$ is of this form (the proof for $b_{n-(k+1)}$ is similar).
\begin{align*}
a_{n-(k+1)}&=\alpha_{n-k}-a_{n-k}e-b_{n-k}\overline{f}\\
&=\alpha_{n-k}-\left(s_1+s_2\overline{\Omega_{1+}}-s_3\theta_1\right)\left(-x_1^2-y_1^2+2ix_1\theta_1\right)-\left(s_4-s_2\theta_1+s_3\Omega_{1+}\right)\left(2ix_1\overline{\Omega_{1+}}\right)\\
&=\tilde{s}_1+\tilde{s}_2\overline{\Omega_{1+}}-\tilde{s}_3\theta_1,
\end{align*}
where 
\[\tilde{s_1}:=\alpha_{n-k}+s_1(x_1^2+y_1^2)+2is_3x_1(\theta^2+|\Omega_{1+}|^2)\]
\[\tilde{s_2}:=s_2(x_1^2+y_1^2)-2is_4x_1\]
\[\tilde{s_3}:=-s_3(x_1^2+y_1^2)-2ix_1s_1.\]
Therefore $\tilde{s_i}$ are known, so we are done.

Finally  take the expressions for $a_0$ and $b_0$  that we have just obtained and substitute them into the two remaining equations coming from the $s^0$ powers. The result of this is again two linear equations in $\theta_1, \Omega_{1+}$. A consequence of the fact we obtain  linear expressions at each stage is that the final solution will be  unique. This uniqueness should not be surprising because of the constraints that  we placed on the realisation in Sec. \ref{seriesp}, i.e.,  that each element of the equivalence class contains exactly one element.

\begin{exmp}\label{ordert} 
We now give an example of the above procedure. Suppose that we have the following two-mode SISO QLS:
\[C_-=\left(8,12\right),\,\,\,C_+=\left(0,-1\right),\,\,\,\Omega_-=\left(\begin{smallmatrix}6& -1\\-1&2\end{smallmatrix}\right)\,\,\,\mathrm{and}\,\,\,\Omega_+=\left(\begin{smallmatrix}0& i\\i&0\end{smallmatrix}\right).\]
Its transfer function is given by
\begin{align}\label{ex.p}
&\Xi_-(s)=\frac{s^4+(-10672.25+482)s^2+(338313-12284i)}{  s^4+207s^3+10752.25s^2+6940s+338505    }\\\label{exp.p2}
&\Xi_+(s)=\frac{(-192+32i)s^2+(-4276+1632i)}{  s^4+207s^3+10752.25s^2+6940s+338505   }.
\end{align}

We will now find a  cascade realisation directly from the transfer function using the above procedure. 
Firstly, the poles of the transfer function are given by $-103.48 - 2.12i, -103.48 + 2.12i$ and $-0.0187 + 5.62i, -0.0187 - 5.62i$. Let us identify the poles $-103.48 - 2.12i, -103.48 + 2.12i$ as belonging to the first system. Using $x_1=\frac{1}{2}c^2_1$ and the recursive procedure above,  we obtain $c_1=14.39, \Omega_{1-}=2.33$ and $\Omega_{1+}=-0.15 - 0.93i$. Finally, we may identify the second system by  $\Xi_2(s)=\Xi(s)\Xi_1(s)^{\flat}$, so that $c_2=0.2, \Omega_{2-}=7.38$ and $\Omega_{2+}=-1.48 - 4.55i$. 

On the other hand we may identify the poles $-0.0187 + 5.62i, -0.0187 - 5.62i$ as the first system and $-103.48 - 2.12i, -103.48 + 2.12i$ as the second system. As a result we obtain $c_1=0.2, \Omega_{1-}=7.30, \Omega_{1+}=1.61-4.37i$ and $c_2=14.39 \Omega_{2-}=2.33, \Omega_{2+}=-0.15 - 0.93i$.This degeneracy corresponds corresponds to the \textit{(symplectic) Schur decomposition} degeneracy mentioned above. Notice that the subsystems in both constructions are different, which is a signature of the fact that the subsystems in each don't commute. 
\end{exmp}


The brute-force approach in this section gets too cumbersome for  MIMO systems so we need an alternative method, which is the subject of the next subsection.

\section{Identification Method 2: Indirect Method}\label{indirect}
Here, the realisation is obtained indirectly by first finding a non-physical realisation and then constructing a physical one from this by applying a criterion developed in \cite{Gough2}.  
The construction follows similar lines to the method described in \cite{Guta1,Guta2} for PQLSs. 


Let  $(A_0, B_0, C_0)$ be a triple of doubled-up matrices which constitute a  minimal realisation of $\Xi(s)$, i.e.  
\begin{equation}
\label{TF1}
\Xi(s)={1}+C_0(sI-A_0)^{-1}B_0.
\end{equation}
The identification method here is applicable to generic (not necessarily SISO) QLSs, provided such a classical realisation can be found beforehand. 
For example, in  Appendix \ref{APP7} such a classical realisation is found for an $n$-mode minimal QLS, with matrices
$(A,C)$, possessing $2n$ distinct poles each with non-zero imaginary
part.

Any other minimal realisation of the transfer function can be generated via a \textbf{similarity} transformation:
\begin{equation}\label{trivialtrans}
A=TA_0T^{-1}\,\,\, B=TB_0 \,\,\, C=C_0T^{-1}.
\end{equation}
The problem here  is that in general these matrices may not describe a genuine quantum system in the sense that from a given $A, B, C$ one cannot reconstruct the pair $(\Omega, C)$. Our goal is to find a special transformation $T$ mapping $(A_0, B_0, C_0)$ to a triple $(A, B, C)$ that does represent a genuine quantum system. To this end, a triple $(A, B, C)$ corresponds to  a quantum linear system if and only if it satisfies the  \textit{physical realisability conditions} \cite{Gough2}:
\begin{equation}\label{PRs}
A+A^{\flat}+C^{\flat}C=0 \,\,\, \mathrm{and}\,\,\, B=-C^\flat.
\end{equation}
Therefore, substituting \eqref{trivialtrans} into the left equation of \eqref{PRs} one finds 
\begin{equation}\label{fog}
\left(T^{\dag}JT\right)A_0+A_0^{\dag}\left(T^{\dag}JT\right)+C^{\dag}_0JC_0=0 ,
\end{equation}
where the matrices $J$ here are of appropriate dimensions. The goal here is to find a $T$ satisfying \eqref{fog}, for then the triple $(A, B, C)$ obtained via \eqref{trivialtrans} will be a minimal QLS realisation of the transfer function.

Next, because  $A$ and $A_0$ are similar and the system is assumed to be stable, $A_0$ will have eigenvalues in the left-half plane. Hence  it follows from \cite[Lemma 3.18]{Zhou1} that \eqref{fog} has a unique solution, given by  
\begin{equation}\label{solute}
T^\flat T= J \left(T^{\dag}JT\right)=   \int^{\infty}_0 J\left(C_0e^{A_0 t}\right)^{\dag}J\left(C_0e^{A_0t}\right)dt.
\end{equation}

We now need to use a result from \cite{Griv1}, which is a sort of singular value decomposition for symplectic matrices. We state the result in a slightly different way here. 
\begin{Lemma}\label{peterf}
Let $N^{2n\times 2n}$ be a complex, invertible, doubled-up matrix and let $\mathcal{N}=N^\flat N$.
\begin{enumerate}
\item Assume that all eigenvalues of $\mathcal{N}$ are semisimple. Then there exists a symplectic matrix $W$ such that $\mathcal{N}=W\hat{N}W^\flat$ where $\hat{N}=\left(\begin{smallmatrix} \hat{N}_1&\hat{N}_2\\\hat{N}_2^{\#}&\hat{N}_1^{\#}\end{smallmatrix}\right)$ with
\[\hat{N}_1=\mathrm{Diag}\left(\lambda^{+}_1, ..., \lambda^{+}_{r_1}, \lambda^{-}_1, ..., \lambda^{-}_{r_2}, \mu_1 {1}_2, ..., \mu_{r_3} {1}_2\right)\]
\[\hat{N}_2=\mathrm{Diag}\left(0, ..., 0, 0, ..., 0, -\nu_1 \sigma , ..., -\nu_{r_3} \sigma\right).\]
Here $\lambda_{i}^{+}>0$, $\lambda_{i}^{-} <0$ and $\lambda^{c}_i := \mu_i+i\nu_i$ (with $\mu_i, \nu_i\in\mathbb{R}$ $\nu_i >0$) are the eigenvalues of $\mathcal{N}$. The matrix $\sigma=\left(\begin{smallmatrix} 0&-i\\i&0\end{smallmatrix}\right)$ is one of the Pauli matrices. 

\item
There exists another symplectic matrix $V$ such that $N=V\bar{N}W^\flat$ where $\bar{N}$ is the factorization of $\hat{N}$ $\left(\hat{N}=\bar{N}^\flat\bar{N}\right)$ given by
$\bar{N}=\left(\begin{smallmatrix} \bar{N}_1&\bar{N}_2\\\bar{N}_2^{\#}&\bar{N}_1^{\#}\end{smallmatrix}\right)$ with
\[\bar{N}_1=\mathrm{Diag}\left(\sqrt{\lambda^{+}_1}, ..., \sqrt{\lambda^{+}_{r_1}}, 0, ..., 0, \alpha_1 {1}_2, ..., \alpha_{r_3} {1}_2\right)\]
\[\hat{N}_2=\mathrm{Diag}\left(0, ..., 0, \sqrt{|\lambda^{-}_1|}, ..., \sqrt{|\lambda^{-}_{r_2}|},-\beta_1 \sigma , ...,-\beta_{r_3} \sigma\right).\]
The coefficients $\alpha_i$ and $\beta_i$ are determined from $\mu_i$ and $\nu_i$ via
\begin{itemize}
\item If $\mu_i\geq0$, then $\alpha_i=\sqrt{\mu_i}\mathrm{cosh}x_i$, $\beta_i=\sqrt{\mu_i}\mathrm{sinh}x_i$, with $x_i=\frac{1}{2}\mathrm{sinh}^{-1}\frac{\nu}{\mu}$.
\item If $\mu_i\leq0$, then $\alpha_i=\sqrt{|\mu_i|}\mathrm{sinh}x_i$, $\beta_i=\sqrt{|\mu_i|}\mathrm{cosh}x_i$, with $x_i=\frac{1}{2}\mathrm{sinh}^{-1}\frac{\nu}{|\mu|}$.
\item If $\mu_i=0$, then $\alpha_i=\beta_i=\sqrt{\frac{\nu_i}{2}}$.
\end{itemize}
\end{enumerate}
\end{Lemma}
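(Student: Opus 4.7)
The plan is to first establish the canonical form in Part 1 and then derive Part 2 as a direct construction on top of it. The key algebraic observation is that $\mathcal{N}:=N^{\flat}N$ is symplectic self-adjoint, $\mathcal{N}^{\flat}=\mathcal{N}$ (using $\flat^{2}=\mathrm{id}$), and is doubled-up because the doubled-up matrices form an algebra. Since $W^{\flat}=W^{-1}$ for symplectic $W$, the map $\mathcal{N}\mapsto W\mathcal{N}W^{\flat}$ is an ordinary similarity, so Part 1 amounts to a normal form for $\flat$-self-adjoint doubled-up matrices under spectrum-preserving symplectic similarity.

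For the construction of $W$ I would follow the route in \cite{Griv1}. Standard Krein-space / indefinite inner product theory (with $\langle x,y\rangle_{J}:= y^{\dag}Jx$) gives that for $\flat$-self-adjoint matrices the spectrum is symmetric under complex conjugation, and that on the spectral subspace corresponding to a single real eigenvalue or a conjugate pair $\{\lambda,\bar\lambda\}$ the form is non-degenerate. The doubled-up structure of $\mathcal{N}$ implements the flip $\lambda\leftrightarrow\bar\lambda$ on eigenvectors via the $\#$-operation, so the spectrum splits naturally into (a) positive reals $\lambda_{i}^{+}$, (b) negative reals $\lambda_{i}^{-}$, and (c) complex pairs $\mu_{i}\pm i\nu_{i}$ with $\nu_{i}>0$. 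Under the semisimplicity hypothesis I would diagonalise $\mathcal{N}$ within each of these invariant pieces and then choose a $J$-orthonormal basis compatible with the doubled-up conjugate symmetry, which forces a one-mode diagonal block $\lambda_{i}^{+}$ in the positive-real case, an off-diagonal two-mode block of the form $\Delta(0,\sqrt{|\lambda_{i}^{-}|}\,)$ in the negative-real case, and a two-mode block $\Delta(\mu_{i}1_{2},-\nu_{i}\sigma)$ in the complex case. Concatenation of these local bases produces the required $W$ and the claimed $\hat{N}$.

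Granting Part 1, Part 2 reduces to a direct computation. First I would verify block-by-block that $\hat{N}=\bar{N}^{\flat}\bar{N}$ for the explicit $\bar N$ in the statement: the positive-real blocks give $\bar N^{\flat}\bar N=\lambda_{i}^{+}$; the negative-real blocks give $\lambda_{i}^{-}1_{2}$; and on a complex block, using $\sigma^{2}=1_{2}$ and $\sigma^{\#}=-\sigma$, one gets $\bar N^{\flat}\bar N=\Delta\bigl((\alpha^{2}-\beta^{2})1_{2},-2\alpha\beta\,\sigma\bigr)$, so the identifications $\mu_{i}=\alpha_{i}^{2}-\beta_{i}^{2}$ and $\nu_{i}=2\alpha_{i}\beta_{i}$ are precisely the content of the piecewise formulas given (via $\cosh^{2}-\sinh^{2}=1$ and $\sinh(2x)=2\sinh x\cosh x$). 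I would then set $V:=NW\bar{N}^{-1}$; $V$ is doubled-up by closure, and
\[
V^{\flat}V=\bar N^{-\flat}W^{\flat}N^{\flat}NW\bar N^{-1}=\bar N^{-\flat}W^{-1}\mathcal{N}W\bar N^{-1}=\bar N^{-\flat}\hat{N}\bar N^{-1}=1_{2n},
\]
using $W^{-1}\mathcal{N}W=\hat{N}$ from Part 1 and $\hat{N}=\bar{N}^{\flat}\bar{N}$. Hence $V$ is symplectic and rearranging gives $N=V\bar{N}W^{\flat}$.

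The hard part will be the construction of $W$ in Part 1: one must simultaneously respect (i) $J$-orthogonality coming from $\flat$-self-adjointness, (ii) the doubled-up symmetry that pairs $\lambda$ with $\bar\lambda$, and (iii) the \emph{specific} Pauli-$\sigma$ pattern demanded in the complex case. The non-trivial bookkeeping is in pairing bases of conjugate eigenspaces so that the resulting $W$ is genuinely doubled-up (not merely $\flat$-unitary); once that pairing is fixed following \cite{Griv1}, the remainder of the argument is routine linear algebra.
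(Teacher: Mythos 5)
The paper does not actually prove this Lemma: it is stated as a reformulation of a result from \cite{Griv1} (``a sort of singular value decomposition for symplectic matrices''), with only the remark that the semisimple hypothesis suffices for generic matrices. So there is no in-paper proof to compare against, and any complete argument necessarily supplies more than the paper does. Your Part 2 is correct and essentially complete: the block identities $\mu_i=\alpha_i^2-\beta_i^2$ and $\nu_i=2\alpha_i\beta_i$ are exactly what the hyperbolic parametrisation encodes (via $\cosh^2-\sinh^2=1$ and $\sinh 2x=2\sinh x\cosh x$), and defining $V:=NW\bar{N}^{-1}$ and computing $V^{\flat}V=(\bar{N}^{\flat})^{-1}\hat{N}\bar{N}^{-1}=1_{2n}$, together with closure of the doubled-up matrices under products and inverses, cleanly reduces Part 2 to Part 1. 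Part 1 is where all the work lies; your sketch correctly identifies the relevant structure ($\flat$-self-adjointness of $\mathcal{N}$, conjugation symmetry of the spectrum, the $\#$-flip pairing $\lambda\leftrightarrow\bar{\lambda}$ forced by the doubled-up form), but you ultimately defer the basis construction to \cite{Griv1} — which is acceptable here, since the paper itself does exactly that. One slip in your Part 1 paragraph: a negative real eigenvalue contributes a \emph{diagonal} block $\Delta(\lambda_i^-,0)$ to $\hat{N}$, exactly as a positive one does (see the statement of $\hat N_1$ and $\hat N_2$), not an off-diagonal block $\Delta(0,\sqrt{|\lambda_i^-|})$; the latter is the corresponding block of the \emph{factor} $\bar{N}$, and indeed $\Delta(0,s)$ is $\flat$-anti-self-adjoint, so it could not occur as a block of the $\flat$-self-adjoint $\hat{N}$. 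Your Part 2 computation uses the correct blocks, so nothing downstream is affected.
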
 
The Lemma can be extended beyond the semisimple assumption, but since the latter holds for generic matrices \cite{Griv1}, it suffices for our purposes.

We can therefore use  Lemma \ref{peterf} together with equation \eqref{solute} in order to write the `physical' 
$T$ as $T=V\bar{T}W^\flat$, where $W$ and $\bar{T}$ can be computed as in the Lemma above, and $V$ is a symplectic matrix. However, since the QLS equivalence classes are characterised by symplectic transformation, this means that 
$T_0=\bar{T}W^\flat$ transforms 
$(A_0, B_0, C_0)$ to the matrices of a quantum systems satisfying the realisability conditions. Finally, we can solve to find the set of physical parameters $(\Omega, C)$, which are given in terms of $(A_0, B_0, C_0)$, as
\begin{align*}
C&= C_0W\bar{T}^{-1}  \\
\Omega&=i\left(\bar{T}W^\flat A_0W\bar{T}^{-1} +\frac{1}{2}\left(\bar{T}^{\flat}\right)^{-1}W^\flat C_0^\flat C_0W\bar{T}^{-1}\right) .
\end{align*}

\begin{remark}
Note that by assumption $\Xi(s)$ is the  transfer function of a QLS. 
Since the original triple $(A_0, B_0, C_0)$ is minimal, this implies that there exists a non-singular $T$ satisfying \eqref{solute}, so the right side of \eqref{solute} is non-singular, which eventually leads to a non-singular transformation 
$T$ computed using Lemma  \ref{peterf}.
\end{remark}



\section{Noisy Transfer Function Identification}\label{noisek}

We will now discuss the situation where there is noise in the system. Recall that the system in our QLSs model is an example of an open quantum system. In open quantum systems theory  information  is dissipated to the environment \cite{Rivas1}, which is typically modelled as (bosonic) quantum bath (see Sec.  \ref{Bosonic}). Therefore noisy QLSs may be modelled  with additional channels that we cannot access \cite{Peter1}(see Fig. \ref{decorg}). 

 \begin{figure}
\centering
\centering\includegraphics[scale=0.20]{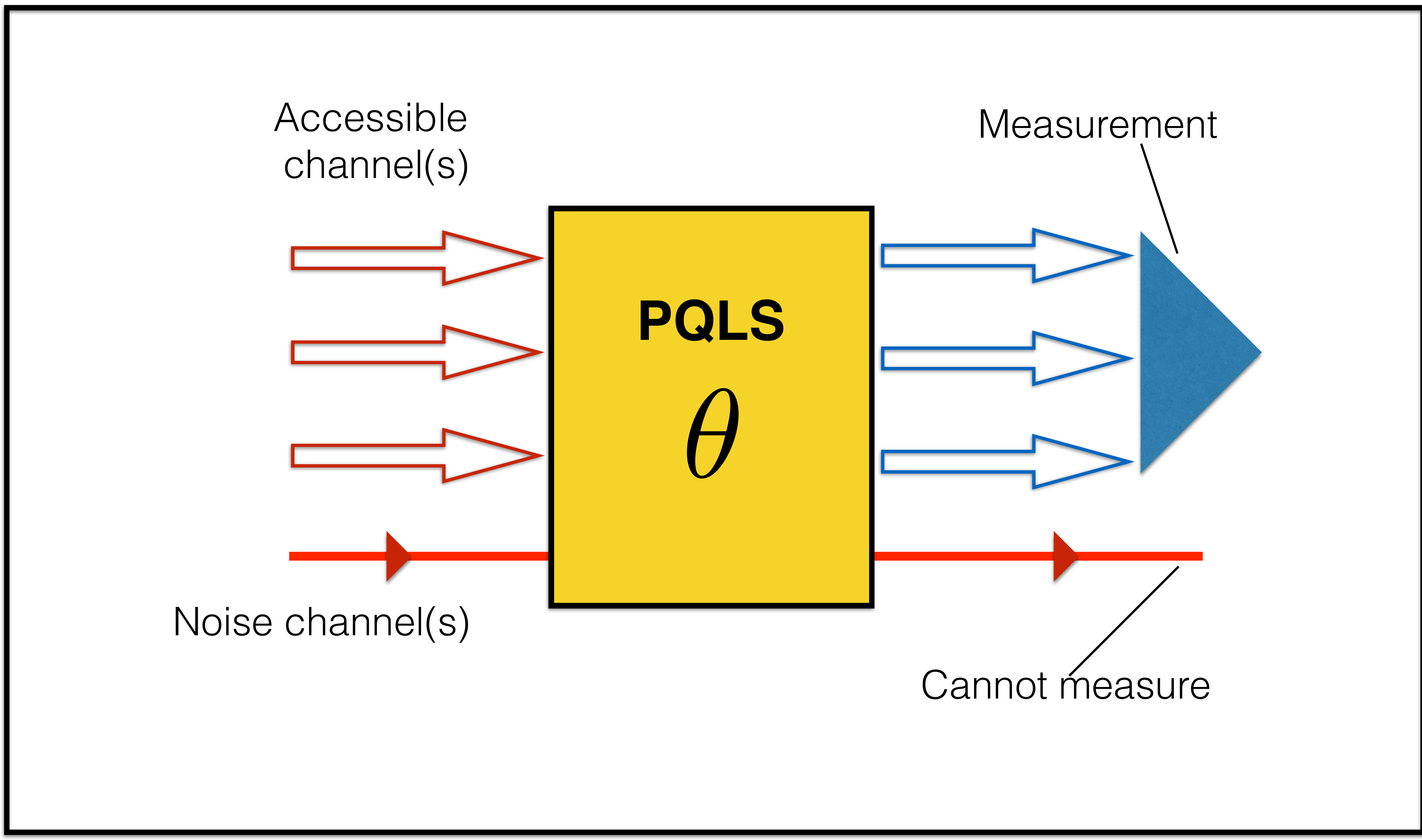}
\caption{Illustration of how to model noise in QLSs. \label{decorg}}
\end{figure}
Throughout this section we shall work with \textbf{passive} QLSs.
\subsection{Identifiability}\label{noisek1}

Suppose that  we have an unknown PQLS with $n$ internal modes and $m$ channels. We assume that we can only access $m_1$ of them. We call the $m_1$ channels that we can access the \textit{accessible channels} and the remaining channels the \textit{noise channels}. We would like to ask the following question: what is identifiable from measurements on the accessible channels? That is, we seek to identify the system  $\mathcal{G}:=(C=(c_1,c_2), \Omega)$ where $c_1$ is supported on the accessible channels and $c_2$ on noise  channels. Write the transfer function for $\mathcal{G}$ as $$\Xi(s)=\left(\begin{smallmatrix}\Xi_{11}(s)&\Xi_{12}(s)\\\Xi_{21}(s)&\Xi_{22}(s)\end{smallmatrix}\right),$$
where, for example, $\Xi_{12}(s)$ represents the transfer function from the accessible channels  to the noise channels. We assume  throughout Sec.  \ref{noisek1} that the input on the noise channels is vacuum. Therefore,  we have access to only the quantity $\Xi_{11}(s)=1-c_1\left(s-A\right)^{-1}c_1^{\dag}$    from the full transfer function $\Xi(s)$.

Under the assumption of minimality and because $\Xi_{11}(s)$ is a classical transfer function, then the worst case scenario  is that we may identify $(c_1, \Omega)$ up to a similarity  transformation. On the other hand, the best case  to hope for is that we can   identify $(C,\Omega)$ up to a unitary transformation as in Corollary \ref{colo}. We give a simple example illustrating that the actual level of identifiability lies somewhere between these two extremes.

\begin{exmp}\label{probes}
Suppose that we have a one-mode PQLS with two channels, of which we only have access to one. The full transfer function is given by 
$$\Xi(s)=\left(\begin{smallmatrix}  1-\frac{|c_1|^2}{s+i\Omega+\frac{1}{2}(|c_1|^2+|c_2|^2)}&\frac{c_1\overline{c_2}}{s+i\Omega+\frac{1}{2}(|c_1|^2+|c_2|^2)}\\\frac{c_2\overline{c_1}}{s+i\Omega+\frac{1}{2}(|c_1|^2+|c_2|^2)}&1-\frac{|c_1|^2}{s+i\Omega+\frac{1}{2}(|c_1|^2+|c_2|^2)}\end{smallmatrix}\right).$$
Clearly from the full transfer function the most that we can hope to identify are the quantities $|c_1|, |c_2| \Omega$ and $c_1\overline{c_2}$. This corresponds exactly to the equivalence classes in Corollary \ref{colo}. 

However, from just the top-left element of $\Xi(s)$ we an only identify $|c_1|, |c_2|,$ and $\Omega$. That is, we have no information about the relative  phase of the coupling between   the two channels; this should  not be  a surprise. 
\end{exmp}

Also, restricting  to accessible  channels observability and controllability are not  equivalent in general, unlike the noiseless case. We show give a  counterexample for this. 
\begin{exmp}
Consider a two mode system SISO system with an additional noisy channel. Let $\Omega=\mathrm{Diag}(4,1)$ and $c_1=(2,y), c_2=(-1,3)$, where $y$ is a variable to be specified later. Therefore,
$$A=\left(\begin{smallmatrix}-4i-4&3-2y\\3-2\overline{y}&-i-|y|^2-9\end{smallmatrix}\right).$$
Using  condition \eqref{obs2} in Theorem \ref{obs5} the system $(c_1, A)$ is observable iff the matrix $$\left(\begin{smallmatrix}c_1\\c_1A\end{smallmatrix}\right)$$ has full column rank. After some algebra it turn out that this condition is equivalent to 
\begin{equation}\label{ce1}
y^2+y\left(-2i+\frac{16}{3}\right)-12=0.
\end{equation}

On the other hand, by using \eqref{cl2} from Theorem \ref{cl5}, $(c_1^{\dag}, A)$ is controllable  iff
\begin{equation}\label{ce2}
y^2+y\left(+2i+\frac{16}{3}\right)-12=0.
\end{equation}
The equations \eqref{ce1} and \eqref{ce2} cannot be satisfied simultaneously, hence we have counterexamples to the statements [observability$\implies$ controllability] and [controllability$\implies$ observability].
\end{exmp}

The previous two examples indicate that the identifiability problem for noisy systems is more complicated than the noiseless case. To recap, the following statements no longer hold:
\begin{itemize}
\item Observability $\iff$ controllability.
\item Minimal $\implies$  TFE PQLSs are  related by a unitary transformation. 
\end{itemize}

We will now understand what is identifiable in the noisy case. 


First construct a minimal classical realisation for the transfer function, $\Xi_{11}(s)$, given by $(A_0, B_0, C_0)$, by using (for instance) the Gilbert realisation method \cite{Zhou1}. Here $A_0$ is of size $n\times n$, $B_0$ is of size $n\times m_1$ and $C_0$ is of size $m_1\times n$. That is,
$$\Xi_{11}(s)=1+C_0\left(s-A_0\right)^{-1}B_0.$$
Under assumption of minimality this realisation will be unique up to a similarity transformation (see Theorem \ref{class.id2}).

Now we must  find a pair of matrices $B_1\in\mathbb{C}^{n\times (m-m_1)}$ and $C_1\in\mathbb{C}^{(m-m_1)\times n}$ such that the extended system $$\left(A_0, \left(\begin{smallmatrix}B_0& B_1\end{smallmatrix}\right), \left(\begin{smallmatrix}C_0\\C_1\end{smallmatrix}\right), 1\right)$$
represents a physical quantum  system.  Note that there may be many such matrices $B_1$ and $C_1$; this would correspond to  PQLSs with the same transfer function on the accessible channels but different transfer function on the extended space. Let us put this non-uniqueness issue to one side for the moment and  proceed to find a realisation. Also note that  the extended system (once found) is guaranteed to  be  minimal because  $(A_0, B_0, C_0)$ is.
Now the requirement that the extended system be physical means that there exits a  similarity transformation $T$ such that the matrices 
\begin{equation}\label{older}A:=TA_0T^{-1}, \quad B:=T\left(\begin{smallmatrix}B_0& B_1\end{smallmatrix}\right)\quad C:=\left(\begin{smallmatrix}C_0\\C_1\end{smallmatrix}\right)T^{-1}
\end{equation}
correspond to a genuine PQLS, i.e., (see Sec. \ref{MTDR} or \cite{Guta2})
\begin{equation}\label{collier}
A+A^{\dag}+C^{\dag}C=0\quad \mathrm{and} \quad B=-C^{\dag}.
\end{equation}
Crucially such a similarity transformation preserves the transfer function $\Xi_{11}(s)$.
It follows that 
\begin{align}
&\label{bold}\left(T^{\dag}T\right)A_0+A_0^{\dag}\left(T^{\dag}T\right)+C_0^{\dag}C_0+{C}_1^{\dag}{C}_1=0\\
&\label{bold1}\left(T^{\dag}T\right)(B_0, {B}_1)=-(C_0^{\dag}, {C}_1^{\dag}).
\end{align}
As $A_0$ is Hurwitz, Eq. \eqref{bold} is equivalent to 
 \begin{equation}\label{bold2}T^{\dag}T=\int^{\infty}_0e^{A^{\dag}_0t}\left(C_0^{\dag}C_0+{C}_1^{\dag}{C}_1\right)e^{A_0t}.\end{equation}
 Note that $T^{\dag}T$ must be invertible for any choice of $C_1$   by minimality \cite{Zhou1}. 
Finally, combining this expression with \eqref{bold1}, one obtains
\begin{equation}\label{pits}B_0=-\left(\int^{\infty}_0e^{A^{\dag}_0t}\left(C_0^{\dag}C_0+C_1^{\dag}{C}_1\right)e^{A_0t}\right)^{-1}C_0^{\dag}\end{equation}
\begin{equation}\label{pits1}{B}_1=-\left(\int^{\infty}_0e^{A^{\dag}_0t}\left(C_0^{\dag}C_0+C_1^{\dag}{C}_1\right)e^{A_0t}\right)^{-1}{C}_1^{\dag}.\end{equation}

In summary in order to find a physical realisation from $(A, B, C)$ one begins by finding  a $C_1$ (perhaps numerically) satisfying \eqref{pits}. The matrix $B_1$  is then fixed by the physicality condition \eqref{pits1}. From these one can obtain a unique (given a particular solution $B_1, C_1$) $T^{\dag}T$ from \eqref{bold2}. As $T$ is a normal matrix, we can diagonalise $T^{\dag}T$, i.e. $T^{\dag}T=U_0\Lambda_0U_0^{\dag}$, so that $T=U^{\dag}_0\sqrt{\Lambda_0}$. Finally, the physical system is given by $(A, C)$ satisfying \eqref{older}.

As mentioned earlier  $B_1$ and $C_1$ are not unique here (although $B_1$ is fixed by $C_1$). This freedom appears in Eq. \eqref{pits}. However,      given a particular solution  $B_1$ and $C_1$ the matrix $T$ is unique up to a unitary matrix in Corrolary  \ref{colo}, as the problem simply reduces to the noiseless case. Therefore, the non-uniqueness in the choice of  $C_1$ represents the relaxation of the unitary equivalence classes from the noiseless to the noisy case. Moreover, the conditions \eqref{older} reduce to the classical similarity transformations on $(A_0, B_0, C_0)$, whereas the condition that the full system be physical (i.e. \eqref{collier}) imposes restrictions on the admissible similarity transformations. Therefore, identifiability does indeed lie somewhere between the similarity and unitary transformation extremes, as mentioned above. 

Furthermore, there is no guarantee that the transfer function $\tilde{\Xi}(s)$ is equal to the original transfer function $\Xi(s)$. Moreover, as minimality on the extended space of channels $m$  does not imply minimality on the channels $m_1$, the transfer function $\tilde{\Xi}(s)$ may be simpler than $\Xi(s)$  in the sense that less modes are required to describe it. 

We now give an example of this method in action.

\begin{exmp}
Consider a one mode SISO system with an additional noisy channel. Suppose that the coupling is given by $C_1=6i$, $C_2=3+2i$ and the Hamiltonian is $\Omega=3$.
The question that we intend to answer is the following:
can we recover these parameters from the part of the transfer function that we have access to, namely
$\Xi_{11}(s)=1-\frac{36}{s+3i+39/2}$?

Firstly, suppose that we obtain the classical realisation $(A_0=-\frac{39}{2},B_0=-1, C_0=36)$ for this  transfer function, 
Now solving the integral in \eqref{pits} leads to the condition
$1=\frac{(39)(36)}{|\tilde{C}_0|^2+36^2}.$
Hence $|\tilde{C}_0|=6\sqrt{3}$. From \eqref{pits1} we also obtain
$\tilde{B}_0=-\frac{\overline{\tilde{C}_0}}{36}$. As there are no further constraints on $\tilde{B}_0$ and $\tilde{C}_0$, we can take them to be real. Hence $\tilde{C}_0=6\sqrt{3}$ and $\tilde{B}_0=-\frac{\sqrt{3}}{6}$. Therefore, from \eqref{bold2} one obtains $\left(T^{\dag}T\right)=\left(\begin{smallmatrix}36&0\\0&36\end{smallmatrix}\right)$. Hence $T=\left(\begin{smallmatrix}6&0\\0&6\end{smallmatrix}\right)$.
Finally, from these we can obtain the  physical system  $C_1=6, C_2=\sqrt{3}, \Omega=3$. One can verify that this realisation has noisy transfer function $\Xi_{11}(s)$. However, the transfer function of the extended system characterised by matrices  $C_1=6, C_2=\sqrt{3}, \Omega=3$ is not the same as the full transfer function of the original system $C_1=6i, C_2=3+2i, \Omega=3$, but nevertheless remains consistent with Example \ref{probes}.
\end{exmp} 

Throughout this subsection we have implicitly assumed that we know the number of noise channels. In fact we may remove this assumption using the following observation: for a system with $n$ modes,  it is always possible to find $\tilde{B_0}$ and $\tilde{C}_0$ of size $n\times n$ satisfying the requirements above. This means that we can model our MIMO $n$ mode system as having (at most) $n$ noisy channels and a solution is guaranteed to exist; any additional channels  will result in superfluous parameters. Therefore, we don't need to know $m$ in our analysis above. To see why this statement is true, suppose  as above that we have an $n$ mode system with $m_1$ accessible  channels and $m-m_1>n$ noise channels, which can be written as a cascade of one-mode systems \cite{Nurdin6}. Suppose that the cascaded system is given by 
$$\left(\left(\begin{smallmatrix}c_{1n}\\c_{2n}\end{smallmatrix}\right),\Omega_n\right)\triangleleft...\triangleleft\left(\left(\begin{smallmatrix}c_{11}\\c_{21}\end{smallmatrix}\right),\Omega_1\right).$$
Now we can change basis on the noise fields $1,....m-m_1$ (hence the accessible field remains unchanged)  so that the noisy coupling to the first mode in the cascade is supported on only the first noisy channel.  Next perform a change of basis on the noise  fields $2,....m-m_1$ so that the noisy coupling to the second mode in the cascade is supported on only the first two noisy channels. Repeat this procedure a further $n-2$ times leads to an accessible channel TFE system which couples with only $n$ noise channels (there will be $m-m_1-n$ redundant noise channels) (see Fig. \ref{basischange}). 

 \begin{figure}
\centering
\centering\includegraphics[scale=0.20]{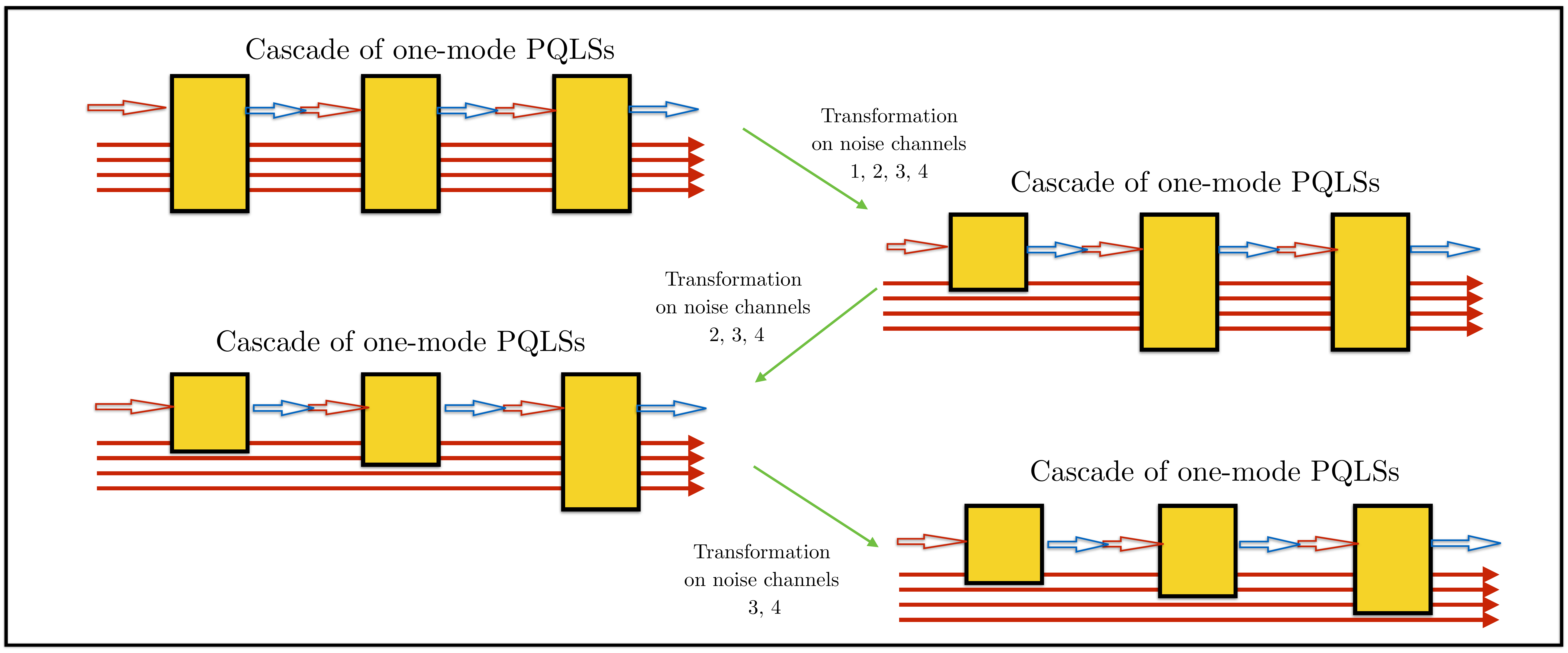}
\caption{
\label{basischange}}
\end{figure}

\subsection{Noise Unobservable Subspaces}\label{Noisek4}

Here we consider a  modified problem to the one in the last subsection, where we suppose that we also have access to the quantity $\Xi_{m_1, m-m_1}(s)$. This could be achieved for example under the assumption that the input  on the noise channel is known (and is different from vacuum). The question of  identifiability 
comes down to understanding what is identifiable from $\Xi_{m_1, m_1}(s)$ and  $\Xi_{m_1, m-m_1}(s)$, and can be answered in a similar way to the problem in Sec. \ref{noisek1}. However, this problem is not our primary concern here. Instead we discuss an interesting phenomenon when part of the system is  not visible in the noise output. 


Suppose that we have the PQLS $\mathcal{G}:=(C=(c_1,c_2), \Omega)$ where $c_1$ is supported on the first $m_1$ channels and $c_2$ on the remaining $m-m_1$ channels.  The transfer function that we can access is given by
$$\Xi_{acc}(s):=\left(1,0\right)-c_1\left(s-A\right)^{-1}\left(c_1^{\dag}, c_2^{\dag}\right),$$
whereas the transfer function on the noisy output is given by
$$\Xi_{noi}(s):=\left(0,1\right)-c_2\left(s-A\right)^{-1}\left(c_1^{\dag}, c_2^{\dag}\right).$$
Note that the  ``$(A, B, C, D)$ forms''  for the  accessible and noisy  channels are  given by $\left(A, -\left(c_1^{\dag}, c_2^{\dag}\right),c_1, 1\right)$ and $\left(A, -\left(c_1^{\dag}, c_2^{\dag}\right),c_2, 1\right)$, respectively.

\begin{defn}
We say that a PQLS is observable or controllable on the accessible (noisy) channel if the system  is observable or controllable, respectively, when restricted to the accessible (noisy) channel.
\end{defn}



\begin{thm}
Suppose that a PQLS is observable on the accessible (noisy) channel, then it is  controllable on the accessible (noisy) channel. 
\end{thm}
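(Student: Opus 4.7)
The plan is to mirror the PBH-contradiction argument used in the Lemma of Section \ref{hastings} (Hurwitz $\Rightarrow$ minimal), adapted for the accessible/noisy channel split. I will focus on the accessible-channel case; the noisy-channel case follows by swapping $c_1 \leftrightarrow c_2$ in the PQLS physical-realisability identity $A + A^{\dag} + c_1^{\dag}c_1 + c_2^{\dag}c_2 = 0$ and repeating the argument verbatim.

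Arguing by contrapositive, I would assume that $(A, -c_1^{\dag})$ is not controllable on the accessible channel. By Theorem \ref{cl5}(3) this yields a nonzero $x$ with $A^{\dag} x = \mu\, x$ and $c_1 x = 0$. Plugging $c_1 x = 0$ into the passive decomposition $A^{\dag} = i\Omega - \tfrac{1}{2}(c_1^{\dag} c_1 + c_2^{\dag} c_2)$ gives $\Omega x = -i\mu\, x - \tfrac{i}{2}\, c_2^{\dag} c_2 x$, and substituting into $A = -i\Omega - \tfrac{1}{2}(c_1^{\dag}c_1 + c_2^{\dag}c_2)$ produces the key identity
\[
A x = -\mu\, x - c_2^{\dag} c_2\, x .
\]
In the favourable sub-case $c_2 x = 0$, this collapses to $Ax = -\mu x$, so $x$ is an honest right eigenvector of $A$ lying in $\ker c_1$, and Theorem \ref{obs5}(3) immediately contradicts observability of $(c_1, A)$.

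The main obstacle is the general sub-case $c_2 x \neq 0$, where the identity above is only a ``near-eigenvector'' relation: the residue $c_2^{\dag} c_2 x$ need not lie in the span of $x$, nor \emph{a priori} in $\ker c_1$. To close the argument I would pass to the smallest $A$-invariant subspace $\mathcal{V}$ containing $x$ and try to show that $\mathcal{V} \subseteq \ker c_1$, using the Lyapunov structure of the PR condition to propagate the $c_1$-annihilation from $x$ to $A^k x$. Once $\mathcal{V}\subseteq\ker c_1$ is established, the restriction of $A$ to $\mathcal{V}$ furnishes a genuine right eigenvector $y\in\ker c_1$, contradicting observability of $(c_1,A)$ via Theorem \ref{obs5}(3).

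Controlling the residue $c_2^{\dag} c_2 x$ is what I expect to be the delicate step, since the PR identity mixes $c_1$ and $c_2$ and the $A$-orbit of $x$ can leak out of $\ker c_1$ unless this mixing is tight. The Hurwitz hypothesis should enter here in the same role it plays in the Section \ref{hastings} Lemma: to exclude ``mirror'' spectral pairs $\{\lambda, -\bar\lambda\}$ for $A|_{\mathcal{V}}$, which is the mechanism that forces the contrapositive to close. The noisy-channel half of the theorem follows by the same calculation with $c_1$ and $c_2$ interchanged, using the symmetric appearance of both couplings in the PR Lyapunov equation.
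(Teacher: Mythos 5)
There is a genuine gap, and it stems from a misreading of what ``controllable on the accessible channel'' means in this theorem. The paper explicitly fixes the $(A,B,C,D)$ forms of the accessible and noisy channels as $\left(A, -\left(c_1^{\dag}, c_2^{\dag}\right), c_1, 1\right)$ and $\left(A, -\left(c_1^{\dag}, c_2^{\dag}\right), c_2, 1\right)$: only the output matrix changes between the two channels, while the input matrix is the \emph{full} coupling $-\left(c_1^{\dag}, c_2^{\dag}\right)$ in both cases. You instead take the controllability pair to be $(A, -c_1^{\dag})$, so your PBH witness $x$ only satisfies $c_1 x = 0$, and you are left fighting the residue $c_2^{\dag}c_2 x$. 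Under the correct reading, non-controllability gives a right eigenvector $z$ of $A^{\dag}$ with $c_1 z = 0$ \emph{and} $c_2 z = 0$ simultaneously, so the PR identity $A + A^{\dag} + c_1^{\dag}c_1 + c_2^{\dag}c_2 = 0$ collapses at once to $Az = -\mu z$; your ``favourable sub-case'' is the only case that occurs, and observability of $(c_1, A)$ gives the contradiction immediately (no invariant-subspace propagation, and no Hurwitz input, is needed).

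The deeper problem is that the statement as you have interpreted it is false, so the ``delicate step'' you flag cannot be closed by any argument. The example immediately preceding this theorem in the paper (the two-mode system with $c_1 = (2,y)$, $c_2 = (-1,3)$) is constructed precisely to show that observability of $(c_1, A)$ does \emph{not} imply controllability of $(A, c_1^{\dag})$ when both are restricted to the single coupling $c_1$ — that is, it is a counterexample to the claim your contrapositive is trying to establish. Your proposed use of the Hurwitz hypothesis to exclude mirror spectral pairs on the $A$-invariant subspace would at best handle the case $c_2 x = 0$ (where it is an acceptable alternative finish to the paper's appeal to observability, since $-\mu$ and $\overline{\mu}$ cannot both lie in the open left half-plane), but it cannot rescue the sub-case $c_2 x \neq 0$, because in that sub-case the conclusion genuinely fails.
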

\begin{proof}
Suppose that the system is observable on the accessible channel. 
%
By Theorem \ref{obs5} this is equivalent to the statement: if $y$ is a (right) eigenvector of $A$ with eigenvalue $\lambda$ then $c_1y\neq0$. We need to show that $\left(A,  -\left(c_1^{\dag}, c_2^{\dag}\right)\right)$ is controllable, or equivalently $\left(A^{\dag},  (c_1, c_2)\right)$ is observable. Suppose for a contradiction that $\left(A^{\dag},  (c_1, c_2)\right)$ is not observable. Therefore there exists a (right)-eigenvector, $z$, of $A^{\dag}$ with eigenvalue $\mu$ such that $c_1z=0$ and $c_2z=0$. Using these
conditions we have 
\begin{align*}
\mu  z&=
A^{\dag}z\\&=-\left(A+c_1^{\dag}c_1+c_2^{\dag}c_2\right)\\&=-Az
\end{align*}
However, controllability would imply that $c_1z\neq0$, which is a contradiction. The proof of the statement on the noisy channel can be obtained in a similar way. 
\end{proof}

Note that the reverse direction in this theorem is not necessarily true and is a fact we look to exploit in the following. 

\begin{defn}
We say that a subspace of a PQLS is noise unobservable if it is unobservable  from the noise channel. That is, there exists a series of (left)-eigenvectors of $A$, denoted by $y_1, ...y_k$, such that $c_2y_i=0$ for all $i$. The \textit{noise unobservable subspace} (NUS) is given by Span$\{y_1, ...y_n\}$.
\end{defn}

The motivation behind the definition of a NUS is that there will be modes that  are not visible from the noise output channel.   However, these modes may still be controllable from the noise input.
 Note that the there are similarities between our definition of an NUS and that of a \textit{protected subspace} in \cite{Maj1}, in the sense that information flows from this subspace to the accessible channel(s) without loss to the noise channel(s). However, our emphasis here is on identification, rather than  information flow,  and their  setup is  also different to ours.

 We have the following result, which says that the NUS  can be identified as in the noiseless case (see Theorem \ref{symplecticequivalence}).  

\begin{thm}\label{protected}
Suppose that we have a PQLS   $\left(A, -\left(c_1^{\dag}, c_2^{\dag}\right),\left(\begin{smallmatrix} c_1\\c_2\end{smallmatrix}\right), 1\right)$, where:
\begin{itemize}
\item 
The accessible system, $\left(A, -\left(c_1^{\dag}, c_2^{\dag}\right),c_1, 1\right)$, is observable (hence minimal) 
\item The PQLS has a    NUS. 
\end{itemize}
Then on the NUS  the transfer function equivalent systems (TFEs) are related by a unitary transformation,  as in  Theorem \ref{symplecticequivalence}.
\end{thm}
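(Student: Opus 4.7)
The plan is to reduce the problem to Corollary \ref{colo} by extracting an intrinsic minimal PQLS living on the NUS $V$. I would first choose an orthonormal basis of $\mathbb{C}^n$ whose initial $k$ vectors span $V$. Since $V$ is $A$-invariant (being the span of eigenvectors of $A$), the matrix $A$ takes block upper-triangular form $A=\left(\begin{smallmatrix}A_{11}&A_{12}\\0&A_{22}\end{smallmatrix}\right)$, and the condition $c_2 y=0$ for all $y\in V$ forces $c_2=(0,c_2^{(2)})$, while $c_1=(c_1^{(1)},c_1^{(2)})$ carries no such constraint. Reading off the $(1,1)$ block of the physical realisability condition $A+A^\dag+c_1^\dag c_1+c_2^\dag c_2=0$ yields
\[
A_{11}+A_{11}^\dag+c_1^{(1)\dag}c_1^{(1)}=0,
\]
so $(A_{11},c_1^{(1)})$ is itself a bona fide PQLS on $V$, coupled only to the accessible channels; the $(1,2)$ block delivers the useful identity $A_{12}=-c_1^{(1)\dag}c_1^{(2)}$.

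Next, observability of the accessible classical system $(c_1,A)$ transfers to the NUS subsystem. Writing the rows of the observability matrix as $c_1 A^j=(c_1^{(1)}A_{11}^j,\,\ast)$ (by block triangularity), the leftmost $k$ columns of $\mathcal O$ coincide with the observability matrix of $(c_1^{(1)},A_{11})$; full row rank of $\mathcal O$ forces these $k$ columns to have rank $k$. Hence $(c_1^{(1)},A_{11})$ is observable, and therefore minimal (observability equals minimality for PQLSs, see Sec.~\ref{hastings}). By Corollary \ref{colo}, any other minimal PQLS on $V$ with the same transfer function
\[
\Xi_{NUS}(s)=1-c_1^{(1)}(sI-A_{11})^{-1}c_1^{(1)\dag}
\]
must be related to $(A_{11},c_1^{(1)})$ by a unitary transformation on $V$, which is precisely the conclusion of the theorem.

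The remaining technical point, and the main obstacle, is to justify that any other PQLS sharing the \emph{accessible} transfer function $\Xi_{acc}(s)$ actually produces the same intrinsic $\Xi_{NUS}(s)$ on its own NUS. A direct computation using $A_{12}=-c_1^{(1)\dag}c_1^{(2)}$ yields the clean factorisation
\[
\Xi_{11}(s)=\Xi_{NUS}(s)\bigl(1-H(s)\bigr),\qquad H(s)=c_1^{(2)}(sI-A_{22})^{-1}c_1^{(2)\dag},
\]
with a parallel expression tying $\Xi_{12}(s)$ to $\Xi_{NUS}(s)$ times a factor determined by the complement. The hard part will be to show that this factorisation is canonical: the key observation is that $\Xi_{NUS}(-i\omega)$ is unitary on the imaginary axis (being the transfer function of a PQLS), whereas $1-H(-i\omega)$ is strictly sub-unitary whenever $c_2^{(2)}\neq 0$, as the $(2,2)$ PR block retains the noise term $c_2^{(2)\dag}c_2^{(2)}$. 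This inner-versus-sub-inner dichotomy, together with a matching of the common rational poles and zeros of $\Xi_{acc}$, should isolate $\Xi_{NUS}$ uniquely from the observable data and so close the loop back to Corollary \ref{colo}.
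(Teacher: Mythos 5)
Your structural decomposition is correct as far as it goes and is a genuinely different route from the paper's: the block form $A=\left(\begin{smallmatrix}A_{11}&A_{12}\\0&A_{22}\end{smallmatrix}\right)$ with $c_2=(0,c_2^{(2)})$, the identity $A_{12}=-c_1^{(1)\dag}c_1^{(2)}$ from the PR condition, the observability of $(c_1^{(1)},A_{11})$, and the factorisations $\Xi_{11}=\Xi_{NUS}(1-H)$, $\Xi_{12}=-\Xi_{NUS}K$ all check out, and they give a nice picture of the NUS as an intrinsic PQLS fed by the complement only through the accessible channels. The paper never extracts this intrinsic subsystem. It works directly with the similarity transformation $U$ relating two minimal realisations of the full accessible transfer function $\Xi_{acc}$ (both blocks): it derives $c_1=c_1U^{\dag}U$, uses the PR conditions to write $A'=\left(U^{\dag}\right)^{-1}AU^{\dag}+\left(U^{\dag}\right)^{-1}c_2^{\dag}c_2'-c_2'^{\dag}c_2'$, restricts to the NUS where $c_2'y=0$ so that $A'y=\left(U^{\dag}\right)^{-1}AU^{\dag}y$, bootstraps to $c_1A^{k}U^{-1}y=c_1A^{k}U^{\dag}y$ for all $k$, and invokes observability to conclude $U^{\dag}y=U^{-1}y$ on the NUS.

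The gap in your version is precisely the step you flag as the hard part, and the mechanism you propose does not close it. Sub-unitarity of $1-H(-i\omega)$ on the imaginary axis does not make $1-H(s)$ outer: it can still have zeros in the open right half plane and hence carry a nontrivial inner factor of its own, in which case $\Xi_{NUS}\cdot(1-H)$ is not the canonical inner--outer factorisation of $\Xi_{11}$, the inner part of $\Xi_{11}$ can be strictly larger than $\Xi_{NUS}$, and the uniqueness theorem for inner--outer factorisations gives you nothing about $\Xi_{NUS}$ itself. You would also have to use $\Xi_{12}$ in earnest rather than in passing: two realisations agreeing only on $\Xi_{11}$ need not be similar (this is exactly the enlargement of the equivalence classes discussed in Sec.~\ref{noisek1}), so any uniqueness argument must exploit the whole row $\Xi_{acc}=(\Xi_{11},\Xi_{12})$. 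Finally, even granting that $\Xi_{NUS}$ were uniquely recoverable, Corollary~\ref{colo} only yields an abstract unitary equivalence between the two extracted NUS subsystems; the theorem asserts that the similarity transformation relating the two \emph{full} realisations acts as a unitary on the NUS, and that identification is not supplied. The paper's direct manipulation of $U$ settles all three points simultaneously, which is why it avoids the factorisation question altogether.
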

\begin{proof}
Suppose that $\left(A, -\left(c_1^{\dag}, c_2^{\dag}\right),c_1, 1\right)$ and $\left(A', -\left(c_1^{'\dag}, c_2^{'\dag}\right),c'_1, 1\right)$ are minimal and have the same transfer function $\Xi_{acc}(s)$. Therefore, they must be related via 
\begin{equation}\label{useful}
A'=UAU^{-1}\quad \left(c_1^{'\dag}, c_2^{'\dag}\right)=\left(Uc_1^{'\dag}, Uc_2^{'\dag}\right)\quad c'_1=c_1U^{-1}
\end{equation}
for some invertible matrix $U$ (see Theorem \ref{eq.similarity}).
Now, using a similar technique as in the proof of  Theorem \ref{symplecticequivalence} it is easy to show from these conditions that $c_1=c_1U^{\dag}U$. 
We also have 
\begin{align}
\nonumber A'&=-A'^{\dag}-c_1^{'\dag}c'_1-c_2^{'\dag}c'_2\\
\nonumber&=-\left(U^{\dag}\right)^{-1}A^{\dag}U^{\dag}-Uc_1^{\dag}c_1U^{\dag}-c_2^{'\dag}c'_2\\
\nonumber&=\left(U^{\dag}\right)^{-1}AU^{\dag}+\left(U^{\dag}\right)^{-1}c_1^{\dag}c_1U^{\dag}+\left(U^{\dag}\right)^{-1}c_2^{\dag}c_2U^{\dag}-Uc_1^{\dag}c_1U^{\dag}-c_2^{'\dag}c'_2\\
\label{ponce}&=\left(U^{\dag}\right)^{-1}AU^{\dag}+\left(U^{\dag}\right)^{-1}c_2^{\dag}c'_2-c_2^{'\dag}c'_2,
\end{align}
where we have used the conditions \eqref{useful} as well as the physical realisation conditions (Eq.  \eqref{PR}). Under the assumption that  $\left(A', -\left(c_1^{'\dag}, c_2^{'\dag}\right),c'_1, 1\right)$ has a NUS, then \eqref{ponce} implies that  
\begin{equation}\label{ponce1}
A'y=\left(U^{\dag}\right)^{-1}AU^{\dag}y
\end{equation} for all vectors $y$ in the NUS. Combining \eqref{ponce1}  with $c_1=c_1U^{\dag}U$, it follows that
\begin{equation}
c_1AU^{-1}y=c_1AU^{\dag}y.
\end{equation}
It is not too difficult to show that this condition may be extended to 
\begin{equation}
c_1A^kU^{-1}y=c_1A^kU^{\dag}y
\end{equation}
for all $k=0,1,2,...$.
Therefore, by observability we have that $U^{\dag}y=U^{-1}y$. Hence on the NUS the equivalence class of systems \eqref{useful} are related by a unitary transformation, as in Theorem \ref{symplecticequivalence}.
\end{proof}

We have seen in the previous subsection how noise decreases identifiability. In quantum mechanics, the inclusion of noise in models also has  a detrimental effect on many other problems, for example metrology \cite{Demo1} or control \cite{Dong1}. In some cases this can be catastrophic, for example the use of N00N states in quantum metrology has the effect of destroying the enhanced level of Heisenberg scaling. Therefore, the fact that part of the system can be identified as if there is no noise present is very interesting. It can also be shown that an identical result holds for general QLSs.

A consequence of this theorem, which was apparent from the proof but not written in the theorem, is that if one system, $\left(A, -\left(c_1^{\dag}, c_2^{\dag}\right),c_1, 1\right)$, has a NUS and $\left(A', -\left(c_1^{'\dag}, c_2^{'\dag}\right),c'_1, 1\right)$ has the same transfer function then this second system must also have a NUS.


If a  NUS is also uncontrollable with respect to  the noise channel, then it becomes  a \textit{decoherence-free subsystem} (DFS) \cite{Maj1, Yamamoto5}.
A DFS is a subsystem of the QLS that is completely isolated from the noise; that is whose variables are not affected by the input and do not appear in the output. 
 To see this observe that the transfer function of the combined accessible and noise outputs is in this case given by:
\begin{align}\label{bg}
\Xi(s)&=1-\left(\begin{smallmatrix}c_1\\ c_2\end{smallmatrix}\right)(s-A)^{-1}\left(c_1^{\dag}, c_2^{\dag}\right)\\&= 1-\left(\begin{smallmatrix}c_1\\ c_2\end{smallmatrix}\right)\sum_{i}\frac{   R_iL_i    }{(s-\lambda_i)}\left(c_1^{\dag}, c_2^{\dag}\right),\end{align}
where $L_i, R_i, \lambda_i$ are the left eigenvectors, right eigenvectors and eigenvalues of $A$. Therefore if there is an unobservable-uncontrollable noise mode then $c_2R_i=0$ and $L_ic_2^{\dag}=0$ for some $i$ and so only the $(1,1)$ block of the transfer function has a contribution on that mode. This subsystem is clearly completely identifiable in this case (in the sense of  Corollary \ref{colo}) from the accessible output.
 DFSs are trivially NUSs, but interestingly  Theorem \ref{protected} says that all NUSs (beyond the trivial case of DFSs) are completely identifiable. We revisit DFSs later in Ch. \ref{FEDERER}.


\section{Summary and Outlook}

In summary we have addressed the system identification problem and  characterised all QLSs with the same  transfer function. Such equivalent systems are related by a symplectic transformation on the space of modes. Therefore we have extended the result of \cite{Guta2} beyond the class of passive systems. 
 We then outlined two methods to construct a (minimal and physical) realization of the system from the transfer function. 
 
We also  considered these same problems in the context of  noisy QLSs, which are modelled with the use of additional inaccessible channels. We  have investigated the notion of \textit{noise unobservable subspaces}, where part of the system is shielded from the noise. Interestingly, we found that such a subsystem can be identified as in the noiseless case. 
 An interesting topic of research is to understand more about noise unobservable subsystems and how far reaching their applications could be.
 
Finally, given that we now understand what is identifiable, the next  step is   
 to understand how well parameters can be estimated. This will be the subject of Chs. \ref{QEEP} and \ref{FEDERER}.

\chapter{Power Spectrum Identification}\label{powers34}

In the previous chapter we addressed system identification problem from a \emph{time-dependent} input perspective. 
We are now going to change viewpoint and consider a setting where the input fields are \emph{time-stationary} pure\footnote{The purity of the input state should be understood as that defined in Sec. \ref{stato}} Gaussian states as in Sec. \ref{stato}. The output is uniquely defined by its power spectrum \eqref{powers}.
The power spectrum identifiability is a natural and relevant setting in the quantum context, as it is in the classical one, where it was treated in the references \cite{Anders1, Youla1}. This setup is relevant when it may not be possible for the experimenter to use time-dependent inputs, e.g. when identification is performed in conjunction with control.

Our aim is to answer the same identifiability questions as those outlined in the time-dependent setting for this setting. That is:  
\begin{enumerate}
\item[(1)] Which parameters can be identified by measuring the output for a given input covariance matrix  $V(N, M)=:V$ (see Eq.  \eqref{Ito})? 
\item[(2)] How can we construct a system realisation from sufficient input-output data? 
\end{enumerate}
Notice that unlike the time-dependent case there is an explicit dependence on the input.  
The characterisation of the equivalence classes in the first question  boils down to finding which systems have the same \textit{power spectrum}, a problem which is well understood in the classical setting \cite{Anders1} but has not been addressed in the quantum domain. If two QLSs have the same power spectrum, then we call them \textit{power spectrum  equivalent} (PSE).
Moreover, since the power spectrum depends on the system parameters via the transfer function, it is clear that one can identify `at most as much as' in the time-dependent setting discussed in Sec. \ref{T.F}. In other words  
the corresponding equivalence classes are at least as large as those described by symplectic transformations \eqref{eq.equivalene.classes}. 
However, recall that in the analogous classical problem it was generally not possible to reconstruct the transfer function from the power spectrum uniquely (see Sec. \ref{ICE12}) even under global minimality \cite{Youla1, Anders1, Davies1}, which requires one to reconstruct the transfer function with the smallest dimension.

Consider the system's stationary  state and note that it can be uniquely written as a tensor product between a pure and a mixed Gaussian state (see Sec. \ref{QHO}). We see in Sec. \ref{G.Ms} that by restricting the system to the mixed component leaves the power spectrum unchanged. Furthermore, the pure component is passive, which ties in with previous results of \cite{Yamamoto1}. Conversely, if the stationary state is fully mixed, there exists no smaller dimensional system with the same power spectrum. Such systems will be called globally minimal, and can be seen as the analogue of minimal systems for the stationary setting.

The main result of this chapter is to show that under global minimality the power spectrum determines the transfer function, and therefore the equivalence classes are the same as those in the transfer function (i.e Theorem \ref{symplecticequivalence}). We  give three proofs for this. The first is given in Sec. \ref{jute}  for a generic class of SISO QLSs and is obtained by using a brute force argument to identify the poles and zeros of the transfer function from those of the power spectrum. The second proof, given in Sec. \ref{dogs}, holds for general QLSs and the key there is in reducing the power spectrum identifiability problem to an equivalent transfer function identifiability problem. Both of these methods use tools from classical systems theory. Our final method in Sec. \ref{JUKKA1} is a purely quantum one. In particular,  we use the observation that unidentifiable directions will have  zero QFI rate   (infinitesimal) directions in the parameter space with the same power spectrum.  We also  give an identification method in Sec. \ref{256} to reconstruct a system realisation of the power spectrum and discuss an example in Sec. \ref{god1}. 

In Sec. \ref{pass5} we restrict our attention to   PQLSs (with non-vacuum inputs). In particular, the identifiability problems turn out to have a much simpler solution for SISO PQLSs (Sec. \ref{SISO1}). We investigate global minimality  in more detail  and  understand which systems are globally minimal for both SISO and MIMO PQLSs. 

We also see in Sec. \ref{EI} we show that by using additional ancillary channels with an appropriately chosen entangled input ensures that one can identify the transfer function from the power spectrum for all \textbf{minimal} systems. The key point is that we are changing the input, which matters for the power spectrum, in order  to create a globally minimal system-input pair that is identifiable.

All of these identifiability problems have been discussed for \textbf{pure} inputs only. Finally, in Sec. \ref{thermy} we extend the identifiability results to thermal inputs, which are an  interesting class of mixed inputs.

\section{Global Minimality}\label{G.Ms}

We now formally introduce the definition of global minimality, which is analogous to the classical definition (see Definition \ref{CLSGM}).

\begin{defn}
A system $(S, C, \Omega)$ is  globally minimal for (pure) input covariance, $V$, if there exists no lower dimensional system with the same power spectrum, $\Psi_{V}$.
\end{defn}

To see why this definition important, consider for example a passive system with vacuum input.  In this case the  power spectrum will be vacuum, which is  the same as that of a zero-dimensional system. 

In fact, we can assume without loss of generality that the input is vacuum, i.e., $V=\left(\begin{smallmatrix}1&0\\0&0\end{smallmatrix}\right)=:V_{\mathrm{vac}}$. Essentially, as the 
 input is known (i.e the choice of the experimenter) and pure,  we can change the basis of the field so that the input is  vacuum and the system and output covariance (power spectrum) are modified as 
$\left(\tilde{S}, \tilde{C}, \tilde{\Omega}\right):=\left(S_{\mathrm{in}}^{\flat}SS_{\mathrm{in}}, S_{\mathrm{in}}^{\flat}C, \Omega\right)$ and
 $\tilde{\Psi}(s)=S_{\mathrm{in}}^{\flat}\Psi(s)\left(S_{\mathrm{in}}^{\dag}\right)^{\flat}$, respectively (recall the definition of $\flat$-adjoint from the Nomenclature). Here 
 $S_{\mathrm{in}}$ is given by Eq. \eqref{vtrick} in Sec. \eqref{QHO}.

The stationary state of the system is characterised by it's covariance matrix, $P$, which from   \eqref{langevin} is the unique  solution of the \textit{Lyapunov equation} \cite{Levitt1}
 \begin{equation}\label{eq.Lyapunov}
 {A}P+P{A}^{\dag}+{C}^{\flat}V_{\mathrm{vac}}({C^\flat})^{\dag}=0.
 \end{equation}
 
The following theorem links global minimality with the purity of the stationary state of the system.

 \begin{thm}\label{equivalence} 
 Let $\mathcal{G}:= \left(S, C, \Omega\right)$ be a minimal QLS with pure input $V_{\mathrm{vac}}$.
 
1. The system  is globally minimal  if and only if the (Gaussian) stationary state with covariance $P$ satisfying the Lyapunov equation \eqref{eq.Lyapunov}  is fully mixed.

2. A non-globally minimal system is transfer function equivalent (TFE) (see Sec. \ref{T.F}) to a QLS which is a series product of two  systems;  the first system has a  pure stationary state, whereas the second has a fully mixed stationary state (see Fig. \ref{spane}). We call these systems the \textit{pure component} and \textit{mixed component}, respectively. 

3. The reduction to the mixed component is globally minimal and has the same power spectrum as the original system. 
\end{thm}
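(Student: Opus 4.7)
Plan: I would prove Part 2 first as the main constructive statement, derive Part 3 as a direct consequence, and then deduce Part 1 from both.

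For Part 2, I start from the symplectic decomposition of the stationary covariance $P$ guaranteed by Williamson's theorem (Sec.~\ref{QHO}): there exists a symplectic $T$ such that in the transformed basis $\breve{\mathbf a}\mapsto T\breve{\mathbf a}$, the stationary state factorises as vacuum on $n_p$ pure modes and a strictly mixed Gaussian state with covariance $P_m$ (positive definite symplectic eigenvalues) on the remaining $n_m$ modes. By Theorem~\ref{symplecticequivalence} this basis change preserves the transfer function (hence the power spectrum), and transforms the dynamical matrices to $A'=TAT^{\flat}$, $C'=CT^{\flat}$. Writing these in block form according to the pure/mixed split,
\[
A' = \begin{pmatrix}A_{pp}&A_{pm}\\A_{mp}&A_{mm}\end{pmatrix},\qquad C' = \bigl(C_p\;\; C_m\bigr),
\]
I substitute the block-diagonal $P'$ into the Lyapunov equation~\eqref{eq.Lyapunov}. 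The $(p,p)$-block reduces to a relation forcing $C_p$ to be purely passive (since vacuum stationary state with vacuum input can only be sustained by annihilation-type coupling); the $(p,m)$-block, combined with invertibility of $P_m$ on the mixed sector and the physical realisability condition~\eqref{PR}, then yields $A_{pm}=0$ and pins down $A_{mp}$ in exactly the form appearing in the series product rule~\eqref{concat1}. The resulting matrices are precisely those of a series product of a passive pure component $(1,C_p,\Omega_p)$ feeding into a mixed component $(1,C_m,\Omega_m)$.

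Part 3 follows by observing that the pure component's transfer function $\Xi_p(-i\omega)$ is unitary on the non-doubled-up field space (Sec.~\ref{honk}), hence preserves the vacuum covariance: a direct computation using the block-diagonal structure of $V_{\mathrm{vac}}$ and the form~\eqref{eq.transfer.function} of $\Xi_p$ gives $\Xi_p(-i\omega)V_{\mathrm{vac}}\Xi_p(-i\omega)^{\dag}=V_{\mathrm{vac}}$. Consequently the stationary output of the pure component has vacuum covariance, and when fed into the mixed component reproduces the same second-order statistics as if the mixed component were driven directly by vacuum. Applying the series product rule of Sec.~\ref{greed}, the overall power spectrum coincides with $\Xi_m(s)V_{\mathrm{vac}}\Xi_m(-s^*)^{\dag}$, i.e.\ the power spectrum of the mixed component alone.

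For Part 1, the forward direction (globally minimal $\Rightarrow$ fully mixed) is the contrapositive of Parts 2 and 3: any system whose stationary state is not fully mixed admits a strictly smaller PSE reduction via the mixed component, so cannot be globally minimal. The converse (fully mixed $\Rightarrow$ globally minimal) is the main obstacle and I expect it to require the later uniqueness results. My plan is to argue by contradiction: a hypothetical strictly smaller PSE realisation $\mathcal G'$ could itself be reduced by Parts 2-3 to a globally minimal $\mathcal G''$ with fully mixed stationary state and the same power spectrum as $\mathcal G$, yet of dimension smaller than $\dim\mathcal G$. Appealing to the identifiability theorems of Secs.~\ref{dogs} and~\ref{JUKKA1} (which establish that globally minimal PSE systems are related by a symplectic transformation on the system space, and therefore have the same dimension) would then contradict the supposed reduction. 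A fully self-contained argument would instead exploit observability and controllability of the reduced realisation together with the Lyapunov equation directly, but I expect this route to be more technical and to rediscover the same uniqueness statement.
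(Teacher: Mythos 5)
Your treatment of Part 2, of the power-spectrum claim in Part 3, and of the forward implication of Part 1 follows essentially the same route as the paper: Williamson decomposition of the stationary covariance, block analysis of the Lyapunov equation to force $C_+^{p}=0$, $\Omega_+^{pp}=0$ and $A^{pm}=0$, identification of the result as a series product $\mathcal{G}^m\triangleleft\mathcal{G}^p$, and the observation that the passive pure component leaves the vacuum covariance invariant so that $\Psi_V(s)=\Xi_m(s)V\Xi_m(-\overline{s})^{\dag}$. That part is sound.

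The genuine gap is the converse of Part 1 (fully mixed $\Rightarrow$ globally minimal), which is also exactly the content of the unproved global-minimality claim in Part 3. Your proposed route --- invoking the identifiability theorems of Secs.~\ref{dogs} and~\ref{JUKKA1} to conclude that two globally minimal PSE systems have the same dimension --- is circular: Theorem~\ref{main} rests on Theorem~\ref{games}, which rests on Lemma~\ref{LEM1}, which is proved \emph{from} Theorem~\ref{equivalence}. Even setting circularity aside, the uniqueness theorem applies only to a pair of systems both already known to be globally minimal, so it cannot be applied to $\mathcal{G}$, whose global minimality is precisely what you are trying to establish; your contradiction therefore does not close. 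The paper's argument is different in kind and does not pass through any identifiability result: it shows that the symplectic rank of a sufficiently long block of output is \emph{encoded} in the output state itself. Concretely, split the output at a large time $2T$ into blocks $A=[0,T]$ and $B=[T,2T]$; by ergodicity the system $S$ is near its stationary state of symplectic rank $d_m$ at time $T$, the Gaussian Schmidt decomposition (Theorem~\ref{WOLFF}) applied to the pure bipartite state $S\,|\,A$ forces block $A$ to have symplectic rank $d_m$, the $S+A$ state becomes approximately a product of rank $2d_m$ as correlations decay over $[T,2T]$, and a second application of the Schmidt decomposition to $(S+A)\,|\,B$ gives block $B$ symplectic rank $2d_m$. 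Since $B$ approaches the stationary output determined by the power spectrum alone, any realisation of that power spectrum must have at least $d_m$ modes. Without this (or some equivalent lower bound on the dimension extracted directly from the output statistics), your proof of Part 1's converse, and hence of the global minimality assertion in Part 3, is incomplete.
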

  \begin{figure}
\centering
\includegraphics[scale=0.22]{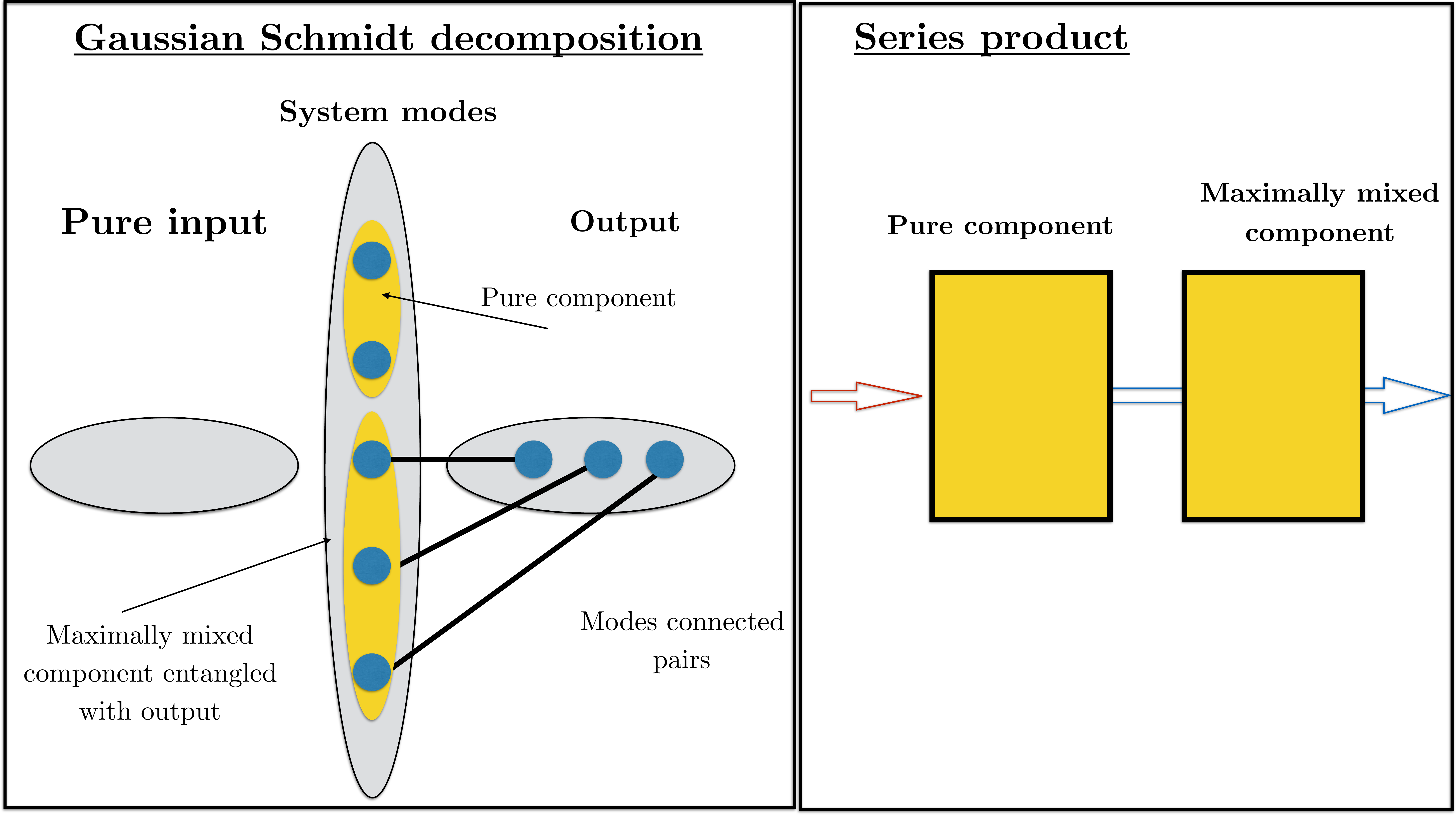}
\caption{  Pure and maximally mixed components connected in series.  \label{spane}}
\end{figure}
\begin{proof}
Let us prove the result first in the case $S={1}$.

Firstly, perform a change of system coordinates\footnote{Note that such a symplectic transformation on the system  is of form prescribed by Theorem \ref{symplecticequivalence}, but the interpretation here is that we are dealing with the same system seen in a different basis, rather than a different system with the same transfer function.}
as described in Sec. \ref{QHO}, so that the input is in the vacuum state, while the system modes decompose into its `pure' and `mixed' parts ${\bf a}^{\prime T} =({\bf a}_p^T , {\bf a}_m^T)$. Note that this transformation will alter the  coupling and Hamiltonian matrices accordingly, but  we still denote them $\Omega$ and $C$ to simplify notations.

Therefore,  in this basis the stationary state of the system is given by the covariance
$$
P=\left(\begin{smallmatrix} R+{1} & 0\\0 & R\end{smallmatrix}\right), \qquad R=\left(\begin{smallmatrix} 0&0\\0& R_{m}\end{smallmatrix}\right)
$$
and satisfies the Lyapunov equation \eqref{eq.Lyapunov}.

($\implies$)  We show that if the system has a pure component, then it is globally reducible. Let us write  $A_{\pm}$ and $C_{\pm}$ as block matrices according to the pure-mixed splitting 

\[
A_{\pm}=\left(\begin{matrix} A_{\pm}^{pp}&A_{\pm}^{pm}\\A_{\pm}^{mp}&A_{\pm}^{mm}\end{matrix}\right), \qquad
C_{\pm}=\left(C_{\pm}^p, C_{\pm}^m\right),
 \]
so that the Lyapunov equation \eqref{eq.Lyapunov} can be seen as a system of 16 block matrix equations.  
Taking the (1,1) and (1,3) blocks, which correspond to the $\left<\mathbf{a}_p\mathbf{a}_p^{\dag}\right>$ and $\left<\mathbf{a}_p\mathbf{a}_p\right>$ components of the stationary state, one obtains
\begin{align}
&A_{-}^{pp}+A_{-}^{pp \dag}+C_{-}^{p\dag}C_{-}^p=0\label{block1}\\
&A_{+}^{pp T}-C_{-}^{p \dag}C_+^p=0\label{block3}.
\end{align}
Since $A_{-}^{pp}=-i\Omega_{-}^{pp}  - 1/2(C_{-}^{p \dag} C_{-}^p - C_{+}^{p T} C_{+}^{p\#})$, Eq. (\ref{block1}) implies that $C_{+}^{pT} C_{+}^{p \#}=0$, hence $C_{+}^{p}=0$. Therefore, using this fact in Eq. (\ref{block3}) gives $A_+^{pp}=0$, hence $\Omega_{+}^{pp}=0$. 
These two tell us that the pure part contains only passive terms.

Consider now the $(1, 2)$ and $(2, 3)$ blocks, which correspond to the $\left<\mathbf{a}_p\mathbf{a}_m^{\dag}\right>$ and $\left<\mathbf{a}_m\mathbf{a}_p\right>$ components of the stationary state. From this,  we get 
\begin{align}
&A_{-}^{pm}(R_m+{1})+
A_{-}^{pm \dag}+C_{-}^{p \dag}C_{-}^m=0\label{block4}\\
&(R_m+1)A_+^{pm T}
=0\label{block5}.
\end{align}
Since $A_{-}^{pm}+A_{-}^{pm \dag}+C_{-}^{p \dag}C_{-}^m=0$, and $R_m$ is invertible, 
equation (\ref{block4}) implies $A_{-}^{pm}=0$. Similarly, Eq. (\ref{block5}) implies that $A_{+}^{pm}=0$.

Let $\mathcal{G}^p:= ({1}, \Omega^{pp}, C^p)$ be the system consisting of the pure modes, with 
$\Omega^{pp}= \Delta(\Omega^{pp}_{-}, 0)$ and 
$C^p = \Delta (C_-^p, 0)$. Let $\mathcal{G}^m:= ({1}, \Omega^{mm}, C^m) $ be the system consisting of the mixed modes with 
$\Omega^{mm}= \Delta(\Omega^{mm}_{-}, \Omega^{mm}_{+})$ and $C^m = \Delta (C_-^m, C_+^m)$. We can now show that the original system is the series product (concatenation) of the pure and mixed restrictions
$$
\mathcal{G} = \mathcal{G}^m \triangleleft \mathcal{G}^p.
$$
Indeed, using the fact that $C_{+}^p= \Omega_{+}^{pp}= A_{-}^{pm} = A_{+}^{pm}=0$, one can check that the series product has required matrices  \cite{Gough3}
$$
C_{series} = \tilde{C}^p + \tilde{C}^m = C
$$
and 
$$
\Omega_{series} = \tilde{\Omega}^{pp}+ \tilde{\Omega}^{mm} + \mathrm{Im}_\flat (\tilde{C}_m^\flat \tilde{C}_p)
$$
where the  `tilde' notation stands for block matrices where only one block is non-zero, e.g. $\tilde{C}^p= (C^p,  0)$, and 
$ \mathrm{Im}_\flat X:= (X- X^\flat)/2i $.

Now, let $\Xi^{p,m}(s)$ denote the transfer functions of  $\mathcal{G}^{p,m}$; since the transfer function of a series product is the product of the transfer functions, we have $\Xi(s) = \Xi^{m}(s)\cdot \Xi^{p}(s)$.  Furthermore, since  
$\mathcal{G}^{p}$ is passive and the input is vacuum, we have $\Psi^p_{ V}(s)   =  \Xi^p(s) {V} \Xi^p(-\overline{s})^\dagger = {V} $ so that 
$$
\Psi_{ V}(s) = \Xi(s) {V} \Xi(-\overline{s})^\dagger =  \Xi_m(s) {V} \Xi_m(-\overline{s})^\dagger 
$$
which means that the original system was globally reducible (not minimal).

($\impliedby$)
We now show that if the system's stationary state is fully mixed, then it is globally minimal. The key idea is that a sufficiently long block of output has a finite symplectic rank (number of modes in a mixed state in the canonical decomposition) equal to twice the dimension of the system. Therefore the dimension of a globally minimal system is ``encoded'' in the output. This is the linear dynamics analogue of the fact that stationary outputs of finite dimensional systems (or translation invariant finitely correlated states) have rank equal to the square of the system dimension (or bond dimension) \cite{Guta4}. To understand this property consider the system (S) together with the output at a long time 
$2T$, and split the output into two blocks: A corresponding to an initial time interval $[0,T]$ and  $B$ corresponding to 
$[T,2T]$. If the system starts in a pure Gaussian state, then the $S+A+B$ state is also pure. By ergodicity, at time $T$ the system's state is close to the stationary state with symplectic rank $d_m$. At this point the system and output block A are in a pure state so by appealing to the `Gaussian Schmidt decomposition' \cite{Wolf1} we find that the state of the block $A$ has the same symplectic eigenvalues (and rank $d_m$) as that of the system (see Fig. \ref{spane}). In the interval $[T,2T]$ the output $A$ is only shifted without changing its state, but the correlations between A and S decay. Therefore the joint $S+A$ state is close to a product state and has symplectic rank $2 d_m$. On the other hand we can apply the Schmidt decomposition argument to the pure bipartite system consisting of $S+A$ and $B$ to find that the symplectic rank of $B$ is $2d_m$. By ergodicity, $B$ is close to the stationary state in the limit of large times, which proves the assertion.

To extend the result to $S\neq{1}$, 
instead perform the change of field co-ordinates $V\mapsto S_{\mathrm{in}}S^bV\left(S_{\mathrm{in}}S^b\right)^{\dag}$ in \eqref{vtrick}. The proof then follows as above because in this basis $S={1}$.
\end{proof}

This result has no classical analogue and is particularly interesting because it relates a classical concept, i.e, global minimality, with the quantum concept of purity.
This theorem enables one to check global minimality by computing the symplectic eigenvalues of the stationary state (see Sec. \ref{QHO}). If all eigenvalues are non-zero, then the state is fully mixed and the system is globally minimal. We emphasise that the argument relies on the fact that the input is a pure state. For mixed input states, the stationary state may be fully mixed while the system is non globally minimal (see Sec. \ref{thermy}   later).

The following Lemma will be of use later.

\begin{Lemma}\label{LEM1}
Suppose that we have a QLS $\left(S, C, \Omega\right)$ with input $V_{\mathrm{vac}}$, then the following are equivalent:
\begin{enumerate}
\item The system is  globally minimal
 \item $\left(A, C^{\flat}S{ V}_{\mathrm{vac}}\right)$ is controllable.
 \item  $\left({ V}_{\mathrm{vac}}S^{\flat}C, A^{\flat}\right)$ is observable.
 \end{enumerate}
 \end{Lemma}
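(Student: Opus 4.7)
The plan is to establish the cycle (1) $\Leftrightarrow$ (2) $\Leftrightarrow$ (3) by combining Theorem \ref{equivalence} with the classical Lyapunov and Hautus machinery from Ch. \ref{CL1}.

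For (1) $\Leftrightarrow$ (2), I would first appeal to Theorem \ref{equivalence} to replace global minimality by the equivalent requirement that the stationary covariance $P$ be \emph{fully mixed}. Putting $P$ in Williamson normal form $\mathrm{Diag}(n_1+1,\dots,n_n+1,n_1,\dots,n_n)$ up to a symplectic conjugation, this is the same as $P>0$. The Lyapunov equation governing $P$, adapted from \eqref{eq.Lyapunov} to allow the squeezing/scattering $S$, reads
\[
AP + PA^{\dag} + C^{\flat} S V_{\mathrm{vac}} S^{\dag} C^{\flat\,\dag} = 0.
\]
Factoring $V_{\mathrm{vac}} = E E^{\dag}$ with $E=\left(\begin{smallmatrix}1_m\\0\end{smallmatrix}\right)$ puts it in the form $AP+PA^{\dag}+(C^{\flat}SE)(C^{\flat}SE)^{\dag}=0$; since $A$ is Hurwitz (Sec. \ref{hastings}), the standard result identifies $P=\int_0^\infty e^{At}(C^{\flat}SE)(C^{\flat}SE)^{\dag}e^{A^{\dag}t}dt$ as the controllability Gramian, which is strictly positive iff $(A,C^{\flat}SE)$ is controllable. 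Finally, because $E^{\dag}E=1_m$ yields $\mathrm{image}(C^{\flat}SV_{\mathrm{vac}})=\mathrm{image}(C^{\flat}SE)$, controllability of $(A,C^{\flat}SE)$ coincides with controllability of $(A,C^{\flat}SV_{\mathrm{vac}})$.

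For (2) $\Leftrightarrow$ (3) I would invoke the Hautus eigenvector criteria of Theorems \ref{cl5} and \ref{obs5}. Condition (2) is equivalent to $x^{\dag} C^{\flat} S V_{\mathrm{vac}} \neq 0$ for every left-eigenvector $x^{\dag}$ of $A$, which on taking adjoints and using $V_{\mathrm{vac}}^{\dag}=V_{\mathrm{vac}}$ together with $C^{\flat\,\dag}=J_m C J_n$ becomes
\[
V_{\mathrm{vac}} S^{\dag} J_m C J_n x \neq 0.
\]
Condition (3) asks that $V_{\mathrm{vac}} S^{\flat} C y \neq 0$ for every right-eigenvector $y$ of $A^{\flat}$; a short calculation using $A^{\flat}=J_n A^{\dag} J_n$ shows that the right-eigenvectors of $A^{\flat}$ are precisely the vectors $J_n x$ for $x^{\dag}$ a left-eigenvector of $A$. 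Substituting and expanding $S^{\flat}=J_m S^{\dag} J_m$ turns (3) into $V_{\mathrm{vac}} J_m S^{\dag} J_m C J_n x \neq 0$, and the key identity $V_{\mathrm{vac}} J_m = V_{\mathrm{vac}}$ (immediate from the block form of $V_{\mathrm{vac}}$) collapses this to exactly the controllability condition above.

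The main obstacle I anticipate is bookkeeping around the $\flat$-adjoint combined with the non-invertibility of $V_{\mathrm{vac}}$: one must factor $V_{\mathrm{vac}}$ to apply the classical Lyapunov/Kalman theory, and verify that controllability of the rank-deficient pair $(A,C^{\flat}SV_{\mathrm{vac}})$ really coincides with that of the honest factorisation $(A,C^{\flat}SE)$. The innocuous identity $V_{\mathrm{vac}}J_m=V_{\mathrm{vac}}$ is what makes the duality (2) $\Leftrightarrow$ (3) close cleanly, and spotting it is the only real trick.
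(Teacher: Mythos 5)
Your proposal is correct and follows essentially the same route as the paper: Theorem \ref{equivalence} plus the Lyapunov/Gramian characterisation for (1)$\Leftrightarrow$(2), and the Hautus eigenvector test with the $J$-conjugation of eigenvectors (resting on $V_{\mathrm{vac}}J_m=V_{\mathrm{vac}}$) for (2)$\Leftrightarrow$(3). You merely make explicit two points the paper leaves implicit — the factorisation $V_{\mathrm{vac}}=EE^{\dag}$ needed to handle the rank-deficient input matrix, and the identity $V_{\mathrm{vac}}J_m=V_{\mathrm{vac}}$ — which is a welcome tightening rather than a different argument.
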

 
\begin{proof}
For the equivalence between (1) and (2): Using Theorem \ref{equivalence}, global minimality is equivalent to a fully mixed stationary state, which is in turn equivalent to  $P>0$ in \eqref{eq.Lyapunov}.  Furthermore, by Theorem 3.1 in \cite{Zhou1}  $P>0$ in Eq.  \eqref{eq.Lyapunov}  is equivalent to  $\left(A, C^{\flat}SV_{\mathrm{vac}}\right)$ being controllable. 

It remains to show equivalence between (2) and (3). Firstly, by the duality condition \eqref{cl4}  in Theorem \ref{cl5}  $\left(A, C^{\flat}S{ V}_{\mathrm{vac}}\right)$  controllable is equivalent to   $\left({V}_{\mathrm{vac}}S^{\dag}\left(C^{\flat}\right)^{\dag}, A^{\dag}\right)$ observable. It therefore remains to show equivalence between the observability of $\left({V}_{\mathrm{vac}}S^{\dag}\left(C^{\flat}\right)^{\dag}, A^{\dag}\right)$ and $\left({V}_{\mathrm{vac}}S^{\flat}C, A^{\flat}\right)$. 

Suppose that $\left({ V}_{\mathrm{vac}}S^{\dag}\left(C^{\flat}\right)^{\dag}, A^{\dag}\right)$ is observable. To show observability of $\left({V}_{\mathrm{vac}}S^{\flat}C, A^{\flat}\right)$ we need to show that for all eigenvectors and eigenvalues of $A^{\flat}$, i.e. $A^{\flat}y=\lambda y$, then ${ V}_{\mathrm{vac}}S^{\flat}Cy\neq0$ Theorem \ref{obs5}. 
To this end suppose that $A^{\flat}y=\lambda y$, then $A^{\dag}\left(Jy\right)=\lambda\left(Jy\right)$, which by the observability of $\left({ V}_{\mathrm{vac}}S^{\dag}\left(C^{\flat}\right)^{\dag}, A^{\dag}\right)$ implies that ${ V}_{\mathrm{vac}}S^{\dag}\left(C^{\flat}\right)^{\dag}\left(Jy\right)\neq0$. Therefore, ${ V}_{\mathrm{vac}}S^{\flat}Cy\neq0$ and we are done.
The reverse implication follows similarly.
\end{proof}
For simplicity we shall  now assume  (until Sec. \ref{sq+sc}) that there is no squeezing or scattering in the field, i.e. $S=1$.  We discuss  the case $S\neq1$ in detail in Sec. \ref{sq+sc}.

\section{Description of the Power Spectrum as  Cascaded CLSs}\label{altos}

In this subsection we show that the power spectrum of  our QLS can be treated as a transfer function of a cascade of two classical systems (with the combined system having twice as many modes). Furthermore, the resultant cascaded system will be minimal iff the original system is globally minimal. This result will be particularly important in  Sec. \ref{dogs} because the power spectrum identification problem reduces to a transfer function identification problem, which is which is much simpler to solve. 

Using Eq. \eqref{powers} for the power spectrum,  write $\Psi(s)J$ as a transfer function of the following two cascaded systems:
\begin{itemize}
\item The first system  is $\left(-A^{\flat}, -C^{\flat}, -V_{\mathrm{vac}}C, V_{\mathrm{vac}}\right)$
\item The second system is  $\left(A, -C^{\flat}V_{\mathrm{vac}}, C, V_{\mathrm{vac}}\right)$. 
\end{itemize} 
It should be understood that the first system is fed into the second (see Fig. \ref{casc}). Note that the first system is  unstable, whereas the second is stable. Using Eq. \eqref{concat1}, a representation for the resultant system with transfer function $\Psi(s)J$ is 
\begin{equation}\label{cask}
\left(\tilde{A}, \tilde{B}, \tilde{C}, \tilde{D}\right):=
\left(\left(\begin{smallmatrix} -A^{\flat}&0\\C^{\flat}V_{\mathrm{vac}}C&A\end{smallmatrix}\right), \left(\begin{smallmatrix} -C^{\flat}\\-C^{\flat} V_{\mathrm{vac}}\end{smallmatrix}\right), \left(\begin{smallmatrix} -V_{\mathrm{vac}}C& C\end{smallmatrix}\right), V_{\mathrm{vac}}       \right).
\end{equation}

 \begin{figure}[h]
\centering
\includegraphics[scale=0.4]{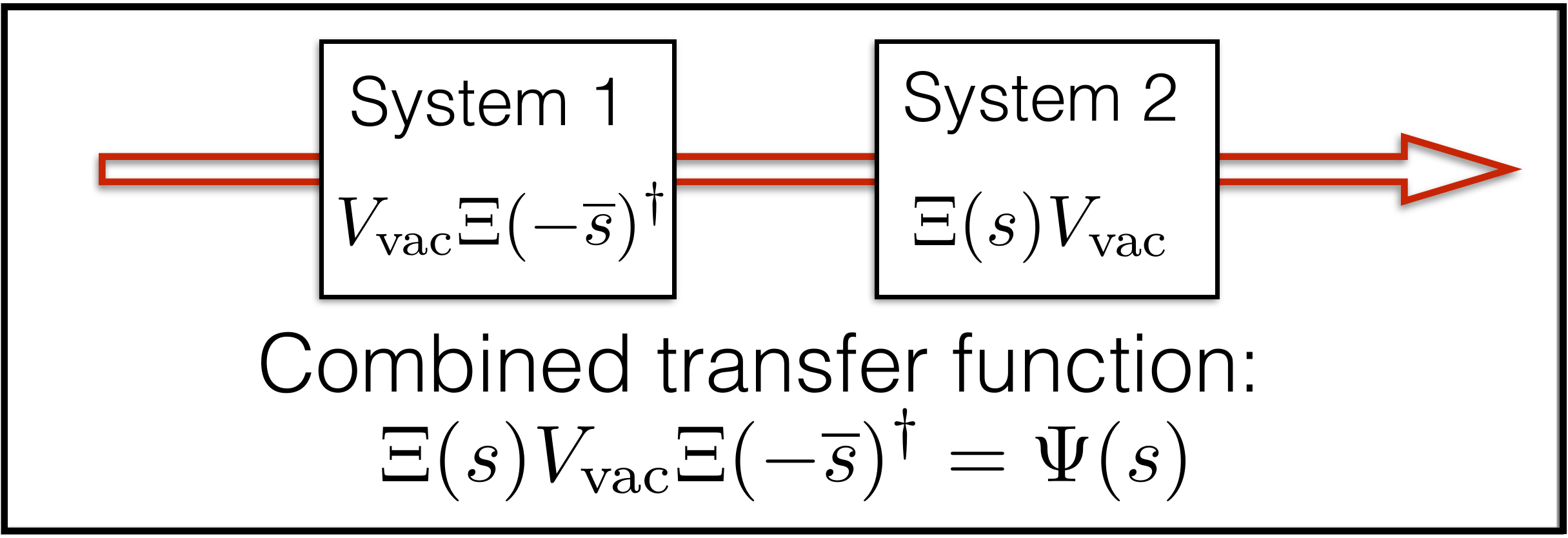}
\caption{The setup in Sec. \ref{altos} where the power spectrum is treated as two systems connected in series. \label{casc}}
\end{figure}

Now, $\tilde{A}$ has $4n$ eigenvalues. It is also 
 lower block triangular (LBT) with the following properties:
\begin{itemize}
\item[(1)] \label{pil1} It has $2n$ right-(generalised\footnote{A matrix is diagonalisable iff it has a full basis of eigenvectors. Generalised eigenvectors are a next best thing to eigenvectors enabling one to `almost diagonalise' a matrix. More specifically, a vector  $x$ is a generalised eigenvector of rank $m$ with corresponding eigenvalue $\lambda$ if 
$$\left(A-\lambda {1}\right)^mx=0$$
(but
$\left(A-\lambda {1}\right)^{m-1}x\neq0$).
For every matrix $A$ there exists an invertible matrix $M$, whose columns consist of the generalised eigenvectors, such that $J=M^{-1}AM$ where $J$ is a matrix called the \textit{Jordan normal matrix} and is given by 
$$J=\mathrm{Diag}(J_1, J_2,..., J_r) \quad \mathrm{where} \quad
J_i=\left(\begin{smallmatrix} \lambda_i &1&&\\
&\lambda_i &1&\\
&&\ddots&1\\
&&&\lambda_i\end{smallmatrix}\right).$$
 })-eigenvectors of the form $\left(\begin{smallmatrix}0\\y_2^{(i)}\end{smallmatrix}\right)$ with (possibly non-distinct) eigenvalues $\lambda^{(i)}$, which satisfy  $\mathrm{Re}(\lambda^{(i)})<0$. Note that $y_2^{(i)} $ and $\lambda^{(i)}$ are right-(generalised) eigenvectors and eigenvalues of $A$.
\item[(2)] \label{pil2} 
It has $2n$ left-(generalised)-eigenvectors of the form  $\left(\begin{smallmatrix}x_1^{(i)},&0\end{smallmatrix}\right)$ with  (possibly non-distinct) eigenvalues $\mu^{(i)}$, which satisfy $\mathrm{Re}(\mu^{(i)})>0$. Note that $x_1^{(i)}$ and $\mu^{(i)}$ are left-eigenvectors and eigenvalues of $-A^{\flat}$.
\end{itemize}

\begin{defn}
A matrix $A$ is called \textit{proper ordered lower block triangular (proper LBT)} if it is it LBT and satisfies (1) and (2).
\end{defn}


\begin{Lemma}\label{hud}
If two proper LBT matrices, $\tilde{A}$ and $\tilde{A}'$,  are related via $\tilde{A}'=T\tilde{A}T^{-1}$, where $T$ is invertible, then $T$ is LBT. 
\end{Lemma}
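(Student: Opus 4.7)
The plan is to exploit the spectral splitting that is built into the definition of a proper LBT matrix. By property (1), the $2n$ right (generalised) eigenvectors of $\tilde{A}$ corresponding to eigenvalues with $\mathrm{Re}(\lambda)<0$ all lie in the subspace
\[
\mathcal{V}_- := \{v \in \mathbb{C}^{4n} : v_1 = 0\},
\]
where $v_1$ denotes the top $2n$ components of $v$. Since $\dim \mathcal{V}_- = 2n$ and the stable generalised eigenspace has dimension $2n$ (there are $2n$ independent such generalised eigenvectors), these two subspaces coincide. The key observation is that $\mathcal{V}_-$ is intrinsically characterised as the stable generalised eigenspace of $\tilde{A}$, i.e.\ the sum of generalised eigenspaces associated with eigenvalues of strictly negative real part. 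By the same argument applied to $\tilde{A}'$, its stable generalised eigenspace is also $\mathcal{V}_-$.

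The next step is to observe that any similarity transformation preserves generalised eigenspaces according to eigenvalue. Indeed, if $(\tilde{A}-\lambda 1_{4n})^m v = 0$, then
\[
(\tilde{A}' - \lambda 1_{4n})^m (Tv) = T(\tilde{A} - \lambda 1_{4n})^m v = 0,
\]
so $T$ maps the stable generalised eigenspace of $\tilde{A}$ bijectively onto the stable generalised eigenspace of $\tilde{A}'$. In view of the identification in the previous paragraph, this means $T(\mathcal{V}_-) = \mathcal{V}_-$.

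Writing $T$ in block form as
\[
T = \begin{pmatrix} T_{11} & T_{12} \\ T_{21} & T_{22} \end{pmatrix},
\]
an arbitrary vector of $\mathcal{V}_-$ has the form $v = (0,y)^T$, and
\[
Tv = \begin{pmatrix} T_{12} y \\ T_{22} y \end{pmatrix}.
\]
The condition $Tv \in \mathcal{V}_-$ for all $y \in \mathbb{C}^{2n}$ forces $T_{12} = 0$, which is precisely the statement that $T$ is LBT.

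No step is really the bottleneck here; the only subtle point is the dimension count that guarantees the $2n$ stable (generalised) eigenvectors from property~(1) actually span all of $\mathcal{V}_-$. Property~(2) is not strictly needed for this direction, but it plays the symmetric role of ensuring that $\tilde A$ has no further stable eigenvalues hidden in the top block, so the spectrum genuinely splits as $2n+2n$ between anti-stable and stable halves. One could alternatively present a dual argument using the left-eigenvector property~(2) and the transformation $T^{-1}$ acting on the antistable cotangent subspace, which would give the same conclusion; I would only include the right-eigenvector version above for brevity.
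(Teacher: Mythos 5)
Your proof is correct and follows essentially the same route as the paper's: both identify the span of the last $2n$ canonical basis vectors with the stable generalised eigenspace via property (1) (with property (2) guaranteeing the $2n+2n$ spectral split), use that similarity maps generalised eigenvectors to generalised eigenvectors with the same eigenvalue, and conclude $T$ preserves that subspace, forcing the upper-right block to vanish. Your explicit dimension count and the block-matrix finish make the same argument slightly more transparent, but there is no substantive difference.
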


The proof is in Appendix \ref{goh}.
The final result of this subsection, which is another equivalent formulation of global minimality, 
 will be key to our main identifiability result in this chapter. 

\begin{thm}\label{games}
The quantum system $(C, \Omega)$ is globally minimal if and only if the system \eqref{cask} is minimal. 
\end{thm}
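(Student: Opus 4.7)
My plan is to view the cascade $(\tilde A, \tilde B, \tilde C)$ as the series product of two $2n$-dimensional classical subsystems with transfer functions $V_{\mathrm{vac}}\Xi(s)^{-1}$ and $\Xi(s)V_{\mathrm{vac}}$ respectively, combining to $\Xi(s) V_{\mathrm{vac}} \Xi(s)^{-1} = \Psi(s) J$. Classical minimality of the $4n$-dimensional cascade then decomposes into (i) internal minimality of each of the two subsystems and (ii) the absence of pole-zero cancellation between them. The block lower triangular form of $\tilde A$ together with Hurwitz stability (which forces $\mathrm{Spec}(A)\subset \mathbb{C}_-$ and $\mathrm{Spec}(-A^\flat)\subset \mathbb{C}_+$ to be disjoint) will allow a clean PBH eigenvector analysis on two separate eigen-families inherited from each diagonal block.

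\textbf{Matching to Lemma \ref{LEM1}.} Applying the PBH criteria of Theorems \ref{cl5}(3) and \ref{obs5}(3) to the first subsystem, its observability reduces directly to that of the pair $(V_{\mathrm{vac}}C, A^\flat)$, which by Lemma \ref{LEM1}(3) is exactly global minimality; its controllability, via the PR identity $A + A^\flat + C^\flat C = 0$ and the $\flat$-duality, reduces to observability of $(C,A)$, i.e.\ QLS minimality (our standing assumption). The two roles exchange for the second subsystem, landing on Lemma \ref{LEM1}(2) and QLS minimality respectively. Hence condition (i) is equivalent, under the standing QLS-minimality assumption, to global minimality of the QLS. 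For (ii), the same PR identity yields the structural description $\ker \Xi(\mu) = C \cdot \ker(A^\flat + \mu I)$ at each $\mu \in \mathrm{Spec}(-A^\flat)$ -- and a dual description of residues at $\lambda \in \mathrm{Spec}(A)$ -- and a direct residue book-keeping in $\Xi V_{\mathrm{vac}} \Xi^{-1}$ shows that every potential cancellation across the interface of the cascade is again governed by the same Lemma \ref{LEM1} conditions. Combining (i) and (ii) gives classical minimality of the cascade if and only if global minimality of the QLS.

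\textbf{Main obstacle.} The hardest step will be (ii): $\Xi$ is automatically transmission-zero-degenerate at every point of $\mathrm{Spec}(-A^\flat)$ (these being the transmission zeros of the $\flat$-unitary $\Xi$), so one must carefully track how the vacuum projector $V_{\mathrm{vac}}$ interacts with these rank drops. Concretely, the PBH eigenvector calculation for the second family yields $\tilde C y = -\Xi(\mu) V_{\mathrm{vac}} C y_1$, which is \emph{a priori} stronger than the clean Lemma \ref{LEM1} condition $V_{\mathrm{vac}} C y_1 \neq 0$; bridging the two requires the identity $\ker \Xi(\mu) = C \cdot \ker(A^\flat + \mu I)$ and a careful argument that the image of the vacuum projector meets this kernel transversally to the eigenspace. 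A secondary technicality is Jordan blocks within each diagonal block of $\tilde A$ (the cross-block case being ruled out by spectral disjointness), requiring one to lift PBH to generalised eigenvector chains -- the same issue that motivated the notion of proper LBT matrices in Lemma \ref{hud}.
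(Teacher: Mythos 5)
Your scaffolding is sound and is essentially the paper's own route: the paper also runs the PBH eigenvector test on the block lower-triangular $\tilde{A}$, uses spectral disjointness of $-A^{\flat}$ and $A$ to split the eigenvectors into the two families, and lands the ``easy'' cases exactly on Lemma \ref{LEM1} and on minimality of $(C,A)$, as you do. Your identity $\ker \Xi(\mu) = C\cdot\ker(A^{\flat}+\mu)$ is correct and both inclusions follow from the PR relation $A+A^{\flat}+C^{\flat}C=0$ (for $\subseteq$: $C^{\flat}Cy=(\mu-A)y$ kills the resolvent term; for $\supseteq$: set $w=(\mu-A)^{-1}C^{\flat}v$ and use the same relation). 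Your ``secondary technicality'' about Jordan chains is a non-issue: the Hautus test only ever interrogates genuine eigenvectors.

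The genuine gap is that you stop exactly at the one step that carries the quantum content, and the phrase ``the image of the vacuum projector meets this kernel transversally to the eigenspace'' does not supply it -- there is no transversality argument to be had here. Concretely: after your reductions, failure of observability at $(\lambda,y_1)$ with $y_1\neq 0$ means $V_{\mathrm{vac}}Cy_1 = Cy_2$ for some $y_2$ with $-A^{\flat}y_2=\lambda y_2$ (and $V_{\mathrm{vac}}Cy_1\neq 0$ by Lemma \ref{LEM1}). The contradiction comes from the \emph{doubled-up symmetry} of $A^{\flat}$ and $C$, not from any genericity or transversality: writing $y_2=\bigl(\begin{smallmatrix}u_1\\u_2\end{smallmatrix}\bigr)$, the vector $\bigl(\begin{smallmatrix}\overline{u}_2\\\overline{u}_1\end{smallmatrix}\bigr)$ is again an eigenvector of $-A^{\flat}$ (with eigenvalue $\overline{\lambda}$), so Lemma \ref{LEM1} applies to \emph{it} and forces $C_{-}\overline{u}_2+C_{+}\overline{u}_1\neq 0$, i.e.\ the lower block $C_{+}^{\#}u_1+C_{-}^{\#}u_2$ of $Cy_2$ is nonzero. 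But $V_{\mathrm{vac}}Cy_1$ has zero lower block by definition of $V_{\mathrm{vac}}$, so $Cy_2=V_{\mathrm{vac}}Cy_1$ is impossible. Without invoking global minimality a second time on the conjugate-swapped eigenvector, the condition $V_{\mathrm{vac}}Cy_1\neq 0$ alone cannot rule out $V_{\mathrm{vac}}Cy_1\in\ker\Xi(\lambda)$, and the proof does not close. (The same symmetry, or the paper's $J$-conjugation trick reducing controllability of the cascade to its observability, is needed to finish your controllability half as well.)
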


\begin{proof}
The reverse implication here is trivial. For the forward implication we need to prove controllability and observability. 

Firstly, the observability of $\left(\tilde{C}, \tilde{A}\right)$. Suppose that  
\begin{equation}\label{contro} 
\left(\begin{smallmatrix} -A^{\flat}&0\\C^{\flat}V_{\mathrm{vac}}C&A\end{smallmatrix}\right)\left(\begin{smallmatrix}y_1\\y_2\end{smallmatrix}\right)=\left(\begin{smallmatrix}\lambda y_1\\\lambda y_2\end{smallmatrix}\right),
\end{equation}
then in order to show observability we require that $\left(\begin{smallmatrix} -V_{\mathrm{vac}}C& C\end{smallmatrix}\right)\left(\begin{smallmatrix}y_1\\y_2\end{smallmatrix}\right)\neq0$. There are two cases; either $y_1=0$ or $y_1\neq0$. 
\begin{itemize}
\item If $y_1=0$ then \eqref{contro} reduces to   $Ay_2=\lambda y_2$ and so the observability of $A$  tells us that $Cy_2\neq0$. Hence 
$\left(\begin{smallmatrix} -V_{\mathrm{vac}}C& C\end{smallmatrix}\right)\left(\begin{smallmatrix}0\\y_2\end{smallmatrix}\right)\neq0$. 
\item For $y_1\neq0$, the proof is a little trickier.  Suppose to the contrary that the system is not observable. That is, there exists a vector $\left(\begin{smallmatrix}y_1\\y_2\end{smallmatrix}\right)$ satisfying \eqref{contro} such that 
\begin{equation}\label{ps}
V_{\mathrm{vac}}Cy_1=Cy_2
\end{equation} 
Firstly, from 
 \eqref{contro} it is clear that  $-A^{\flat}y_1=\lambda y_1$, hence  $V_{\mathrm{vac}}Cy_1\neq0$ by global minimality (Lemma \ref{LEM1}). 
 We also have $C^{\flat}V_{\mathrm{vac}}Cy_1+Ay_2=\lambda y_2$  from \eqref{contro}, hence $-A^{\flat}y_2=\lambda y_2$ using \eqref{ps}. 
 On the other hand, letting  $y_2=\left(\begin{smallmatrix}u_1\\u_2\end{smallmatrix}\right)$, where $u_1, u_2$ are $n$ dimensional complex vectors, then by the doubled-up properties of $A^{\flat}$ it follows that $\left(\begin{smallmatrix}\overline{u}_2\\\overline{u}_1\end{smallmatrix}\right)$ is  also an eigenvector of $-A^{\flat}$ (with eigenvalue $\overline{\lambda}$).
 Therefore, $V_{\mathrm{vac}}C\left(\begin{smallmatrix}\overline{u}_2\\\overline{u}_1\end{smallmatrix}\right)\neq0$ by global minimality (Lemma \ref{LEM1}). Finally, this condition implies that $\overline{C_-}u_2+\overline{C_+}u_1\neq0$, which is a contradiction to \eqref{ps}.
  Hence the system is observable.




\end{itemize}

Showing controllability of $\left(\tilde{A}, \tilde{B}\right)$ can be achieved by similar means. Alternatively, we can use the dual properties of observability and controllability to show this. To this end, in order to show that $\left(\tilde{A}, \tilde{B}\right)$ is controllable it is enough to show that $\left(\tilde{B}^{\dag}, \tilde{A}^{\dag}\right)$ is observable (see Theorem \ref{cl5}).
In light of this, suppose that $\tilde{A}^{\dag}\left(\begin{smallmatrix}z_1\\z_2\end{smallmatrix}\right)=\lambda\left(\begin{smallmatrix}z_1\\z_2\end{smallmatrix}\right)$, which, by using the definition of $\tilde{A}$, is equivalent to 
\begin{align*}
-JAJz_1+C^{\dag}V_{\mathrm{vac}}CJz_2&=\lambda z_1 \quad\mathrm{and}\quad
A^{\dag}z_2=\lambda z_2.
\end{align*}
These equations can be written in matrix form as
%
$$
\tilde{A}\left(\begin{smallmatrix}Jz_2\\-Jz_1\end{smallmatrix}\right)=-\lambda\left(\begin{smallmatrix}Jz_2\\-Jz_1\end{smallmatrix}\right).
$$
Now, because $\left(\tilde{C}, \tilde{A}\right)$ is observable, it follows that
$$-C\left(Jz_1\right)-V_{\mathrm{vac}}C\left(Jz_2\right)\neq0.$$
This condition is equivalent to $\tilde{B}^{\dag}\left(\begin{smallmatrix}z_1\\z_2\end{smallmatrix}\right)\neq0$. 
\end{proof}

\section{Power Spectrum Identification of SISO QLSs}\label{jute}

The following theorem shows that two generic\footnote{Under the conditions discussed in Sec. \ref{seriesp} allowing the transfer function to be realised as a cascade of one mode systems.} globally minimal SISO QLSs have the same power spectrum if and only if they have the same transfer function.  In particular are related by a symplectic transformation, as described in Theorem \ref{symplecticequivalence}.

\begin{thm}\label{mainresult}
Let $\left(C_1, \Omega_1\right)$ and $\left(C_2, \Omega_2\right)$ be two globally minimal SISO systems for fixed pure 
input with covariance ${V}_{\mathrm{vac}}$, which are assumed to be generic in the sense of \cite{Nurdin3}. Then
$$
\Psi_{1}(s)=\Psi_{2}(s) \,\, for~all~s \quad \Leftrightarrow \quad \Xi_1(s)=\Xi_2(s)\,\, for~all~s 
$$
\end{thm}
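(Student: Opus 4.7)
The reverse implication $(\Xi_1=\Xi_2)\Rightarrow (\Psi_1=\Psi_2)$ is immediate from $\Psi_V(s)=\Xi(s)V\Xi(-\overline{s})^\dagger$, so only the forward direction needs work. My plan is to use the explicit cascade parameterisation \eqref{form1}--\eqref{form2} (available because the systems are generic) to reduce the problem to a brute-force pole/zero matching in the four entries of $\Psi$. Without loss of generality we take the input to be vacuum, $V=V_{\mathrm{vac}}=\left(\begin{smallmatrix}1 & 0\\0 & 0\end{smallmatrix}\right)$, after which a direct multiplication yields
\[
\Psi(s)=\left(\begin{smallmatrix} \Xi_-(s)\,\overline{\Xi_-(-\overline{s})} & \Xi_-(s)\,\Xi_+(-s) \\ \overline{\Xi_+(\overline{s})}\,\overline{\Xi_-(-\overline{s})} & \overline{\Xi_+(\overline{s})}\,\Xi_+(-s) \end{smallmatrix}\right).
\]
Substituting \eqref{form1}--\eqref{form2} and using that each $y_k$ is real or purely imaginary (so $\{\pm y_k\}$ is stable under complex conjugation), every entry is rational in $s^2$ with the same denominator
\[
D(s)=\prod_{k=1}^n \bigl((x_k+y_k)^2-s^2\bigr)\bigl((x_k-y_k)^2-s^2\bigr).
\]

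The first step is to recover $D(s)$ from $\Psi$. Its LHP roots are $\{-x_k\pm y_k\}_k$, so reading off the LHP poles of $\Psi(s)$ gives the unordered set $\{(x_k,y_k)\}$. To ensure no pole is masked by a numerator cancellation I invoke global minimality: by Theorems \ref{equivalence} and \ref{games} the $4n$-dimensional classical cascade \eqref{cask} associated with $\Psi$ is minimal, which rules out pole--zero cancellations. The second step identifies the zeros. Viewed as polynomials in $s^2$, the numerators of $\Psi_{11}$, $\Psi_{12}$ and $\Psi_{22}$ have root multisets
\[
\{\lambda_k^2\}\cup\{\overline{\lambda_k^2}\},\qquad \{\lambda_k^2\}\cup\{\gamma_i^2\},\qquad \{\gamma_i^2\}\cup\{\overline{\gamma_i^2}\},
\]
respectively (with $\Psi_{21}$ supplying the complex conjugate of $\Psi_{12}$). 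Under the generic assumption, intersecting the root sets of $\Psi_{11}$ and $\Psi_{12}$ isolates $\{\lambda_k^2\}$, and intersecting those of $\Psi_{12}$ and $\Psi_{22}$ isolates $\{\gamma_i^2\}$. The scalar $\gamma$ is fixed by the leading coefficient of $\Psi_{12}$, with $|\gamma|$ read off from $\Psi_{22}$ as a consistency check. Reassembling via \eqref{form1}--\eqref{form2} reconstructs $\Xi_-$ and $\Xi_+$, hence $\Xi$, uniquely from $\Psi$.

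The main obstacle I anticipate is disentangling $\{\lambda_k^2\}$ from $\{\overline{\lambda_k^2}\}$ inside the single numerator of $\Psi_{11}$: algebraic coincidences such as $\lambda_k=\overline{\lambda_{k'}}$ or $\lambda_k=\gamma_i$ would break the transversality argument, and the genericity hypothesis of \cite{Nurdin3} may need to be slightly strengthened to exclude precisely these degenerate configurations. A secondary but mechanical check is that, after reconstruction, the resulting pair $(\Xi_-,\Xi_+)$ is $\flat$-unitary on the imaginary axis and corresponds to a physical QLS; this is automatic because $\Psi$ was assumed to come from one, but it also removes any residual ambiguity in pairing the quadratic factors $s^2-\lambda_k^2$ with the pole pairs $\{-x_k\pm y_k\}$ within the cascade decomposition.
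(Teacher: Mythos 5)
Your overall strategy is the same as the paper's: reduce to the generic cascade form \eqref{form1}--\eqref{form2} and identify the poles and zeros of $\Xi_\pm$ from the three nontrivial entries \eqref{blue}, \eqref{red}, \eqref{yellow} of the power spectrum. The pole step and the reverse implication are fine (though note that minimality of the matrix realisation \eqref{cask} controls the McMillan degree of $\Psi(s)J$ as a whole, not of each scalar entry separately, so an entry-wise non-cancellation argument is still needed; the paper supplies one by contradiction with global minimality).

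The genuine gap is in the zero-identification step. Your multiset-intersection argument only works when the four multisets $\{\lambda_k^2\}$, $\{\overline{\lambda_k}^2\}$, $\{\gamma_i^2\}$, $\{\overline{\gamma_i}^2\}$ are pairwise disjoint, and you propose to guarantee this by ``slightly strengthening'' the genericity hypothesis. That changes the theorem: genericity in the sense of \cite{Nurdin3} only guarantees existence of the cascade realisation and says nothing about coincidences among zeros. The paper's proof shows that these coincidences are not an inconvenience to be assumed away but are exactly where global minimality does its work: a configuration in which $\Xi_-$ carries a factor of the form \eqref{topple1} while $\Xi_+$ carries one of the form \eqref{tipple1} forces the system to split off a passive (power-spectrum-invisible) subsystem, contradicting global minimality; the remaining ambiguous configurations (real zeros, and a pair $\{v,-\overline{v}\}$ common to both \eqref{blue} and \eqref{red}) are resolved by a multiplicity-counting argument combined with the symplectic identity $|\Xi_-(-i\omega)|^2-|\Xi_+(-i\omega)|^2=1$. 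None of that is present in your proposal. A second, related omission: after pole--zero cancellations in \eqref{form1}--\eqref{form2} the reduced $\Xi_-$ need not have its real zeros in $\pm\lambda$ pairs, so its numerator is no longer a polynomial in $s^2$; the paper reinstates ``fictitious'' pole--zero pairs precisely to restore that structure before the counting argument, and your proof silently assumes the $s^2$-polynomial form throughout. To repair the proposal you would need to replace the appeal to strengthened genericity with the global-minimality contradiction and the symplectic-identity bookkeeping, at which point you have reconstructed the paper's proof.
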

 \begin{proof}
Recall that the power spectrum of a system $\left(C, \Omega\right)$ is given by $\Xi(s)V_{\mathrm{vac}}\Xi(-\overline{s})^{\dag}$. 
Therefore, if $\Xi_1(s)=\Xi_2(s)$ then $\Psi_{1}(s)=\Psi_{2}(s) $. We will now prove the converse. 

The power spectrum in the SISO case  is given by 
\begin{equation}\label{PS1}
\left(\begin{array}{cc} 
\Xi_{-}(s)  {\Xi_{-}(-\overline{s})}^{\#}      &\Xi_{-}(s)\Xi_{+}(-s)\\[2mm]
{\Xi_{+}(\overline{s})^{\#}\Xi_{-}(-\overline{s})^{\#}}&{\Xi_{+}(\overline{s})}^{\#} \Xi_+(-s)    \end{array}\right).
\end{equation}

The transfer function is completely characterised by the elements in the top row of its matrix, i.e., $\Xi_{-}(s) $ and     $\Xi_{+}(s)$. Also, $\Xi_{-}(s) $ and     $\Xi_{+}(s)$ must be of the the form \eqref{form1} and \eqref{form2}.  Our first observation is that  $\Xi_{-}(s) $ and  $\Xi_{+}(s)$ in \eqref{form1} and \eqref{form2} cannot contain poles and zeros in the following arrangement: $\Xi_{-}(s) $ has a factor like
\begin{equation}\label{topple1}
\frac{(s-{\overline{\lambda}}_i)(s+\overline{\lambda}_i)}{(s-\overline{\lambda}_i)(s-\lambda_i)}=\frac{(s+\overline{\lambda}_i)}{(s-\lambda_i)}
\end{equation}
 \textbf{and} $\Xi_{+}(s)$ contains a factor like 
\begin{equation}\label{tipple1}
\frac{(s-\lambda_i)(s+\lambda_i)}{(s-\overline{\lambda}_i)(s-\lambda_i)}=\frac{(s+\lambda_i)}{(s-\overline{\lambda}_i)}.
\end{equation}
For if this were the case and assuming that this could be done $k$ times, then our original system could be decomposed as a cascade (series product) of two systems. 
\begin{itemize}
\item
The first system is an $k$-mode passive system with transfer function 
\begin{equation}\label{huck1}
\Xi^{(1)}(s)=\left(\begin{smallmatrix} \Xi_-^{(1)}(s)&0\\0&\Xi_-^{(1)}(\overline{s})^{\#}\end{smallmatrix}\right), 
\end{equation}
where 
$$
\Xi_-^{(1)}(s)=\prod^{k}_{i=1}\frac{(s+\overline{\lambda}_i)}{(s-\lambda_i)}, \qquad
\Xi_-^{(1)}(\overline{s})^{\#}=\prod^{k}_{i=1}\frac{(s+\lambda_i)}{(s-\overline{\lambda}_i)}.
$$
Note that by Example \ref{sisoexample} it is physical. 
\item The second system has transfer function 
\begin{equation}\label{huck2}
\Xi^{(2)}(s)=   \left(\begin{smallmatrix} \Xi_-^{(2)}(s)&\Xi_+^{(2)}(s)\\\Xi_+^{(2)}(\overline{s})^{\#}&\Xi_-^{(2)}(\overline{s})^{\#}\end{smallmatrix}\right),\end{equation}
where 
$$ \Xi_-^{(2)}(s)=\Xi_-(s)   \prod^{k}_{i=1}\frac{\left(s+\overline{\lambda}_i\right)}{\left(s-\lambda_i\right)},$$
$$ \Xi_+^{(2)}(s)=    \Xi_+(s)   \prod^{k}_{i=1}\frac{\left(s+\lambda_i\right)}{\left(s-\overline{\lambda}_i\right)}.$$
It can be shown that there exists an $n-k$ mode minimal physical quantum system with this transfer function (see Appendix \ref{APP2}).
\end{itemize}
Since $\Xi^{(1)}(s)$ is passive, 
$$\Xi^{(1)}(s)V_{\mathrm{vac}}\Xi^{(1)}(-\overline{s})^{\dag}=V_{\mathrm{vac}}$$ and hence this $k$-mode system is not visible from the power spectrum, while the power spectrum is the same as that of the lower dimensional system $\Xi^{(2)}(s)$. 
Therefore we have a contradiction to global minimality.

We will now construct $\Xi_{-}(s) $ and     $\Xi_{+}(s)$ directly from the power spectrum. 
This is equivalent to identifying their poles and zeros \footnote{Note that some of the poles and zeros in \eqref{form1} and \eqref{form2} may be  ``fictitious'' and so will not be required to be identified.}. To do this we must identify all poles and zeros of $\Xi_-(s)$ and $\Xi_+(s)$ from the three quantities:
\begin{align}
& \label{blue}  \Xi_{-}(s)  {\Xi_{-}(-\overline{s})}^{\#}   \\
& \label{red} \Xi_{-}(s)\Xi_{+}(-s)\\
& \label{yellow} {\Xi_{+}(\overline{s})}^{\#} \Xi_+(-s).
\end{align}

Firstly, all poles of $\Xi_-(s)$ and $\Xi_+(s)$ may be identified from the power spectrum. Indeed, due to stability, each 
pole in \eqref{blue}, \eqref{red}, \eqref{yellow} can be assigned unambiguously to either $\Xi_{-}(s) $ or $ \Xi_{+}(-s)$. However, cancelations between zeros and poles of the two terms in the product may lead to some transfer function poles not being identifiable, so we need to show that this is not possible.  Suppose that a pole $\lambda$ of  $\Xi_-(s)$ is not visible from the power spectrum. This implies  
\begin{itemize}
\item From \eqref{blue}, $\lambda$ is a zero of $ {\Xi_{-}(-\overline{s})}^{\#}$  (equivalently $-\overline{\lambda}$ is a zero of  $ \Xi_{-}(s) $), and 
\item From \eqref{red}, $\lambda$ is a zero of $\Xi_{+}(-s)$ (equivalently $-\lambda$ is a zero of $\Xi_{+}(s)$).
\end{itemize}
We consider two separate cases: $\lambda$ non-real or real.
\begin{itemize}
\item  If $\lambda$ is non-real then from 
 the symmetries of the poles and zeros in \eqref{form1} and \eqref{form2}, $\Xi_{-}(s)$ will contain a term like
\eqref{topple1} and
$\Xi_{+}(s) $ will contain a term like \eqref{tipple1}.
By the argument above,  the system is non-globally minimal  as there will be a mode of the system that is non-visible in the power spectrum. Therefore all non-real poles of $\Xi_-(s)$ may be identified. 
 A similar argument ensures that all poles of $\Xi_+(s)$ are visible in the power spectrum. 
 \item
 If $\lambda$ is real, then $\Xi_-(s)$ must have a zero at $-\lambda$ for it not to be visible in \eqref{blue}. The symmetries of the zeros  in \eqref{form1} would suggest that there is another zero at $\lambda$. However this  would cancel our original pole. Therefore, there must be a second pole at $\lambda$ in $\eqref{form1}$ (and thus we have a fictitious pole-zero pair in $\Xi_-(s)$). In summary $\Xi_-(s)$ has a term like \eqref{topple1}.
 Also,  $\Xi_+(s)$ must also have an arrangement of poles and zeros as in \eqref{tipple1},    otherwise 
 $|\Xi_-(-i\omega)|^2-|\Xi_+(-i\omega)|^2=1$ could not hold. Hence we have a contradiction to global minimality.    
 \end{itemize}
 Therefore we conclude that all poles of $\Xi_\pm(s)$ can be identified from the power spectrum, and we focus next on the zeros.
 Unlike the case of poles, it is not clear whether a given zero in any of these plots belongs to the factor on the left or the factor on the right in each of these equation (i.e., to $\Xi_-(s)$ or $ {\Xi_{-}(-\overline{s})}^{\#}$ in \eqref{blue}, etc).

Since the poles of $\Xi_-(s)$ and $\Xi_+(s)$ may be different due to cancellations in \eqref{form1} and \eqref{form2}, it is convenient here to add in ``fictitious''  zeros into the plots \eqref{blue}, \eqref{red} and \eqref{yellow} so that 
 $\Xi_-(s)$ and $\Xi_+(s)$ have the same poles.  Note that these fictitious poles and zeros would have been present in \eqref{form1} and \eqref{form2} before simplification. From this point onwards, the zeros in \eqref{blue}, \eqref{red} and \eqref{yellow} will refer to this augmented list which includes the additional zeros.


 \underline{Real zeros:}

In general the real zeros of $\Xi_-(s)$ and $\Xi_+(s)$ come in pairs $\pm \lambda$ (see equations \eqref{form1},  \eqref{form2}),  unless a pole  and zero (or more than one)  cancel on the negative real line. Our task here is to distinguish these two cases from the plots \eqref{blue} \eqref{red} and \eqref{yellow}.  $\Xi_-(s)$ has either
\begin{itemize}
\item  i)  zeros at $\pm \lambda$, or
\item  ii) a zero at $\lambda>0$ but not at $-\lambda$. 
\end{itemize}
In case i) \eqref{blue} will have a double zero at each $\pm\lambda$, whereas in case ii) \eqref{blue} will have a single zero at $\pm\lambda$.  We need to be careful here in discriminating cases i) and ii)  on the basis of the zeroes of \eqref{blue}. For example, a double zero at $\lambda$ in \eqref{blue} could be a result of one case i) or two case ii) in $\Xi_-(s)$. More generally, we could have an $n$th order zero at $\lambda$ and as a result even more degeneracy is possible. A similar problem arises for the zeros of $\Xi_+(s)$ in \eqref{yellow}.


Our first observation here is that it is not possible for both $\Xi_-(s)$ and $\Xi_+(s)$ to have zeros at $\pm\lambda$ (taking $\lambda>0$ without loss of generality). If this were possible then by using the symplectic condition $|\Xi_-(-i\omega)|^2-|\Xi_+(-i\omega)|^2=1$ and the fact that we are assuming that  $\Xi_-(s)$ and $\Xi_+(s)$ have the same poles tells us that $\Xi_-(s)$ and $\Xi_+(s)$ must both  have had double poles at $-\lambda$. The upshot is that $\Xi_-(s)$ and $\Xi_+(s)$ will have terms of the form \eqref{topple1} and \eqref{tipple1}, which is a contradiction. 

Now, suppose \eqref{blue} has $n$ zeros at $\lambda>0$ and \eqref{yellow} has $m$ zeros at $\lambda>0$. Then we know that $\Xi_-(s)$ must have $\frac{n-p}{2}$ zeros at $-\lambda$ and $\frac{n+p}{2}$ zeros at $\lambda$. Also, $\Xi_+(s)$ must have $\frac{m-q}{2}$ zeros at $-\lambda$ and $\frac{m+q}{2}$ zeros at $\lambda$. The goal here is to find $p$ and $q$ because if these are known then it is clear that there must be    $\frac{n-p}{2}$ ($\frac{m-q}{2}$) type i) zeros and $p$ (q) type ii) zeros in $\Xi_-(s)$ ($\Xi_+(s)$).

By the observation above it is clear that either $p=n$ or $q=m$. Also, in \eqref{red} there will be $\frac{n+m+p-q}{2}$ zeros at $\lambda$ and $\frac{n+m+q-p}{2}$ zeros at $-\lambda$. Hence $q-p$ is known at this stage. Finally, it is fairly easy to convince ourselves that if $p=n$ but one concludes that $q=m$ (or vice-versa) and using the value of $q-p$ leads to a contradiction. Hence $p$ and $q$ can be determined uniquely. For example, if $n=2$, $m=5$, $q=2$ and $p=3$ so that $q=n$ and $q-p=-1$. Then assuming wrongly that $p=5$ and using $q-p=-1$ it follows that $q=4$ and so $n$ must be 6, which is incorrect.  

 Having successfully identified all real zeros,  we now show how to identify the zeros of $\Xi_-(s)$ and $\Xi_+(s)$ away from the real axis.

\underline{Complex (non real) zeros:}

Comparing the zeros of \eqref{blue} with those of \eqref{red} we find two cases in which the zeros can be assigned directly
\begin{itemize}
\item Case 1: Let $z$ be a zero of \eqref{blue} that is not a zero of \eqref{red}. Then $z$ must be a zero of  ${\Xi_{-}(-\overline{s})}^{\#}$. Hence $-\overline{z}$ is a zero of $\Xi_-(s)$.
\item Case 2: Let $w$ be a zero of \eqref{red} that is not a zero of \eqref{blue}. Then $w$ must be a zero of  ${\Xi_{+}(-s)}^{\#}$. Hence $-w$ is a zero of $\Xi_+(s)$.
\end{itemize}
The question now is whether this procedure enables one to identify all zeros? Suppose that there is a zero $v$ that is common to both of these plots. Then $-\overline{v}$ must also be a zero of \eqref{blue}. Now, if $-\overline{v}$ is not a zero of \eqref{red} then $v$ is identifiable  as belonging to $\Xi_-(s)$.

Therefore we can restrict our attention to the case that the zero pair $\{v,-\overline{v}\}$ is common to both plots. Note that in this instance the list of zeros of \eqref{yellow} will also contain $\{v,-\overline{v}\}$.  
Assume without loss of generality that $v$ is in the right half complex plane. 
Note that there cannot be a second zero pair $\{u,-\overline{u}\}$ such that $u=\overline{v}$. If this were the case then either $\{v, -v\}$ will be zeros of $\Xi_-(s)$ and  $\{-\overline{v}, \overline{v}\}$ will  be zeros of   $\Xi_+(s)$, or  $\{u, -u\}$ will be zeros of $\Xi_-(s)$ and  $\{-\overline{u}, \overline{u}\}$ will  be zeros of   $\Xi_+(s)$. In either case by using the condition $|\Xi_-(-i\omega)|^2-|\Xi_+(-i\omega)|^2=1$ for all $\omega$ and the fact that   $\Xi_-(s)$ and $\Xi_+(s)$ have the same poles by assumption, it follows that $\Xi_-(s)$ and $\Xi_+(s)$ will have terms of the form \eqref{topple1} and \eqref{tipple1}, which contradicts global minimality. Finally, under the assumptions that the zero pair $\{v,-\overline{v}\}$ is common to both \eqref{red} and \eqref{blue} with no second pair at $\{u,-\overline{u}\}$ such that $u=\overline{v}$, then we can conclude that $v$ must be a zero of $\Xi_-(s)$. For if this were not the case and so $-\overline{v}$ were a zero of $\Xi_-(s)$ then there must be another zero of $\Xi_-(s)$ at $\overline{v}$ (since pole-zero cancellation cannot occur in the right-half plane). Also from \eqref{red} this would require that $\Xi_+(s)$ has a zero at $-v$ (hence also $v$). Therefore we have a contradiction to the fact that there is no second pair at $\{u,-\overline{u}\}$ such that $u=\overline{v}$. 


Therefore we have  successfully identified all  zeros of the transfer function away from the real axis, which completes the proof. 
\end{proof}

In light of this Theorem two globally minimal SISO systems are related by a symplectic transformation as described in Theorem \ref{symplecticequivalence}.
Further it enables one to construct the transfer function of the systems globally minimal part. From this, one can then construct a system realisation of this globally minimal restriction, using the results from Sec. \ref{pond} or \ref{indirect}. We call this realisation method indirect because one first finds a transfer function fitting the power spectrum before constructing the system realisation.

\begin{Corollary} 
Let   $(C, \Omega)$ be a SISO QLS with pure  input ${V}(N,M)$. Then one can construct a globally minimal realisation, $(C', \Omega')$ \textbf{indirectly} from the power  spectrum generated by the QLS $(C, \Omega)$. The realisation  $(C', \Omega')$ will be  unique up to the symplectic equivalence in Theorem \ref{symplecticequivalence}.
\end{Corollary}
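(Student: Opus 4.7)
The plan is to chain together three ingredients already established in the excerpt: the reduction to vacuum input from Sec.~\ref{G.Ms}, the constructive part of the proof of Theorem~\ref{mainresult} which extracts the transfer function from the power spectrum, and a transfer function realisation procedure from Sec.~\ref{pond} or Sec.~\ref{indirect}. The uniqueness statement will then fall out of Theorem~\ref{symplecticequivalence}.

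First I would normalise the input. Applying the symplectic field transformation $S_{\mathrm{in}}$ from Eq.~\eqref{vtrick}, the input state becomes $V_{\mathrm{vac}}$ and the observed power spectrum transforms to $\tilde{\Psi}(s) = S_{\mathrm{in}}^{\flat}\Psi(s)(S_{\mathrm{in}}^{\dag})^{\flat}$. This reduces the problem to the setting treated in Theorem~\ref{mainresult}. Next I would invoke Theorem~\ref{equivalence}: even if $(C,\Omega)$ is itself not globally minimal, it is TFE to a series product of a passive pure component and a globally minimal mixed component, and the latter carries the entire power spectrum $\tilde{\Psi}(s)$. So without loss of generality we may assume that the spectrum we see is produced by \emph{some} globally minimal SISO QLS, whose transfer function $\Xi(s)$ is what we seek.

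Second, I would run the reconstruction algorithm implicit in the proof of Theorem~\ref{mainresult}. Concretely, I read off the three scalar rational functions \eqref{blue}, \eqref{red}, \eqref{yellow} from the four entries of $\tilde{\Psi}(s)$. The poles of $\Xi_{-}(s)$ and $\Xi_{+}(s)$ are then identified as the left-half-plane poles occurring in these three quantities (the arguments in the proof rule out pole-zero cancellations of the obstructive forms \eqref{topple1} and \eqref{tipple1} under global minimality). Real zeros are recovered by the combinatorial count described in the proof, distinguishing case (i) ``zeros at $\pm\lambda$'' from case (ii) ``zero at $\lambda$ only''. Non-real zeros are then assigned to $\Xi_{-}(s)$ or $\Xi_{+}(s)$ by comparing the zero sets of \eqref{blue} and \eqref{red}, with the residual ``common pair'' situation being resolved by the uniqueness argument given in the proof. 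At the end of this step both $\Xi_{-}(s)$ and $\Xi_{+}(s)$ are in hand, hence so is $\Xi(s)$.

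Third, with $\Xi(s)$ reconstructed I would apply one of the realisation methods from the previous chapter. The indirect method of Sec.~\ref{indirect} is particularly convenient: find a classical minimal triple $(A_0,B_0,C_0)$ realising $\Xi(s)$, solve \eqref{solute} for $T^{\flat}T$, factor it via Lemma~\ref{peterf} to obtain a symplectic $T_0 = \bar{T}W^{\flat}$, and then read off $(C',\Omega')$. Alternatively, for a SISO system the brute-force cascade algorithm of Sec.~\ref{pond} delivers $(C',\Omega')$ directly from the poles and zeros of $\Xi(s)$. Either way, the output is a minimal physical QLS with transfer function $\Xi(s)$; global minimality of $(C',\Omega')$ follows because $\Xi(s)$ is, by construction, the transfer function of a globally minimal system and by Theorem~\ref{equivalence} global minimality is a property of the transfer function equivalence class together with the input. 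Uniqueness up to the symplectic equivalence of Theorem~\ref{symplecticequivalence} is immediate, since any two minimal realisations of the same $\Xi(s)$ sit in that single equivalence class.

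The main obstacle I anticipate is purely bookkeeping in the zero-identification step of Theorem~\ref{mainresult}: one has to argue that the ambiguous configurations (common zero pairs, coincident real zeros of $\Xi_{-}$ and $\Xi_{+}$) are excluded by the $\flat$-unitarity constraint $|\Xi_{-}(-i\omega)|^{2} - |\Xi_{+}(-i\omega)|^{2} = 1$ combined with global minimality. Once this is granted, the remaining work is mechanical application of earlier results.
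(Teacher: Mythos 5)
Your proposal is correct and follows essentially the same route as the paper: the paper's (implicit) proof of this corollary is precisely to extract $\Xi_{-}(s)$ and $\Xi_{+}(s)$ of the globally minimal part via the constructive argument in Theorem~\ref{mainresult}, then build a physical realisation using the methods of Sec.~\ref{pond} or Sec.~\ref{indirect}, with uniqueness coming from Theorem~\ref{symplecticequivalence}. Your added bookkeeping (the vacuum-input normalisation from Sec.~\ref{G.Ms} and the reduction to the mixed component via Theorem~\ref{equivalence}) is exactly the intended reading.
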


\section{Power Spectrum Identification of General QLSs}\label{dogs}

We now give an alternative argument for the identifiability result in the previous subsection. The argument holds for all QLSs, rather than just the generic SISO class in the last subsection. Our method uses the work in Sec. \ref{altos} to reduce the power spectrum identifiability problem to an equivalent (yet simpler) transfer function identifiability problem.

\begin{thm}\label{main}
Let $\left(C_1, \Omega_1\right)$ and $\left(C_2, \Omega_2\right)$ be two globally minimal  and stable QLSs for input $V_{\mathrm{vac}}$,
%
then
$$
\Psi_{1}(s)=\Psi_{2}(s) \,\, for~all~s \quad \Leftrightarrow \quad \Xi_1(s)=\Xi_2(s)\,\, for~all~s 
$$
\end{thm}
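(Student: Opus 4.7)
My plan is to lift the power spectrum equality into a classical transfer function equality on the cascaded realisation of Section \ref{altos}, and then collapse the resulting block-triangular similarity into a symplectic equivalence of the original QLSs. The forward direction is immediate from $\Psi(s) = \Xi(s) V_{\mathrm{vac}} \Xi(-\overline{s})^{\dag}$, so I focus on the converse.

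First, global minimality of $(C_i, \Omega_i)$ combined with Theorem \ref{games} ensures that the $4n$-dimensional classical triple $(\tilde{A}_i, \tilde{B}_i, \tilde{C}_i)$ in \eqref{cask} is a \emph{minimal} classical realisation of $\Psi_i(s) J$. Since $\Psi_1 = \Psi_2$, classical identifiability (Theorem \ref{class.id2}) supplies an invertible $T \in \mathbb{C}^{4n \times 4n}$ with $\tilde{A}_2 = T \tilde{A}_1 T^{-1}$, $\tilde{B}_2 = T \tilde{B}_1$ and $\tilde{C}_2 = \tilde{C}_1 T^{-1}$. Both $\tilde{A}_1$ and $\tilde{A}_2$ are proper LBT, so Lemma \ref{hud} forces $T$ itself to be lower block triangular, $T = \left(\begin{smallmatrix} T_{11} & 0 \\ T_{21} & T_{22} \end{smallmatrix}\right)$. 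Reading off the diagonal blocks of the three identities produces
\[ A_2 = T_{22} A_1 T_{22}^{-1}, \quad A_2^{\flat} = T_{11} A_1^{\flat} T_{11}^{-1}, \quad C_2 = C_1 T_{22}^{-1}, \quad C_2^{\flat} = T_{11} C_1^{\flat}, \]
together with an auxiliary relation pinning down $T_{21}$. Taking $\flat$-adjoints and comparing the two expressions for $A_2$ shows that $T_{11}^{\flat} T_{22}$ commutes with $A_1$, while doing the same for $C_2$ gives $C_1 = C_1 (T_{11}^{\flat} T_{22})$; propagating this through the observability matrix of the minimal pair $(C_1, A_1)$ then forces $T_{11}^{\flat} T_{22} = 1_{2n}$.

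The main obstacle, and the step I would spend the most care on, is promoting this classical-looking similarity into a $\flat$-unitary one so that the transfer functions actually coincide. The idea is to exploit the physical realisability identity $A + A^{\flat} + C^{\flat} C = 0$ on \emph{both} QLSs: writing it for system 2 in terms of $(A_1, C_1, T_{22})$ via the block relations above and subtracting the analogous identity for system 1, the cross terms rearrange into the Sylvester-type equation
\[ (T_{22}^{\flat} T_{22} - 1_{2n}) A_1 + A_1^{\flat} (T_{22}^{\flat} T_{22} - 1_{2n}) = 0. \]
Hurwitz stability places $\mathrm{Spec}(A_1)$ in the open left half plane, and hence $\mathrm{Spec}(A_1^{\flat}) = \overline{\mathrm{Spec}(A_1)}$ lies there too, so $A_1^{\flat}$ and $-A_1$ have disjoint spectra; the Sylvester equation then has only the zero solution, giving $T_{22}^{\flat} T_{22} = 1_{2n}$.

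Finally, substituting $A_2 = T_{22} A_1 T_{22}^{-1}$, $C_2 = C_1 T_{22}^{-1}$ (so that $C_2^{\flat} = (T_{22}^{-1})^{\flat} C_1^{\flat}$) and $T_{22}^{\flat} T_{22} = 1_{2n}$ into $\Xi_2(s) = 1_{2m} - C_2 (s 1_{2n} - A_2)^{-1} C_2^{\flat}$, all $T_{22}$-factors collapse through the resolvent and one reads off $\Xi_2(s) = \Xi_1(s)$. The doubled-up (hence genuinely symplectic) structure of $T_{22}$ in the sense of Definition \ref{def.symplectic} then follows from the observability argument at the end of the proof of Theorem \ref{symplecticequivalence} applied to the now-equal transfer functions.
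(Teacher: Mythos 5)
Your proposal is correct, and while it shares its skeleton with the paper's proof (reduce to classical minimal realisation theory for the cascade of Sec.~\ref{altos} via Theorem~\ref{games}, then use Lemma~\ref{hud} to force the similarity $T$ to be lower block triangular), it resolves the crucial step by a genuinely different and noticeably shorter route. The paper's Appendix~\ref{p1} establishes the $\flat$-unitarity of the diagonal blocks through two elaborate stages: first a computation with the full $4n$-dimensional observability matrix yielding $\left(\begin{smallmatrix} T_4^{\flat} &0\\ -T_3^{\flat} &T_1^{\flat}\end{smallmatrix}\right)T=1$, and then a recursive induction on the Markov parameters $\tilde{C}\tilde{A}^{k}\tilde{B}$ to prove $T_3=0$ and $T_1=T_4$. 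You instead extract the diagonal-block relations, use observability of the $2n$-dimensional pair $(C_1,A_1)$ to get $T_{11}^{\flat}T_{22}=1_{2n}$, and then derive $T_{22}^{\flat}T_{22}=1_{2n}$ from the homogeneous Sylvester equation $(T_{22}^{\flat}T_{22}-1_{2n})A_1+A_1^{\flat}(T_{22}^{\flat}T_{22}-1_{2n})=0$ obtained by subtracting the physical realisability identities of the two systems; disjointness of $\mathrm{Spec}(A_1^{\flat})$ and $\mathrm{Spec}(-A_1)$ under Hurwitz stability kills the solution. I verified the Sylvester equation does follow by conjugating the PR identity of system 2 with $T_{22}^{\flat}(\cdot)T_{22}$, and the resolvent collapse then gives $\Xi_2=\Xi_1$ directly. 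What your approach buys is brevity and a transparent localisation of where the stability hypothesis enters; what it forgoes is the explicit structural information $T_3=0$, $T_1=T_4$ and the doubled-up form of $T_1$, but since the theorem only asserts equality of transfer functions, deferring the symplectic structure to Theorem~\ref{symplecticequivalence} applied to the now-equal transfer functions is entirely legitimate.
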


\begin{proof}\label{sofa}
To prove this result, we use the results of Sec. \ref{altos} to write our globally minimal power spectrum identification problem as a minimal transfer function identification problem. That is, 
by Theorem \ref{games} the system \eqref{cask} is minimal. Therefore, from the classical literature  TFE systems are related via 
\begin{align}
\label{c1}&\tilde{A}'=T\tilde{A}T^{-1}, \quad\tilde{B}'=T\tilde{B}, \quad\tilde{C}'=\tilde{C}T^{-1}, \quad \tilde{D}'=\tilde{D}.
\end{align}
Moreover, its observability and controllability matrices, $\mathcal{O}$ and $\mathcal{C}$, will have full rank. 
Additionally, by Lemma \ref{hud} such a similarity transformation must be lower block triangular. 

Now writing $T$ as 
$$\left(\begin{smallmatrix} T_1 & 0\\ T_3 & T_4\end{smallmatrix}\right),$$ to complete the proof it remains to show that (a) $T_3=0$, (b) $T_1=T_4$, (b) $T_1^{\flat}T_1=1$ and (d) $T_1$ is doubled up. 
This is sufficient because it  tells us that  the equivalence classes of  the power spectrum are related via symplectic similarity transformations (and they are the same  gauge transformations  as those obtained  from the transfer function (see Ch. \ref{T.F})). 
The outline of how we show (a)-(d) is given in the following  three steps. The complete proof can be found in Appendix  \ref{p1}.

\begin{enumerate}
\item[(1)] \label{tog1}Firstly, using the pattern in the $\tilde{A}$, $\tilde{B}$ and $\tilde{C}$ matrices defined above, we show that the following holds:
$$\mathcal{O}=\mathcal{O}\left(\begin{smallmatrix} T_4^{\flat} &0\\ -T_3^{\flat}& T_1^{\flat}\end{smallmatrix}\right)T.$$ 
And so  because $\mathcal{O}$ has full rank, we have
$$\left(\begin{smallmatrix} T_4^{\flat} &0\\ -T_3^{\flat} &T_1^{\flat}\end{smallmatrix}\right)T=1.$$
\item[(2)] \label{tog2}We will then 
 show that:
$$\mathcal{O}\left(\begin{smallmatrix} T_4^{\flat}-T_1^{\flat}\\-T_3^{\flat}\end{smallmatrix}\right)=0.$$
This implies that $T_3=0$ and $T_1=T_4$.
\item[(3)] \label{tog3} Combing Steps (1) and (2) it is clear that $T$ must be of the form  $$T=\left(\begin{smallmatrix}T_1&0\\0&T_1\end{smallmatrix}\right)$$ with $T_1^{\flat}T_1=1.$
Finally we show that $T_1$ is doubled-up. 
\end{enumerate}
\end{proof}

\begin{remark}
For general input $V=S_0V_{\mathrm{vac}}S_0^{\dag}$, clearly PQLSs of the form $\mathcal{G}=\left( S_0\Delta(\left(C_-,0\right), \Delta\left(\Omega_-, 0\right)\right)$ will have trivial power spectrum. 
Theorem \ref{mainresult} says that these are the only such systems (up to symplectic equivalence in Theorem \ref{symplecticequivalence}). 
\end{remark}

\section{Identification Method}\label{256}

Suppose that we have constructed the power spectrum  from the input-output data, for instance by treating it as a transfer function and using one of the techniques of \cite{Ljung1}. Here we outline a method to construct a globally minimal system realisation from the power spectrum.  This method will provide us with a  system realisation \textit{directly}, rather than indirectly  via the transfer function (see Sec. \ref{jute}).
The realisation is obtained  by first finding a non-physical realisation and then constructing a physical one from this by applying a criterion developed in \cite{Zhou1}. The identification method is  similar to the one   used in Sec. \ref{indirect} for the transfer function realisation problem.

We have seen many times that the power spectrum may be treated  as if it were a transfer function. Therefore, let $\left(\tilde{A}_0, \tilde{B}_0, \tilde{C}_0, V_{\mathrm{vac}}\right)$ constitute a minimal realisation of $\Psi(s)$, i.e., 
$$\Psi(s)J=V_{\mathrm{vac}}+ \tilde{C}_0\left(s-\tilde{A}_0\right)^{-1}\tilde{B}_0.$$
Further, let us  assume that $\tilde{A}_0, \tilde{B}_0, \tilde{C}_0$ are of the form 
$$\tilde{A}_0=\left(\begin{smallmatrix} -{A}^{\flat}_0&0\\0&{A}_0\end{smallmatrix}\right)\quad
\tilde{B}_0=\left(\begin{smallmatrix}B_1\\B_2\end{smallmatrix}\right) \quad
\tilde{C}_0=\left(\begin{smallmatrix}C_1\\C_2\end{smallmatrix}\right),$$
with, $A_0$, $B_1$ and $C_2$ doubled up and $A_0$ is stable. For example, in Appendix \ref{nog} such a realisation is found for an $n$-mode globally minimal system, with matrices $(A,C)$, possessing $2n$ distinct poles each with non-zero imaginary part.

Now, by minimality, any other realisation of the transfer function can be generated by the similarity transformation 
\begin{equation}\label{create}\tilde{A}=T\tilde{A}_0T^{-1}\quad
\tilde{B}=T\tilde{B}_0\quad
\tilde{C}=\tilde{C}_0T^{-1}.\end{equation}
The problem here is that in general these matrices may not describe a genuine quantum system in the sense that from a given $\left(\tilde{A}, \tilde{B}, \tilde{C}\right)$ one cannot reconstruct the pair $(\Omega, C)$ describing the power spectrum. Our goal is to find a special transformation $T$ mapping $(\tilde{A}_0 , \tilde{B}_0 , \tilde{C}_0)$ to a triple $(\tilde{A}, \tilde{B}, \tilde{C})$ that is physical.

Firstly, as $\tilde{A}_0$ and the physical $\tilde{A}$ we seek are both proper LBT, then by Lemma 
\ref{hud} we may restrict  $T$ to be of the  form
$$T=\left(\begin{smallmatrix}T_1&0\\T_2&T_3\end{smallmatrix}\right), \quad T^{-1}=\left(\begin{smallmatrix}T^{-1}_1&0\\-T_3^{-1}T_2T_1^{-1}&T_3^{-1}\end{smallmatrix}\right).$$
 Using this together with \eqref{create} and    $(\tilde{A}, \tilde{B}, \tilde{C})$ in \eqref{cask} gives:
\begin{align}
\label{one}A^{\flat}=T_1A_0^{\flat}T_1^{-1}\quad\mathrm{and}\quad -C^{\flat}=T_1B_1\\
\label{three}A=T_3A_0T_3^{-1}\quad\mathrm{and}\quad C=C_2T_3^{-1}\\
\label{two}C^{\flat}V_{\mathrm{vac}}C=-T_2{A}^{\flat}_0T_1^{-1}-T_3A_0T_3^{-1}T_2T_1^{-1}\\
\label{six}-V_{\mathrm{vac}}C=C_1T_1^{-1}-C_2T_3^{-1}T_2T_1^{-1}.\\
\label{five}-C^{\flat}V_{\mathrm{vac}}=T_2B_1+T_3B_2
\end{align}
For $(A, C)$ to correspond to a quantum system it must satisfy the physical realisability conditions:
$A+A^{\flat}+C^{\flat}C=0$ (Sec. \ref{MTDR}).
Applying this condition to     \eqref{one} and \eqref{three}  gives:
\begin{align}
\label{nine}
A_0^{\flat}\left(T_1^{\flat}T_1\right)^{-1}+\left(T_1^{\flat}T_1\right)^{-1}A_0+B_1B_1^{\flat}=0\\
\label{eight}\left(T_3^{\flat}T_3\right)A_0+A_0^{\flat}\left(T_3^{\flat}T_3\right)+C_2^{\flat}C_2=0.
\end{align}
Next as quantum system is  stable,  $A_0$ must be  Hurwitz (because it is similar to $A$). Therefore  \eqref{nine} and \eqref{eight} have unique solutions, given by
\begin{align}
\label{eleven}
&\left(T_1^{\flat}T_1\right)^{-1}=\int^{\infty}_0 J\left(B_1^{\dag}Je^{A_0 t}\right)^{\dag}J\left(B_1^{\dag}Je^{A_0t}\right)dt\\
\label{twelve}
&\left(T_3^{\flat}T_3\right)=\int^{\infty}_0 J\left(C_2e^{A_0 t}\right)^{\dag}J\left(C_2e^{A_0t}\right)dt.
\end{align}
Moreover, these solutions will necessarily be of doubled-up form due to the fact $A_0, B_1$ and $C_2$ were.
Therefore, using Lemma \ref{peterf} from Sec. \ref{indirect} we can find doubled-up $T_1$ and $T_3$ from these uniquely (up to the non-identifiable symplectic equivalence class in Theorem \ref{main}).

The upshot of these results is that we may ultimately write down a realisation of the system $(A, C)$ using \eqref{one}  or alternatively from \eqref{three}. By Theorem \ref{main} both solutions are guaranteed to coincide (bar any unidentifiable symplectic matrix) and give a unique (up to such a symplectic transformation) realisation of the power spectrum, hence we are done.

For completeness we may write down the unique solution $T_2$ given the solutions $T_1$ and $T_3$ so to obtain the full realisation \eqref{cask} of the power spectrum. To this end, 
suppose that the solutions $T_1$ and $T_2$ from \eqref{eleven} and \eqref{twelve} lead to (physical) realisations $(A, C)$ and $(\hat{A}, \hat{C})$ that differ by an (unidentifiable) symplectic. That is, $A=S\hat{A}S^{\flat}$ and $C=\hat{C}S^{\flat}$. Then from \eqref{two} we have
$$S\hat{C}^{\flat}V_{\mathrm{vac}}\hat{C}+\left(T_2T_1^{-1}S\right)\hat{A}^{\flat}+\hat{A}\left(T_2T_1^{-1}S\right)=0,$$
which has been obtained by substituting \eqref{one} and \eqref{three} into  \eqref{two}.
 This solution $\left(T_2T_1^{-1}S\right)$ can be found uniquely, hence $T_2$ can be found uniquely from this. Note that $T_2$ will not be of doubled-up type, which is to be expected.


 \section{Realisation Example}\label{god1}
We now give an example of the identification method above. Suppose that we have a one-mode QLS characterised by the matrices:
$$\left(\Omega,C\right)= \left( \Delta(2,i), \Delta(7,-1)       \right)            $$
We will now construct a realisation of this system directly from the power spectrum. That is, we will pretend that we didn't have this realisation beforehand and are given only the following power spectrum:
$$\Psi(s)J=\frac{1}{16s^4+1464s^2+40401}\left(\begin{smallmatrix} 16s^4+1464s^2+53084&s^2(-448+48i)-22176-13484i\\s^2(448+48i)+22176-13484i&-12688\end{smallmatrix}\right).$$

\underline{Finding the classical realisation}: The power spectrum may be expanded as in \eqref{gut1} where $n=1$ $\lambda_1=-24+1.732i$ and
$$I_1=\left(\begin{smallmatrix} -0.0278-0.3849i &1.0496-0.2582i\\0.1051+0.0516i & 0.0278+0.3849i \end{smallmatrix}\right) \quad K_1=\left(\begin{smallmatrix} -0.0278+0.3849i&-0.1051+0.0516i\\-1.0496-0.7182i&0.0278-0.3849i\end{smallmatrix}\right).$$
$$T_1=\left(\begin{smallmatrix} 0.0278+0.3849i&+0.1051-0.0516i\\1.0496+0.7182i&-0.0278+0.3849i\end{smallmatrix}\right)\quad W_1=\left(\begin{smallmatrix} 0.0278+0.3849i &-1.0496+7182i\\-0.1051-0.0516i & -0.0278-0.3849i \end{smallmatrix}\right).$$
Therefore, we can write $\tilde{A}_0$, $\tilde{B}_0$ and $\tilde{C}_0$ as 
$$\tilde{A}_0=\mathrm{Diag}( 24+1.732i,24-1.732i ,-24+1.732i,-24-1.732i)
$$ 
$$\tilde{B}_0=\left(\begin{smallmatrix} 1&1.6603+2.8463i\\1.6603-2.8463i&1\\ 0.0278+-0.3849i&0.1051-0.0516i \\-0.1051-0.0516i &-0.0278+-0.3849i    \end{smallmatrix}\right)$$
$$\tilde{C}_0=\left(\begin{smallmatrix} -0.0278-0.3849i &-0.1051i+0.0516i&1&  -1.6619-2.8463i\\  0.1051+0.0516i & 0.0278-0.3849i&-1.6619+2.8463i&1 \end{smallmatrix}\right).$$
Note that $B_1$ and $C_2$ as defined in the previous subsection are doubled-up and further observe that $B_1=-C_2^{\flat}$.

\underline{Finding the quantum realisation}
From \eqref{eleven} and \eqref{twelve} we have $$\left(T_1^{\flat}T_1\right)^{-1}=T_3^{\flat}T_3=\left(\begin{smallmatrix}-0.2054&0\\0&-0.2054\end{smallmatrix}\right).$$
Therefore, we can let $$T_3=-\left(T_1^{\flat}\right)^{-1}= \left(\begin{smallmatrix}0&-0.4532i\\0.4532i&0\end{smallmatrix}\right).$$ In particular, the choice $T_3=-\left(T_1^{\flat}\right)^{-1}$ and the condition $B_1=-C_2^{\flat}$ ensures that the physical realisations coming from \eqref{one} and \eqref{three} are the same (rather than differing by a symplectic). Therefore from \eqref{two} we have 
$$T_2=\left(\begin{smallmatrix} 0.6604+0.3243i&-2.4312i\\-0.2238i&-0.6604+0.3243i\end{smallmatrix}\right)$$.

In summary, the realisation of the power spectrum is given by
$$A=\left(\begin{smallmatrix} -24-1.732i &0\\0 &-24+1.732i\end{smallmatrix}\right) \quad\mathrm{and}\quad C=\left(\begin{smallmatrix} 6.2803-3.6669i&-2.2065i\\2.2065i&6.2803+3.6669i\end{smallmatrix}\right).$$

\section{Scattering and Squeezing}\label{sq+sc}

 We now show how to extend our power spectrum identifiability results to  allow for scattering and squeezing in the field.  As in Sec. \ref{altos} we can represent the power spectrum  $\Psi(s)J$  of the system $(S,C,\Omega)$ as a (classical) cascaded system: 
 \begin{equation}\label{cask1}
\left(\tilde{A}, \tilde{B}, \tilde{C}, \tilde{D}\right):=
\left(\left(\begin{smallmatrix} -A^{\flat}&0\\C^{\flat}SV_{\mathrm{vac}}S^{\flat}C&A\end{smallmatrix}\right), \left(\begin{smallmatrix} -C^{\flat}\\-C^{\flat}SV_{\mathrm{vac}}S^{\flat}\end{smallmatrix}\right), \left(\begin{smallmatrix} -SV_{\mathrm{vac}}S^{\flat}C& SC\end{smallmatrix}\right), SV_{\mathrm{vac}} S^{\flat}      \right)
\end{equation}
(see \eqref{cask}). Note that if $S$ is a purely scattering transformation, i.e, it is a unitary symplectic matrix, then system \eqref{cask1} reduces to system \eqref{cask}.

Now, Theorem \ref{games} also holds in the case of non-trivial $S$, i.e. system \eqref{cask1} is minimal iff $(S,C,\Omega)$  is globally minimal. The proof is very similar to the proof of Theorem \ref{games}.    
 
We can also prove a modified version of  Theorem \ref{main}. 
 
\begin{thm}\label{main7}
Let $\left(S_1, C_1, \Omega_1\right)$ and $\left(S_2, C_2, \Omega_2\right)$ be two globally minimal  and stable QLSs for input $V_{\mathrm{vac}}$,
%
then
$$\Psi_1(s)=\Psi_2(s)~for ~ all~s 
\Leftrightarrow
S_1V_{\mathrm{vac}}S_1^{\dag}=S_2V_{\mathrm{vac}}S_2^{\dag},~
C_1=C_1T^{\flat},~
J\Omega_1=TJ\Omega_2T^{\flat}
$$
for some symplectic matrix $T$.
\end{thm}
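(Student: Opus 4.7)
The plan is to mirror the strategy used for Theorem~\ref{main}, reducing power-spectrum identification to a transfer-function identification problem for the classical cascade \eqref{cask1}, and then to read off the three relations claimed.

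The backward direction is a short calculation. Given $C_1 = C_2 T^{\flat}$ and $J\Omega_1 = TJ\Omega_2 T^{\flat}$ with $T$ symplectic, the ``inner'' transfer function $1_m - C(s-A)^{-1}C^{\flat}$ is invariant under the symplectic similarity, so
\[
\Xi_i(s) = \bigl[1_m - C_2(s-A_2)^{-1}C_2^{\flat}\bigr]S_i,
\]
and substituting into $\Psi_i(s) = \Xi_i(s)V_{\mathrm{vac}}\Xi_i(-s^*)^{\dag}$ makes $S_i$ enter only through $S_i V_{\mathrm{vac}}S_i^{\dag}$, which agrees by hypothesis.

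For the forward direction I would appeal to the $S\neq 1_m$ version of Theorem~\ref{games} stated immediately above to conclude that each $4n$-dimensional cascade \eqref{cask1} is a minimal classical realisation. Classical realisation theory (Theorem~\ref{class.id2}) then produces a unique invertible similarity $T$ intertwining the two cascades, and comparing the feedthrough terms $\tilde{D}_i = S_i V_{\mathrm{vac}} S_i^{\flat}$ (using $V_{\mathrm{vac}}J = V_{\mathrm{vac}}$) immediately yields $S_1 V_{\mathrm{vac}}S_1^{\dag} = S_2 V_{\mathrm{vac}} S_2^{\dag}$, the first of the three required relations. The $\tilde{A}$ matrix retains its proper lower block triangular structure (the only change from the $S=1_m$ case is the replacement $V_{\mathrm{vac}} \mapsto SV_{\mathrm{vac}}S^{\flat}$ in the off-diagonal block, which does not affect the eigenstructure), so Lemma~\ref{hud} forces $T$ to be LBT, and the three-step argument from the proof of Theorem~\ref{main} (see Appendix~\ref{p1}) goes through essentially verbatim to give $T = \left(\begin{smallmatrix} T_1 & 0 \\ 0 & T_1 \end{smallmatrix}\right)$ with $T_1$ doubled-up and symplectic. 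Reading the top block of $\tilde{B}$ gives $C_1^{\flat} = T_1 C_2^{\flat}$, equivalently $C_1 = C_2 T_1^{\flat}$; combining this with the top-left block of $\tilde{A}$, namely $A_1 = T_1 A_2 T_1^{\flat}$, and cancelling the matching $-\tfrac{1}{2} C^{\flat}C$ contributions produces the remaining identity $J\Omega_1 = T_1 J\Omega_2 T_1^{\flat}$.

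The main obstacle will be rigorously establishing the two supporting technical facts on which the above hinges: (i) the extension of Theorem~\ref{games} to nontrivial $S$, which requires repeating the eigenvector analysis of Sec.~\ref{altos} and the controllability/observability characterisation of Lemma~\ref{LEM1} with $SV_{\mathrm{vac}}S^{\flat}$ replacing $V_{\mathrm{vac}}$ (invertibility of $S$ and the rank pattern of $V_{\mathrm{vac}}$ preserve the relevant rank conditions); and (ii) checking that all block equalities extracted from $\tilde{A}_1 = T\tilde{A}_2 T^{-1}$, $\tilde{B}_1 = T\tilde{B}_2$, $\tilde{C}_1 = \tilde{C}_2 T^{-1}$ are mutually consistent under the derived form of $T$, so that no constraints beyond those listed in the theorem are implicitly imposed.
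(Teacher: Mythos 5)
Your proposal is correct and follows essentially the same route as the paper: the paper also extracts $S_1V_{\mathrm{vac}}S_1^{\dag}=S_2V_{\mathrm{vac}}S_2^{\dag}$ from the high-frequency (equivalently, feedthrough) term and then reruns the cascade/minimality/LBT three-step argument of Theorem~\ref{main} while "keeping track of $S$", exactly as you outline. The two technical points you flag (the $S\neq 1$ extension of Theorem~\ref{games} and the consistency of the block equations) are likewise the only parts the paper itself asserts rather than spells out.
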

\begin{proof}
Firstly, the condition $S_1V_{\mathrm{vac}}S_1^{\dag}=S_2V_{\mathrm{vac}}S_2^{\dag}$ follows by choosing $s=-i\omega$ and taking the limit $\omega\mapsto\infty$. By following the method with identical steps to those in    Theorem \ref{main} the remainder of the proof may be obtained. The proofs of each step are almost identical to those in Appendix \ref{p1}, except for keeping track of the matrix $S$, which offers little additional complication.  
\end{proof}
Therefore the consequence  of allowing for a non-trivial scattering is  an extra condition on the classes of equivalent systems, namely $S_1V_{\mathrm{vac}}S_1^{\dag}=S_2V_{\mathrm{vac}}S_2^{\dag}$.
An equivalent formulation of this result can be stated in the following Corollary.

\begin{Corollary}
Suppose that the system $\mathcal{G}=\left(1, C, A\right)$ has power spectrum $\Psi(s)$, then all other systems with the same power spectrum  are given by 
$$\mathcal{G}':=\left(S', C', A'\right):=\left(S, CT^{\flat}, TAT^{\flat}\right),$$
where $T$ is symplectic and $S$ is unitary and symplectic. 
\end{Corollary}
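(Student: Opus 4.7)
The plan is to obtain this Corollary as an immediate specialisation of Theorem \ref{main7} to $S_1 = 1_{2m}$. I would set $(S_1, C_1, \Omega_1) = (1, C, \Omega)$ (with $\Omega$ linked to $A$ via $A = -iJ\Omega - \tfrac{1}{2} C^\flat C$) and $(S_2, C_2, \Omega_2) = (S', C', \Omega')$ for an arbitrary PSE system. Theorem \ref{main7} then produces a symplectic $T$ together with three conditions: (i) $V_{\mathrm{vac}} = S' V_{\mathrm{vac}} (S')^\dag$, (ii) $C' = CT^\flat$, and (iii) $J\Omega' = TJ\Omega T^\flat$ (the direction of the similarity in (ii)--(iii) is a matter of convention; it can always be adjusted by replacing $T$ by $T^{-1} = T^\flat$, which stays inside the symplectic group). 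The Corollary then reduces to two tasks: (a) showing that (i) forces $S'$ to be unitary symplectic, and (b) rewriting (ii)--(iii) in the $(C', A')$ form stated.

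For (a), I would expand the doubled-up symplectic form $S' = \left(\begin{smallmatrix} S_- & S_+ \\ S_+^\# & S_-^\# \end{smallmatrix}\right)$ against $V_{\mathrm{vac}} = \left(\begin{smallmatrix} 1_m & 0 \\ 0 & 0 \end{smallmatrix}\right)$ to compute
\[
S' V_{\mathrm{vac}} (S')^\dag = \left(\begin{smallmatrix} S_- S_-^\dag & S_- S_+^T \\ S_+^\# S_-^\dag & S_+^\# S_+^T \end{smallmatrix}\right),
\]
and matching this to $V_{\mathrm{vac}}$ forces $S_+ = 0$ and $S_- S_-^\dag = 1_m$. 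Hence $S' = \Delta(S_-, 0)$ with $S_-$ unitary, which is precisely the assertion that $S'$ is unitary symplectic.

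For (b), I would use that $T$ is doubled-up symplectic, so $T^\flat T = TT^\flat = 1$ and $\flat^2 = \mathrm{id}$; in particular $(CT^\flat)^\flat = TC^\flat$. Combined with $A = -iJ\Omega - \tfrac{1}{2} C^\flat C$, this gives
\[
TAT^\flat = -iTJ\Omega T^\flat - \tfrac{1}{2} TC^\flat C T^\flat = -iJ\Omega' - \tfrac{1}{2} (C')^\flat C' = A',
\]
so (ii) and (iii) are jointly equivalent to the pair $C' = CT^\flat$ and $A' = TAT^\flat$ appearing in the Corollary. The converse direction (that every such $\mathcal{G}'$ is PSE to $\mathcal{G}$) is the $\Leftarrow$ implication of Theorem \ref{main7} and requires no further work. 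Since the theorem does all the analytic heavy lifting, no substantive obstacle remains: the only real content is the small block calculation in (a) that pins the scattering down to a unitary symplectic matrix, and the minor bookkeeping point that $T$ and $T^{-1} = T^\flat$ can be freely swapped in order to match the side on which the similarity is written.
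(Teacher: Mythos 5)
Your proposal is correct and follows exactly the route the paper intends: the Corollary is stated as an equivalent reformulation of Theorem \ref{main7} with $S_1=1$, and your block computation showing $S'V_{\mathrm{vac}}(S')^\dag=V_{\mathrm{vac}}$ forces $S_+=0$ with $S_-$ unitary, together with the identity $TAT^\flat=-iTJ\Omega T^\flat-\tfrac{1}{2}(CT^\flat)^\flat(CT^\flat)=A'$, supplies precisely the details the paper leaves implicit. No gaps.
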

This should be seen as the main result of this chapter. Notice the subtlety here that  the transfer function of the systems $\mathcal{G}$ and $\mathcal{G}'$ differ slightly, that is, $\Xi_{\mathcal{G}'}(s)=\Xi_{\mathcal{G}}(s)S$. Therefore, one is able to recover the transfer function from the power spectrum uniquely up to a passive transformation on the field.

Finally, an identification method similar to the one in Sec. \ref{256} may be developed to include non-trivial $S$, but we do not discuss this any further here.

%
%
%
%
%
%




\section{Passive Quantum Linear Systems}\label{pass5}
In this section we restrict our attention to   \emph{passive} QLSs (see Sec. \ref{honk}). As we mentioned earlier,  we  drop the doubled-up notation for PQLSs and work with the triple $(S_-, C_-, \Omega_-)$. For simplicity we only consider the case $S_-=1$ here and we drop the subscript minus for convenience.
If the input state is vacuum then the power spectrum is trivial ($\Phi_{V}=V$) and the only globally minimal systems are the trivial ones (zero internal modes).   For this reason we consider \emph{squeezed} inputs, with pure input $V(N,M)$. 

We  first discuss SISO PQLSs in Sec. \ref{SISO1} and then MIMO PQLSs in Sec. \ref{MIMO1}.

\subsection{SISO}\label{SISO1}
The main identifiability problems that we have considered throughout this chapter turn out to have a much simpler solution for SISO PQLSs.

Now,  the transfer function is given by
$$ 
\Xi(s)=1 -C (s{1}_n-A )^{-1}C^{\dag} = 
\frac{\mathrm{Det}\left(s{1}_n + {A}^{\#} \right)}{\mathrm{Det}\left(s{1}_n-A \right)}
$$
where $A= -i\Omega -\frac{1}{2}C^\dagger C$ and its spectrum is $\sigma(A) :=\{ \lambda_1,\dots ,\lambda_n\}$. 
It is a monic rational function in $s$ (see Nomenclature), with poles $p_i= \lambda_i$ in the left half plane, and zeros  
$z_i= -\overline{p_i}= -\overline{\lambda_i}$ in the right half plane.

\begin{thm}\label{sisogm}
Consider  a general SISO PQLS $\mathcal{G}=(C,\Omega)$  with pure input ${V}(N,M)$, such that $M\neq 0$. 

%

1) The following are equivalent:
 \begin{itemize}
 \item[i)] the system is globally minimal
 \item[ii)]  the stationary state of the system is fully mixed
 \item[iii)] $A$ and $A^\dagger$ have different spectra, i.e. $\sigma(A) \cap \sigma (A^\dagger) =\emptyset $
 \item[iv)] $A$ does not have real, or pairs of complex conjugate eigenvalues.
 \end{itemize}

2) Let $\mathcal{P}$ be the set of all eigenvalues of $A$ that are either real or come in complex conjugate pairs. A globally minimal realisation of the system is given by the series product of one mode systems 
$\mathcal{G}_{m,i}= ( c_i=\sqrt{2|\mathrm{Re} \lambda_i}|, \Omega_i=-\mathrm{Im}\lambda_i )$ for indices $i$ such that 
$\lambda_i\notin \mathcal{P}$. 

%

\end{thm}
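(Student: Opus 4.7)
The plan is to reduce everything to Theorem \ref{equivalence}, the cascade realisation of Example \ref{sisoexample}, and a direct pole/zero analysis of the power spectrum exploiting the all-pass structure of passive SISO transfer functions.

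The easy equivalences come first. (i)$\Leftrightarrow$(ii) is Theorem \ref{equivalence}(1) applied after the field basis change of Eq.~\eqref{vtrick}, which brings $V(N,M)$ to vacuum without affecting purity of the stationary state. (iii)$\Leftrightarrow$(iv) is immediate from $\sigma(A^{\dag})=\overline{\sigma(A)}$: the two spectra meet iff some $\lambda\in\sigma(A)$ satisfies $\overline{\lambda}\in\sigma(A)$, which for complex $\lambda$ means a conjugate pair and for real $\lambda$ is automatic.

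The core step is (i)$\Leftrightarrow$(iv), which I would prove by computing the power spectrum explicitly. A SISO PQLS transfer function satisfies the all-pass identity $\Xi(s)\,\overline{\Xi(-\overline{s})}=1$ as a rational function of $s$, since
\[
\Xi(s)\,\overline{\Xi(-\overline{s})}=\prod_{i=1}^{n}\frac{s+\overline{\lambda}_{i}}{s-\lambda_{i}}\cdot\frac{s-\lambda_{i}}{s+\overline{\lambda}_{i}}=1.
\]
Consequently the diagonal blocks of $\Psi(s)=\Xi_{\mathrm{du}}(s)\,V(N,M)\,\Xi_{\mathrm{du}}(-\overline{s})^{\dag}$ reduce to constants ($\Psi_{11}\equiv N+1$, $\Psi_{22}\equiv N$), and all nontrivial $s$-dependence is concentrated in the off-diagonal block $\Psi_{12}(s)=M\,\Xi(s)\Xi(-s)$ (the block $\Psi_{21}$ being determined by $\Psi_{12}$ through $\Psi_{12}\Psi_{21}=|M|^{2}$). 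Since $M\neq 0$, the power spectrum is informationally equivalent to
\[
\Xi(s)\Xi(-s)=\prod_{i=1}^{n}\frac{s^{2}-\overline{\lambda}_{i}^{\,2}}{s^{2}-\lambda_{i}^{\,2}}.
\]
A denominator factor $s^{2}-\lambda_{i}^{2}$ cancels a numerator factor $s^{2}-\overline{\lambda}_{j}^{2}$ iff $\lambda_{i}=\pm\overline{\lambda}_{j}$; Hurwitz stability ($\mathrm{Re}\,\lambda_{i}<0$ for all $i$) rules out the ``$-$'' case, so cancellations occur precisely when $\lambda_{i}=\overline{\lambda}_{j}$, i.e.\ for real $\lambda_{i}$ or conjugate pairs, which are exactly the $\mathcal{P}$-eigenvalues. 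After cancellation $\Xi(s)\Xi(-s)$ has $2|\sigma(A)\setminus\mathcal{P}|$ distinct poles, so any PQLS reproducing this rational function has at least $|\sigma(A)\setminus\mathcal{P}|$ modes. This yields (iv)$\Rightarrow$(i) directly (when $\mathcal{P}=\emptyset$, no cancellation is possible and $\mathcal{G}$ is already of minimal dimension) and, combined with the cascade realisation of Example \ref{sisoexample}, also (i)$\Rightarrow$(iv) (when $\mathcal{P}\neq\emptyset$, deleting the $\mathcal{P}$-modes from the cascade yields a strictly smaller PQLS with the same power spectrum).

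Part 2 then follows at once: the cascade $\triangleleft_{\lambda_{i}\notin\mathcal{P}}\mathcal{G}_{m,i}$ with $\mathcal{G}_{m,i}=(c_{i}=\sqrt{2|\mathrm{Re}\,\lambda_{i}|},\,\Omega_{i}=-\mathrm{Im}\,\lambda_{i})$ has transfer function $\prod_{\lambda_{i}\notin\mathcal{P}}(s+\overline{\lambda}_{i})/(s-\lambda_{i})$, whose spectrum lies outside $\mathcal{P}$ and which is therefore globally minimal by (iv)$\Rightarrow$(i); the cancellation computation verifies that its power spectrum coincides with that of $\mathcal{G}$. The only subtle point I anticipate is establishing the all-pass identity $\Xi(s)\,\overline{\Xi(-\overline{s})}=1$ as an identity of rational functions rather than merely on the imaginary axis (where it is the familiar unitarity of the PQLS transfer function); once this is in hand the remainder of the argument is elementary pole/zero book-keeping.
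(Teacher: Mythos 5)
Your proof is correct and follows essentially the same route as the paper's: both reduce the power spectrum to the off-diagonal all-pass product $\Xi(s)\Xi(-s)=\prod_i (s^2-\overline{\lambda}_i^{\,2})/(s^2-\lambda_i^{\,2})$, identify zero--pole cancellations with real eigenvalues or complex conjugate pairs, invoke Theorem \ref{equivalence} for (i)$\Leftrightarrow$(ii), and obtain the globally minimal realisation from the cascade of Example \ref{sisoexample}. Your explicit pole-counting lower bound (any PQLS reproducing the reduced rational function needs at least $|\sigma(A)\setminus\mathcal{P}|$ modes) simply makes precise the paper's terser ``no cancellations $\Rightarrow$ globally minimal'' step.
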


\begin{proof}

1) For passive SISO systems the only non-trivial contribution to the power spectrum is from off-diagonal element 
\begin{eqnarray*}
\Xi(s)\Xi(-s)&=&\frac{\mathrm{Det}\left(s{1}_n + A^\dagger\right)}{\mathrm{Det}\left(s{1}_n-A\right)}
\times\frac{\mathrm{Det}\left(s{1}_n- A^\dagger\right)}
{\mathrm{Det}\left(s{1}_n+A\right)}\\
&=& \prod_{i=1}^n \frac{s+ \overline{\lambda}_i}{s-\lambda_i} \cdot \frac{s- \overline{\lambda}_i}{s+\lambda_i}.
\end{eqnarray*}

In the above expression, zero-pole cancellations occur if and only if $\sigma(A) \cap \sigma (A^\dagger) \neq \emptyset $, or equivalently if $A$ has a real eigenvalue or a pair of complex conjugate eigenvalues (see Fig. \ref{pz}).

 \begin{figure}
\centering
\centering\includegraphics[scale=0.14]{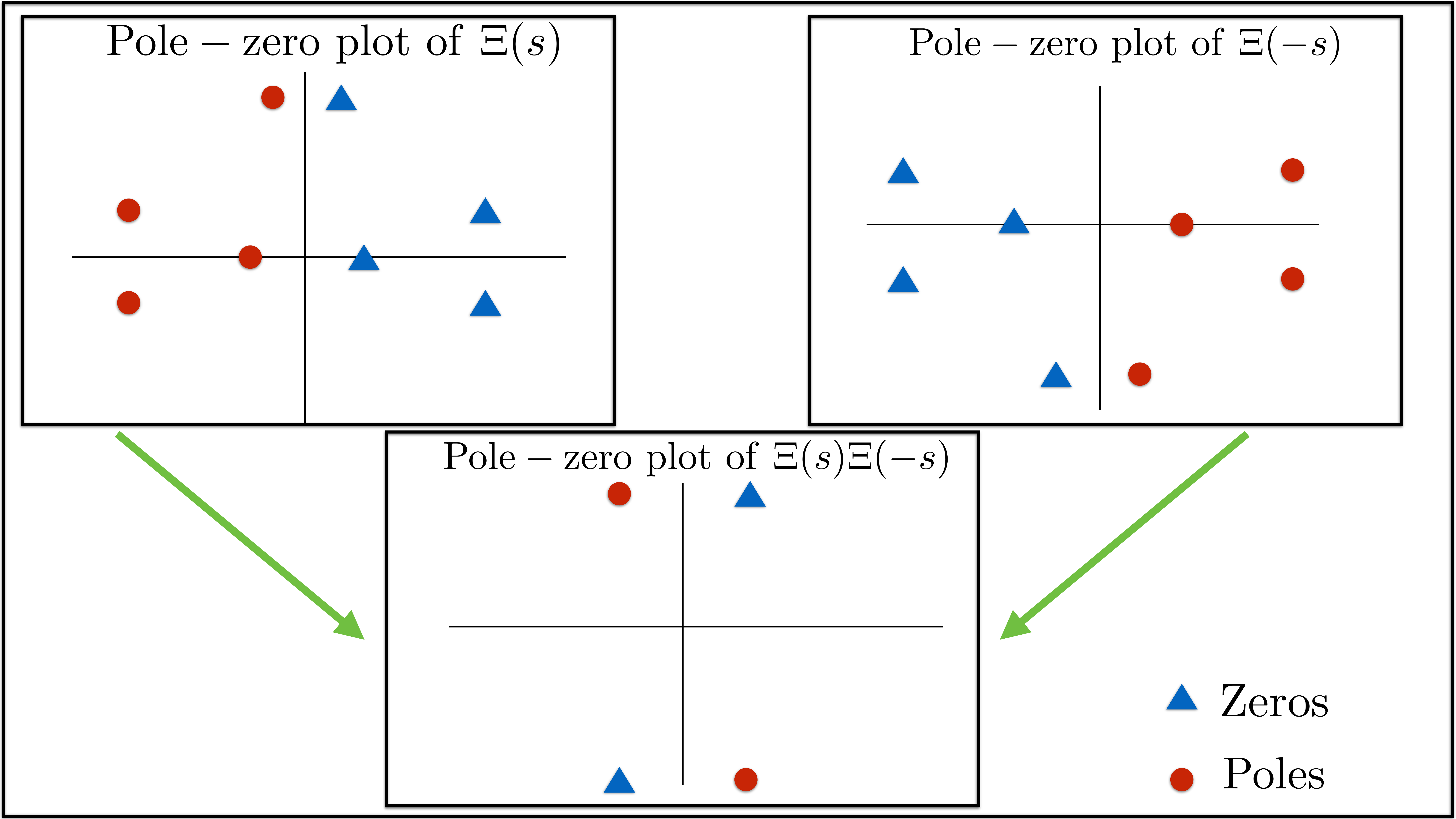}
\caption{There are two types of cancellations in $\Xi(s)\Xi(-s)$. Either (i) when $\Xi(s)$ has a real pole  or (ii) when there is a pole,  $p$, and zero, $z$, of $\Xi(s)$ such that $p=-z$. Both are illustrated here.
\label{pz}}
\end{figure}

If no zero-pole cancellations occur, then $\sigma(A)$ can be identified from $\Xi(s)\Xi(-s)$ and the transfer function can be reconstructed. In this case the system is globally minimal. 

If cancellations do occur then this happens in one of the two types of situations: 

a) real eigenvalue: if $\lambda_i \in \mathbb{R}$ then the corresponding term in the above product cancels

b) complex conjugate pairs: if $\lambda_i = \overline{\lambda}_j$ then the $i$ and $j$ terms in the product cancel against each other. 

In both cases, the remaining power spectrum has the same form, and can be seen as the power spectrum of a series product of one dimensional passive systems, with dimension smaller than $n$, and therefore the system is not minimal.  

This shows the equivalence of i), iii) and iv) while the equivalence of i) and ii) follows from Theorem \ref{equivalence}.


2) The discussion so far shows that the transfer function factorises as the product $\Xi(s)=\Xi_{\mathrm{m}}(s)\Xi_{\mathrm{p}}(s)$ of a part corresponding to eigenvalues $\lambda_i\in\mathcal{P}$, which has trivial power spectrum due to zero-pole cancellations, and the part corresponding to the complement which does not exhibit any cancelations. A system with transfer function $\Xi(s)$ can be realised as series product $\mathcal{G}_{m}\triangleleft\mathcal{G}_{p}$ of two separate passive systems 
with  transfer functions $\Xi_{\mathrm{m}}(s)$ and $\Xi_{\mathrm{p}}(s)$. 
As argued before, $\mathcal{G}_{p}$ has a pure stationary state which is uncorellated to $\mathcal{G}_{m}$ or the output, while 
$\mathcal{G}_{m}$ has a fully mixed state which is correlated to the output.

Since $\mathcal{G}_{p}$ does not contribute to the power spectrum, a globally minimal realisation is provided by $\mathcal{G}_{m}$
\begin{equation}\label{partm}
\Xi_{m}(s)= \prod_{i \notin \mathcal{P}}  \frac{s+ \overline{\lambda}_i}{s-\lambda_i}
\end{equation}
Each fraction in \eqref{partm} represents a bona-fide PQLS $\mathcal{G}_{m,i}$ with Hamiltonian  and coupling parameters 
$\Omega_i=-\mathrm{Im}\lambda_i $ and $1/2 |c_i|^2=-\mathrm{Re} \lambda_i $.
\end{proof}

With this Theorem it is now possible to construct a globally minimal realisation of the PQLS \textit{directly} from the power spectrum. Moreover, global minimality of PQLSs  may be completely understood in terms of the spectrum of the system matrix $A$, just as was the case for minimality, stability, observability and controllability \cite{Guta2, Gough2}. 
An immediate corollary of this is the following:
\begin{Corollary}\label{pj}
A SISO  PQLS $\mathcal{G}=(C,\Omega)$, with pure input ${V}(N, M)$ has a pure stationary state if and only if either holds
\begin{enumerate}
\item The input is vacuum
\item The eigenvalues of $A$ are real or come in complex conjugate pairs. 
\end{enumerate}
\end{Corollary}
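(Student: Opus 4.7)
The plan is to combine Theorem~\ref{equivalence} with Theorem~\ref{sisogm} and read off the conclusion.

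First I would establish the guiding equivalence: \emph{the stationary state of $\mathcal{G}$ is pure if and only if the globally minimal reduction of $\mathcal{G}$ has zero modes.} This follows from part (2) of Theorem~\ref{equivalence}, which tells us that (up to transfer function equivalence) $\mathcal{G}$ cascades into a pure component $\mathcal{G}^p$ and a fully mixed component $\mathcal{G}^m$, the latter being the globally minimal reduction. The joint stationary state on $\mathcal{G}^p\otimes\mathcal{G}^m$ is a product of a pure Gaussian state (on $\mathcal{G}^p$) and a fully mixed Gaussian state (on $\mathcal{G}^m$), hence is pure if and only if $\mathcal{G}^m$ is trivial (zero-dimensional).

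Next I would split on whether $M=0$ or $M\neq 0$. If $M=0$, the purity condition for a one-mode Gaussian state $V(N,M)$ from Sec.~\ref{QHO} reads $N(N+1)=|M|^2=0$, forcing $N=0$, so $V(N,M)=V_{\mathrm{vac}}$ and the input is vacuum. A passive, Hurwitz system driven by vacuum relaxes to the vacuum stationary state, which is pure; this handles case~(1). If on the other hand $M\neq 0$, I invoke Theorem~\ref{sisogm}, which gives an explicit globally minimal realisation as the series product of the one-mode systems $\mathcal{G}_{m,i}$ indexed by the eigenvalues $\lambda_i\notin\mathcal{P}$, where $\mathcal{P}$ is the set of real eigenvalues together with those appearing in complex-conjugate pairs. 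By the guiding equivalence above, the stationary state is pure iff this series product is empty, i.e.\ iff every eigenvalue of $A$ lies in $\mathcal{P}$, which is precisely case~(2).

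The argument is essentially a corollary of what has already been proved, and I do not expect a genuine obstacle. The one point requiring care is the first step: one must check that the decomposition in Theorem~\ref{equivalence} is compatible with purity of the \emph{full} stationary state (not just of some marginal), so that ``pure stationary state'' is genuinely equivalent to ``trivial mixed component''. Once this is clear, the rest is a direct specialisation of Theorem~\ref{sisogm} to the zero-dimensional globally minimal case, together with the trivial vacuum case.
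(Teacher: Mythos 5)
Your proposal is correct and follows exactly the route the paper intends: the Corollary is stated as an immediate consequence of Theorem~\ref{sisogm} (together with the pure/mixed decomposition of Theorem~\ref{equivalence}), and your argument — purity of the stationary state $\iff$ trivial mixed component, the $M=0$ case reducing to vacuum by the purity condition $N(N+1)=|M|^2$, and the $M\neq 0$ case reducing to all eigenvalues lying in $\mathcal{P}$ via part (2) of Theorem~\ref{sisogm} — is precisely the fleshed-out version of that. The point you flag about the stationary state factorising over the pure and mixed modes is indeed the only step needing care, and it is guaranteed by the block-diagonal form of $P$ established in the proof of Theorem~\ref{equivalence}.
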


From Theorem \ref{sisogm} there are two types of `elementary' systems that are non-identifiable from the power spectrum for 
arbitrary input ${V}(N,M)$. Written in the doubled up notation, these are either:
\begin{enumerate}
\item[a)]\label{tyr1} one mode systems of the form $\mathcal{G}_1=\left(\Delta(c,0),0\right)$
\item[b)]\label{tyr2} two mode systems of the form \\$\mathcal{G}_2=\left(\Delta(c,0),\Delta(\Omega_-,0)\right)\triangleleft\left(\Delta(c,0),\Delta(-\Omega_-,0)\right)$.  
\end{enumerate}
The system $\mathcal{G}_2$ having trivial power spectrum could be interpreted as destructive interference between the first and the second system. That is, the first system is cancelling or absorbing the second.  We discuss this idea further in Ch. \ref{DUAL},  where we develop the notion of  \textit{quantum absorbers}.

Now it is not immediately obvious that these systems are consistent with the non-identifiable systems in Theorem \ref{mainresult}.  As an example we will show that this is indeed the case in the case for 
$\mathcal{G}_1$ ($\mathcal{G}_2$ is similar). 
\begin{exmp}
Consider system $\mathcal{G}_1$ for input $V(N, M)$, which  has trivial power spectrum $V(N, M)$. Viewed in the vacuum basis of the field the system will be 
\begin{equation}\label{none}
\tilde{\mathcal{G}}_1=\left(S^{\flat}_{\mathrm{in}}\Delta(c,0),0\right)
\end{equation}
 (see Sec. \ref{QHO})) and the power spectrum will be vacuum. 
As $S^{\flat}\Delta\left(c_-,0\right)=\Delta\left(c_-,0\right)S^{\flat}$, it follows that $\tilde{\mathcal{G}}_1$ must be TFE to the system $\left(\Delta(c,0),0\right)$ in the vacuum basis. Therefore, because this system is passive, we have consistency with  Theorem \eqref{mainresult}.

In fact we can even see that \eqref{none} is passive by directly computing its transfer function. One can check that
\[\Xi_-(s)=\frac{s-|c|^2/2}{s+|c|^2/2} \,\,\,\mathrm{and}\,\,\,\Xi_+(s)=0.\]
\end{exmp}

Finally, it seems that the assumption of global minimality seems to be not very restrictive; we illustrate this in the form of an example.

\begin{exmp}
Consider the following SISO PQLS with two internal modes: \[\mathcal{G}=\left((0, 2\sqrt{2}), \frac{1}{2}\left(\begin{smallmatrix} 4+x &4-x\\4-x&4+x\end{smallmatrix}\right)\right),\]
where $x\in\mathbb{R}$. We examine for which values of $x$ the system is globally minimal for squeezed inputs. One can first check that the system is minimal if and only if $x\neq 4$. In Fig. \ref {gm2} we plot the imaginary parts of the eigenvalues of $A$ and $A^\dagger$.  
\begin{figure}
\centering
\includegraphics[scale=0.25]{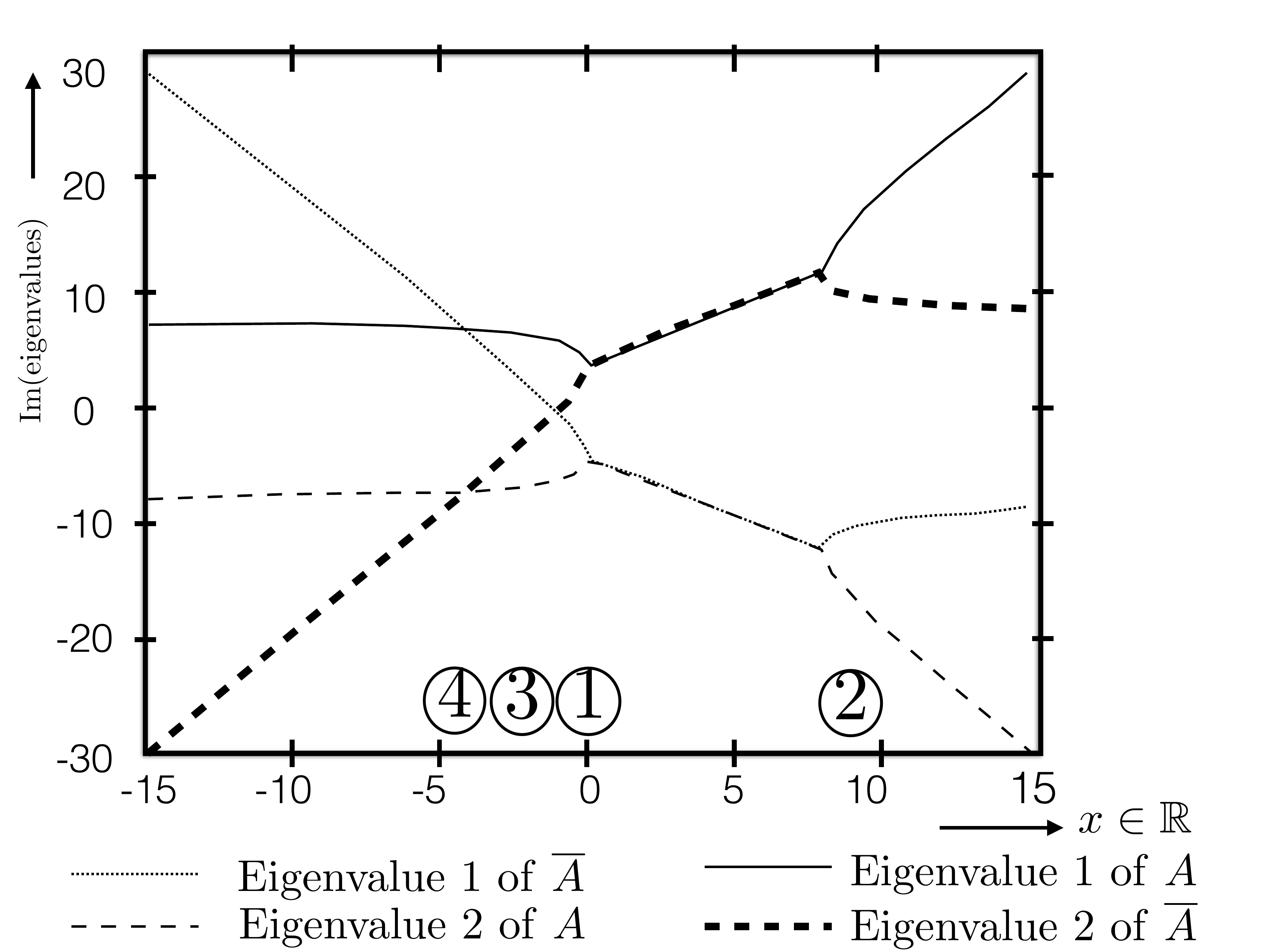}
\caption{Eigenvalues of $A$ and $A^\dagger$ as function of $x$. \label{gm2}}
\end{figure}
By Theorem \ref{sisogm}, the system is non-globally minimal if any of the lines representing the eigenvalues of $A$ intersect those of 
$A^\dagger$. 
There are 4 points of interest that have been highlighted in the figure 
\begin{itemize}
\item[\textcircled{1}] $x=0$: crossing of eigenvalues of $A$ but not with eigenvalues of $A^\dagger$; system is globally minimal.
\item[\textcircled{2}] $x=8$: crossing of eigenvalues $A$ but not with eigenvalues of $A^\dagger$; system is globally minimal.
\item[\textcircled{3}] $x=-1$: An eigenvalue of $A$ coincides with one of $A^\dagger$. Therefore the dimension of the pure component is 1. This occurs when one eigenvalue is real.
\item[\textcircled{4}] $x=-4$: Both eigenvalues of  $A$ coincide with those of $A^\dagger$, and form a complex conjugate pair. 
Therefore the dimension of the pure space is 2.
\end{itemize}
In summary, there were only two values of $x$ for which the system is non-globally minimal.
\end{exmp}
The diagrammatical method used in this example is rather neat, as it not only highlights clearly whether or not a system is globally minimal but also the size of the globally minimal subsystem that is transfer function-identifiable.

\subsection{MIMO}\label{MIMO1}

We now extend our understanding of global minimality to \textit{MIMO} PQLSs. 
Recall from Sec. \ref{dogs} that under global minimality PSE systems are related via a symplectic transformation on the system. Therefore, given  a pure input $V(N,M)$ and a system $(C, \Omega)$ it remains to understand when it is globally minimal.

The MIMO case is more involved than the SISO case because the input correlations between channels become important, which wasn't the case for SISO. Consequently we  don't get such a simple condition on the system matrix, $A$, determining global minimality (as in the SISO case). 
The information available about the system is given by the quantities
\begin{align}
\label{th1} &\Xi(-i\omega)(N^T+1)\Xi(-i\omega)^{\dag}\\
&\label{sq1} \Xi(-i\omega)M\Xi(i\omega)^T.
\end{align}
Note that \eqref{th1} wasn't present in the SISO case.

We will investigate global minimality  first for $n=1$ modes, then $n=2$ and finally for arbitrary $n$. The reason for doing it this way is because, as we shall see, the non-globally minimal part turns out to be decomposable into one and two mode blocks connected in series.   For $n=1$ mode we have the following result:

\begin{Lemma}\label{onemode}
A MIMO PQLS, $\left(C, \Omega\right)$, with one internal mode is non-globally minimal for field  input, ${V}(N,M)$, if and only if either of the following conditions hold:

(i) $\Omega=0$ and $\left[CC^{\dag}, N^T\right]=0$ and $CC^{\dag}M-M\left(CC^{\dag}\right)^T=0$, or

(ii) $CC^{\dag}M=0$

\end{Lemma}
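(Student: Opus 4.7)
The plan is to reduce non-global-minimality to the constancy $\Psi(s) \equiv V(N,M)$. By Theorem \ref{equivalence}, a one-mode PQLS is globally minimal precisely when its stationary state is fully mixed, and for a one-mode system the only strictly smaller minimal realisation is the zero-dimensional one, whose power spectrum is trivially $V(N,M)$. So I would begin from the equivalence
\[
\mathcal{G}\text{ non-globally minimal} \iff \Xi(s)\,V(N,M)\,\Xi(-s^*)^{\dag} = V(N,M)\;\text{ for all }s,
\]
and then translate it into algebraic conditions on $(C_-,\Omega,N,M)$.

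First I would exploit the passive doubled-up form $\Xi(s) = \Delta(\Xi_-(s),0)$ with $\Xi_-(s) = I_m - C_-C_-^{\dag}/(s - A_-)$ and the scalar $A_- = -i\Omega - \frac{1}{2}C_-^{\dag}C_-$. Unpacking the matrix identity $\Psi(s) = V(N,M)$ block by block, the $(2,1)$ and $(2,2)$ blocks are determined by the others, so only the $(1,1)$ and $(1,2)$ blocks carry information. The $(1,1)$ block, combined with the unitarity identity $\Xi_-(-s^*)^{\dag} = \Xi_-(s)^{-1}$, collapses to $[\Xi_-(s),N^T] = 0$ for all $s$; since $C_-C_-^{\dag}$ is the only non-scalar ingredient of $\Xi_-(s)$, this is exactly the common requirement $[C_-C_-^{\dag},N^T] = 0$.

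The substantive step is the $(1,2)$-block identity $\Xi_-(s)\,M\,\Xi_-(-s)^T = M$. Substituting the explicit form of $\Xi_-$, clearing the denominator $(s-A_-)(s+A_-)$ and equating the $s^1$ and $s^0$ coefficients yields two equations: a linear constraint
\[
\text{(B1)}\quad C_-C_-^{\dag}M - M(C_-C_-^{\dag})^T = 0,
\]
together with a quadratic one. The key algebraic simplification is that $C_-C_-^{\dag}$ has rank one for a single mode, so $(C_-C_-^{\dag})^2 = (C_-^{\dag}C_-)\,C_-C_-^{\dag}$; using this together with (B1) and the formula for $A_-$, the quadratic equation collapses to the clean dichotomy $\Omega \cdot C_-C_-^{\dag}M = 0$. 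This is what splits the Lemma into case (i) $\Omega = 0$ (with (B1) and (A) remaining as the nontrivial conditions) and case (ii) $C_-C_-^{\dag}M = 0$.

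Finally I would verify that case (ii), as stated without an explicit $[C_-C_-^{\dag},N^T] = 0$ clause, is actually consistent. Since any bona-fide bosonic covariance has $M = M^T$, the identity $C_-C_-^{\dag}M = 0$ already forces $M(C_-C_-^{\dag})^T = 0$, so (B1) is automatic; and the purity of $V(N,M)$, equivalent to the matrix identity $N^T(N+1) = MM^{\dag}$, left-multiplied by $C_-^{\dag}$ (which annihilates $M$) and combined with the invertibility of $N+1$, forces $C_-^{\dag}N^T = 0$ and hence $[C_-C_-^{\dag},N^T] = 0$ for free. The main obstacle I anticipate is bookkeeping of conjugates, transposes and $\flat$'s in $\Psi(s)$: correctly identifying that the $(1,2)$ block involves $\Xi_-(-s)^T$, rather than $\Xi_-(s)^T$ or $\Xi_-(-s^*)^T$, is what ultimately fixes the sign in (B1) and the dichotomy.
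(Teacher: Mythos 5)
Your proposal is correct and follows essentially the same route as the paper: reduce non-global minimality of a one-mode system to triviality of the power spectrum, extract $[CC^{\dag},N^T]=0$ from the thermal block, and derive the dichotomy $\Omega\, CC^{\dag}M=0$ from the squeezing block $\Xi_-(s)M\Xi_-(-s)^T=M$ (the paper obtains this by matching the pole locations of the two sides rather than by clearing denominators and equating coefficients, but it is the same computation). Your closing check that case (ii) needs no separate condition on $N$ --- using $M=M^T$ and the purity identity to force $C^{\dag}N^T=0$ --- is in fact more careful than the paper's terse remark at that point.
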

\begin{proof}
For a one-internal mode system to be non-globally minimal, its power spectrum must be trivial. This implies  the following:
\[\Xi(-i\omega)N^T=N^T\Xi(-i\omega) \,\, \mathbf{and} \,\, \Xi(-i\omega)M=M\overline{\Xi(+i\omega)}\]
from \eqref{th1} and \eqref{sq1}.
In particular, the second of these requires that either $CC^{\dag}M=0$ or $\Omega=0$, otherwise the poles on the left-hand side and the right-hand side will differ. Note that if $CC^{\dag}M=0$ then $CC^{\dag}M$ is also zero, hence the power spectrum is trivial. The remaining condition in case (i) follows from the thermal part of the power spectrum.
\end{proof}
Note that case (i) in Lemma \ref{onemode} is the MIMO extension of the one mode SISO elementary system  from Sec. \ref{SISO1}.    In fact, by an appropriate change of basis a one-mode MIMO system can be viewed as a one mode SISO system (with $m-1$ ancila channels). In this basis, case (i) corresponds to an input supported only on the SISO channel, whereas case (ii) may be interpreted  as an input  supported only on the ancilla channels. 

We have the following result for  $n=2$  modes.

\begin{thm}\label{twomode}
A MIMO PQLS, $\left(C, \Omega\right)$, with two internal modes  $(n=2)$ is non-globally minimal for  input field, ${V}(N,M)$, if and only if there exists a TFE cascaded system\footnote{Note that a MIMO PQLS can always be realised as a cascade (see Sec. \ref{seriesp}). For a two mode PQLS there will be two ways to reorder the modes, corresponding to reordering the elements in the Schur decomposition of the system matrix, $A$ \cite{Petersen2}.},   $\mathcal{G}=\left(d,\Omega_2\right)\triangleleft\left(c,\Omega_1\right)$, such that any of the following conditions hold:
\begin{enumerate}
\item $\Omega_2=0$ and $\left[dd^{\dag}, N^T\right]=0$ and $dd^{\dag}M-M\left(dd^{\dag}\right)^T=0$,
\item $dd^{\dag}M=0$

\item $\Omega_1=-\Omega_2\neq0$ and $c^{\dag}c=d^{\dag}d$, and
$$cc^{\dag}N^T=N^Tcc^{\dag}\quad dd^{\dag}N^T=N^Tdd^{\dag}\quad cc^{\dag}dd^{\dag}N^T=N^Tcc^{\dag}dd^{\dag}$$. 
$$cc^{\dag}M=M\overline{cc^{\dag}}\quad dd^{\dag}M=M\overline{dd^{\dag}} \quad cc^{\dag}dd^{\dag}M=M\overline{cc^{\dag}dd^{\dag}}$$.
\item $\Omega_1=-\Omega_2=0$, and 
$$\left(cc^{\dag}+dd^{\dag}\right)N^T=N^T\left(cc^{\dag}+dd^{\dag}\right)$$
$$\left(cc^{\dag}dd^{\dag}-\frac{1}{2}c^{\dag}cdd^{\dag}-\frac{1}{2}d^{\dag}dcc^{\dag}\right)N^T=N^T\left(cc^{\dag}dd^{\dag}-\frac{1}{2}c^{\dag}cdd^{\dag}-\frac{1}{2}d^{\dag}dcc^{\dag}\right)$$
$$\left(cc^{\dag}+dd^{\dag}\right)M=M\overline{\left(cc^{\dag}+dd^{\dag}\right)}$$
$$\left(cc^{\dag}dd^{\dag}-\frac{1}{2}c^{\dag}cdd^{\dag}-\frac{1}{2}d^{\dag}dcc^{\dag}\right)M=M\overline{\left(cc^{\dag}dd^{\dag}-\frac{1}{2}c^{\dag}cdd^{\dag}-\frac{1}{2}d^{\dag}dcc^{\dag}\right)}.$$
\end{enumerate}
\end{thm}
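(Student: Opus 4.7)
The plan is to combine the cascade realisation of Sec.\ \ref{seriesp} with a direct analysis of the power spectrum, exploiting the passive-system simplification $\Xi(-s^*)^\dagger = \Xi(s)^{-1}$, under which
\[
\Psi(s) = \Xi(s)\, V\, \Xi(s)^{-1}
\]
becomes a rational similarity transformation of $V$. Any two-mode PQLS admits a cascade decomposition $\mathcal{G} = (d, \Omega_2) \triangleleft (c, \Omega_1)$, and the Schur-reordering freedom (cf.\ Example \ref{beat}) supplies two essentially distinct such decompositions. Non-global minimality then amounts to the existence of a lower-dimensional PQLS producing the same $\Psi(s)$; equivalently, $\Psi(s)$ is realised either by the zero-dimensional ``system'' (trivial spectrum $V$) or by a one-mode PQLS.

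For the reverse direction ($\Leftarrow$), each of the four conditions is verified by direct substitution into $\Xi_2(s)\Xi_1(s)\, V\, \Xi_1(s)^{-1}\Xi_2(s)^{-1}$. Conditions (1) and (2) are precisely the Lemma \ref{onemode} hypotheses applied to the one-mode subsystem $(d, \Omega_2)$; after tracking the pole at $\lambda_2 = -i\Omega_2 - \tfrac12 d^\dagger d$ one shows that $\Xi_2$ absorbs trivially and $\Psi$ collapses to the power spectrum of the single mode $(c, \Omega_1)$. Conditions (3) and (4) are tailored so that $\Xi_2(s)\Xi_1(s)$ commutes with $V$ as rational functions of $s$: the constraints $\Omega_1 + \Omega_2 = 0$ and $c^\dagger c = d^\dagger d$ force the poles of $\Xi$ at $\lambda_1, \lambda_2$ and of $\Xi^{-1}$ at $-\overline{\lambda_1}, -\overline{\lambda_2}$ to cancel pairwise, mirroring the SISO mechanism of Theorem \ref{sisogm}; the polynomial identities on $N$ and $M$ are exactly the residue conditions that this cancellation imposes.

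The forward direction ($\Rightarrow$) is the main obstacle. The strategy is to apply Theorem \ref{equivalence} in its vacuum-basis form, absorbing the input change of basis $S_{\mathrm{in}}$ from Sec.\ \ref{QHO} into the field, so as to decompose the two-mode system as a passive pure component $\mathcal{G}_p$ of dimension $k_p \in \{1, 2\}$ cascaded with a globally minimal mixed component $\mathcal{G}_m$, and then translate this decomposition back into the original input basis $V(N, M)$. If $k_p = 1$, the Schur-reordering freedom lets us place $\mathcal{G}_p$ as the second factor of the cascade; translating ``$\mathcal{G}_p$ has a pure passive stationary state with respect to $V$'' into the coupling-Hamiltonian parameters of $(d, \Omega_2)$ reproduces the Lemma \ref{onemode} hypotheses, and hence cases (1) and (2). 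If $k_p = 2$, the entire system is its own passive pure component, so $\Xi(s) V \Xi(s)^{-1} = V$ identically; expanding both sides with $\Xi_i(s) = 1_m - P_i/(s - \lambda_i)$, where $P_1 = cc^\dagger$ and $P_2 = dd^\dagger$, and matching residues at each distinct pole yields the conditions (3) and (4), with the dichotomy between them set by whether $\mathrm{Spec}(A)$ consists of a complex-conjugate pair (case (3), $\Omega_1 = -\Omega_2 \neq 0$) or of two real eigenvalues (case (4), $\Omega_1 = \Omega_2 = 0$), in direct analogy with the two elementary SISO families of Theorem \ref{sisogm}; the commutation conditions on $N$ and $M$ then emerge from the leftover residue equations once the pole-zero cancellations have been enforced.
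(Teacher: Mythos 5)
Your overall strategy coincides with the paper's own proof (Appendix \ref{nike}): decompose via Theorem \ref{equivalence} into a pure component cascaded with a globally minimal mixed component, dispose of the one-mode-pure branch with Lemma \ref{onemode}, and in the two-mode-pure branch impose $\Xi(-i\omega)N^T=N^T\Xi(-i\omega)$ and $\Xi(-i\omega)M=M\overline{\Xi(i\omega)}$ and match poles and residues of the resulting rational identities. The reverse direction and your $k_p=1$ branch are in order.

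The gap is in your $k_p=2$ branch. You assert that when the entire two-mode system has trivial power spectrum the spectrum of $A$ must be either a complex-conjugate pair (yielding case (3)) or two real eigenvalues (yielding case (4)). That dichotomy is not exhaustive. Comparing pole locations on the two sides of $\Xi(-i\omega)M=M\overline{\Xi(i\omega)}$ only forces $\{\lambda_1,\lambda_2\}=\{\overline{\lambda}_1,\overline{\lambda}_2\}$ when the residues at those poles are nonvanishing; if $\Omega_1\neq-\Omega_2$ the identity can still hold with $M\neq 0$ provided the residues at the unmatched poles vanish, which amounts to each mode separately decoupling from the squeezing (two independent type-(ii) cancellations in the sense of Lemma \ref{onemode}). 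The paper treats this explicitly as ``Case 3)'' of its proof and shows it collapses back to conditions (1) or (2) of the theorem, applied to the second factor of a suitable cascade ordering. As written, your forward implication does not cover, for example, a two-mode system with $cc^{\dag}M=dd^{\dag}M=0$ and generic unequal $\Omega_1,\Omega_2$: it is non-globally minimal but fits neither of your two residue-matching scenarios. The repair is short --- when the pole sets of $\Xi(-i\omega)M$ and $M\overline{\Xi(i\omega)}$ fail to coincide, the corresponding residues must vanish mode by mode, giving $dd^{\dag}M=0$ and hence condition (2) --- but the case analysis must include it for the proof to close.
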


The proof of this theorem is given in Appendix \ref{nike}.
Let us understand the interpretation of the conditions required for non-global minimality from Theorem \ref{twomode}. Firstly, case (1) and (2) are the same as Lemma \ref{onemode}, being one mode cancellations. Let us understand  cases (3) and (4)  in the following example. 

\begin{exmp}
Suppose that we have two input channels and work in the  field basis where there are two  independent squeezed modes (see Sec. \ref{Bosonic}). That is, 
$$N=\left(\begin{smallmatrix}N_1&0\\0&N_2\end{smallmatrix}\right) \quad \mathrm{and}\quad M=\left(\begin{smallmatrix}M_1&0\\0&M_2\end{smallmatrix}\right).$$
Let $c=(c_1, c_2)^T$ and $d=(d_1, d_2)^T$ in this diagonal basis

Let us look at case (3).  If $N_1\neq N_2$  then conditions (3) imply that $c_2=d_2=0$ or $c_1=d_1=0$. That is, we have a SISO series product of two one-mode systems satisfying the conditions of Corollary \ref{pj}, and an ancilla channel. Now, if $N_1=N_2$, and taking $M_1=M_2$ (in general $M_1$ and $M_2$ differ by a phase, which can be undone by changing basis of the input), then conditions (3) imply that $cc^{\dag}=dd^{\dag}=\overline{cc^{\dag}}$. Therefore, $CC^{\dag}=2cc^{\dag}$ and so $CC^{\dag}$ is rank one. The upshot is that this system  is also a  SISO system with ancilla, this time viewed in a different (rotated) basis. Following this analysis we can conclude that case (3) is the MIMO analogue of the two-mode elementary SISO system  from Sec. \ref{SISO1}.

Case (4), where $\Omega_1=-\Omega_2=0$,  reduces to the following subcases:
\begin{itemize}
\item $N=N_1{1}$ and taking $M_1=M_2$, then $cc^{\dag}=\overline{cc^{\dag}}$, $dd^{\dag}=\overline{dd^{\dag}}$.
\item $CC^{\dag}$ is a multiple of identity and either i)$|c_1|^2=|d_1|^2$ and $|c_2|^2=|d_2|^2$ or ii)$|c_2|^2=|d_1|^2$ and $|c_1|^2=|d_2|^2$.
\end{itemize}
The first subcase consists of two SISO one-mode cancellations, i.e the stationary state is pure and separable. The second is a concatenation of two identical single mode SISO systems. 
In summary in this choice of field basis where there is no entanglement between the channels, all of the cases of non-global minimality in Theorem \ref{twomode} reduce to SISO cancellations (Sec. \ref{SISO1}). 
\end{exmp}

Note that in Theorem \ref{twomode} and Lemma \ref{onemode} we didn't use the purity assumption.


\begin{thm}\label{doned}
Suppose that an $n$ mode MIMO PQLS, $(C, \Omega)$, with pure input field ${V}(N,M)$ has a pure stationary state. Then there exists a TFE cascade realisation
$\mathcal{G}=\left(c_1,\Omega_1\right)\triangleleft\left(c_2,\Omega_2\right)...\triangleleft\left(c_n,\Omega_n\right)$ such that either 
\begin{itemize}
\item $\left(c_1,\Omega_1\right)$ has trivial power spectrum ($\Psi(\omega)=V$), or
\item $\left(c_1,\Omega_1\right)\triangleleft\left(c_2,\Omega_2\right)$ has trivial power spectrum ($\Psi(\omega)=V$).
\end{itemize}
By repeating this statement iteratively, then a system, $\left(c_i,\Omega_i\right)$, within the cascade   either has trivial power spectrum or  $\left(c_{i+1},\Omega_{i+1}\right)\triangleleft \left(c_i,\Omega_i\right) $ does. 
\end{thm}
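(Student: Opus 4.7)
My plan is to prove the theorem by induction on the number $n$ of internal modes, using Lemma \ref{onemode} as the base case $n=1$ and Theorem \ref{twomode} to close off the two-mode base case. The two central tools are (i) the cascade realisation of an arbitrary PQLS via the symplectic Schur decomposition of the system matrix $A$, together with the freedom to permute the ordering of the one-mode blocks (Sec.~\ref{seriesp}, \cite{Nurdin6}), and (ii) Theorem \ref{equivalence}, which tells us that purity of the stationary state is equivalent to the power spectrum being trivial ($\Psi(\omega)=V$). In particular, in the vacuum basis of the field we may assume throughout that $\Psi(\omega)=V_{\mathrm{vac}}$.

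For the inductive step I would argue as follows. Fix any cascade ordering $(c_1,\Omega_1)\triangleleft\cdots\triangleleft(c_n,\Omega_n)$ realising $\mathcal{G}$, and compute the reduced stationary state on the outermost mode $(c_1,\Omega_1)$ by tracing the global stationary covariance $P$, obtained from the Lyapunov equation \eqref{eq.Lyapunov}, over modes $2,\dots,n$. Because the joint stationary state is pure Gaussian, Theorem \ref{WOLFF} applied to the bipartition ``mode $1$ versus modes $2,\dots,n$'' leaves only two possibilities: either (a) mode $1$ is in a pure one-mode state and factorises from the remainder, or (b) mode $1$ is entangled with exactly one Schmidt partner on the other side, the joint reduced state being a pure two-mode squeezed state and the remaining $n-2$ modes carrying a pure state on their own.

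In case (a), the fact that mode $1$ decouples from the rest in the stationary state, together with $\Psi(\omega)=V_{\mathrm{vac}}$, forces the one-mode head $(c_1,\Omega_1)$ to have trivial power spectrum; one reads off that the algebraic conditions of Lemma \ref{onemode} are met. We then peel off mode $1$ and apply the induction hypothesis to the $(n-1)$-mode cascade $(c_2,\Omega_2)\triangleleft\cdots\triangleleft(c_n,\Omega_n)$, which still has pure stationary state and trivial power spectrum. In case (b), I use the reordering freedom of the symplectic Schur decomposition of $A$ to permute the Schmidt partner of mode $1$ into slot $2$ of the cascade; the head $(c_1,\Omega_1)\triangleleft(c_2,\Omega_2)$ now has a pure two-mode squeezed reduced stationary state that factorises from the tail, and by the same argument must have trivial power spectrum. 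Matching the resulting relations between $c_1,c_2,\Omega_1,\Omega_2$ against Theorem \ref{twomode} identifies which of its four cases applies. Peeling off the two-mode head and recursing on the remaining $(n-2)$-mode cascade completes the step.

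The main obstacle is the reordering argument in case (b). One has to verify that the freedom provided by the symplectic Schur decomposition of $A$ is rich enough to place the Schmidt partner of mode $1$ adjacent to it, and moreover that this reordering yields a genuine sub-cascade whose reduced stationary state coincides with the Schmidt component identified via Theorem \ref{WOLFF}. A subsidiary concern is non-uniqueness of the partner when the global stationary state contains several two-mode squeezed blocks of equal squeezing parameter: one must pick a compatible pairing so that the peeling procedure iterates consistently. Both issues should be resolvable by exploiting the rigidity of the Williamson-type normal form for pure Gaussian states, but the explicit construction of the reordering and the case-by-case verification against Theorem \ref{twomode} is where the genuine work lies.
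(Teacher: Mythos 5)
Your overall strategy—take a cascade realisation, look at the head mode, and split into the case where its reduced stationary state is pure (peel off one mode) versus entangled with a single Schmidt partner via Theorem \ref{WOLFF} (peel off two modes)—is exactly the route the paper takes. Case (a) and the iteration are fine. The gap is in case (b), precisely at the point you flag as "where the genuine work lies" but do not carry out, and the tool you propose for it is not adequate. The Schmidt partner of mode $1$ produced by Theorem \ref{WOLFF} is obtained by a \emph{general local symplectic transformation} of modes $2,\dots,n$; it is not one of the original cascade modes, so the reordering freedom of the symplectic Schur decomposition (which only permutes the existing one-mode blocks) cannot in general place it in slot $2$. What is needed is a TFE symplectic change of basis on the tail, as permitted by Theorem \ref{symplecticequivalence}.

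More importantly, even after the partner mode is isolated by such a transformation, knowing that the pair (mode $1$, partner) carries a pure two-mode squeezed state that factorises from the remaining $n-2$ modes does \emph{not} by itself show that the transformed tail is again a cascade with the partner mode first, i.e.\ that the head $\left(c_1,\Omega_1\right)\triangleleft\left(c_2,\Omega_2\right)$ is a genuine sub-cascade with trivial power spectrum. One must prove that the off-diagonal block of the transformed system matrix coupling the partner mode to the remaining modes vanishes in the right direction ($A_{2_{12}}=0$) and that the other block has the series-product form $A_{2_{21}}=-C^{'\flat}_{2_2}C'_{2_2}$. This is the content of Theorem \ref{parfg}, which the paper proves by writing out the $3\times 3$ block Lyapunov equation for the stationary covariance and invoking the quantum-absorber relations \eqref{cheese1}--\eqref{cheese2} to identify the partner's coupling and drift matrices explicitly. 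Without this block-decoupling argument the peeling step does not close, so as written your induction has a genuine hole; filling it requires either reproducing the absorber computation or an equivalent Lyapunov-equation argument.
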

\begin{proof}
For any PQLS there exists TFE cascaded system \cite{Nurdin3}. There are two cases; either the stationary state of the first system in the cascade is pure, or it is mixed. If it is pure then case the first bullet point in the theorem follows immediately. If it is mixed case then the second bullet point follows; the proof of this  requires  more theory, which we  postpone  until  Ch. \ref{DUAL}. In particular, the proof  
 follows as a direct application of a Theorem \ref{parfg}. 
\end{proof}

This result says that a PQLS with a pure stationary state can be decomposed into smaller systems of one or two modes each with a pure stationary state, connected via the series product.  Note that if there is to be a two-mode pure stationary state then this theorem says that the two modes must be adjacent in (one of) the cascade realisation(s). It also enables us to write the following corollary, which shows that determining whether or not a system is global minimality can be answered by considering only the first two modes in (every) cascade realisation. 

\begin{Corollary}\label{final}
A MIMO PQLS, $\left(C, \Omega\right)$, with $n$ internal modes is non-globally minimal for pure input field, $\mathbf{V}(N,M)$, if and only if there exists a  cascade realisation, $\mathcal{G}=\left(c_1,\Omega_1\right)\triangleleft\left(c_2,\Omega_2\right)...\triangleleft\left(c_n,\Omega_n\right)$ such on the first two modes any of the conditions in Theorem \ref{twomode} hold.
\end{Corollary}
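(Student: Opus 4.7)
My plan is to derive this corollary essentially as a packaging of Theorems~\ref{equivalence}, \ref{doned}, and \ref{twomode}, together with Lemma~\ref{onemode}. The reverse implication ($\impliedby$) is the easy direction, and I would dispatch it first: if in some cascade realisation $\mathcal{G}=(c_1,\Omega_1)\triangleleft\cdots\triangleleft(c_n,\Omega_n)$ either the one-mode head $(c_1,\Omega_1)$ satisfies the conditions of Lemma~\ref{onemode} (cases (1)--(2) of Theorem~\ref{twomode}) or the two-mode head $(c_1,\Omega_1)\triangleleft(c_2,\Omega_2)$ satisfies case (3) or (4) of Theorem~\ref{twomode}, then that head has trivial power spectrum $\Psi(\omega)=V$. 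Cascading with the remaining $n-1$ or $n-2$ modes, whose transfer function multiplies the trivial one, leaves the power spectrum unchanged while producing a strictly lower-dimensional PSE system; hence $(C,\Omega)$ is not globally minimal.

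For the forward direction ($\implies$), I would start by invoking Theorem~\ref{equivalence}: a non-globally minimal PQLS is TFE to a series product $\mathcal{G}_m\triangleleft\mathcal{G}_p$ whose pure component $\mathcal{G}_p$ is non-trivial and has a pure stationary state. Since TFE preserves power spectrum identifiability, it suffices to exhibit a cascade realisation of $\mathcal{G}_p$ (and then continue with any cascade realisation of $\mathcal{G}_m$) whose first one or two modes already have trivial power spectrum. This is exactly what Theorem~\ref{doned} delivers when applied to $\mathcal{G}_p$: there is a cascade realisation of $\mathcal{G}_p$ in which either $(c_1,\Omega_1)$ itself, or $(c_1,\Omega_1)\triangleleft(c_2,\Omega_2)$, has trivial power spectrum.

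The remaining step is identification: I must recognise that ``one-mode head with trivial power spectrum'' is precisely the characterisation in Lemma~\ref{onemode}, i.e.\ cases (1) and (2) of Theorem~\ref{twomode}, and that ``two-mode head with trivial power spectrum and a genuinely two-mode pure stationary state'' is precisely cases (3) and (4) of Theorem~\ref{twomode}. Here I would note that if the two-mode head in fact decomposes further as two independent one-mode cancellations, we are already in the Lemma~\ref{onemode} case and nothing new is needed. Concatenating the cascade of $\mathcal{G}_p$ (with the triviality at the head) with an arbitrary cascade realisation of $\mathcal{G}_m$ then produces an $n$-mode cascade realisation of $(C,\Omega)$ whose first one or two modes satisfy Theorem~\ref{twomode}, as required.

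The main (and really only) obstacle is the appeal to Theorem~\ref{doned}, whose own proof is deferred to Chapter~\ref{DUAL} via the forthcoming Theorem~\ref{parfg}; once that is in hand, the corollary is a clean repackaging. A subtlety worth flagging in the write-up is that the guaranteed cascade is specifically one in which the pure component sits at the \emph{input} end of the chain, which matches the convention used in the statement of the corollary; if a different ordering of the cascade is desired, one must invoke the symplectic Schur reordering freedom described in Sec.~\ref{seriesp} to bring the relevant modes to the head.
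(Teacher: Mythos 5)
Your proposal is correct and follows essentially the same route as the paper's own proof: decompose via Theorem~\ref{equivalence} into pure and mixed components, apply Theorem~\ref{doned} to reduce the pure part to one- or two-mode heads with trivial power spectrum, and then identify those heads with the conditions of Lemma~\ref{onemode} and Theorem~\ref{twomode}. The extra care you take with the reverse implication and the cascade-ordering subtlety is sensible but does not change the argument.
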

\begin{proof}
Firstly, by Theorem \ref{equivalence} the system may be realised as a series product of a system with a pure stationary state and one with a mixed stationary state. Applying Theorem \ref{doned} to the pure part, it can be realised as a series of one or two mode systems, each with pure stationary state. Finally, applying Theorem \ref{twomode} to the first  two mode system in the cascade   gives the result. 
\end{proof}

We now give an algorithm enabling one to find the globally minimal restriction of a given MIMO PQLS $(c, \Omega)$ with $n$ internal modes. We assume that its system matrix $A$ has discrete spectrum for simplicity, 

\underline{Algorithm:}
\begin{enumerate}
\item
Calculate the matrix $A$ and its eigenvalues. 
\item
Perform a \textit{Schur diagonalisation}, that is, find a unitary $U$ and lower triangular matrix $A'$ such that $A'=UAU^{\dag}$ using known algorithms. Note that the eigenvalues of $A$ lie along the main diagonal.
\item \label{goof}
Check the conditions in Theorem \ref{twomode} for the first two systems of  the cascaded system $(C', A'):=\left(CU^{\dag}, UAU^{\dag}\right)$.
If any  are satisfied, then the system is not globally minimal and that particular subsystem has a pure stationary state. 
If they aren't then move to step (\ref{goofs}).
 \item Remove the non-globally minimal subsystem and repeat step (\ref{goof}).
\item\label{goofs}
Repeat steps (2) and (3) with a different pair of eigenvalues in the first two slots of $A'$ (order matters).
\item When all orders of modes have  been exhausted, \textbf{stop} and conclude that the remaining system is globally-minimal. 
\end{enumerate}

\begin{remark}
Finding an equivalent cascade realisation of a QLS requires performing a Schur decomposition  on the system matrix, $A$. This algorithm uses the fact that the transformation to the lower triangular matrix in the Schur diagonalization, $A'$,  is unique up to  a  diagonal unitary matrix for a given fixed order of elements on the main diagonal of $A'$. 
\end{remark}

\begin{exmp}
Consider the two-mode  PQLS $$(C, A)=\left(    \left(\begin{smallmatrix} -4&8\\-3&3\end{smallmatrix}\right),   \left(\begin{smallmatrix} -\frac{45}{4}-2i    &\frac{65}{4}+2i\\\frac{17}{4}+2i&-\frac{73}{4}-2i\end{smallmatrix}\right) \right) $$
with input $V(N, M)$, where $N, M$ are diagonal. 

The two TFE  cascade realisations of this system are those given in example \ref{beat}; they correspond to the  the two possible orderings of the eigenvalues in the Schur decomposition of $A$.
Now,  one can verify that the system $(\tilde{C}_2, \tilde{\Omega}_2)\triangleleft (\tilde{C}_1, \tilde{\Omega}_1)$ doesn't satisfy any of the conditions in Theorem \ref{twomode}. However, the system 
 $(C_1, \Omega_1)$
in the cascaded system $(C_2, \Omega_2)\triangleleft (C_1, \Omega_1)$ satisfies condition (1) in Theorem \ref{twomode}. Hence the globally minimal restriction is given by $(C_2, \Omega_2)$. 
\end{exmp}


\section{Entangled Inputs}\label{EI}

Here we show that by using an additional ancillary channel with an appropriate design of input makes it  possible to identify the transfer function from the power spectrum for \emph{all} minimal systems.

Consider the set-up in Fig. \ref{entangledr}, where a pure entangled input state is fed into a QLS with $m$ channels and concatenate with an  additional $m$ ancillary channels. We assume that the input is non-vacuum and is characterised by $V(N, M)$,
which has $2m\times 2m$ blocks 
\[N=\left(\begin{smallmatrix} N_1&N_2\\{N_2}^{\dag}&N_3\end{smallmatrix}\right)\,\,\, M=\left(\begin{smallmatrix} M_1&M_2\\M^T_2&M_3\end{smallmatrix}\right).\] 
Each $N_i$ and $M_i$ are of size $m\times m$. The doubled-up transfer function is given (in $n\times n$ blocks) by
\begin{equation}\label{fud}
\Xi(s)=\left(\begin{smallmatrix}\Xi_{-}(s)&0&\Xi_{+}(s)&0\\0&1&0&0\\{\Xi_{+}(\overline{s})}^{\#}&0&{\Xi_{-}(\overline{s})}^{\#}&0\\0&0&0&1\end{smallmatrix}\right).
\end{equation}
Now calculating the $(2,1)$ and $(1,4)$ blocks of the power spectrum using (\ref{powers}), we obtain: 
$$
N^T_2{\Xi_{-}(s)}^{\dag}+M^T_2{\Xi_{+}(s)}^{\dag}:=\alpha(s)
$$ 
and 
$$\Xi_{-}(s)M_2+\Xi_{+}(s)N_2:=\beta(s).$$
To be clear, $\alpha(s)$ and $\beta(s)$  are known at this stage from the power spectrum. 
 Equivalently we may write these in matrix form as 
\[\left(\begin{smallmatrix}\Xi_{-}(s)&\Xi_{+}(s)\end{smallmatrix}\right)
\Delta(M_2, N_2^{\#})=\left(\begin{smallmatrix}\beta(s)&\alpha(s)^{\dag}\end{smallmatrix}\right)
.\]
Hence if  we choose $N_2$ and $M_2$ such that the matrix $\Delta(M_2, N_2^{\#})$ is non-singular we may identify the transfer function of our  system uniquely. For example, such a choice of input would be $N=x{1}$ and $M=\left(\begin{smallmatrix}0&y1_n\\y1_n&0\end{smallmatrix}\right)$ with $x(x+1)=|y|^2$ (the purity assumption). As one can see there are no requirements on the actual QLS other than minimality. 

\begin{figure}
\centering
\includegraphics[scale=0.35]{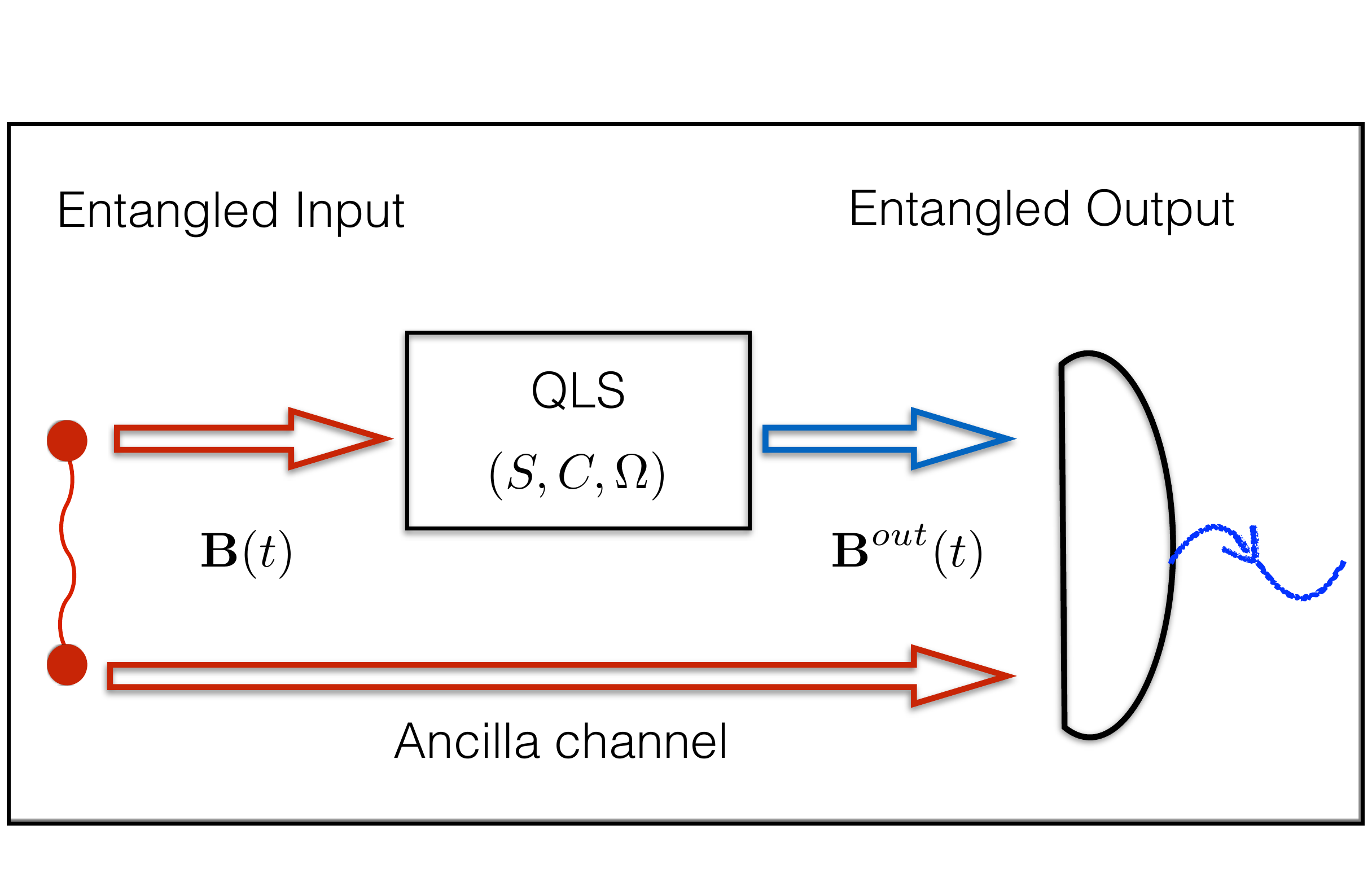}
\caption{Entangled setup discussed in Sec. \ref{EI}. There are two channels, which are our PQLS and an additional ancilla channel. Inputs are entangled over the two channels.  \label{entangledr}}
\end{figure}

\begin{remark}
Recall from the previous subsections that the maximum amount of information we may
obtain about a PQLS from the power spectrum without the use of ancilla is that of the restriction to its globally
minimal subspace.
However, we have seen here that it is possible to construct a globally minimal pair, and hence obtain the whole transfer function simply by embedding the system in a larger space. The crucial point is that we have used a different input, which matters for the power spectrum. 
\end{remark}

From the analysis above, purity was not necessary to identify the transfer function. In fact, one didn't  even require any squeezing. We investigate relaxing the assumption of purity in more detail in the next subsection.

\section{Thermal Inputs}\label{thermy}
An interesting open question is whether the identifiability results of this chapter may be extended to mixed inputs.
 In this section we study identifiability for the  subclass of PQLSs with mixed, and in particular  thermal inputs 
(see Sec. \ref{Bosonic}).  This problem is particularly interesting because neither the input nor the system require any squeezing.  For simplicity, we shall assume that $S=1$  throughout this subsection.

Consider a general PQLS, which has coupling matrix $C$ and system Hamiltonian $\Omega$. Suppose we probe the system with known input $V(N, 0)$ ($N\geq0$ in order to be physical). The power spectrum of this system is
$$\Psi(s)=\left(\begin{smallmatrix} \Xi_-(s)(N^T+1)\Xi_-(-\overline{s})^{\dag}&0\\0& \overline{\Xi_-(\overline{s})}N\Xi_-(-{s})^{T}\end{smallmatrix}\right),$$
where $\Xi_-(s)=1-C\left(s-A\right)^{-1}C^{\dag}$ and $A=-i\Omega-\frac{1}{2}C^{\dag}C$.
Therefore our basic identifiability problem for the power spectrum reduces to identifiability of the quantity 
$ \Xi_-(s)N^T\Xi_-(-\overline{s})^{\dag}=:\Upsilon(s)$. 

Firstly observe that if the PQLS is SISO then  the power spectrum is always trivial because the transfer function is unitary. This should not be too surprising because, as the system is passive, the effect of the system on a given frequency mode is to rotate the input covariance in the $(\mathbf{X}, \mathbf{P})$ phase space. Since a one mode thermal state is centred and circularly symmetric in phase space,  such a rotation will not be visible, i.e., the input and output will appear the same. 
However, in the MIMO case one can choose a thermal input   so that it is not 
 symmetrical with respect to the different channels. We shall see that this allows for identifiability.


Before answering our identifiability problem for this scenario, let us consider a situation where we are free to modulate the input. That is, suppose we have access to the power spectrum for all input covariances in a small neighbourhood (rather than for a specific input). This assumption is of course quite strong, but it is still nevertheless an interesting starting point for identifiability. 

\begin{thm}
Let $(C_1, \Omega_1)$ and $(C_2, \Omega_2)$ be two PQLSs which are globally minimal for all noise covariances in a neighbourhood of $V=\left(\begin{smallmatrix} N^T+1&0\\0&N\end{smallmatrix}\right)$. If $\Upsilon_1(s)=\Upsilon_2(s)$ for all thermal covariances $V'$ in a neighbourhood of $V$, then the  systems are TFE: $\Xi_1(s)=\Xi_2(s)$.
\end{thm}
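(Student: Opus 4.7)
The plan is to exploit the freedom to vary the input in a neighbourhood of $V$ in order to extract column-wise information about each transfer function, then remove the remaining scalar gauge using the passive structure. Taking finite differences of $\Upsilon_1(s)=\Upsilon_2(s)$ between the reference input $V(N,0)$ and a perturbed input $V(N+\delta N,0)$, with $\delta N$ an arbitrary Hermitian matrix in a small neighbourhood of zero, removes the contribution from the reference $N$ and yields
\begin{equation*}
\Xi_1(s)\,\delta N^{T}\,\Xi_1(-\overline{s})^{\dag}=\Xi_2(s)\,\delta N^{T}\,\Xi_2(-\overline{s})^{\dag}.
\end{equation*}
Since $\delta N^{T}$ is Hermitian whenever $\delta N$ is, and any complex matrix $X$ decomposes as $H+iK$ with $H,K$ Hermitian, this identity extends by linearity to an arbitrary complex matrix $X$ in place of $\delta N^{T}$.

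Next I would invoke the passive condition $\Xi_k(-\overline{s})^{\dag}=\Xi_k(s)^{-1}$ established in Sec.~\ref{honk} to rewrite the identity as $U(s)X=X\,U(s)$, where $U(s):=\Xi_2(s)^{-1}\Xi_1(s)$ and $X$ is arbitrary. Schur's lemma then forces $U(s)=\alpha(s)I_m$ for some scalar rational function $\alpha$, so $\Xi_1(s)=\alpha(s)\Xi_2(s)$. Combining this with the passivity identity once more yields $\alpha(s)\,\overline{\alpha(-\overline{s})}=1$, so $|\alpha(-i\omega)|=1$ on the imaginary axis; and the normalisation $\Xi_k(\infty)=I_m$ gives $\alpha(\infty)=1$. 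Thus $\alpha$ is a scalar inner function normalised to one at infinity.

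The final step, which I expect to be the main technical obstacle, is to conclude $\alpha\equiv 1$ by a McMillan-degree argument. Global minimality together with $\Psi_1=\Psi_2$ forces $\mathcal{G}_1$ and $\mathcal{G}_2$ to share the same number $n$ of internal modes, and minimality makes each $\Xi_k$ of McMillan degree exactly $n$. For a PQLS the poles of $\Xi_k$ lie in the open left half-plane (eigenvalues of $A_k$) while its transmission zeros lie in the open right half-plane (at $-\overline{\sigma(A_k)}$); on the other hand a non-constant scalar inner $\alpha$ has its poles in the open left half-plane and its zeros in the open right half-plane, paired by reflection across $i\mathbb{R}$. The half-plane separation between poles of $\alpha$ and transmission zeros of $\Xi_2$, and between zeros of $\alpha$ and poles of $\Xi_2$, precludes any pole--zero cancellation in $\alpha\Xi_2$. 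Consequently the McMillan degree of $\alpha\Xi_2$ equals $\deg(\alpha)+n$, and matching with the McMillan degree $n$ of $\Xi_1$ forces $\deg(\alpha)=0$. Together with $\alpha(\infty)=1$ this gives $\alpha\equiv 1$, i.e.\ $\Xi_1=\Xi_2$.

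The subtlety in this last step is to rule out pole--zero cancellations rigorously in the non-generic case, but the left/right half-plane separation between the inner structure of $\alpha$ and the PQLS pole/zero structure of each $\Xi_k$ makes this essentially automatic. The assumption that both systems remain globally minimal throughout a neighbourhood of $V$ is precisely what prevents accidental degeneracies, such as $\alpha$ sharing poles with a non-minimal part of $\Xi_2$, from spoiling the count.
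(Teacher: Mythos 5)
Your first two steps track the paper's argument exactly: linearity in the input covariance upgrades the identity $\Upsilon_1=\Upsilon_2$ to arbitrary matrices in place of $N^T$, and the commutant argument together with $\Xi_k(-\overline{s})^{\dag}=\Xi_k(s)^{-1}$ forces $\Xi_1(s)=\alpha(s)\Xi_2(s)$ for a rational, all-pass scalar $\alpha$ with $\alpha(s)\overline{\alpha(-\overline{s})}=1$ and $\alpha(\infty)=1$. The gap is in your final step: you assert that $\alpha$ is \emph{inner}, i.e.\ has all its poles in the open left half-plane and all its zeros in the open right half-plane, and the entire half-plane-separation / McMillan-degree argument rests on that assertion. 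Nothing you have derived gives analyticity of $\alpha$ in either half-plane: $\alpha=\Xi_2(s)^{-1}\Xi_1(s)$ inherits poles from $\Xi_2^{-1}=\Xi_2(-\overline{s})^{\dag}$, whose poles sit at $-\overline{\sigma(A_2)}$ in the \emph{right} half-plane. An all-pass rational function normalised at infinity is free to place each reflected pole--zero pair $(p,-\overline{p})$ with the pole on either side of the axis, and when the pole lies in the right half-plane at a transmission zero of $\Xi_2$ while the zero lies in the left half-plane at a pole of $\Xi_2$, the cancellation you claim is precluded occurs in full. Concretely, $\Xi_2(s)=\frac{s-p}{s+\overline{p}}\,1_m$ with $\mathrm{Re}(p)>0$ is a bona fide stable minimal PQLS (take $C=\sqrt{2\mathrm{Re}(p)}\,1_m$, $\Omega=-\mathrm{Im}(p)1_m$), and with $\Xi_1=1_m$ one has $\Xi_1=\alpha\Xi_2$ for the non-constant all-pass $\alpha=\frac{s+\overline{p}}{s-p}$ with $\alpha(\infty)=1$ and total pole--zero cancellation. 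This pair does not satisfy the hypotheses (its $\Upsilon$ is trivial, so it is not globally minimal), but it shows your degree count does not by itself exclude non-constant $\alpha$: you invoke global minimality only to equate the two McMillan degrees, which is not where its force is needed.

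The repair must engage global minimality at precisely this point. If $\alpha$ had a pole of order $d$ at $p$ with $\mathrm{Re}(p)>0$, stability of $\Xi_1=\alpha\Xi_2$ forces the whole matrix $\Xi_2$ to vanish to order $d$ at $p$; writing $\Xi_2=\bigl(\tfrac{s-p}{s+\overline{p}}\bigr)^{d}\,\Xi_2'$, the factor $\Xi_2'$ is again unitary on the imaginary axis, analytic in the closed right half-plane and equal to $1_m$ at infinity, hence (by the frequency-domain physical realisability condition) realisable as a PQLS with strictly fewer modes; and since the scalar Blaschke factor satisfies $\beta(s)\overline{\beta(-\overline{s})}=1$ it drops out of $\Xi_2 N^{T}\Xi_2(-\overline{s})^{\dag}$, so $\Xi_2'$ has the same $\Upsilon$ --- contradicting global minimality of the second system. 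The symmetric argument applied to $\Xi_2=\alpha^{-1}\Xi_1$ excludes left half-plane poles of $\alpha$ using global minimality of the first system. Only after both cases are excluded is $\alpha$ pole-free, hence constant, hence $1$. (To be fair, the paper's own justification of this last step --- ``rational and monic, so the phase is trivial'' --- is a one-liner that glosses over the same point; your elaboration is in the right spirit but, as written, does not close it.)
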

\begin{proof}
If $\Upsilon_1(s)=\Upsilon_2(s)$ for all $V'$ in a neighbourhood of $V$ then $\Xi(s)=e^{\phi(s)}\Xi_2(s)$ for some $\phi(s)$. This is because knowing  the action on all $V'$ means that  you know the action on all matrices.  Two actions are the same if and only if $\Xi_1(s)\Xi_2(s)^{\dag}$ commutes with all $V'$, so they must differ by a phase.
Finally as the transfer function is rational and monic, then the phase must be trivial. 
\end{proof} 

Now back to our original identifiability problem. Just like in Sec. \ref{altos}, we can treat $\Upsilon(s)$ as a transfer function realised by the resultant cascaded system:
\begin{equation}\label{cask2}
\left(\tilde{A}, \tilde{B}, \tilde{C}, \tilde{D}\right):=
\left(\left(\begin{smallmatrix} -A^{\dag}&0\\C^{\dag}NC&A\end{smallmatrix}\right), \left(\begin{smallmatrix} -C^{\dag}\\-C^{\dag}N\end{smallmatrix}\right), \left(\begin{smallmatrix} -NC& C\end{smallmatrix}\right), N     \right).
\end{equation}
Notice that A has $2n$ eigenvalues, rather than $4n$, and is proper LBT (it should be  understood that each block is of size $n\times n$).

\begin{thm}\label{main8}
Let $\left(C_1, \Omega_1\right)$ and $\left(C_2, \Omega_2\right)$ be two PQLS with minimal representations \eqref{cask2}    for input $V(N,0)$,
%
then
$$
\Upsilon_{1}(s)=\Upsilon_{2}(s) \,\, for~all~s \quad \Leftrightarrow \quad \Xi_1(s)=\Xi_2(s)\,\, for~all~s 
$$
\end{thm}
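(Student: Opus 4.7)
The reverse implication is immediate from the identity $\Upsilon(s) = \Xi_-(s) N^T \Xi_-(-\bar{s})^\dag$, so the content lies in the forward direction. My plan is to mirror the strategy used to prove Theorem \ref{main}, but adapted to the passive setting where the representation \eqref{cask2} lives on a $2n$-dimensional state space (rather than the $4n$-dimensional one of \eqref{cask}) and the symplectic adjoint $\flat$ collapses to the ordinary adjoint $\dag$.

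First, since both $\Upsilon_1(s)$ and $\Upsilon_2(s)$ are assumed to be realised minimally by \eqref{cask2}, the classical Kalman identifiability theorem (Theorem \ref{class.id2}) gives a unique invertible similarity transformation $T$ such that
\[
\tilde{A}' = T\tilde{A}T^{-1}, \qquad \tilde{B}' = T\tilde{B}, \qquad \tilde{C}' = \tilde{C}T^{-1}, \qquad \tilde{D}' = \tilde{D}.
\]
The diagonal blocks of $\tilde{A}$ are $-A^\dag$ (anti-Hurwitz) and $A$ (Hurwitz), and the off-diagonal pattern matches the definition of a proper LBT matrix; Lemma \ref{hud} therefore forces $T$ to be lower block triangular, $T = \left(\begin{smallmatrix} T_1 & 0 \\ T_3 & T_4\end{smallmatrix}\right)$.

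Second, I expand the four block equations and carry out the same three-step reduction used in the proof of Theorem \ref{main} (see Appendix \ref{p1}). Comparing the top block of $\tilde{B}' = T\tilde{B}$ gives $C'^\dag = T_1 C^\dag$, while the second block of $\tilde{C}' = \tilde{C}T^{-1}$ gives $C' = C T_4^{-1}$; the $(1,1)$ and $(2,2)$ blocks of $T\tilde{A} = \tilde{A}'T$ give $A'^\dag = T_1 A^\dag T_1^{-1}$ and $A' = T_4 A T_4^{-1}$. Using observability of $(C,A)$ together with the passivity identity $A+A^\dag + C^\dag C = 0$ (which plays the role played by the physical realisability condition in Theorem \ref{main}, and which removes the doubled-up bookkeeping entirely), one obtains $T_4^\dag T_1 = 1$ and hence $T_1^\dag T_1 = 1$ and $T_1 = T_4$ (mimicking steps (1) and (3) in Appendix \ref{p1}). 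The final identification $T_3 = 0$ follows by applying the observability matrix $\mathcal{O}$ of \eqref{cask2} to the vector $\left(T_4^\dag - T_1^\dag, -T_3^\dag\right)^T$, exactly as in step (2) of the Appendix \ref{p1} argument, with the entries involving $N$ in $\tilde{B}$ and $\tilde{C}$ contributing consistently because $N$ appears symmetrically on both sides of the similarity.

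The main obstacle, as in Theorem \ref{main}, will be step (2), namely showing $T_3 = 0$. There the argument relied on an explicit computation showing $\mathcal{O}\left(T_4^\flat - T_1^\flat,\, -T_3^\flat\right)^T = 0$ and then invoking full rank of $\mathcal{O}$; the corresponding computation here is structurally identical but must be checked carefully to confirm that the $N$-dependence in $\tilde{B}$, $\tilde{C}$ does not spoil the cancellation. I expect no genuine new difficulty since passivity simplifies rather than complicates the doubled-up structure. Once $T = \mathrm{diag}(U,U)$ with $U$ unitary is established, the resulting relations $C' = CU^\dag$ and $\Omega' = U\Omega U^\dag$ are precisely the unitary equivalence of Corollary \ref{colo}, so $\Xi_1(s) = \Xi_2(s)$.
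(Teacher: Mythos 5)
Your proposal follows essentially the same route as the paper's proof: reduce to a lower block triangular similarity via the proper-LBT structure and Lemma \ref{hud}, then run the two steps of the Appendix \ref{p1} argument (with $V_{\mathrm{vac}}$ replaced by $N$ and the symplectic adjoint collapsing to $\dag$), noting correctly that the doubled-up step (3) is no longer needed in the passive setting. The argument is sound and matches the paper's intended proof.
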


\begin{remark}
Notice that unlike Theorem \eqref{main}, the stable assumption is not required as stablility and minimality are equivalent for PQLSs \cite{Guta2}.  
\end{remark}
\begin{proof}
The proof of this statement can be obtained almost identically to that of Theorem \ref{main}. That is, by firstly using the properties of proper LBT matrices to reduces the admissible set of equivalent transformations to lower block triangular similarity transformations. We then show the following two steps
\begin{enumerate}
\item Firstly, 
$$\left(\begin{smallmatrix}T_4^{\dag}&0\\-T^{\dag}_3&T_4^{\dag}\end{smallmatrix}\right)T=1.$$
\item Finally, 
$$\left(\begin{smallmatrix}T_4^{\dag}-T^{\dag}_1\\-T_3^{\dag}\end{smallmatrix}\right)=0.$$
\end{enumerate}
Notice that step (3) in Theorem \ref{main} is not required. The proofs of steps (1) and (2) here are identical to those in Appendix \ref{p1}, except that $K$ is now replaced with $\left(\begin{smallmatrix}1&0\\0&-1\end{smallmatrix}\right)$ and $V_{\mathrm{vac}}$ is replaced with $N$.
\end{proof}
This theorem says that if the cascaded system \eqref{cask2} is minimal then the transfer function of the QLS is identifiable. However, what does minimality mean here? Recall how we saw in Sec. \ref{altos} that minimality of the analogous cascaded system  was equivalent to global minimality of the QLS (recall that  we had Theorem \ref{games}). We would like to investigate whether this holds here. In fact, proving this equivalence is much trickier here because  Theorem \ref{games} was derived via an intermediate result that, i.e., that  global minimality is equivalent to the system possessing a fully mixed stationary state. Such an intermediate result does not hold in the case of mixed inputs. 
That is, although the statement [global minimality implies (full)-mixed stationary state] is true (we do not give the proof here),  the converse statement [(full)-mixed stationary state implies global minimality] no longer holds. In fact, because the  term $C^{\flat}V\left(C^{\flat}\right)^{\dag}>0$ in the Lyapunov equation \eqref{eq.Lyapunov} (with $V(N,M)$ replaced with $V_{\mathrm{vac}}$) for all mixed inputs, then it follows from \cite[Theorem 3.18]{Zhou1} that $P>0$; hence the stationary state is always fully mixed.


Despite the problems above, can we still prove the equivalence between minimality of the cascaded system and global minimality of the PQLS?
Firstly, the statement [\eqref{cask2} minimal implies $(C, \Omega)$ globally minimal] is trivial. However, we are more interested in whether  the reverse statement is true. Let us now understand when this is the case. The requirement \eqref{cask2} be minimal,  entails it to be both observable and controllable. 

\begin{Lemma}\label{ghjk}
The cascaded system \eqref{cask2} is observable iff it is controllable.
\end{Lemma}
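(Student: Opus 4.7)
The plan is to use the duality between observability and controllability (Theorem \ref{cl5}), which reduces the lemma to showing that $(\tilde{C}, \tilde{A})$ is observable if and only if $(\tilde{B}^\dag, \tilde{A}^\dag)$ is observable, and then to exhibit a sign-reversing correspondence between the eigenvectors of $\tilde{A}$ and $\tilde{A}^\dag$ that intertwines the two Hautus tests (Theorem \ref{obs5}, item \ref{obs3}).

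First I would take an eigenpair $\tilde{A} y = \mu y$ with $y = (y_1^T, y_2^T)^T$ and propose the candidate $z := (y_2^T, -y_1^T)^T$. Expanding
\[
\tilde{A} = \begin{pmatrix}-A^\dag & 0 \\ C^\dag N C & A\end{pmatrix}, \qquad \tilde{B} = \begin{pmatrix}-C^\dag \\ -C^\dag N\end{pmatrix}, \qquad \tilde{C} = \begin{pmatrix}-NC & C\end{pmatrix},
\]
the eigenvalue equation reads $-A^\dag y_1 = \mu y_1$ and $C^\dag N C y_1 + A y_2 = \mu y_2$. A short block computation then yields $\tilde{A}^\dag z = -\mu z$. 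Moreover $\tilde{C} y = -NC y_1 + C y_2$ while $\tilde{B}^\dag z = -C y_2 + NC y_1 = -\tilde{C} y$, so the Hautus condition $\tilde{C} y \neq 0$ at $\mu$ is equivalent to the Hautus condition $\tilde{B}^\dag z \neq 0$ at $-\mu$.

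The reverse map is completely symmetric: starting from $\tilde{A}^\dag z = \lambda z$ with $z = (z_1^T, z_2^T)^T$, setting $w := (z_2^T, -z_1^T)^T$ gives $\tilde{A} w = -\lambda w$ and $\tilde{C} w = -NC z_2 - C z_1 = -\tilde{B}^\dag z$. Because the association $y \leftrightarrow z$ is an involution (up to a sign) that negates the spectral parameter, and the spectrum of $\tilde{A}$ is closed under $\mu \mapsto -\overline{\mu}$ by the block-triangular form, the Hautus tests for $(\tilde{C}, \tilde{A})$ and for $(\tilde{B}^\dag, \tilde{A}^\dag)$ are satisfied at every eigenvalue simultaneously. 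Invoking the duality in Theorem \ref{cl5} item \ref{cl4} then gives the claimed equivalence.

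I don't expect a conceptual obstacle here; the only care needed is to handle the degenerate sub-cases where $y_1 = 0$ or $z_2 = 0$ separately, which is a routine verification using the block-triangular structure of $\tilde{A}$. Notice that the argument does not require the physical realisability identity $A + A^\dag + C^\dag C = 0$, nor any positivity or purity of $N$: it is purely a consequence of the block pattern of $(\tilde{A}, \tilde{B}, \tilde{C})$ in \eqref{cask2}.
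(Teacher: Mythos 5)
Your proof is correct and follows essentially the same route as the paper's: the paper first invokes the duality of Theorem \ref{cl5} and then swaps left eigenvectors via $(X_1,X_2)\mapsto(X_2,-X_1)$, which is exactly your right-eigenvector swap $(y_1,y_2)\mapsto(y_2,-y_1)$ read in transposed form, with the duality step applied at the other end of the chain. (Two cosmetic points: in your reverse direction $\tilde{C}w=+\tilde{B}^{\dag}z$ rather than $-\tilde{B}^{\dag}z$, and the computation does quietly use $N=N^{\dag}$ when forming $\tilde{A}^{\dag}$ and $\tilde{B}^{\dag}$; neither affects the argument.)
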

\begin{proof}
Suppose that \eqref{cask2} is observable, which is equivalent to $(\tilde{A}^{\dag}, \tilde{C}^{\dag})$ controllable. It therefore remains to show that $(\tilde{A}^{\dag}, \tilde{C}^{\dag})$ controllable is equivalent to $(\tilde{A}, \tilde{B})$ controllable. 
Firstly, suppose that $(\tilde{A}^{\dag}, \tilde{C}^{\dag})$ is controllable. This is equivalent to the statement: for any (left)-eigenvector and eigenvalue, $X, \lambda$, of $\tilde{A}^{\dag}$, then $X\tilde{C}^{\dag}\neq0$. Equivalently, 
$$\left(\begin{smallmatrix}X_1 &X_2\end{smallmatrix}\right)\left(\begin{smallmatrix}-A&C^{\dag}NC\\0&A^{\dag}\end{smallmatrix}\right)=\lambda\left(\begin{smallmatrix}X_1 &X_2\end{smallmatrix}\right)~\mathrm{implies}~ 
-X_1C^{\dag}N+X_2C^{\dag}\neq0.$$
This in turn is equivalent to the statement:
$$\left(\begin{smallmatrix}X_2 &-X_1\end{smallmatrix}\right)\left(\begin{smallmatrix}-A^{\dag}&0\\C^{\dag}NC&A\end{smallmatrix}\right)=\lambda\left(\begin{smallmatrix}X_2 &-X_1\end{smallmatrix}\right)~\mathrm{implies}~ 
X_2C^{\dag}+(-X_1)C^{\dag}N\neq0,$$
which is equivalent to $(\tilde{A}, \tilde{B})$ controllable.
\end{proof}

In light of Lemma \ref{ghjk}, understanding when \eqref{cask2} is minimal reduces to understanding when \eqref{cask} is observable. Observability is equivalent to the statement: "for any eigenvector and eigenvalues $y, \lambda$ of $\tilde{A}$, then $\tilde{C}y\neq0$". Therefore, by using the definition of $\tilde{A}$ and $\tilde{C}$ above,  the system being not observable reduces to the following cases:
\begin{itemize}
\item[(1)]  There exists a pair of (right)-eigenvectors of $A^{\dag}$, $y_1, y_2$, with the same eigenvalue such that 
$$NCy_1=Cy_2$$ 
and $Cy_1\neq0$.
\item[(2)] There exists a (right)-eigenvalue $\left(\begin{smallmatrix}0\\y\end{smallmatrix}\right)$ of $A^{\dag}$, where $y$ is of size $n$ such that $Cy\neq0$. 
\end{itemize}
The second case may be excluded if one assumes our system $(C, \Omega)$ is minimal. Let us assume for simplicity that the eigenvalues of $\tilde{A}$ are distinct so that in case (1) above we must have  $y_1=\mu y_2$ for some $\mu\neq0$. Therefore, $Cy_1$ is an eigenvector of $N$ and must  necessarily have a real eigenvalue (as $N\geq0$). In summary, for the cascaded system to be non-observable  the minimal system $(A, C)$ must have an input $N$ such that one of the eigenvectors of $A^{\dag}$, $y_1$, is such that $Cy_1$ is an eigenvector of $N$.
This condition is quite generic. Moreover, it can be shown that  when this non-observability condition holds, then  the system is non-globally minimal (see Appendix \ref{noobs}). Therefore,  when $A$ has  distinct eigenvalues    global minimality implies identifiability, which   extends  Theorem \eqref{main} to this subclass of mixed inputs. 

The interpretation of the non-identifiable (sub-)systems (2) can be seen to be essentially a SISO system viewed in a different field basis. In this canonical basis the inputs also  must  not be entangled, otherwise the system would be identifiable as in Sec. \ref{EI}.     Hence these systems are non-identifiable due to the perfect symmetry of the input.





\section{Quantum Fisher Information}\label{JUKKA1}

We  now find an explicit expression for the QFI at stationarity and show that if one assumes global minimality, the zeros of it exactly correspond to the set of gauge transformation of the power spectrum that we saw in Theorem \ref{main}. The proof of Theorem \ref{main} was obtained by using mainly classical system theoretic concepts; the results here give an alternative proof, this time using   quantum mechanical arguments.

Throughout this section we work with the raw (pure) field input, rather than performing the trick in Sec. \ref{G.Ms} and treating the field as vacuum. The reason for this is that  we shall be working with 
stochastic integrals in the following, where one must a little 
careful performing static squeezing operations on the field \cite{Gough1}. 
%
 %
\subsection{Preliminaries}

We need a few preliminaries in this section. First of all denote the Heisenberg evolved system operator by ${j}_t(\mathbf{X})=\mathbf{U}^{\dag}(t)\left(\mathbf{X}\otimes \mathds{1}_{\mathrm{field}}\right)\mathbf{U}(t)$. It follows that ${j}_t(\mathbf{X})$ satisfies the QSDE
$$d{j}_t(\mathbf{X})=\sum_i\left(  {j}_t\left(\left[\mathbf{X}, \mathbf{L}_i\right]\right)d\mathbf{B}^{\dag}_i(t)+ {j}_t\left(\left[\mathbf{L}^{\dag}_i, \mathbf{X}\right]\right)d\mathbf{B}_i(t)\right)+{j}_t(\mathcal{L}(\mathbf{X})),$$
where $\mathcal{L}(\cdot)=-i[(\cdot),\mathbf{H}]+\sum_i\left( \mathbf{L}^{\dag}_i(\cdot)\mathbf{L}_i-\frac{1}{2}\mathbf{L}^{\dag}_i\mathbf{L}_i(\cdot)
-\frac{1}{2}(\cdot)\mathbf{L}^{\dag}_i\mathbf{L}_i\right)$
is called the \textit{Lindblad generator} and $\mathbf{L}_i$ are the elements of the coupling operator \eqref{couplef}. Also, define $T_t(\mathbf{X}):=\left<\xi|j_t(\mathbf{X})|\xi\right>$, which is the Heisenberg evolution of $\mathbf{X}$ restricted to the system. The operator $T_t(\mathbf{X})$ satisfies the following properties \cite{Guta4, Frigerio1}:
\begin{itemize}
\item Firstly, $dT_t(\mathbf{X})=T_t(\mathcal{L}(\mathbf{X}))$ and so $T_t(\cdot)$ is a completely positive trace preserving semigroup with generator $\mathcal{L}(\cdot)$.
\item Also, $\lim_{t\to\infty} T_t(\mathbf{X})=   \lim_{t\to\infty}\frac{1}{t}\int^t_0T_s(\mathbf{X})ds=\left<\mathbf{X}\right>_{\rho_{ss}}\mathds{1}$, where $\rho_{ss}$ is the stationary state of the system and $\left<\cdot\right>_{\rho_{ss}}$ is the (quantum) expectation on the system with respect to the state $\rho_{ss}$.
\end{itemize}

We can also  calculate the evolution of the operators $T_t(\breve{\mathbf{a}})$ and $T_t(\breve{\mathbf{a}}^{\dag}X\breve{\mathbf{a}})$ for some matrix $X$.  
Firstly,  from Eq. \eqref{lan2} the Heisenberg system evolution for the system operator $\breve{\mathbf{a}}$
has solution $T_t(\breve{\mathbf{a}})=e^{At}\breve{\mathbf{a}}$. 
Notice that $\lim_{t\to\infty}T_t(\breve{\mathbf{a}})=0$, as the system is assumed to be Hurwitz.
Also, it follows from the  Ito rules (see Appendix \ref{cherry}) that
\begin{align*}
dT_t(\breve{\mathbf{a}}^{\dag}X\breve{\mathbf{a}})&=\bra{\xi}|\breve{\mathbf{a}}^{\dag}(t)X\left(A\breve{\mathbf{a}}(t)dt\ket{\xi}-C^bd\mathbf{B}(t)\ket{\xi}\right)   \\
& +\left(\bra{\xi}\breve{\mathbf{a}}^{\dag}(t)A^{\dag}dt-\bra{\xi}C^{\#}d\breve{\mathbf{B}}^{\dag}(t)\right)A\breve{\mathbf{a}}(t)\ket{\xi}+J(t)\mathds{1}dt\\
&=    \bra{\xi}    \breve{\mathbf{a}}^{\dag}(t)\left(XA      +    A^{\dag}X  \right)\breve{\mathbf{a}}(t)\ket{\xi}dt +J(t)\mathds{1}dt,       
\end{align*}
for some matrix $J(t)$, which we do not specify here.
Therefore
\begin{equation}\label{lan4}
T_t(\breve{\mathbf{a}}^{\dag}X\breve{\mathbf{a}})=\breve{\mathbf{a}}^{\dag}e^{A^{\dag}t}Xe^{At}\breve{\mathbf{a}}+K(t)\mathds{1}
\end{equation}
for some $K(t)$. The  particular form of $K(t)$  is not important, but 
just observe  that $\lim_{t\to\infty}K(t)=\left<  \breve{\mathbf{a}}^{\dag}X\breve{\mathbf{a}}\right>_{\rho_{ss}}$.   


\subsection{QFI Calculation}

We consider a quantum statistical model over a parameter space $\Theta\in\mathbb{C}$, which is a family of QLSs $(C_{\theta},\Omega_{\theta})$ indexed by an unknown parameter $\theta\in\Theta$.
We denote the  dependence on $\theta$ in  the unitary operator $\mathbf{U}(t)$  by $\mathbf{U}_{\theta}(t)$. 
Recall from Ch. \ref{plm} that the most information that one can hope to obtain from any measurement is given by the \textit{quantum Fisher information} (QFI). 
In this subsection we calculate the QFI for our  QLS in the stationary approach. We assume for the moment that we have full access to the (pure) output state, which is given by
$\ket{\psi_{\theta}(t)}:=\mathbf{U}_{\theta}(t)\ket{\phi \otimes \xi}$ where $\ket{\phi}$ is the initial state of the system, $\ket{\xi}$ is the pure (stationary) input corresponding to $V(N,M)$ on the field and $\mathbf{U}_{\theta}(t)$ is the unitary operator describing the joint evolution of the system and field (see Eq. \eqref{eq.unitary.cocycle}).
Note that the reduced state of the field, $\rho^{\mathrm{out}}(t)$, may be obtained by taking the partial trace:
$$\rho^{\mathrm{out}}(t)=\mathrm{Tr}_{sys}\left( \ket{\psi_{\theta}(t)}\bra{\psi_{\theta}(t)}\right).$$

\begin{thm}\label{Q.F.I}
The QFI   for pure state $\mathbf{U}_{\theta}(t)\ket{\phi\otimes \xi}$   scales linearly with time, with asymptotic rate constant: 
\[f_{\theta}:=\lim_{t\to\infty}\frac{F_\theta}{t}.\] Moreover, the QFI rate is given by 
\begin{equation}\label{QFIMAIN}
f_{\theta}=4  \sum_{i=1}^m\,\left<   \left(\dot{\tilde{\mathbf{L}}}_i-i\left[\tilde{\mathbf{L}}_i, \mathbf{W}\right]\right)^{\dag}   \left(\dot{\tilde{\mathbf{L}}}_i-i\left[\tilde{\mathbf{L}}_i, \mathbf{W}\right]\right) \right>_{\rho_{ss}}
\end{equation}
where 
 $\left<\cdot\right>_{\rho_{ss}}$ is the quantum expectation of the system at stationary state $\rho_{ss}$ and
$$\mathbf{W}:=\int^{\infty}_0T_s(\mathbf{R}_0)\mathrm{d}s,$$
  \[ \mathbf{R}_0:= \dot{\mathbf{H}}+\mathrm{Im}\sum_{i=1}^m\dot{\tilde{\mathbf{L}_i}}^{\dag}\tilde{\mathbf{L}_i}  -\left<\dot{\mathbf{H}}+\mathrm{Im}\sum_{i=1}^m\dot{\tilde{\mathbf{L}_i}}^{\dag}\tilde{\mathbf{L}_i}\right>_{\rho_{ss}} \mathds{1}  \]
 for modified coupling operator 
 $\tilde{\mathbf{L}}:=\sqrt{N^T+1}\mathbf{L}-\left(\sqrt{N^T+1}\right)^{-1}M\mathbf{L}^{\#}$. Note that by definition $\mathbf{R}_0$ is hermitian. 
\end{thm}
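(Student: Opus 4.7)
The plan is to use the pure-state QFI formula $F_\theta(t)=4\,\mathrm{Var}_{\phi\otimes\xi}(\mathbf{G}_\theta(t))$, where the Hermitian effective generator is defined by $\mathbf{G}_\theta(t):=i\mathbf{U}_\theta(t)^{\dag}\partial_\theta \mathbf{U}_\theta(t)$. Since $|\xi\rangle$ is a pure Gaussian state, the first move is to absorb its squeezing into the dynamics by cascading the QLS with the inverse of the symplectic $S_{\mathrm{in}}$ from \eqref{vtrick}. This has the effect of (i) replacing the coupling $\mathbf{L}$ by the modified coupling $\tilde{\mathbf{L}}=\sqrt{N^T+1}\,\mathbf{L}-(\sqrt{N^T+1})^{-1}M\mathbf{L}^{\#}$ stated in the theorem, (ii) shifting the Hamiltonian by an $\mathrm{Im}\,\tilde{\mathbf{L}}^{\dag}\tilde{\mathbf{L}}$-type term that, once centred at stationarity, assembles precisely into the operator $\mathbf{R}_0$ below, and (iii) turning the input into vacuum. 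Because this is a unitary change of field basis, $F_\theta(t)$ is invariant, so we may compute it in the equivalent vacuum-driven model.

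Next, I would differentiate the QSDE \eqref{eq.unitary.cocycle} with respect to $\theta$ and apply the quantum Itô rules to derive an SDE for $\mathbf{G}_\theta(t)$. After a routine manipulation, $\mathbf{G}_\theta(t)$ splits cleanly into a stochastic martingale piece and a drift piece,
\begin{equation*}
\mathbf{G}_\theta(t)=\mathbf{G}^{(1)}(t)+\mathbf{G}^{(2)}(t)+c_\theta t\,\mathds{1},
\end{equation*}
with $\mathbf{G}^{(1)}(t)=i\int_0^t\bigl(j_s(\dot{\tilde{\mathbf{L}}}^{\dag})\,d\mathbf{A}(s)-d\mathbf{A}^{\dag}(s)\,j_s(\dot{\tilde{\mathbf{L}}})\bigr)$, $\mathbf{G}^{(2)}(t)=\int_0^t j_s(\mathbf{R}_0)\,ds$, and $c_\theta=\langle\dot{\mathbf{H}}+\mathrm{Im}\sum_i\dot{\tilde{\mathbf{L}}}_i^{\dag}\tilde{\mathbf{L}}_i\rangle_{\rho_{ss}}$. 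The scalar drift $c_\theta t$ does not contribute to the variance, and by construction $\langle\mathbf{R}_0\rangle_{\rho_{ss}}=0$, which is what will allow the drift variance to be finite in rate.

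The QFI rate then arises from $\lim_{t\to\infty}t^{-1}\mathrm{Var}(\mathbf{G}^{(1)}+\mathbf{G}^{(2)})$, which I would evaluate in three pieces. For the martingale piece, quantum Itô isometry combined with the ergodic property $\lim_{t\to\infty}\frac{1}{t}\int_0^t T_s(\mathbf{X})\,ds=\langle\mathbf{X}\rangle_{\rho_{ss}}\mathds{1}$ yields $t^{-1}\mathrm{Var}(\mathbf{G}^{(1)}(t))\to\sum_i\langle\dot{\tilde{\mathbf{L}}}_i^{\dag}\dot{\tilde{\mathbf{L}}}_i\rangle_{\rho_{ss}}$. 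For the pure drift piece, Hurwitz stability of $A$ gives exponential decay of $T_s(\mathbf{R}_0)$, so the Green--Kubo-type computation
\begin{equation*}
\frac{1}{t}\mathrm{Var}(\mathbf{G}^{(2)}(t))\;\longrightarrow\;2\,\mathrm{Re}\int_0^{\infty}\langle\mathbf{R}_0\,T_s(\mathbf{R}_0)\rangle_{\rho_{ss}}\,ds=2\,\mathrm{Re}\,\langle\mathbf{R}_0\mathbf{W}\rangle_{\rho_{ss}},
\end{equation*}
identifies the operator $\mathbf{W}=\int_0^{\infty}T_s(\mathbf{R}_0)\,ds$ of the theorem (equivalently, $\mathbf{W}$ solves the Poisson equation $\mathcal{L}(\mathbf{W})=-\mathbf{R}_0$). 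The genuinely delicate step is the cross term $2\,\mathrm{Re}\langle\mathbf{G}^{(1)}\mathbf{G}^{(2)}\rangle$: its evaluation requires converting mixed increments of the form $\langle d\mathbf{A}^{\dag}(s)\,j_s(\dot{\tilde{\mathbf{L}}})\cdot j_u(\mathbf{R}_0)\,du\rangle$ into system-only quantities by moving the creation operator through $\mathbf{U}_\theta$ via the quantum Itô table, and then summing the resulting series of Lindblad-type commutators; the key identity is that this sum telescopes into expressions of the form $\langle[\tilde{\mathbf{L}}_i,\mathbf{W}]\,\dot{\tilde{\mathbf{L}}}_i\rangle_{\rho_{ss}}$.

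Finally, I would assemble the three contributions; completing the square gives exactly
\begin{equation*}
4\sum_{i=1}^m\Bigl\langle(\dot{\tilde{\mathbf{L}}}_i-i[\tilde{\mathbf{L}}_i,\mathbf{W}])^{\dag}(\dot{\tilde{\mathbf{L}}}_i-i[\tilde{\mathbf{L}}_i,\mathbf{W}])\Bigr\rangle_{\rho_{ss}},
\end{equation*}
so that \eqref{QFIMAIN} follows and, in particular, linear scaling in $t$ is established. The main obstacle in this plan is the cross-term analysis outlined above: one must justify interchanging the infinite-time limit with the stochastic integrals, and show that the resulting back-action correction takes the tidy commutator form $[\tilde{\mathbf{L}}_i,\mathbf{W}]$ rather than a more complicated expression. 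Everything else reduces to repeated use of the Itô table, the semigroup identity $dT_s(\mathbf{X})/ds=T_s(\mathcal{L}(\mathbf{X}))$, and the exponential ergodicity guaranteed by the Hurwitz assumption.
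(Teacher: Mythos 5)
Your plan follows essentially the same route as the paper's proof in Appendix \ref{ONCE1}: write a QSDE for the generator $\mathbf{G}_\theta(t)=\mathbf{U}_\theta^*\dot{\mathbf{U}}_\theta$, split it into a martingale part plus the drift $-i\int_0^t j_s(\mathbf{R}_0)\,ds$, evaluate the It\^{o}-correction term to get $\sum_i\langle\dot{\tilde{\mathbf{L}}}_i^{\dag}\dot{\tilde{\mathbf{L}}}_i\rangle_{\rho_{ss}}$, handle the drift via the Green--Kubo operator $\mathbf{W}$, and complete the square. Two small remarks: first, the paper deliberately does \emph{not} rotate the squeezed input to vacuum (it warns that static squeezing operations on the field are delicate inside stochastic integrals) and instead keeps the raw input, letting the $V(N,M)$-dependent It\^{o} table produce the modified couplings $\tilde{\mathbf{L}}$; your pre-rotation reaches the same operators but sidesteps the point the authors flag. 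Second, the cross term you correctly identify as the delicate step is exactly what the paper's Lemma closes, via the identity $\sqrt{s}\,\langle\xi|\mathbf{G}^*(s)j_s(-i\mathbf{R}_0)|\xi\rangle=\int_0^s T_r\circ\Upsilon\circ T_{s-r}(-i\mathbf{R}_0)\,dr$ with $\Upsilon(\mathbf{X})=i\mathbf{R}_0\mathbf{X}+\sum_i\dot{\tilde{\mathbf{L}}}_i^{\dag}[\mathbf{X},\tilde{\mathbf{L}}_i]$, proved by differentiating and solving the resulting first-order equation; your proposal asserts the telescoping but would need this (or an equivalent) lemma to be complete, together with the final identity $\langle\mathbf{W}\mathbf{R}_0+\mathbf{R}_0\mathbf{W}\rangle=\sum_i\langle[\tilde{\mathbf{L}}_i,\mathbf{W}]^{\dag}[\tilde{\mathbf{L}}_i,\mathbf{W}]\rangle$, which the paper imports from the cited non-linear-systems work rather than re-deriving.
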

The proof of this theorem follows the same method as the proof for non-linear systems from \cite{Guta4} and is given in Appendix \ref{cherry}.
The  reader may have noticed that so far we have assumed access to both system and field. Typically, in QLS theory we are of course only allowed  access to the field. So the result is actually  weaker than we require.     However, we can argue that Eq. \eqref{QFIMAIN} represents the QFI for the field only because as the system reaches stationarity, the rate at which information content is extracted from the system goes to zero, which is evidenced by the fact that power spectrum contains no terms from the system. This of course is not a proof; to prove that  Eq. \eqref{QFIMAIN} represents the QFI for the field only we can employ the following argument: consider the joint system-output state 
$\ket{\psi_{\theta}(t)}$ as above. By performing a \textit{Gaussian Schmidt Decomposition} \cite{Wolf1}, we can write 
$$\ket{\psi_{\theta}(t)}=\sum_i \sqrt{\lambda_i}\ket{\phi_i}\otimes \ket{\eta^{out}_i(t)},$$
where $\ket{\eta^{out}_i(t)}$ and $\ket{\phi_i}$ represent eigenbases of states of the output and stationary state, respectively 
(see Fig. \ref{spane}).
The output components $\ket{\eta^{out}_i(t)}$ are orthogonal (approximately, for large times) and their mixture is the output state. 
The proof that they are almost orthogonal should follow from the gap properties of the coupling operator $\mathbf{L}$ (or equivalently the eigenvalues of the system matrix $A$), but  we do not have  this yet. However, it has been proven in \cite{Guta4} for the non-linear setup, so we expect it to hold here 
too. The fact that the $\ket{\eta^{out}_i(t)}$ are almost orthogonal (even when you take different local parameters) means that you can distinguish them in the output without destroying the pure state $\ket{\eta^{out}_i(t)}$. 
Then each of these components has QFI with rate given by \eqref{QFIMAIN} and we are back in the case where we had access to the system and output. 



We are now in a position to prove the claim at the beginning of this section.

\subsection{Unidentifiable Directions in the Tangent Space}

\begin{thm}\label{activeclass}
If a one-dimensional family of QLSs $(C_{\theta},\Omega_{\theta})$ has QFI equal to zero, then the components of the tangent vector $\mathcal{T}:=(\dot{C}_{\theta},\dot{\Omega}_{\theta})$
satisfy 
\begin{equation}\label{rud1}
\dot{C}_{\theta}=-iC_{\theta}JR \,\,\,\mathrm{and}\,\,\, \dot{A}_{\theta}=i\left[JR,A_{\theta}\right]
\end{equation}
where $R$ is some Hermitian matrix of the form
$$R= \left(\begin{matrix} R_1&R_2\\R_2^{\dag}&R_1^{T}\end{matrix}
\right),$$
with $R_1=R_1^{\dag}$ and $R_2=R_2^T$ \cite{Xiao1}.
\end{thm}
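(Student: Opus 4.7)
The plan is to proceed in three stages: extract an operator identity from the vanishing of $f_\theta$, identify the quadratic structure of $\mathbf{W}$ to read off the coupling equation, and then recover the Hamiltonian equation from the induced infinitesimal symplectic gauge.

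First I would observe that each term in the sum \eqref{QFIMAIN} is of the form $\langle \mathbf{X}_i^\dagger \mathbf{X}_i\rangle_{\rho_{ss}}$ with $\mathbf{X}_i := \dot{\tilde{\mathbf{L}}}_i - i[\tilde{\mathbf{L}}_i, \mathbf{W}]$ and is therefore non-negative. Under the standing assumption of global minimality (implicit whenever one works with a unique faithful stationary state), Theorem \ref{equivalence} ensures that $\rho_{ss}$ is fully mixed and hence faithful, so $f_\theta = 0$ forces each $\mathbf{X}_i = 0$ as an operator identity on the system Fock space. This delivers the key relation
\begin{equation}\label{prop_key1}
\dot{\tilde{\mathbf{L}}}_i = i\bigl[\tilde{\mathbf{L}}_i,\mathbf{W}\bigr], \qquad i = 1,\ldots,m.
\end{equation}

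Next I would exploit the fact that $\mathbf{R}_0$ is a centered quadratic form in $\breve{\mathbf{a}}$ and that the Heisenberg semigroup $T_s$ preserves the quadratic-plus-constant structure for linear dynamics; consequently $\mathbf{W} = \int_0^\infty T_s(\mathbf{R}_0)\,ds$ is itself a self-adjoint quadratic Hamiltonian on the system. It can therefore be written in doubled-up form with an associated Hermitian matrix, which I denote $R$ (with the sign convention chosen to match the theorem); self-adjointness of $\mathbf{W}$ is exactly the condition that $R$ be Hermitian with the block structure $R = \Delta(R_1,R_2)$ appearing in the statement, i.e.\ $R_1 = R_1^\dagger$ and $R_2 = R_2^T$. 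Using the CCR $[\breve{\mathbf{a}}_k,\breve{\mathbf{a}}_p^{\#}] = J_{kp}$ I would then compute the linear commutator identity $[\breve{\mathbf{a}},\mathbf{W}] = JR\,\breve{\mathbf{a}}$, and writing the modified coupling in matrix form as $\breve{\tilde{\mathbf{L}}} = \tilde{C}\breve{\mathbf{a}}$ reduces \eqref{prop_key1} to the matrix identity $\dot{\tilde{C}} = -i\tilde{C}JR$ by equating coefficients of the independent components of $\breve{\mathbf{a}}$. Since the ``tilde'' modification $\tilde{C} = \sqrt{N^T{+}1}\,C - (\sqrt{N^T{+}1})^{-1}MC^{\#}$ depends only on the fixed input covariance $V(N,M)$ and not on $\theta$, this identity descends to $\dot{C} = -iCJR$.

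Finally, the identity \eqref{prop_key1} together with the quadratic nature of $\mathbf{W}$ shows that the zero-QFI tangent direction is implemented by the system-only unitary $e^{-i\theta\mathbf{W}}$, which acts on $\breve{\mathbf{a}}$ as the infinitesimal symplectic transformation $T_\theta = \exp(i\theta JR)$, placing the tangent vector inside the gauge group of Theorem \ref{symplecticequivalence}. Differentiating the gauge relation $J\Omega_\theta = T_\theta\,J\Omega_0\,T^{\flat}_\theta$ at $\theta = 0$ and combining with $\dot{C} = -iCJR$ via $A = -iJ\Omega - \tfrac{1}{2}C^\flat C$ yields, once the cross-terms $\dot{C}^\flat C + C^\flat\dot{C}$ are seen to contribute exactly the $-\tfrac{i}{2}[JR, C^\flat C]$ piece, the Hamiltonian equation $\dot{A} = i[JR, A]$. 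The hard part will be the middle stage: I will need to show rigorously that the relevant content of \eqref{prop_key1} comes entirely from the quadratic part of $\mathbf{W}$ (so that no constant or linear pieces contribute), and track how the doubled-up symmetries of $R$ are inherited from the self-adjointness and the specific quadratic form of $\mathbf{W}$ dictated by the Lindblad integral $\int_0^\infty T_s(\mathbf{R}_0)\,ds$.
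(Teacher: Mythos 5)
Your first two stages coincide with the paper's argument: under global minimality the stationary state is faithful, so the vanishing of each non-negative term in \eqref{QFIMAIN} forces the operator identity $\dot{\tilde{\mathbf{L}}}_i=i[\tilde{\mathbf{L}}_i,\mathbf{W}]$, and the paper likewise shows $\mathbf{W}=\breve{\mathbf{a}}^{\dag}B\breve{\mathbf{a}}$ with $B=\int_0^{\infty}e^{A^{\dag}t}Xe^{At}\,dt$ Hermitian and doubled-up, where $X=\tfrac{1}{2}\dot{\Omega}+\tfrac{1}{2}\mathrm{Im}(\dot{C}^{\dag}JC)$ and $R=2B$; equating coefficients of $\breve{\mathbf{a}}$ then yields the coupling relation, and passing from $\tilde{C}$ to $C$ is harmless because the modification is a fixed, $\theta$-independent left multiplication. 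The difficulty you flag in your middle stage is not where the problem lies: $\mathbf{R}_0$ is centred and exactly quadratic, and $T_t(\mathbf{R}_0)=\breve{\mathbf{a}}^{\dag}e^{A^{\dag}t}Xe^{At}\breve{\mathbf{a}}$ with no residual constant or linear pieces, so that step goes through as you hoped.

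The genuine gap is in your final stage. The identity $\dot{\tilde{\mathbf{L}}}_i=i[\tilde{\mathbf{L}}_i,\mathbf{W}]$ constrains only the couplings; it does not establish that the Hamiltonian also transforms by conjugation with $e^{-i\theta\mathbf{W}}$, so you may not differentiate the gauge relation $J\Omega_{\theta}=T_{\theta}J\Omega_0T_{\theta}^{\flat}$ of Theorem \ref{symplecticequivalence} — for the Hamiltonian, membership in that gauge orbit is exactly what is to be proved, and assuming it is circular (it would also defeat the stated purpose of Sec. \ref{JUKKA1}, namely to derive the equivalence classes from the QFI without appealing to the transfer-function machinery). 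The information about $\dot{\Omega}$ enters only through the definition of $\mathbf{W}$: since $X$ contains $\tfrac{1}{2}\dot{\Omega}$, the matrix $B$ satisfies the Lyapunov equation $A^{\dag}B+BA+X=0$. Multiplying this by $J$, using the physical-realisability identity $A^{\dag}=-JAJ-C^{\dag}CJ$, and substituting the already-derived $\dot{C}=-iCJ2B$ into $JX$ is what produces $\dot{A}=i[J2B,A]$. This Lyapunov step is the missing ingredient; without it the Hamiltonian half of \eqref{rud1} is asserted rather than proved.
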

\begin{proof}
The reverse implication can be straightforwardly verified, so it remains to prove the forward implication. We drop the subscript $\theta$ for ease of notation.

Firstly, as the system is globally minimal $\rho_{ss}$ must be of full rank, the zero QFI rate implies 
\begin{equation}\label{rud}
\dot{\tilde{\mathbf{L}}}_i= i\left[\tilde{\mathbf{L}}_i, \left(\int^{\infty}_0T_s(\mathbf{R}_0)\right)\right]
\end{equation} 
or all $i$.

Now, let us calculate $\mathbf{W}=\int^{\infty}_0T_s(\mathbf{R}_0)ds$. 
%
We can write $\mathbf{R}_0$ as 
 $$\mathbf{R}_0=\breve{\mathbf{a}}^{\dag}X\breve{\mathbf{a}}-     \left<\breve{\mathbf{a}}^{\dag}X\breve{\mathbf{a}}\right>_{\rho_{ss}}\mathds{1},$$ where the  hermitian matrix  $X$ is given by
\begin{align*}
X=\Delta&\left(   \frac{ \dot{\Omega}_-}{2}+\frac{1}{4i}\left(\left(\dot{C}_-^{\dag}C_-   -C_-^{\dag}\dot{C}_-\right)+ \left(\dot{C}_+^{\dag}C_+-C_+^{\dag}\dot{C}_+^{}\right)^T\right),\right.\\
&\left. \frac{ \dot{\Omega}_+}{2} +\frac{1}{4i}\left(     \left(\dot{C}_-^{\dag}C_+ -C_-^{\dag}\dot{C}_+\right)    +     \left(\dot{C}_-^{\dag}C_+ -C_-^{\dag}\dot{C}_+\right)^T             \right)\right).
\end{align*}
Note that here and in the following $C_-$ and $C_+$ are the modified coupling coefficients (i.e. $\tilde{\mathbf{L}}=C_-\mathbf{a}+C_+\mathbf{a}^{\dag}$). In fact one can easily verify that $X=\frac{\dot{\Omega}}{2}+\frac{1}{2}\mathrm{Im}\left(\dot{C}^{\dag}JC\right)$.
Using  \eqref{lan4} we can show  that   %
\begin{align*}
 T_t(\mathbf{R}_0)&=    T_t(\mathbf{R})-\left<\breve{\mathbf{a}}^{\dag}X\breve{\mathbf{a}}\right>_{\rho_{ss}}\mathds{1}\\
 &=\breve{\mathbf{a}}^{\dag}e^{A^{\dag}t}Xe^{At}\breve{\mathbf{a}}+\left<\breve{\mathbf{a}}^{\dag}X\breve{\mathbf{a}}\right>_{\rho_{ss}}\mathds{1}-\left<\breve{\mathbf{a}}^{\dag}X\breve{\mathbf{a}}\right>_{\rho_{ss}}\mathds{1}\\
 &=\breve{\mathbf{a}}^{\dag}e^{A^{\dag}t}Xe^{At}\breve{\mathbf{a}}.
 \end{align*}
The second line follows if $t$ is taken to be sufficiently large, which is a valid assumption as we are working at stationarity. 
Notice that  one is not required to calculate the stationary mean here. Hence,
\begin{equation}\label{hasbro}
\int^{\infty}_0T_s(\mathbf{R})ds=\breve{\mathbf{a}}^{\dag}\left(\int_0^{\infty}e^{A^{\dag}t}Xe^{At}dt\right)\breve{\mathbf{a}}:=     \breve{\mathbf{a}}^{\dag}B\breve{\mathbf{a}}     
\end{equation}
Observe that $B$ satisfies the properties of a local symplectic transformation (i.e it is hermitian and doubled-up (see \eqref{rud1})  because  $X$ does.   
 
 
Now $\mathbf{L}_i=C_{-_i}\mathbf{a}+C_{+_i}\mathbf{a}^{\dag}$ where $C_{\pm_i}$ is the $i$th row of $C_{\pm}$. Define the following matrix 
  \[C_i=\left(    \begin{matrix}C_{-_i}& C_{+_i}   \\  C_{+_i}^{\#} &C_{-_i}^{\#} \end{matrix}      \right),\]
 which describes the coupling to the $i$th field.  Now from Eq. \eqref{rud}, by 
 calculating the RHS and equating coefficients of $\mathbf{a}$ and $\mathbf{a}^{\dag}$,  it follows that 
 \[\dot{C}_i=-iC_iJ2B,\]
  hence
   \begin{equation}\label{gauge1}\dot{C}=-iCJ2B.\end{equation}
   Eq. (\ref{gauge1}) is the desired gauge transformation on the coupling matrix. 
   
   To find the  Hamiltonian relation, consider $B=  \int_0^{\infty}e^{A^{\dag}t}Xe^{At}$. This  is formally  equivalent to the Lyapunov equation \cite{Ljung1}:
 \begin{equation}\label{Lyap}
 A^{\dag}B+BA+X=0.
 \end{equation}
As $A= -iJ\Omega-\frac{1}{2}JC^{\dag}JC$, it follows that $A^{\dag}=-JAJ-C^{\dag}CJ$. Thus taking the Lyapunov equation (\ref{Lyap}) and multiplying by $J$ one obtains
\begin{align*}
0&=JA^{\dag}B+JBA+JX\\
&=\left[JB, A\right]-JC^{\dag}JCJB+JX\\
&=\left[JB, A\right]+\frac{i}{2}\dot{A}.
\end{align*}
The last step here is understood by using the substitution $\dot{C}=-iCJ2B$ into $JX=\frac{1}{2}J\dot{\Omega}+\frac{1}{2}\mathrm{Im}\left(J\dot{C}^{\dag}JC\right)$.
Hence
\begin{equation}
\dot{A}=i\left[J2B, A\right]
\end{equation}
as required.

\begin{remark}
Eq. \eqref{gauge1} is written in terms of the modified coupling matrix. One can obtain the result for the unmodified coupling matrix by multiplying both sides of \eqref{gauge1} by $S^{\flat}$ (defined in Eq. \eqref{vtrick}). As the input is independent of any unknown parameter, the result will be the same as in  the modified situation.
\end{remark}
\end{proof}

Hence the infinitesimal \textit{gauge transformations} \eqref{rud1} are  indistinguishable from any measurement of the field. 
Notice that these  transformations can be generated from the unidentifiable   transformations,  
$C\mapsto CS^{\flat}, \Omega\mapsto S\Omega S^{\flat}$.  Therefore, this result says that the infinitesimal transformations in   tangent space given by zero QFI rate, are exactly those generated by the unidentifiable symplectic transformations in Theorem \ref{main}.

\begin{remark}
Notice that the transformations  \eqref{rud1}, which have been derived directly from unidentifiable directions in the QFI, don't leave the system unchanged. 
 This  further justifies  our argument   that equation \eqref{QFIMAIN} is the quantum Fisher information rate for the field only (and not the field and system).
\end{remark}

\subsection{Equivalent Expressions for the QFI in the  Stationary Approach}\label{ounce1}
The following   expression for the QFI rate can be derived from  \eqref{QFIMAIN}:
\begin{align}\label{gun1}
f_{\theta|\mathrm{time}}=4 \mathbb{E}_{\rho_{ss}}\left[   \breve{\mathbf{a}}^{\dag}D^{\dag}JV(N,M)JD\breve{\mathbf{a}}   \right],
\end{align}
where $D:=\dot{C}-iCJ2B$ with $B$ as in eq. \eqref{hasbro}. Notice that $D=0$ corresponds to the  unidentifiable directions in Theorem \ref{activeclass}. We denote the QFI rate here by $f_{\theta|\mathrm{time}}$ to signify that it was derived from the time-domain. 

We can also obtain an expression for the QFI in the frequency domain. Since all frequency modes are independent at stationarity, the QFI per unit time is given by the following Parseval's theorem type result:
\begin{equation}\label{gun2}
f_{\theta|\mathrm{freq}}=\frac{1}{2\pi}\int^{\infty}_{-\infty}f_{\theta}(\omega)d\omega,
\end{equation}
where $f_{\theta}(\omega)$ is the QFI per unit time in frequency $\omega$, i.e.,   the QFI of the Gaussian state with covariance matrix   $\Psi(\omega)$. To obtain the QFI rate on frequency $\omega$, we use the known  Gaussian state result  \cite{Ziang1}, which gives the  QFI   in terms of its moments: 
 \begin{equation}
f_{{\theta}}(\omega)=-\frac{1}{4}\mathrm{Tr}\left(J\dot{\Psi}(\omega)J\dot{\Psi}(\omega)\right).
\end{equation}
Clearly \eqref{gun1} and \eqref{gun2} must be equivalent. We show this in  an example.

\begin{exmp}
Consider the SISO quantum cavity as in example \ref{opticalcavity}, parameterised by (passive) coupling $c\in\mathbb{R}$ and Hamiltonian $\Omega$ and suppose that we would like to estimate $\Omega$.

First let us calculate $f_{\Omega|time}$ at $\Omega=0$. In this case $D|_{\theta=0}$ is given by $D=i/c$. Also, the covariance of the  stationary state of the system, $P:=\mathbb{E}_{\rho_{ss}}\left[\breve{\mathbf{a}} \breve{\mathbf{a}} ^{\dag}\right]$, can be obtained from the Lyapunov equation \eqref{eq.Lyapunov}. It turns out that 
$P=V(N, M)$
at $\theta=0$.
Notice that the stationary state of the system in this case is pure because at $\theta=0$ the system is no longer globally minimal (see Theorem \ref{sisogm}).
Therefore, 
$$f_{\Omega=0|\mathrm{time}}=\frac{16N(N+1)}{c^2}$$
(where we have use the purity condition $N(N+1)=|M|^2$ here).

On the other hand we can calculate $f_{\Omega=0|freq}$. It is a simple exercise to obtain
$$f_{\Omega=0|\mathrm{freq}}=\frac{N(N+1)}{\sqrt{2\pi}}\int^{\infty}_{-\infty} \frac{8c^4}{\left(\omega^2+\frac{1}{4}c^4\right)^2}d\omega.$$
Finally, by performing the change of variables $\omega=\frac{1}{2}\mathrm{tan}\alpha$ or otherwise, we obtain $f_{\Omega=0|freq}=\frac{16N(N+1)}{c^2}$.
\end{exmp}


\section{Conclusion}

Our main result is that under global minimality and pure stationary inputs the power spectrum contains as much information as the transfer function, i.e., their classes of equivalent systems are the same in both functions. Therefore, no information is lost by utilising stationary inputs rather than time-dependent inputs.  It is interesting to note that this equivalence between the power spectrum and transfer function is a consequence of the unitarity and purity of the input state, and does not hold for  general CLSs (\cite{Anders1, Glover1}). 

We also extended these results to a class of mixed inputs, that is, thermal inputs for PQLSs.
Identifiability in the case of general mixed input states (for general QLSs) remains an open question. 
The difficulty arises due to the failure of the [fully mixed stationary state $\iff$ globally minimal] result. 
Nevertheless, we  expect that the transfer function can be recovered uniquely from the power spectrum generally. 
Not only does our analysis with thermal states for PQLSs supports this, but also because  the inputs here are  mixtures of pure states, on which the result holds. This problem will be a focus for us in future works.

Given that we now understand what is identifiable, the next step is to understand how well parameters can be estimated. This requires a two step approach. Firstly, one finds the best input state giving the largest QFI in \eqref{gun1} or \eqref{gun2} in terms of some resource (e.g. time or energy). Secondly, is there a simple measurement choice that enables one to attain the optimal estimation precision from the QFI? In these problems there could be one or many unknown parameters. We discuss one such estimation problem in Sec. \ref{FEDERER}.

 Lastly, it would be interesting to consider these identifiability problems in the more realistic scenario of noisy QLSs. That is the analogous problem to the one in Sec. \ref{noisek} for time-dependent inputs. 


\chapter{Quantum Enhanced Estimation of PQLSs}\label{QEEP}

The goal of this chapter is to  understand how well we can estimate the parameters governing the 
dynamics in a  PQLSs.  We consider the time-dependent approach in this chapter  (see Sec. \eqref{ghu}).
The first crucial question that one must ascertain before any statistical estimation 
can begin, is to understand what one could possibly hope to identify. 
We saw in Sec. \ref{idenf}  that two minimal systems with parameters $(\Omega,C,S)$ and $(\Omega', C',S')$ are 
equivalent if and only if their parameters are related by a unitary transformation, 
i.e. $C'=CT$ and $\Omega'=T\Omega T^{\dag}$ for some $n\times n$ unitary 
matrix $T$, and $S= S'$. Therefore, without any additional information, we can only identify the equivalence 
class of systems related by a unitary transformation as above. 
We consider now that some prior information is available, which is encoded in a 
parametrisation $\theta\to (\Omega_\theta, C_\theta, S_\theta)$ in terms of 
an unknown parameter $\theta\in\Theta\subset\mathbb{R}^d$. 
Under this assumption, the system is identifiable if each equivalence class 
contains at most one element $(\Omega_\theta, C_\theta, S_\theta)$ of the 
model (see Definition \ref{identy}).

Our objective reduces to performing a measurement on the output and 
finding an estimator $\hat{\theta}$ of $\theta$ based on the measurement outcome. 
The optimization is two-fold; over the choice of input state and the measurement. 
We identify three properties that are desirable for a `good' probe states and measurements pairs:
\begin{enumerate}

\item{Realistic states:} 
the input states can be prepared with current technology;

\item{Sensitivity:} 
the input states are sensitive to the change $\theta$;			

\item{Simple measurements:}
The scaling from (ii) may be achieved with realistic output measurements.
\end{enumerate}
The setup is similar  to the standard 	quantum metrology setup (see Sec. \ref{class}), where  one also has access to an input and performs measurements 
on the output to infer information about the black-box. 
However, the added difficulty in our setup is that we work in continuous time 
and therefore the output quantum signal can be measured to produce a 
(continuous) classical stochastic process, whereas in traditional quantum 
metrology one works in discrete time. 
This requires a different analysis of the behaviour \cite{Jacobs1}. 
As a consequence, 
 the theory of \textit{quantum enhanced system identification} 
has not yet been fully developed. 
The purpose of chapter is to strive towards this by building on the results 
of \cite{Guta2}. 

In the first half of this chapter (Sec. \ref{Toes1}) we provide a realistic scheme to identify a {\it single} 
unknown parameter of a PQLS at the Heisenberg limit. 
We consider an interferometric approach, where the essential idea is to detect 
an unknown variation between two almost identical systems. 
The technique used here is the same  as that used to detect gravitational 
waves \cite{LIGO1, LIGO2}. 
We show that it is possible to achieve optimal scaling using squeezed input states 
in terms of the \textit{quantum Fisher-information} \cite{Petz1}. 
The action of the interferometer is to displace a squeezed state by an amount 
proportional to the unknown parameter in a known direction in quadrature 
phase space. 
Then performing a simple homodyne measurement of this quadrature provides 
an estimate of the unknown parameter at the fundamental limit. 
The scheme is physically realistic as those non-classical states may be prepared 
with current technology. 
A physical example of a MIMO PQLS is given for estimating entanglement in two atomic ensembles.

In  Sec. \ref{Toes2} we consider the multiple parameter problem. The optimal states in  the single parameter case in terms of fixed photon number 
are (shown to be) of a single frequency. 
However, for SISO multi-parametric models the situation is more complicated as 
it is impossible to identify more than one parameter using a monochromatic 
input state. 
We discuss how to extend the interferometric approach using squeezed-coherent states to the case of multiple parameters. We see that Heisenberg scaling is also possible in this case and a simple measurement presents itself, which again comes down to detecting an unknown displacement in a known direction using homodyne measurements. 
We also show that a result from \cite{Humphreys1}, i.e., that there is an $\mathcal{O}(d)$ enhancement to be had for states with a fixed number of photons by estimating the  multiple parameters 
using entangled states, can be extended to  SISO PQLSs,  Here, $d$ is the number of parameters.

\section{Quantum Enhanced Estimation of PQLSs; Single Parameter Estimation}\label{Toes1}
In this section 
let $(S,C,\Omega) = (S_\theta,C_\theta, \Omega_\theta)$ be a PQLS, whose 
dynamics depend on a one-dimensional parameter $\theta\in \mathbb{R}$. We describe the estimation precision for several choices of input state.

\subsection{Previous Results for PQLSs}\label{prevol}
\subsubsection{Product States} \label{tonga} 
Suppose that the input is given by a  pure state $|\psi\rangle$, and the energy 
is approximately spread over a finite number of frequencies, 
$\omega_1, \dots, \omega_p$ so that $|\psi\rangle$ can be represented as 
a tensor product over the chosen frequency modes (all other modes are in the 
vacuum state):
\begin{eqnarray}
   \ket{\psi}=\ket{\psi_1; \omega_1}\otimes\ket{\psi_2; \omega_2}\otimes
        \dots \otimes \ket{\psi_p;\omega_p},
\end{eqnarray}
subject to the energy constraint 
\begin{eqnarray}
E=\sum_{i=1}^{p}{}\Braket{ \psi_i;  \omega_i |  \mathbf{b}^{\dag}(\omega_i)\mathbf{b}(\omega_i)|  \psi_i; \omega_i}.
\end{eqnarray}
Due to linearity, each input frequency mode is affected by the PQLS independently 
of the others. 
Considering the optimization over input state first, the objective is to find 
a state with a large QFI, which will provide a low QCRB for the mean square 
error. 
Since the QFI is additive for product states, it follows that 
${I(\theta)}=\sum_{i=1}^p{I_i(\theta)}$, where ${I_i(\theta)}$ is the QFI 
for frequency $\omega_i$. 
Combining these facts, it follows that for one-dimensional parameter and 
product input states, the optimal input is a monochromatic signal. 
Note that the optimal frequency may depend on the unknown parameter. 
In practice one could use an adaptive procedure to tune the input frequency 
in order to obtain better estimation precision \cite{Ljung1}. 
Based on this argument, we can ignore the optimal frequency's dependence 
on $\theta$ in the limit of large input energies.

\begin{exmp}[Coherent state input \cite{Guta2}] \label{ex1}
Suppose that we probe the system with a (monochromatic) coherent state 
of amplitude $\alpha\in\mathbb{C}^m$ and frequency $\omega$, i.e. 
$\ket{\psi} = \ket{\alpha; \omega}$, with energy, $E$, given by 
$E = \|\mathbf{\alpha}\|^2$. 
The output state is obtained by rotating the amplitude vector $\alpha$ 
by the $\theta$-dependent transfer function, i.e. it is the coherent state 
$\ket{\Xi_{\theta} (-i\omega)\alpha ;\omega}$. 
Using Eq. \eqref{unit}, it follows that the QFI  is given by 
\cite{Guta2}
\begin{eqnarray}\label{tonga2}
    F(\theta|\alpha)
      = \left\|\frac{d\Xi_{\theta}(-i\omega)}{d\theta} 
            \alpha \right\|^2.
\end{eqnarray}
This is maximized when $\alpha$ is the eigenvector of the selfadjoint operator 
$L(\omega, \theta):= i\Xi_{\theta}^*(-i\omega)\cdot \mathrm{d}\Xi_{\theta}(-i\omega)/\mathrm{d}\theta$ whose eigenvalue 
has the largest absolute value. Thus the optimum QFI over frequencies and 
coherent amplitudes is  
$$
     F(\theta) 
         = E \cdot \sup_\omega \left\|
                  L(\omega, \theta)
                        \right\|^2.
$$
Since the QFI scales linearly in $E$, this is the \textit{standard scaling} or 
``shot noise" regime. 
Furthermore, it can be shown that by measuring an appropriate output 
quadrature one can achieve the above scaling for the optimal frequency, 
essentially by following the techniques of \cite{Wiseman3}.
\end{exmp}

\subsubsection{Non-Gaussian States.}\label{cats}

We now replace the coherent state of the previous example by more general state supported by a single monochromatic input mode of frequency $\omega$. In the space of input modes $\mathbb{C}^m$ we choose the mode corresponding to the eigenvalue of 
$L(\omega, \theta)$ with the largest absolute value. We prepare this mode in a ``cat state'' \cite{Monroe1} , i.e. a coherent superposition of Fock states of energy $E$
$$
|CAT; \omega \rangle:=\frac{|0;\omega \rangle + |N ;\omega\rangle}{\sqrt{2}}, \qquad N= 2E
$$ 
The QFI is  \cite{Guta2}
$$
       F(\theta) 
            = 4 E^2 \cdot \left\|
                    L(\omega, \theta)
                       \right\|^2,
$$
which exhibits a \textit{quadratic or Heisenberg scaling} in energy. 
Therefore these states are optimal for monochromatic inputs with a fixed 
number of photons.
 There are several (physical and mathematical) problems associated with this choice of input states. Firstly, it is not clear how to achieve the Heisenberg scaling with a realistic measurement.
Secondly, the output signal has a period of $2\pi/N$ with respect to $\theta$. A consequence of this is that one will only be able to determine the phase modulo $2\pi/N$. In order for the result to be unambiguous, it would appear that the phase must already be localised within a $2\pi/N$ interval beforehand (i.e. the variance of the prior distribution is of order $\mathcal{O}(1/N^2)$). 
This situation may be resolved by adaptive procedures based on varying the number of photons in the input state \cite{Berry1}. Finally, these highly non-classical states are difficult to create in practice; with present technology they are limited to very small $N$ \cite{Afek1}.

\subsubsection{Interferometric Approach}\label{inter1}

We now briefly discuss an alternative setup to the one in Sec. \ref{cats} for the standard metrology protocol, 
as it will be useful for our estimation method. 
Consider the setup in Fig. \ref{c26} where we estimate the phase difference between the two arms of the interferometer. For a fixed number of particles $N$, the optimal states to use are  \textit{N00N states} 
\cite{Bollinger1} for which the QCRB reaches the fundamental limit on precision given by the Heisenberg limit $1/N^2$. 
Moreover, it can be shown that the QCRB is attained by performing 
photon counting measurements \cite{Berry1}.  Just like ``cat states'', the N00N states have a large QFI is because of the 
large number variance between the two interferometer modes, but are similarly difficult to prepare in practice. 

\begin{figure}
\centering
\includegraphics[width=80mm, height=50mm]{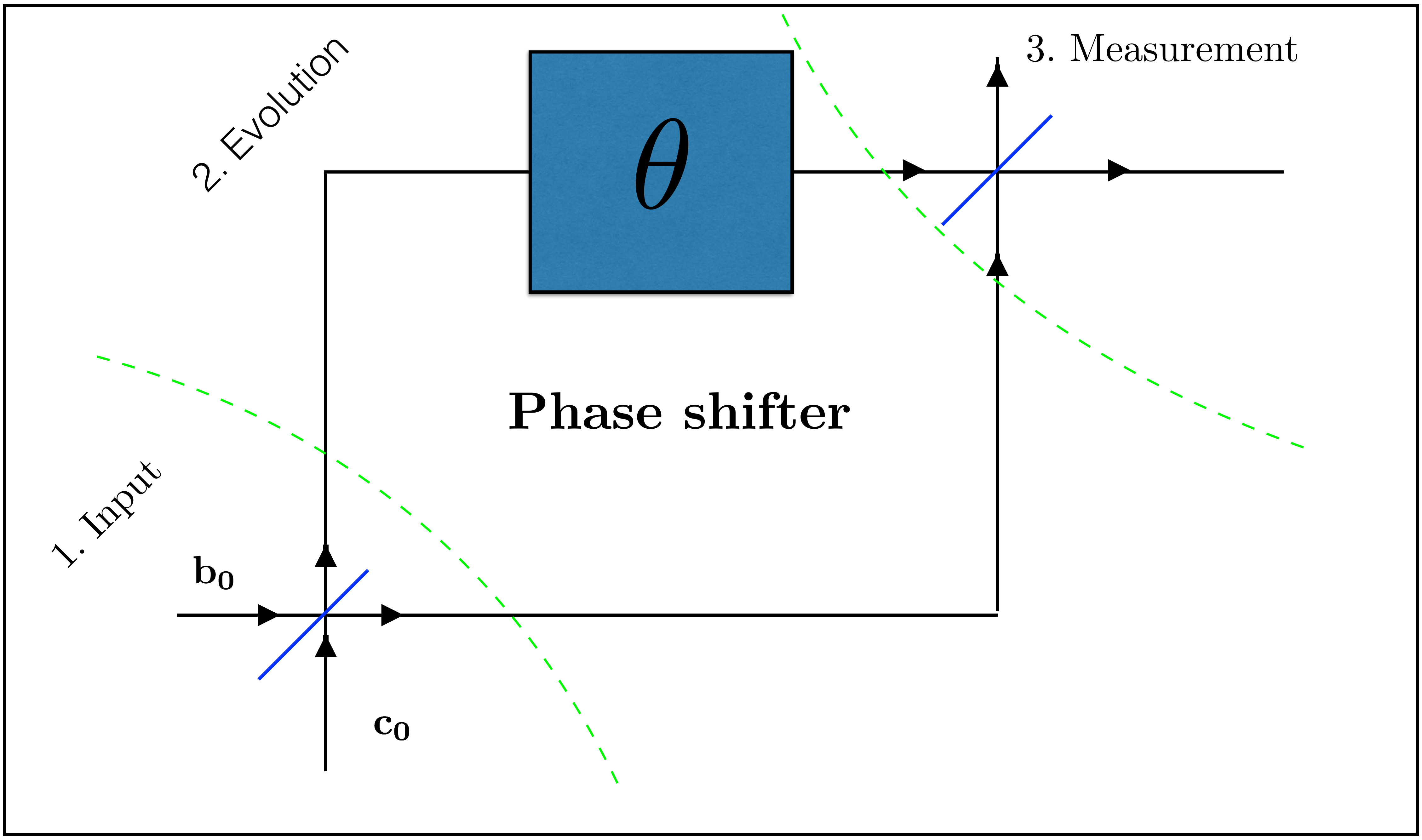}
\caption{
This figure shows the quantum metrology interferometric set-up. 
Two input channels are fed into the interferometer, acquiring a phase 
difference between the two beam-splitters. 
This phase difference is the result of a SISO PQLS. There are three stages to the interferometer, an input stage, an evolution 
stage and a measurement stage.
\label{c26}}
\end{figure}


\subsubsection{States Entangled in the Frequency Domain?}

Let us discuss briefly  the possibility of using input states entangled 
over many frequencies. 
We have seen above that monochromatic states are optimal for coherent 
inputs.  In Appendix \ref{frequent} we argue for fixed photon input states that  there is little advantage in considering frequency-entangled inputs. 
Based on this evidence and the fact that this paper focuses on realistic metrology, 
we will assume that all inputs are monochromatic for one parameter models. 
Clearly, multiple frequencies must be considered in finding optimal 
states in the multi-parameter case (cf. Sec. \ref{Toes2}).

\subsubsection{Indefinite Photon-Number States}\label{indefinite}

Here we consider using  indefinite photon-number states \cite{Rafal1}
 as a resource for quantum 
metrology. 
The difference with the above is that we fix only the average 
photon number (or equivalently average energy) rather than demand that 
 photon number is fixed.


\subsection{Estimation of a One-Dimensional Parameter in a PQLS}
\label{sec.1D}

In this section we will describe an  interferometric approach  
for identifying the unknown parameter in our system $(S_\theta,C_\theta, \Omega_\theta)$. Crucially our choice of input state will have all of  the desirable features suggested above. 
  The scheme is inspired by a phase estimation scheme proposed in \cite{Caves1}. We will show that the performance of these states in terms of sensitivity are as good as the best fixed particle number strategy (N00N states). We first treat the case of SISO systems after which we extend to the case of MIMO systems.

\subsubsection{The SISO Case}\label{bon}

Consider the  interferometric setup as in Sec. \ref{inter1}, where we have replaced the standard phase shifter with our SISO PQLS $(S_\theta,C_\theta, \Omega_\theta)$. Consider also feeding two (quasi)-monochromatic light pulses of frequency $\omega$  into the input channels, whose frequency-domain modes are denoted by $\mathbf{b}_0$ and $\mathbf{c}_0$; the first pulse is a coherent state 
$\ket{\alpha}= \exp(i (\alpha \mathbf{c}_0+ \bar{\alpha} \mathbf{c}^*_0)|\Omega\rangle$ of amplitude 
$\alpha:=|\alpha| e^{i\psi}\in\mathbb{C}$ while the second is a squeezed state 
$\ket{S^{\phi}_r} := \exp(i\phi \mathbf{b}^*_0 \mathbf{b}_0)\exp(r (\mathbf{b}_0^2- \mathbf{b}_0^{*2})/2)|\Omega\rangle$, 
where $\phi$ and $r$ represent the angle and respectively magnitude of squeezing. The (average) energy of the input is given by 
$E:=
\langle S^\phi_r  | \mathbf{b}_0^*\mathbf{b}_0|  S^\phi_r \rangle +  
\langle \alpha | \mathbf{c}_0^* \mathbf{c}_0 | \alpha \rangle=\sinh^2(r)+ |\alpha|^2$. 

\begin{remark}
It can be seen \cite{Giov1, Hofmann1} that the $N^{\mathrm{th}}$-particle component of the output of the first beam splitter for these squeezed/ coherent input states have a very large N00N component, which is essentially why they turn out to be so advantageous for quantum metrology.
\end{remark}

Firstly, these states are  physically realistic. The creation and use of such states for metrology purposes has been demonstrated in \cite{LIGO1, LIGO2, Pitkin1, Rafal1}.           
Rather than producing a squeezed state of a given frequency, a more 
realistic approach is to send a squeezed state containing a continuum of 
frequencies and post-process (take the Fourier-transform) the output signal 
of the measured time-domain momentum quadrature. 
Choosing a larger bandwidth of frequencies will result in a shorter experiment 
time but will decrease the energy of a given frequency. 
The frequency domain profile of the input state can be shifted so 
that it centres on a particular frequency \cite{Gardner1}.
One is also able to tune the degree and direction of squeezing in the frequency 
domain \cite{Lvovsky1}. 

Since we deal with a SISO system, the transfer function at frequency $\omega$ is a complex phase  
$ \Xi_{\theta}(-i\omega):=e^{-i\lambda(\omega, \theta)}$. Taking into account the action of the two beamsplitters we find that the input-output map in the Schr\"{o}dinger picture is (see for example \cite{Rafal1}) 
\begin{equation}\label{scar}
|S^\phi_r \otimes \alpha \rangle \mapsto 
e^{-i\lambda(\omega,\theta) 
\left(\mathbf{b}_0^* \mathbf{b}_0 -\mathbf{c} _0^*   \mathbf{c}_0\right)/2 } | S^\phi_r \otimes \alpha\rangle
=:\ket{\psi_{\theta}}
\end{equation}
Now, using  formula \eqref{unit} we find that the QFI depends only on the difference $\psi-\phi$, so without loss of generality we may set $\phi=0$; in this case, the QFI is maximized when $\psi=0$ and is explicitly given by 
${F(\theta)}=|\partial\lambda(\omega, \theta)/\partial\theta|^2
\left(\alpha^2e^{2r}+\sinh^2{r}\right)$ \cite{Rafal1}. 
This situation corresponds to momentum-squeezing in one arm and position 
displacement in the other. We note that putting all of the energy into one of the two arms 
(i.e. either $\alpha=0$ or $r=0$) will result in ${F(\theta)}\propto E$, which 
is the standard precision scaling. 
The optimal QFI subject to the energy constraint $E=|\alpha|^2+\sinh^2(r)$ is achieved when equal energy is put into both arms: $\alpha^2=\sinh^2(r)=E/2$; then 
\begin{equation}\label{put1235}
      {F(\theta)}=\Big|\frac{\partial\lambda(\theta)}{\partial\theta}\Big|^2
                              \Big( \frac{E}{2}\cdot e^{2r} + \frac{E}{2} \Big)
                  \approx 
                  \Big|\frac{\partial\lambda(\theta)}{\partial\theta}\Big|^2E^2. 
\end{equation}
Therefore, this strategy is (asymptotically) as good as the one using N00N states \cite{Rafal1}.

Finally we will see that this QFI scaling can be achieved experimentally by using a homodyne 
measurement. Since we are interested in the asymptotic regime of large energy, we assume that the parameter 
$\theta$ is known up to at least the standard uncertainty $E^{-1/2}$, so we can write $\theta= \theta_0 + h/\sqrt{E}$ 
where $\theta_0$ is known and  $h$ is an unknown ``local parameter". The reference parameter 
$\theta_0$ could be obtained by using iterative adaptive procedures while using a small proportion of the input energy \cite{Gill1}.
Using the knowledge of $\theta_0$, we slightly modify the set-up by adding a second linear system in 
the other arm of the interferometer, denoted by $\theta_0$ in Fig. \ref{daf}. This can be a static phase rotation element with phase $e^{-i\lambda(\omega, \theta_0)}$, or another PQLS with matrices 
$(S,C,\Omega) = (S_{\theta_0},C_{\theta_0}, \Omega_{\theta_0})$, and its purpose is to ``balance" the action of the PQLS and simplify the structure of the final measurement as shown below. Later, we see another instance of balancing the action of the QLS when we discuss \textit{quantum absorbers}  in Ch. \ref{DUAL}.
\begin{figure}
\centering
\includegraphics[width=90mm, height=55mm]{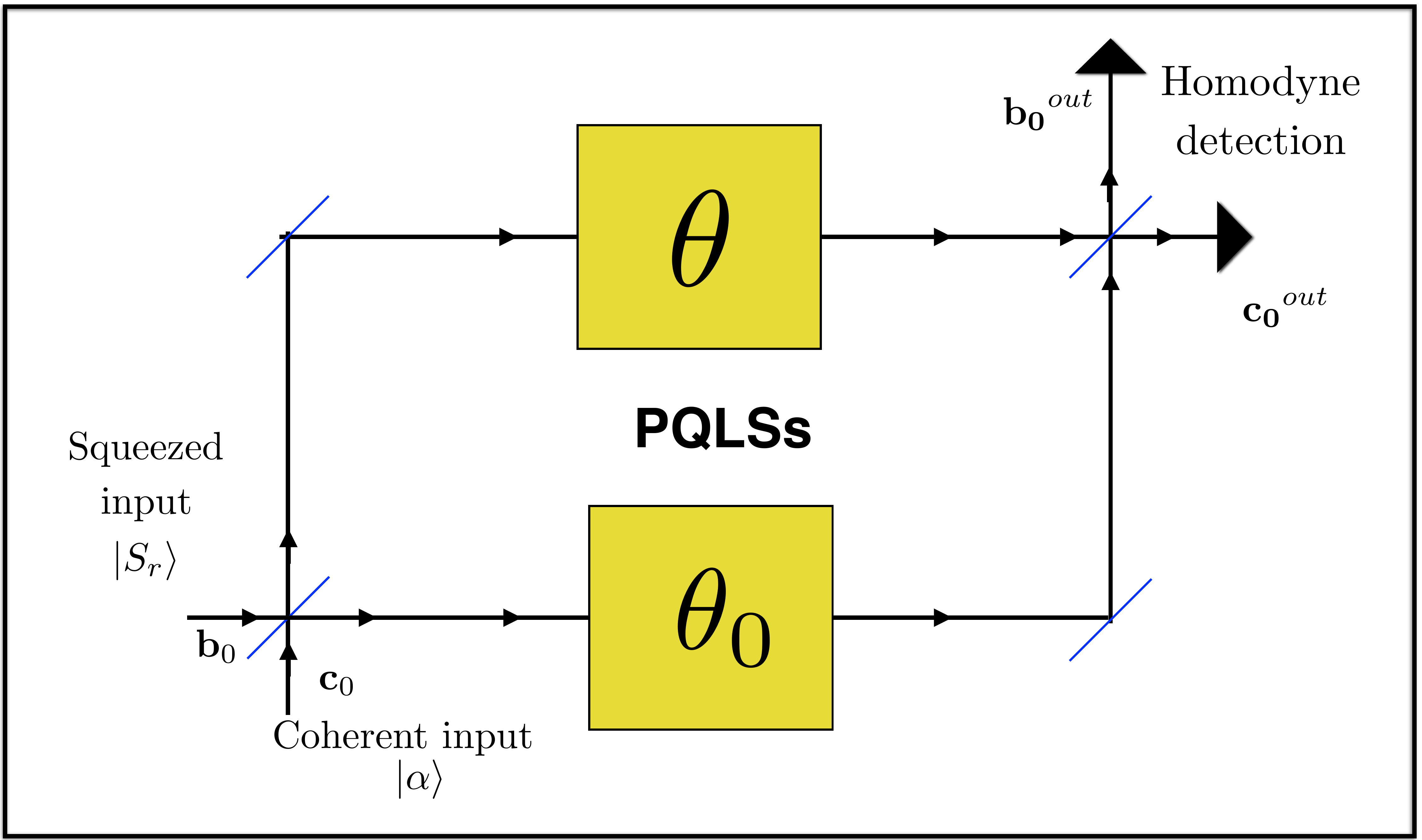}
\caption{Figure showing the interferometric set-up for SISO PQLS with 
a squeezed state $\ket{S_r}$ incident upon one arm of the interferometer 
and a coherent state $\ket{\alpha}$ in the other. 
The set-up is the same as before except that a phase shifter or a second known PQLS has been 
placed in the other interferometric arm.
\label{daf}}
\end{figure}
Now, the output operators of the 
interferometer  are given by
\begin{eqnarray*}
      &\mathbf{b}^{out}_0
            =\frac{1}{2}\Big(\mathbf{b}_0(e^{-i\lambda(\omega,\theta)}+e^{-i\lambda(\omega,\theta_0)})
                            +\mathbf{c}_0(e^{-i\lambda(\omega,\theta)}-e^{-i\lambda(\omega,\theta_0)})\Big),
\\\ &
      \mathbf{c}^{out}_0
            =\frac{1}{2}\Big(\mathbf{b}_0(e^{-i\lambda(\omega, \theta)}-e^{-i\lambda(\omega,\theta_0)})
                             +\mathbf{c}_0(e^{-i\lambda(\omega,\theta)}+e^{-i\lambda(\omega, \theta_0)})\Big).
\end{eqnarray*}
%
%
By  expanding $e^{-i\lambda(\omega, \theta)} = e^{-i\lambda(\omega,\theta_0)} (1-i h\lambda^\prime_\theta(\omega,\theta_0)/\sqrt{E}) + O(E^{-1})$ 
we find that in the optimal setting of equal energy in the input channels ($|\alpha|^2=E/2$) the action of the output annihilator is 
$$
\mathbf{b}^{out}_0\ket{S_r}\otimes\ket{\alpha}\approx
e^{-i\lambda(\omega, \theta_0)}
\left(\mathbf{b}_0-i h\lambda^\prime_\theta(\omega,\theta_0) \frac{\alpha}{2\sqrt{E}}\mathds{1}\right)\ket{S_r}\otimes\ket{\alpha}.
$$ 
 Hence when the energy is large we have
$
       \mathbf{b}^{out}_0\approx e^{-i \lambda(\omega,\theta_0)}
           \left( \mathbf{b}_0-i\frac{h\lambda^\prime_\theta(\omega, \theta_0)\alpha}{2\sqrt{E}}\mathbf{1}\right)
$
and the action of the Mach-Zehnder interferometer is to first displace the squeezed mode 
$\mathbf{b}_0$ along the momentum axis by $h\lambda^\prime_\theta(\omega, \theta_0)\alpha/ \sqrt{2E}$, followed by a 
$e^{-i \lambda(\omega, \theta_0)}$ phase rotation. 
Let $X$ denote the outcome of measuring the following quadrature of the mode $\mathbf{b}_0$:
\[
      \mathbf{X}_{\pi/2+\lambda(\omega,\theta_0)}
            =-\sin(\lambda(\omega,\theta_0))\mathbf{Q}+\cos(\lambda(\omega,\theta_0))\mathbf{P}.
\]
As $\mathbb{E}_\theta\left[ X\right]= - h\lambda^\prime_\theta(\omega,\theta_0) / {2}$ we choose the (locally) unbiased estimator of 
$\theta$ given by  
$$
\hat{\theta}=\theta_0 -\frac{2 }{\lambda^\prime_\theta(\omega,\theta_0) \sqrt{E}}  X
$$ 
and obtain the mean square error
\begin{equation}\label{tofu}
      \mathbb{E}\left[ (\hat{\theta} -\theta)^2\right]
          =\frac{4}{E\lambda^\prime_\theta(\omega,\theta_0)^2}\mathrm{Var}\left( X\right)
          =\frac{2
       }{E \lambda^\prime_\theta(\omega,\theta_0)^2 e^{2r}}\approx\frac{1}{ \lambda^\prime_\theta(\omega,\theta_0)^2 E^2},
\end{equation}
hence the Heisenberg-limited estimation is achieved.

\subsubsection{The MIMO Case}
\label{boni}

In this section we extend the SISO case to  MIMO PQLSs . 
The strategy that we use for MIMO is a direct extension of the SISO 
strategy and is illustrated in Fig. \ref{lok}. The  
input channels $\mathbf{b}=[\mathbf{b}_1, \ldots, \mathbf{b}_m]^T$ are prepared in a monochromatic 
multi-mode squeezed state \cite{Adesso1} of frequency 
$\omega$; these channels are mixed by means of separate 50:50 beamsplitters with the corresponding ancilla channels
$\mathbf{c}=[\mathbf{c}_1, \ldots, \mathbf{c}_m]^T$ prepared in coherent states of the same frequency. 
One of the outputs of each beam splitter is then passed through the PQLS and the other through a second known PQLS (to be specified shortly) before 
recombining at  another 50:50 beam-splitter.

\begin{figure}
\centering
\centering\includegraphics[scale=0.20]{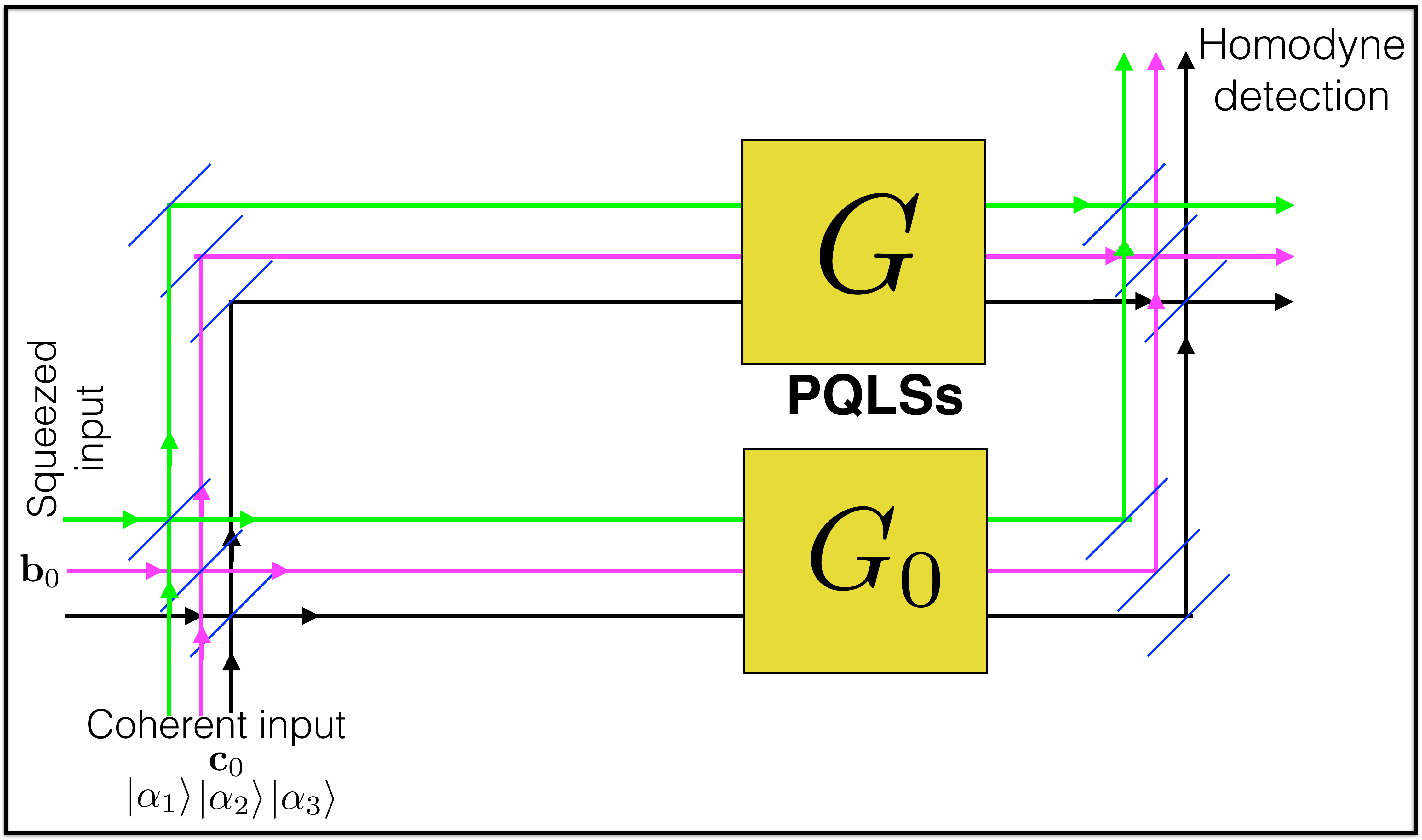}
\caption{This figure shows the  MIMO PQLS system identification setup. 
In this diagram the number of field modes, $m$, is equal to three.  
\label{lok}}
\end{figure}

Just as in the SISO case, we assume that we already computed a rough estimate $\theta_0$ with standard deviation of order 
$1/\sqrt{E}$, such that the parameter can be written as 
$\theta = \theta_0 + h/\sqrt{E}$ for some local parameter $h$. This could be achieved for instance by performing appropriate 
homodyne measurements on the output for a monochromatic input. 
Then the transfer function can be expanded as 
$$
\Xi_{\theta}(-i\omega) = \Xi_{\theta_0}(-i\omega) \left(\mathds{1}_m-ihL(\omega,\theta_0)/\sqrt{E}\right)+\mathcal{O}\left(E^{-1}\right)
$$ 
where $L(\omega,\theta_0)$ is a $m\times m$ Hermitian matrix.
The second system to be placed in lower arm of the interferometer (see Fig. \ref{lok}) is either the PQLS 
$\left(S_{\theta_0}, C_{\theta_0}, \Omega_{\theta_0}\right)$ or a static configuration of beam splitters giving the phase element 
$\Xi_{\theta_0}(-i\omega)$. It follows that the $2m$ input channels transform according to
$$
    \left[\begin{array}{c}
       \mathbf{b}_0^{out}\\
       \mathbf{c}_0^{out} 
     \end{array}\right] \approx
     \left[\begin{array}{cc}
           \Xi_{\theta_0} & 0\\
           0  &  \Xi_{\theta_0} 
      \end{array}\right]
      \left\{\mathds{1}_{2m}-\frac{i h }{2\sqrt{E}}\left[\begin{array}{cc}
           L(\omega,\theta_0)&  L(\omega,\theta_0)\\
          L(\omega,\theta_0)  &  L(\omega,\theta_0)
      \end{array}\right] \right\}
        \left[\begin{array}{c}
          \mathbf{b}_0\\
          \mathbf{c}_0
        \end{array}\right].
$$
where $\mathbf{b}_0^{out} =\mathbf{b}^{out}(\omega)$ and $\mathbf{c}_0^{out} =\mathbf{c}^{out}(\omega)$ are the frequency modes we are interested in.
As in the SISO case we prepare the probe mode ${\bf b}_0$  in a squeezed state and ${\bf c}_0$
in a coherent state 
with amplitude 
$\mbox{{\boldmath $\alpha$}}=[\alpha_1, \ldots, \alpha_m]^T$ such that $\alpha_i=\mathcal{O}(\sqrt{E/2m})$.  Then
\begin{equation}\label{pfg}
     \mathbf{b}_0^{out}\ket{S_r}\otimes\ket{\mathbf{\alpha}}\approx \Xi_{\theta_0}\left\{\mathbf{b}_0
        - \frac{ih}{2\sqrt{E}} L(\omega, \theta_0)\mbox{{\boldmath $\alpha$}}\mathds{1}
%
%
      \right\}\ket{S_r}\otimes\ket{\mathbf{\alpha}}. 
\end{equation}
which means that the local parameter $h$ is imprinted into the output state via a displacement of the squeezed mode ${\bf b}_0$.

Now, let us find the optimal state of squeezed-coherent type. Since $\Xi_{\theta_0}$ is known, we will ignore its effect which can be undone by performing a phase transformation on the output modes. The output state is therefore a displaced squeezed state whose mean is given by 
$$
\langle {\bf Q} \rangle = \frac{h}{\sqrt{2E}} {\rm Im} \left[ L(\omega,\theta_0)\mbox{{\boldmath $\alpha$}}\right] =\frac{h}{\sqrt{E}} \mu_q, \quad 
\langle {\bf P} \rangle = \frac{-h}{\sqrt{2E}} {\rm Re} \left[L(\omega,\theta_0)\mbox{{\boldmath $\alpha$}}\right] =\frac{h}{\sqrt{E}}  \mu_p
$$
where ${\bf Q}$ and ${\bf P}$ are the vectors of canonical coordinates of the $m$ modes. We denote by $V$ the covariance matrix 
$$
V := {\rm Re}\left(
\begin{array}{cc}
\langle {\bf Q}   {\bf Q}^T \rangle & \langle {\bf Q} {\bf P}^T \rangle\\
\langle {\bf P}{\bf Q}^T \rangle & \langle {\bf P}{\bf P}^T \rangle
\end{array}
\right)
$$
The QFI for the parameter $\theta= \theta_0 + h/\sqrt{E}$ is given by \cite{Monras1}
$$
I(\theta_0) =\mu^T V^{-1}\mu, 
\qquad 
\mu := \left(
\begin{array}{c} 
\mu_q\\
\mu_p
\end{array}
\right)= 
\frac{1}{\sqrt{2}}
 \left(
\begin{array}{c} 
{\rm Im} \left[ L(\omega,\theta_0)\mbox{{\boldmath $\alpha$}} \right]\\
{\rm Re} \left[ L(\omega,\theta_0)\mbox{{\boldmath $\alpha$}} \right]
\end{array}
\right)
$$

We will now argue that the best strategy is to squeeze in one mode (or quadrature) only. Since Fisher information is independent of 
the chosen basis in the space of modes, we will choose the latter to be the eigenbasis of the selfadjoint matrix $L(\omega, \theta_0)$ such that 
$(1,0,\dots , 0)$ is the eigenvector whose eigenvalue has the largest absolute value equal to $\|L(\omega, \theta_0)\|$. Then the following inequality holds
$$
I(\theta_0) \leq \|V^{-1}\| \cdot \|\mu\|^2 \leq \|V^{-1}\|  \cdot \|L(\omega, \theta_0)\|^2 \cdot \| \mbox{{\boldmath $\alpha$}} \|^2/2.
$$
The second inequality becomes equality by setting $\mbox{{\boldmath $\alpha$}} = \sqrt{E_c} (1,0\dots , 0)$ where $E_c$ is the energy of the coherent state.  The first inequality is saturated by choosing $V$ to be the covariance of a squeezed state in which the first mode is squeezed along the $\mathbf{Q}$ quadrature, while all other modes are in the vacuum. In the leading order in the energy of the squeezed state we have $\|V^{-1}\| = 8{E_{sq}}$, and by imposing the constraint $E_c+ E_{sq} = E$ we find that the optimal energy distribution is 
$E_c= E_{sq} = E/2$. Therefore the maximum value of the QFI is 
\begin{equation}\label{put14}
I(\theta_0) = {E^2} \|L(\omega,\theta_0)\|^2 
\end{equation}
Finally, the left hand side can be further optimised over the frequency $\omega$ to obtain the highest QFI in this setting. As in the SISO case, the QFI can be achieved by measuring the displaced squeezed quadrature. 

\begin{remark}
The expression \eqref{put14} is the QFI from the modes $\mathbf{b}^{out}_0$ only (rather than $\mathbf{b}^{out}_0$ and $\mathbf{c}^{out}_0$), however it is equal to   \eqref{put1235}, which is the QFI from both channels. Note also  the consistency between the expressions \eqref{tofu} and  \eqref{put14}, corresponding to the SISO and MIMO cases.
\end{remark}
\subsection{Example: Two Atomic Ensembles}\label{Macroscopic}

Consider the two coupled atomic ensembles setup from  example \ref{ensemble}. In particular, assume that we would like to investigate $\theta$, which we assume to be small.

First, the transfer function matrix of this system, at $s=-i\omega$, is 
given by 
\begin{eqnarray}
& & \hspace*{-2em}
      \Xi(-i\omega) 
                 = I - C(-i\omega I-A)^{-1}C^\dagger
                 = I - \kappa \sqrt{Y} \Big( -i\omega I 
                               + \frac{\kappa}{2} Y \Big)^{-1} \sqrt{Y}
\nonumber \\ & & \hspace*{1.25em}
        = \frac{1}{\kappa^2/4 -i\omega\kappa \cosh(2\theta) -\omega^2}
           \left[ \begin{array}{cc}
             -\omega^2 - \kappa^2/4 & -i \omega \kappa \sinh(2\theta) \\
             -i\omega \kappa \sinh(2\theta) & -\omega^2 - \kappa^2/4 \\
           \end{array} \right].
\nonumber
\end{eqnarray}
If $\theta=0$, we have 
\[
      \Xi_0(-i\omega) 
        = \frac{i\omega + \kappa/2}{i\omega - \kappa/2}
           \left[ \begin{array}{cc}
             1 & 0 \\
             0 & 1 \\
           \end{array} \right],
\]
which can be represented as 
\[
      \Xi_0(-i\omega) = e^{-iG_0(\omega)},~~
      G_0(\omega)
        =  \left[ \begin{array}{cc}
               g_0(\omega) & 0 \\
               0 & g_0(\omega) \\
            \end{array} \right],
       \]
with $g_0(\omega)=2\arctan(-2\omega/\kappa)-\pi$. 
This corresponds to the transfer function matrix of two independent cavities 
coupled to probe fields with strength $\kappa$.
Note that the small unknown parameter $\theta$ brings a small deviation 
from $\Xi_0$ and yields $\Xi$ in the form 
\[
      \Xi(-i\omega) = \Xi_0(-i\omega) \tilde{\Xi}(-i\omega).
\]
By calculating $\Xi_0^{-1}\Xi$, we obtain 
$\tilde{\Xi}(-i\omega)=e^{-i\tilde{G}(\omega)}\approx I-i\tilde{G}(\omega)$ 
and find 
\[
     \tilde{G} = -f(\omega)\theta 
            \left[ \begin{array}{cc}
               0 & 1 \\
               1 & 0 \\
            \end{array} \right],~~~ 
     f(\omega)=\frac{2\kappa \omega}{\omega^2 + \kappa^2/4}. 
\]

To identify $\theta$ we use the interferometric technique presented above, 
illustrated in Fig. \ref{lok}. 
In our context the nominal system is given by the independent pair of 
cavities with generator $G_0$. 
Now the input fields into the interferometer are given by 
$\mathbf{b}=[\mathbf{b}_1, \mathbf{b}_2]^T$ and 
$\mathbf{c}=[\mathbf{c}_1, \mathbf{c}_2]^T$; 
as described before, $\mathbf{b}_1~(\mathbf{b}_2)$ and 
$\mathbf{c}_1~(\mathbf{c}_2)$ are combined at the 
first beam splitter, hence they have the same frequencies. 
The output fields of the interferometer are related to $\mathbf{b}$ and 
$\mathbf{c}$ as 
\[
      \left[ \begin{array}{c}
               \mathbf{b}^{out} \\
               \mathbf{c}^{out} \\
            \end{array} \right]
       = e^{-iG_0}\Big\{ 
            I_2 - \frac{i}{2}
              \left[ \begin{array}{cc}
               \tilde{G} & \tilde{G} \\
               \tilde{G} & \tilde{G} \\
            \end{array} \right] \Big\}
          \left[ \begin{array}{c}
               \mathbf{b} \\
               \mathbf{c} \\
            \end{array} \right]. 
\]
Thus by setting the $\mathbf{c}$ modes to coherent fields with amplitudes 
$\mbox{{\boldmath $\alpha$}} = [\alpha_1, \alpha_2]^T$, in good 
approximation we have 
\[
     \mathbf{b}^{out} = e^{-i G_0}
        \Big( \mathbf{b} - \frac{i}{2}\tilde{G}\mbox{{\boldmath $\alpha$}}\mathds{1} \Big). 
\]
Now because $\tilde{G}$ has elements only in the off-diagonal terms, we 
should set: 
\begin{itemize}
\item
$\mathbf{b}_1$ to be a momentum-squeezed vacuum field with squeezing level $r$, 
while $\mathbf{b}_2$ to be a vacuum, 
\item
$\mathbf{c}_2$ to be a coherent field with amplitude $\alpha_2=\alpha$, while 
$\mathbf{c}_1$ to be a vacuum, 
\item
and measure $\mathbf{b}_1^{out}$. 
\end{itemize}
 Interestingly, in this example the best input is one where energy is inputted into only one arm of each of the two interferometers. 
Actually we then have 
\[
     \mathbf{b}_1^{out} 
       = e^{-ig_0(\omega)} \Big(\mathbf{b}_1 + \frac{i}{2} f(\omega) \alpha \theta\mathds{1} \Big), 
\]
implying that we can estimate $\theta$ by measuring the phase-shifted 
$\mathbf{P}$-component of $\mathbf{b}_1^{out}$. 
But before writing the output signal equation for identification we should note 
that $f(\omega)^2$ takes the maximum value $f(\omega^{\rm opt})^2=4$ 
at $\omega^{\rm opt}=\pm \kappa/2$; 
this is the optimal measurement frequency we should take. 
Thus the optimal homodyne measurement is given by setting the phase of 
the local oscillator to $g_0(\omega^{\rm opt})$, and it generates the following 
signal for estimating $\theta$: 
\[
\hspace*{-4em}
     y = \frac{\sqrt{2}}{f(\omega^{\rm opt})\alpha} \mathbf{P}_1^{(g_0)}
        = \frac{\sqrt{2}}{f(\omega^{\rm opt})\alpha} \cdot
           \frac{e^{ig_0(\omega^{\rm opt})} \mathbf{b}_1^{out} - 
                       e^{-ig_0(\omega^{\rm opt})} (\mathbf{b}_1^{out})^*}{\sqrt{2}i}
        = \frac{\sqrt{2}}{f(\omega^{\rm opt}) \alpha} \mathbf{P}_1 + \theta,
\]
where $\mathbf{P}_1=(\mathbf{b}_1-\mathbf{b}_1^*)/\sqrt{2}i$.
\footnote{
Note that this is not a frequency-dependent homodyne measurement, 
i.e. the so-called variational measurement found in the proposal for gravitational 
wave detection, which requires implementing additional filter cavities. 
Also note that $f(\omega)$ takes zero at $\omega=0$, i.e. the center 
frequency, in which case clearly we cannot obtain any information about 
$\theta$. 
Optimization of $\omega$ is essential in these sense. 
}
Clearly the expectation of $y$ yields $\theta$ with variance 
\[
     {\rm Var}(y) 
       = \frac{2}{f(\omega^{\rm opt})^2}\cdot \frac{1}{2\alpha^2 e^{2r}}
       \approx \frac{1}{f(\omega^{\rm opt})^2E^2}, 
\]
which is the Heisenberg scaling.

\begin{figure}
\centering
\includegraphics[scale=0.34]{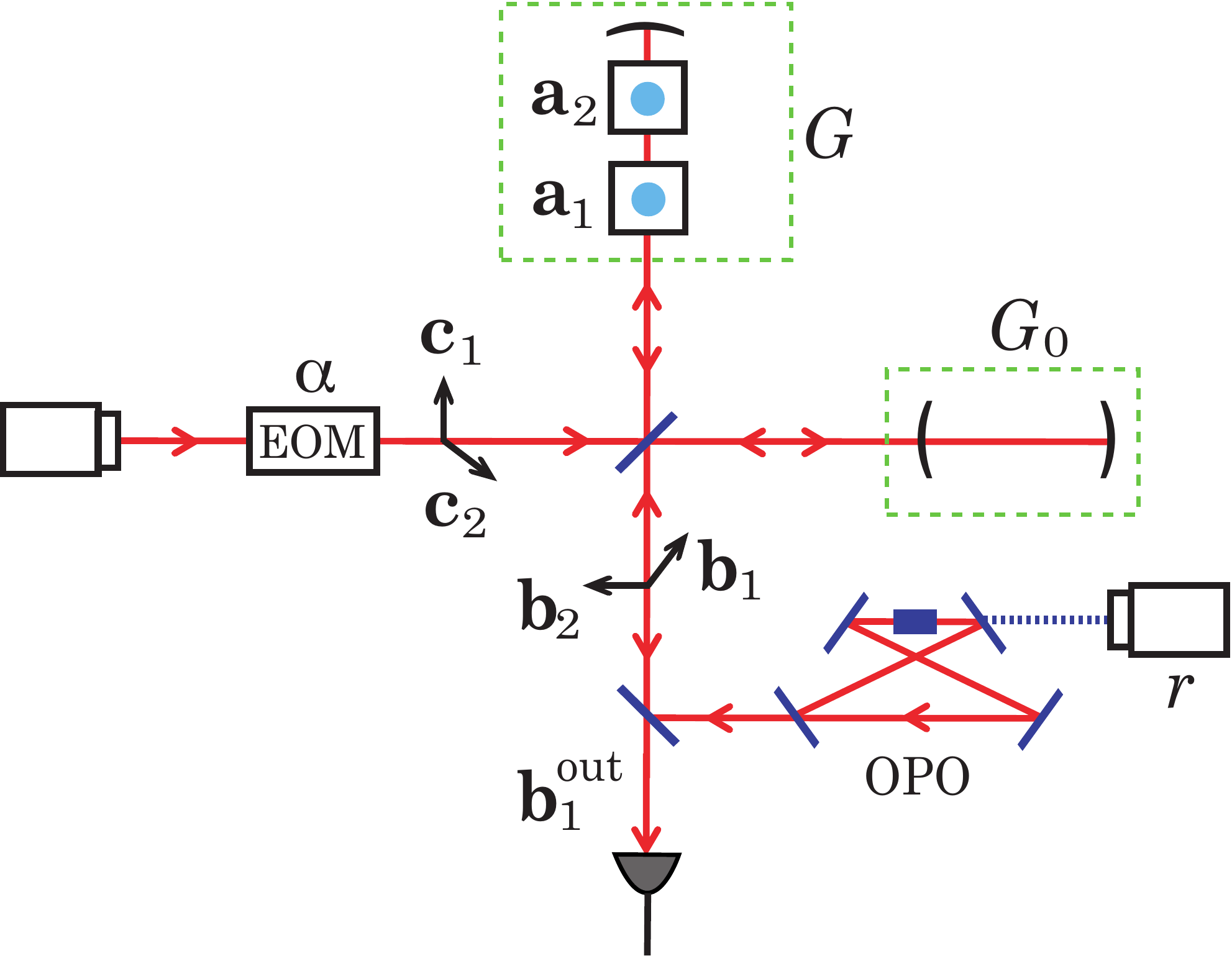}
\caption{\label{2I2Oexample}
Configuration of the one-parameter system identification method for a 
MIMO system composed of atomic ensembles denoted by $G$. 
$G_0$ is a Fabri-Perot cavity having the same decay rate as that of $G$. 
}
\end{figure}

A possible experimental setup is depicted in Fig.~\ref{2I2Oexample}. 
We here consider a Michelson-type interferometer composed of two 
optical paths. 
The input field $\mathbf{b}$ is injected from the bottom arm, while $\mathbf{c}$ 
comes from the left arm. 
Although in \cite{Muschik1} $\mathbf{b}_1$ and $\mathbf{b}_2$ are 
taken as two bosonic modes with different frequencies, here they are represented 
as those with different polarizations (same as for $\mathbf{c}$). 
The system $G$ is placed in the upper part of the interferometer; as seen 
in \cite{Muschik1} this system is composed of the cascade of two 
chambers containing atomic ensembles. 
The nominal system $G_0$, on the other hand, is now given by an optical cavity 
having the same coupling strength $\kappa$, and it is now placed in the 
right end of the interferometer. 
As described above, our identification strategy is that we create a squeezed 
state in the mode $\mathbf{b}_1$ and a coherent state in the mode $\mathbf{c}_2$; 
the former can be realized by constructing an optical parametric oscillator 
(OPO) and the latter can be realized by placing an electric optical modulator 
(EOM) along the input path of $\mathbf{c}_2$.

\section{Estimation of  Multi-Dimensional Parameters in a PQLS}\label{Toes2}

In this section we discuss how to extend the interferometric approach using squeezed-coherent states to the case of multiple parameters.


\subsection{General Multiple Parameter Setup}
Suppose that we have a  PQLS with $m$ input channels and that we would like to estimate multiple parameters $\theta_1, \hdots, \theta_d$. 
Consider  the metrology setup with inputs supported on $d/m\in\mathbb{Z}$ frequencies and a MIMO PQLS with $m$ channels. By  thinking of the different  frequencies as parallel channels, we have $d$ input channels on which to support our input state. 
Based on the successes of the interferometric setup using squeezed-coherent state in the single parameter case we consider a generalisation of this here. 
 That is, the  
input channels $\mathbf{b}=[\mathbf{b}_1, \ldots, \mathbf{b}_d]^T$ are prepared in a  
multi-mode squeezed state \cite{Adesso1}; these channels are mixed by means of separate 50:50 beamsplitters with the corresponding ancilla channels
$\mathbf{c}=[\mathbf{c}_1, \ldots, \mathbf{c}_d]^T$ prepared in coherent states. 
One of the outputs of each beam splitter is then passed through the PQLS and the other through a second known PQLS (to be specified shortly) before 
recombining at  another 50:50 beam-splitter. We assume that the frequencies $\omega_1, ...\omega_{d/m}$ have been chosen beforehand and      we drop the reference to frequency in the following       (we will consider the problem of varying these in Sec. \ref{varfreq}).

As in the single parameter case, we assume that we already computed a rough estimate of the parameters each with standard deviation of the order $1/\sqrt{E}$, such that the parameters can be written as $\theta_i=\theta_{0i}+h_i/\sqrt{E}$ for some local parameter $h_i$. Then the transfer function (of the frequency concatenated system) may be expanded as 
$$
\Xi_{\theta} = \Xi_{\theta_0} \left(\mathds{1}_m-i\sum_i \frac{h_iL^{(i)}(\theta_0)}{\sqrt{E}}\right)+\mathcal{O}\left(E^{-1}\right)
$$ 
where $L^{(i)}(\omega,\theta_0)$ are  $d\times d$ Hermitian matrices and $\theta_0=\left(\theta_{01}, \hdots, \theta_{0d}\right)$.
Now, place the PQLS 
$\left(S_{\theta_0}, C_{\theta_0}, \Omega_{\theta_0}\right)$ in the lower  interferometer arms. 
Also, prepare the probe mode ${\bf b}_0$  in a squeezed state and ${\bf c}_0$
in a coherent state 
with amplitude 
$\mbox{{\boldmath $\alpha$}}=[\alpha_1, \ldots, \alpha_d]^T$ such that $\alpha_i=\mathcal{O}(\sqrt{E/2m})$. We will assume that   $\alpha_i\in{\mathbb{R}}$ for all $i$.
It follows that 
\begin{equation}\label{pfg}
     \mathbf{b}_0^{out}\ket{S_r}\otimes\ket{\mathbf{\alpha}}\approx \Xi_{\theta_0}\left\{\mathbf{b}_0
        -i \frac{       \sum_i h_iL^{(i)}(\omega,\theta_0)         }{2\sqrt{E}} \mbox{{\boldmath $\alpha$}}\mathds{1}
%
%
      \right\}\ket{S_r}\otimes\ket{\mathbf{\alpha}}. 
\end{equation}
Since $\Xi_{\theta_0}$ is known we will ignore its effect like in the single parameter case.
Therefore the local parameter $h_i$ is imprinted on the output via a displacement of the squeezed mode; on mode $\mathbf{b}_j$ the displacement is along the (unnormalized) quadrature
$$\mathbf{X}^{(i)}:=\mathbf{b}_j     \sum_{k}\left(iL^{(i)}_{jk}\alpha_k\right)    -\mathbf{b}^{\dag}_j   \sum_{k}\left(i\overline{L^{(i)}_{jk}}\alpha_k\right).$$
Therefore the generator belonging to $\theta_i$ is thus $\mathbf{G}^{(i)}:=\sum_{jk}\left(\mathbf{b}_jL^{(i)}_{jk}+\mathbf{b}_j^{*}\overline{L^{(i)}_{jk}}\right)\alpha_k$. 
The difference with the single parameter case is that here there are $d$ parameters each displacing the input additively in \textbf{different} directions. Further, notice that the generators of the displacement for each parameter, $\mathbf{G}^{(i)}$, depend on the coherent input, and change according to variations in this input. Finding the optimal scaling or even whether Heisenberg scaling is possible in the multi-parameter case is more involved. 

In order to infer the $d$ parameters  with high accuracy from measuring quadratures, it is clear that we must measure at least $d$ quadratures. Moreover these quadratures $\mathbf{Y}_1, ...\mathbf{Y}_d$ must satisfy the following properties:
\begin{itemize} 
\item[(1)]\label{fond1} They must contain sufficient information about all parameters. This can only happen if at least one of them does not commute with each displacement generator. That is, for each parameter $\theta_k$, there exists an $i$ such that $[\mathbf{G}^{(k)}, \mathbf{Y}_i]\neq0$.
This condition is necessary because if it didn't hold then measurements of the output will not contain any information about $\theta_k$ on average because in the Heisenberg picture $\left<\mathbf{Y}_i\right>=\left<e^{i\theta_k\mathbf{G}^{(k)}}\mathbf{Y}_ie^{i\theta_k\mathbf{G}^{(k)}}\right>$ for all $i$.
\item[(2)]\label{fond2}  To get Heisenberg scaling we must be able to measure at least $d$ quadratures with small variance. If these quadratures are to be measured then they must all commute, that is,   $[\mathbf{Y}_i, \mathbf{Y}_j]=0$ for all $i$ and $j$.
\end{itemize}
If the system is identifiable (see Def. \ref{identy}),  we see that there always exists a coherent input such that conditions (1) and (2) hold. We then  show how to achieve the Heisenberg scaling in this case.

\subsection{SISO PQLS}\label{sausage1}
Consider a SISO system with inputs supported on $d$ frequencies (i.e $m=1$). There is a simplification in this case because the generators for  parameters $\theta_k$ are given by $\mathbf{G}^{(k)}=\mathbf{Q}_1L^{(k)}_{11}\alpha_1+\hdots + \mathbf{Q}_dL^{(k)}_{dd}\alpha_d$.  Therefore, as the $\alpha_i\in\mathbb{R}$ and $L^{(k)}$ are Hermitian and diagonal, the action of all generators is a displacement in the same direction, i.e each mode  $\mathbf{b}_i$, is translated along the momentum axis.

 Consider a squeezed input consisting of a product of $d$ momentum-squeezed one mode states over the modes $\mathbf{b}_0$ each with energy $E_i=\mathcal{O}(E/2d)$ and the coherent state as above. Since the action of the system is a displacement along the momentum axis (as this direction is canonically  conjugate to the generator), we will measure the momentum quadrature of each mode. Clearly conditions (1) and (2) hold under identifiability.  
 We will now  show that Heisenberg scaling is achievable under the assumption that the parameters are identifiable. The (classical) Fisher information  in this case is given by \cite{Monras1}
 \begin{align*}
 F_{mn}(\theta_0)&=8\mu^T_m\mathrm{Diag}(E_1,...E_m)\mu_n,
\quad
\mu_m=\frac{1}{\sqrt{2}}\sum_i L_{ii}^{(m)}\alpha_i.
 \end{align*}
Hence
 \begin{equation}\label{tofu1}
 F(\theta_0)=4\sum_{i}E_i\alpha_i^2\left(\begin{smallmatrix} L^{(1)}_{ii} \\ \vdots\\ L^{(d)}_{ii}\end{smallmatrix}\right)\left(\begin{smallmatrix}L^{(1)}_{ii}&\hdots&L^{(d)}_{ii}\end{smallmatrix}\right).
 \end{equation}
Therefore the MSE, which is equal to $\mathrm{Tr}\left(F(\theta_0)^{-1}\right)$ (see Eq. \eqref{rtvz}),   will scale quadratically with $E$ iff the  vectors $v_i:=\left(\begin{smallmatrix}L^{(1)}_{ii}&\hdots&L^{(d)}_{ii}\end{smallmatrix}\right)^T$ form a basis. Writing the transfer function as a phase $\Xi_{\theta}(-i\omega_i)=e^{-i\lambda_i}$, then this condition is equivalent to the invertibility of the Jacobean of the  map
$f:\mathbb{R}^d\mapsto\mathbb{R}^d$
$$
     f: (\theta_1, \dots , \theta_d) \mapsto (\lambda_1, \lambda_2, \ldots, \lambda_d). 
$$
Assuming  that the parameters $\theta$ are identifiable as in Def \ref{identy}, then the Jacobian must be invertible and we have Heisenberg scaling.

\subsection{MIMO PQLS}
 Proving that Heisenberg scaling is possible for a  MIMO PQLS is not as straightforward     because the the (displacement) generators of each parameter could potentially be along different directions, rather than all in the same direction. Nevertheless it is possible, as we see now.

Consider a squeezed input consisting of a product of $d$  mode states each with energy $E_i=\mathcal{O}(E/2d)$, squeezed in the quadratures 
$$\mathbf{Y}_i=e^{i\phi_i}\mathbf{b}_i+ e^{-i\phi_i}\mathbf{b}^{\dag}_i,$$
where $\phi_i$ will be chosen later, and the coherent state as above. Our measurement will be the  quadratures 
$\mathbf{Y}_i.$ 
Clearly condition (2) is satisfied here. Let us see that there exist a choice of $\alpha$ such that condition (1) is also.
We have
\begin{align}
\nonumber [\mathbf{G}^{(k)}, \mathbf{Y}_i]&=\sum_j\left(e^{-i\phi_i}\overline{L^{(k)}_{ij}}- e^{i\phi_i}L^{(k)}_{ij}\right)\alpha_j\\
:=\sum_j\beta^{(k)}_{ij}\alpha_j \label{aimed}.
\end{align}
For a given $k$ we require that there exists at least one $\mathbf{Y}_i$ such that \eqref{aimed} is non-zero. Since we have freedom over the choice of $\alpha_i$s then it is clear that \eqref{aimed} can be made  to be non-zero if at least one of the terms $\beta^{(k)}_{ij}$ is non-zero, which, given that we have choice over $\phi_i$, is always possible if one of the $L^{(k)}_{ij}$ is non-zero for a given $j$. Therefore the only way that we will not be able to   to choose a measurement and input, $\alpha$, such that condition (1) is satisfied would be  if for a given  $k$, $L^{(k)}_{ij}=0$ for all $i$ and $j$, which is a contradiction to  the fact that $\theta_i$ is identifiable. 

Now let us show that Heisenberg scaling is possible here and show how to achieve it.  The Fisher information of this measurement is thus \cite{Monras1}
\begin{align*}
 F_{mn}(\theta_0)&=8\mu^T_m\mathrm{Diag}(E_1,...E_m)\mu_n,
\quad
\mu_m=\frac{1}{\sqrt{2}}\mathrm{Re}(ie^{i\phi}L^{(m)}\alpha).
 \end{align*}
Hence
 \begin{equation}\label{coins1}
 F(\theta_0)=4\sum_{i}E_i\left(\begin{smallmatrix} K^{(1)}_{i} \\ \vdots\\ K^{(d)}_{i}\end{smallmatrix}\right)\left(\begin{smallmatrix}K^{(1)}_{i}&\hdots&K^{(d)}_{i}\end{smallmatrix}\right),
 \end{equation}
where $K^{(i)}_j=\mathrm{Re}\left(e^{i\phi_j}\sum_l L^{(i)}_{jl}\alpha_l\right)$. Because $\alpha_i^2$ are of order $E$, then clearly if $F(\theta_0)$  is invertible then the MSE, which is equal to $\mathrm{Tr}\left(F(\theta_0)^{-1}\right)$ (see Eq. \eqref{rtvz}), will scale quadratically with $E$. It therefore remains to show that there exists a choice of $\alpha$ such that this matrix is invertible. 
Suppose that $m=d$ for simplicity. The Fisher information matrix is invertible iff the vectors in Eq. \eqref{coins1} are linearly independent. Suppose that for a particular choice of $\alpha$ and $\phi_i$ they are not linearly independent. That is, 
\begin{equation}\label{starz}
\left(\begin{smallmatrix} \mathrm{Re}\left(e^{i\phi_j}\sum_l L^{(1)}_{il}\alpha_l\right)\\\vdots\\\mathrm{Re}\left(e^{i\phi_j}\sum_l L^{(d)}_{il}\alpha_l\right)\end{smallmatrix}\right)=\gamma\left(\begin{smallmatrix} \mathrm{Re}\left(e^{i\phi_j}\sum_l L^{(1)}_{jl}\alpha_l\right)\\\vdots\\\mathrm{Re}\left(e^{i\phi_j}\sum_l L^{(d)}_{jl}\alpha_l\right)\end{smallmatrix}\right)
\end{equation}
for some $i$ and $j$ and constant $\gamma$. Since we have freedom over the choice of $\alpha$ and $\phi_i$s, then  it is always possible to   vary these so that  the two vectors in Eq. \eqref{starz} are not  multiples of one another if and only if $L^{(i)}\neq a L^{(j)}$ for some constant $a$, which is true under identifiability.

Therefore in conclusion we can always identify $d\leq m$  parameters of a PQLS  at Heisenberg level using this method iff the parameters are identifiable. The trick is in choosing the $\alpha$ and $\phi_i$ such that we have at least $d$ linearly independent vectors $\left(K_i^{(1)},..., K_i^{(d)}\right)^T$.


\subsection{Optimisation Problem}
Still unanswered is what is the optimal squeezed-coherent state (and measurement) giving the smallest MSE? 
That is, we would like to minimise the quantity $\mathrm{Tr}(F^{-1})$, where $F(\theta_0)$ is the (classical) Fisher information given by,
$$F_{ij}(\theta)=8\mu_i^T\mathrm{Diag}(E_1, \hdots, E_m)\mu_j, \quad
\mu_m=\frac{1}{\sqrt{2}}\mathrm{Re}\left(UL^{m}\alpha\right).$$
Here $U$ is a $m\times m$ unitary matrix characterising the direction of squeezing for our squeezed input state and as a result we are measuring the quadratures 
$U\mathbf{b}+\overline{U}\mathbf{b}^{\dag} $
here.
Thus the problem is a minimisation over all choices of $U, E_1,... E_m$ and $\alpha_1, ...\alpha_m$ subject to total energy constraint $E\approx E_1+...+E_m +|\alpha_1|^2+...+|\alpha_m|^2$. This minimisation proves to be very difficult, even  in the SISO case. Moreover, it is not even even clear how many channels should be used; recall we saw in the one parameter case that the optimal number is one, so perhaps here it is $d$? In the one extreme consider a  ``broadband squeezing'' in the frequency domain, which  would make sense practically and metrologically. Some simple calculations seem to indicate that squeezing  lots of frequency modes independently decreases the  accuracy since you need to spread energy over all of them. This would be an interesting question for future work; perhaps it may be that it's optimal to squeeze only a number of modes of the order of the number of parameters.
 Furthermore, is there any advantage in considering entanglement between the squeezed modes? We consider a particular instance of this problem in Sec. \ref{simultaneous}. Creating frequency entangled squeezed states can be achieved using a  \textit{synchronously pumped optical parametric oscillator} (SPOPO) \cite{Roslund1, Medeiros1}.



\section{Using Entanglement Between Frequencies}\label{simultaneous}

In this section we extend the work of \cite{Humphreys1}, which deals with the 
problem of estimating $d$-dimensional parameters. 
For a SISO PQLS we  show 
 there is an $\mathcal{O}(d)$ advantage to be had by 
 using probe states with entanglement between frequencies, over $d$ repetitions of optimal single 
parameter identification experiments.

\subsection{The Analogous Quantum Metrology Result}

We begin by  reviewing the result \cite{Humphreys1}.  
\begin{figure}
\centering
\centering\includegraphics[scale=0.15]{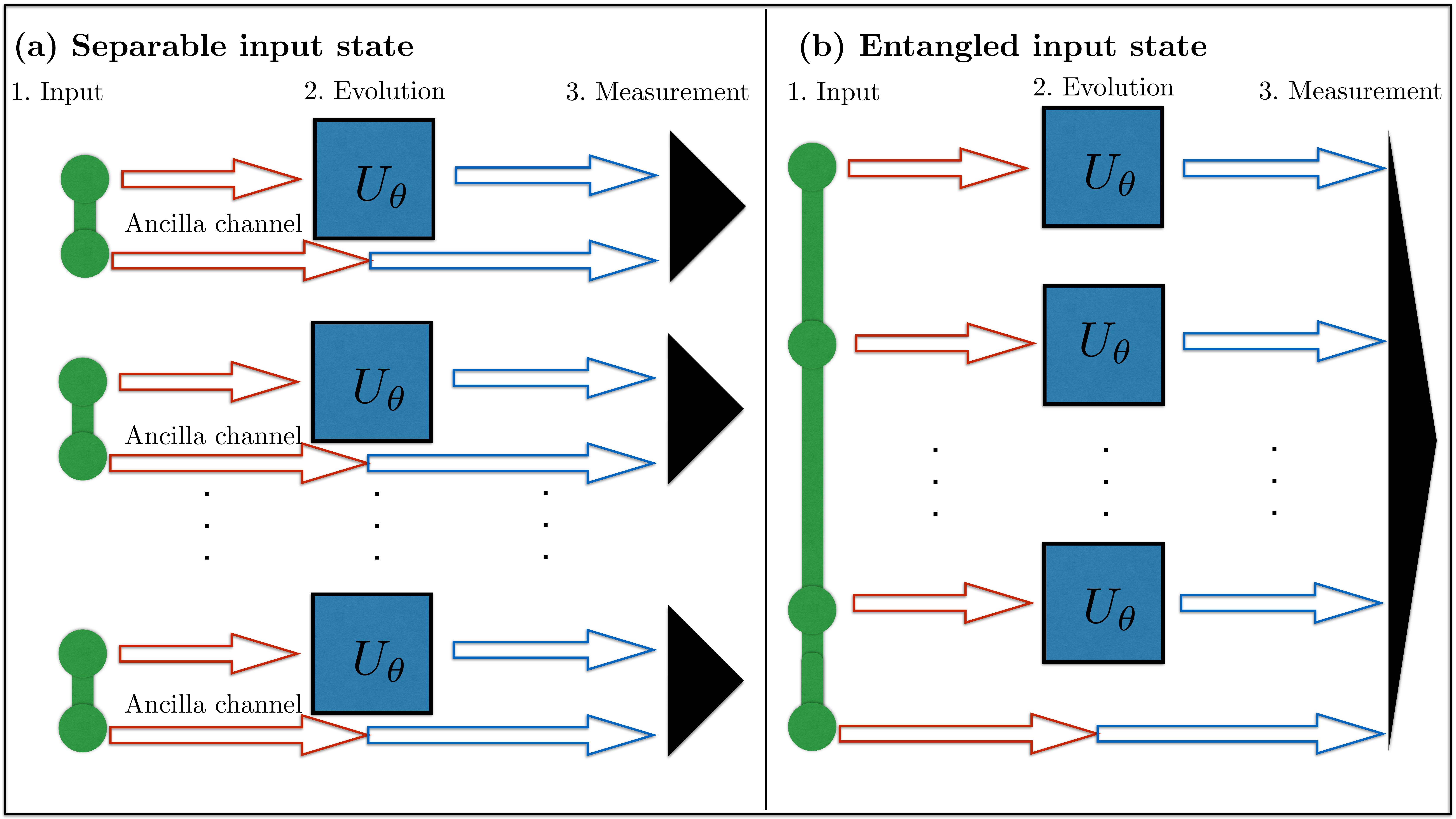}
\caption{A multi-parameter model with a local Hamiltonian structure and $d$ independent phases to estimate. 
Strategy (i): estimate the phases separately. Strategy (ii): estimate the phases using   entangled states between the phase channels and a non-local measurement. 
Note that the setup is different from that of Fig. \ref{classquant}, as there each channel evolves with the same unitary.
\label{c2ty1}}
\end{figure}
Consider the quantum metrology set-up in Fig. \ref{c2ty1}, 
which consists of a $(d+1)$-mode interferometer with $d$ independent 
phases $\theta_1, \theta_2, \dots , \theta_d$ and an ancilla channel. 
The total unitary  operator is given by 
$\mathbf{U}:=\mathds{1}\otimes\mathrm{exp}\left(
\sum_{i=1}^di\mathbf{N}_m\theta_m\right)$, where 
$\mathbf{N}_m$ is the number operator of the $m$th mode.

The following strategies using states with the same fixed photon number (hence energy) were compared in 
 \cite{Humphreys1}:
\begin{enumerate}
\item Estimating the $d$ phases using $d$ separate interferometers, each with 
a $N/d$-photon N00N state, or equivalently using a single product of N00N states for each mode.
\item Estimating $d$ phases using an $N$-photon $(d+1)$-mode entangled state.
\end{enumerate}


We know that the best $N$-photon probe state for estimating a single phase 
is the N00N state. In light of this we    consider using $d$ repetitions of these states to estimate the parameters, which turns out to be the   optimal fixed photon number state of type (1) \cite{Proctor1}. Moreover, 
 using the results from the previous subsection the QFI scales as 
$N^2/d^2$ for each phase. 
Combining these, the total mean square error is $d^3/N^2$.

On the other hand, based on the intuition that N00N states are the best for single phases, the authors considered the following 
$(d+1)$-mode generalization 
\begin{align*}
     \Ket{\psi}
         &=\alpha \Ket{0} \otimes \Big(\Ket{N, 0, \ldots} + 
                    \Ket{0, N, 0, \ldots, 0} + \cdots + 
                        \Ket{0, \ldots, 0, N} \Big) \\
                        &
                        + \beta\Ket{N,0, \ldots, 0}.
\end{align*}
It was found that the optimal Cramer-Rao bound for the mean square error 
over these states is $\frac{(1+\sqrt{d})^2d}{4N^2}$ corresponding to $\alpha=1/\sqrt{d+\sqrt{d}}$. 
This shows a possible $\mathcal{O}(d)$ improvement by using entangled states. A similar result was found  in \cite{Liu1} by 
using entangled coherent states (ECSs).  
The caveat here is that it is not clear whether the Cramer-Rao bound in strategy (ii) is achievable   \cite{Proctor1}. While it is true that a certain ``commutativity of the SLDs in expectation'' condition holds, this only guarantees the achievability of the CRB in the asymptotic sense where a large number of identically prepared copies of the state are available. However, by using independent probes, the Heisenberg scaling in energy is lost. Therefore, a more careful analysis of the scaling is necessary, but is beyond the scope of this thesis. We will restrict ourselves to the optimisation of the QCRB here.

We will now see that this sort of simultaneous estimation strategy is 
well-suited to to estimating multiple parameters simultaneously in PQLSs.

%

\subsection{$\mathcal{O}(d)$ Enhancement Using Frequency-Entangled States}\label{jam}

Suppose that we have a SISO PQLS whose transfer function $\Xi_\theta(s)$ 
depends on unknown parameters ${\theta}=(\theta_1, \theta_2, ..., \theta_d)$. 
In particular, for a fixed frequency $s=-i\omega$ it can be represented as 
$\Xi_\theta(-i\omega)=e^{-i\lambda(\theta)}$. 
Firstly, a fairly trivial but nonetheless important fact is that, for a SISO PQLS, 
it is impossible to identify more than one parameter using only a monochromatic 
frequency input. 
That is, the inverse problem is under-constrained. 
We now give an example of this fact for the case of two unknown parameters. 

\begin{exmp}\label{fgty0}
Consider the  set-up in Fig. \ref{c2ty}, where a N00N state 
of fixed frequency $\omega$ is used as a probe for our two-parameter SISO PQLS. The state acts on two channels: one with the system and another ancilla channel. Denote the   two unknown parameters by $\theta_1$ and 
$\theta_2$. 
The action of the PQLS on the input is thus 
\[
      \frac{1}{\sqrt{2}}\left(\ket{0,N}+\ket{N,0}\right)
         \mapsto\frac{1}{\sqrt{2}}\left(\ket{0,N}
              +e^{-i\lambda(\theta_1, \theta_2)}\ket{N,0}\right).
\]
Using Eq. (\ref{qfiformula}), given earlier for pure unitary families, the QFI is 
given by
\[
       F\left(\theta_1, \theta_2\right)_{l,m}
          =N^2\frac{d\lambda}{d\theta_l}
                   \frac{d\lambda}{d\theta_m}.
\] 
Recall that the MSE is bounded below by the trace of the inverse of the QFI 
matrix. 
However, here the QFI is singular. 
Therefore we are unable to identify both parameters. 
From this example it follows that we must allow for inputs supported over more 
frequencies if we are to identify parameters in a multi-parametric model.
\begin{figure}
\centering
\centering\includegraphics[scale=0.15]{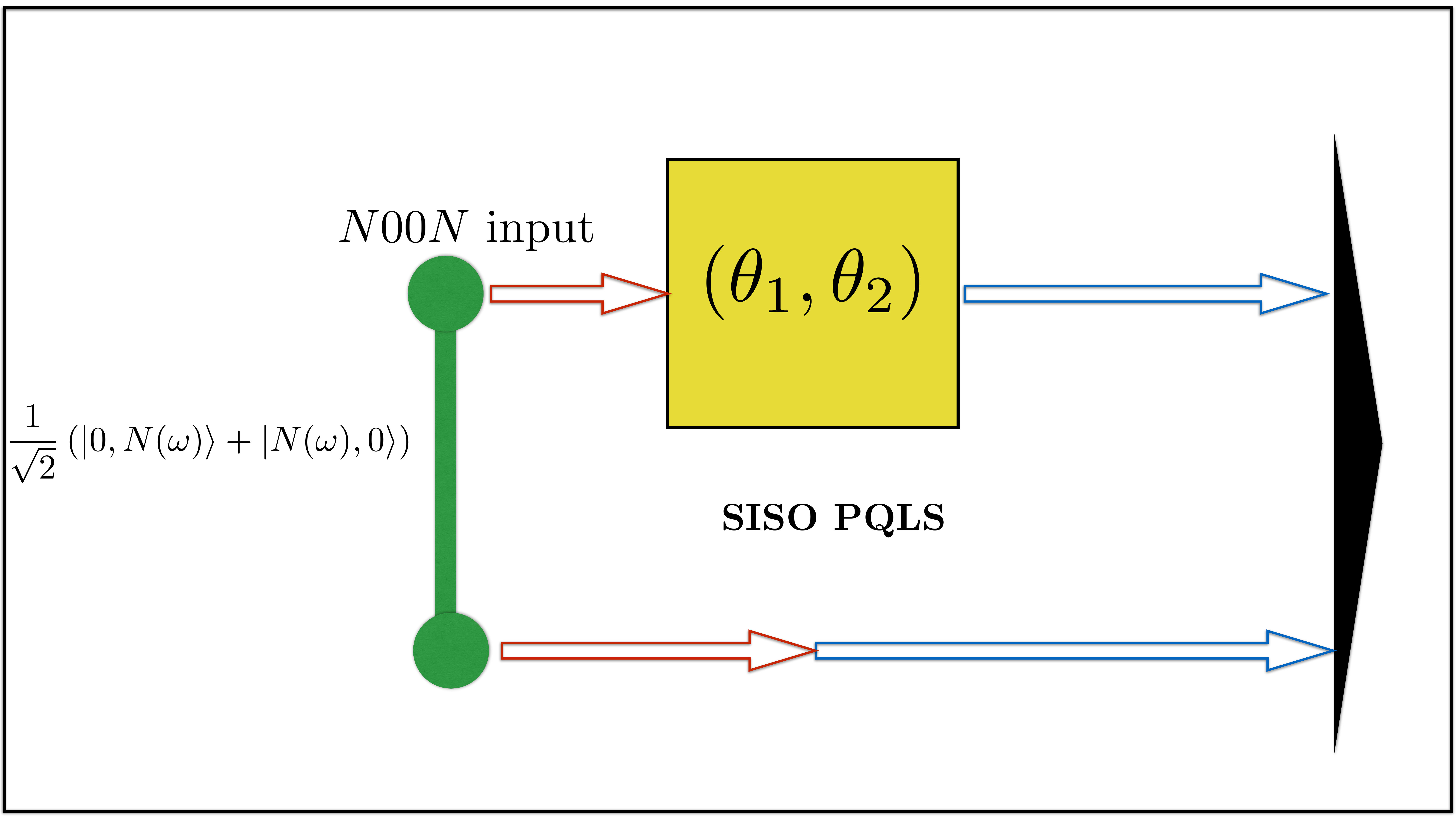}
\caption{This figure shows the set-up in Example \ref{fgty0}, where a N00N 
state is passed through a two parameter SISO PQLS. 
We consider an interferometric set up, where the two parallel channels meet 
at a beam-splitter. 
\label{c2ty}}
\end{figure}
\end{exmp}

Consider the set-up in Fig. \ref{c2ty2} where an input state, supported over 
(fixed) frequencies $\omega_1, \omega_2, \ldots, \omega_d$, is passed through 
a SISO PQLS. 
Note that, due to the linear nature of PQLSs, each input with frequency $\omega_i$ 
acts independently and so can be thought of as separate phase rotation  channels   with phase 
$\Xi_{{\theta}}(-i\omega_i):=e^{-i\lambda_i(\theta)}$. 
Therefore, the relevant total unitary operator is given by
\begin{eqnarray*}
    \mathbf{U}:=\mathds{1}\otimes \bigotimes_{i=1}^d 
       \mathrm{exp}\left(-i\lambda(\omega_i, \theta)\mathbf{b}^*(\omega_1)\mathbf{b}(\omega_1)\right)
       \end{eqnarray*}
 where $\mathds{1}$ is the identity on the ancillary system.
\begin{figure}
\centering
\centering\includegraphics[scale=0.18]{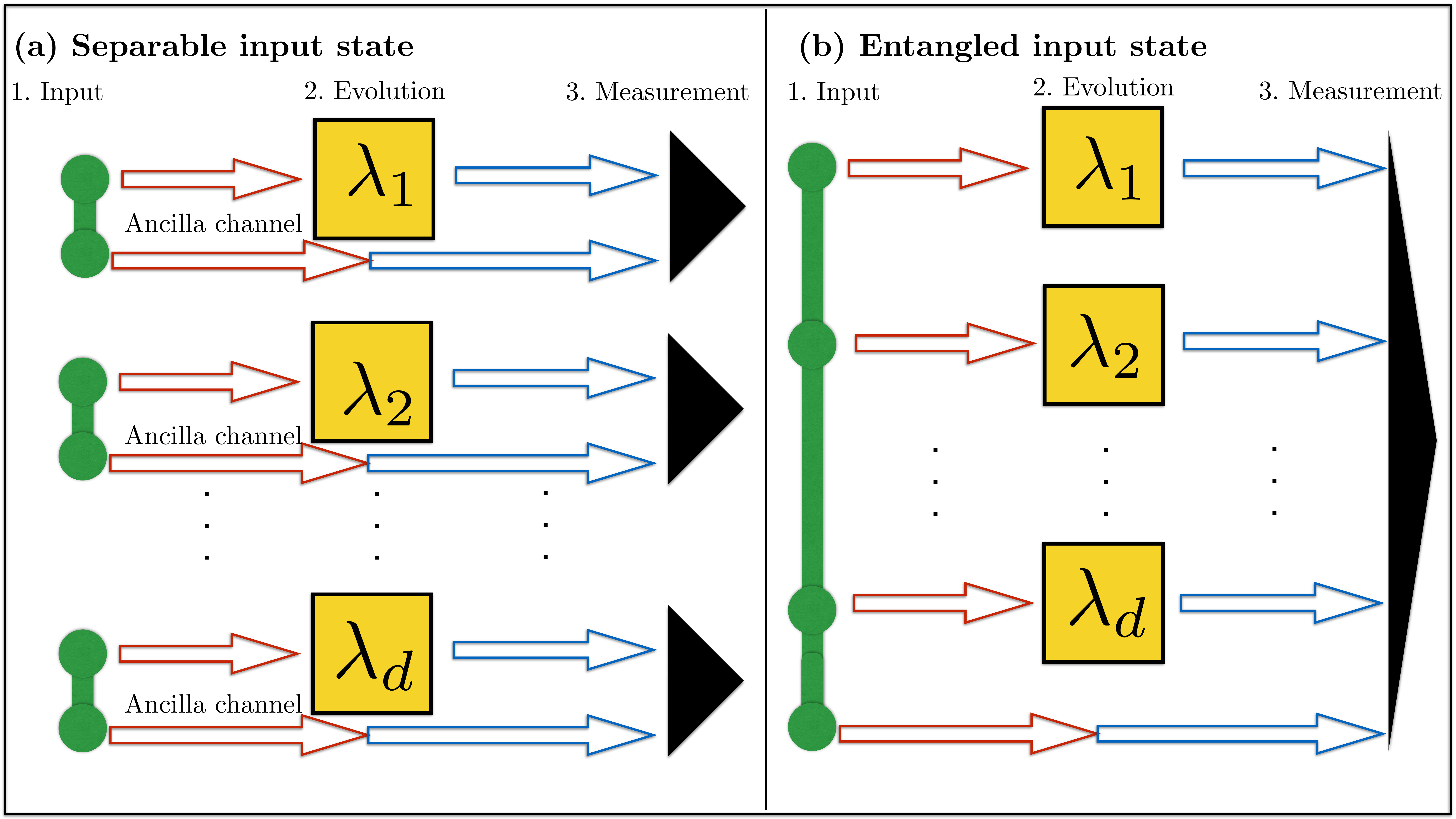}
\caption{We consider a $d$-parameter model and pass an input state supported over $d$ frequencies through our PQLS. We compare here two strategies: Strategy (i) corresponds to performing $d$ separate single frequency experiments and  Strategy (ii) is a simultaneous estimation scheme using inputs entangled over the frequencies.
\label{c2ty2}}
\end{figure}
As in the Sec.  \ref{simultaneous} we compare the following two strategies which uses the same energy resources:
\begin{enumerate}
\item 
Estimating the $d$ parameters using $d$ separate interferometers, each with $N/d$-photon N00N state.
\item Estimating the $d$ unknown parameters using an 
$N$-photon $(d+1)$-mode frequency-entangled state. 
\end{enumerate}
In both strategies we assume for simplicity that the input frequencies, $\omega_1, \ldots, \omega_d$, 
have been chosen beforehand and the same resources have been given to each channel (system) channel. 

This problem is similar to the metrology one from 
Sec. \ref{simultaneous} with the difference that the   phase of each 
interferometer arm depends on \textit{all} of the unknown parameters, rather than a single one. Note that 
we could choose to estimate the $\theta$'s by first estimating the $\lambda$'s and then calculating $\theta=f(\lambda)$ for some continuously differentiable function $f$. However, this method will not generally provide an optimal estimate for $\theta$. Nevertheless it can be verified that the ``commutativity in expectation'' condition for the SLDs holds and the  QCRB is achievable in the asymptotic sense discussed  in Sec. \ref{simultaneous}. Our task will be to analyse the upper bound to the MSE given by  $\mathrm{Tr}\left(F(\theta)^{-1}\right)$, where $F(\theta)$ is the QFI for a specific input state.


\subsubsection{Performance of Strategy 1}
Recall from Sec. \ref{inter1} that the optimal state for estimating a single parameter was a single frequency N00N state.  
For the multiple parameter case it is therefore natural to consider the input state: 
\[
\hspace{-2cm}
     \Ket{\psi} = 
     \frac{1}{\sqrt{2}}\left(\Ket{0,\frac{N(\omega_1)}{d}}
                                               +\Ket{\frac{N(\omega_1)}{d}, 0} \right)
     \otimes \ldots \otimes
     \frac{1}{\sqrt{2}}\left(\Ket{0, \frac{N(\omega_d)}{d}}
                                               +\Ket{\frac{N(\omega_d)}{d}, 0}\right),
\]
which corresponds to $d$ separate ``frequency channels''
 with $N/d$ particle N00N states as inputs for each of the $d$ channels. 
Let us now, understand how accurate this strategy is by calculating $\mathrm{Tr}\left(I(\theta)^{-1}\right)$. Defining $\frac{\partial\Ket{\psi_{\theta}}}{\partial\theta_l}
:=\ket{\partial_l\psi_{\theta}}$, as in Sec. \ref{plm}, we may firstly obtain 
\begin{eqnarray}
& & \hspace*{0em}
     \Braket{\psi_{\theta}| \partial_l\psi_{\theta}}
         =\frac{iN}{2d}\left(\frac{\partial\lambda_1}{\partial\theta_l}
            +\frac{\partial\lambda_2}{\partial\theta_l} 
               + \ldots + \frac{\partial\lambda_d}{\partial\theta_l}\right),
\nonumber \\ & & \hspace*{0em}
     \Braket{\partial_l\psi_{\theta}| \partial_m\psi_{\theta}}
        =\frac{N^2}{d^2}\left(\frac{1}{2}\sum_{i=1}^d
            \frac{\partial\lambda_i}{\partial\theta_l}
            \frac{\partial\lambda_i}{\partial\theta_l} 
            + \frac{1}{4}\sum_{\scriptstyle i,j=1 \atop\scriptstyle i\ne j}^{d}
            \frac{\partial\lambda_i}{\partial\theta_l}
            \frac{\partial\lambda_j}{\partial\theta_l} \right).
\nonumber
\end{eqnarray}
Then using \ref{qfiformula}, it follows that the QFI is given by 
\begin{eqnarray}
\label{loft}
      F({\theta})_{l,m}
         =\frac{N^2}{d^2}\sum_{i=1}^d
              \frac{\partial\lambda_i}{\partial\theta_l}
              \frac{\partial\lambda_i}{\partial\theta_m}\frac{N^2}{d^2} (J(\theta)^TJ(\theta))_{l,m}
\end{eqnarray}
where $J(\theta)$ is the Jacobian of the map
$f:\mathbb{R}^d\mapsto\mathbb{R}^d$
$$
     f: (\theta_1, \dots , \theta_d) \mapsto (\lambda_1, \lambda_2, \ldots, \lambda_d),
$$
Assuming the the parameter $\theta$ are identifiable, the Jacobian must be invertible and therefore
\begin{equation}\label{bad232}
{\rm Tr} (F({\theta})^{-1}) = \frac{d^2}{N^2} {\rm Tr} (J(\theta)^{-1}(J(\theta)^{T})^{-1}) = 
\frac{d^2 H(\theta) }{N^2 | J (\theta)|^2}
\end{equation}
where  $| J (\theta)|$ is the determinant of $J(\theta)$, and $H(\theta):= \| P(\theta)\|_2^2$ with $P(\theta)$ the cofactor matrix of the Jacobian $J(\theta)$. 
%
Equation (\ref{bad232}) provides the lower bound to the MSE for strategy 1 which features the Heisenberg scaling with respect to $N$.

\subsubsection{Performance of Strategy 2}
Since N00N states are optimal for single phases, we consider the following  generalisation:
\begin{eqnarray}
\label{bluet}
              \Ket{\psi}&=& \Ket{0} \otimes \Big( \alpha\Ket{N(\omega_1), 0, \ldots, 0}
                   + \cdots 
                       + \alpha\Ket{0, \ldots, 0, N(\omega_d)} \Big)
\nonumber \\ 
& &
           +\beta \Ket{N(\omega_A)} \otimes \Ket{0, \ldots, 0}.
\end{eqnarray}
where $\alpha, \beta$ are real and $d \alpha^2+\beta^2=1$. 
%
%
The QFI matrix is given by 
\begin{eqnarray}
     F({\theta})_{l,m}&=&4N^2\Bigg[\alpha^2\sum_{i=1}^d
       \frac{\mathrm{d}\lambda_i}{\mathrm{d}\theta_l}\frac{\mathrm{d}\lambda_i}
        {\mathrm{d}\theta_m}-\alpha^4 \left(\sum_{i=1}^d\frac{\mathrm{d}\lambda_i}
        {\mathrm{d}\theta_l}\right)\left(\sum_{i=1}^d\frac{\mathrm{d}\lambda_i}
        {\mathrm{d}\theta_m}\right) \Bigg],\nonumber\\
        &=&4N^2\alpha^2J(\theta)^T\left\{{1}-d \alpha^2 Q \right\}J(\theta)
        \label{cools}
\end{eqnarray}
 where $Q$ is the orthogonal projection onto the vector $\mathbf{1}/\sqrt{d} =(1,\dots, 1)/^T\sqrt{d}$. As in the case of strategy 1 we can compute the 
 Cram\'{e}-Rao bound as 
\begin{eqnarray}
{\rm Tr}( F(\theta)^{-1}) 
&=&
 \frac{1}{4N^2 \alpha^2} 
 {\rm Tr}\left(J(\theta)^{-1} \left({1} + \frac{d\alpha^2}{1-d\alpha^2} Q\right) (J(\theta)^T)^{-1} \right)\nonumber \\
 &=&
 \frac{1}{4N^2\alpha^2} {\rm Tr}\left(J(\theta)^{-1} (J(\theta)^T)^{-1} \right)+  
 \frac{d}{4N^2(1-d\alpha^2)}  {\rm Tr}\left(J(\theta)^{-1}  Q  (J(\theta)^T)^{-1} \right)
 \nonumber \\
 &=&
 \frac{1}{4N^2\alpha^2} \frac{H(\theta)}{|J(\theta)|}+  
 \frac{d}{4N^2(1-d\alpha^2)}  {\rm Tr}\left(J(\theta)^{-1}  Q  (J(\theta)^T)^{-1} \right)
 \nonumber \\
 &=&
\frac{H(\theta)-\alpha^{2} K({\theta})  }
                     {4N^2 |J(\theta)|^2 (1-d\alpha^2)\alpha^{2}},   
                     \label{cooler}
 \end{eqnarray}
where 
$$
H(\theta) := \|P(\theta)\|_2^2, \quad K({\theta})=\frac{1}{2}\sum_{j=1}^d\sum_{l,m=1}^d(P(\theta)_{j,l}-P(\theta)_{j,m})^2
$$ 
with $P(\theta)$ the cofactor matrix of the Jacobian of $f({\theta})$.
The minimum over $\alpha$ of the lower bound  in Eq. (\ref{cooler}) can be evaluated as
\begin{eqnarray*}
\min_{\alpha}\mathrm{Tr}(F(\theta)^{-1})
=\frac{\left(2dH(\theta)- K(\theta) \right)+\sqrt{4dH (\theta) \left(dH(\theta) -  K(\theta) \right)  } }{4N^2| J(\theta)|^2},
\end{eqnarray*}
and is attained at 
\begin{eqnarray}\label{coolerer} 
\alpha^2_{opt}=\frac{2dH(\theta)+\sqrt{4dH(\theta)\left(dH(\theta) -  K(\theta)\right)  } }{2dK(\theta) }.
\end{eqnarray}
%
By using the definitions of $H(\theta)$ and $K(\theta)$ one can check that 
$dH(\theta) - K(\theta)=\sum_{j=1}^d(\sum_{i=1}^dP_{ji})^2\geq0$, and further that $2dH(\theta)+K(\theta)\geq\sqrt{4dH(\theta)(dH(\theta)-K(\theta))}$. This implies
$$
    \min_{\alpha}\mathrm{Tr}(F(\theta)^{-1})
    \leq 
     \frac{dH(\theta)}{N^2 |J(\theta)|^2}.
$$

\subsubsection{Comparison of  Strategies 1 and 2}

We summarise the two results on the quantum Cramer-Rao lower bounds obtained with the two strategies
\begin{itemize}
\item[-] with  strategy 1: 
$ \mathrm{Tr}(F(\theta)^{-1})=\frac{d^2H(\theta)}{N^2 |J(\theta)|^2}$
\item[-] with strategy 2:
$ \mathrm{Tr}(F(\theta)^{-1})\leq\frac{dH(\theta)}{N^2 |J(\theta)|^2}$. 
\end{itemize}
This indicates that by using entangled probe states one may be able to reduce the MSE by a factor $d$ compared with product states. 
This result extends that of \cite{Humphreys1} (see Sec. \ref{class}). which can be regarded as a special case when $H(\theta)=d$, 
$K(\theta)=d^2-d$ and $ |J(\theta)|=1$. 


Generalising the above results to MIMO systems is more involved. Entangling between modes is more difficult to analyse because these channels are  not independent. Alternatively, using only one of these channels for each frequency causes problems as the setup behaves like the noisy setup as information is lost in the remaining channels. The most interesting question here is whether it is  possible to use entanglement between the channels to gain an estimation improvement.

\subsection{Squeezing-Based Realization}
\label{realise}

Because the N00N states that we have used here aren't very practical (for the reasons discussed in Sec. \ref{prevol}), we would instead like to develop a realistic scheme 
to achieve this $\mathcal{O}(d)$  improvement in performance, which exhibits the properties of a good choice of probe state (outlined in Sec. \ref{sec.1D}). 
 We consider here  the possibility of using     squeezed-coherent states like the ones in Sec. \ref{Toes1} to achieve this goal.

Firstly, a squeezing-based realization of the type-(1) strategy would need to have
squeezing and displacement operations in $d$ experiments to estimate $\lambda_1, \ldots, \lambda_d$. Therefore, let each experiment have power $E/d$, split equally between the squeezing and displacement operations (i.e., $E/2d$).
Hence through the $d$ experiments the total amount of energy 
is $(E/2d + E/2d)\times d = E$. 
Now,  the variance $\left<\Delta \lambda_1^2 + \ldots + \Delta\lambda_d^2\right>$ 
is of the order of 
\[
      \frac{1}{\alpha^2 e^{2r}}\times d = \frac{1}{(E/2d) (4\times E/2d)} \times d
       = \frac{d^3}{E^2}. 
\]
Therefore, the precision is of the order $d^3/E^2$.

On the other hand, if there is an  ${\cal O}(d)$ improvement to be had we must consider using  entanglement between the frequency channels. Such entanglement, which is a non-linear property  can be achieved with current technology using a  {\it frequency-comb entangled state} \cite{Medeiros1, Roslund1}. A special OPO  called the  SPOPO can be used to generate  a multiple set of squeezed fields. In the squeezing-based realization of the type-(2) strategy the goal is to estimate the $d$ displacement operations $\lambda_1, \ldots, \lambda_d$ (see Sec. \ref{Toes2}) using a single squeezing operation. Suppose that  the squeezing operation has power  $E/2$ while $E/2d$ power is given to each displacement operation, so that 
the total amount of energy is $E/2 + (E/2d)\times d = E$. 
The variance $\left<\Delta \lambda_k^2\right>$ will then be of order  
$1/\alpha^2 e^{2r} = 1/(E/2d) (4\times E/2)$, hence the total variance 
$\left<\Delta \lambda_1^2 + \ldots + \Delta\lambda_d^2\right>$ is of the 
order of 
\[
      \frac{1}{\alpha^2 e^{2r}}\times d = \frac{1}{(E/2d) (4\times E/2)} \times d
       = \frac{d^2}{E^2}.
\]
Hence there will be an  ${\cal O}(d)$ improvement over the type-(1) strategy. However  the difficulty here is in developing a squeezed state with the property that it can be used to estimate $d$ displacements each with variance $E/2$ under total energy $E/2$. 
Whilst investigating this problem we became aware of  the work \cite{Proctor1}.
The essential point  there was  that if one goes beyond the class of fixed photon number input states  (and take average energy as the resource constraint) then  for any type-(2) strategy there will always exist a type-(1) strategy at least as good as it (in terms of MSE). This suggests that such  an ${\cal O}(d)$ enhancement in the class of squeezed and coherent states is not possible. It also raises questions over the efficacy   of the result \cite{Humphreys1},  as well as our results in Sec. \ref{simultaneous}. Frequency entangled states are seemingly only advantageous within the class of fixed photon number states.
The critical question is therefore: is it physically relevant to
consider only fixed total particle number probe state? In quantum optics indefinite photon number states are very natural (e.g coherent states), therefore the answer to this question is most likely negative. 



\section{Open Problem: Optimising the Input Over Frequency}\label{varfreq}

A problem for future research is the general multiple parameter set-up and how  to optimise over the frequency of input states.  In the one-parameter case, it was understood that the best strategy was a monochromatic probe state. For $d$-parameters, this problem is much more subtle. Although it is expected that it $d$-frequency probe state is the best strategy it is not clear how to choose the frequencies.  For example, for each unknown parameter there is an optimal frequency in the single-parameter model but when $d$ of these are combined into  a $d$-parameter model it may be the case that a different choice of frequencies may indeed work better.  
We see an example of this now. 
 
 \begin{exmp}
 Consider the one-mode SISO PQLS characterised by unknown parameters $(c, \Omega)$, which we would like to estimate. Consider also the setup from Sec. \ref{sausage1} where the input is supported on  two frequency modes, 
with frequencies  $\omega_1$ and $\omega_2$.
Also, as in Sec. \ref{sausage1} suppose that the coherent inputs have real amplitudes and the squeezed inputs are momentum-squeezed. We assume for simplicity that the energy of the input supported on each frequency  is the same (i.e. $E/2$) and $E_{\mathrm{coherent}}=E_{\mathrm{squeezed}}$ on each channel. 
Let us optimise over the  frequencies  $\omega_1$ and $\omega_2$ under a fixed energy constraint using a momentum measurement on each mode.

Firstly, the optimal frequencies for the single parameter experiments to estimate either $c$ or $\Omega$ are given by $\omega=\Omega\pm\frac{1}{2}|c|^2$ or $\omega=\Omega$, respectively (i.e. minimising the MSE \eqref{tofu} with respect to a total energy constraint  as  in Sec. \ref{sec.1D}). 

On the other hand the Fisher Information for the setup here is given by 
$$F(|c|^2, \Omega|\omega_1, \omega_2)=\frac{E^2}{4}\left(\begin{smallmatrix}p^2_1+p_2^2&p_1q_1+p_2q_2\\p_1q_1+p_2q_2&q_1^2+q_2^2\end{smallmatrix}\right),$$
 where $$p_i:=\frac{-|c|^2}{(\Omega-\omega_i)^2+\frac{1}{4}|c|^4} \quad\mathrm{and}\quad q_i:=\frac{(\Omega-\omega_i)}{(\Omega-\omega_i)^2+\frac{1}{4}|c|^4}.$$
 We have also chosen for convenience to estimate $|c|^2$, which can be done without loss of generality because the phase is not identifiable from the transfer function (see Example \ref{probes} or Corollary \ref{colo}).
 It follows that the MSE in this case is given by 
 \begin{align*}
&M(|c|^2, \Omega|\omega_1,\omega_2):= \mathrm{Tr}(F(|c|^2, \Omega)^{-1})\\
&=\frac{4}{E^2}\times
 \frac{\left(|c|^4+(\Omega-\omega_1)^2\right)   \left(\frac{1}{4}|c|^4+(\Omega-\omega_2)^2\right) +\left(|c|^4+(\Omega-\omega_2)^2\right)   \left(\frac{1}{4}|c|^4+(\Omega-\omega_1)^2\right)}{  \left(\frac{1}{4}|c|^4+(\Omega-\omega_1)^2\right)^2 \left(\frac{1}{4}|c|^4+(\Omega-\omega_2)^2\right)^2}.
 \end{align*}
 Suppose that the true values of the parameters are  $|c|^2=1$ and $\Omega=3$. A plot of $M(1,3|\omega_1, \omega_2)$ is given in Fig. \ref{chevy} (a) as a function of the frequencies $(\omega_1, \omega_2)$. It is clear from the diagram that the minimum values of the MSE is not at $(3,3\pm0.5)$, as would be the case for the single parameter experiments. In fact, using a numerical solver  a global minimum can be  found  at $(3.4278, 2.5722)$ (note that another can be found by switching $\omega_1\longleftrightarrow\omega_2$). 
Furthermore, suppose that we fix the parameter $\omega_1=3$, which recall  is the optimal choice in the single parameter experiment for estimating $\Omega$. A plot of the  function $M(1,3|3,\omega)$ as a function of frequency $\omega$ is given in  Fig. \ref{chevy} (b). Observe that  the minima deviate  from $\omega=3+0.5$; they  are actually given by $\omega=3\pm2^{-3/4}$.

\begin{figure}
\centering
\centering\includegraphics[scale=0.22]{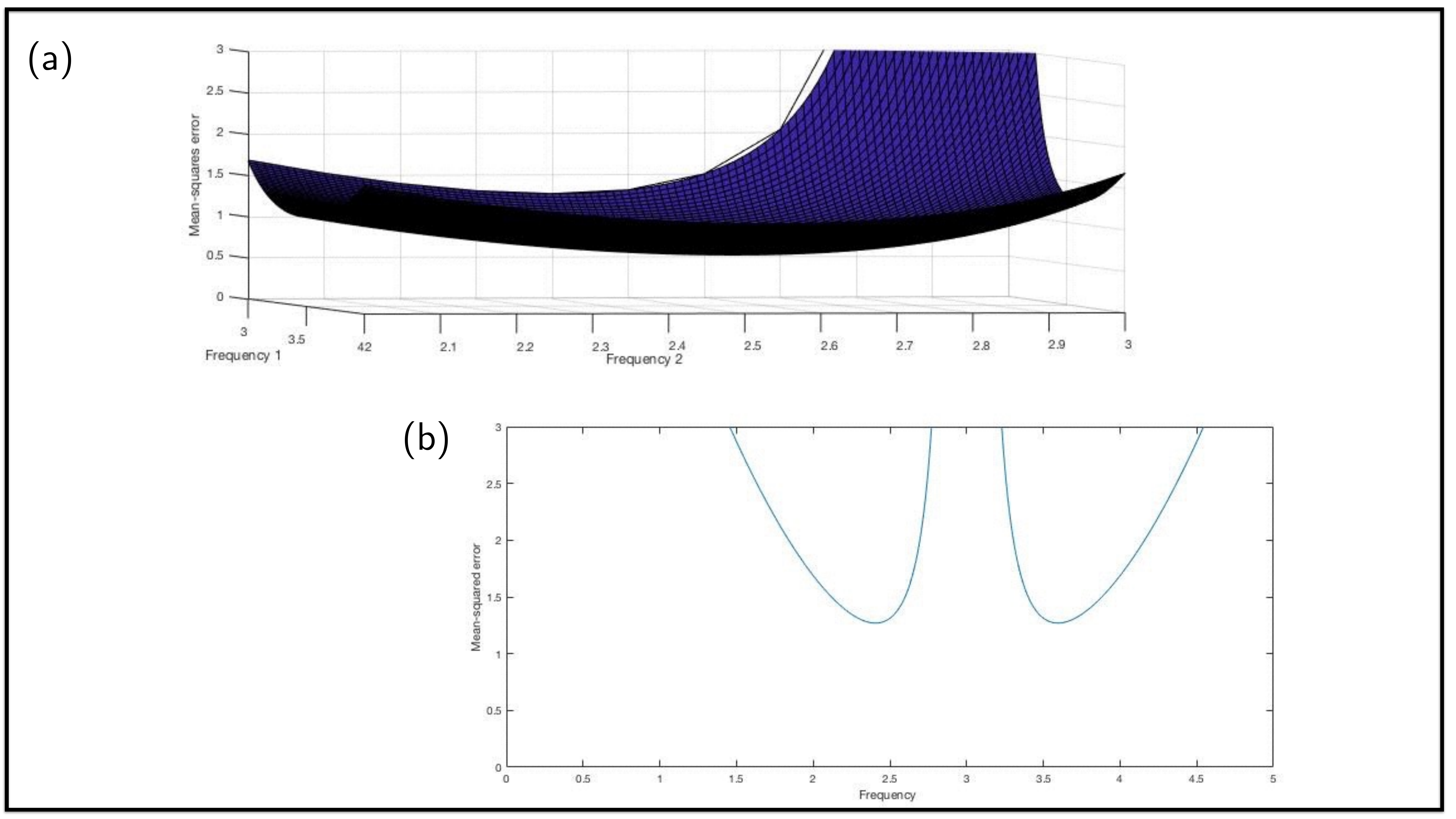}
\caption{(a): Plot of the MSE as a function of frequencies $(\omega_1, \omega_2)$. (b) Plot of MSE with one of frequencies fixed at $\omega=3$.
\label{chevy}}
\end{figure}

\end{exmp}


 \section{Conclusion}

In this chapter  we have developed a method allowing us to identify  unknown parameters (i.e., single and multiple parameters) of a given PQLS at the Heisenberg limit using squeezed states. The emphasis was on realistic Heisenberg detection schemes. In order to achieve such large estimation precision one exploits the decreased uncertainty in one of the quadratures that squeezed states possess. The action of the PQLS is a displacement, which can be detected by performing homodyne detection at the output. We are not currently aware of any other strategy in the literature achieving the Heisenberg limit for general  PQLSs (other than the single parameter SISO case, which is essentially equivalent to the the standard metrology scheme). We found the optimal state of squeezed-coherent type for the single parameter case. 
Also for the  the single parameter case a MIMO example for two atomic ensembles was given, where a particular parameter determines whether or not a macroscopic system possesses  entanglement. The parameter values are typically very small and, as a zero value corresponds to no entanglement, distinguishing it from zero is crucial. Therefore high precision is required. A possible experimental set-up was also given.




We have extended the result of \cite{Humphreys1} to SISO PQLSs, by showing there is an $\mathcal{O}(d)$ advantage by estimating $d$ parameters using frequency entangled states versus the alternative of frequency separable states. We pointed out that their results and ours are only valid under the assumption of fixed photon inputs, which ultimately led to our failure in developing an analogous  realistic scheme using Gaussian states. It still remains an open problem whether frequency entangled states offer any advantage from a metrological point of view, even in the SISO case. By the results of \cite{Proctor1} it is clear that simply estimating the phases at different frequencies (i.e the $\lambda(\omega|\theta)$s rather than the $\theta$s) there is no advantage. However this does not necessarily imply that there cannot be an advantage in estimating the $\theta$s.



We outlined various open problems in Secs. \ref{realise} and \ref{varfreq} concerned with the optimisation problem in the multiple parameter case and allowing for optimisation over frequency, respectively. Finally another interesting open problem is to extend this work to active quantum linear systems (AQLS), where system identification remains an open problem to-date.



\chapter{Feedback Control Methods for Parameter Estimation in QLSs}\label{FEDERER}

In Ch. \ref{QEEP} we considered parameter estimation for PQLS under an energy resource constraint. In this chapter, we  
 take time as our main resource  and consider the same estimation goals. 
 
In general, information about a QLS (or a parameter therein) is obtained at a linear rate with respect to time, as evidenced  by the result in Sec. \ref{JUKKA1}. However, we shall see that when the eigenvalues of the system matrix $A$ (or equivalently the poles of the transfer function) are close to the imaginary axis, so that the system destabilises, the QFI is enhanced and scales quadratically with the observation time. Being more precise the QFI scales as $T_{\mathrm{tot}}^{2(1-\epsilon)}$ and is valid for observation times of the order $T_{\mathrm{tot}}=\tau^{\frac{1}{1-\epsilon}}$, where $\tau$ is the correlation time (or stabilisation time). The constant $\epsilon>0$  essentially ensures that the system can be considered stationary during the experiment (see Sec. \ref{toy}). For times much longer than the correlation time the linear scaling will be restored.

In our setup the enhancement in precision arises from  the internal system being \textbf{almost} decoherence-free. More generally, a  \textit{decoherence-free subsystem} \cite{Yamamoto1, Kasia1} is part of the system that   doesn't feel the effect of the environment (noise) and evolves as a closed system. The study of DFSs has been a fruitful subject in recent years; there are a vast array of applications, such as  \textit{quantum computation} \cite{Beige1}, \textit{quantum memories} \cite{Yamamoto4, Kiepinski1} and \textit{quantum metrology} \cite{Kasia2, Dorner1} (see also \cite{Yamamoto1} for a study in the context of QLSs).
Our results  highlight further the importance of DFSs as a resource for quantum metrology.  
This Heisenberg-level scaling with respect to time was also observed in \cite{Kasia1, Kasia2} for non-linear quantum systems when the system undergoes a \textit{dynamical phase transition} (DPT). A DPT is a singular change in the dynamics associated with the vanishing of the spectral gap. In particular, the system displays an intermittent behaviour, switching between periods of high and low emission rates. We discuss this in greater detail in Sec. \ref{dico}. 

Firstly, we consider using  time-dependent inputs in PQLSs and give an adaptive procedure to obtain the `Heisenberg level' scaling when one (or more) of the poles are very close to the imaginary axis (Sec. \ref{piles}). 
An adaptive procedure is necessary because at each stage the optimal frequency of the input depends on the unknown parameter. We  give two feedback methods enabling one to destabilise a PQLS. The feedback methods are based on either isolating one mode in the system or even the entire system (so that all poles are close to the imaginary axis in this case). The method of isolating the entire system is the most straightforward and is achieved by increasing the reflectivity of the mirrors coupling the field(s) to the system using beam-splitters, which should be achievable in practice. The method of destabilising one mode  is perhaps more ambitious and is  facilitated with coherent feedback. The final ingredient in a metrological protocol is a suitable choice of measurement. We show that simple (adaptive) homodyne detection works here.

Finally, we consider using stationary inputs as a resource for quantum metrology for  PQLSs in Sec. \ref{dfsw}. We show again that an unstable PQLS  results in  Heisenberg time scaling with respect to the QFI.





\section{Problem Formulation and Preliminary Investigation}\label{problem}

We saw in Ch. \ref{WEEP} that in linear systems theory one can only access the system  using probe states and measurements via the field. Therefore, any metrological scheme will be \textit{indirect} in this sense. 
 This situation is contrary to the quantum metrology setup in Sec. \ref{class}, where one is able to \textit{directly} estimate system parameters with states and measurements on the system (see Fig. \ref{CS}). Let us now undertake a preliminary investigation to  draw comparisons between direct and indirect metrology.

\subsection{Direct Metrology}\label{DIREC} 

Firstly we look at the direct metrology protocol. 


 \begin{figure}
\centering
\includegraphics[scale=0.17]{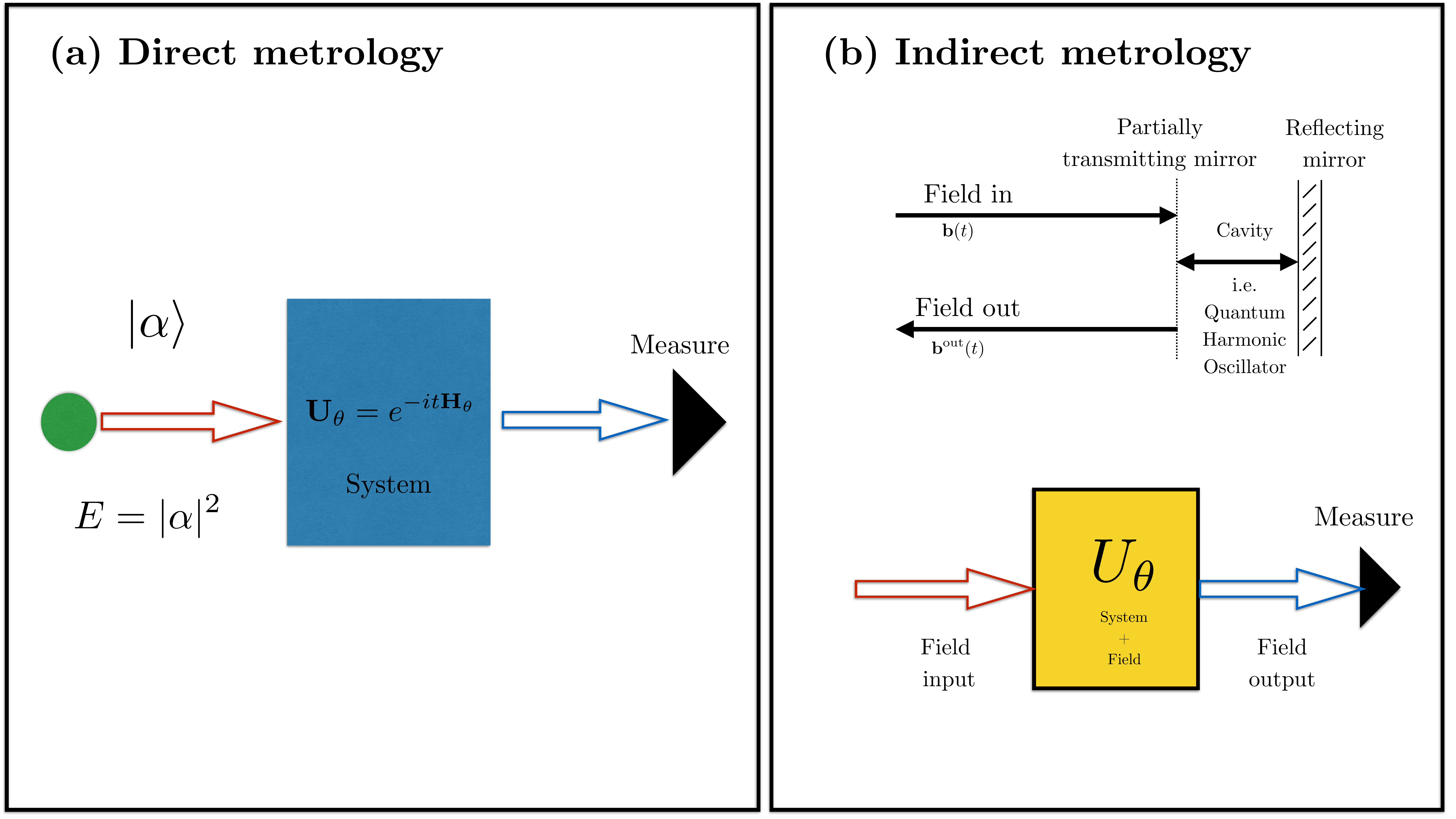}
\caption{ Comparison between direct and indirect metrology setups. In the direct setup, (a), the state evolution is closed and evolves unitarily. In the indirect metrology setup, (b), the system in (a) is coupled with an ancillary system, which can be prepared in an arbitrary state. As the information about $\theta$ leaks into the ancilla it can be measured.  Here  the ancilla system is a Bosonic bath.  \label{CS}}
\end{figure}

Consider the following closed system evolution: suppose that we have a continuous variables system, which is represented by a mode $\mathbf{a}$ satisfying the usual CCR $\left[\mathbf{a}, \mathbf{a}^{\dag}\right]=1$. Also, assume that system evolves under the Hamiltonian 
$\mathbf{H}_{\theta}:=\theta \mathbf{a}^{\dag}\mathbf{a}$
for a period of time $t$. In other words the evolution of the system is described by the unitary operator
\[\mathbf{U}_{\theta}=e^{-i\theta t\mathbf{a}^{\dag}\mathbf{a}}.\]
Suppose that the  parameter $\theta$ is unknown and we would like estimate it (see Fig. \ref{CS}).

Let us probe the system with a coherent state of amplitude $\alpha$, hence energy $E=|\alpha|^2$. After time $t$, the state evolves to  a coherent state of amplitude $\alpha e^{-i\theta t}$, i.e, the action is a rotation in the complex plane. How sensitive this rotation is to changes in $\theta$ characterises the eventual performance of any measurement. We can calculate this performance using the  QFI (see Sec. \ref{plm1}), hence by  Eq.  \eqref{qfiformula} we have 
\begin{align}
\nonumber F_{d}(\theta)&=4\mathrm{Var}\left(t\mathbf{a}^{\dag}\mathbf{a}\right)\\
& \label{bencH}=4 t^2 E.
\end{align}
Notice that $F_{d}(\theta)$  scales quadratically with time\footnote{We are not  interested in the scaling with energy here, but just note that this precision can be further improved by using CAT states for example (see Sec. \ref{class})). 
}.
Furthermore, adaptive homodyne measurements can be shown to attain the QFI bound asymptotically. 


\subsection{Indirect Metrology}\label{indo}

If instead we perform an indirect measurement by coupling with a Bosonic field and performing  measurements on the field to estimate $\theta$, then it is expected the the QFI in this case will be worse. That is, not all information is expected to be transferred to the field and we get information loss. 
The restriction that we must measure indirectly, rather than directly, is one that is encountered generally in linear quantum systems theory as direct access to the system is not possible. 
In fact in this instance we have a passive QLS (under the usual rotating wave and Markov approximations) and in particular a cavity (see Example \ref{opticalcavity}).

Now let us show explicitly that the indirect strategy is worse than the direct one. 
Suppose that we have a  QLS that is characterised by coupling parameter $c\in\mathbb{R}$ and Hamiltonian $\Omega\in\mathbb{R}$. Note that both parameters are identifiable in the sense of Definition \ref{identy}. Suppose that we would like to estimate $\Omega$ (with $c$ assumed to be known). 
Using a coherent probe state of one frequency\footnote{Recall from Sec. \ref{tonga} that one frequency  is optimal \cite {Guta2}} then   the QFI  is given by (see Eq. \eqref{tonga2})
\begin{align}
\nonumber F_{ind}(\Omega| \omega_{\mathrm{opt}})&=4E\max_{\omega}\left|\frac{d\Xi(-i\omega)}{d\Omega}\right|^2  \\
\label{fedy8}&=4E\max_{\omega} \frac{c^4}{\left(\left(\omega-\Omega\right)^2+\frac{1}{4}c^4\right)^2} \\
\label{fedy1}&=64E\frac{1}{c^4},\end{align}
which is less than \eqref{bencH} in general\footnote{Note that in order to obtain this large QFI one must run the experiment for a time $t\gg1/c^2$  so that the system stabilises (see Sec. \ref{hastings}). On this long time scale the input-output dynamics become unitary in the frequency domain.}.  We now discuss a toy model that suggests that it may be possible to do considerably better than \eqref{fedy1} using an adaptive approach facilitated by feedback.

\subsection{Toy Model}\label{toy}

Consider again  the  one mode passive SISO system above. Suppose that we have a fixed energy resource, $E$ and time resource, $T_{\mathrm{tot}}$, for metrology. 
Further, suppose that  we are also allowed to choose the value of $c$. Although this may seem artificial, later we will give two feedback schemes designed to implement this. 

Choose $c$ such that  $c^2=\mathcal{O}(T_{\mathrm{tot}}^{-{(1-\epsilon)}})$, and consider running the experiment for a time $T_{\mathrm{tot}}$. Observing the experiment over the time interval $[T_{\mathrm{tot}}^{1-\epsilon},T_{\mathrm{tot}}]$ (note here we are taking $T_{\mathrm{tot}}$ to be sufficiently large so that the width of the interval $[T_{\mathrm{tot}}^{1-\epsilon},T]$ is much greater than the width of the interval $[0,T_{\mathrm{tot}}^{1-\epsilon}]$ and the system has reached stationarity  over $[T_{\mathrm{tot}}^{1-\epsilon},T_{\mathrm{tot}}]$ (see Sec. \ref{hastings}). It follows from Sec. \ref{indo} that the optimal QFI 
is proportional to  $E T_{\mathrm{tot}}^{2(1-\epsilon)}$. The upshot is that one achieves  (almost) the same scaling for estimation of the Hamiltonian parameter as when you have direct access to the closed system. The advantage here is that  it achieves this scaling indirectly (via the field).

A potential caveat here is that we need to prepare an input state (of coherent-type) on a frequency that is unknown.  We can overcome this difficulty  by using an adaptive procedure. 

\textbf{Adaptive procedure:}
\begin{itemize} 
\item Step 1 (Initial experiment): Run a prior experiment using the QLS to obtain a rough estimator for our unknown parameter using time resources $T_1=\frac{T_{\mathrm{tot}}}{3}$. That is, probe the system with a monochromatic coherent state  (of any frequency)  with total energy $E_1:=\tilde{E}T_1$, where $\tilde{E}$ is the energy density. It follows from above that the QFI is given by
$$
F^{(1)}(\Omega)\propto \tilde{E}T_{\mathrm{tot}}  $$
Perform a measurement to obtain an estimator of $\Omega$ with error  $\mathbb{E}\left[(\Omega-\hat{\Omega}_1)^2\right]=\mathcal{O}\left(\frac{1}{{T_{\mathrm{tot}} }}\right)$.
\item Step 2 (Feedback 1): Now run a second experiment using the QLS for a time $T_2=\frac{T_{\mathrm{tot}}}{3}$. Choose the 
 coupling parameter  to be $\hat{c}_2$ so that $|\hat{c}_2|^2=\mathcal{O}\left(\frac{1}{\sqrt{T_{\mathrm{tot}}}}\right)$. Probe the system with a monochromatic coherent state of frequency $\hat{\Omega}_1$ with total energy $E_2:=\tilde{E}T_2=E_1$. Since $\hat{\Omega}_1$ varies form the true value of $\Omega$ by  $\mathcal{O}\left(\frac{1}{{T_{\mathrm{tot}} }}\right)$, then the QFI at this stage is
\begin{align*}
F^{(2)}(\Omega)&=4E_2\left.\left|\frac{d\Xi(-i\omega)}{d\Omega}\right|^2\right|_{\omega=\hat{\Omega}_1}\\
&\propto \tilde{E}\cdot T^2_{\mathrm{Tot}}\end{align*}
%
%
%
%
%
%
%
%
%
Perform a measurement to obtain an estimator of $\Omega$ with error $\mathbb{E}\left[(\Omega-\hat{\Omega}_2)^2\right]=\mathcal{O}\left(\frac{1}{{T^2_{\mathrm{Tot}} }}\right)$.
\item Step 3 (Feedback 2): Now run a third experiment using the QLS for a time $T_3=\frac{T_{\mathrm{tot}}}{3}$. Choose the 
 coupling parameter  to be $\hat{c}_3$ so that $|\hat{c}_3|^2=\mathcal{O}\left(      \frac{1}{   \left(      \sqrt{      T^2_{\mathrm{Tot}}      }       \right)^{1-\epsilon}           }                   \right)  $. Probe the system with a monochromatic coherent state of frequency $\hat{\Omega}_2$ with total energy $E_3:=\tilde{E}T_3=E_1=E_2$. Since $\hat{\Omega}_2$ varies form the true value of $\Omega$ by  $\mathcal{O}\left(\frac{1}{{T^2_{\mathrm{Tot}} }}\right)$, then the QFI at this stage is
\begin{align*}
F^{(3)}(\Omega)&=4E_3\left.\left|\frac{d\Xi(-i\omega)}{d\Omega}\right|^2\right|_{\omega=\hat{\Omega}_2}\\
&\propto E_3\cdot T^{2(1-\epsilon)}_{\mathrm{Tot}}\end{align*}\\
Perform a measurement to obtain an estimator of $\Omega$ with error $\mathbb{E}\left[(\Omega-\hat{\Omega}_3)^2\right]=\mathcal{O}\left(\frac{1}{E_3{T^2_{\mathrm{Tot}} }}\right)$.
\end{itemize}


In this procedure we have split the total time into three experiments of equal lengths. In each experiment we   have chosen the coupling so that $|c^2|$  is proportional to the square root of the MSE from the previous step (we discuss this relationship shortly). 
By  adaptively tuning the frequency in each step, one is  able to  increase the sensitivity of the current step; at each stage the sensitivity is improved by a polynomial factor of $T_{\mathrm{tot}}$.
Crucially this procedure doesn't rely on the (circular) assumption about prior knowledge of the unknown parameter  at the start of the subsection.
The reader may be wondering whether this procedure can be continued to any power of $T_{\mathrm{tot}}$. However,  any further  stage would require a smaller coupling to have any hope of improving precision, which in turn would require a longer time to stabilise (see Sec. \ref{hastings} for the definition of stabilisation time) and so any improvement would  therefore be lost (see Sec. \ref{dico} for a further discussion). Indeed, in  each stage the coupling is decreased until in stage 3 when it reaches the  smallest possible level so that the experiment still stabilises in time $T_{\mathrm{tot}}$ (recall that the system is stationary over the interval $[T_{\mathrm{tot}}^{1-\epsilon}, T_{\mathrm{tot}}]$). It shouldn't be too surprising that we cannot improve the precision any further given our discussion on indirect metrology versus direct metrology.
Finally, the reason why the coupling is chosen so that $|c^2$|  is proportional to the square root of the MSE from the previous step is because 
$\left|\frac{d\Xi(-i\omega)}{d\Omega}\right|^2$ 
is largest when $\omega$ is chosen to that $(\omega-\Omega)$ and $|c|^2$ are of the same order (see Eq. \eqref{fedy1}).

In summary, we have seen one example of an  enhancement in precision arising from  the internal system being \textbf{almost} decoherence-free. 
This method will work provided there exists a feasible method to vary the coupling and that a measurement can be found at each step realising the QFI scaling (we see both in the following).


\section{Feedback Method for PQLSs; Time-Dependent Approach}\label{piles}

In the following, we will extend the adaptive approach from Sec. \ref{problem} to general PQLSs. We will design two feedback methods allowing implementation of this toy model and the one we develop for PQLSs in a physically meaningful way. Finally, we show that a  homodyne measurement will enable us to achieve the desired bounds.

\subsection{Adaptive Procedure}

Recall that in the example given in Sec. \ref{problem} the main idea was to destabilise the system and create a system that is almost isolated from the field. We now discuss two possible generalisations of this to arbitrary PQLSs.
\subsubsection{Isolating the Entire System}

One was to generalise this is the following: consider the system 
\begin{equation}\label{fedr1}
 \mathcal{G}=      \left(   C=\left(\delta_1, ..., \delta_{n}\right), \Omega\right),
 \end{equation}
where all parameters depend implicitly on one unknown parameter,   $\theta$, that we would like to estimate (see Fig. \ref{WCS} (a)). We assume that all column vectors $\delta_i$ are sufficiently small (in a sense to be defined later). The result is that all poles of the transfer function of this system will be close to the imaginary axis.

 \begin{figure}
\centering
\includegraphics[scale=0.18]{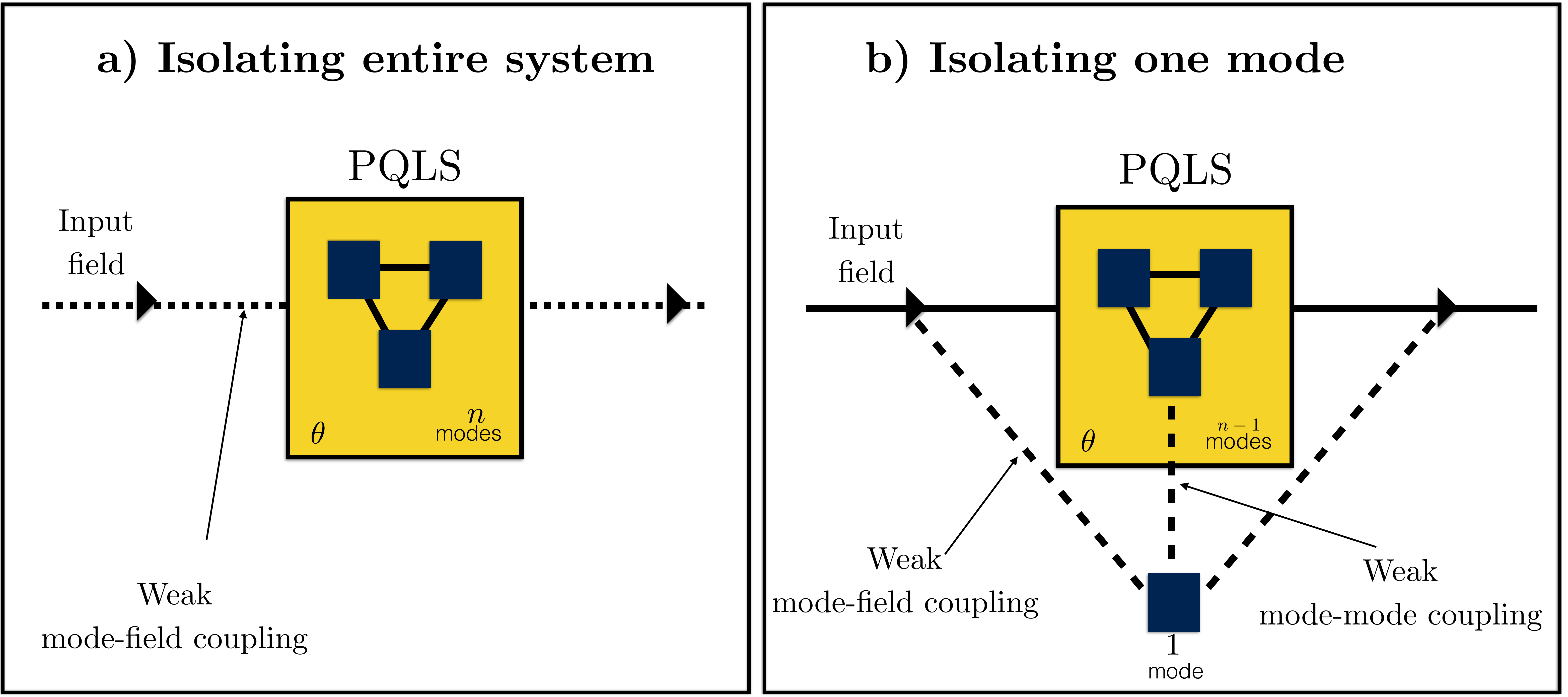}
\caption{ In a) the entire system has a small coupling with the field, whereas in b) one mode is weakly coupled with the field. \label{WCS}}
\end{figure}

\subsubsection{Isolating One Mode}

Alternatively, 
a more interesting setup  is to isolate and destabilise  one mode within the subsystem. Consider the setup in Fig. \ref{WCS} (b). We assume that one of the $n$ modes has both small direct (mode-field) and indirect (mode-mode) couplings.  The notion of  `small' will be made more precise later.   Such a system is characterised by 
  \begin{equation}\label{fedr2}
  \mathcal{G}=      \left(   C=\left(c_1, ..., c_{n-1}, \delta_1\right):=\left(c, \delta_1\right), \Omega=\left(\begin{smallmatrix}\Omega_1&\delta_2\\ \delta_2^{\dag}&\Omega_2\end{smallmatrix}\right)\right),
  \end{equation}
where all parameters depend implicitly on one unknown parameter,   $\theta$, that we would like to estimate. Here $\Omega_1, \delta_2, \Omega_2, \delta_1$ and $c_i$ are $n-1\times n-1, n-1\times 1, 1\times 1, m\times 1$ and $m\times1$ matrices, respectively. Note the similarities with the independent oscillator canonical form in \cite{Gough2}. 

In both of these setups the key feature is that there is a quasi-DFS within the system \cite{Yamamoto1}. We saw one example  in Sec. \ref{problem} where this sort of scenario was advantageous for estimation; we see shortly that this is also true in general.


\subsection{Feedback Method 1}\label{dorm1}

We now design a feedback method enabling us to isolate the whole system, that is, drive it to the form \eqref{fedr1}. Our method is basically  to increasing  the reflectivity of the mirrors coupling the field(s) to the system using beam-splitters.
 \begin{figure}
\centering
\includegraphics[scale=0.13]{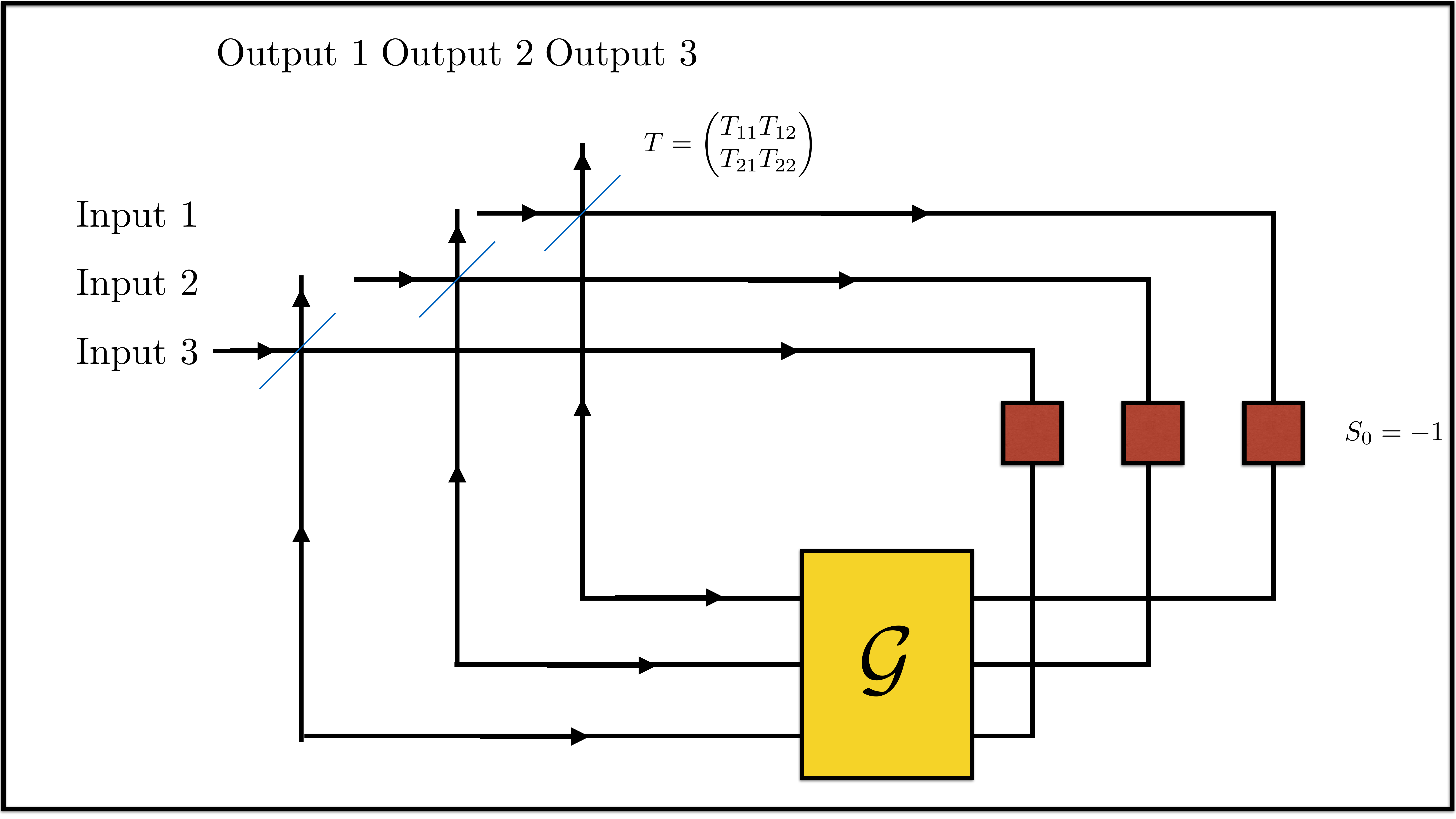}
\caption{ The setup in Sec. \ref{dorm1} where the PQLS (in this figure shown with three input channels) is placed in a circuit with beamsplitters in reduce the overall coupling
\label{FL}}
\end{figure}

Consider the setup in Fig. \ref{FL} where each input and corresponding output are one of the outputs and inputs respectively of a beam-splitter. Also there is a phase transformation on each channel as in the diagram. The result of adding these beam-splitters is the transformation 
\begin{equation}\label{cs}
C\mapsto C\times \left(T_{12}(T_{22}+1)^{-1}\right)
\end{equation}
\begin{equation}\label{SSS}
S=1\mapsto \left(T_{11}-T_{12}(T_{22}+1)^{-1}T_{21}\right)1
\end{equation}
and $\Omega$ remains unchanged. We have assumed here that all $m$ beamsplitters are identical and are described by a $2\times2$ unitary $(T_{ij})$ and all phases are given by $S_0=-1$ see \cite[Chapter 5]{Gough5}.
To calculate this, one uses the \textit{Feedback reduction rule} \cite{Gough6} (see \cite{Gough5} for the       calculation in the SISO case).

Now letting 
\[T=\left(\begin{matrix}\cos(\phi) \sin(\phi)\\-\sin(\phi) \cos(\phi)\end{matrix}\right)\]
it follows that in the limit of $\phi$ small the overall result of the beam-splitters is $C_{ij}\mapsto C_{ij}\times \phi/2$. Hence we are free to make $C$ as small as we like. 

\begin{remark}
As a result of the feedback there is a nuisance phase scattering in the field (see Eq. \ref{SSS}). However, since it is known (as it is the choice of the experimenter) it can be removed with a phase shifter in  the field. 
\end{remark}


Now let us show how the adaptive procedure in Sec. \ref{toy} can be used with this feedback method. For simplicity, let us assume that the system is SISO (if it isn't then one can obtain a SISO system by setting $\phi=0$ in $n-1$ of the beam-splitters). 
As in Sec. \ref{toy},  split the time into three equal parts and choose  $\phi$ so that the all coupling parameters are  proportional to the fourth root of the MSE from the previous step. Therefore, after step 1 we have $\mathbb{E}\left[\left(\theta-\hat{\theta}_1\right)^2\right]=\mathcal{O}\left(\frac{1}{T_{\mathrm{tot}}}\right)$ and we choose the coupling in step 2 as $C_2=\mathcal{O}(T_{\mathrm{tot}}^{-1/4})$. 
Writing the transfer function in terms of the poles as
\begin{equation}\label{qpi}
\Xi_{\theta}(-i\omega)=\frac{   \left(-i\omega+\overline{z}_1\right)   }{\left(-i\omega-z_1\right)}\times..\times\frac{\left(-i\omega+\overline{z}_n\right)}{\left(-i\omega-z_n\right)},
\end{equation}
 this means that the 
 $n$ poles of the transfer function are at distances $\mathcal{O}\left(\frac{1}{\sqrt{T_{\mathrm{tot}}}}\right)$ from the imaginary axis. 
Furthermore, given that our estimate of $\theta$ is known with error variance $\mathcal{O}\left(\frac{1}{T_{\mathrm{tot}}}\right)$ it follows each of the pole locations are known (or can be estimated using our current estimate of $\theta$) to  the same error; we denote such estimates by $\hat{z}_i$. 

Now to calculate the QFI we need to evaluate $\frac{d\Xi(-i\omega)}{d\theta}$ at some particular choice of input frequency. If we take the input frequency to be $\mathrm{Im}(-\hat{z}_1)$ it follows that   
\begin{equation}\label{aaron2}
		\left.
\frac{d\Xi(-i\omega)}{d\theta}
\right|_{\omega=\mathrm{Im}(-\hat{z}_i)}=\mathcal{O}	\left(\sqrt{    {T}_{\mathrm{tot}}     }\right).
\end{equation}
This result follows immediately from the observation that the term
\begin{equation}\label{aaron1}
\frac{d}{d\theta}  \left( \frac{   -i\omega+\overline{z}_1   }{-i\omega-z_1} \right)=\frac{\dot{\overline{z_1}} (-i\omega-z_1)     +   \dot{{z_1}} (-i\omega+\overline{z}_1)      }{(-i\omega-z_1)^2},
\end{equation}
in the derivative is of order $\mathcal{O}	\left(\sqrt{    {T}_{\mathrm{tot}}     }\right)$, which follows because 
$$\left.\left(-i\omega-z_1\right)\right|_{\omega=\mathrm{Im}(-\hat{z}_1)}=\left(   \mathrm{Im}(\hat{z}_1)-\mathrm{Im}({z}_1)\right)-\mathrm{Re}(z_1)=\mathcal{O}	\left(\frac{1}{\sqrt{    {T}_{\mathrm{tot}} }    }\right).$$

			The QFI is thus identical to step 2 in Sec. \ref{toy}. Step 3 can be developed  similarly. Note that  evaluating these functions at $\omega=\mathrm{Im}(-\hat{z}_1)$ 				corresponds to a coherent state of frequency $\mathrm{Im}(-\hat{z}_1)$ and this choice of frequency is crucial here. Note that in calculating Eq. \eqref{aaron1} it is important that $\dot{z}_1\neq0$, otherwise Eq. \eqref{aaron2} will be $\mathcal{O}(1)$ and the adaptive enhancement is lost.

\subsection{Feedback Method 2}\label{iles}
We now design a feedback method enabling us to isolate one mode within the system; that is, drive it to the form \eqref{fedr2}. 

 \begin{figure}
\centering
\includegraphics[scale=0.18]{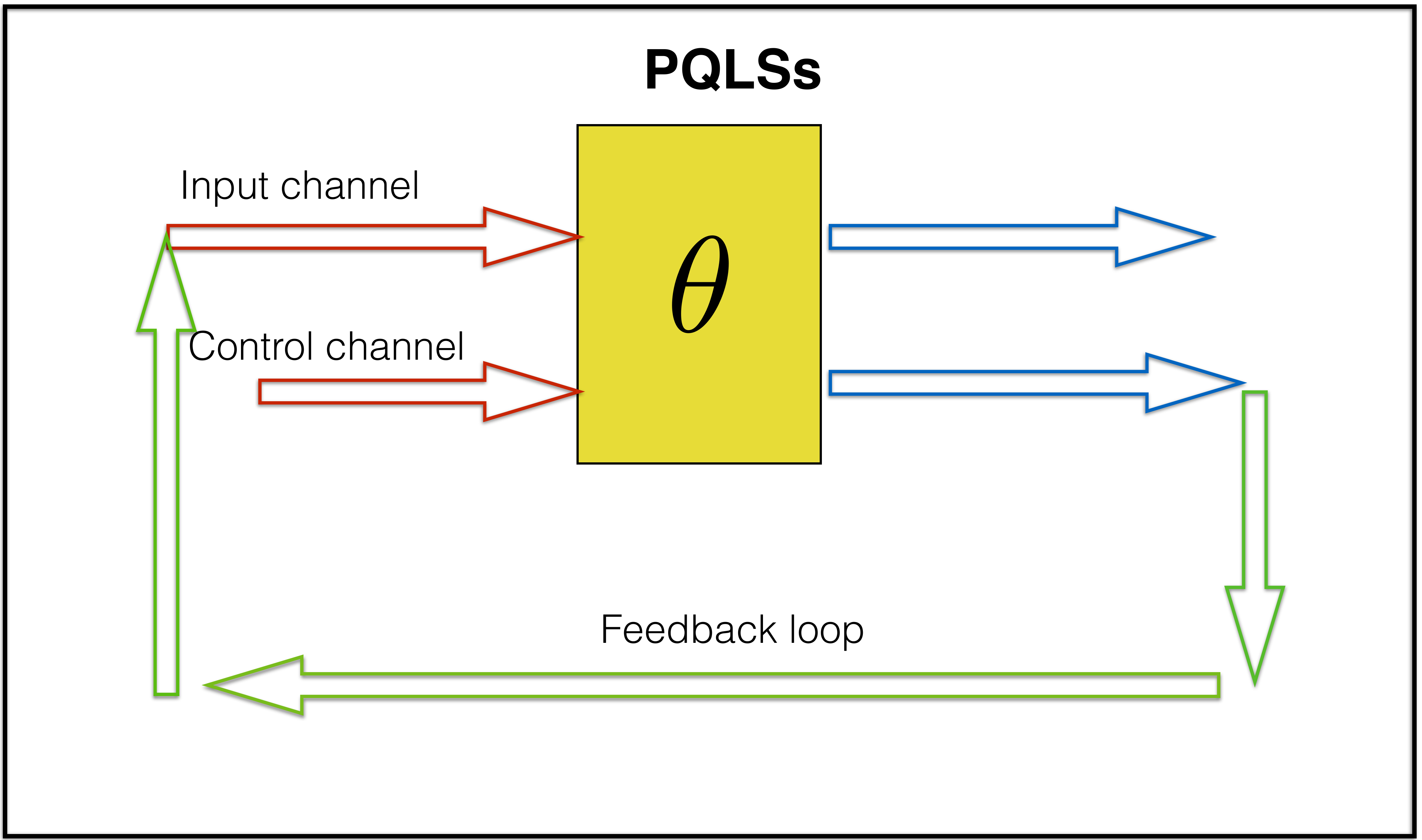}
\caption{ Feedback setup  for method 2.\label{FLP}}
\end{figure}

\subsubsection{SISO}

Consider an arbitrary SISO PQLS $(C, \Omega)$ in the setup in Fig. \ref{FLP}. We have adapted this into a 2I2O model by adding a new control channel and added a feedback loop (which is assumed to take  negligible time). Assume that in this additional control channel  we are able  to choose the  parameters.   A remark that the feedback loop is necessary as it ensures that the resultant model remains SISO.      
Note that this is an example of a coherent feedback scheme \cite{Peter1, James1} as all connections here are quantum. In particular, there is no measurement on the output of the second channel, but rather it is connected to the first input channel. Coherent feedback strategies such as the one here have been discussed in \cite{James1}.


Using the feedback reduction rule \cite{Gough5} the resultant model after connecting the feedback loop is given by 
\[\left(   C+D, \Omega+\frac{1}{2i}\left(D^{\dag}C-C^{\dag}D\right)\right),\]
where $D$ is the $1\times n$ matrix of control parameters corresponding to the extra channel. 
In our method, the goal  is to choose the $n$  control parameters in $D$ so that the resultant SISO model is of the form \eqref{fedr2}. This means that both
\begin{align}
\label{tr} \Omega_{jn}+\frac{1}{2i}\left(\overline{d}_jc_n-\overline{c}_jd_n\right)& \,\,\, \mathrm{for} \,\,\, 1\leq j\leq n-1,\\
&\label{tr2}c_n+d_n
\end{align}
need to be small.

Let us understand how the adaptive method in Sec. \ref{toy} can be combined with this feedback method to obtain enhanced scaling. As in  Sec. \ref{toy} split time into three equal parts. Step 1 is the same as Sec. \ref{toy}, meaning that we can estimate $\theta$ with MSE $\mathcal{O}\left(\frac{1}{{T_{\mathrm{tot}}}}\right)$. Consider the following choice of feedback:
\begin{align}
&d_n=-\hat{c}_n\\
&  \overline{d}_j=\frac{\hat{\overline{c}}_j(-\hat{c}_n)-2i\hat{\Omega}_{jn} }{\hat{c}_n}    \,\,\, \mathrm{for} \,\,\, 1\leq j\leq n-1,
\end{align}
where $\hat{c}_n$ and $\hat{\Omega}_{jn}$ are estimators of the $c_n$ and $\Omega_{jn}$, which describe the direct and indirect coupling to the $n$th mode. Also label  $\hat{c}_i$ for $i\neq n$ as the estimators of the couplings $c_i$ to the other modes. These can be estimated to a  MSE $\mathcal{O}\left(\frac{1}{{T_{\mathrm{tot}}}}\right)$ because they are determined by $\theta$. Recall  our adaptive procedure for the toy example where it was best to choose $|c|^2$ and $(\omega-\Omega)$ to be of the same order (rather than $c$ and $(\omega-\Omega)$). In light of this, in step 2 of the adaptive procedure   we  choose the 
 direct and indirect  coupling to the $n$th mode (eq. \eqref{tr} and \eqref{tr2}) to be proportional to   $\mathcal{O}\left(\frac{1}{{T^{1/4}_{\mathrm{Tot}}}}\right)$.
 This entails  using our estimates $\hat{c}_n$ and $\hat{\Omega}_{jn}$  (for $j\neq n$), which have  MSE
$ \mathcal{O}\left(\frac{1}{{T_{\mathrm{tot}}}}\right)$, and modifying them as 
 $\hat{c}_n \mapsto \hat{c}_n+\frac{1}{T^{1/4}_{\mathrm{Tot}}},      \hat{\Omega}_{jn} \mapsto  \hat{\Omega}_{jn}+\frac{1}{T^{1/4}_{\mathrm{Tot}}}$. The upshot is that  $\hat{c}_n$ and $\hat{\Omega}_{jn}$ will have
 MSE 
  $\mathcal{O}\left(  \frac{1}{    \sqrt{   T_{\mathrm{tot}}   }    }\right)$ and the couplings  \eqref{tr} and \eqref{tr2} will be small and of the order $\mathcal{O}\left(\frac{1}{{T^{1/4}_{\mathrm{Tot}}}}\right)$.

Now, consider the transfer function of the resultant system given by \eqref{fedr2}, which is
\[\Xi_{\theta}=\frac{\mathrm{Det}\left(-i\omega+i\Omega-\frac{1}{2}C^{\dag}C\right)}{\mathrm{Det}\left(-i\omega+i\Omega+\frac{1}{2}C^{\dag}C\right)}:=  \frac{\mathrm{Det}({A})}  {\mathrm{Det}({B})}       .\] This has derivative:
\[\frac{d\Xi_{\theta}}{d\theta}=\frac{   \mathrm{Det}({A})' \mathrm{Det}({B})  -\mathrm{Det}({A})\mathrm{Det}({B})'   }{     \mathrm{Det}({B})^2}.\]
Observe that   $\mathrm{Det}\left(-i\hat{\omega}+i\Omega\pm\frac{1}{2}C^{\dag}C\right)=\mathrm{const}\times\frac{1}{\sqrt{T_{\mathrm{tot}}}}$, where $\hat{\omega}$ is the estimator of $\Omega_2$ in Eq. \eqref{fedr2}, which has MSE $\mathcal{O}\left(\frac{1}{{T_{\mathrm{tot}}}}\right)$ from step 1 (see remark \ref{hod1e}). This
 can be seen by writing 
\begin{align}
\nonumber  \mathrm{Det}\left(-i\omega+i\Omega\pm\frac{1}{2}C^{\dag}C\right)&= \left( -i\omega+i\Omega_2\pm\frac{1}{2}\delta_1^{\dag}\delta_1\right)\times
\\
\label{jesus1}&\mathrm{Det}\left(-i\omega+i\Omega_1\pm\frac{1}{2}c^{\dag}c-\frac{\left(i\delta_2\pm\frac{1}{2}c^{\dag}\delta_1\right)\left(i\delta_2^{\dag}\pm\frac{1}{2}\delta^{\dag}_1 c\right)}{ -i \omega+i\Omega_2\pm\frac{1}{2}\delta_1^{\dag}\delta_1}   \right).       
 \end{align}
Therefore it follows that $\left|\frac{d\Xi_{\theta}}{d\theta}(-i\hat{\omega})\right|^2=\mathcal{O}\left(T_{\mathrm{tot}}\right)$\footnote{Provided that $\Omega_2$ depends on $\theta$.}, as in step 2 of the adaptive procedure. Step 3 can be developed  similarly. 

\begin{remark}\label{hod1e}
As a result of the feedback there is a second term contributing to the Hamiltonian. That is, we have $\Omega_2=\Omega_{nn}+\frac{1}{2i}\left(\overline{d}_nc_n-\overline{c}_nd_n\right)$. Therefore, our  estimator should be chosen as $\hat{\Omega}_{nn}+\frac{1}{2i}\left(\overline{d}_n\hat{c}_n-\overline{\hat{c}}_nd_n\right)$, rather than simply $\hat{\Omega}_{nn}$. As $c_n$ and $\Omega_{nn}$ can be estimated with a MSE $\mathcal{O}\left(\frac{1}{T_{\mathrm{tot}}}\right)$ from step 1, therefore      the term $\left( -i\omega+i\Omega_2\pm\frac{1}{2}|\delta_1|^2\right)$ will be of order $\mathcal{O}\left(\frac{1}{\sqrt{ T_{\mathrm{tot}}  }}\right)$, as required. 
\end{remark}


\begin{remark}\label{aaron3}
This method of feedback can also be used to implement the toy example in Sec. \ref{toy}. In that case the extra channel is used to modify the coupling only.
\end{remark}

\subsubsection{MIMO}

The above theory also holds in the case of MIMO systems, but there are a few subtleties to be aware of.

 Consider a  PQLS $(C, \Omega)$ coupled to $m$ fields. The coupling matrix to the $n$th mode is now a column vector of size $m$, so there are now $m$ direct coupling parameters and $n-1$ indirect coupling parameters to control. To account for the increase in direct parameters we  use $m$ control channels (rather than one).

Using the feedback reduction rule \cite{Gough5}, the  resultant model following feedback is given by 
\[\left(   C+D, \Omega+\frac{1}{2i}\left(D^{\dag}C-C^{\dag}D\right)\right),\]
where $D$ is an $m\times n$ matrix of control parameters. This is identical to the SISO case except that the number of rows in $D$ has increased to  $m$.
The goal, as before,  is to make the (direct and indirect) coupling to the $n^{\mathrm{th}}$ mode small. 
Choosing the feedback parameters to be 
\begin{align}\label{est1}
{d}_{jn}=-\hat{c}_{jn}& \,\,\,\mathrm{for}\,\,\, 1\leq j\leq m\\
\label{earth}\overline{d}_{1j}=\frac{\sum_{k=1}^m\hat{\overline{c}}_{kj}\left(-\hat{c}_{kn}\right)-2i\hat{\Omega}_{jn}}{\hat{c}_{1n}} & \,\,\,\mathrm{for}\,\,\, 1\leq j\leq n-1,
\end{align}
it follows that the direct coupling to the weak mode will be given by 
\begin{align}\label{keen1}
c_{jn}-\hat{c}_{jn} \,\,\,\mathrm{for}\,\,\, 1\leq j\leq m
\end{align} and the indirect coupling is given by 
\begin{align}\label{keen2}
\Omega_{jn}-\hat{\Omega}_{jn}\frac{c_{1n}}{\hat{c}_{1n}}+\frac{1}{2i}\left(c_{1n}\left(\overline{c}_{1j}-\hat{\overline{c}}_{1j}\right)+\overline{c}_{1j}\left(\hat{c}_{1n}-c_{1n}\right)    
+\sum_{k=2}^m\hat{c}_{kn}\left(\overline{c}_{kj}-    \hat{\overline{c}}_{kj}\frac{c_{1n}}{\hat{c}_{1n}}\right)   \right).
\end{align}
The adaptive procedure is then similar to the SISO case (see Appendix \ref{HUDE12}).

\subsubsection{Physical Meaning of this Feedback Setup}
One question that remains is to understand what it means physically to add an extra control channel? Consider the simplest possible physical example of a linear system, which is an optical cavity. It consists of two  mirrors; one that is partially transmitting and one that is perfectly reflecting. Between these mirrors a trapped electromagnetic mode is set up, whose frequency depends on the separation of the mirrors \cite{Gough3}. This can be characterised in the SLH model by a coupling parameter corresponding to the reflectivity of the mirror and a Hamiltonian representing the Hamiltonian of the internal cavity system. This is nothing more than a damped QHO \cite{Gardner1}. Now, replacing the reflective mirror with a second
 partially transmitting mirror  where  we are able to control it's reflectivity, corresponds mathematically to adding an extra (control) channel. It is expected that a similar procedure would work for more than one mode.

\subsection{Other Methods for Synthesising DFSs}
Other  methods have been given for synthesising  DFSs. For instance, in \cite{Pan1} they consider using Hamiltonian control or system-environment coupling control. Further DFSs are considered in detail in \cite{Yamamoto5} and in particular the use of similar coherent feedback methods to ours for driving the system to a DFS; it is shown that for a given QLS there exists a coherent feedback controller achieving the task.

\subsection{Measurement}

Consider a SISO PQLS, we will now show that a homodyne measurement enables one to realise the QFI level of scaling in steps 2 and 3 of our adaptive procedure (homodyne measurements also work  for step 1  \cite{Guta2}). 
Let us look at step 2 (step 3 is similar). Working in the time domain,   our     coherent input of frequency $\hat{\Omega}_1$ has time-dependent amplitude $\alpha(t)=\alpha e^{-i\hat{\Omega}_1 t}$, where $\alpha$ is a constant. It follows from Eq. \eqref{langevin3} and \eqref{lan2} that the output in the time-domain is a coherent state with amplitude 
\begin{equation}\label{mediate}
\alpha_{\mathrm{out}}(t):=\Xi(-i\hat{\Omega}_1)\alpha(t)+Ce^{At}\alpha_0-Ce^{At}(A+i\hat{\Omega}_1)^{-1}C^{\dag}\alpha.
\end{equation}
Here $\alpha_0$ is the initial state of the cavity. Consider  performing a homodyne measurement on the integrated mode 
$$\frac{1}{\sqrt{T_{\mathrm{tot}}-T_1}}\int^{T_{\mathrm{tot}}}_{T_1}e^{i\hat{\Omega}_1t}\mathbf{b}^{\mathrm{out}}(t)dt,$$  which is equivalent to a homodyne measurement of the $\hat{\Omega}_1$ frequency mode.  On this mode there is a coherent state with amplitude $\frac{1}{\sqrt{T_{\mathrm{tot}}-T_1}}\int^{T_{\mathrm{tot}}}_{T_1}e^{i\hat{\Omega}_1t}\alpha_{\mathrm{out}}(t)dt$. Now choosing  $T_1$ to be $T_{\mathrm{tot}}^{1-\frac{\epsilon}{2}}$, then the second and third terms in \eqref{mediate} will offer little contribution to the integral. The meaning of this is that the system  has stabilised and, as a result, the output is 
\begin{equation}\label{Diego1}
\frac{1}{\sqrt{T_{\mathrm{tot}}-T_1}}\int^{T_{\mathrm{tot}}}_{T_{\mathrm{tot}}^{1-\frac{\epsilon}{2}}}\alpha_{\mathrm{out}}(t)dt=\Xi(-i\hat{\Omega}_1)\alpha\sqrt{   T_{\mathrm{tot}}-T_{\mathrm{tot}}^{1-\frac{    \epsilon}{2}    }    }\approx \Xi(-i\hat{\Omega}_1)\alpha\sqrt{T_{\mathrm{tot}}}.
\end{equation}
The action of the system on this mode is thus a unitary rotation, so that homodyne detection would be suitable \cite{Wiseman3}. Note that we are monitoring the system over the interval $[T_1, T_{\mathrm{tot}}]$ rather than the entire time interval $[0,T_{\mathrm{tot}}]$. Therefore, we are not using all available information, which begs the question whether it may  be possible to do even better; we discuss this shortly. 
Finally, note that homodyne measurements can also be  used in the  MIMO case, although we do not discuss this here.

Let us now understand why decoherence-free subsystems lead to enhanced scaling. For this we consider the simplest possible model, that is, the cavity setup from Sec. \ref{indo}. Let us simplify things further by considering the non-adaptive procedure, which we saw in Sec. \ref{toy}. That is, 
consider a coherent probe with frequency equal to the one unknown parameter $\Omega$ and choose the coupling as  $|c|^2=T_{\mathrm{tot}}^{-(1-\epsilon)}$ (so that the system has (approximately) stabilised  for $t\in[T_{\mathrm{tot}}^{1-\frac{\epsilon}{2}},T_{\mathrm{tot}}]$). For time  $t\in[T_{\mathrm{tot}}^{1-\frac{\epsilon}{2}},T_{\mathrm{tot}}]$, the state of the system is a coherent state with mean amplitude 
\begin{align}
\alpha_{\mathrm{sys}}(t)\approx-2\frac{\alpha}{c}e^{-i\Omega t},
\end{align}
which has been obtained by using Eq. \eqref{lan2}. Here 
 $\alpha$ is the amplitude of the input. 
Therefore by decreasing  $c$ we push the poles of the transfer function closer to the origin, essentially destabilising the system.  Moreover,  as the coupling constant goes to zero we create states of large amplitude in the system. As a result, these increasing amplitude states in the system oscillate for a longer and longer time, so that better information from the signal may be deduced. 
Furthermore, by writing the transfer function as $\Xi(-i\omega)=e^{i\phi(\Omega)}$ where $\phi(\Omega)=\pi-2\mathrm{arctan}\left(\frac{2(\Omega-\omega)}{|c|^2}\right)$ and performing a Taylor expansion with respect to $\Omega$ and setting  $\omega=\Omega$, we   have $\phi\approx\pi-4\Omega T_{\mathrm{tot}}^{1-\epsilon}$. Therefore, according to \eqref{Diego1} the output over the interval $[T_{\mathrm{tot}}^{1-\frac{\epsilon}{2}},T_{\mathrm{tot}}]$ is a coherent state with amplitude $(\alpha e^{i\pi})\sqrt{T_{\mathrm{tot}}}e^{-4i\Omega T_{\mathrm{tot}}^{1-\epsilon}}=(Ee^{i\pi})e^{-4i\Omega T_{\mathrm{tot}}^{1-\epsilon}}$. Hence the phase of the output is very sensitive to changes in $\Omega$, just like the   direct metrology example from Sec. \ref{DIREC}.

\subsection{Discussion}\label{dico}

Recall our measurement strategy in the previous subsection, where we neglected the outcome of the experiment over the time interval $[0, T_{\mathrm{tot}}^{1-\epsilon/2}]$. Can we improve our precision by considering this interval within the measurement too? More generally, what is the optimum time to run the experiment for?

Consider again estimating the Hamiltonian parameter in a quantum cavity. Let us calculate the QFI over the entire time interval $[0, T_{\mathrm{tot}}]$.  Assuming for simplicity that the system initially starts in vacuum and using \eqref{mediate}, the output state at time $t$ is a coherent state with amplitude 
\begin{equation}\label{gone}
\alpha_{out}(t,\omega)=(\Xi(-i\omega)\alpha e^{i\omega t}-e^{At}|c|^2(A+i\omega)^{-1})\alpha,
\end{equation}
 where $\alpha e^{-i\omega t}$ is the amplitude of the coherent input with frequency $\omega$.
Now, combining \eqref{gone} with the result in Appendix \ref{HUDE} and setting $\omega=\Omega$, 
it follows that  the QFI over $[0, T_{\mathrm{tot}}]$  is 
\begin{equation}\label{idle}
F(\theta)=\int^T_0\left.\left|  \frac{d\alpha_{out}(t, \omega)}{d\Omega}\right|^2\right|_{\omega=\Omega}dt.
\end{equation}
Let us suppose that $|c|^2=\frac{k}{T_{\mathrm{tot}}}$ for some positive $k$. Calculating the QFI \eqref{idle} as a function of $k$ gives  
\begin{equation}\label{idle2}
F(\theta)=64|\alpha|^2T_{\mathrm{tot}}^3\frac{1}{k^2}\left[  1+\frac{1}{k}    \left(e^{-k}(-\frac{1}{4}k^2-\frac{3}{2}k-2.5)-2e^{-\frac{1}{2}k}(-k-4)-\frac{11}{2}\right)              \right].
\end{equation}
A plot of this function is given in  Fig. \ref{Plot1}. Firstly, observe that the choice $k=1$ is optimal and gives a value $F(\theta)\approx93.26|\alpha|^2T_{\mathrm{tot}}^3=93.26ET_{\mathrm{tot}}^2$, which corresponds to an experiment length equal to  the reciprocal of the spectral gap.  Therefore there is an improvement by a factor of $T_{\mathrm{tot}}^{\epsilon}$ over our method.  Can we take advantage of this for more general PQLSs? That is, can we adapt our measurement and adaptive strategies? The problem with this is that when $T_{\mathrm{tot}}=\frac{1}{|c|^2}$, the output state has another contribution (which was not present in our calculations) due to the fact that system has not stabilised (see Eq. \eqref{gone}). Therefore, the input-output map will no longer be a unitary rotation  so it's not clear how to perform the measurement in practice. That is, will Homoodyne measurement still work or do we need to consider  more general general Gaussian measurements, such as Heterodyne \cite{Gardner1},  to realise this level of precision?  This problem is something to consider in  future works. Now,  the QFI in \eqref{idle2} decreases rather quickly with $k$   until 
$$F(\theta)\approx 64ET^2_{\mathrm{tot}}\left(\frac{1}{k^2}-\frac{11}{2}\frac{1}{k^3}\right).$$
In particularly the choice $k=T^{\epsilon}$ recovers the work from the earlier subsections. 

\begin{figure}
\centering
\includegraphics[scale=0.25]{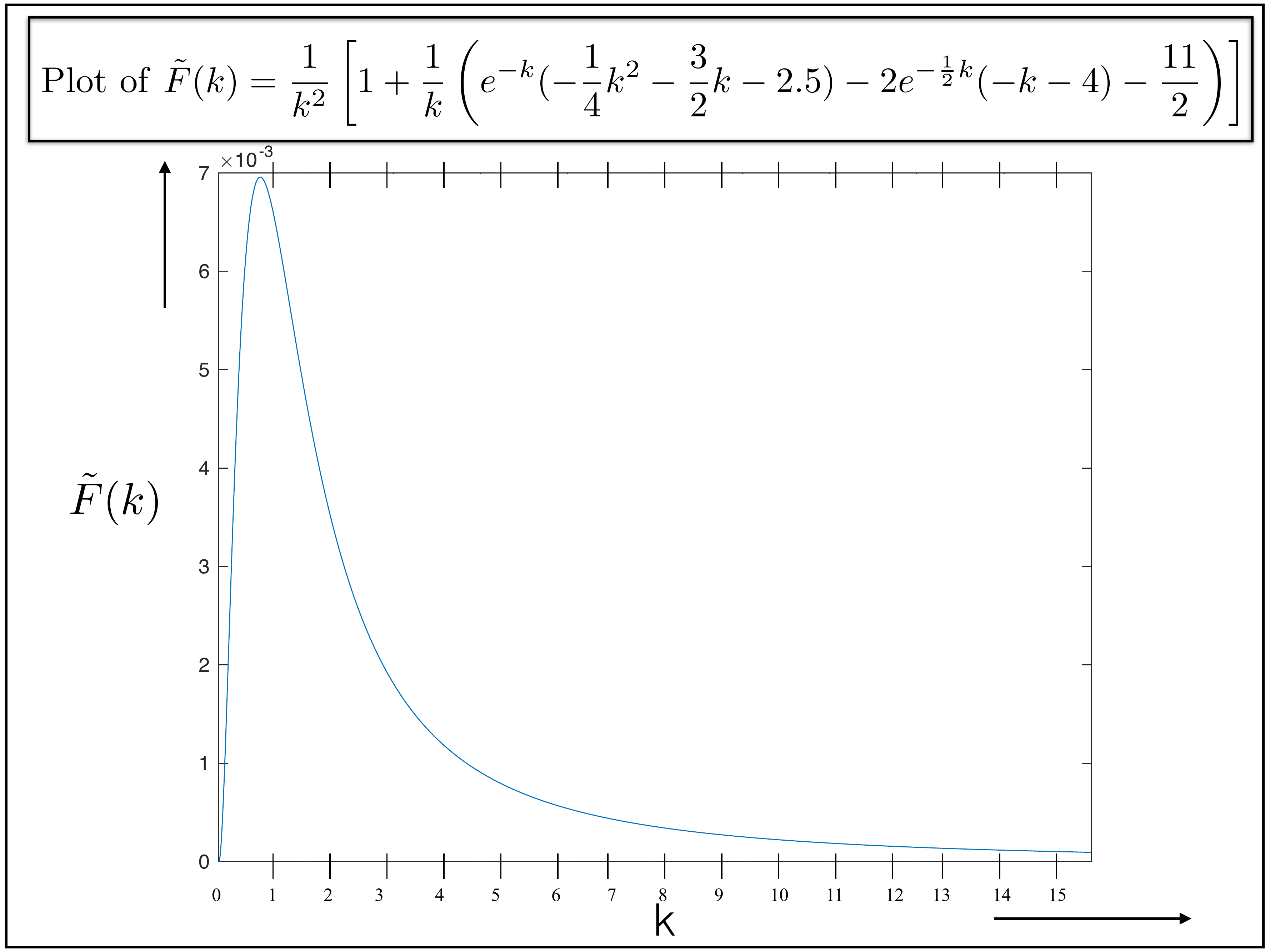}
\caption{This figure shows a plot of the QFI as a function of (rescaled) experimental time for the estimation example in Section \ref{dico}.\label{Plot1}}
\end{figure}

\subsection{Non-Linear DFS Example}\label{nonny}
The use of decoherence-free subsystems (DFSs) as a resource for parameter estimation has been seen recently in a non-linear context \cite{Kasia2}. Let us discuss a very simple example of this. An open quantum system evolves according to the system and field unitary $\mathbf{U}(t)$, which is identical to the one in eq. \eqref{eq.unitary.cocycle}. However, the difference with the linear setup is that the system is no longer constrained to be a QHO (note that the field is generally assumed to be a Bosonic bath as in the linear setup). Suppose that we have a  is a 2-level system with levels $\ket{0}$ and $\ket{1}$. Let the system Hamiltonian be $\mathbf{H}=\theta\ket{0}\bra{0}$ and suppose we have two Bosonic channels with respective coupling operators $\mathbf{L}_1=c\ket{1}\bra{0}$ and $\mathbf{L}_2=c\ket{0}\bra{1}$
 with $c\in\mathbb{R}$. We assume that the field initially starts in vacuum and the initial state of the system is $\rho_{\mathrm{sys}}=\ket{0}\bra{0}$. For more information about the non-linear setup see for example \cite{Rivas1}.

Now, at some time $t_1$ the system state will change from $\ket{0}$ to $\ket{1}$ and simultaneously a photon will be emitted into channel 1. Next, the system will change from $\ket{1}$ back to $\ket{0}$ and simultaneously a photon will be emitted into channel 2. This process will repeat indefinitely. We can now fairly straightforwardly write down the joint system-field state as a superposition of all such events. Assuming that the   experiment is run for  time $T_{\mathrm{tot}}$ equal to the inverse of the spectral gap (i.e $1/c^2$), we have 
\begin{align}
\nonumber\ket{\Phi}^{\mathrm{sys}\otimes\mathrm{field}} &=\ket{0}\otimes  e^{-i\theta T_{\mathrm{tot}}-\frac{1}{2}}  \ket{\mathrm{vac}}\ket{\mathrm{vac}}
\\&\nonumber+\ket{0}\otimes
\sum^{\infty}_{k=1}\frac{1}{T_{\mathrm{tot}}^{k}}e^{-\frac{1}{2}}\int_{\mathbb{R}^{2k}}e^{-i\theta(t_1-t_2+t_3-...+t_{2k-1})}\ket{t_1,t_3,...t_{2k-1}}\ket{t_2,t_4,...t_{2k}}\\
\label{smell}&+\ket{1}\otimes \sum^{\infty}_{k=1}\frac{1}{T_{\mathrm{tot}}^{k-\frac{1}{2}}}e^{-\frac{1}{2}}\int_{\mathbb{R}^{2k-1}}e^{-i\theta(t_1-t_2+t_3-...+t_{2k-1})}\ket{t_1,t_3,...t_{2k-1}}\ket{t_2,t_4,...t_{2k-2}},
\end{align}
where $\ket{t_1,t_3,...t_{2k-1}}$ represents the (unnormalised) state of channel 1, with the $t_i$s indicating the photon emission times,  (similarly for $\ket{t_2,t_4,...t_{2k}}$) and $\ket{\mathrm{vac}}$ indicates vacuum in the field. Note that the integrals here are taken over the times $t_i$. Observe that the state has acquired a phase for the times when the system was in state $\ket{0}$.
The factor $e^{-\frac{1}{2}}$   is the normalisation factor. The expression \eqref{smell} is called a \textit{Dyson series expansion} \cite{Dyson1}. 
The result of measuring the two output channels will be a series of clicks (see Fig. \ref{click1}) obeying a Poissonian distribution with unit mean and variance. 

Suppose that we would like to estimate $\theta$. Depending on parity of the output state the internal system state will be known. Let us calculate the QFI given an even number of photons have been detected. The even output state is given by 
$$\ket{\Phi}^{\mathrm{even}}=\frac{1}{\sqrt{p^{\mathrm{even}}}}\sum^{\infty}_{k=1}\frac{1}{T_{\mathrm{tot}}^{k}}e^{-\frac{1}{2}}\int_{\mathbb{R}^{2k}}e^{-i\theta(t_1-t_2+t_3-...+t_{2k-1})}\ket{t_1,t_3,...t_{2k-1}}\ket{t_2,t_4,...t_{2k}},$$
where $p^{even}$ is the probability that there is a non-zero even number of photons emitted
(we have neglected the case of zero photon detections). 
Now since this state is pure we can use formula \eqref{qfiformula} to compute the QFI.
After some straightforward algebra we have
\begin{equation}
F^{\mathrm{even}}(\theta)=4T_{\mathrm{tot}}^2\left[\sum_{k=1}^{\infty}e^{-1}\frac{k(k+1)}{(2k+2)!}- \left( \sum_{k=1}^{\infty}e^{-1}\frac{k(2k)}{(2k+2)!}\right)^2\right].
\end{equation}
In particular the QFI scales quadratically with time, just as in the linear case. 

Suppose now that we know that two photons have been detected (one in each channel). The (normalised) two-photon (2p-)output state is therefore given by
$$\ket{\Phi}^{\mathrm{2p}}=\sqrt{\frac{2}{T_{\mathrm{tot}}^2}}\int_{\mathbb{R}^{2}}e^{-i\theta t_1}\ket{t_1}\ket{t_2}.$$
Notice that  the photon detection time of the photon in the second channel (C2) doesn't contain any  information about $\theta$. Therefore, given a photon detection at time $t_2$ the output state on channel one (C1) is thus
$$\ket{\Phi}^{\mathrm{C1}}=\sqrt{\frac{1}{t_2}}\int_0^{t_2}e^{-i\theta t_1}\ket{t_1}\mathrm{d}t_1.$$

We now see that a frequency measurement on C1 enables one to achieve the $T^2$ scaling. To see this we calculate the CFI for this measurement. Since the  probability density function (pdf) for detection at frequency $\omega$ is given by 
$$p(\omega)=\frac{1}{2\pi t_2}\left[\frac{1-\mathrm{cos}((\omega-\theta)t_2)}{(\omega-\theta)^2}\right].$$
it follows (by using \eqref{cod}) that  the CFI  is given by 
$$I^{\mathrm{C1}}(\theta|t_2)=\frac{t_2^2}{6}.$$
Finally, to obtain the full Fisher Information for the measurement we must average over all detection times in C2, corresponding to the photon measurement there, and then weight the total with the probability of obtaining two photons. Hence  
$$I^{\mathrm{C1}}(\theta)=\frac{e^{-1}T_{\mathrm{tot}}^2}{36}.$$

On the other hand   it turns out  that $F^{\mathrm{C1}}(\theta)=\frac{e^{-1}T_{\mathrm{tot}}^2}{9}$, so this measurement is only a factor of 4 worse than the optimal one 
under the assumptions here.
 Moreover, much of the information in the field is contained in the first photon, which is evidenced by the fact that  $F^{\mathrm{C1}}(\theta)/F^{\mathrm{even}}(\theta)\approx 31\%$. Therefore, our measurement choice here seems to be very good and so there is not 
 much to be gained from measuring the other photons\footnote{However, it is possible to compute the CFI for this case because there is a simplification where   for a given number of emissions the distribution of each photon is iid. We don't do this here though.}.

It was also  shown in \cite{Kasia2} that  at a  first-order DPT  the  QFI of a general non-linear system  become quadratic in time.   However, in that context it remains an open problem how to exploit the large QFI scaling (i.e find a physical measurement). 

\begin{figure}
\centering
\includegraphics[scale=0.19]{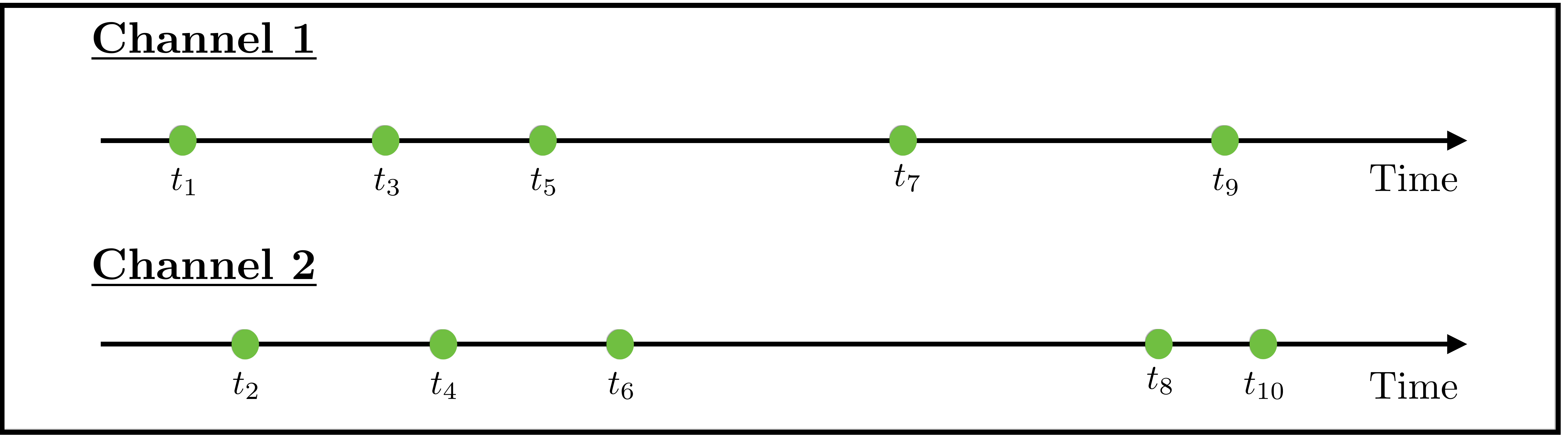}
\caption{Timeline of photon detections is two output channels for example in Sec. \ref{nonny}.
\label{click1}}
\end{figure}


\subsection{Noise}
We briefly discuss an example, which attempts to understand how noise affects the theory in this chapter. 

Consider the quantum cavity from example \ref{probes}, where there is one accessible channel and one noise channel (see Sec. \ref{noisek}). Suppose that we are to estimate $\Omega$ from the accessible channel.

Suppose that, as above, we use a monochromatic coherent input state with time dependent amplitude $\alpha(t)=\alpha e^{i\omega t}$ (i.e a  frequency $\omega$ coherent state of  amplitude $\alpha\sqrt{T_{\mathrm{tot}}}$). Then provided that  
$T_{\mathrm{tot}}\gg\left(|c_1|^2 +|c_2|^2\right)$ the output will be a coherent state of frequency $\omega$ and amplitude
$$  \left(1-\frac{|c_1|^2}{-i\omega+i\Omega+\frac{1}{2}\left(|c_1|^2 +|c_2|^2\right)}\right) \alpha \sqrt{T_{\mathrm{tot}}}.$$
Now, we can use the result of Appendix \ref{HUDE} to calculate the QFI:
$$F(\Omega)=E\frac{|c|^4}{\left((\omega-\Omega)^2+\frac{1}{4}(|c|_1^2+|c_2|^2)^2\right)^2},$$
where $E$ is the energy of the input given by $E=|\alpha|^2T_{\mathrm{tot}}$. 
Notice the similarities with the expression for the noiseless case in Eq. \ref{fedy8}.
Therefore, the results in this noiseless case are valid so long as $c_2$ is not too big (i.e so that $T_{\mathrm{tot}}\gg\left(|c_1|^2 +|c_2|^2\right)$) so that  the system is able to stabilise. 
It is expected that this conclusion will be true for arbitrary noisy PQLSs too. Therefore the difficulty will be in designing a feedback method to reduce the   coupling in the noisy channel without having access to it, which would probably need further assumptions on the accessibility of this channel. 

An interesting question to consider is in what circumstances it is possible for noisy PQLS (or more generally QLSs) to achieve Heisenberg scaling. From the above  it seems that in order to get Heisenberg scaling we would need a spectral gap of order $\mathcal{O}\left({T}^{-1}_{\mathrm{tot}}\right)$. This might be true only if the system can be separated as two subsystems, such that one is coupled with the accessible channel  and the second with the noise channel. Then the noisy channel would not affect the dynamics of the first subsystem and we can estimate parameters within to Heisenberg level.  On the other hand, perhaps Heisenberg scaling is possible for a more general class of QLS, such as those  with noise unobservable subspaces (see Sec. \ref{Noisek4})?

\section{Feedback Method for QLSs; Stationary Approach}\label{dfsw}

Decoherence-free subsystems are also  advantageous for estimation in the stationary approach. We shall study this here for SISO PQLSs (we  revisit this in the following chapter for general SISO QLSs).
Assume, by the feedback methods in Sec. \ref{piles} or otherwise, that we have a SISO PQLS with (at least) one eigenvalue that is $\mathcal{O}(\tau^{-1})$ from the imaginary axis, where $\tau=T_{\mathrm{tot}}^{1-\epsilon}$ with $T_{\mathrm{tot}}$ large enough so that the system has reached stationarity. 

\subsection{QFI Scaling}\label{poke11}

Let us show that there is an enhancement in the QFI in terms of    time resources, $T_{\mathrm{tot}}$, in the stationary approach. Recall that the input to our system is a series of iid Gaussian stationary quantum noise field processes characterised by the covariance, $V(N,M)$ 
(see Eq. \eqref{Ito}). Therefore, the output will also be series of Gaussian stationary quantum noise field processes with mean zero and covariance $\mathbb{E}\left[\breve{\mathbf{b}}_{\mathrm{out}}(t)\breve{\mathbf{b}}^{\dag}_{\mathrm{out}}(s)\right]=V_{\theta}(t,s)$.

To calculate the QFI 
 it is simpler to work in the frequency domain because all frequency modes are independent and therefore the action of the system is a series of rotations on the  squeezed input states. The QFI \textbf{per unit time} from frequency $\omega$ is    
(see Sec. \ref{ounce1})
\begin{align}
\nonumber f_\theta(\omega)&=-\mathrm{Tr}\left(J\dot{\Psi}(\omega)J\dot{\Psi}(\omega)\right)\\
\label{tease}&=|M|^2\left|\frac{d\Xi(-i\omega)\Xi(+i\omega)}{d\theta}\right|^2
\end{align}
assuming ${T}_{\mathrm{tot}}>>\tau$. 
By writing the transfer function in terms of its poles as in Eq. \eqref{qpi} where $-\mathrm{Re}(z_1)=\mathcal{O}(\tau^{-1})$,
 it follows that 
\begin{equation}\label{tease2}
\frac{d\Xi(-i\omega)}{d\theta}=
\begin{cases}
  \mathcal{O}(\tau)     & \quad \text{if } |\omega-\mathrm{Im}(z_1)|\sim\frac{1}{\tau}\\
    \mathcal{O}\left(\frac{1}{\tau(\omega-  \mathrm{Im}(z_1))^2      }\right)  & \quad \text{otherwise}\\
  \end{cases}
\end{equation}
Now the QFI per unit time is obtained by integrating \eqref{tease} over $\omega$.
Using \eqref{tease2} it follows that that the only significant contributions to the integral come from the intervals $|\omega-\mathrm{Im}(z_1)|\sim\frac{1}{\tau}$ and $|\omega+\mathrm{Im}(z_1)|\sim\frac{1}{\tau}$. Therefore $f_{\theta}\propto \tau^2\times\frac{1}{\tau}=\tau$ (see Fig. \ref{picf}).  Hence the QFI scales as $\mathcal{O}(T_{\mathrm{tot}}\tau).$ This suggests that DFSs are advantageous in the stationary regime too. 
 Note that unlike the time-domain approach (Sec. \ref{piles}), no adaptive  strategy is necessary.

\begin{figure}
\centering
\includegraphics[scale=0.15]{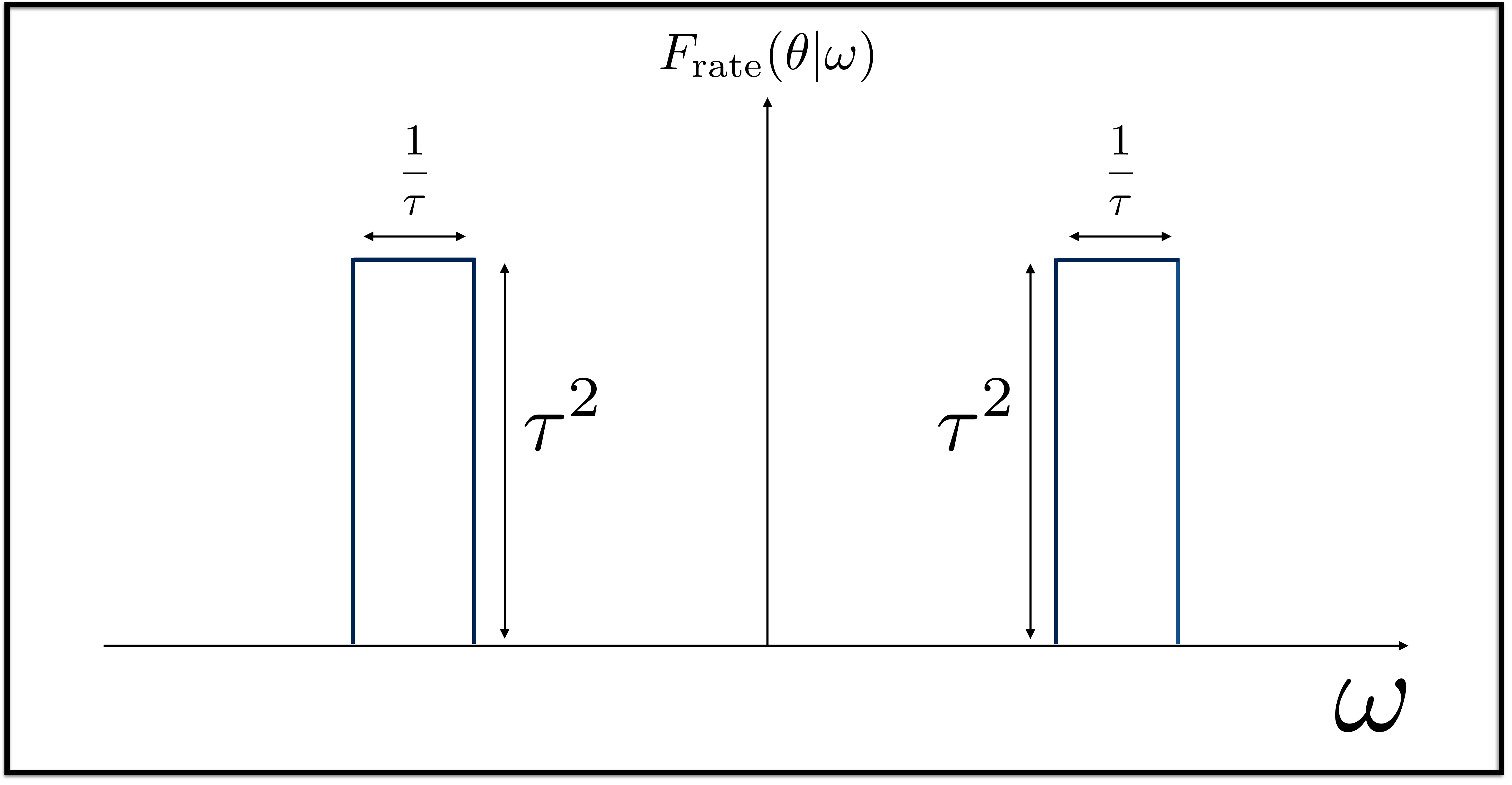}
\caption{There are two frequency bands contributing to the QFI.
\label{picf}}
\end{figure}

\begin{exmp}\label{fug}
Let us calculate the QFI explicitly for estimating $\Omega$ in  the cavity example from   Sec. \ref{indo}. The spectral gap is given by $|c|^2$. As 
$$\frac{d\Xi(-i\omega)}{d\Omega}=\frac{i|c|^2}{i(\Omega-\omega)+\frac{1}{2}|c|^2},$$
then 
\begin{align*}
f_{\Omega}(\omega)&=|M|^2|c|^4\left(\frac{1}{\left((\Omega-\omega)^2+\frac{1}{4}|c|^4\right)^2}+\frac{1}{\left((\Omega+\omega)^2+\frac{1}{4}|c|^4\right)^2}
\right.\\
&+\left.2\frac{1}{   (\Omega-\omega)^2+\frac{1}{4}|c|^4}\frac{1}{(\Omega+\omega)^2+\frac{1}{4}|c|^4}          \right)\\
&\geq |M|^2|c|^4\left(\frac{1}{\left((\Omega-\omega)^2+\frac{1}{4}|c|^4\right)^2}+\frac{1}{\left((\Omega+\omega)^2+\frac{1}{4}|c|^4\right)^2}\right).
\end{align*}
Therefore, by using Eq. \eqref{gun2}, 
$$f_{\Omega}\geq\frac{1}{\pi}\int^{\infty}_{-\infty}  |M|^2|c|^4  \frac{1}{\left(\omega^2+\frac{1}{4}|c|^4\right)^2}d\omega.$$
Finally, using the substitution $\omega=\frac{1}{2}\mathrm{tan}(\theta)$, or otherwise, one obtains $f_{\Omega}\geq\frac{4 |M|^2}{|c|^2}=4 |M|^2\tau$.
\end{exmp}

The next step is to find a measurement enabling this heightened scaling. We shall return to  this problem in the following chapter with the help of \textit{coherent quantum absorbers}.

\section{Conclusion}
In this chapter we have seen that when the system is almost dynamically unstable, i.e., the poles of the transfer function are close to the imaginary axis the sensitivity is increased to a quadratic scaling with time if the total time is of the order of the correlation time. We gave two feedback schemes designed to exploit this enhancement. The common theme in both of these was that the poles of the transfer function were driven to the imaginary axis. The first feedback scheme used a beam-splitter with a large reflectivity to weaken the coupling of the field to the system. The second method used a coherent feedback scheme and an addition of a second (set of) control channel(s) connected in a feedback loop. We explained that the meaning of adding a second control channel was to add a set of reflective mirrors in the cavity that one is  able to control. In all honesty this second scheme could potentially still be very challenging experimentally, as it would require some access to the system. Nevertheless it   is still interesting mathematically. 

In the regime of time-dependent inputs we developed an adaptive procedure to realise the $T^2$ level of scaling. The key step was that the frequency of the input is chosen as an estimator from the previous step and one mode (or the entire system) is destabilised step by step. Note that we worked with coherent states here, hence the scaling with with energy is linear. It is expected that all results could be enhanced by a further factor of $E$, for example by using squeezed-coherent states, provided that a suitable measurement and estimator exists achieving the scaling. 

We also considered the case of stationary inputs and showed that the same enhancement was possible.


\section{Outlook}

There are various directions to extend this work. Firstly, preliminary investigations indicate that these methods would also work for multiple parameters by isolating more modes in the feedback method in Sec. \ref{iles}. Another direction is  to the problem from Sec. \ref{piles} for the case of time-dependent inputs to active QLSs.  It is our expectation that the  ideas would transfer over directly. It would be an interesting problem for the experimental community to confirm the ideas here, i.e., that  Heisenberg scaling is obtainable in practice from an experiment.

It remains an open problem to investigate the general parameter estimation problem (i.e. beyond this quasi-DFS setup) in the stationary inputs approach. That is, what are the optimal inputs and measurements to use with either time or energy (or both) as a resource for quantum metrology. This problem would be particularly interesting in the MIMO case.

\chapter{Quantum Absorbers}\label{DUAL}

Environment (or reservoir) engineering has increased in popularity in recent years \cite{Plenio1, Stan1, Yamamoto2, Yamamoto3}. This innovative technique is a way of designing  a system's master equation in order to drive the system into a desired pure stationary state using dissipative field dynamics \cite{Metcalf1} or performing continuous measurement on the field. It has prominent applications in laser cooling or optical pumping. This topic has also been studied  in QLS theory. Specifically in \cite{Yamamoto2} necessary and sufficient conditions were given for a general QLS to have a unique pure steady state. This leads to a dissipative procedure enabling one to engineer an arbitrary pure gaussian state.

In this work we consider the following environment engineering problem:   given a QLS our goal is to design a second QLS, called the \textit{dual system} so that radiation emission from the QLS is coherently reabsorbed by the dual. That is, the dual system acts as a coherent quantum absorber for the first.    In other words the stationary state of the combined system is pure and therefore the output is equal to the input. We shall use the terms \textit{dual} and \textit{quantum absorber} synonymously in the following. We show here that for a stable QLS it is always possible to find a stable QLS coherent absorber. One reason why this setup is so interesting is because it provides a natural purification of the stationary state. Purifications with the use of larger Hilbert spaces have found widespread use in quantum information theory, as generally they are much easier to work with than mixed states \cite{Nielsen1}.
Obtaining a purification is well-known to be a hard problem and  was the main reason prompting the  study of the parallel problem to ours for  non-linear quantum systems \cite{Stan1}. In particular it was shown there that  such a quantum absorber always exists for SISO non-linear systems. 

In Sec. \ref{purple6} we solve the environment engineering problem for general (MIMO as well as SISO) QLSs. A corollary of our work is given in Sec. \ref{pight}, which    enables the proof of Theorem \ref{doned}, as promised in Ch. \ref{MIMO1}.

Finally, we discuss an  application of quantum absorbers to quantum estimation, and in particular how to devise  optimal measurements and estimators for the Heisenberg level stationary input setup from Sec. \ref{dfsw}. We begin by revisiting the QFI calculation from Sec. \ref{poke11}, which demonstrated the scaling enhancement when the system destabilises. We recalculate the QFI in the time domain, which simplifies considerably with the help of the dual absorber system. The essential point is that the 
 useful information about the system  is contained only within output correlations  between times $t$ and $s$ such that $|t-s|<\tau$, where $\tau$ is the inverse of the spectral gap. 
We then consider  a homodyne measurement on the output. We recognise that  in post-processing the measurement it is useful to devise estimators placing weight only on these correlation lengths. 
For example, total integrated current is not very useful as the noise and signal contribution will be comparable. That is, although the mean of the estimator will be large, so will the variance and therefore all information about the signal is lost. Using such clever weighted estimators overcomes this problem and ensures that the  variance is not be too large while the mean can still be large (see Sec. \ref{SIDF}). Our estimator has a simple interpretation when viewed in the frequency domain as a frequency band-limited measurement.  Lastly, we consider whether it is possible to devise an optimal strategy realising the QFI  in the simpler case of a  PQLS. That is, we would like to improve the constant in front of the $\tau T_{\mathrm{tot}}$ level of scaling. 
It is for this problem that   the use of quantum absorbers  become particularly useful. 
Generally in a PQLS the action of the system in the frequency domain is to rotate the squeezed input by an angle which depends on the unknown parameter. Therefore in general each frequency will rotate by a different phase.  This is potentially problematic because  the optimal quadrature to measure in each frequency is  different. Our trick to overcome this and  obtain optimality is 
 to proceed adaptively by first obtaining a rough estimate of the system (with suboptimal scaling) and then use a quantum absorber. The effect of the absorber is to make all of the rotations in the frequency domain very small. The upshot is that the best quadrature to measure will be the same for all frequencies. This quadrature is the best measurement and realises the optimal QFI scaling following the results in \cite{Monras2}.

We discuss  other potential applications of quantum absorbers  in Sec. \ref{appos}.

\section{Finding the Dual System}\label{purple6}

\begin{figure}
\centering
\includegraphics[scale=0.25]{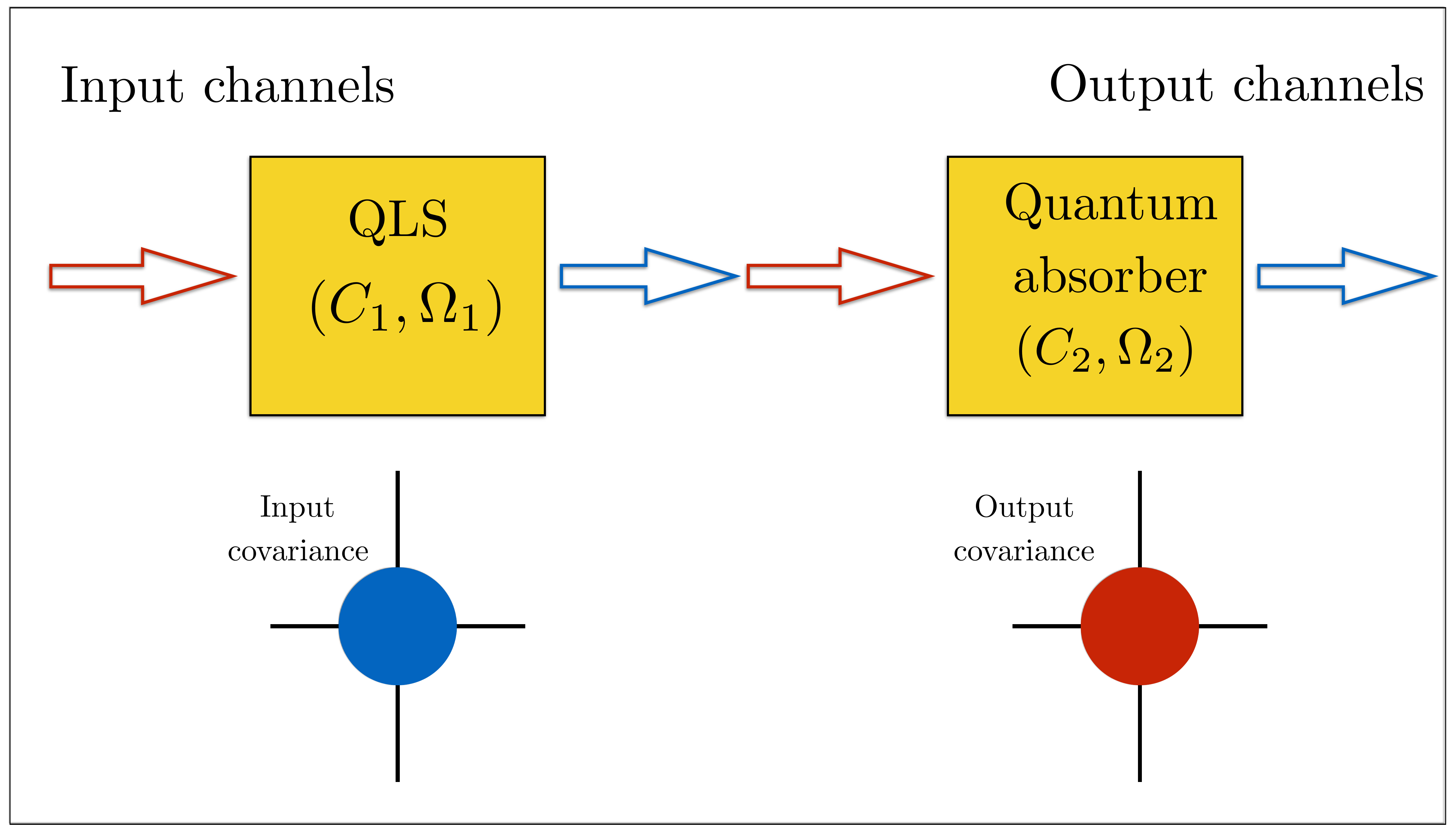}
\caption{QLS $(C_2, A_2)$ is a quantum absorber for QLS $(C_1, A_1)$. The input and output states  are identical  \label{absorber}}
\end{figure}

Consider a globally minimal QLS characterised by $(C_1, A_1)$ for vacuum input, $V_{\mathrm{vac}}$.  Our goal is to find a second system $(C_2, A_2)$ such that the combined system resulting from connecting them in series (see Sec. \ref{greed}) has trivial power spectrum, i.e., $\Psi(s)=V_{\mathrm{vac}}$. Label the system modes of the first system as $\breve{\mathbf{a}}_1$ and the modes of the second system by $\breve{\mathbf{a}}_2$ (see Fig. \ref{absorber}).

The combined system may be represented by\footnote{In this representation the modes are ordered as $\left[\begin{smallmatrix}\breve{\mathrm{a}}_1\\ \breve{\mathrm{a}}_2\end{smallmatrix}\right]$, rather than the doubled-up form $\breve{\mathbf{a}}$ where $\mathbf{a}=\left[\begin{smallmatrix}\mathbf{a}_1\\\mathbf{a}_2\end{smallmatrix}\right]$.}

$$\left(C_{\mathrm{series}}, A_{\mathrm{series}}\right):=\left((C_1, C_2), \left(\begin{smallmatrix}A_1&0\\-C^{\flat}_2C_1&A_2\end{smallmatrix}\right)\right).$$
Let us change basis on our system $(C_1, A_1)$ by applying a symplectic transformation   so that its (reduced) stationary state, $P$, is given by 
$$\left<\breve{\mathbf{a}}_1\breve{\mathbf{a}}_1^{\dag}\right>:=\left(\begin{smallmatrix}\mathrm{Diag}(N_1+1,..., N_n+1)&0\\0&\mathrm{Diag}(N_1,..., N_n)\end{smallmatrix}\right)$$
(recall that $P$ is the solution of the Lyapunov equation for the first system). That is, all modes are independent of each other and are  thermal states (see Sec. \ref{QHO}). Note that  $N_i\neq0$ for all $i$ by global minimality (see Theorem \ref{equivalence}). Such a transformation will alter the matrices $(C_1, A_1)$, but we still denote them $(C_1, A_1)$ so to not convolute the notation.

Now the requirement that   $\left(C_{\mathrm{series}}, A_{\mathrm{series}}\right)$ has a trivial power spectrum is equivalent to that its stationary state be pure (Theorem \ref{equivalence}). In light of this, we consider 
 the  pure extension of $P$ onto the modes $\breve{\mathbf{a}}_2$ given by 
$$\left<\left(\begin{smallmatrix} \breve{\mathbf{a}}_1\\ \breve{\mathbf{a}}_2\end{smallmatrix}\right)
\left(\begin{smallmatrix} \breve{\mathbf{a}}^{\dag}_1& \breve{\mathbf{a}}^{\dag}_2\end{smallmatrix}\right)\right>:=  \left(\begin{smallmatrix}     P&Q\\Q&P\end{smallmatrix}\right),$$  
where 
$$P=\left( 
 \begin{smallmatrix}\mathrm{Diag}(N_1+1,..., N_n+1)&0\\0&\mathrm{Diag}(N_1,..., N_n)\end{smallmatrix}\right):= \left( 
 \begin{smallmatrix}N+1&0\\0&N\end{smallmatrix}\right),$$
 $$ Q=\left(\begin{smallmatrix} 0& \mathrm{Diag}(M_1,..., M_n)   \\ \mathrm{Diag}(M_1,..., M_n)    &0\end{smallmatrix}\right):=
  \left( 
 \begin{smallmatrix}0&M\\M&0\end{smallmatrix}\right)$$
with $N_i, M_i\in\mathbb{R}$, $N_i(N_i+1)=M_i^2$ (or equivalently $P, Q$ are real matrices satisfying $P=QP^{-1}Q$). That is, we have a product of two-mode squeezed states, as  in Theorem \ref{WOLFF}, so that entanglement is localised to one mode at each site.
The solution to our problem  (should one exist) is therefore given by the matrices  $(C_2, A_2)$ satisfying the Lyapunov equation
\begin{equation}\label{purple}
 \left(\begin{smallmatrix}A_1&0\\-C^{\flat}_2C_1&A_2\end{smallmatrix}\right) \left(\begin{smallmatrix} P&Q\\Q&P\end{smallmatrix}\right)+\left(\begin{smallmatrix} P&Q\\Q&P\end{smallmatrix}\right)
  \left(\begin{smallmatrix}A^{\dag}_1&-\left(C^{\flat}_2C_1\right)^{\dag}\\0&A^{\dag}_2\end{smallmatrix}\right) +
  \left(\begin{smallmatrix} C_1^{\flat}V_{\mathrm{vac}}\left(C^{\flat}_1\right)^{\dag} &      C_1^{\flat}V_{\mathrm{vac}}\left(C^{\flat}_2\right)^{\dag} \\        C_2^{\flat}V_{\mathrm{vac}}\left(C^{\flat}_1\right)^{\dag} & C_2^{\flat}V_{\mathrm{vac}}\left(C^{\flat}_2\right)^{\dag} \end{smallmatrix}\right)=0.
  \end{equation}
  Equation \eqref{purple} leads to the following three equations:
\begin{align}
\label{purple1}A_1P+PA_1^{\dag}+C_1^{\flat}V_{\mathrm{vac}}\left(C_1^{\flat}\right)^{\dag}=0\\
 \label{purple2}A_1Q+QA_2^{\dag}-P\left(C_2^{\flat}C_1\right)^{\dag}+C_1^{\flat}V_{\mathrm{vac}}\left(C_2^{\flat}\right)^{\dag}=0\\
 \label{purple3}A_2P+PA_2^{\dag}-C_2^{\flat}C_1Q-Q \left(C_2^{\flat}C_1\right)^{\dag}            +C_2^{\flat}V_{\mathrm{vac}}\left(C_2^{\flat}\right)^{\dag}=0.
 \end{align}
 Now, rearranging \eqref{purple2} and substituting this into \eqref{purple3} gives
 \begin{align*}
 &-C_2^{\flat}V_{\mathrm{vac}}\left(C_1^{\flat}\right)^{\dag}Q^{-1}P+C_2^{\flat}C_1PQP^{-1}-QA_1^{\dag}Q^{-1}P-PQ^{-1}C_1^{\flat}V_{\mathrm{vac}}\left(C_2^{\flat}\right)^{\dag}\\&+PQ^{-1}P \left(C_2^{\flat}C_1\right)^{\dag}
 -PQ^{-1}A_1Q-C_2^{\flat}C_1Q-Q \left(C_2^{\flat}C_1\right)^{\dag}+C_2^{\flat}V_{\mathrm{vac}}\left(C_2^{\flat}\right)^{\dag}=0.
 \end{align*}
Using the condition $P=QP^{-1}Q$ leads to cancellations of terms 2, 5, 7 and 8 in this equation, thus  
\begin{equation}
-C_2^{\flat}V_{\mathrm{vac}}\left(C_1^{\flat}\right)^{\dag}Q^{-1}P -QA_1^{\dag}Q^{-1}P-PQ^{-1}C_1^{\flat}V_{\mathrm{vac}}\left(C_2^{\flat}\right)^{\dag}-PQ^{-1}A_1Q+C_2^{\flat}V_{\mathrm{vac}}\left(C_2^{\flat}\right)^{\dag}=0
\end{equation}
Finally, we can use $P=QP^{-1}Q$ and Eq. \eqref{purple1} on   the second and fourth terms in this expression to obtain
\begin{equation}
\left(PQ^{-1}C_1^{\flat}-C_2^{\flat}\right)V_{\mathrm{vac}}\left(PQ^{-1}C_1^{\flat}-C_2^{\flat}\right)^{\dag}=0,
\end{equation}
hence 
\begin{equation}\label{cheese1}
C_2^{\flat}V_{\mathrm{vac}}=PQ^{-1}C_1^{\flat}V_{\mathrm{vac}}.
\end{equation}
Because $V_{\mathrm{vac}}$ is not invertible, then it may appear that $C_2$ in \eqref{cheese1} is not unique. However, upon closer inspection and in particular using  $V_{\mathrm{vac}}=\left(\begin{smallmatrix}1&0\\0&0\end{smallmatrix}\right)$ allows one to recover $C_{2-}$ and $C_{2+}$ uniquely and, hence from this (using the doubled-up property), $C_2$.
Furthermore, we can combine this solution with \eqref{purple1} and \eqref{purple2} to obtain an expression for $A_2$ in terms of $C_1, A_1, P$ and $Q$. That is, 
\begin{equation}\label{cheese2}
A_2=QP^{-1}A_1PQ^{-1}+C^{\flat}_2C_1PQ^{-1},
\end{equation}
 where $C_2$ is implicitly a function of $C_1, A_1, P$ and $Q$.

\begin{remark}
Our dual system, $(C_2, A_2)$, is unique up to the symplectic equivalence in Theorem \ref{symplecticequivalence}). This follows because the solution  Eqs. \eqref{cheese1} and \eqref{cheese2} was unique in the particular basis we chose (i.e.  the one with the  diagonal structure in  the matrices $P$ and $Q$ for the stationary state)
\end{remark}

Now, at this stage it is not clear whether  our solution, $(C_2, A_2)$, 
is (i) physically realisable  and (ii) stable.  

Firstly, let us show that  $(C_2, A_2)$ is physically realisable. There are two conditions that must hold for this to be the case (see Sec. \ref{MTDR}): 
\begin{itemize}
\item[(a)]  the  matrices are doubled-up 
\item[(b)]  the matrices satisfy \eqref{PR} from Sec. \ref{MTDR}.
\end{itemize}
We can ensure that $C_2$ is doubled-up by construction from \eqref{cheese1}. Let us see that $A_2$ is also doubled-up. Begin by writing $A_2$ in block form as $A_2=\left(\begin{smallmatrix}A^{(11)}_2& A^{(12)}_2\\A_2^{(21)}& A_2^{(22)}\end{smallmatrix}\right)$. Therefore, we require that  $A^{(11)}_2= \overline{A_2^{(22)}}$ and
$A^{(12)}_2=\overline{A_2^{(21)}}$; we verify the first of these here (the proof of the second is similar).
Using \eqref{cheese1} and \eqref{cheese2} $A^{(11)}_2= \overline{A_2^{(22)}}$ is equivalent to the following requirement:
\begin{align*}
&M(N+1)^{-1}A_{1-}(N+1)M^{-1}+NM^{-1}C_{1-}^{\dag}C_{1-}(N+1)M^{-1}-(N+1)^{-1}M^{-1}C_{1+}^T\overline{C}_{1+}(N+1)M^{-1}\\&=
MN^{-1}A_{1-}NM^{-1}+NM^{-1}C_{1-}^{\dag}C_{1-}NM^{-1}-(N+1)^{-1}M^{-1}C_{1+}^T\overline{C}_{1+}NM^{-1}
\end{align*}
Now multiplying this expression by $M$ on both sides and using the property $N(N+1)=M^2$  and the fact that $N$ and $M$ commute since they are diagonal matrices gives: 
$$A_{1-}N=NA_{1-}+NC^{\dag}_{1-}C_{1-}-(1+N)C_{1+}^T\overline{C}_{1+}$$
(after some manipulation). Hence we require that 
$$A_{1-}N-N(A_{1-}+ C^{\dag}_{1-}C_{1-}- C_{1+}^T\overline{C}_{1+})+C_{1+}^T\overline{C}_{1+}=0.
$$
This can be seen to be true by using the physical realisably of $(C_1, A_1)$ (in particular that that $A_{1-}+A^{\dag}_{1-} C^{\dag}_{1-}C_{1-}- C_{1+}^T\overline{C}_{1+}=0$) and condition \eqref{purple1}, hence we are done.
Finally, we prove condition (b) in order to complete the physical realisability proof. Condition (b) is equivalent to
\begin{equation}\label{tho}
A_2^{(11)}+A_2^{(11)\dag}+C_{2-}^{\dag}C_{2-}-C^T_{2+}\overline{C}_{2+}=0
\end{equation}
\begin{equation}\label{tho1}
A_2^{(12)}+A_2^{(12)T}+C_{2-}^{\dag}C_{2+}-C^T_{2+}\overline{C}_{2-}=0.
\end{equation}
We show here that  Eq. \eqref{tho} is true (Eq. \eqref{tho1} is similar). Using  \eqref{cheese1} and \eqref{cheese2} it follows that \eqref{tho} is equivalent to 
$$N^{-1}MA_{1-}M^{-1}N+NM^{-1}A_{1-}^{\dag}MN^{-1} +M^{-1}C_{1+}^T\overline{C}_{1+}M^{-1}
+NM^{-1}(   C^{\dag}_{1-}C_{1-}- C_{1+}^T\overline{C}_{1+})M^{-1}N=0.$$
Multiplying this equation by $M$ on both sides, we obtain the following equivalent expression
$$N(A_{1-}+A_{1-}^{\dag}+C^{\dag}_{1-}C_{1-}- C_{1+}^T\overline{C}_{1+})N+A_{1-}N+NA_{1-}^{\dag}
+C_{1+}^T\overline{C}_{1+}=0.$$
This is true by \eqref{purple1} and the physical realisability of the first system.

To see that the system $(C_2, A_2)$ is stable observe that $A_2$ can be shown to satisfy the following:
$$A_2=QP^{-1}\left(\begin{smallmatrix}N&0\\0&-(N+1)\end{smallmatrix}\right)A_1\left(\begin{smallmatrix}N^{-1}&0\\0&-(N+1)^{-1}\end{smallmatrix}\right)PQ^{-1}.$$
Therefore the eigenvalues of $A_2$ are the same as those of $A_1$, hence the system must be stable.

Let us see the specific case of a  one mode system (with a one mode dual system), where the solution simplifies considerably. Firstly, from the Lyapunov equation \eqref{purple1} we obtain the conditions:
\begin{equation}\label{cheese3}
N_1=\frac{C_{1+}^{\dag}C_{1+}}{C_{1-}^{\dag}C_{1-}-C_{1+}^{\dag}C_{1+}}=\frac{-A_1-C_1^{\dag}C_1}{2A_+}.
\end{equation}
Note that this is not an over-constrained system of equations, instead they are a necessary condition of our particular choice of basis. As $M_1=\sqrt{N(N+1)}$ and using \eqref{cheese1}, \eqref{cheese2} and \eqref{cheese3} it follows  (after some careful algebra)  that 
$$C_{2-}=-\sqrt{\frac{   C_{1-}^{\dag}C_{1-}}{C_{1+}^{\dag}C_{1+}}}C_{1,+}, \quad C_{2+}=-\sqrt{\frac{   C_{1+}^{\dag}C_{1+}}{C_{1-}^{\dag}C_{1-}}}C_{1,-} \quad \mathrm{and} \quad A_2=\Delta\left( \overline{A}_{1-} , -\overline{A}_{1+}\right).$$
Equivalently, $A_2=J\overline{A_1}J$. It is straightforward to check that this system is physical (i.e satisfies PR conditions \eqref{PR}) and is stable (the eigenvalues of $A_1$ and $A_2$ coincide).

\begin{exmp}
We will now  find the dual system for the following two-mode QLS:
$$C=\left(\begin{smallmatrix}5&4&1&-i\\1&i&5&4\end{smallmatrix}\right)\quad A=\left(\begin{smallmatrix}-12-2i&0.5i&1&-2-2.5i\\-20-0.5i&-7.5-6i&-6-7.5i&-2i\\1&-2+2.5i&-12+2i&-0.5i\\-6+7.5i&2i&-20+0.5i&-7.5+6i\end{smallmatrix}\right).$$
The two-mode stationary state of this system has covariance matrix (given by the solution to the Lyapunov equation \eqref{eq.Lyapunov}):
$$P=\left(\begin{smallmatrix} 1.1067 & -0.0799 - 0.1952i & -0.1680 + 0.0636i & -0.3262 - 0.1575i\\
-0.0799 + 0.1952i  & 1.7835 &  -0.3262 - 0.1575i  & 0.8234 - 0.3690i\\
-0.1680 - 0.0636i & -0.3262 + 0.1575i  & 0.1067 & -0.0799 + 0.1952i\\
-0.3262 + 0.1575i  & 0.8234 + 0.3690i & -0.0799 - 0.1952i&   0.7835 \end{smallmatrix}\right).$$
Now, performing the change of basis
$$C\mapsto CS^{\flat} \quad A\mapsto S A S^{\flat} \quad P\mapsto S P S^{\dag},$$
where the symplectic $S$ is given by 
$$S=\left(\begin{smallmatrix}
 1.0132 + 0.0018i & -0.0325 - 0.2069i & -0.1418 - 0.0134i & -0.2018 - 0.0969i\\
-0.0060 - 0.1457i   &1.1105 + 0.0013i  &-0.1657 - 0.0058i  & 0.4286 - 0.2080i\\
0.1418 + 0.0134i & -0.2018 + 0.0969& 1.0132 - 0.0018i & -0.0325 + 0.2069i\\
-0.1657 + 0.0058i  & 0.4286 + 0.2080i&-0.0060 + 0.1457i   &1.1105 - 0.0013i  
\end{smallmatrix}\right),$$
we obtain the  TFE system characterised by  $(C_1, A_1)$, where
$$C_1=\left(\begin{smallmatrix}
4.9055 - 0.3949i &  4.2841 - 1.3610i & -0.2132 - 0.0847i  & 0.6671 - 2.2200i\\
-0.2132 + 0.0847i  & 0.6671 + 2.2200i& 4.9055 + 0.3949i &  4.2841 + 1.3610i 
\end{smallmatrix}\right)$$
$$
 A_1=\left(\begin{smallmatrix}  
 -12.0838 - 3.5321i  & 0.0117 + 1.4452i & -1.0080 - 0.4978i & -0.5957 - 0.4592i\\
 -21.5223 - 3.0091i & -7.4162 - 3.4138i &  4.4199 -10.4105i &  3.4092 - 4.9884i     \\
 -1.0080 + 0.4978i & -0.5957 + 0.4592i& -12.0838 + 3.5321i  & 0.0117 - 1.4452i \\
 4.4199 +10.4105i &  3.4092 + 4.9884i  &-21.5223 + 3.0091i & -7.4162 + 3.4138i 
\end{smallmatrix}\right).$$
 Importantly the two modes of $(C_1, A_1)$ are independent at stationarity and have covariance matrix   given by $P=\mathrm{Diag}\left(1.3623, 1.0022, 0.3623, 0.0022\right)$.
 
 We can now find the dual directly using Eqs. \eqref{cheese1} and \eqref{cheese2}. Specifically, the dual is characterised by
$$C_2=\left(\begin{smallmatrix}
4.5733 + 1.8180i & -1.2936 + 4.3049i & -0.2287 + 0.0184i  &-2.2092 + 0.7018i\\
-0.2287 - 0.0184i  &-2.2092 - 0.7018i&4.5733 - 1.8180i & -1.2936 - 4.3049i 
\end{smallmatrix}\right)
$$ $$A_2=\left(\begin{smallmatrix}   
-12.0838 + 3.5322i  & 0.0412 -21.7310i &  1.0074 - 0.4989i &  8.9136 - 6.9596i\\
-1.4331 + 0.1886i  &-7.4163 + 3.3866i & -0.2533 - 0.6494i & -3.3657 - 5.0183i\\
1.0074 + 0.4989i &  8.9136 + 6.9596i&-12.0838 - 3.5322i  & 0.0412 +21.7310i \\
 -0.2533 + 0.6494i & -3.3657 + 5.0183i&-1.4331 - 0.1886i  &-7.4163 - 3.3866i 
\end{smallmatrix}\right).$$
 One can indeed verify that the dual system $(C_2, A_2)$ is both stable (i.e. the eigenvalues of $A_2$ have negative real part)  and physical (it satisfies Eq. \eqref{PR}). 
\end{exmp}

\section{Reducibility of the Power Spectrum}\label{pight}

We now digress and use some of the results that we have just found  in order to develop a  result that enables us to prove Theorem \ref{doned} earlier. 

\begin{thm}\label{parfg}
Consider an $n$-mode  cascaded QLS, $\mathcal{G}$, with vacuum input that has a pure stationary state. Also suppose that $\mathcal{G}$ is a cascade of two QLSs, 
$\mathcal{G}=(C_2,A_2)\triangleleft (C_1,A_1)$, 
where: 
\begin{itemize}
\item the QLS $(C_1,A_1)$
has one mode and is globally minimal.
\item the QLS $(C_2,A_2)$ has $n-1$ modes.
\end{itemize}
Then there exists a TFE QLS to $(C_2,A_2)$ given by $(C'_2, A'_2)=(C'_{2_2},A'_{2_2})\triangleleft (C'_{2_1},A'_{2_1})$
such that $(C'_{2_1},A'_{2_1})\triangleleft (C_1,A_1)$ has a pure stationary state (hence  trivial power spectrum), where
\begin{itemize}
\item the QLS $(C'_{2_1},A'_{2_1})$
has one mode and is globally minimal and
\item the QLS $(C'_{2_2},A'_{2_2})$ has $n-2$ modes.
\end{itemize}

The upshot is that the modes in the power spectrum are reducible in pairs.
\end{thm}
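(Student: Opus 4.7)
The plan is to combine the absorber construction from Section \ref{purple6} with the pure--mixed decomposition argument used in the proof of Theorem \ref{equivalence}. Since $(C_1,A_1)$ is a one-mode globally minimal QLS, Section \ref{purple6} furnishes a one-mode dual system $(C'_{2_1}, A'_{2_1})$, unique up to the symplectic equivalence of Theorem \ref{symplecticequivalence}, such that $(C'_{2_1}, A'_{2_1}) \triangleleft (C_1,A_1)$ has pure stationary state (equivalently trivial power spectrum by Theorem \ref{equivalence}). This immediately gives the candidate ``first mode'' of the claimed refined cascade inside $(C_2, A_2)$; the whole content of the theorem is to show that $(C_2, A_2)$ admits a TFE realisation in which this absorber appears as the mode coupled directly to $(C_1, A_1)$.

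First I would work in the basis on the combined $n$-mode system in which the two-mode block ``$(C_1,A_1)$ together with its absorber'' has its canonical two-mode squeezed stationary state, as in the construction of Section \ref{purple6}. By assumption the full $n$-mode stationary state of $\mathcal{G}$ is pure, and its restriction to the first mode is the fully mixed thermal state of $(C_1,A_1)$ (fully mixed because $(C_1,A_1)$ is globally minimal, by Theorem \ref{equivalence}). Applying the Gaussian Schmidt decomposition of Theorem \ref{WOLFF} to the bipartition ``mode 1 $|$ remaining $n-1$ modes'' then shows that there is a local symplectic change of basis on the last $n-1$ modes under which mode 1 is entangled with exactly one of them (via a two-mode squeezed state with the squeezing level fixed by the stationary thermal parameter of mode 1), while the other $n-2$ modes decouple in a pure state. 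Uniqueness of the absorber forces this distinguished mode among the last $n-1$ to coincide, up to a one-mode symplectic, with $(C'_{2_1}, A'_{2_1})$.

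Next I would show that this symplectic change of basis on $(C_2,A_2)$ actually produces the series-product form $(C'_{2_2}, A'_{2_2}) \triangleleft (C'_{2_1}, A'_{2_1})$, i.e.\ a lower block-triangular $A'_2$ of the shape in Eq.~\eqref{concat1}. This is the step I expect to be the main obstacle, since a generic local symplectic on $(C_2,A_2)$ only preserves transfer function equivalence, not the cascade structure. To get past it I would imitate the block-matrix calculation in the proof of Theorem \ref{equivalence}: write the Lyapunov equation \eqref{eq.Lyapunov} for the full combined system in the canonical pure/mixed basis described above, decompose $C$ and $A$ into blocks labelled by (i) $(C_1,A_1)$, (ii) the absorber mode, (iii) the remaining $n-2$ modes, and examine the off-diagonal blocks corresponding to the purely pure sector (iii) of the stationary covariance. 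Exactly as in the proof of Theorem \ref{equivalence} parts 1--2, the vanishing of those stationary covariance blocks will force the coupling from sector (iii) back to sectors (i)--(ii) to vanish, together with the active terms in $C$ and $\Omega$ relating them. This is precisely what is needed for $(C_2,A_2)$ to factor as a series product with $(C'_{2_1}, A'_{2_1})$ as the first component and some $(n-2)$-mode $(C'_{2_2}, A'_{2_2})$ as the second.

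Finally, to close the argument I would combine the series-product identity for $(C_2,A_2)$ with the triviality of the power spectrum of $(C'_{2_1}, A'_{2_1}) \triangleleft (C_1, A_1)$ to verify the claimed properties of the refined cascade $(C'_{2_2}, A'_{2_2}) \triangleleft (C'_{2_1}, A'_{2_1}) \triangleleft (C_1, A_1)$; global minimality of $(C'_{2_1}, A'_{2_1})$ is inherited from that of $(C_1,A_1)$ via the explicit formulas \eqref{cheese1}--\eqref{cheese2}, and stability follows from the spectral identity for $A_2$ noted at the end of Section \ref{purple6}. Iterating the argument then reduces the modes of the pure component in pairs (or singletons, whenever a one-mode absorber reduces to a passive cancellation as in Theorem \ref{sisogm}), which is exactly the statement needed to complete the proof of Theorem \ref{doned}.
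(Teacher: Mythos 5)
Your proposal follows essentially the same route as the paper's proof: apply Theorem~\ref{WOLFF} to the bipartition between $(C_1,A_1)$ and $(C_2,A_2)$ to put the pure stationary state into the canonical form of one two-mode squeezed pair plus a decoupled pure block on the remaining $n-2$ modes, recognise the $(1,1)$--$(2,2)$ blocks of the Lyapunov equation as exactly Eq.~\eqref{purple} so that the distinguished mode is the absorber of Sec.~\ref{purple6} given by \eqref{cheese1}--\eqref{cheese2}, and then use the remaining Lyapunov blocks to force the series-product structure. The one refinement worth noting is that this last step is not a verbatim repeat of Theorem~\ref{equivalence}: the paper combines the $(1,3)$ and $(2,3)$ covariance blocks with the absorber identity \eqref{cheese1} and the purity relation $P=QP^{-1}Q$ to cancel terms and conclude $A'_{2_{12}}=0$ (the reverse coupling), which is precisely the manipulation your plan sets up but does not carry out.
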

\begin{remark}
Note that we have chosen to impose the requirement that $(C_1,A_1)$ be globally minimal, otherwise if it were not, then the mode would be reducible. Therefore this assumption is necessary to rule out the trivial case.
\end{remark}





%
%
%
\begin{proof}
First, using  Theorem \ref{WOLFF} and considering a bipartition of modes between the systems $(C_1,A_1)$ and $(C_2,A_2)$ there exists TFE systems  $(C'_1,A'_1)$ and $(C'_2,A'_2)$ with combined stationary state
$$\left<\left(\begin{smallmatrix}\breve{\mathbf{a}}_1\\\breve{\mathbf{a}}_{2_1}\\\breve{\mathbf{a}}_{2_2}\end{smallmatrix}\right)\left(\begin{smallmatrix}\breve{\mathbf{a}}^{\dag}_1&\breve{\mathbf{a}}^{\dag}_{2_1}&\breve{\mathbf{a}}^{\dag}_{2_2}\end{smallmatrix}\right)
\right>:=\left(\begin{smallmatrix} P&Q&0\\Q&P&0\\0&0&V\end{smallmatrix}\right)
,$$
where 
$$P=\left(\begin{smallmatrix}N+1&0\\0&N\end{smallmatrix}\right) \quad Q=\left(\begin{smallmatrix}0&M\\M&0\end{smallmatrix}\right)$$
with $N, M\in\mathbb{R}$ such that $P=QP^{-1}Q$. That is, the mode belonging to $(C'_1,A'_1)$ and one mode of $(C'_2,A'_2)$ have an entangled pure stationary state. The matrix $V$ here is of size $2(n-2)\times2(n-2)$ and is the (pure) covariance matrix of the modes $\breve{\mathbf{a}}_{2_2}$.

Now write the matrices $(C'_2,A'_2)$ as $$\left( (C'_{2_1}, C'_{2_2}), \left(\begin{smallmatrix}  A'_{2_{1}} &A'_{2_{12}}\\A'_{2_{21}}&A'_{2_{2}}\end{smallmatrix}\right) \right).$$ Note that the modes are ordered as  $\left[\begin{smallmatrix}\breve{\mathrm{a}}_1\\ \breve{\mathrm{a}}_{2_1}\\ \breve{\mathrm{a}}_{2_2}\end{smallmatrix}\right]$ in this representation. 

Now, using  the usual  Lyapunov equation \eqref{eq.Lyapunov} for the stationary state, and by writing it as a $3\times3$ block matrix equation with respect to the modes $\left[\begin{smallmatrix}\breve{\mathrm{a}}_1\\ \breve{\mathrm{a}}_{2_1}\\ \breve{\mathrm{a}}_{2_2}\end{smallmatrix}\right]$, it follows that the $(1,1)$, $(1,2)$, $(2,1)$ and $(2,2)$ reduce to the Eq. \eqref{purple} in Sec. \ref{purple6}, hence $(C'_{2_1},A'_{2_1})$ are given by Eqs. \eqref{cheese1} and \eqref{cheese2}; we will use this fact shortly.

It remains to show that  $A_{2_{12}}=0$ and $A_{2_{21}}=-C^{'\flat}_{2_2}C'_{2_2}$, for then this would imply  that the system $(C'_2,A'_2)$ is a cascade of the QLSs $(C'_{2_1},A'_{2_{1}})$  and $(C'_{2_2},A'_{2_{2}})$ (see Eq. \eqref{concat1}). To see this, begin by observing that the $(1,3)$ and $(2,3)$ block entries of the Lyapunov equation lead to the equations
\begin{equation}\label{feb1}
A'_{2_{21}}P+VA^{'\dag}_{2_{12}}-  C^{'\flat}_{2_2}C'_1Q         +C^{'\flat}_{2_2}V_{\mathrm{Vac}}\left(C^{'\flat}_{2_1}\right)^{\dag}=0
\end{equation}
\begin{equation}\label{feb2}
A'_{2_{21}}Q-    C^{'\flat}_{2_2}C'_{1}P       +C^{'\flat}_{2_2}V_{\mathrm{Vac}}\left(C^{'\flat}_{1}\right)^{\dag}=0.
\end{equation}
Now multiplying Eq. \eqref{feb2} by $Q^{-1}P$ and using \eqref{cheese1} and the condition $P=QP^{-1}Q$ gives
$$
A'_{2_{21}}P-  C^{'\flat}_{2_2}C'_1Q         +C^{'\flat}_{2_2}V_{\mathrm{Vac}}\left(C^{'\flat}_{2_1}\right)^{\dag}=0.
$$
Finally, subtracting this equation from \eqref{feb1} gives $A_{2_{12}}=0$, as required. Finally,  the condition $A_{2_{21}}=-C^{'\flat}_{2_2}C'_{2_2}$ follows from the PR conditions \eqref{PR}.
\end{proof}


The interpretation of  the theorem is that the  system, $(C'_{2_1},A'_{2_1})$ is absorbing the energy emitted by the system, $(C_{1},A_{1})$; or more precisely the overall effect of $(C_{1},A_{1})$ is nullified by $(C'_{2_1},A'_{2_1})$, which is analogous to destructive interference.



%

%
%
%
%
%
%
%
%
%
%
%

\section{Application 1: Estimation}\label{blog}

Quantum absorbers offer interesting opportunities for quantum estimation within QLSs. In this subsection we discuss one instance of this.
 Suppose that  there are a set of unknown parameter(s), $\theta$, that we would to estimate within a QLS. By using a dual system at parameter $\theta_0$ (representing prior knowledge of the system) enables one to focus on a much smaller  neighbourhood of the unknown parameter space. Essentially the dual system now constitutes part  of the measurement  and increases the  available class of realistic measurements (see Fig \ref{absmet}). Even in the standard metrology setup (see Sec. \ref{class}) adaptive measurement procedures using some of the available resources to obtain rough guesses for the parameter(s) are necessary  \cite{Monras1} because the optimal measurement choice itself may depend on the unknown parameter(s); for example the optimal quadrature to measure in a homodyne measurement.

\begin{figure}
\centering
\includegraphics[scale=0.25]{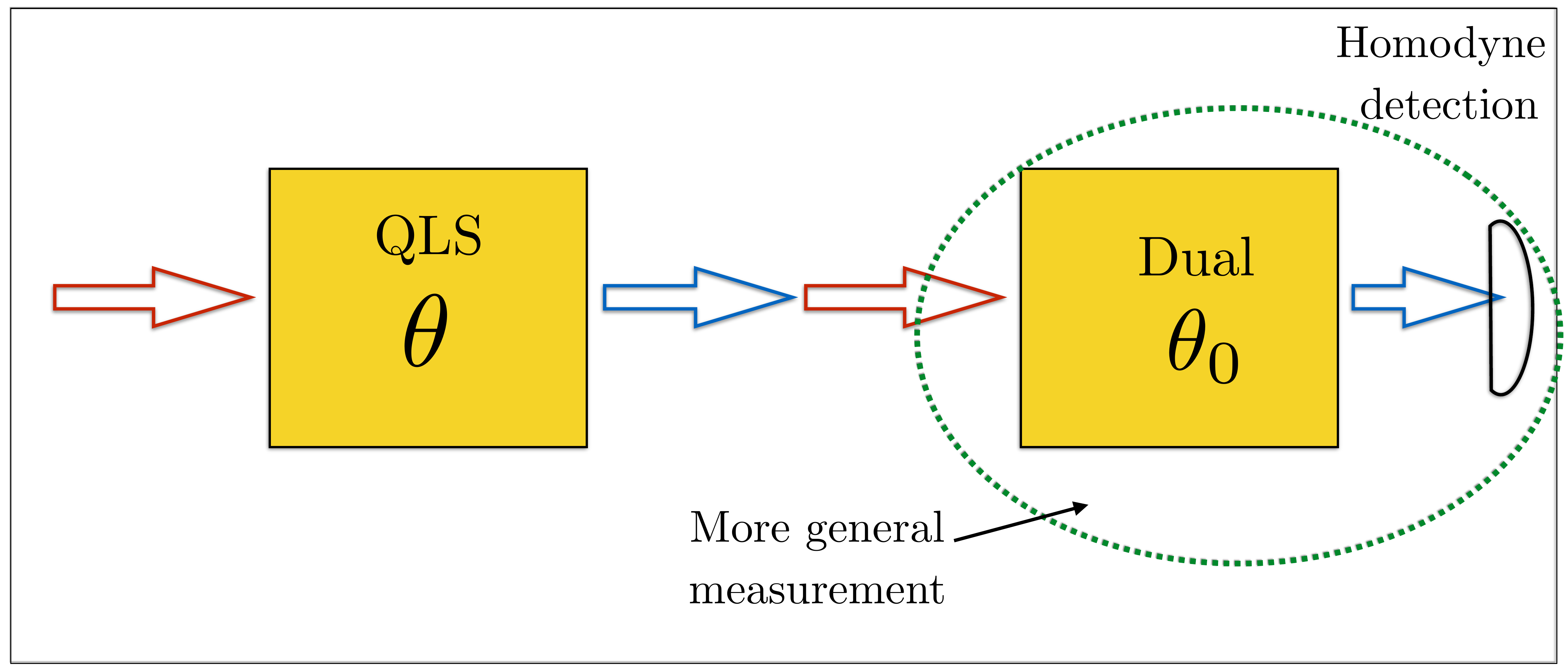}
\caption{Metrology setup with quantum absorber.  \label{absmet}}
\end{figure}

Consider the setup in Sec. \ref{dfsw} where we have a 
SISO QLS (note that in particular  we are going beyond the class of passive systems in Sec. \ref{dfsw}) with (at least) one eigenvalue that is $\mathcal{O}(\tau^{-1})$ from the imaginary axis, where $\tau=T_{\mathrm{tot}}^{1-\epsilon}$ and $T_{\mathrm{tot}}$ is large enough so that the system has reached stationarity at time $T_{\mathrm{tot}}$.
Assume that the 
 input to our system is  characterised by the covariance $V_{\mathrm{vac}}$  (we can assume this WLOG  by using the trick from Sec. \ref{G.Ms}). We also employ  
a quantum absorber system characterised by $\theta_0$, so that  at $\theta=\theta_0$ the output  of the combined system will also be vacuum; we will use this observation many times in the following.

\subsection{QFI Scaling Calculation Revisited}\label{revo}
Recall from Sec. \ref{ounce1} that we can obtain an equivalent definition of the QFI by working in the time domain. By  using a dual system this expression simplifies  and enables us to understand the QFI enhancement  from Sec. \ref{poke11}.
Note that applying a dual system to the output field does not change the QFI.

Let us simplify things by discretising   the process and working with increments of unit length. That is,  consider $\breve{\mathbf{B}}_i=\int^i_{i-1}  \breve{\mathbf{b}}_{\mathrm{out}}(t)dt$ for $i=\{1, 2,...T_{\mathrm{tot}}\}$. Denote the covariance of this Gaussian process by the (quadrature picture) covariance  matrix
$$
V(\theta) := {\rm Re}\left(
\begin{array}{cc}
\langle {\bf Q}   {\bf Q}^T \rangle & \langle {\bf Q} {\bf P}^T \rangle\\
\langle {\bf P}{\bf Q}^T \rangle & \langle {\bf P}{\bf P}^T \rangle
\end{array}
\right),
$$
where $\mathbf{Q}$ and $\mathbf{P}$ are vectors canonical coordinates of the $T_{\mathrm{tot}}$ discrete modes.
 It follows from \cite{Monras1} that the QFI of such a Gaussian model is given by 
\begin{equation}
F(\theta)=\mathrm{Tr}\left(V(\theta)^{-1}\frac{dV(\theta)}{d\theta}V(\theta)^{-1}\frac{dV(\theta)}{d\theta}\right).
\end{equation}
Now let us evaluate the QFI at $\theta=\theta_0$, where the output is vacuum
 ($V(\theta)=\mathds{1}$).
Therefore the QFI may be written as $F(\theta=\theta_0)=\left.\sum_{i,j}\left(\frac{dV}{d\theta}\right)^2_{i,j}\right|_{\theta=\theta_0}$. Now, $\left.\left(\frac{dV(\theta_0)}{d\theta}\right)_{i,j}\right|_{\theta=\theta_0}$ is significant so long as  $\frac{1}{2}|i-j|<\frac{1}{|\mathrm{Re}(\lambda_{\mathrm{min}})|}:=\tau$ where $\lambda_{\mathrm{min}}$ is the eigenvalue of $A$ closest to the imaginary axis. Therefore $F(\theta)\propto T\tau$. So at opposite ends of the spectrum we have: 
\begin{enumerate}
\item  QFI of the order $\mathcal{O}(T_{\mathrm{tot}})$, when the time is much larger than one over the spectral gap (standard scaling).
\item 
QFI  of the order $\mathcal{O}(T^2_{\mathrm{Tot}})$, when the spectral gap is comparable with $T_{\mathrm{tot}}$ (Heisenberg).  In this case the useful information about the system  is contained only within output correlations  between times, $t$ and $s$ such that $|t-s|<\tau$, where $\tau$ is the inverse of the spectral gap. This observation will be key to developing our measurement and estimator in the following subsection.
\end{enumerate}

\subsection{Signal-to-Noise Ratio (SNR) for Quadrature Measurements}\label{SIDF}

\subsubsection{The Discrete Time Measurement}
We now find a measurement and show how to construct an estimator enabling Heisenberg scaling. Initially it's simpler to understand how to do this if one works in discrete time.     
Consider the measurement 
\begin{align}\label{cof1}
\mathbf{Y}=\sum_{i,j}\alpha_{i,j}\mathbf{X}_i\mathbf{X}_j, 
\end{align}
where $\mathbf{X}_i$
 is the position quadrature  of the Gaussian mode $\breve{\mathbf{B}}_i$.  
The outcome  can obtained by measuring $\mathbf{X}$ and then post-processing by weighting with the factor $\alpha_{ij}$.
Denote the outcomes of  the operators $\mathbf{X}_i$ and $\mathbf{Y}$ by $X_i$ and $Y$, respectively. The  weighting factor $\alpha_{ij}$ will be prescribed later. 
Now   ${X}_1, ..., {X}_{T_{\mathrm{tot}}}$ form a stationary process with $\mathbb{E}\left[{X}_i\right]=0$ and $\mathbb{E}\left[{X}_i{X}_j\right]=W_{ij}$, where the   law of ${X}_1,...{X}_{T_{\mathrm{tot}}}$ depends on $\theta$ via the covariance  $W_{ij}$.  The CFI of the whole process is typically hard to calculate, so we seek a lower bound. 
A metric often used to characterise the performance of precision measurements is the signal-to-noise ratio (SNR) \cite{Escher1}, which is a lower bound for CFI. 
The SNR of ${Y}$ is defined as 
\begin{equation}\label{SNR}
\mathrm{SNR}_{{Y}}=\frac{\left(\frac{d\mathbb{E}\left[{Y}\right]}{d\theta}\right)^2}{\mathrm{Var}[{Y}]}
\end{equation}
The SNR gives the error in using a linear transformation of $Y$ as an estimator for $\theta$.


At $\theta=\theta_0$ all of the ${X}_i$s are independent (because we are using an absorber),  whereas at $\theta\neq\theta_0$ the process has long correlation length $\tau$. Therefore,  $\left(\frac{dV(\theta_0)}{d\theta}\right)_{i,j}$ is significant so long as  $|i-j|<\tau$. Hence 
\begin{equation}\label{goats4}
\frac{d\mathbb{E}\left[{Y}\right]}{d\theta}=\sum_{ij}\alpha_{ij}\dot{V}_{ij}\approx \tau T_{\mathrm{tot}}.
\end{equation}
On the other hand evaluating the denominator of the SNR at $\theta=\theta_0$ we obtain
\begin{align*}
\mathrm{Var}[{Y}]&=\sum_{ijkl}\alpha_{ij}\alpha_{kl}\mathbb{E}\left[(X_iX_j-V_{ij}) (X_kX_l-V_{kl})   \right] \\
&=\sum_{ijkl}\alpha_{ij}\alpha_{kl}\mathbb{E}\left[(X_iX_j-\delta_{ij}) (X_kX_l-\delta_{kl})   \right].
\end{align*}
Now, 
\begin{itemize}
\item if $i\neq k$ and $j\neq l$ then the summand equals $\mathbb{E}\left[(X_iX_j-\delta_{ij})\right]^2=0$;
\item  if  $i=k$ but $j\neq l$ then the summand equals $\mathbb{E}\left[(X_iX_j-\delta_{ij}) (X_kX_l-\delta_{kl})   \right]=\mathbb{E}[X_i^2X_jX_l]-\delta_{ij}\delta_{il}=0$;
\item if $i\neq k$ but $j= l$, then the summand equals $\mathbb{E}\left[(X_iX_j-\delta_{ij}) (X_kX_l-\delta_{kl})   \right]=\mathbb{E}[X_iX_kX_j^2]-\delta_{ij}\delta_{kj}=0$. 
\end{itemize}
The upshot is that the only contribution to $\mathrm{Var}[Y]$ is when $(i,j)=(k,l)$. Therefore 
\begin{align}
\nonumber\mathrm{Var}[Y]&=\sum_{ij}\alpha^2_{ij}\mathbb{E}\left[(X_iX_j-\delta_{ij})^2 \right]\\
\nonumber&=\sum_{i}\alpha^2_{ii}\mathbb{E}\left[(X_i^2-1)^2 \right]+\sum_{i,j: i\neq j}\alpha_{ij}\mathbb{E}\left[X_i^2 \right]\mathbb{E}\left[X_j^2 \right]\\
\label{gfy}&=\sum_{i}\alpha^2_{ii}\mathbb{E}\left[(X_i^2-1)^2 \right]+\sum_{ij : i\neq j}\alpha^2_{ij}V_{ii}V_{jj}.
\end{align}
Now, both $\mathbb{E}\left[(X_i^2-1)^2 \right]$ and $V_{ii}$ are constants independent of $i$. Therefore, the variance depends on the shape of $\alpha_{ij}$. Let $\alpha_{ij}=\alpha_{i-j}$ and consider the following two cases: (i) $\alpha_{ij}=1$, (ii) $\alpha_{ij}=\alpha_{i-j}=e^{-\frac{|i-j|}{\tau}}$ (see Fig. \ref{2cases}). 
\begin{itemize}
\item
Case (i) corresponds to a measurement of zero-frequency (or equivalently total integrated current) i.e $\mathbf{Y}=\left(\sum_i \mathbf{B}_i+\mathbf{B}_i^{\dag}\right)^2$. In this case $\mathrm{Var}(Y)\sim T_{\mathrm{tot}}+T_{\mathrm{tot}}^2\sim T_{\mathrm{tot}}^2$, so that $\mathrm{SNR}_Y\sim\frac{(\tau T_{\mathrm{tot}})^2}{T_{\mathrm{tot}}^2}=\tau^2$. 
\item Case (ii) corresponds to placing more weight on mode correlations within the stabilisation time (and less weight on longer autocorrelations) in accordance with the QFI calculation from Sec. \ref{revo}. In this case $\mathrm{Var}(Y)\sim \tau+\tau T_{\mathrm{tot}}\sim \tau T$, so that $\mathrm{SNR}_Y\sim\frac{(\tau T_{\mathrm{tot}})^2}{\tau T_{\mathrm{tot}} }=\tau T_{\mathrm{tot}}$.
\end{itemize}
Therefore, the  SNR is enhanced when $\alpha_{ij}$ has width  order of the correlation length. The essential point is that if $\alpha_{ij}$ is not spread over all $(i,j)$, as in the case of square of total integrated current, then the variance is not too large while the mean can still be large since the important contributions come from $|i-j|<\tau$.

\begin{figure}
\centering
\includegraphics[scale=0.18]{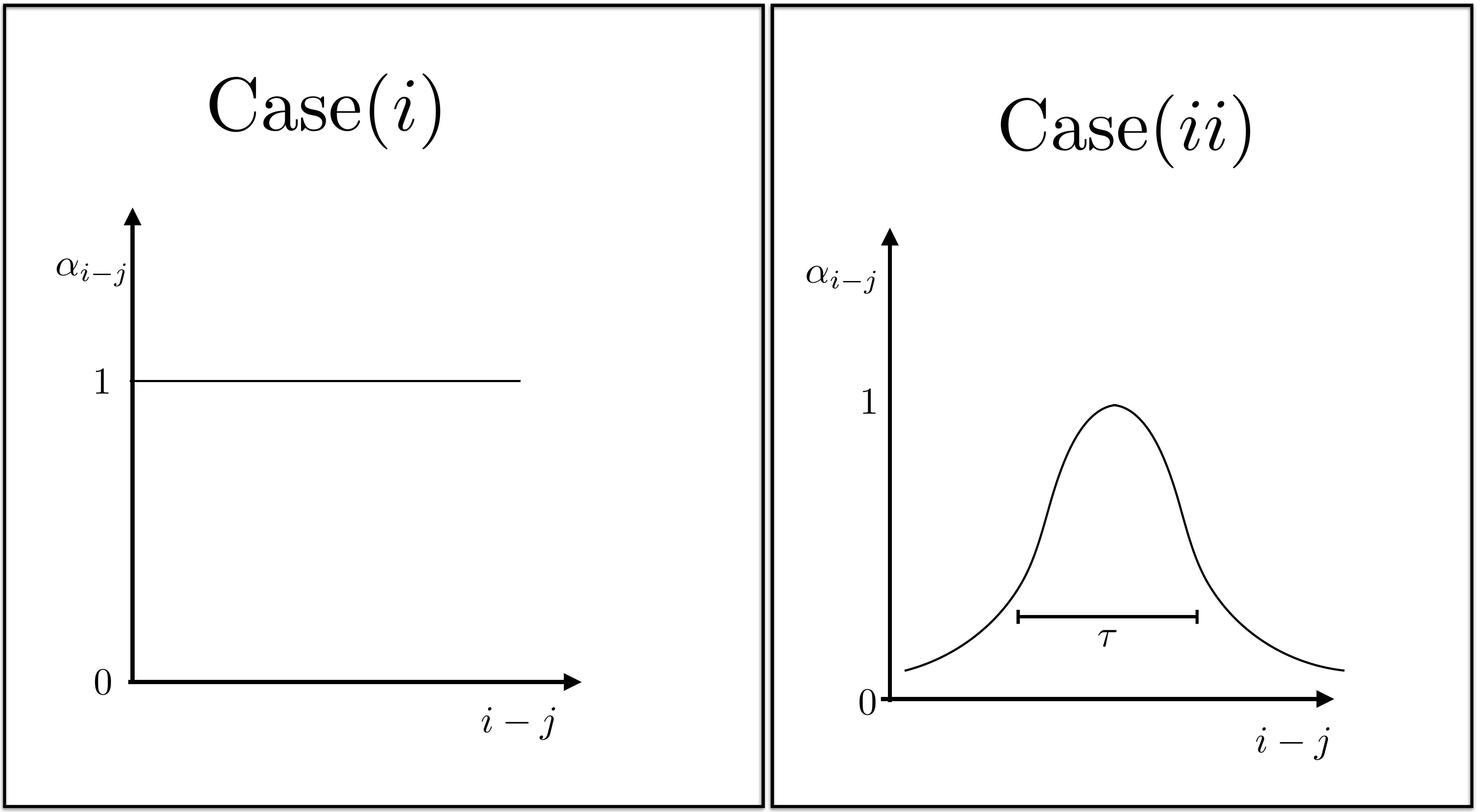}
\caption{The figure compares the two weights, $\alpha_{ij}$, that we consider here.
\label{2cases}}
\end{figure}

\subsubsection{Returning to Continuous Time}

Now returning to continuous time, the analogue of \eqref{cof1}  is given by
\begin{align} \label{cof2}
\mathbf{Y}=\int^{T_{\mathrm{tot}}}_0\int_0^{T_{\mathrm{tot}}}k(t,t')
\mathbf{X}(t)\mathbf{X}(t')dtdt'
\end{align}
where $\mathbf{X}(t)$ is the quadrature measurement $\mathbf{X}(t):=\mathbf{b}_{\mathrm{out}}(t)+\mathbf{b}_{\mathrm{out}}^{\dag}(t)$ and we have weighting factor $K(t,t'):= e^{\frac{-|t-t'|}{\tau}+i\omega_0(t-t')}$.  Notice that we have generalised our estimator by allowing for a frequency shift $\omega_0\in\mathbb{R}$; the reason why we do this will become clear shortly. Denote the outcomes of $\mathbf{X}(t)$ and $\mathbf{Y}$ by $X(t)$ and $Y$, respectively.

Let us consider  explicitly calculating the SNR \eqref{SNR}, where now $Y$ is the continuous-time version of the measurement. 
Firstly, 
 since at $\theta=\theta_0$ the input field is equal to the output field (as we are using the dual), the calculation of the denominator of the SNR is identical to the discrete time version above (i.e., case (ii)). Hence  it is of order $\mathcal{O}\left(\tau{T_{\mathrm{tot}}}\right)$.
Computing the numerator of the SNR    directly entails calculating   $\mathbb{E}\left[{X}(t){X}(t')\right]$ by using  the following expression, which is obtained from \cite{Zhang5}:
\begin{align}\label{cows1}
&\left<\breve{\mathbf{b}}_{\mathrm{out}}(t)\breve{\mathbf{b}}^{\dag}_{\mathrm{out}}(t')\right>=V\delta(t-t')\\&+
\begin{cases}\nonumber
\int^s_0Ce^{A(t-r)}C^{\flat}V\left(C^{\flat}\right)^{\dag}e^{A^{\dag}(t'-r)}C^{\dag}dr -Ce^{A(t-t')}C^{\flat}V   & \quad \text{if }t> t' \\
0 & \quad \text{if }t= t'\\
   \int^t_0 Ce^{A(t-r)}C^{\flat}V\left(C^{\flat}\right)^{\dag}e^{A^{\dag}(t'-r)}C^{\dag}dr -V\left(Ce^{A(t'-t)}C^{\flat}\right)^{\dag}   & \quad \text{if }t< t'.\\
  \end{cases}
\end{align}
Let us see this in an example.

\begin{exmp}\label{lastex}
 Consider a  one mode passive cavity, where this time the input is $V(N,M)$ rather than vacuum, as in example \ref{fug} from  Sec. \ref{dfsw}. Recall that $\tau=1/c^2$ (assuming that $c\in\mathbb{R}$) and let $\tau=T_{\mathrm{tot}}^{1-\epsilon}$ as in example \ref{fug}. For simplicity in the following  calculations, we observe  the experiment   over the interval $t\in[T_{\mathrm{tot}}^{1-\epsilon/2}, T_{\mathrm{tot}}]$ so that $e^{At}\approx 0$ for all $t$ and $T_{\mathrm{tot}}-T_{\mathrm{tot}}^{1-\epsilon/2}\approx T_{\mathrm{tot}}$. 
 We shall explicitly compute the SNR here in the case when $\Omega=0$. Firstly, we can compute Eq.  \eqref{cows1}; doing so and summing over the elements of the resulting matrix we obtain:
\begin{align*}
&\mathbb{E}\left[{X}(t){X}(t')\right]=\left<{\mathbf{X}}(t){\mathbf{X}}^{\dag}(t')\right>=(2N+1+M+\overline{M})\delta(t-t')\\&+
\begin{cases}
-Mc^2\left(\frac{c^2}{2A}+1\right)e^{A({t-t'})}  -\overline{M}c^2\left(\frac{c^2}{2\overline{A}}+1\right)e^{\overline{A}({t-t'})}    & \quad \text{if }t> t' \\
0 & \quad \text{if }t= t'\\
   -Mc^2\left(\frac{c^2}{2A}+1\right)e^{A({t'-t})}  -\overline{M}c^2\left(\frac{c^2}{2\overline{A}}+1\right)e^{\overline{A}({t'-t})}   & \quad \text{if }t< t'.\\
  \end{cases}
\end{align*}
 Note that if $\Omega=0$ we have $\mathbb{E}\left[{X}(t){X}(t')\right]=(2N+1+M+\overline{M})\delta(t-t')$ and in particular the input equals the output, which shouldn't be surprising considering that the system is non-globally minimal in this case (see Sec. \ref{powers34}). The significance of this is that we don't require a dual system in order to simplify the calculation of $\mathrm{Var}[Y]$. In  particular we have $\mathrm{Var}[Y]=\mathcal{O}(\tau T_{\mathrm{tot}})$ here, for the same reason that we had it for the case of vacuum input (see Eq. \eqref{gfy}).
 
 Now, computing  the numerator of the SNR \eqref{SNR}   from this for the case of estimating  $\Omega$ gives
\begin{align}
& \label{vorf} \int^{T_{\mathrm{tot}}}_0\int_0^{T_{\mathrm{tot}}}k(t,t')\frac{d\mathbb{E}\left[{X}(t){X}(t')\right]}{d\Omega}dtdt'\\
\nonumber &=2c^2\mathrm{Re}   \left\{  \frac{-M\dot{A}c^2}{2A^2}\left[\frac{1}{A+u}\left(T_{\mathrm{tot}}-\frac{1}{A+u}\right)+\frac{1}{A+\overline{u}}\left(T_{\mathrm{tot}}-\frac{1}{A+\overline{u}}\right)\right]\right.\\
\nonumber &-T_{\mathrm{tot}}\left.M\dot{A}\left(\frac{c^2}{2A}+1\right)\left[\frac{1}{(A+u)^2}+\frac{1}{(A+\overline{u})^2}\right]\right\},
\end{align}
where $u=-\tau-i\omega_0$. Notice that  Eq. \eqref{vorf} is invariant under $u\mapsto \overline{u}$.
We evaluate Eq. \eqref{vorf}  at $\Omega=0$; we have a simplification since  $\left(\frac{c^2}{2A}+1\right)=0$. Setting also $\omega_0=0$ and using $\tau=1/c^{2}$, we obtain:
$$\left.\frac{d\mathbb{E}\left[{Y}\right]}{d\Omega}\right|_{\Omega=\omega_0=0}=\frac{16}{3}\tau(T_{\mathrm{tot}}+\frac{2}{3}\tau)\mathrm{Re}(iM)\approx\frac{16}{3}\tau T_{\mathrm{tot}}\mathrm{Re}(iM)$$
in the limit $T_{\mathrm{tot}}$ large. 
Hence the level of scaling as in case (ii) above is realised. 
One can show that in the case $\Omega\neq0$  the  frequency choices $\omega_0=\pm \Omega$ would equally realise the $\mathcal{O}(\tau T_{\mathrm{tot}})$ scaling.    
\end{exmp}

\subsubsection{Interpretation of our Measurement in the Frequency Domain}

In general Eq. \eqref{cows1} (and hence the SNR)  is difficult to calculate directly.
Our choice of measurement and the reason why the weighting  factor $K(t,t')$ enhanced the SNR may better understood in the frequency domain. Recall from Sec. \ref{poke11}  that since the input-output  map acts separately on different frequencies and the input 
state is a product,  the QFI is the integral of the QFI for each individual frequency with respect to frequency. 
We saw how the main contributions to this integral came from two  small intervals $|\omega\pm\mathrm{Im}(z_1)|\sim \frac{1}{\tau}$ where $z_1$ is the  eigenvalue of $A$ closest to the imaginary  axis.  Recall that the value of QFI rate in this interval was $\tau^2$, so that the overall QFI  is of the order $T_{\mathrm{tot}}\times\frac{\tau^2}{\tau} =T_{\mathrm{tot}}\tau$. 
We shall now see that $\mathbf{Y}$ is similar to a frequency band-limited measurement of all frequency  quadratures  
$$\mathbf{X}(\omega):=\int^{T_{\mathrm{tot}}}_{0}e^{i\omega t} \mathbf{X}(t)\mathrm{d}t$$
  over a bandwidth of $1/\tau$ centred on the frequency $\omega_0$. Therefore, the fact that there are two frequency bands containing increased information about the parameter  from the calculation in Sec. \ref{poke11} (see Fig. \ref{picf}), explains why two frequency choices $\omega_0=\pm \Omega$ in example \ref{lastex} would have worked equally well.

Consider measuring  $\mathbf{X}(\omega)$, over frequency band $|\omega-\mathrm{Im}(z_1)|\sim\frac{1}{\tau}$. Denote the outcomes by  $X(\omega)$. The information about $\theta$ is contained within the second order moments, hence we consider  the following as an estimator:
\begin{align}\label{lick}
Z:=2\tau \int^{\mathrm{Im}(z_1)+\frac{1}{\tau}}_{\mathrm{Im}(z_1)-\frac{1}{\tau}}X(\omega)X(\omega)^{\dag}\mathrm{d}\omega.
\end{align} Note that the factor $2\tau$ is a normalisation factor. 
Now 
\begin{align*}
Z&=2\tau  \int^{\mathrm{Im}(z_1)+\frac{1}{\tau}}_{\mathrm{Im}(z_1)-\frac{1}{\tau}}\left(\int^{T_{\mathrm{tot}}}_{0}\int^{T_{\mathrm{tot}}}_{0}e^{i\omega (t-t')} X(t)X(t')\mathrm{d}t\mathrm{d}t'\right)\mathrm{d}\omega\\
&\approx 2  \tau    \int^{T_{\mathrm{tot}}}_{0}\int^{T_{\mathrm{tot}}}_{0}  \left(\int^{\infty}_{-\infty}  
 e^{i\omega (t-t')} e^{-|\omega-  \mathrm{Im}(z_1)|\tau}      X(t)X(t')                           \mathrm{d}\omega\right)\mathrm{d}t \mathrm{d}t' \\
 &= \int^{T_{\mathrm{tot}}}_{0}\int^{T_{\mathrm{tot}}}_{0} e^{i \mathrm{Im}(z_1) (t-t')}\frac{4\tau^2}{\tau^2+(t-t')^2} X(t)X(t')  \mathrm{d}t\mathrm{d}t'\\
&\approx \left.Y\right|_{\omega_0=\mathrm{Im}(z_1)}\\
\end{align*}
Here we have \eqref{lick} in the  first equality and then switched the order integration in the second.

Therefore, the estimator $Z$ is approximately equal to the estimator $Y$.  Hence, as mentioned above, we can interpret our estimator $Y$ in the frequency domain as frequency band-limited measurement  of width $1/\tau$ centred on the frequency $\omega_0$.
This calculation allows us to calculate the numerator of the SNR \eqref{SNR} by using $Z$ rather than $Y$ and show that it is of the order in case (ii) above. To this end we have 
\begin{align}
\nonumber 
\frac{d\mathbb{E}\left[{Z}\right]}{d\theta}&= 2\tau \int^{\mathrm{Im}(z_1)+\frac{1}{\tau}}_{\mathrm{Im}(z_1)-\frac{1}{\tau}}    
\frac{d\mathbb{E}\left[{X(\omega)X(\omega)^{\dag}}\right]}{d\theta} \mathrm{d}\omega\\
\label{biker}&=\mathcal{O}(\tau T_{\mathrm{tot}}),  
\end{align}
which follows by using the observation \eqref{tease2}\footnote{We originally only showed \eqref{tease2} for the case of passive QLS, but it is also true for general QLSs.}.
Therefore, if 
we choose $\omega_0=  \pm i\mathrm{Im}(\hat{z}_1)$,  where ${\hat{z}_1}$ is an estimator of $z_1$  (the spectral gap of  $A$), we can extract information from either of the most informative frequency intervals  $|\omega-\mathrm{Im}(z_1)|\sim\frac{1}{\tau}$.
Here, our estimator ${\hat{z}_1}$ has been obtained from a preliminary experiment   with MSE $\mathcal{O}(\tau^{-2})$. 
\begin{remark} 
Notice that our estimator, which claims Heisenberg level scaling, seemingly requires Heisenberg level knowledge of the $z_1$; that is the MSE of ${\hat{z}_1}$ is $\mathcal{O}(\tau^{-2})$. 
However, an adaptive procedure like the one in Sec. \ref{toy} for the time-dependent approach may be applied to overcome this problem and achieve Heisenberg scaling (we do not discuss this any further here).   
\end{remark}

\begin{remark}
Notice that the scaling $\mathcal{O}(\tau T_{\mathrm{tot}})$ in Eq. \eqref{biker} is possible with or without the absorber system. The action of the dual system in the frequency domain on the independent modes is a frequency dependent rotation. That is, the absorber doesn't determine or shift the most informative frequency interval, but rather rotates phase space on each particular frequency; the result of this on the scaling in Eq. \eqref{biker} is a  constant factor. We discuss this is more detail in Sec. \ref{optoe1}.
\end{remark}



Now  the quadrature, $\mathbf{X}(t)$, that we chose to measure above may not be the best quadrature to measure. Moreover, since our measurement is very similar to a frequency band-limited measurement above, where each frequency behaves independently, it could be the case that the optimal quadrature may differ across frequencies. If we were to measure  these frequency domain quadratures with frequency dependent phases, we would get the exact optimal measurement. However, this measurement will not translate into one that can be done sequentially in time; for that we would need to have equal phases for all frequencies, which happens in the measurement  \eqref{cof2}.

\subsection{Optimal Estimation Using Adaptive Measurements}\label{optoe1}
As mentioned above, our strategy in the previous subsection is by no means optimal; it fails to be so by a constant factor. We now discuss a method based on the work in \cite{Monras1, Monras2} to reach optimality (asymptotically in the limit of large times). In this subsection we shall showcase the capability of the absorber system for estimation.

\begin{figure}
\centering
\includegraphics[scale=0.18]{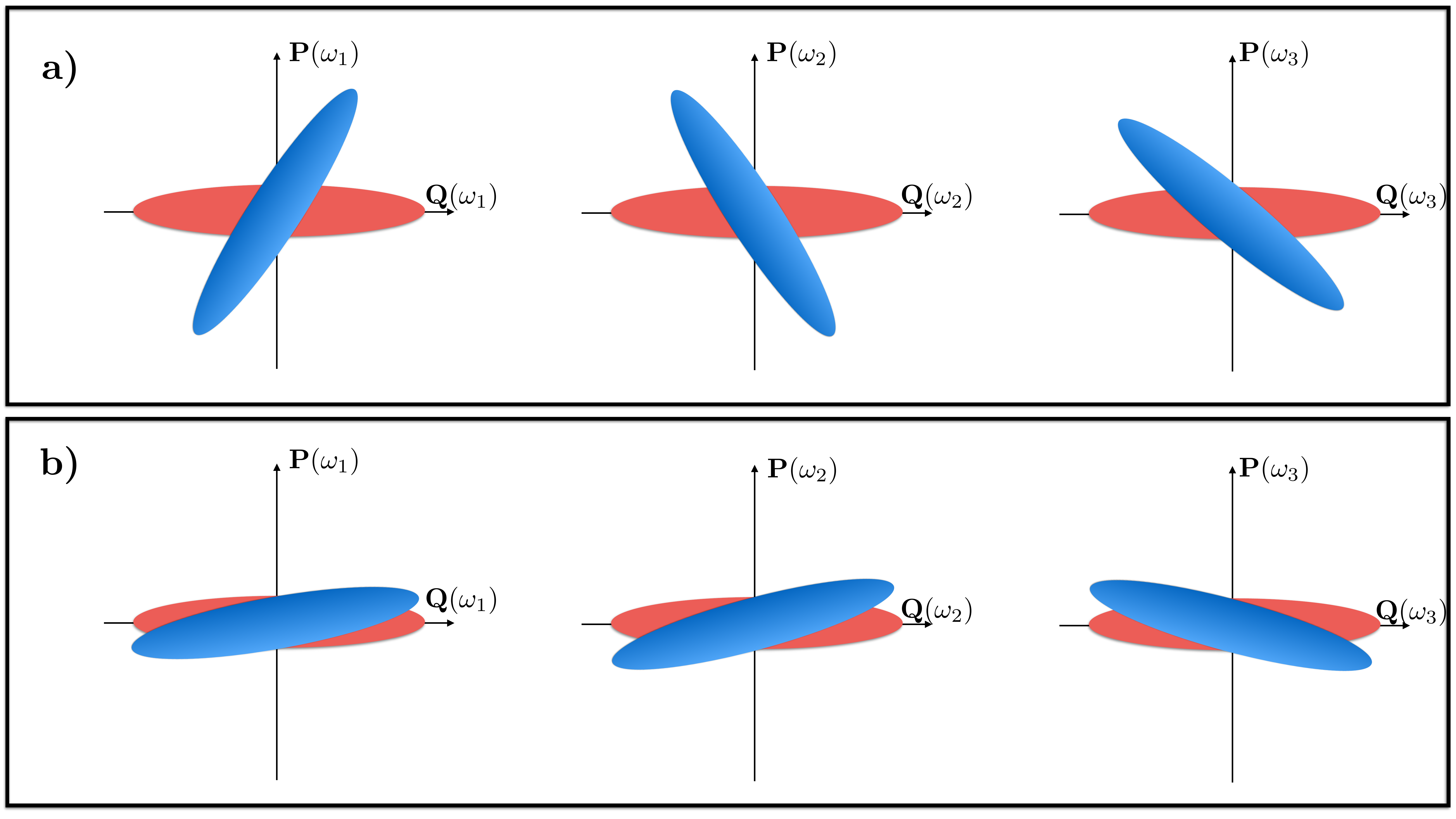}
\caption{This figure shows the action of the system on the input viewed in the  frequency domain. The inputs are shown in red and the output in blue for three choices of frequency. Note that the input is the same for all frequencies. In a) the system rotates by a frequency dependent angle. Part b) shows that  the effect of the dual system is to make all rotations smaller. 
\label{rotat}}
\end{figure}

For simplicity, we consider a SISO PQLS in this subsection with input $V(N,M)$ (rather than vacuum), as in Sec. \ref{dfsw}.  In the frequency domain, all modes are independent and  the input-output map is a rotation of a squeezed state by a frequency dependent parameter (see Fig. \ref{rotat}). The  action of the absorber  system   will be to make the rotation action smaller (on every frequency mode). This is the essence of the trick to reach optimality; we will use some of our (time) resources to obtain a rough estimate for the system and then perform a second experiment using the dual. The effect of this will be to zoom in on the `interval of uncertainty'; that is,  rather than considering the whole of the parameter space the dual  enables us to restrict  our analysis to the smaller unknown region of it.

The above is all a bit hand wavy, so let's be more precise. Let us consider one frequency in the input-output map. 
The QFI per unit time of a particular frequency was given by \eqref{tease} (recall that all frequencies are independent so the QFI is  additive across all frequencies). 
The question of the optimal  measurement on a particular frequency mode is essentially a Gaussian estimation problem and has been solved  
\cite{Monras1, Monras2}.
The optimal measurement is  homodyne (and realises the scaling in the QFI), however the direction to measure depends on the true value \cite{Monras1, Monras2}.
This suggests that we need an adaptive strategy in order to attain optimal precision in the limit of large time. That is, using information from previous steps in order to attain bounds. 
Returning to our problem across all frequencies, this issue  is further complicated by the fact that the optimal quadrature to measure may be different for each  frequency.

The strategy that we take is as follows:
\begin{enumerate}
\item Use the estimator from Sec. \ref{SIDF} to perform a preliminary estimate of $\theta$, given by $\hat{\theta}_0$, using time resource $T_{pre}$ that has  MSE $\mathcal{O}\left(\tau T_{pre}\right)^{-1}$. 
\item Use a dual system to `negate' the original system with parameter $\hat{\theta}_0$. The upshot of this is that the input-output map has the action of a small rotation  (see Fig. \ref{rotat}). The rotation on frequency $\omega$, denoted by $\lambda(\omega)$, will depend on $\theta$ in the following way
$$\lambda(\omega)\sim
 \begin{cases}
\theta \tau    & |\pm\omega-\omega_0|\sim \frac{1}{\tau} \\
\theta &\text{otherwise  }\\
    \end{cases}$$
for some constant $\omega_0$ (see Fig. \ref{picf}).

\item Suppose that  $\phi$ is the direction of squeezing of the input, then perform a measurement 
of the quadrature with phase $\phi+\frac{\pi}{2}+\phi(N)$. The function $\phi(N)$ is just a constant function dependent on the level of squeezing in the input; we do not state it here (see \cite{Monras2}). For example, if the input is momentum squeezed then one should measure the quadrature with phase $\phi(N)$.
\item  Choose the  estimator of $\theta$ as the maximum likelihood estimator.
%
\end{enumerate}
If the total time of the experiment, $T_{\mathrm{tot}}$, is chosen so that 
 $T_{pre}=T^{\epsilon}_{\mathrm{tot}}$ for $\epsilon>\frac{1}{2}$ then by 
 \cite{Monras1}  we obtain optimal precision.   
The key point is that
  working  time domain,  the optimal scaling may be achieved (in the small gap approximation) by measuring sequentially in time, i.e., performing a  homodyne measurement and using the estimator \eqref{cof2}, where this time $\mathbf{X}(t)$ is the quadrature from (3).

\begin{remark}
    In \cite{Monras1, Monras2} (i.e. the Gaussian estimation problem) they didn't use a  dual system. Instead they negated the known part of the system in the measurement by adding an extra phase to the measurement and measuring $\phi+\hat{\theta}_0$. However, in the frequency domain this would require measuring a different quadrature for each frequency. The power of using the dual is in that allows one to simplify the final  measurement so that the same quadrature can be measured across all frequencies. 
    \end{remark}

\section{Further Applications and Outlook for Quantum Absorbers}\label{appos}

In this chapter we have shown that for a given QLS there always exists a stable coherent quantum absorber. We discussed how quantum absorbers can be advantageous  for quantum estimation for the estimation problem from Ch. \ref{FEDERER}. Let us now discuss other potential applications. 

We have seen in Sec. \ref{purple6} that the dual QLS is unique up to symplectic equivalence classes (see Theorem \ref{symplecticequivalence}). 
 Hence the transfer function belonging to this family of duals  is  unique; we call this the \textit{dual transfer function}.  The  dual system is in some sense capturing the field output.  This poses two interesting open questions.

Firstly,  can the dual solution be used to find an alternative proof to the main identifiability result in Ch. \ref{powers34} (as in Theorem \ref{main})? In fact this was our main reason for studying quantum absorbers. Since the mapping between the transfer function and dual transfer function is one-to-one then it remains to show that the mapping between the dual transfer function and field output is one-to-one. However, this problem turns out to be just as difficult as the original problem problem in Ch. \ref{powers34}, as it still requires solving  a spectral factorisation-type problem (this time with dual systems).  This is because if 
$$\Xi_D(s)\Xi(s)V_{\mathrm{vac}}\Xi(-\overline{s})^{\dag}\Xi_D(-\overline{s})^{\dag}={V}_{\mathrm{vac}}$$
then $$\Psi(s)=\Xi_D(-\overline{s})^{\flat}{V}_{\mathrm{vac}}\left(\Xi_D({s})^{\dag}\right)^{\flat},$$
where $\Xi_{D}(s)$ is the dual transfer function.

   Now the dual system capturing the output of the QLS also means that it  characterises the field output.  That is, rather than the system being correlated with the field it becomes correlated with the dual system.
This is a considerable simplification from working with an infinite number of modes in the field to a finite number in the dual. Perhaps this simplification isn't  so surprising after all following the \textit{Gaussian Schmidt Decomposition} in  \cite{Botero1} (see Theorem \ref{WOLFF}).
It would be interesting to investigate how precise we can make the above; for instance, is it the case that    there is an isomorphism between the modes in the dual and the important modes in the output, i.e. those that are correlated with the internal system at stationarity. 
Further, can notions such as entanglement (or more general geometrical distances) or measurements  be represented in an advantageous (due to their simplicity) way on the dual. If so, this could lead to  potential applications in quantum control and metrology.

Quantum absorbers may also offer interesting applications to  quantum communication,  as  they could enable sharing of entanglement at a distance \cite{Voll1}.


Another application is to quantum control.   In control theory a \textit{quantum observer} is a purely quantum  system that is capable of mimicking the behaviour of the system. The three characteristic properties  of quantum observers are the following \cite{Amini1}:
\begin{itemize}
\item The observables of the observer should
 converge to the system variables (in some limit). 
 \item The observer should be a physically realisable
 \item The plant should feedforward to the observer, but not the other way around. That is, there should be  no back-action of the observer on the plant. 
 \end{itemize}
 Using observers in control problems is an example of a coherent feedback scheme. The advantages of coherent feedback schemes are that they make use of non-commutative quantum signals  so that there is no loss of information or coherence, which is not the as  case in measurement-based feedback schemes. Despite recent results indicating that coherent schemes typically perform better than measurement-based schemes for some control objectives \cite{Ham1, Petersen4}, it is still in its incipient stage and is a very active research area.
 Typically the goal of a quantum observer is to map the mean values of the system. In classical control theory, the optimal way to do this is the Kalman filter. In \cite{Amini1} they compare a quantum observer based scheme with  a measurement-based observer using the classical Kalman filter. The  coherent scheme outperforms the classical one in  terms of the estimation error. However, the quantum observer there is by no means optimal. The main difficulty with designing quantum observers is the physical relizability requirement. 
 Now our quantum absorbers in Sec. \ref{purple6} make natural quantum observers for the second order moments. In the asymptotic limit they map the covariance matrix of the system exactly. Our observers/absorbers satisfy all the desirable properties of observers and are particular interesting as they provide a purification of the system's stationary state. Therefore, one can  imagine  these being  useful in certain control applications. However, one potential downside of using absorbers as observers  is that the system will be maximally entangled with the observer and therefore share no entanglement with  the field.

\chapter{Conclusion}

In this thesis we have discussed various aspects of system identification split into two contrasting approaches: (1) Time-dependent input (or transfer function) identifiability and (2) stationary inputs (or power spectrum) identifiability. 

In Ch. \ref{T.F} we considered the time-dependent approach, where equivalent systems were characterised by the property that they have the same transfer function. We answered the first two questions set out in the abstract for these inputs. We have also addressed the same problems in the noisy scenario (recall that noise is modelled by additional channels that we cannot access). Finally, we studied  \textit{noise unobservable subspaces}, where part of the system is shielded from the noise, and found that such systems can be identified as in the noiseless case. Our noise unobservable subspaces work would make a great starting point for future research. 

In Ch. \ref{powers34} we considered the stationary approach. The characteristic quantity is the power spectrum in this case, which itself depends quadratically on the transfer function. Our main result was that under global minimality the power spectrum implies the transfer function uniquely, which is contrary to the analogous classical problem \cite{Youla1, Hayden1, Zhou1, Kalman1}. We also developed  identification methods, that is, how to construct a system realisation of the transfer function. We considered several extensions to our problem, including mixed inputs and the use of  ancillary channels.  One possible direction to extend this work would be to noisy inputs (i.e the analogous problem to the time-dependent input one discussed in Sec. \ref{noisek}).

The next main problem that we considered was  parameter estimation for PQLSs in the time-dependent approach (Ch. \ref{QEEP}). 
The emphasis was on finding a realistic scheme that can be implemented with current technology; our scheme is similar to the one used at LIGO to implement gravitational waves \cite{LIGO1, LIGO2}. We considered both the single and multiple parameter scenario. 
An optimisation problem remains in the multiple parameter passive case; although we have proved that Heisenberg scaling is possible. 
We also extended a multiple quantum metrology result from \cite{Humphreys1} to our domain; that there is a linear advantage with respect to the number of parameters in terms of estimation accuracy by using entangled states between the independent channels. However, we also expressed our concerns in that their result (and indeed ours) is only valid under the assumption of  a fixed total photon number. We discussed in Ch. \ref{QEEP}  the main avenues to extend these results, i.e.,   to active systems or noisy inputs.

In the previous problem energy was the resource constraint. We also considered an estimation problem in Ch. \ref{FEDERER} under a time constraint. We saw that when the system becomes dynamically unstable the scaling becomes quadratic (rather than linear) in the observation time. Ch. \ref{FEDERER} was devoted to investigating and exploiting this phenomenon in both the time and stationary approaches. 


Finally, we developed the notion of a  quantum absorber for QLSs in Ch. \ref{DUAL}. A quantum absorber for a given QLS is another QLS such that the output of the resultant system (obtained by combining the original with the absorber) is equal to the input. They are particularly useful 
because they provide a natural purification for the stationary state of the system.
We showed that for any QLS, such an absorber exists. There are potentially many applications for absorbers, such as   quantum control, quantum communication and quantum estimation \cite{Amini1, Ham1}. We discussed one application of the latter in detail in Sec. \ref{blog}.



\appendix 

\chapter{Finding a Minimal Classical Realization}\label{APP7}

In this appendix a set of (nonphysical) minimal and doubled-up matrices $(A_0, B_0, C_0)$ are found that realizes the transfer function \eqref{TF1}, which describes a (minimal) physical system $(A, C)$. 

We assume that the matrix  $A$ for   the $n$-mode minimal system, $(A, C)$, possesses  $2n$ distinct eigenvalues each with a nonzero imaginary part. This requirement  can be seen to be generic in the space of all quantum systems \cite{Nurdin3}. Moreover, it can also be shown that if  $\lambda_i$ is a complex eigenvalue of $A$ with right eigenvector $\left(\begin{smallmatrix}R_i\\S_i\end{smallmatrix}\right)$ and left eigenvector  $\left(U_i, V_i\right)$, then $\overline{\lambda}_i$ is also an eigenvalue with right eigenvector $\left(\begin{smallmatrix}{S}^{\#}_i\\{R}^{\#}_i\end{smallmatrix}\right)=\Sigma{\left(\begin{smallmatrix}R^{\#}_i\\S^{\#}_i\end{smallmatrix}\right)}$ and left eigenvector $   \left({V}^{\#}_i, {U}^{\#}_i\right)= {\left(U^{\#}_i, V^{\#}_i\right)}\Sigma_n$, where     $R_i, S_i\in\mathbb{C}^{1\times n}$, $U_i, V_i\in\mathbb{C}^{n\times1}$ and $\Sigma_n:=\left(\begin{smallmatrix} 0_n&1_n\\1_n&0_n\end{smallmatrix}\right)$. That is, for each eigenvalue and eigenvector, there exists a corresponding mirror pair. 
This property follows from the fact that $A$ has the doubled-up form  $A:=\Delta\left(A_{-}, A_{+}\right)$.

We now construct a minimal realization called Gilbert's realization \cite{Zhou1}. 
The only thing that we need to take care of is that the realization we obtain is of the doubled-up form. 

As the transfer function may be written as 
\[\Xi(s)=\frac{N(s)}{\prod_{i=1}^n (s-\lambda_i)(s+\lambda_i)}.\]
we can perform a partial fraction expansion, so that 
\[\Xi(s)={1}+\sum_{i=1}^n \frac{P_i}{(s-\lambda_i)} +\frac{Q_i}{\left(s-{\lambda}^{\#}_i\right)}.\]
As we show below, the matrices $P_i, Q_i$ are rank 1. Therefore there exist matrices $B_i\in\mathbb{C}^{1\times 2}$, $B'_i\in\mathbb{C}^{1\times 2}$, $C_i\in\mathbb{C}^{2\times 1}$, and $C'_i\in\mathbb{C}^{2\times 1}$ such that 
\begin{equation*}
C_iB_i=P_i \,\,\,\mathrm{and}\,\,\, C'_iB'_i=Q_i.
\end{equation*}
The Gilbert realization $A_0, B_0, C_0$ is 
\begin{equation*}
A_0:=\mathrm{Diag}\left(\lambda_1, \hdots, \lambda_n,\overline{\lambda}_1,\hdots, \overline{\lambda}_n\right),
\end{equation*}
\begin{equation*}
B_0:= \left[
  \begin{array}{c}
    B_1 
    \\
     \vdots              \\
    B_n 
    \\
     B'_1 
     \\
              \vdots   
               \\
    B'_n 
  \end{array}
\right]
\end{equation*}
and
\begin{equation*}
C_0:= 
\left[
  \begin{array}{cccccc}
    C_{1}    &      \ldots & C_{n} &C'_1&\ldots&C'_n   \\
  \end{array}
\right].
\end{equation*}
From the expression of the physical transfer function we have
\begin{equation*}
C\left(s-A\right)^{-1}C^{\flat}=
\sum^n_{i=1} \frac{W_i}{s-\lambda_i}+
\frac{\Sigma{W}^{\#}_i \Sigma}{s-{\lambda}^{\#}_i }
\end{equation*}
where $W_i$ are the rank-one matrices
\begin{equation*}
W_i=\left(\begin{smallmatrix}C_-R_i+C_+S_i\\{C}^{\#}_+R_i+{C}^{\#}_iS_i\end{smallmatrix}\right) \left(\begin{smallmatrix}U_iC_-^{\dag}-V_iC_+^{\dag}&U_iC_+^T+V_iC^T_-\end{smallmatrix}\right).
\end{equation*}
Having fixed $B_i$ and $C_i$ the matrices $B'_i$ and $C'_i$ can then be chosen as 
\begin{equation}
B'_i={B}^{\#}_i\Sigma_2 \,\,\, \mathrm{and}\,\,\,
C'_i=\Sigma_2{C}^{\#}_i
\end{equation}
and so the matrices 
$(A_0, B_0, C_0)$
are of the doubled-up type.

Note that using Gilbert's realization on MIMO systems can also be seen to give a minimal doubled-up realization, but we do not discuss this any further here.

\chapter{Proof of Lemma \ref{hud}}\label{goh}

 \begin{proof}
Firstly define $\{e_1, ..., e_{4n}\}$ as the canonical basis of $\mathbb{C}^{4n}$. 
%
%
By property \eqref{pil1} of  proper LBT matrices it is clear that $y^{(i)}:= \left(\begin{smallmatrix}0\\y_2^{(i)}\end{smallmatrix}\right)\in\mathrm{Span}\{e_{2n+1}, ..., e_{4n}\}$. Further, as there are $2n$ of them they must form a basis of $\mathrm{Span}\{e_{2n+1}, ..., e_{4n}\}$.
Suppose $y^{(i)}$ has generalised eigenvector rank $m_i$, then as
 as $\tilde{A}'=T\tilde{A}T^{-1}$ we have 
%
%
\begin{align*}
\left(\tilde{A}'-\lambda^{(i)}\right)^{m_i}  Ty^{(i)}&=\left(T\tilde{A}T^{-1}-\lambda^{(i)}\right)^{m_i}  Ty^{(i)}\\
&=T\left(\tilde{A}-\lambda^{(i)}\right)^{m_i}  y^{(i)}\\
&=0.
\end{align*}
Therefore, $Ty^{(i)}$ are generalised eigenvectors of $\tilde{A}'$ associated to $\lambda^{(i)}$. Hence, because $\tilde{A}'$ is also assumed to be proper LBT, it follows that $\mathrm{Span}\{Ty^{(i)}\}\subset\mathrm{Span}\{e_{2n+1}, ..., e_{4n}\}$. Finally, 
\begin{align*}
T\mathrm{Span}\{e_{2n+1}, ..., e_{4n}\}&=T\mathrm{Span}\{y^{(i)}\}\\
&=\mathrm{Span}\{Ty^{(i)}\}\\
&\subset\mathrm{Span}\{e_{2n+1}, ..., e_{4n}\}.
\end{align*}
The invertibility of $T$ has been used in getting from the first to the second line. 
This implies that $T$ is LBT, as required. 
\end{proof}

\chapter{Showing  Existence of a Minimal Physical System with Transfer Function \eqref{huck2}}\label{APP2}
Firstly, since we know that the system described by $\Xi(s)$ is physical, then the result of connecting it in series to another physical quantum system will be physical. To this end, consider the system 
$$\tilde{\mathcal{G}}=\mathcal{G}\triangleleft \mathcal{G}_n\triangleleft...\triangleleft\mathcal{G}_1,$$
where $G$ was our original system and $G_i$ is a single mode (unstable) active system with coupling $c_-=0$, $c_+=\sqrt{2\mathrm{Re}\mu_i}$, and Hamiltonian $\Omega_-=\mathrm{Im}\lambda_i$, $\Omega_+=0$. The system $\tilde{\mathcal{G}}$ is physical and is described by the transfer function $\Xi^{(2)}(s)$ (Eq. \eqref{huck2}).
Also $\tilde{\mathcal{G}}$
  must be stable because the transfer function $\Xi^{(2)}(s)$ has poles in the left-complex plane only because 
     $\Xi^{}(s)$ and $\Xi^{(1)}(s)$ do.
However, it is not minimal. 

To find a minimal system employ the quantum Kalman decomposition from \cite{Zhang4}. The result is that this system may be written in the form of eq. (103) and (104) in \cite{Zhang4}. Hence the system is TFE to the \textit{minimal} system  with matrices  (in quadrature form) $\left(\tilde{A}_{co}, B_{co}, C_{co}\right)$ from \cite{Zhang4}. This system gives a minimal realisation of the transfer function $\Xi^{(2)}(s)$.
 It  can also can be verified that it is physical (this either  follows  because its transfer function is doubled-up and symplectic  \cite{Peter1} or alternatively from the results in \cite{Zhang4}) and that the matrices $\left(\tilde{A}_{co}, B_{co}, C_{co}\right)$ are of doubled-up type, as required.
 
 Finally, since  two stable and minimal quantum systems connected in series is always minimal (see a proof of this below), then it is clear that $\Xi^{(2)}(s)$ must necessarily be of size  $n-k$. To see the previous claim,  suppose that we have two minimal systems $(C_1, A_1)$ and $(C_2, A_2)$, where $C_i$ is the coupling matrix of the system and $A_1$ is the usual system matrix. Connecting these systems in series ($(C_1, A_1)$ into $(C_2, A_2)$) we get the resultant coupling and system matrices:
\begin{equation}\label{tenth}
(C, A):=\left(\left(\begin{smallmatrix} C_1&C_2\end{smallmatrix}\right) ,\left(\begin{smallmatrix} A_1&0\\-C_2^{\flat} C_1&A_2\end{smallmatrix}\right)\right).
\end{equation}

Recall that in order to show that the QLS (C,A) is minimal it is enough to show that the pair $(A, -C^\flat)$ is controllable \cite{Gough2}.  Controllability of    $(A, -C^\flat)$    is equivalent to the statement: for all eigenvalues and left-eigenvectors of 
$A$, i.e. $vA=v\lambda $ then $vC^\flat \neq0$. 
Letting $v=\left(y_1, y_2 \right)$ where $y_i$ are both of size $2n$, then there are two cases to consider; either $y_2=0$ or $y_2\neq0$.
\begin{itemize}
\item If $y_2=0$ then Eq. \eqref{tenth} implies that $y_1A_1=\lambda y_1$. Therefore, by controllability of  the first system we must have $vC^{\flat}=y_1C_1^{\flat}\neq 0$ and so the system is controllable.
\item If $y_2\neq0$ then 
$\left(y_1, y_2 \right)A=\left(y_1, y_2 \right)\lambda $ implies that $y_2A_2=y_2\lambda$. Note that by stability $\mathrm{Re}(\lambda)<0$. Hence by controllability of the second system $y_2C_2^\flat\neq0$. Suppose to the contrary that $(A, -C^\flat)$ is not controllable. Then $y_1C^\flat_1+y_2C^\flat_2=0$, which together with $\left(y_1, y_2 \right) A=\left(y_1, y_2 \right)\lambda $ would imply that 
\begin{equation}\label{apeq}
y_1\left(A_1+C^{\flat}_1C_1\right)=y_1\lambda.
\end{equation}
Or equivalently 
\begin{equation}\label{apeq1}
y_1A_1^{\flat}=-y_1\lambda.
\end{equation}
This equation implies that $\mathrm{Re}(\lambda)>0$, which is a contradiction. 
\end{itemize}

\chapter{Proof of Theorem \ref{main}}\label{p1}

As outlined in the proof sketch,  we need to show (1)-(3).

\section{Step (1)}\label{s1}
Firstly, the condition $\tilde{B}'=T\tilde{B}$ is equivalent to  
$$\left(\begin{smallmatrix}-C'^{\dag}V_{\mathrm{vac}}\\{C'}^{\dag}\end{smallmatrix}\right)=K\Sigma T\Sigma K \left(\begin{smallmatrix}-{C}^{\dag}V_{\mathrm{vac}}\\{C}^{\dag}\end{smallmatrix}\right),$$
where
$$K=\left(\begin{smallmatrix} J&0\\0&-J\end{smallmatrix}\right) \quad \mathrm{and} \quad \Sigma=\left(\begin{smallmatrix} 0&1\\1&0\end{smallmatrix}\right).$$ 
Hence 
$$\left(\begin{smallmatrix}-V_{\mathrm{vac}}\tilde{C'}&\tilde{C'}\end{smallmatrix}\right)=\left(\begin{smallmatrix}-V_{\mathrm{vac}}\tilde{C}&\tilde{C}\end{smallmatrix}\right)K\Sigma T^{\dag}\Sigma K.$$ 
Therefore, combining this the condition $\tilde{C}'=\tilde{C}T^{-1}$ we have
\begin{equation}\label{in10}
\tilde{C}=\tilde{C}K\Sigma T^{\dag}\Sigma KT.
\end{equation}
Now, 
\begin{align}
\nonumber \tilde{A}'&=-K\Sigma\tilde{A}'^{\dag}\Sigma K\\
\nonumber&=-K\Sigma \left(T^{\dag}\right)^{-1} \tilde{A}^{\dag}T^{\dag}\Sigma K\\
\label{as}&=K\Sigma  \left(T^{\dag}\right)^{-1} \Sigma K \tilde{A}K\Sigma T^{\dag}\Sigma K,
\end{align}
where $\tilde{A}'=T\tilde{A}T^{-1}$ has been used to obtain the second line.

%
%
And so 
\begin{align*}
\tilde{C} \tilde{A}T^{-1}&=\tilde{C}' \tilde{A}'\\
&=\tilde{C}'K\Sigma  \left(T^{\dag}\right)^{-1} \Sigma K \tilde{A}K\Sigma T^{\dag}\Sigma K\\
&=\left(\tilde{C}T^{-1}K\Sigma  \left(T^{\dag}\right)^{-1} \Sigma K\right) \tilde{A}K\Sigma T^{\dag}\Sigma K\\
&=\tilde{C} \tilde{A}K\Sigma T^{\dag}\Sigma K,
\end{align*}
where  Eq. \eqref{in10} has been used to obtain the fourth line.
Thus
\begin{equation}\label{in2}
\tilde{C} \tilde{A}=\left(\tilde{C} \tilde{A}\right)K\Sigma T^{\dag}\Sigma KT.
\end{equation}

\begin{Claim}
%
\begin{equation}
\tilde{C} \tilde{A}^k=\left(\tilde{C} \tilde{A}^k\right)K\Sigma T^{\dag}\Sigma KT.
\end{equation}
for all $k\geq0$. 
\end{Claim}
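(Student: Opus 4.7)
The plan is to prove the claim by induction on $k$, with the base cases $k=0$ and $k=1$ already supplied by equations \eqref{in10} and \eqref{in2} respectively. The engine driving the induction is an iterated version of the conjugation identity \eqref{as}, combined with the two equivalent forms of \eqref{in10}.

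\textbf{Step A (iterated form of \eqref{as}).} First I would upgrade \eqref{as} to
\[
\tilde{A}'^k \;=\; K\Sigma (T^{\dag})^{-1}\Sigma K\,\tilde{A}^k\,K\Sigma T^{\dag}\Sigma K
\]
for every $k\geq 1$. The inductive step just multiplies the single copy \eqref{as} onto the $k$-th copy and collapses the middle factor using the two involutive identities $K^2=1_{4n}$ and $\Sigma^2=1_{4n}$ (both of which are immediate from the block form of $K$ and $\Sigma$). Concretely, these give $K\Sigma T^{\dag}\Sigma K\cdot K\Sigma (T^{\dag})^{-1}\Sigma K = 1$, so the two ``outer'' dressings telescope.

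\textbf{Step B (rewriting \eqref{in10}).} Equation \eqref{in10} together with $\tilde{C}'=\tilde{C}T^{-1}$ gives the alternative form $\tilde{C}' = \tilde{C}\,K\Sigma T^{\dag}\Sigma K$. I would use this form to simplify $\tilde{C}'\cdot K\Sigma (T^{\dag})^{-1}\Sigma K$: applying the same involutive collapse as in Step A,
\[
\tilde{C}'\,K\Sigma (T^{\dag})^{-1}\Sigma K \;=\; \tilde{C}\,K\Sigma T^{\dag}\Sigma K \cdot K\Sigma (T^{\dag})^{-1}\Sigma K \;=\; \tilde{C}.
\]

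\textbf{Step C (assembling the induction).} Now, on the one hand the similarity $\tilde{A}'=T\tilde{A}T^{-1}$ together with $\tilde{C}'=\tilde{C}T^{-1}$ gives
\[
\tilde{C}'\tilde{A}'^k \;=\; \tilde{C}\tilde{A}^k T^{-1}.
\]
On the other hand, Step A yields
\[
\tilde{C}'\tilde{A}'^k \;=\; \tilde{C}'\,K\Sigma (T^{\dag})^{-1}\Sigma K\,\tilde{A}^k\,K\Sigma T^{\dag}\Sigma K,
\]
and by Step B the prefactor $\tilde{C}'\,K\Sigma (T^{\dag})^{-1}\Sigma K$ equals $\tilde{C}$. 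Equating the two expressions and multiplying on the right by $T$ delivers the claim
\[
\tilde{C}\tilde{A}^k \;=\; \tilde{C}\tilde{A}^k\,K\Sigma T^{\dag}\Sigma KT.
\]

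The main obstacle is purely bookkeeping: keeping track of the various $K\Sigma$ insertions and making sure each collapse uses the correct involutive identity. There is no new conceptual input beyond what is already in the derivations of \eqref{in10}, \eqref{as} and \eqref{in2}; the induction essentially exploits the fact that $K\Sigma$ behaves as an involution and that conjugation by $T$ and by $K\Sigma T^{\dag}\Sigma K$ implement the \emph{same} action on $\tilde{A}$ (up to the consistent dressings), so iterating through powers of $\tilde{A}$ is automatic once the $k=1$ case and the iterated form of \eqref{as} are in hand.
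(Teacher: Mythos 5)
Your argument is correct: Step A is just the standard fact that conjugation commutes with taking powers, Step B is a valid rewriting of \eqref{in10}, and the collapse $K\Sigma T^{\dag}\Sigma K\cdot K\Sigma (T^{\dag})^{-1}\Sigma K=1$ holds because $K^2=\Sigma^2=1$. The route is slightly but genuinely different from the paper's. The paper runs a true induction on the claim itself: in its inductive step it peels off one factor of $\tilde{A}'$, substitutes \eqref{as} for that single factor, and then needs the inductive hypothesis for $k-1$ to absorb the leftover dressing $T^{-1}K\Sigma (T^{\dag})^{-1}\Sigma K$ back into $\tilde{C}\tilde{A}^{k-1}$. Your version instead conjugates the whole power $\tilde{A}^k$ at once and cancels the left dressing against $\tilde{C}'$ using only the $k=0$ identity, so the claim for each $k$ follows directly and no induction on the claim is needed (the only "induction" left is the trivial telescoping $(M\tilde{A}M^{-1})^k=M\tilde{A}^kM^{-1}$). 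The two proofs use identical ingredients — \eqref{as}, \eqref{in10}, and $\tilde{C}'\tilde{A}'^k=\tilde{C}\tilde{A}^kT^{-1}$ — but your organization makes it transparent that the minimality/observability of the system is not yet being used at this stage and that the statement is a purely algebraic consequence of the $k=0$ case together with the conjugation identity.
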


\begin{proof}
We  prove this by induction. Note that we already know it to be true for $k=0$ and $k=1$ (see Eq. \eqref{in10} and \eqref{in2}). To this end, suppose that it is true for $k-1$. Therefore,
\begin{align*}
\tilde{C}'\tilde{A}'^k&=\tilde{C}'\left(\tilde{A}'\right)^{k-1}\tilde{A}'\\
&=\tilde{C}'\left(\tilde{A}'\right)^{k-1}K\Sigma  \left(T^{\dag}\right)^{-1} \Sigma K \tilde{A}K\Sigma T^{\dag}\Sigma K\\
&=\left(\tilde{C}\tilde{A}^{k-1}T^{-1}K\Sigma  \left(T^{\dag}\right)^{-1} \Sigma K \right)\tilde{A}K\Sigma T^{\dag}\Sigma K\\
&=\tilde{C}\tilde{A}^{k}K\Sigma T^{\dag}\Sigma K.
\end{align*}
by using Eqs. \eqref{as} and \eqref{c1}. Finally, using the observation $\tilde{C}'\tilde{A}'^k=\tilde{C}\tilde{A}^kT^{-1}$ completes the proof. %
\end{proof}
Finally, following this claim we  have:
\begin{align*}
\mathcal{O}&=\mathcal{O}K\Sigma T^{\dag}\Sigma KT
\\&=\mathcal{O}\left(\begin{smallmatrix} T_4^{\flat} &0\\ -T_3^{\flat}& T_1^{\flat}\end{smallmatrix}\right)T.
\end{align*}

 %
%

\section{Step (2)}
For this step it is sufficient to prove the following claim. 

\begin{Claim}
$$\tilde{C}\tilde{A}^k\left(\begin{smallmatrix}T_4^{\flat}-T_1^{\flat}\\-T_3^{\flat}\end{smallmatrix}\right)=0$$
for all $k=0,1,2,...$.
\end{Claim}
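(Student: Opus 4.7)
I would prove the claim by induction on $k$, exploiting the block structure of $\tilde{A}, \tilde{B}, \tilde{C}$ together with the relations inherited from $(\tilde{A}',\tilde{B}',\tilde{C}') = (T\tilde{A}T^{-1}, T\tilde{B}, \tilde{C}T^{-1})$ and the physical realisability identity $A + A^{\flat} + C^{\flat}C = 0$. For the base case, the two block-rows of $\tilde{B}' = T\tilde{B}$ combined with $T$ being LBT give $C'^{\flat} = T_1 C^{\flat}$ and $C'^{\flat}V_{\mathrm{vac}} = T_3 C^{\flat} + T_4 C^{\flat}V_{\mathrm{vac}}$. Eliminating $C'^{\flat}$ yields $(T_1 - T_4)C^{\flat}V_{\mathrm{vac}} = T_3 C^{\flat}$, and taking $\flat$-adjoints gives
\begin{equation*}
V_{\mathrm{vac}}C(T_1^{\flat} - T_4^{\flat}) = C T_3^{\flat},
\end{equation*}
which is exactly $\tilde{C}\left(\begin{smallmatrix}T_4^{\flat}-T_1^{\flat}\\-T_3^{\flat}\end{smallmatrix}\right) = 0$.

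For the inductive step, set $D := T_4^{\flat} - T_1^{\flat}$ and $E := T_3^{\flat}$, so that $v = (D,-E)^T$. A direct block computation gives $\tilde{A}v = (-A^{\flat}D,\, C^{\flat}V_{\mathrm{vac}}CD - AE)^T$. Substituting the base identity $V_{\mathrm{vac}}CD = -CE$ and the PR identity $C^{\flat}C = -(A+A^{\flat})$, the lower block collapses to $A^{\flat}E$, so that
\begin{equation*}
\tilde{A}v = \left(\begin{smallmatrix}-A^{\flat}D\\ A^{\flat}E\end{smallmatrix}\right).
\end{equation*}
Hence $\tilde{A}$ acts on $v$ as though it were the block-diagonal matrix $\mathrm{diag}(-A^{\flat},-A^{\flat})$, and the inductive step reduces to showing the propagated identity $V_{\mathrm{vac}}C(-A^{\flat})^{j}D + C(-A^{\flat})^{j}E = 0$ for all $j \geq 0$. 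I would establish this by bringing in the intertwinings $A^{\flat}T_4^{\flat} = T_4^{\flat}A'^{\flat}$ and $T_1 A^{\flat} = A'^{\flat}T_1$, obtained from $\tilde{A}'T = T\tilde{A}$ block-wise, along with Step (1)'s consequences $T_4^{\flat}T_1 = T_1 T_4^{\flat} = I$ and $C' = CT_1^{\flat}$. These allow the $j$-th instance of the identity to be rewritten in terms of primed quantities, matched to the primed base identity, and then pulled back via the (invertible, by minimality) primed cascade.

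\textbf{Main obstacle.} The hard part is the propagation of the base identity through higher powers of $-A^{\flat}$: although the collapse $\tilde{A}v = (-A^{\flat}D, A^{\flat}E)^T$ is pleasingly simple, iterating $\tilde{A}$ produces the cross-term $M_{k} = \sum_{j=0}^{k-1}A^{j}C^{\flat}V_{\mathrm{vac}}C(-A^{\flat})^{k-1-j}$, and these $C^{\flat}V_{\mathrm{vac}}C$ factors cannot simply be commuted past $A$ or $A^{\flat}$. An alternative and possibly cleaner route is to use the decomposition $v = T^{-1}(I,0)^{T} - (T_1^{\flat},0)^{T}$ together with $\tilde{A}^{k}T^{-1} = T^{-1}\tilde{A}'^{k}$ to rewrite
\begin{equation*}
\tilde{C}\tilde{A}^{k}v = \tilde{C}'\tilde{A}'^{k}\left(\begin{smallmatrix}I\\0\end{smallmatrix}\right) - \tilde{C}\tilde{A}^{k}\left(\begin{smallmatrix}T_1^{\flat}\\0\end{smallmatrix}\right),
\end{equation*}
so that the claim becomes an identity between the first block columns of the primed and unprimed observability matrices, intertwined by $T_1^{\flat}$. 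This identity can then be verified inductively using the similarity relations above and closed by full rank of $\mathcal{O}'$, which holds because the primed cascade is also minimal by Theorem \ref{games}.
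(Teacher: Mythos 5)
Your base case is correct (it is an equivalent derivation of the paper's, obtained from $\tilde{B}'=T\tilde{B}$ rather than from $\tilde{C}'=\tilde{C}T^{-1}$), and the observation that the base identity together with $A+A^{\flat}+C^{\flat}C=0$ collapses $\tilde{A}v$ to $\bigl(\begin{smallmatrix}-A^{\flat}D\\ A^{\flat}E\end{smallmatrix}\bigr)$ is a nice structural remark. Your alternative decomposition $v=T^{-1}\bigl(\begin{smallmatrix}I\\0\end{smallmatrix}\bigr)-\bigl(\begin{smallmatrix}T_1^{\flat}\\0\end{smallmatrix}\bigr)$ is in fact exactly the paper's reduction: it is equivalent to Eq.~\eqref{pod}, which turns the claim into the identity $\tilde{C}'\tilde{A}'^{k}\bigl(\begin{smallmatrix}I\\0\end{smallmatrix}\bigr)=\tilde{C}\tilde{A}^{k}\bigl(\begin{smallmatrix}I\\0\end{smallmatrix}\bigr)T_1^{\flat}$. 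The problem is that neither of your two routes actually proves the identity it reduces to. For the first route, note that if $P_j:=V_{\mathrm{vac}}C(-A^{\flat})^{j}D+C(-A^{\flat})^{j}E$ vanishes for $j<k$, then $\tilde{A}^{k}v=\bigl((-A^{\flat})^{k}D,\,-(-A^{\flat})^{k}E\bigr)^{T}$ and hence $P_k=-\tilde{C}\tilde{A}^{k}v$; so the ``propagated identity'' is not a simpler target but is literally the claim restated, and the reduction is circular. The intertwinings $A^{\flat}T_4^{\flat}=T_4^{\flat}A'^{\flat}$ and $A'^{\flat}T_1=T_1A^{\flat}$ come only from the $(1,1)$ and $(2,2)$ blocks of $\tilde{A}'T=T\tilde{A}$; all the content of the claim sits in the $(2,1)$ block, which couples $T_3$ to $C^{\flat}V_{\mathrm{vac}}C$, and you never use it. Likewise, ``closed by full rank of $\mathcal{O}'$'' is misplaced: full rank is what one invokes \emph{after} the claim, to pass from $\mathcal{O}v=0$ to $T_3=0$, $T_1=T_4$; it cannot help prove that $\mathcal{O}v=0$ in the first place.

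What is missing is the mechanism that lets the cross term $e_k=\sum_{j=0}^{k-1}A^{j}C^{\flat}V_{\mathrm{vac}}C(-A^{\flat})^{k-1-j}$ (your $M_k$, correctly flagged as the obstacle) be compared between the primed and unprimed systems. The paper resolves this by a telescoping recursion: using $A+A^{\flat}+C^{\flat}C=0$ repeatedly on $G_k:=-V_{\mathrm{vac}}C(-A^{\flat})^{k}+Ce_{k}$ one obtains $G_k=-\tilde{C}\tilde{A}^{k-1}\tilde{B}C+G_{k-1}A$, which unrolls to
\begin{equation*}
\tilde{C}\tilde{A}^{k}\left(\begin{smallmatrix}I\\0\end{smallmatrix}\right)+V_{\mathrm{vac}}CA^{k}=-\sum_{j=0}^{k-1}\tilde{C}\tilde{A}^{j}\tilde{B}\,CA^{k-1-j},
\end{equation*}
and identically for the primed system. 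The right-hand side is a sum of products of the Markov parameters $\tilde{C}\tilde{A}^{j}\tilde{B}=\tilde{C}'\tilde{A}'^{j}\tilde{B}'$ of the cascaded realisation (invariant, because the two realisations share the same power spectrum) with the quantities $CA^{k-1-j}$, which transform covariantly as $C'A'^{k-1-j}=CA^{k-1-j}T_1^{\flat}$. This is what forces $H'=HT_1^{\flat}$ and closes the induction. Without this identity (or an equivalent way of expressing the first block column of $\tilde{C}\tilde{A}^{k}$ through invariant Markov parameters), your argument does not go through.
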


\begin{proof}
Using the results of Appendix \ref{s1} we know that equivalent systems   are related via
\begin{equation}\label{start}
\tilde{C'}\tilde{A'}^k=\tilde{C}\tilde{A}^k\left(\begin{smallmatrix} T_4^{\flat}&0\\-T^{\flat}_3&T_1^{\flat}\end{smallmatrix}\right).
\end{equation} 
Also     note that the condition $C'A'^k=CA^kT^b_1$ holds.

We first see this result for $k=0$. Eq. \eqref{start} for $k=0$ reads 
$$\left(\begin{smallmatrix}-V_{\mathrm{vac}}C', C'\end{smallmatrix}\right)=\left(\begin{smallmatrix}-V_{\mathrm{vac}}CT_4^{\flat}-CT_3^{\flat}, CT_1^{\flat}\end{smallmatrix}\right).$$ 
Therefore, adding the first entry to $V_{\mathrm{vac}}$ times the second entry:
$$0=-V_{\mathrm{vac}}C\left(T_4^{\flat}-T_1^{\flat}\right)+C\left(-T_3^{\flat}\right),$$
which shows the result for $k=0$.

The result for $k\in\mathbb{N}$ goes along the same lines, but is a little more involved. Firstly, observe that $\tilde{A}^{k}$ may be written as
$$\tilde{A}^k=\left(\begin{smallmatrix}  \left(-A^{\flat}\right)^k &0\\ e_k &A^k\end{smallmatrix}\right),$$
where $e_k=A^0C^{\flat}V_{\mathrm{vac}}C\left(-A^{\flat}\right)^{k-1}+...+A^{k-1}C^{\flat}V_{\mathrm{vac}}C\left(-A^{\flat}\right)^{0}$ (and similarly for the primed matrices).
Now, from Eq.  \eqref{start} we have 
\begin{align}
\nonumber&\left(\begin{smallmatrix} -V_{\mathrm{vac}}C'\left(-A'^{\flat}\right)^k+C'A'^{k-1}C'^{\flat}V_{\mathrm{vac}}C'-C'e'_{k-1}A'^{\flat},&C'A'^k\end{smallmatrix}\right)\\
&\label{combo}=\left(-\begin{smallmatrix} V_{\mathrm{vac}}C\left(-A^{\flat}\right)^kT_4^b+CA^{k-1}C^{\flat}V_{\mathrm{vac}}CT_4^{\flat}-Ce_{k-1}A^{\flat}T_4^{\flat}-CA^kT_3^{\flat},&CA^kT_1^{\flat}\end{smallmatrix}\right). 
\end{align}
Again adding the first block to $V_{\mathrm{vac}}$ times the second block gives 
\begin{equation}\label{pod}
H'=\tilde{C}\tilde{A}^k\left(\begin{smallmatrix} T_4^{\flat}-T_1^{\flat}\\
-T^{\flat}_3 \end{smallmatrix}\right)+HT_1^b,
\end{equation}
where
\begin{align*}
H:&=-V_{\mathrm{vac}}C\left(-A^{\flat}\right)^k+CA^{k-1}C^{\flat}V_{\mathrm{vac}}C-Ce_{k-1}A^{\flat}+V_{\mathrm{vac}}CA^k\\
H':&=-V_{\mathrm{vac}}C'\left(-A'^{\flat}\right)^k+C'A'^{k-1}C'^{\flat}V_{\mathrm{vac}}C'-C'e'_{k-1}A'^{\flat}+V_{\mathrm{vac}}C'A'^k
\end{align*}

Now, observe that
\begin{align}
\nonumber H'&=V_{\mathrm{vac}}C'A'^k+\left(V_{\mathrm{vac}}C'\left(-A'^{\flat}\right)^{k-1}-C'e'_{k-1}\right)A'^{\flat}+C'A'^{k-1}C'^{\flat}V_{\mathrm{vac}}C'\\
\nonumber&=V_{\mathrm{vac}}C'A'^k+\left(V_{\mathrm{vac}}C'\left(-A'^{\flat}\right)^{k-1}-C'e'_{k-1}\right)\left(-A'-C'^{\flat}C'\right)\\
\nonumber&+C'A'^{k-1}C'^{\flat}V_{\mathrm{vac}}C'\\
\nonumber&=V_{\mathrm{vac}}C'A'^k+\left(-V_{\mathrm{vac}}C'\left(-A'^{\flat}\right)^{k-1}+C'e'_{k-1}   \right)A'
\\\nonumber& +  \left(-V_{\mathrm{vac}}C'\left(-A'^{\flat}\right)^{k-1}C'^{\flat}+C'e'_{k-1}C'^{\flat}+ C'A'^{k-1}C'^{\flat}V_{\mathrm{vac}}    \right)C'    \\
&\label{roff}=V_{\mathrm{vac}}C'A'^k+G'_{k-1}A'-\tilde{C}'\left(\tilde{A}'\right)^{k-1}\tilde{B}'C',
\end{align}
where $G'_k:=-V_{\mathrm{vac}}C'\left(-A'^{\flat}\right)^{k}+C'e'_{k}$. Here we have used the realisability condition $A+A^{\flat}+C^{\flat}C=0$ on the second line and then rearranged. 

Now, let us obtain a recursive expression for $G_k$. Firstly, using the definition of $e_k$ and the substitution $A'+A'^{\flat}+C'^{\flat}C'=0$:

\begin{align*} 
G'_k&=-V_{\mathrm{vac}}C'\left(-A'^{\flat}\right)^{k}   +    \sum^{k-1}_{j=0} C'A'^{k-1-j}C'^{\flat}V_{\mathrm{vac}}C'\left(-A'^{\flat}\right)^{j} 
\\
&=-V_{\mathrm{vac}}C'\left(-A'^{\flat}\right)^{k-1}\left(A'+C'^{\flat}C'\right)
\\&+  \sum^{k-1}_{j=1}  C'A'^{k-1-j}C'^{\flat}V_{\mathrm{vac}}C'\left(-A'^{\flat}\right)^{j-1} \left(A'+C'^{\flat}C'\right)   \\
&            +
C'A'^{k-1}C'^{\flat}V_{\mathrm{vac}}C'\left(-A'^{\flat}\right)^{0}
\end{align*}
Rearranging this and using the definition of $e_k$ again we obtain 
\begin{align*}
G'_k&=\left(   -V_{\mathrm{vac}}C'\left(-A'^{\flat}\right)^{k-1}C'^{\flat}+ C'A'^{k-1}C'^{\flat}V_{\mathrm{vac}}\right.\\
&\left.+C'\left[ \sum^{k-2}_{j=0}   A'^{k-2-j}C'^{\flat}V_{\mathrm{vac}}C'\left(-A'^{\flat}\right)^{j}
   \right]C'^{\flat}      \right)C'\\
&+\left(   -V_{\mathrm{vac}}C'\left(-A'^{\flat}\right)^{k-1}  +C'\sum^{k-2}_{j=0}  A'^{k-2-j}C'^{\flat}V_{\mathrm{vac}}C'\left(-A'^{\flat}\right)^{j}\right)A'\\
&
=\left( -V_{\mathrm{vac}}C'\left(-A'^{\flat}\right)^{k-1}C'^{\flat}+ C'A'^{k-1}C'^{\flat}V_{\mathrm{vac}}+C'e'_{k-1}C'^{\flat}\right)C'\\
&+\left( -V_{\mathrm{vac}}C'\left(-A'^{\flat}\right)^{k-1} +C'e'_{k-1}\right)A'\\
&=-\tilde{C}'\tilde{A}'^{k-1}\tilde{B}'C'+G_{k-1}A'.
\end{align*}
%
%
%
Also note that 
\begin{align*}
G_1&=V_{\mathrm{vac}}C'A'^{\flat}+C'C'^{\flat}V_{\mathrm{vac}}C'\\
&=-V_{\mathrm{vac}}C'A'+\left(-V_{\mathrm{vac}}C'C'^{\flat}+C'C'^{\flat}V_{\mathrm{vac}}\right)C'\\
&=-V_{\mathrm{vac}}C'A'-\tilde{C}'\tilde{B}'C'.
\end{align*}

Using our recursive expression for $G_k$, and continuing on from Eq.  \eqref{roff} we have 
\begin{align}\label{nun}
\nonumber H'=&V_{\mathrm{vac}}C'A'^k-\tilde{C}'\tilde{A}'^{k-1}\tilde{B}'C'+G'_{k-1}A'\\
&=V_{\mathrm{vac}}C'A'^k-\tilde{C}'\tilde{A}'^{k-1}\tilde{B}'C'-\tilde{C}'\tilde{A}'^{k-2}\tilde{B}'C'A'+G_{k-2}\tilde{A}'^2\nonumber\\
&\,\,\vdots \quad \quad\quad \quad\quad   \vdots \quad\quad\quad\quad\quad \vdots\quad \quad\quad \quad\quad   \vdots\nonumber\\
&=V_{\mathrm{vac}}C'A'^k-\sum^{k-1}_{j=1}\tilde{C}'\tilde{A}'^{j}\tilde{B}'C'A'^{k-1-j}+G_1A'^{k-1}\nonumber\\
&=V_{\mathrm{vac}}C'A'^k-  \sum^{k-1}_{j=0}\tilde{C}'\tilde{A}'^{j}\tilde{B}'C'A'^{k-1-j}                -V_{\mathrm{vac}}C'A'^k\nonumber\\
&=-\sum^{k-1}_{j=0}\tilde{C}'\tilde{A}'^{j}\tilde{B}'C'A'^{k-1-j}         .
\end{align}
%
%
%
%
%
Furthermore, as $\tilde{C}'\tilde{A}'^{k-1}\tilde{B}'=\tilde{C}\tilde{A}^{k-1}\tilde{B}$ for all $k$ and $C'A'^k=CA^kT^b_1$,  then we may conclude that 
\begin{align}\label{compare1}
H'=-  \left(   \sum^{k-1}_{j=0}\tilde{C}\tilde{A}^{j}\tilde{B}CA^{k-1-j}                    \right)T_1^{\flat}.
\end{align}

On the other hand, by using an identical argument to above, 
\begin{equation}\label{compare2}
H=-\sum^{k-1}_{j=0}\tilde{C}\tilde{A}^{j}\tilde{B}CA^{k-1-j}.      
\end{equation}  
Therefore, using Eqs \eqref{compare1} and \eqref{compare2} in \eqref{pod} completes the proof. 
\end{proof}

\section{Step (3)} To show that the system is doubled-up we use the observability of the quantum system. Observe that $C_1A_1^k$, $C_2A_2^k$ must be of the of this doubled up form for $k\in\{0,1,2,...\}$. Writing $C_1A_1^k$, $C_2A_2^k$ and $T_1$ as $\left(\begin{smallmatrix} P_{(k)} &Q_{(k)}\\{Q}^{\#}_{(k)}&{P}^{\#}_{(k)}\end{smallmatrix}\right)$, $\left(\begin{smallmatrix} P'_{(k)} &Q'_{(k)}\\{Q}^{'\#}_{(k)}&{P}^{'\#}_{(k)}\end{smallmatrix}\right)$ and $T_1=  \left(\begin{smallmatrix} S_1 &S_2\\S_3&S_4\end{smallmatrix}\right)$, and using the result, $C_1A_1^k= C_2A_2^k T_1^\flat $, 
it follows that 
\[P_{(k)}(S_1^{\dag}-S_4^T)+Q_{(k)}(S_3^T-S_2^{\dag})=0\]
\[{Q}^{\#}_{(k)}(S_1^{\dag}-S_4^T)+{P}^{\#}_{(k)}(S_3^T-S_2^{\dag})=0.\]
Hence 
\[\mathcal{O}\left[\begin{smallmatrix}    S^{\dag}_1-S_4^T\\  S_3^T-S_2^{\dag}\end{smallmatrix}\right]=0\] and by using the fact that $\mathcal{O}$ is full rank implies that 
\[T_1=\left(\begin{smallmatrix} S_1 &S_2\\{S}^{\#}_2&{S}^{\#}_1\end{smallmatrix}\right).\] 

\chapter{Finding a Classical Realisation of the Power Spectrum}\label{nog}

We assume that the matrix  $A$ for   the $n$-mode minimal system, $(A, C)$, possesses  $2n$ distinct eigenvalues each with non-zero imaginary part. This requirement  cis generic in the space of all quantum systems \cite{Nurdin3}. 

Firstly, observe that if  $\lambda_i$ is a complex eigenvalue of $A$ with right eigenvector $\left(\begin{smallmatrix}R_i\\S_i\end{smallmatrix}\right)$ and left eigenvector  $\left(U_i, V_i\right)$, then $\overline{\lambda}_i$ also an eigenvalue with right eigenvector $\left(\begin{smallmatrix}{S}_i\\{R}_i\end{smallmatrix}\right)^{\#}=\Sigma{\left(\begin{smallmatrix}R_i\\S_i\end{smallmatrix}\right)}^{\#}$ and left eigenvector $   \left({V}_i, {U}_i\right)^{\#}= {\left(U_i, V_i\right)}^{\#}\Sigma_n$, where     $R_i, S_i\in\mathbb{C}^{1\times n}$, $U_i, V_i\in\mathbb{C}^{n\times1}$ and $\Sigma_n:=\left(\begin{smallmatrix} 0_n&1_n\\1_n&0_n\end{smallmatrix}\right)$. 
This property follows from the fact that $A$ has the doubled-up form  $A:=\Delta\left(A_{-}, A_{+}\right)$.
Furthermore, from the  system  \eqref{cask} $\tilde{A}$ may be diagonalised as $\tilde{A}=P\tilde{A}_0P^{-1}$ where 
$$\tilde{A}_0=\left(\begin{smallmatrix}-{A}^{\flat}_0&0\\0&A_0\end{smallmatrix}\right)$$
and $A_0$ is diagonal and doubled-up. Here $P$ and $P^{-1}$ are lower block triangular (Lemma \ref{hud})
written as 
$$P=\left(\begin{smallmatrix} P_1&0\\P_2&P_3\end{smallmatrix}\right) \quad \mathrm{and}\quad
P^{-1}=\left(\begin{smallmatrix} P^{-1}_1&0\\-P_3^{-1}P_2P_1^{-1}&P^{-1}_3\end{smallmatrix}\right),$$
where 
$$P_3=\left(\begin{smallmatrix}R_1&\hdots &R_n&S_1&\hdots&S_n\\ {S}^{\#}_1&\hdots&{S}^{\#}_n&{R}^{\#}_1&\hdots&{R}^{\#}_n\end{smallmatrix}\right) \quad\mathrm{and}\quad
P^{-1}_1=\left(\begin{smallmatrix}U_1&V_1\\\vdots&\vdots\\
U_n&V_n\\
 {V}^{\#}_1&{U}^{\#}_1\\
 \vdots&\vdots\\
 {V}^{\#}_n&{U}^{\#}_n\end{smallmatrix}\right).
$$
Hence, the power spectrum, $\Psi(s)J$, of the system in Eq. \eqref{cask} may be written 
\begin{align}
\label{gut}
V_{\mathrm{vac}}-\left(-V_{\mathrm{vac}}CP_1+CP_2, CP_3\right)\left(\begin{smallmatrix}s+A^{\flat}_0&0\\0 &s-A_0\end{smallmatrix}\right)\left(\begin{smallmatrix}P_1^{-1}C^{\flat}\\-P_3^{-1}P_2P_1^{-1}C^{\flat}+ P_3^{-1}C^{\flat}V_{\mathrm{vac}}\end{smallmatrix}\right).
\end{align}

We can construct a minimal realisation called \textit{Gilbert's realisation} \cite{Zhou1} by expanding as partial fractions:
\begin{equation}\label{gut1}
\Psi(s)J=V_{\mathrm{vac}}+\sum_{i=1}^n \frac{I_i}{(s+\overline{\lambda}_i)} +\frac{K_i}{(s+\lambda_i)}
+\frac{T_i}{(s-\lambda_i)}+\frac{W_i}{(s-\overline{\lambda}_i)},
\end{equation}
with $\mathrm{Re}(\lambda_i)<0$.
The matrices $I_i, K_i, T_i, W_i$ are necessarily rank-one. Therefore there exist matrices $B_{1,i}, B_{2,i}, B'_{1,i}, B'_{2,i}\in\mathbb{C}^{1\times 2m}$ and $C_{1,i}, C_{2,i}, C'_{1,i}, C'_{2,i}\in\mathbb{C}^{2m\times 1}$  such that 
\begin{equation*}
C_{1,i}B_{1,i}=I_i, C'_{1,i}B'_{1,i}=K_i\quad \,\,\,\mathrm{and}\,\,\, C_{2,i}B_{2,i}=T_i, \quad C'_{2,i}B'_{2,i}=W_i
\end{equation*}
and are each uniquely determined from $I_i, K_i, T_i, W_i$ up to a constant\footnote{For example $\frac{1}{\nu}C_{1,i}$ and $\nu B_{1,i}$ are also solutions to $I_i$, where $\nu$ is a constant.}.
The Gilbert realisation $\tilde{A}_0, \tilde{B}_0, \tilde{C}_0$ is 
\begin{equation*}
\tilde{A}_0:=\mathrm{Diag}\left(-\overline{\lambda}_1, ..., -\overline{\lambda}_n, -\lambda_1, ..., -\lambda_n, \lambda_1, ..., \lambda_n,\overline{\lambda}_1,..., \overline{\lambda}_n\right),
\end{equation*}
\begin{equation*}
\tilde{B}_0:=\left[\begin{smallmatrix}B_1\\B_2\end{smallmatrix}\right], \quad 
\tilde{C}_0:= 
\left[C_1, C_2\right]
\end{equation*}
where 
\begin{align*}
& B_1:=\left[\begin{smallmatrix}B_{1,1}\\\vdots\\B_{1,n}\\B'_{1,1}\\\vdots\\B'_{1,n}\end{smallmatrix}\right] 
\quad 
B_2:=\left[\begin{smallmatrix}B_{2,1}\\\vdots\\B_{2,n}\\B'_{1,1}\\\vdots\\B'_{1,n}\end{smallmatrix}\right],   \\
 &C_1:=\left[\begin{smallmatrix}C_{1,1}&\hdots&C_{1,n}&C'_{1,1}&\hdots&C'_{1,n}\end{smallmatrix}\right], \\& C_2:=\left[\begin{smallmatrix}C_{2,1}&\hdots&C_{2,n}&C'_{2,1}&\hdots&C'_{2,n}\end{smallmatrix}\right] .
\end{align*}
At the moment this Gilbert realisation doesn't satisfy the properties required by Sec. \ref{256}, i.e., $B_1$ and $C_2$ are not doubled-up. We can take care of this in the following way. 
Firstly, in this realisation  $I_i$ is equal to the $i^{\mathrm{th}}$  column of $\left(-V_{\mathrm{vac}}CP_1+CP_2\right)$ multiplied by the $i^{\mathrm{th}}$  row of $P_1^{-1}C^{\flat}$ and $K_i$ is equal to the $(n+i)^{\mathrm{th}}$ column of $\left(-V_{\mathrm{vac}}CP_1+CP_2\right)$ multiplied by the $(n+i)^{\mathrm{th}}$ row of $P_1^{-1}C^{\flat}$ (see Eq. \eqref{gut}).  Therefore,  the $i^{\mathrm{th}}$ row of $B_1$ differs from the $i^{\mathrm{th}}$ row of the doubled-up matrix $P_1^{-1}C^{\flat}$ by an (unknown) multiplicative constant.  
Finally, by   multiplying the rows of $B_1$ in our Gilbert realisation by suitable constants (and hence multiplying the corresponding columns of $C_1$ by the inverse of these constants so that the power spectrum remains unchanged) we can obtain a doubled-up $B_1$. A similar technique may be used to obtain a doubled-up $C_2$ by using the fact that $CP_3$ is doubled-up.

\chapter{Proof of Theorem \ref{twomode}}\label{nike}

In this section we prove Theorem \ref{twomode}.

\begin{proof}
Firstly, by Theorem \ref{equivalence}, there exists a TFE basis whereby the system can be split into a series product of a system with a pure stationary state and one with a mixed (with the output) stationary state. Furthermore, we may assume that the pure and mixed component may be decomposed as a series of one-mode PQLSs by the results in Sec. \ref{seriesp}. There are three cases to consider: (i) the system is globally minimal and the stationary state is fully mixed, (ii) pure component is a one mode PQLS or (iii) pure component is a two-mode PQLS and the power spectrum is trivial. 

In this proof we are interested in cases (ii) and (iii). Firstly case( ii) is straightforward as it reduces to Lemma \ref{onemode}.  Now for case (iii), let the first system in the cascade be $(c,\Omega_1)$ and the second be $(d, \Omega_2)$. For the power spectrum to be trivial it is required that 
\[\Xi(-i\omega)N^T=N^T\Xi(-i\omega) \,\, \mathbf{and} \,\, \Xi(-i\omega)M=M\overline{\Xi(+i\omega)}.
\]
Now, considering the poles of $\Xi(-i\omega)M=M\overline{\Xi(+i\omega)}$ we must have either:
\begin{enumerate}
\item $\Omega_1=-\Omega_2$ and $c^{\dag}c=d^{\dag}d$, or
\item $\Omega_1=-\Omega_2=0$, or
\item $\Omega_1\neq-\Omega_2$
\end{enumerate}

\underline{Case 1)}:

In this case the denominators of $\Xi(-i\omega)N^T=N^T\Xi(-i\omega) \,\, \mathbf{and} \,\, \Xi(-i\omega)M=M\overline{\Xi(+i\omega)}$ are equal, therefore we may equate their numerators. Expanding in powers of $\omega$ we obtain the following set of equations:
\begin{align}
\label{once}& \left(cc^{\dag}+dd^{\dag}\right)N^T=N^T\left(cc^{\dag}+dd^{\dag}\right)
\end{align}
\begin{align}
\nonumber \left(i\Omega_1+\frac{1}{2}c^{\dag}c-cc^{\dag}\right)&\left(-i\Omega_1+\frac{1}{2}d^{\dag}d-dd^{\dag}\right)N^T\\
&=N^T\left(i\Omega_1+\frac{1}{2}c^{\dag}c-cc^{\dag}\right)\left(-i\Omega_1+\frac{1}{2}d^{\dag}d-dd^{\dag}\right)
\end{align}
\begin{align}
\label{fors}&\left(cc^{\dag}+dd^{\dag}\right)M=M\overline{\left(cc^{\dag}+dd^{\dag}\right)}
\end{align}
\begin{align}
\nonumber\left(i\Omega_1+\frac{1}{2}c^{\dag}c-cc^{\dag}\right)&\left(-i\Omega_1+\frac{1}{2}d^{\dag}d-dd^{\dag}\right)M\\
&=M\left(-i\Omega_1+\frac{1}{2}c^{\dag}c-\overline{cc^{\dag}}\right)\left(i\Omega_1+\frac{1}{2}d^{\dag}d-\overline{dd^{\dag}}\right).
\end{align}
The second equation here may be further split into real and imaginary parts. Using also the condition $c^{\dag}c=d^{\dag}d$  and the assumption $\Omega\neq0$ we obtain the following pair of equations: 
\begin{align}
&\label{twice}\left(cc^{\dag}-dd^{\dag}\right)N^T=N^T\left(cc^{\dag}-dd^{\dag}\right)\\
&\label{thrice}\left(cc^{\dag}dd^{\dag}-\frac{1}{2}c^{\dag}cdd^{\dag}-\frac{1}{2}d^{\dag}dcc^{\dag}\right)N^T=N^T\left(cc^{\dag}dd^{\dag}-\frac{1}{2}c^{\dag}cdd^{\dag}-\frac{1}{2}d^{\dag}dcc^{\dag}\right)
\end{align}
Similarly, the fourth equation may be split into symmetric and antisymmetric parts, i.e.,
\begin{align}
\label{fives}&\left(cc^{\dag}-dd^{\dag}\right)M=-M\overline{\left(cc^{\dag}-dd^{\dag}\right)}\\
\label{sixes}&\left(cc^{\dag}dd^{\dag}-\frac{1}{2}c^{\dag}cdd^{\dag}-\frac{1}{2}d^{\dag}dcc^{\dag}\right)M=M\overline{\left(cc^{\dag}dd^{\dag}-\frac{1}{2}c^{\dag}cdd^{\dag}-\frac{1}{2}d^{\dag}dcc^{\dag}\right)}
\end{align}
Therefore combing \eqref{once} \eqref{twice} and \eqref{thrice} we obtain $cc^{\dag}N^T=Ncc^{\dag}$, $dd^{\dag}N^T=N^Tdd^{\dag}$ and $cc^{\dag}dd^{\dag}N^T=N^Tcc^{\dag}dd^{\dag}$. Further, combining \eqref{fors} \eqref{fives} and \eqref{sixes} we obtain $cc^{\dag}M=M\overline{dd{\dag}}$,  and $cc^{\dag}dd^{\dag}M=M\overline{cc^{\dag}dd^{\dag}}$. Hence we obtain conditions (3) in the Theorem.

\underline{Case 2)}

The result is equations \eqref{once} \eqref{thrice} \eqref{fors} and \eqref{sixes}. Hence we obtain conditions   (4) in the Theorem.  

\underline{Case 3)}

Because the poles are different in $\Xi(-i\omega)M=M\overline{\Xi(+i\omega)}$ then it must be the case that we have two one-mode cancellations as in Lemma \eqref{onemode}. That is, $\Xi_i(-i\omega)N^T\Xi_i(-i\omega)^{\dag}=N^T \,\, \mathbf{and} \,\, \Xi_i(-i\omega)M\Xi_i(-i\omega)^T=M$ for systems $i=1,2$. This gives either conditions (1) or conditions (2) in the  Theorem. 
\end{proof}

\chapter{Supplementary Proof for Sec. \ref{thermy}} \label{noobs}

We show here that non-observability of the cascaded system \eqref{cask2} implies that the original system is non-globally minimal for the case of distinct eigenvalues. Recall from Sec. \ref{thermy} that non-observability implies that there exists an eigenvalue-eigenvector pair $\lambda$, $y$ of $A^{\dag}$, i.e., $A^{\dag}y=\lambda y$ such that $Cy$ is also an eigenvector of the input $N$. 

Now, list all the left and right eigenvectors of $A$ as $L_i$ and $R_i$ respectively. Therefore we can write $\Upsilon(-i\omega)$  as 
$$\Upsilon(-i\omega)=\left(1-\sum_i \frac{(CR_i)(L_iC^{\dag})}{-i\omega-\lambda_i}\right)N\left(1-\sum_i\frac{ (CL_i^{\dag})(R_i^{\dag}C^{\dag})}{i\omega-\overline{\lambda}_i}\right)$$

Now, suppose that the cascaded system \eqref{cask2} is non-observable as per the assumption above. This means that $CL^{\dag}_i$ is an eigenvector of $N$ for some $i$. Suppose that this is true for $i=k$. It follows fairly straightforwardly by using the unitarity of the transfer function that 
$$\Upsilon(-i\omega)=\left(1-\sum_{i\neq k} \frac{(CR_i)(L_iC^{\dag})}{-i\omega-\lambda_i}\right)N\left(1-\sum_{i\neq k}\frac{ (CL_i^{\dag})(R_i^{\dag}C^{\dag})}{i\omega-\overline{\lambda}_i}\right).$$ 
Moreover, the function $\left(1-\sum_{i\neq k} \frac{(CR_i)(L_iC^{\dag})}{-i\omega-\lambda_i}\right)$ must also be unitary. Hence by \cite{Petersen2} there exist a physical QLS with this transfer function. Crucially, this system has one less mode than the original system, which implies that the original PQLS cannot be globally minimal.

\chapter{Proof of Theorem \ref{Q.F.I}}\label{ONCE1}
Before proving this theorem we discuss a little theory. 

The operators $ \mathbf{B}_{i}(t)$ introduced in Sec. \ref{Bosonic} are the quantum analogue of `classical' Wiener process and can be used to define \textit{quantum stochastic integrals}, such as
$$\mathbf{I}(t)=\int^t_0 d\mathbf{B}^{\dag}(s) \mathbf{M}(s)+\mathbf{N}^{\dag}(s)d\mathbf{B}(s)+\mathbf{P}(s)ds$$
\cite{Bouten1, Bouten2, Guta4}, where $ \mathbf{M}(s),  \mathbf{N}(s),  \mathbf{P}(s)$ are time-adaptive operators.
When multiplying stochastic integrals, $\mathbf{I}_1(t)$ and $\mathbf{I}_2(t)$, the product is a stochastic integral with increment 
$$d\left(\mathbf{I}_1(t)\mathbf{I}_2(t)\right)=d\mathbf{I}_1(t)\cdot \mathbf{I}_2(t)+\mathbf{I}_1(t)\cdot d\mathbf{I}_2(t)+d\mathbf{I}_1(t)\cdot d\mathbf{I}_2(t).$$ Notice the extra \textit{Ito correction} term, which is not present in ordinary calculus. The Ito term can be calculated by using the Ito rules \cite{Parth1}:
\[
\begin{tabular}{c|cc}\centering
$\times$ & $d\mathbf{A}_k$  & $d\mathbf{A}^*_k$  \\
\hline
$d\mathbf{A}_i$ & $(\delta_{ik}+N_{ki})dt$ &$M_{ik}dt$ \\
$d\mathbf{A}^*_j$          & $\overline{M}_{ki}dt$&$N_{ik}dt $ \\
\end{tabular}\]
Notice that the Ito rules depends on the input $V(N, M)$.

\begin{proof}[Proof of Theorem \ref{Q.F.I}]\label{cherry}
The evolution of the system and field is described by the unitary $\mathbf{U}_{\theta}(t)$, which satisfies the QSDE
\begin{equation}\label{QSDE27}
d\mathbf{U}_{\theta}(t)=\left(\sum_{i=1}^m{\mathbf{L}}_id{\mathbf{B}}^*_i(t)-{\mathbf{L}}_{i}^{*}d{\mathbf{B}}_i(t)-({\mathbf{K}}+i{\mathbf{H}})dt
\right){U}_{\theta}(t),
\end{equation}
where $\mathbf{K}=\frac{1}{2}\left(\mathbf{L}^{\#},\mathbf{L}^T\right)JV(N,M)J\left(\mathbf{L}^T, \mathbf{L}^{\dag}\right)=\frac{1}{2}\tilde{\mathbf{L}}^{\dag}\tilde{\mathbf{L}}$ for modified coupling operator given in the theorem. Note that these modified coupling operators in $\mathbf{K}$ are the only difference between Eqs. \eqref{QSDE27} and   \eqref{eq.unitary.cocycle} (the unmodified ones, as in Eq. \eqref{eq.unitary.cocycle} correspond to the case of vacuum input).

Now by considering the generator $\mathbf{G}_{\theta}(t):=\mathbf{U}^*_{\theta}(t)\dot{\mathbf{U}}_{\theta}(t)$, the QFI, $F_{\theta}$, in Eq. \eqref{qfiformula} may be written as ( we drop the subscript $\theta$ here)  \[F_{\theta}=4\mathrm{Re}\left(\braket{\phi\otimes \xi| \mathbf{G}^*(t)\mathbf{G}(t)|\phi\otimes \xi}     -  \braket{\phi\otimes \xi|\mathbf{G}^*(t)|\phi\otimes \xi} \braket{\phi\otimes \xi|\mathbf{G}(t)|\phi\otimes \xi}\right).\]   
The method we take is to show that the generator can be written as a QSDE, from which it may be solved.  From the QSDE \eqref{QSDE27} we have
\begin{align*}
d\mathbf{U}^*(t)&=\mathbf{U}^*(t)\left(\sum_{i=1}^m{\mathbf{L}}^*_id{\mathbf{B}}_i(t)-{\mathbf{L}}_{i}d{\mathbf{B}}^*_i(t)-({\mathbf{K}}-i{\mathbf{H}})dt
\right),\\
d\dot{\mathbf{U}}(t)&=\left(\sum_{i=1}^m{\mathbf{L}}_id{\mathbf{B}}^*_i(t)-{\mathbf{L}}_{i}^{*}d{\mathbf{B}}_i(t)-(\mathbf{K}+i\mathbf{H})dt
\right)\dot{\mathbf{U}}(t)\\
&+
\left(\sum_{i=1}^m\dot{\mathbf{L}}_id{\mathbf{B}}^*_i(t)-\dot{\mathbf{L}}_{i}^{*}d{\mathbf{B}}_i(t)-(\dot{\mathbf{K}}+i\dot{\mathbf{H}})dt
\right){\mathbf{U}}(t).
 \end{align*}
Hence by applying the Ito rules we obtain (dropping the subscript $\theta$)
\begin{align*}
d\mathbf{G}_{}(t)&=\left(d\mathbf{U}^*(t)\right)\dot{\mathbf{U}}(t) + \mathbf{U}^*(t)\left(d\dot{\mathbf{U}}(t)\right)+\left(d\mathbf{U}^*(t)\right) \left(d\dot{\mathbf{U}}(t)\right)  \\
&=\sum_{i=1}^m\left( j_t(\dot{\mathbf{L}}_i)d\mathbf{B}_i^*(t) -    j_t(\dot{\mathbf{L}}^*_i)d\mathbf{B}_i(t)\right)
\\&+\mathbf{U}^*(t)\left(-2\mathbf{K}dt+\left[    \sum_{i=1}^m{\mathbf{L}}^*_id{\mathbf{B}}_i(t)-{\mathbf{L}}_{i}d{\mathbf{B}}^*_i(t)\right]\left[\sum_{i=1}^m{\mathbf{L}}_id{\mathbf{B}}^*_i(t)-{\mathbf{L}}_{i}^{*}d{\mathbf{B}}_i(t)\right]\right)\dot{\mathbf{U}}(t)\\
&+j_t\left(-(\dot{\mathbf{K}}+i\dot{\mathbf{H}})+  \left[\sum_{i=1}^m{\mathbf{L}}^*_id{\mathbf{B}}_i(t)-{\mathbf{L}}_{i}d{\mathbf{B}}^*_i(t)\right]\left[   \sum_{i=1}^m\dot{\mathbf{L}}_id{\mathbf{B}}^*_i(t)-\dot{\mathbf{L}}_{i}^{*}d{\mathbf{B}}_i(t)\right]\right)dt
\end{align*}
As $\mathbf{K}=\frac{1}{2}\tilde{\mathbf{L}}^{\dag}\tilde{\mathbf{L}}$ , the second term above is zero and so we may write  
\begin{align*}
d\mathbf{G}_{}(t)&=
\sum_{i=1}^m\left( j_t(\dot{\mathbf{L}}_i)d\mathbf{B}_i^*(t)-    j_t(\dot{\mathbf{L}}^*_i)d\mathbf{B}_i(t)\right)-ij_t(\mathbf{R})dt,
\end{align*}
where the operator $\mathbf{R}$ is given by
 \begin{equation}\label{sevec}
 \mathbf{R}= \dot{\mathbf{H}}+\mathrm{Im}\sum_{i=1}^m\dot{\tilde{\mathbf{L}_i}}^{\dag}\tilde{\mathbf{L}_i}. 
 \end{equation}
 In the following we work with the centred generator, $\mathbf{G}_0(t)$, given by 
\begin{align*}
d\mathbf{G}_{0}(t)&=
\sum_{i=1}^m\left( j_t(\dot{\mathbf{L}}_i)d\mathbf{B}_i^*(t)-    j_t(\dot{\mathbf{L}}^*_i)d\mathbf{B}_i(t)\right)-ij_t(\mathbf{R}_0)dt,
\end{align*}
where $\mathbf{R}_0:=\mathbf{R}-\left<\mathbf{R}\right>_{\rho_{ss}}\mathds{1},$
so that (by ergodicity) its rescaled mean converges to zero, i.e., 
\[\lim_{t \to \infty}\frac{1}{t}\braket{\phi\otimes \xi|\mathbf{G}_0(t)|\phi\otimes \xi}=0. \]
Such centred operators are examples of a more general class of operators, called output \textit{fluctuation operators} (see \cite{Guta4}).
Now, using this fluctuation operator the QFI scales linearly with $t$ by ergodicity and the leading contribution is given by the \textit{quantum Fisher-information rate}
\[f_{\theta}:=\lim_{t\to\infty}\frac{F_\theta}{t}=\lim_{t\to\infty}\frac{1}{t}4\mathrm{Re} \braket{\phi\otimes \xi| \mathbf{G}_0^*(t)\mathbf{G}_0(t)|\phi\otimes \xi}   .\]
For notational convenience, we absorb a factor of $t^{-1/2}$ into $\mathbf{G}_0$, so that the rate is given by 
\[f_{\theta}=\lim_{t\to\infty}4\mathrm{Re}\int_0^t\braket{\phi\otimes \xi| d\left(\mathbf{G}_0^*(s)\mathbf{G}_0(s)\right)|\phi\otimes \xi} \]
for fluctuation operator
\[\mathbf{G}_0(t)=\frac{1}{\sqrt{t}}\int_0^t\left(\sum_{i=1}^m\left( j_t(\dot{\mathbf{L}}_i)d\mathbf{B}_i^*(s)-    j_t(\dot{\mathbf{L}}^*_i)d\mathbf{B}_i(s)\right)-ij_t(\mathbf{R}_0)ds\right).\]

We will now calculate the rate. Firstly, the differential $d\left(\mathbf{G}_0^*(s)\mathbf{G}_0(s)\right)$ can be written
\begin{equation}\label{split}
d\left(\mathbf{G}_0^*(s)\mathbf{G}_0(s)\right)=\mathbf{G}_0^*(s)\cdot d\mathbf{G}_0(s)+d\mathbf{G}_0^*(s)\cdot \mathbf{G}_0(s)+ d\mathbf{G}_0^*(s)\cdot d\mathbf{G}_0(s).
\end{equation}
Let us  calculate these terms in turn.
For the last term 
\begin{align}
\nonumber
&\int_0^t\braket{\phi\otimes \xi| d\mathbf{G}_0^*(s)\cdot d \mathbf{G}_0(s)|\phi\otimes \xi} \\&=\frac{1}{t}\int_0^t
\braket{\phi\otimes \xi| \left[    \sum_{i=1}^mj_s(\dot{\mathbf{L}}^*_i)d{\mathbf{B}}_i(s)-j_s(\dot{\mathbf{L}}_{i})d{\mathbf{B}}^*_i(s)\right]\left[\sum_{i=1}^mj_s(\dot{\mathbf{L}}_i)d{\mathbf{B}}^*_i(s)-j_s(\dot{\mathbf{L}}_{i}^{*})d{\mathbf{B}}_i(s)\right]     |\phi\otimes \xi}\\
&=\frac{1}{t}\int_0^t \braket{\phi|T_s\left(\dot{\tilde{\mathbf{L}}}^{\dag}\dot{\tilde{\mathbf{L}}}\right)|\phi} \overset{t\to\infty}{\longrightarrow}\left<\dot{\tilde{\mathbf{L}}}^{\dag}\dot{\tilde{\mathbf{L}}}\right>_{\rho_{ss}}. \label{qfi1}
\end{align}
 The remaining two terms in Eq. (\ref{split}) are slightly more involved and require the following Lemma.

\begin{Lemma}
\begin{equation}\label{lemmar} 
{\sqrt{s}}    \braket{\xi|\mathbf{G}^*(s)j_s(-i\mathbf{R}_0)|\xi}=\int^s_0T_r\circ\Upsilon\circ T_{s-r}(-i\mathbf{R}_0)dr,
\end{equation}
where $\Upsilon(\mathbf{X})=i\mathbf{R}_0\mathbf{X}+\sum_{i=1}^m\dot{\tilde{\mathbf{L}}}^{\dag}_i\left[\mathbf{X}, \tilde{\mathbf{L}}_i\right]$.
\end{Lemma}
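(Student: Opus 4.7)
The plan is to prove the identity by introducing an auxiliary process that interpolates between the two sides and applying the quantum Ito product rule. Let $\mathbf{X}_0 := -i\mathbf{R}_0$ and, to avoid the normalisation factor throughout the calculation, work with the un-rescaled fluctuation $\mathbf{M}(r) := \sqrt{r}\,\mathbf{G}(r)$, whose increment is $d\mathbf{M}(r) = \sum_i [j_r(\dot{\mathbf{L}}_i)d\mathbf{B}_i^*(r) - j_r(\dot{\mathbf{L}}_i^*)d\mathbf{B}_i(r)] - i j_r(\mathbf{R}_0) dr$. Define
$$\mathbf{F}(r) := \mathbf{M}^*(r)\, j_r\bigl(T_{s-r}(\mathbf{X}_0)\bigr),\qquad r\in[0,s],$$
so that $\mathbf{F}(0)=0$ and $\mathbf{F}(s) = \sqrt{s}\,\mathbf{G}^*(s) j_s(\mathbf{X}_0)$. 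The LHS of the Lemma is therefore $\langle\xi|\int_0^s d\mathbf{F}(r)|\xi\rangle$, and it remains to evaluate the field expectation of $d\mathbf{F}(r)$ and identify it with $T_r\circ\Upsilon\circ T_{s-r}(\mathbf{X}_0)$.

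The key observation is that $r\mapsto j_r(T_{s-r}(\mathbf{X}_0))$ has no drift. Indeed, writing $\mathbf{Y}(r):=T_{s-r}(\mathbf{X}_0)$, the chain rule gives $\dot{\mathbf{Y}}(r) = -\mathcal{L}(\mathbf{Y}(r))$, while $dj_r(\mathbf{Y})$ with $\mathbf{Y}$ held fixed carries the drift $j_r(\mathcal{L}(\mathbf{Y}))dr$; the two cancel, leaving only the martingale part
$$dj_r\bigl(T_{s-r}(\mathbf{X}_0)\bigr) = \sum_i j_r\bigl([\mathbf{Y}(r),\mathbf{L}_i]\bigr) d\mathbf{B}_i^*(r) + j_r\bigl([\mathbf{L}_i^*,\mathbf{Y}(r)]\bigr) d\mathbf{B}_i(r).$$
Expanding $d\mathbf{F}(r)$ by the Ito product rule produces three pieces: (a) $d\mathbf{M}^*(r)\cdot j_r(\mathbf{Y}(r))$, whose only non-vanishing contribution in expectation comes from the drift $i j_r(\mathbf{R}_0\,\mathbf{Y}(r))dr$ since the stochastic increments are martingales after passing to the modified coupling representation; (b) $\mathbf{M}^*(r)\cdot dj_r(\mathbf{Y}(r))$, which is a pure martingale and hence has vanishing field expectation; and (c) the Ito correction $d\mathbf{M}^*(r)\cdot dj_r(\mathbf{Y}(r))$, which we treat next.

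To compute (c), we use the substitution $\mathbf{L}\mapsto \tilde{\mathbf{L}}$ so that the squeezed input $|\xi\rangle$ of covariance $V(N,M)$ becomes effectively vacuum, reducing the Ito table to $d\mathbf{B}_i d\mathbf{B}_k^* = \delta_{ik} dt$ and all other products to zero. Pairing the $d\mathbf{B}_i^*$ term of $dj_r$ with the $d\mathbf{B}_k$ term of $d\mathbf{M}^*(r)$ yields $\sum_i j_r\bigl(\dot{\tilde{\mathbf{L}}}_i^*[\mathbf{Y}(r),\tilde{\mathbf{L}}_i]\bigr) dr$, while the opposite pairing vanishes. Taking the field expectation $\langle\xi|j_r(\cdot)|\xi\rangle = T_r(\cdot)$ and combining with (a) we obtain
$$\langle\xi| d\mathbf{F}(r)|\xi\rangle = T_r\Bigl( i\mathbf{R}_0\, T_{s-r}(\mathbf{X}_0) + \sum_i \dot{\tilde{\mathbf{L}}}_i^*\bigl[T_{s-r}(\mathbf{X}_0),\tilde{\mathbf{L}}_i\bigr]\Bigr)dr = T_r\circ\Upsilon\circ T_{s-r}(\mathbf{X}_0)\,dr,$$
and integrating from $0$ to $s$ delivers the claim. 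The main technical obstacle is the careful bookkeeping of step (c) for general pure Gaussian input: verifying that after the modification $\mathbf{L}\to \tilde{\mathbf{L}}$ the non-vacuum Ito entries (involving $N$ and $M$) cancel exactly so that only the commutator $[\mathbf{Y},\tilde{\mathbf{L}}_i]$ survives, which is precisely the role played by the definition of $\tilde{\mathbf{L}}$ in rewriting the dynamics as a vacuum-driven QSDE.
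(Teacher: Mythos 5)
Your proof is correct, and it takes a genuinely different route from the one in the paper. The paper fixes the operator $-i\mathbf{R}_0$, differentiates $K_s(-i\mathbf{R}_0):=\sqrt{s}\braket{\xi|\mathbf{G}^*(s)j_s(-i\mathbf{R}_0)|\xi}$ in $s$ via the Ito rules to obtain the inhomogeneous equation $dK_s=\left[K_s\circ\tilde{\mathcal{L}}(-i\mathbf{R}_0)+T_s\circ\Upsilon(-i\mathbf{R}_0)\right]ds$, and then simply asserts that the Duhamel formula $\int_0^sT_r\circ\Upsilon\circ T_{s-r}\,dr$ solves it. You instead build the interpolating process $\mathbf{F}(r)=\mathbf{M}^*(r)\,j_r\bigl(T_{s-r}(\mathbf{X}_0)\bigr)$ and exploit the fact that $r\mapsto j_r(T_{s-r}(\mathbf{X}_0))$ is a martingale (the flow drift cancels against the backward semigroup derivative), so that the expected increment of $\mathbf{F}$ is exactly the integrand $T_r\circ\Upsilon\circ T_{s-r}(\mathbf{X}_0)$; integrating gives the identity with no ODE left to solve. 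What your approach buys is that the Duhamel structure of the answer is derived rather than guessed and verified, and the role of $T_{s-r}$ is made transparent through the quantum analogue of the classical martingale $\mathbb{E}[f(X_s)\mid\mathcal{F}_r]=(P_{s-r}f)(X_r)$; what the paper's approach buys is a shorter computation that never needs the time-dependent-observable version of the Ito rule. The one point you rightly flag as requiring care — that after passing to the modified couplings $\tilde{\mathbf{L}}$ the squeezed input acts as vacuum, so the Ito table collapses to $d\mathbf{B}_id\mathbf{B}_k^*=\delta_{ik}dt$ and only the pairing producing $\sum_i\dot{\tilde{\mathbf{L}}}_i^{\dag}[\mathbf{Y},\tilde{\mathbf{L}}_i]$ survives — is exactly the same step the paper performs implicitly when it replaces $\mathbf{L}$ by $\tilde{\mathbf{L}}$ in the Ito correction term, so the two arguments rest on the same technical input.
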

\begin{proof}
Label the LHS of Eq. (\ref{lemmar}) as $K_s(-i\mathbf{R}_0)_s$ then by using the  Ito rules  we have 
\begin{align*}
&dK_s(-i\mathbf{R}_0)=    \braket{\xi|\mathbf{G}^*(s)dj_s(-i\mathbf{R}_0)|\xi}
+\braket{\xi| j_s(i\mathbf{R}_0)^* j_s(-i\mathbf{R}_0)|\xi}ds\\
&-i\braket{\xi|   \left[    \sum_{i=1}^mj_s(\dot{\mathbf{L}}^*_i)d{\mathbf{B}}_i(s)-j_s(\dot{\mathbf{L}}_{i})d{\mathbf{B}}^*_i(s)\right]\left[\sum_{i=1}^mj_s([   \mathbf{R}_0, \mathbf{L}_i])         {d}\mathbf{B}^*_i(s)-    j_s([   \mathbf{R}_0, \mathbf{L}_i^*])     d\mathbf{B}_i(s)\right] \xi}\\
&\quad\quad\quad\quad\quad= \braket{\xi|\mathbf{G}^*(s)dj_s(-i\mathbf{R}_0)|\xi}   
+\braket{\xi| j_s(i\mathbf{R}_0)^* j_s(-i\mathbf{R}_0)|\xi}ds\\
&-i  \braket{\xi    |\sum_{i=1}^m j_s(   \dot{ \tilde{\mathbf{L}}}_i^*[  \mathbf{R}_0, \tilde{\mathbf{L}}_i]     )\xi}ds\\
&\quad\quad\quad\quad\quad=\left[K_s\circ\tilde{\mathcal{L}}(-i   \mathbf{R}_0        )  +T_s\circ\Upsilon(-i\mathbf{R}_0)\right]ds,
\end{align*}
where $\tilde{\mathcal{L}}(\mathbf{X})$ is the (modified) Lindblad generator.
Finally, this first order differential solution can be verified to have solution as stated in the lemma.
\end{proof}

Following this  result we are now able to calculate the expectations of the first and second terms in Eq. (\ref{split}). Thus 
\begin{align}
\int^t_0\braket{\phi\otimes \xi| \mathbf{G}^*(s)d\mathbf{G}(s)| \phi\otimes \xi}&=\frac{1}{t}\int^t_0ds\braket{\phi| \left(\int^s_0 T_r\circ\Upsilon\circ T_{s-r}( -i   \mathbf{R}_0)dr  \right) |\phi} \nonumber \\
&=\int^t_0ds          \braket{\phi| \frac{1}{t}   \left(\int^{t-s}_0\mathrm{dr} T_r\right)\circ\Upsilon\circ T_{s}( -i   \mathbf{R}_0) |\phi} \nonumber \\
&     \overset{t\to\infty}{\longrightarrow}   -i   \left<       \Upsilon \circ \mathbf{W}\right> _{\rho_{ss}}\label{qfi2}.         
   \end{align}
 A similar result holds for the second term, so that we have 
\begin{align*}
f_{\theta}&=  \left<        \dot{\tilde{\mathbf{L}}}^{\dag}\dot{\tilde{\mathbf{L}}}      -i \sum_i \dot{\tilde{\mathbf{L}}}_i[\tilde{\mathbf{L}}_i, \mathbf{W}] \right. \\
&
\left.
+i\left( \sum_i \dot{\tilde{\mathbf{L}}}_i[\tilde{\mathbf{L}}_i, \mathbf{W}]\right)^{\dag}
       +  \mathbf{W} \mathbf{R}_0  +    \mathbf{R}_0  \mathbf{W}                     \right>     . \end{align*}
The first 3 terms are consistent with the statement \eqref{QFIMAIN} in the Theorem. Hence to complete the proof it remains to show that
$$\mathbf{W} \mathbf{R}_0  +    \mathbf{R}_0  \mathbf{W}   =\left([\tilde{\mathbf{L}}_i, \mathbf{W}]\right)^{\dag}[\tilde{\mathbf{L}}_i, \mathbf{W}],$$
which can be seen in   \cite[pg. 25]{Guta4}.
\end{proof}

\chapter{Using Frequency-Entangled States to Improve Estimation}\label{frequent}

In this section we show that there is only a small advantage to be had  in terms of estimation precision by using frequency-entangled states for one parameter models of a fixed photon input. It is illustrated in the form of an example that this advantage is difficult to exploit in practice.

Let us consider a SISO PQLS to begin with. As we have seen above, the optimal 
QFI for fixed photon input states of one frequency is given by 
\begin{eqnarray}
F(\theta)=  N^2 \cdot \sup_\omega \left\|\frac{d\Xi_{\theta}(-i\omega)}{d\theta} \right\|^2.
\end{eqnarray} 

Now, suppose we are to generalise our class of probe states by allowing for $N$-photon inputs spread over $d$ frequencies. Denote our $d$-frequency entangled state as $\ket{\psi}$. Since we are dealing with linear systems, each frequency evolves independently, so that the action of the PQLS on $\ket{\psi}$ is given by
\[\ket{\psi}\mapsto\Xi_{\theta}(-i\omega_1)\otimes\Xi_{\theta}(-i\omega_2)\otimes...\otimes\Xi_{\theta}(-i\omega_d)\ket{\psi}.\]
Hence the QFI is given by 
\[
\hspace{-2.5cm}
     F(\theta)
       =4\mathrm{Var}\left(
           \frac{d\Xi_{\theta}(-i\omega_1)}{d\theta}
               \mathbf{a}^{\dag}(\omega_1)\mathbf{a}(\omega_1)
                        + \ldots + 
                \frac{d\Xi_{\theta}(-i\omega_d)}{d\theta}
                    \mathbf{a}^{\dag}(\omega_d)\mathbf{a}(\omega_d)\right).
\]
Now, the maximum of this variance corresponds to a $\ket{\psi}$ which is an 
equally weighted superposition of eigenvectors, whose eigenvalues are the 
maximum and minimum of this generator. 
These eigenvalues are given by $N\times\lambda_{max}$ and 
$N\times\lambda_{min}$ where $\lambda_{max}
=\max\big\{\max_{i}\frac{d\Xi(-i\omega_i)}{d\theta}, 0\big\}$ 
and $\lambda_{min}=\min\big\{\min_{i}\frac{d\Xi(-i\omega_i)}{d\theta}, 0\big\}$. There are two cases to consider: 
\begin{enumerate}
\item If $\frac{d\Xi_{\theta}(-i\omega)}{d\theta}$ has the same sign for all $\omega$, then the largest and smallest eigenvalues are given by $0$ and $\max_i\frac{d\Xi_{\theta}(-i\omega_{i})}{d\theta}\times N$. To optimise over all probe states, one must select as one of the probe frequencies   $\omega_{opt}=\arg\sup_\omega \left\|\frac{d\Xi_{\theta}(-i\omega)}{d\theta} \right\|^2$. In this case, a single frequency probe,  and in particular the cat-state, is optimal as it has the required eigenvalues.
\item If $\frac{d\Xi_{\theta}(-i\omega)}{d\theta}$ has differing sign then it is possible to obtain a factor of 4 improvement in the QFI by using frequency-entangled states. To see this, let $\omega_{sup}=\arg\sup_\omega\frac{d\Xi_{\theta}(-i\omega)}{d\theta}$ and    $\omega_{inf}=\arg\inf_\omega\frac{d\Xi_{\theta}(-i\omega)}{d\theta}$ then the optimal probe state is seen to be given by   
\[\ket{\psi}=\frac{1}{\sqrt{2}}\left(\ket{0, N(\omega_{sup})}+ \ket{0, N(\omega_{inf})}\right)\]
and in which case the QFI is \[F(\theta)=N^2\left( \left. \frac{d\Xi_{\theta}(-i\omega)}{d\theta}\right|_{\omega_{sup}}-\left.\frac{d\Xi_{\theta}(-i\omega)}{d\theta}\right|_{\omega_{inf}}\right)^2.\]   Now if $\left. \frac{d\Xi_{\theta}(-i\omega)}{d\theta}\right|_{\omega_{sup}}=-\left.\frac{d\Xi_{\theta}(-i\omega)}{d\theta}\right|_{\omega_{inf}}$ then the QFI here is four times larger than the best single frequency input state. 
\end{enumerate}

In conclusion it is possible to get (up to) a factor of 4 improvement in estimation precision for a fixed photon input to a SISO PQLS by using two-frequency-entangled inputs. However, this improvement is intrinsically dependent on the system and on the unknown parameter. For example, the requirement that $\left. \frac{d\Xi_{\theta}(-i\omega)}{d\theta}\right|_{\omega_{sup}}=-\left.\frac{d\Xi_{\theta}(-i\omega)}{d\theta}\right|_{\omega_{inf}}$ is highly restrictive and so this factor of 4 improvement should be correctly interpreted as an upper bound that may in general not be achievable for a given system and unknown parameter.

\begin{exmp}\label{plmcd}
To see an example where this condition does hold, 
consider a SISO PQLS with one internal mode $(n=1)$ characterised by $\left(C=c, \Omega=a\right)$ and assume that $c$ is unknown and $a$ is known. We may write the phase in the transfer function as $\lambda=2\mathrm{arctan}\left(\frac{-2\omega+2a}{c^2}\right)-\pi$. It follows that $\frac{d\lambda}{dc}=\frac{4(-\omega+a)c}{c^4+4(-\omega+a)^2}$. Maximising this over frequency we find the single frequency N00N state with the largest QFI will have frequency $\omega_{opt}=a\pm c^2/2$. In which case $ \frac{d\lambda}{dc}|_{\omega=\omega_{opt}}=\mp\frac{1}{c}$. Assuming that $|a\pm c^2/2|\geq0$ (so that the following frequency choices are physical), then an input state entangled over these two frequency choices would be four times more informative than the best monochromatic probe. However, note that estimating the other parameter in this system would not satisfy this condition.
\end{exmp}

For MIMO PQLSs the result is similar, except for a slight subtlety due to the possibility of mode-entanglement in addition to frequency-entanglement. We don't discuss this further here.

Practically, since our search is for realistic metrology methods for PQLSs, the cost or complexity involved in the creation of these highly-frequency-entangled states far outweigh the benefit in terms of enhanced precision for one parameter systems (note that this is not the case for multi-parameters). Coupled with the fact that the advantage of using them may only be seen for `special' PQLSs leads us to neglect these types of states from our considerations for one-parameter models. 

\chapter{Adaptive Procedure for Feedback Method 2 (MIMO PQLSs)}\label{HUDE12}

Split the time into  three and obtain a rough estimate for $\theta$ with MSE $\mathcal{O}\left(\frac{1}{{T_{\mathrm{tot}}}}\right)$ in step 1. In step 2 choose  $\hat{c}_{i,n}$ and $\Omega_{jn}$,  so that the direct and indirect couplings in \eqref{keen1} and \eqref{keen2} are of the order $\mathcal{O}\left(\frac{1}{T^{1/4}_{\mathrm{Tot}}}\right)$ (in a similar way to the SISO case). Considering the transfer function of the resultant system given by \eqref{fedr2}, this entails that  
 all matrix elements of $\frac{d\Xi_{\theta}}{d\theta}(-i\hat{\omega})$ are of order $\mathcal{O}\left(\sqrt{T_{\mathrm{tot}}}\right)$, where $\hat{\omega}$ is the estimator of $\Omega_2$ in Eq. \ref{fedr2}, which has MSE $\mathcal{O}\left(\frac{1}{{T_{\mathrm{tot}}}}\right)$ from step 1 (see remark \ref{hod1e}). 
 To see this, firstly write $c_i=\left[c_{1,i}, ..., c_{mi}\right]^T$ and $\delta_1=\left[\delta_{1,1}, ..., \delta_{mi}\right]^T$, then the $(i,j)$-component of the matrix $C\left(-i\omega+i\Omega+\frac{1}{2}C^{\dag}C\right)^{-1}C^{\dag}$ may be written as 
 \begin{equation}\label{food1}
 \frac{\mathrm{Det}\left(-i\omega+i\Omega+\frac{1}{2}C^{\dag}C-\left[\begin{smallmatrix}\overline{c}_{j1} \\\vdots \\ \overline{c}_{j,n-1}\\\overline{\delta}_{j,1}\end{smallmatrix}\right]  \left[\begin{smallmatrix}{c}_{i1} &... & {c}_{i,n-1}& {\delta}_{i,1}\end{smallmatrix}\right]  \right)}{\mathrm{Det}\left(-i\omega+i\Omega+\frac{1}{2}C^{\dag}C\right)}-1
 \end{equation}
 using standard matrix results. 
   Now $\mathrm{Det}\left(-i\hat{\omega}+i\Omega+\frac{1}{2}C^{\dag}C\right)=\mathcal{O}\left(\frac{1}{\sqrt{  T_{\mathrm{tot}}  }}\right)$ using \eqref{jesus1}. Similarly the term in the numerator of \eqref{food1} is also of order $\mathcal{O}\left(\frac{1}{\sqrt{  T_{\mathrm{tot}}  }}\right)$, which can be seen by writing it  as 
\begin{align*} 
 \left( -i\omega+i\Omega_2\pm\frac{1}{2}\delta_1^{\dag}\delta_1-\overline{\delta}_{j1}\delta_{i1}\right) 
&\mathrm{Det}\left(-i\omega+i\Omega_1\pm\frac{1}{2}c^{\dag}c -\left[\begin{smallmatrix}\overline{c}_{j1} \\\vdots \\ \overline{c}_{j,n-1}\end{smallmatrix}\right]  \left[\begin{smallmatrix}{c}_{i1} &... & {c}_{i,n-1}\end{smallmatrix}\right]  \right.\\&
\left.  -\frac{\left(i\delta_2\pm\frac{1}{2}c^{\dag}\delta_1   - \left[\begin{smallmatrix}\overline{c}_{j1} \\\vdots \\ \overline{c}_{j,n-1}\end{smallmatrix}\right] \delta_{i1}     \right)\left(i\delta_2^{\dag}\pm\frac{1}{2}\delta^{\dag}_1 c    -    \overline{\delta}_{j1}\left[\begin{smallmatrix}{c}_{i1} &... & {c}_{i,n-1}\end{smallmatrix}\right]       \right)}{ -i \omega+i\Omega_2\pm\frac{1}{2}\delta_1^{\dag}\delta_1   -\overline{\delta}_{j1}\delta_{i1}}   \right).
\end{align*}
 Hence, by using the quotient rule, each element of $\frac{d\Xi_{\theta}}{d\theta}(-i\Omega_2)$ is of order $\sqrt{T_{\mathrm{tot}}}$. Finally, we have 
\begin{align}\label{MIMOR} 
\left|\left|     \frac{d\Xi_{\theta}}{d\theta}(-i\Omega_2)       \right|\right|^2=    \mathcal{O}\left(T_{\mathrm{tot}}\right)\end{align}
(spectral norm), as in the SISO case.

\begin{remark}
In the MIMO case the optimal coherent input is the one with amplitude given by the largest eigenvector of 
$ \frac{d\Xi_{\theta}}{d\theta}(-i\Omega_2)   $ \cite {Guta2}. However this optimisation is not important as we are primarily interested in the scaling with time (this optimisation will only improve the precision by  a constant factor) and so any vector will suffice.
\end{remark}

\begin{remark}
Notice that we have more control parameters than we actually require. One could of course optimise over these but for simplicity we have set them all to be zero ($d_{jk}=0$ for $2\leq j\leq m, 1\leq k \leq n-1$).
\end{remark}


\chapter{A QFI Proof for Coherent States}\label{HUDE}
We now prove that for a coherent state, with amplitude $\alpha(\theta)\in\mathbb{R}$, the QFI is given by
$$F(\theta)=4\left|\frac{d\alpha(\theta)}{d\theta}\right|^2.$$

Firstly, write the coherent state as $\left|\alpha(\theta)\right>=\left(e^{-\frac{|\alpha(\theta)|^2}{2}}e^{\alpha(\theta)\mathbf{a}^{\dag}}\right)\left|0\right>$ it follows that 
$$\frac{d}{d\theta}\left|\alpha(\theta)\right>=\left(\frac{d\alpha(\theta)}{d\theta}\mathbf{a}^{\dag}-|\alpha(\theta)|\frac{d|\alpha(\theta)|^2}{d\theta}\right)\left|\alpha(\theta)\right>.$$
Therefore, 
$$\left<\alpha(\theta)|\alpha(\theta)'\right>=  \overline{\alpha(\theta)}  \frac{d\alpha(\theta)}{d\theta}-|\alpha(\theta)|\frac{d|\alpha(\theta)|}{d\theta}        $$
and 
\begin{align*}
\left<\alpha(\theta)'|\alpha(\theta)'\right>&=\left(|\alpha|\frac{d|\alpha(\theta)|}{d\theta}\right)^2+(1+|\alpha(\theta)|^2)\left|\frac{d\alpha(\theta)}{d\theta}\right|^2\\&   -
|\alpha(\theta)|\frac{d|\alpha(\theta)|}{d\theta}\left(\alpha(\theta)\frac{d\overline{\alpha(\theta)}}{d\theta}+\overline{\alpha(\theta)}\frac{d{\alpha(\theta)}}{d\theta}\right).
\end{align*}
Finally, the result follows immediately from Eq. \eqref{qfiformula}.


\printbibliography[heading=bibintoc]


\end{document}